\documentclass{article}

\usepackage{ifthen}
\newboolean{margins}
\setboolean{margins}{true} 
\newboolean{draft}
\setboolean{draft}{false}   

\ifthenelse{\boolean{margins}}{%
  \usepackage[margin=1in,top=1in,bottom=1in]{geometry} 
  \setlength{\marginparwidth}{0pt}
  \setlength{\marginparsep}{0pt}
  \raggedbottom%
  \setlength{\parskip}{\medskipamount}%
  
}{
}

\usepackage[T1]{fontenc}
\usepackage[utf8]{inputenc} 
\usepackage[normalem]{ulem}  
\usepackage[dvipsnames]{xcolor}

\usepackage{amsmath,amsthm,amssymb,mathtools,thmtools}
\usepackage{bbm}
\usepackage{physics}
\usepackage{dsfont}

\renewcommand{\inf}{\operatorname*{inf\vphantom{\mathrm{p}}}}
\renewcommand{\lim}{\operatorname*{lim\vphantom{\mathrm{p}}}}

\usepackage[pdftex]{graphicx}
\usepackage{tikz}
\usetikzlibrary{patterns}
\usepackage{tikz-cd}
\usepackage{qcircuit}
\usepackage{caption,subcaption}
\usepackage{float}

\usepackage{tabularx,booktabs,multirow}
\usepackage{multicol}
\usepackage{rotating}
\usepackage{makecell}

\usepackage{algorithm}
\usepackage{algorithmic}
\usepackage{tcolorbox}
\tcbuselibrary{listings,breakable}
\usepackage[linewidth=1pt]{mdframed}

\usepackage[backend=biber,
              style=alphabetic,
        maxbibnames=99,
      minalphanames=3,
      maxalphanames=4,
            backref=true]{biblatex}

\usepackage{amsthm}
\usepackage{thmtools}

\declaretheorem[name=Theorem,numberwithin=section]{theorem}
\declaretheorem[name=Theorem]{inftheorem}
\declaretheorem[sibling=theorem]{lemma}
\declaretheorem[sibling=theorem]{proposition}
\declaretheorem[sibling=theorem]{corollary}

\declaretheorem[sibling=theorem]{claim}
\declaretheorem[name=Definition,style=definition,sibling=theorem]{definition}

\usepackage[pagebackref=false]{hyperref}

\hypersetup{
  plainpages=false,
  unicode=false,
  pdftoolbar=true,
  pdfmenubar=true,
  pdffitwindow=false,
  pdfstartview={FitH},
  pdftitle={},
  pdfauthor={},
  pdfsubject={},
  pdfnewwindow=true,
  colorlinks=true,
  linkcolor=blue,
  citecolor=green,
  filecolor=magenta,
  urlcolor=cyan
}

\usepackage[nameinlink,noabbrev]{cleveref}



\crefname{theorem}{theorem}{theorems}
\Crefname{theorem}{Theorem}{Theorems}

\crefname{inftheorem}{theorem}{theorems}
\Crefname{inftheorem}{Theorem}{Theorems}

\crefname{lemma}{lemma}{lemmas}
\Crefname{lemma}{Lemma}{Lemmas}

\crefname{proposition}{proposition}{propositions}
\Crefname{proposition}{Proposition}{Propositions}

\crefname{corollary}{corollary}{corollaries}
\Crefname{corollary}{Corollary}{Corollaries}

\crefname{conjecture}{conjecture}{conjectures}
\Crefname{conjecture}{Conjecture}{Conjectures}

\crefname{claim}{claim}{claims}
\Crefname{claim}{Claim}{Claims}

\crefname{definition}{definition}{definitions}
\Crefname{definition}{Definition}{Definitions}

\crefname{assumption}{assumption}{assumptions}
\Crefname{assumption}{Assumption}{Assumptions}

\crefname{example}{example}{examples}
\Crefname{example}{Example}{Examples}

\newcommand{\todo}[1]{\textcolor{red}{(TODO: #1)}}%
\newcommand{\TODO}{\todo}%
\newcommand{\Red}[1]{\textcolor{red}{#1}}

\usepackage{enumerate}
\usepackage{thm-restate}

\usepackage{comment}

\usepackage{xparse}

\NewDocumentCommand{\progbox}{m +m +m}{%
  \begingroup
  \renewcommand{\arraystretch}{1.15}%
  \ensuremath{%
    \begin{array}[t]{@{}c@{\quad}l@{}}
      \displaystyle #1
      &
      \displaystyle #2
      \\[0.15em]
      \mathrm{s.t.}
      &
      \begin{array}[t]{@{}l@{\quad}l@{}}
        #3
      \end{array}
    \end{array}%
  }%
  \endgroup
}

\NewDocumentCommand{\progline}{m +m +m}{%
  \ensuremath{%
    \displaystyle #1\; #2
    \quad \mathrm{s.t.}\quad
    \begin{aligned}[t]
      #3
    \end{aligned}%
  }%
}

\NewDocumentCommand{\progat}{m m +m +m}{%
  \begingroup
  \ensuremath{%
    \displaystyle
    \hbox to \dimexpr#1\relax{%
      \(\displaystyle #2\; #3\quad\)\hfil
    }%
    \mathrm{s.t.}\quad
    \begin{aligned}[t]
      #4
    \end{aligned}%
  }%
  \endgroup
}

\NewDocumentEnvironment{progalign}{}{%
  \begingroup
  \renewcommand{\arraystretch}{1.15}%
  \begin{array}[t]{@{}r@{}l@{\;}l@{\quad}c@{\quad}l@{}}
}{%
  \end{array}%
  \endgroup
}

\NewDocumentCommand{\progrow}{O{=} m m +m +m}{%
  \displaystyle #2
  & \mathrel{#1}\displaystyle #3
  & \displaystyle #4
  & \mathrm{s.t.}
  & \begin{array}[t]{@{}l@{\quad}l@{}}#5\end{array}
}

\NewDocumentCommand{\progboxrow}{O{=} m m +m +m}{%
  \displaystyle #2
  & \multicolumn{4}{l@{}}{\mathrel{#1}\progbox{#3}{#4}{#5}}
}

\NewDocumentCommand{\progexprrow}{O{=} m +m}{%
  \displaystyle #2
  & \multicolumn{4}{l@{}}{\mathrel{#1}\displaystyle #3}
}

\makeatletter
\ExplSyntaxOn
\bool_new:N \l__crefitem_in_statement_bool
\tl_new:N \l__crefitem_separator_tl
\tl_new:N \l__crefitem_parent_tl
\tl_new:N \l__crefitem_statement_tl
\cs_new_protected:Npn \__crefitem_capture_parent:
  {
    \bool_set_false:N \l__crefitem_in_statement_bool
    \exp_args:NnV \clist_if_in:nnT
      {theorem,lemma,proposition,corollary,conjecture,claim,definition,assumption,example}
      \thmt@envname
      {
        \bool_set_true:N \l__crefitem_in_statement_bool
        \tl_set:Nx \l__crefitem_statement_tl {crefitem:parent:\@currentHref}
        \exp_args:NV \label \l__crefitem_statement_tl
      }
  }
\addtotheorempostheadhook{\__crefitem_capture_parent:}
\cs_new_protected:Npn \__crefitem_record_parent:n #1
  {
    \bool_if:NT \l__crefitem_in_statement_bool
      {
        \exp_args:NnV \regex_match:nnT { \A enum [iv]+ \Z } \@currentcounter
          {
            \protected@write\@auxout{}
              {
                \string\newlabel{#1@parent}
                  {{\l__crefitem_statement_tl}{}}
              }
          }
      }
  }
\AddToHookWithArguments{label}{\__crefitem_record_parent:n {#1}}
\cs_new_protected:Npn \__crefitem_print:nn #1#2
  {
    \hyperref[#1]
      {
        \Cref*{#1}
        \tl_clear:N \l__crefitem_separator_tl
        \clist_map_inline:nn {#2}
          {
            \tl_use:N \l__crefitem_separator_tl
            (
            \regex_match:nnTF { \A ( [ivxlcdmIVXLCDM]+ | [0-9]+ ) \Z } {##1}
              {##1}
              {\ref*{##1}}
            )
            \tl_set:Nn \l__crefitem_separator_tl {,}
          }
      }
  }
\NewDocumentCommand{\Crefitem}{o m}
  {
    \group_begin:
    \IfValueTF{#1}
      {\__crefitem_print:nn {#2}{#1}}
      {
        \tl_set:Nx \l__crefitem_parent_tl {\clist_item:nn {#2}{1}@parent}
        \cs_if_exist:cT {r@\l__crefitem_parent_tl}
          {
            \tl_set:Nx \l__crefitem_parent_tl
              {\getrefnumber{\l__crefitem_parent_tl}}
          }
        \exp_args:NV \__crefitem_print:nn \l__crefitem_parent_tl {#2}
      }
    \group_end:
  }
\ExplSyntaxOff
\makeatother

\newcommand{\classes}[1]{\mathsf{#1}}       

\renewcommand{\P}{\classes{P}}

\DeclareMathOperator*{\E}{\mathbbm{E}}

\newcommand{\supp}{\textsf{supp}}

\newcommand{\dom}{\textsf{dom}}

\newcommand{\bN}{\mathbb{N}}
\newcommand{\bR}{\mathbb{R}}

\newcommand{\bC}{\mathbb{C}}

\newcommand{\cL}{\mathcal{L}}
\newcommand{\cE}{\mathcal{E}}

\newcommand{\cP}{\mathcal{P}}
\newcommand{\cF}{\mathcal{F}}
\newcommand{\cG}{\mathcal{G}}
\newcommand{\cH}{\mathcal{H}}
\newcommand{\cU}{\mathcal{U}}
\newcommand{\cW}{\mathcal{W}}
\newcommand{\cR}{\mathcal{R}}
\newcommand{\cA}{\mathcal{A}}

\newcommand{\cK}{\mathcal{K}}
\newcommand{\cS}{\mathcal{S}}
\newcommand{\cX}{\mathcal{X}}
\newcommand{\cY}{\mathcal{Y}}
\newcommand{\cZ}{\mathcal{Z}}

\newcommand{\cD}{\mathcal{D}}
\newcommand{\cV}{\mathcal{V}}
\newcommand{\cO}{\mathcal{O}}
\newcommand{\cI}{\mathcal{I}}

\newcommand{\bx}{\mathbf{x}}
\newcommand{\by}{\mathbf{y}}
\newcommand{\bz}{\mathbf{z}}
\newcommand{\bg}{\mathbf{g}}

\newcommand{\bG}{\mathbf{G}}

\newcommand{\tx}{\tilde{x}}
\newcommand{\ty}{\tilde{y}}
\newcommand{\tz}{\tilde{z}}

\renewcommand{\iff}{\Leftrightarrow}
\renewcommand{\to}{\rightarrow}
\newcommand{\one}{\mathbf{1}}

\newcommand{\ind}{\mathds{1}}

\newcommand{\eps}{\epsilon}

\usepackage{scalerel, stackengine, trimclip}
\stackMath
\newcommand\fo[1]{%
\savestack{\tmpbox}{\stretchto{%
  \scaleto{%
    \scalerel*[\widthof{\ensuremath{#1}}]{\kern-.6pt\bigwedge\kern-.6pt}%
    {\rule[-\textheight/2]{1ex}{\textheight}}
  }{\textheight}%
}{0.5ex}}%
\stackon[1pt]{#1}{\tmpbox}%
}

\newcommand{\bk}[2]{\braket{#1}{#2}}

\newcommand{\kbr}[1]{\ketbra{#1}{#1}}

\newcommand{\doubletilde}[1]{\widetilde{\raisebox{0pt}[.9\height][0pt]{\ensuremath{\widetilde{#1}}}}}
\bibliography{references}

\title{Distributional Quantum Query Complexity}

\author{ Shalev Ben{-}David\\
        \small Institute for Quantum Computing\\
        \small University of Waterloo\\
        \small \texttt{shalev.b@uwaterloo.ca}
        \and
        M. H.\ Ebtehaj\\
        \small Institute for Quantum Computing\\
        \small University of Waterloo\\
        \small \texttt{mhebtehaj@uwaterloo.ca} }

\date{}

\begin{document}

\maketitle

\begin{abstract}
Quantum query complexity enjoys a variety of pleasing joint computation properties: for example,
a composition theorem asserting $Q(f\circ g)=\Theta(Q(f)Q(g))$ for all Boolean functions $f$ and
$g$; a direct sum theorem asserting that computing $k$ copies of a function (or search problem)
costs $\Omega(k)$ times as much as the cost of computing one copy; and a direct product theorem
asserting that for Boolean functions, even succeeding at the direct sum problem with exponentially
small probability still requires $\Omega(k)$ times the cost of computing one copy to bounded error.

However, all of these results are strictly for worst-case quantum query complexity. For example,
if we have a fixed distribution $\mu$ over inputs, the direct sum theorem says nothing about the
quantum query complexity of computing $k$ copies of $f$ when the input comes from the product
distribution $\mu^k$ instead of being worst-case. (Note that while a standard Yao-type minimax
theorem guarantees a hard distribution for the direct sum problem, there's no guarantee that
this hard distribution is a product distribution.) A similar problem occurs for the direct
product theorem and the composition theorem: none of these results respect distributions.

In this work, we give \emph{distributional} joint computation lower bounds for the composition,
direct sum, and direct product problems. Along the way, we introduce some new tools for handling
quantum query lower bounds, including (a) a new ``multiplicative'' variant of the $\gamma_2$ norm
(which we use in place of the multiplicative adversary method for proving the direct product
theorem), and (b) a new ``Shaltiel-free'' measure of quantum query complexity, which we show
characterizes the composition behavior of distributional quantum query complexity and satisfies
pleasing properties.
\end{abstract}

\setcounter{tocdepth}{3}
\tableofcontents


\section{Introduction}\label{sec:introduction}


Quantum query complexity is usually studied as
worst-case complexity: the quantum query complexity
of a function $f\colon\{0,1\}^n\to\{0,1\}$, denoted
$Q(f)$, is the minimum number of queries
that can be made by a quantum algorithm which computes
$f$ to error at most $1/3$ on all inputs.
However, it is often of interest to study quantum query
complexity against a known distribution over the inputs.
If $\mu$ is a distribution over $\{0,1\}^n$,
we denote by $Q^\mu(f)$ the minimum number of queries
that can be made by a quantum algorithm which computes
$f$ to error at most $1/3$ when the input is sampled from
$\mu$; this time, the error is with respect to both
the internal randomness of the algorithm and the randomness
of the sampled input string.

A variety of joint computation results are known for
quantum query complexity. One notable example is
the composition theorem:
$Q(f\circ g)=\Theta(Q(f)Q(g))$, where $f\circ g$
is the composition of $f$ with $n$ independent copies
of $g$ \cite{HLS07,LMR+11,Kim13}. Another important example
is the direct product theorem \cite{LR12}:
$Q_{1-0.9^k}(f^k)=\Omega(kQ(f))$,
where here $f^k$ denotes the task of computing $k$
copies of $f$, and $Q_{1-0.9^k}$ denotes
the quantum query complexity when the error must be
at most $1-0.9^k$ (that is, the success probability
must be at least $0.9^k$). The direct product theorem
says that solving $k$ copies of $f$ requires $k$
times as many queries as solving one copy, and this remains
true even if we are only required to solve the $k$ copies
with exponentially small probability. An easier version
of the direct product problem is known as the direct sum
theorem, which says that computing $k$ copies
of $f$ to bounded error (rather than with exponentially
small success) requires $\Omega(kQ(f))$ queries.

While these powerful theorems are known
for worst-case quantum query complexity, they are not
known in the distributional setting. This is because
the proofs of these theorems use the negative weight
adversary method and/or the multiplicative adversary method,
which work naturally in the worst-case
setting
(though see \cite{Bel15,BBH+18,Bel24,JZ25} for previous work
on distributional adversary bounds).
In this work,
we give the first distributional composition theorem,
direct sum theorem, and direct product theorem for quantum
query complexity. In such joint computation tasks,
the upper bounds are generally easy, at least up to log
factors; we focus our efforts on lower bounding the
distributional complexity of these joint computation
problems.

We view this setting as naturally motivated: it is often the
case that we know the distribution from which the input
will come, but still wish to lower bound the quantum
query complexity. For instance, this occurs in cryptographic
applications, in which a secret key may be sampled from a known distribution but a quantum lower bound is still desired. We believe that a
distributional direct product theorem is a particularly
promising tool for cryptographic applications, since it
gives a lower bound even against an algorithm that succeeds
with very low probability, and the hard distribution
is easy to sample from (it is a product distribution).

\subsection{Our results}

\subsubsection{Composition theorem}
Let $f$ be a possibly partial Boolean function on $n$
bits, and let $g$ be a possibly partial Boolean function
on $m$ bits. Let $\mu$ be a distribution over the domain
of $f$, and let $\nu_0,\nu_1$ be distributions over
$g^{-1}(0)$ and $g^{-1}(1)$ respectively. The composition
of $\mu$ with $(\nu_0,\nu_1)$, denoted
$\mu\circ(\nu_0,\nu_1)$, is the distribution over inputs
to $f\circ g$ that we get by sampling an $n$-bit string $y$
from $\mu$, then for each of the $n$ bits $y_i$ of $y$,
sampling a string $x^{(i)}$ from $\nu_{y_i}$, and
concatenating these strings to form the final string
$x=x^{(1)}x^{(2)}\dots x^{(n)}$ of length $mn$.
This distribution $\mu\circ(\nu_0,\nu_1)$ is the natural
way of composing a distribution $\mu$ for $f$ with
a distribution $\nu$ for $g$ (where $\nu$ is presumed
to be an even mixture of $\nu_0$ and $\nu_1$).

One may wish for a distributional composition theorem
of the form
\[Q^{\mu\circ(\nu_0,\nu_1)}(f\circ g)=\Omega(Q^{\mu}(f)Q^{\nu}(g)).\]
However, there are a couple of issues with this statement.
First, in the distributional setting, we do not have
error reduction (repeating the algorithm to amplify
the success would require a new sample from the distribution
in question, and we only have one sampled input string).
Instead of writing $Q^{\mu}(f)$, we should really be
writing $Q^\mu_\epsilon(f)$ where we specify the error
level $\epsilon$.

A more serious problem comes from Shaltiel \cite{Sha03}.
Shaltiel noticed that in the distributional setting,
the distribution on $g$ may rarely give a very easy
input: for example, suppose with probability $0.1$
the input is trivial to solve and can immediately be seen
to be trivial to solve, while with probability $0.9$
the input is hard. Solving $g$ against this distribution
to error $1/3$ is difficult, and requires many queries.
However, if we happened to get $100$ independent samples
with correlated $g$-values (either all samples are $0$-inputs or all are $1$-inputs), then we can solve the
copies of $g$ easily: we just need to find a single
easy input. In other words, if we take $f$ to be the
``trivial function'' that maps $0^n$ to $0$
and $1^n$ to $1$ (with all other strings being outside
the domain of $f$), then the distributional
complexity of $f\circ g$ on a composed distribution
(composing the obvious uniform distribution for $f$
with a Shaltiel-type distribution for $g$)
may be lower than the distributional complexity of $g$.

Instead, our distributional composition theorem has
the following form.

\begin{inftheorem}[Informal]
Let $f$ and $g$ be (possibly partial) Boolean functions,
let $\mu$ be a distribution over the domain of $f$,
and let $\nu_0$ and $\nu_1$ be distributions over
$g^{-1}(0)$ and $g^{-1}(1)$ respectively. Then
for any $\epsilon>0$, we have
\[Q_\epsilon^{\mu\circ(\nu_0,\nu_1)}(f\circ g)
=\Omega(\epsilon Q_{2\epsilon}^\mu(f)\cdot \mathrm{sfAdv}^{(\nu_0,\nu_1)}(g)).\]
Moreover, the new measure $\mathrm{sfAdv}^{(\nu_0,\nu_1)}(g)$
is the best possible for this theorem statement.
\end{inftheorem}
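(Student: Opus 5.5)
The plan is to define $\mathrm{sfAdv}^{(\nu_0,\nu_1)}(g)$ so that the lower bound is essentially built in, and then prove a distributional composition theorem for the adversary/filtered-$\gamma_2$ characterization of $Q_\epsilon$. Concretely, I would define $\mathrm{sfAdv}^{(\nu_0,\nu_1)}(g)$ as a distributional adversary quantity robust against Shaltiel-type easy inputs. It is the largest $T$ such that for every pair of sub-distributions $\nu_0'\le\nu_0$, $\nu_1'\le\nu_1$ of constant mass, a weighted negative-weight adversary matrix for $g$ on $\mathrm{supp}(\nu_0')\times\mathrm{supp}(\nu_1')$ with row/column weights proportional to $\nu_0',\nu_1'$ certifies $T$. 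Equivalently, it is the minimum, over algorithms, of the query cost of a ``distinguisher'' whose final states on $\nu_0$- and $\nu_1$-samples have small average inner product. A second, dual step is to state the optimality claim: any algorithm for $g$ of cost $T$ that is cheap on some constant-mass piece of $\nu_0,\nu_1$ can be plugged into the trivial-$f$ (AND/OR-like promise) composition of Shaltiel. This yields a cheap algorithm for $f\circ g$, showing that no larger measure can appear.

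For the lower bound itself, I proceed in order. First, I use the state-conversion / filtered $\gamma_2$ formulation: a $T$-query algorithm computing $f\circ g$ with error $\epsilon$ on $\mu\circ(\nu_0,\nu_1)$ yields, via the Gram matrix $G(x,x')=\langle\psi_x|\psi_{x'}\rangle$ of final states, the constraint that $G$ agrees with the target $\delta_{(f\circ g)(x)=(f\circ g)(x')}$ in $\mu\circ\nu$-weighted trace distance up to $O(\epsilon)$. Additionally, $\mathbb{1}-G=\sum_{j}(\mathbb{1}-\Delta_{j})\circ W_j$ with $\gamma_2$-type weight $T$. Second, I group the $mn$ variables into $n$ blocks and write $(\mathbb{1}-\Delta_{\text{block }i})\circ W$. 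Then I apply, for each block, the optimal dual solution for $g$ from the definition of $\mathrm{sfAdv}$ to ``compress'' the block. That is, I show the block's contribution is at least $\mathrm{sfAdv}^{(\nu_0,\nu_1)}(g)$ times a $\gamma_2$ weight for the single variable $y_i$. This is the standard composition argument of \cite{LMR+11,Kim13}, but now carried out with weights. Third, averaging over the inner inputs $x^{(i)}\sim\nu_{y_i}$ produces a matrix indexed by $y$ that is a feasible distributional solution for $f$ under $\mu$. Its error is obtained by Markov: on all but an $\epsilon$-fraction (in $\mu$) of outer inputs the inner error is controlled, which costs the factor $2\epsilon$ and the $\epsilon$ in front. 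This gives $T\ge\Omega(\epsilon)\cdot Q^\mu_{2\epsilon}(f)\,\mathrm{sfAdv}(g)$ after converting the distributional $\gamma_2$ bound back to $Q^\mu_{2\epsilon}(f)$ by the distributional adversary upper bound, namely that the $\gamma_2$ program upper bounds query complexity up to constants.

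I expect the main obstacle to be the second step, where the compression must respect distributions. In the worst-case proof one can choose the inner witness freely. Here the witness is fixed by $(\nu_0,\nu_1)$, and the algorithm may concentrate its progress on a small-mass region of inner inputs. This is exactly Shaltiel's issue, and it is why the definition quantifies over all constant-mass sub-distributions. I would handle it with a Markov/truncation argument: restrict to the sub-distribution on which the algorithm's per-block weight is at most twice average, and invoke the sub-distribution clause of $\mathrm{sfAdv}$ there. The factor $\epsilon$ lost in the statement is where this truncation is paid for.
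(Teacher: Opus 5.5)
\textbf{Gap: the definition of $\mathrm{sfAdv}$.} With your definition of $\mathrm{sfAdv}$, the inequality you set out to prove is false. Restricting to sub-distributions of \emph{constant} mass does not remove Shaltiel's obstruction. Neither does your ``equivalent'' form, which is essentially the cost of distinguishing $\nu_0$ from $\nu_1$ to bounded error.

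Counterexample: under each $\nu_b$, let an input be ``easy'' with probability $p=H^{-1/2}$ (a flag bit marks it and a second bit reveals $g(x)$), and otherwise a hard instance costing about $H$ queries. Every constant-mass sub-distribution is still dominated by hard inputs, so both versions of your quantity stay of order $H$. Now take $f=\mathrm{TRIV}_n$ with the uniform distribution on $\{0^n,1^n\}$, where $\mathsf{Q}^\mu_{2\eps}(f)=1$. For large $n$, you can solve $f\circ\bg$ by checking flags of successive copies in $O(p^{-1}\log(1/\eps))=O(\sqrt{H}\log(1/\eps))$ queries, contradicting the claimed $\Omega(\eps H)$.

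The Shaltiel effect operates at \emph{every} scale of advantage. An inner procedure of cost $c$ that gains advantage $1-\delta$ can be reused across correlated copies at total cost about $c/(1-\delta)$. So no Markov truncation to a large sub-distribution can repair the definition.

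What is needed is exactly this cost-per-advantage ratio: $\mathrm{sfAdv}^{(\nu_0,\nu_1)}(g)=\inf_{\delta\in[0,1)}\mathrm{rAdv}^\nu_\delta(g)/(1-\delta)$. Here $\delta$ bounds $\|\Psi\circ\Delta_g\circ\kbr{\nu}\|_{\tr}$ for a Hermitian target $\Psi$ with $\Psi\circ A_g=A_g$. Its dual is $\max 1/\max_j\|\Gamma\circ\Delta_j\|$ over $\Gamma=\Gamma\circ\Delta_g$ with $\langle\nu_0|\Gamma|\nu_1\rangle=\|\Gamma\|=1$.

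That normalization is what makes your block-compression idea rigorous. The lift $\Gamma_f\mapsto\bG^\dagger\Gamma_f\bG\circ(\Gamma_g+I)^{\otimes n}$ has an adjoint that:
\begin{itemize}
\item is a quantum channel, using $\|\Gamma_g\|\le 1$;
\item sends $\kbr{\eta}$ to $\kbr{\mu}$, using $\langle\nu_0|\Gamma_g|\nu_1\rangle=1$;
\item sends $(\eps,\eta)$-targets for $f\circ\bg$ to $(\eps,\mu)$-targets for $f$.
\end{itemize}
Meanwhile the filtered norms multiply.

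\textbf{Secondary problems.} Your error accounting is off. The adversary-level composition $\mathrm{Adv}^{\eta}_\eps(f\circ\bg)\ge\mathrm{Adv}^\mu_\eps(f)\,\mathrm{sfAdv}^{(\nu_0,\nu_1)}(g)$ loses nothing in $\eps$, by monotonicity of fidelity under that channel, so no Markov step is needed. The factor $\eps$ and the shift to $2\eps$ come entirely from converting back via $\mathsf{Q}^\mu_{2\eps}(f)\le O(\mathrm{Adv}^\mu_\eps(f)/\eps)$. This is \emph{not} a constant-factor equivalence, because distributional error cannot be amplified.

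Your first step is also inaccurate. A correct algorithm's Gram matrix need not be close to $A_{f\circ\bg}$ in weighted trace norm, since its final states may retain the input. Only the disagreement part $\|\Psi\circ\Delta_{f\circ\bg}\circ\kbr{\eta}\|_{\tr}$ is small, and it is of order $\sqrt{\eps}$, not $\eps$. This is why the paper works with a fixed-agreement relaxation.

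Finally, optimality is not shown by plugging in algorithms that are cheap on constant-mass pieces. The paper proves $\mathrm{sfAdv}^{(\nu_0,\nu_1)}(g)=\lim_n \mathrm{rAdv}^{\eta_n}_\delta(\mathrm{TRIV}_n\circ\bg)/(1-\delta)$, where $\eta_n=(\nu_0^{\otimes n}+\nu_1^{\otimes n})/2$. Tensoring a near-optimal relaxed target $\Psi$ gives $\Psi^{\otimes n}$ with disagreement trace norm $\delta^n$. A lifted primal certificate for it costs $\frac{1-\delta^n}{1-\delta}\,\mathrm{rAdv}^\nu_\delta(g)$; this is the formal ``retry until success'' argument, applied at every $\delta$.
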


In this theorem, we lose a factor of $\epsilon$, and the error parameter changes from $\epsilon$ to $2\epsilon$.
The lower bound dependence on $g$ is also
in terms of a new measure, the ``Shaltiel-free
adversary measure'' $\mathrm{sfAdv}$. We show this theorem
is tight by showing that
$\mathrm{sfAdv}^{(\nu_0,\nu_1)}(g)$ is characterized
as the distributional complexity of $\mathtt{triv}_n$
composed with $g$ (with the obvious choice of distributions),
in the limit as $n\to\infty$,
where $\mathtt{triv}_n$
is the trivial function described above which maps $0^n\to 0$
and $1^n\to 1$. Since $Q_\epsilon(\mathtt{triv}_n)=1$,
it follows
that the theorem is tight when $f=\mathtt{triv}_n$ for large
$n$. In other words, the measure $\mathrm{sfAdv}(g)$ cannot
be replaced by any larger measure in the above theorem
when $f$ is arbitrary.

\subsubsection{Direct product theorem}

Our distributional direct product theorem is as follows.

\begin{inftheorem}[Informal]\label{infthm:query-direct-product}
For a decision or search problem $f$, a distribution $\mu$, $0<\alpha<1$,
    and an integer $k\geq1$, we have
    \[
        Q^{\mu^{\otimes k}}_{1-\left(\frac{1+\alpha}{2}\right)^k}
        (f^{\otimes k})
        =
        \Omega\left(k(1-\alpha)\,Q^\mu_{1-\alpha}(f)\right).
    \]
\end{inftheorem}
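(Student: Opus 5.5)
The plan is to pass through a distributional, multiplicative version of the $\gamma_2$ (filtered-norm / adversary) characterization of quantum query complexity, as foreshadowed in the abstract. The argument has three steps.

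\textbf{Step 1: a distributional success certificate.} Fix $\mu$ and $\alpha$. From the hypothesis that $Q^\mu_{1-\alpha}(f)=T$ is large, I will extract, via duality for a suitable semidefinite program, a \emph{multiplicative} $\gamma_2$-type witness. This witness is a positive semidefinite ``weight'' matrix $\Gamma$ on the inputs in the support of $\mu$, normalized against $\mu$, which certifies the following. Any $t$-query algorithm's Gram matrix $G_t(x,y)=\langle\psi_t^x|\psi_t^y\rangle$ can grow its ``progress'' $\langle \Gamma, G_t\circ\mu\mu^\top\rangle$ by at most a factor $(1+\lambda)$ per query, where $\lambda\approx c(1-\alpha)/T$. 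Meanwhile, any final state achieving success probability $\beta$ against $\mu$ must have progress at least something like $\beta/\alpha$ relative to the start.

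Concretely, the multiplicative $\gamma_2$ quantity will be defined so that the minimum over Gram matrices of $\gamma_2$-filtered distance, with a one-query update constrained to be a factor close to $1$, equals (up to constants) the distributional complexity at error $1-\alpha$. The factor $(1-\alpha)$ in the final bound comes from this step: distinguishing success $\alpha$ from trivial success is what the witness measures.

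\textbf{Step 2: tensorization.} The key structural lemma I will prove is that the multiplicative witness is multiplicative under tensor products. If $\Gamma$ witnesses $f$ on $\mu$, then $\Gamma^{\otimes k}$ (suitably, $(\Gamma+\text{const}\cdot J)^{\otimes k}$ to handle the identity part) witnesses $f^{\otimes k}$ on $\mu^{\otimes k}$, with the following properties:
\begin{itemize}
\item the per-query growth is still at most $(1+\lambda)$, since a query touches one coordinate block, and the norm of the filtered update factorizes as (update on one block) $\otimes$ (identity-like on the others);
\item the success threshold multiplies: a final state with success $\left(\frac{1+\alpha}{2}\right)^k$ on all $k$ copies forces progress at least $\rho^k$ for some $\rho>1$ depending only on $\alpha$.
\end{itemize}
The first property is a direct computation with the $\gamma_2$ norm's multiplicativity, $\gamma_2(A\otimes B)=\gamma_2(A)\gamma_2(B)$. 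The second requires showing that the final-state measurement constraint is a product of single-copy constraints. For search problems the output relation is a product relation, so the set of correct-output projectors factorizes. I would handle this by phrasing the output condition itself as a semidefinite constraint whose dual solutions tensor.

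\textbf{Step 3: conclusion, and the main obstacle.} Combining the two properties gives $(1+\lambda)^t\ge\rho^k$, so $t\ge k\log\rho/\log(1+\lambda)=\Omega(k(1-\alpha)T)$, after choosing constants so that $\log\rho=\Theta(1)$.

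I expect the main obstacle to be Step 2's threshold part: proving that success probability $\left(\frac{1+\alpha}{2}\right)^k$, which lies strictly above the ``trivial'' level but only slightly, forces exponential progress. That is, the single-copy witness must certify progress already at success $\frac{1+\alpha}{2}$ relative to baseline $\alpha$, in a form that tensorizes. My first attempt would be to pick the dual witness to give a linear functional on final Gram matrices that is $\le 1$ for every algorithm with progress $\le 1$ and equals success probability divided by $\frac{1+\alpha}{2}$ otherwise, then use positivity so the $k$-fold product functional bounds the product success.
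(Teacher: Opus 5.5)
\textbf{The gap is in Step~1.} Your architecture matches the paper's: a positive definite weight $\Gamma$ whose overlap $\langle\Gamma,\Psi^t\circ\kbr{\mu}\rangle$ grows by at most a fixed factor per query, a success threshold, and the product certificate $\Gamma^{\otimes k}$. But the step that carries all the weight is asserted rather than proved. In Step~1 you claim that from $\mathsf{Q}^\mu_{1-\alpha}(f)=T$ you can ``extract, via duality,'' a single time-invariant $\Gamma$ whose ratio of log-threshold to log-growth per query is $\Omega((1-\alpha)T)$. Duality alone does not give this. It only moves between the primal and dual forms of the multiplicative measure: in the paper, $\min_{W}\bar\delta_2^\mu(J,W/\alpha\mid\{S_i\})$ versus the supremum of $\frac{\log\alpha-\log\langle\Gamma,\kbr{\mu}\rangle}{\max_iD_{\max}(\Gamma\circ S_i\Vert\Gamma)}$ over $\Gamma\succeq I_{f^{-1}(s)}$ (\Cref{lem:aadv-universal-hardness-cert}). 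The hybrid argument then only gives the easy direction, measure $\le$ query complexity.

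To conclude that a strong certificate \emph{exists} whenever $\mathsf{Q}^\mu_{1-\alpha}(f)$ is large, you need the converse, algorithmic direction: a small primal value must yield a cheap algorithm. That is the core of the paper's proof (\Cref{lem:filtered-delta2-upperbound}). The maximization form of filtered fidelity (\Cref{prop:equiv-filtered-fidelity}) gives $J\circ\kbr{\mu}-C=\sum_iX_i\circ(cJ-S_i)$. This is turned into a Gram matrix $\Psi=\hat C+(c-1)\sum_i\hat X_i$ with $J-\Psi=2\sum_i\hat X_i\circ\Delta_i$ and $\Psi\succeq(c-1)\sum_i\hat X_i$. Catalyst reuse with the catalyst rescaled by $\sqrt c$ at every step (\Cref{thm:catalyst-state-conversion} with $r=c>1$) then makes the infidelity decay like $1/(c^T-1)$, so $O(\log(\cdot)/\log c)$ queries suffice.

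This conversion necessarily loses in the success parameter, from $\alpha'=(1+\alpha)/2$ down to $\beta=\alpha$. That loss, $\frac{\alpha'-\beta}{\alpha'\log(2(1-\beta)/(\alpha'-\beta))}=\Theta(1-\alpha)$, is exactly where the factor $1-\alpha$ in the statement comes from:
\[
\mathsf{Q}^{\mu^{\otimes k}}_{1-\alpha'^k}(f^{\otimes k})\ \ge\ \mathrm{AAdv}^{\mu^{\otimes k}}_{\alpha'^k}(f^{\otimes k})\ =\ k\,\mathrm{AAdv}^{\mu}_{\alpha'}(f)\ \ge\ \Omega\bigl(k(1-\alpha)\bigr)\,\mathsf{Q}^\mu_{1-\alpha}(f).
\]
Without some such construction, Step~1 is simply the assertion that a tight, tensorizable multiplicative adversary exists, which is the content of the theorem.

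Your bookkeeping also does not close. With $\lambda\approx(1-\alpha)/T$ and $\log\rho=\Theta(1)$, the quantity $k\log\rho/\log(1+\lambda)$ is of order $kT/(1-\alpha)$, not $k(1-\alpha)T$. Moreover, $\rho$ and $\lambda$ cannot be tuned separately, since only their log-ratio is meaningful.

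\textbf{The step you flag as the main obstacle is routine.} Once the certificate is normalized correctly, the threshold part is easy. By Alberti's theorem, success $\beta$ with witness $W$ gives $\beta\le\langle\Gamma,\Psi\circ\kbr{\mu}\rangle\langle\Gamma^{-1},W\circ\kbr{\mu}\rangle$. Requiring $\Gamma\succeq I_{f^{-1}(s)}$ for every $s$ makes the second factor at most $1$ for every valid witness. Because $I_{(f^{\otimes k})^{-1}(\mathbf{s})}=\bigotimes_iI_{f^{-1}(s_i)}$, this constraint passes to $\Gamma^{\otimes k}$ directly. The per-query factor is preserved because $D_{\max}$ is additive under tensor products, with $D_{\max}(\Gamma\Vert\Gamma)=0$ on the untouched blocks. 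No shift to $(\Gamma+\mathrm{const}\cdot J)^{\otimes k}$ and no $\gamma_2$ multiplicativity are needed.
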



This theorem strengthens
the direct product theorem of \cite{LR12} by making it
distributional instead of worst-case. We also have a better
dependency on the error parameter, and our result works
for search problems rather than only functional problems.

We are also able to use our techniques to get related results,
such as a distributional XOR lemma.

\subsection{Our techniques}


\subsubsection{Foundations of the distributional framework}

\paragraph{A Gram-matrix characterization of distributional query complexity.}

Our starting point is to characterize distributional quantum query complexity as a state-conversion problem.

Let $f$ be a function and $\mu$ a distribution. Then, a quantum algorithm starts in the coherent state $\sum_x\sqrt{\mu(x)}\ket{x}\ket{0}$ and produces the final state $\ket{\psi}=\sum_x\sqrt{\mu(x)}\ket{x}\ket{\psi_x}$. Roughly speaking, to measure the success of the algorithm, we want to consider its overlap $|\braket{w}{\psi}|$ with a witness state $\ket{w}=\sum_x\sqrt{\mu(x)}\ket{x}\ket{w_x}$ that perfectly distinguishes inputs with different outputs, meaning $\braket{w_x}{w_y}=0$ whenever $f(x)\neq f(y)$.

In particular, unlike the worst-case setting, the vectors $\ket{w_x}$ might not be normalized individually. Intuitively, this allows us to put more weight on inputs where the algorithm succeeds more often and less weight on inputs where it fails. However, we still require the joint state $\ket{w}$ to have norm one.

Using Uhlmann's characterization of fidelity, we can then write the success condition
\begin{align*}
    \cF(\Psi\circ\kbr{\mu},W\circ\kbr{\mu})\geq\sqrt{1-\eps},
\end{align*}
where $\Psi(x,y) = \braket{\psi_x}{\psi_y}$ and $W(x,y)=\braket{w_x}{w_y}$ is a valid witness matrix. As a result, we obtain the following characterization of distributional quantum query complexity.
\begin{inftheorem}[Informal]\label{infthm:gram-matrix-characterization}
    Let $\cW_f^\mu$ denote the set of valid witness matrices for $f$
    under $\mu$, and let $\cG^\mu_\eps(f)$ be the set of all Gram
    matrices $\Psi$ for which there exists
    a witness $W\in\cW_f^\mu$ satisfying
    $\cF(\Psi\circ\kbr{\mu},W\circ\kbr{\mu})\geq\sqrt{1-\eps}$.
    Then,
    \[
        Q_\eps^\mu(f)
        = \min_{\Psi\in\cG^\mu_\eps(f)} Q(\Psi).
    \]
\end{inftheorem}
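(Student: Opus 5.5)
We prove the two inequalities separately. Here $Q(\Psi)$ denotes the least number of queries needed to convert the all-ones Gram matrix $J$ (all inputs share the initial state $\ket{0}$) into the target Gram matrix $\Psi$. We use the known state-conversion characterization: $Q(\Psi)$ is, up to the standard constants, $\gamma_2$-type adversary quantity $\mathrm{Adv}(J,\Psi)$. More directly, $Q(\Psi)\le T$ iff some $T$-query algorithm produces states $\ket{\psi_x}$ with $\braket{\psi_x}{\psi_y}=\Psi(x,y)$. We only use this exact operational definition.

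\emph{Upper bound ($Q^\mu_\eps(f)\le\min_\Psi Q(\Psi)$).} Fix $\Psi\in\cG^\mu_\eps(f)$ and witness $W\in\cW^\mu_f$ with $\cF(\Psi\circ\kbr{\mu},W\circ\kbr{\mu})\ge\sqrt{1-\eps}$. Take a $Q(\Psi)$-query algorithm producing $\ket{\psi_x}$. Purify the input register: running it on $\sum_x\sqrt{\mu(x)}\ket{x}\ket{0}$ yields $\ket{\psi}=\sum_x\sqrt{\mu(x)}\ket{x}\ket{\psi_x}$. Tracing out the workspace, the reduced state on the input register is exactly $\Psi\circ\kbr{\mu}$, where $\kbr{\mu}$ denotes the matrix $\sqrt{\mu(x)\mu(y)}$. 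Likewise $W\circ\kbr{\mu}$ is the reduced state of $\ket{w}=\sum_x\sqrt{\mu(x)}\ket{x}\ket{w_x}$.

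By Uhlmann's theorem there is a unitary $U$ on the workspace with $|\bra{w}(I\otimes U)\ket{\psi}|\ge\sqrt{1-\eps}$. Since $\braket{w_x}{w_y}=0$ whenever $f(x)\ne f(y)$, the spans of $\{\ket{w_x}:f(x)=b\}$ are mutually orthogonal subspaces $S_b$. Append $U$ and then measure the projectors onto the $S_b$, with the complement sent to an arbitrary answer.

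We lower-bound the success probability via the state $\ket{w}$, which lies in $\bigoplus_x\ket{x}\otimes S_{f(x)}$. Let $\Pi$ be the projector onto the "correct" subspace $\sum_x\kbr{x}\otimes\Pi_{S_{f(x)}}$. Then $\Pr[\text{correct}]=\|\Pi (I\otimes U)\ket\psi\|^2\ge|\braket{w}{(I\otimes U)\psi}|^2\ge1-\eps$. No extra queries are used, and the measurement does not need the input register, so it is a genuine algorithm with error $\le\eps$ against $\mu$.

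\emph{Lower bound.} Take an optimal $T$-query algorithm with $T=Q^\mu_\eps(f)$, and write its final state as $\ket{\psi_x}$ on input $x$, including the output register. Let $\Pi_b$ be the projector onto output $b$, so that $\sum_x\mu(x)\|\Pi_{f(x)}\ket{\psi_x}\|^2\ge1-\eps$. Define the unnormalized vectors
\[
\ket{w_x}=\frac{\Pi_{f(x)}\ket{\psi_x}}{\sqrt{p}},\qquad p=\sum_x\mu(x)\|\Pi_{f(x)}\psi_x\|^2 .
\]
Then $\braket{w_x}{w_y}=0$ when $f(x)\neq f(y)$, and $\sum_x\mu(x)\|w_x\|^2=1$, so $W\in\cW^\mu_f$.

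The overlap is $\sum_x\mu(x)\braket{w_x}{\psi_x}=\sqrt{p}\ge\sqrt{1-\eps}$. By Uhlmann's theorem (fidelity is at least the overlap of any purifications) the fidelity condition holds, so $\Psi\in\cG^\mu_\eps(f)$ with $Q(\Psi)\le T$. This step requires checking the paper's exact definition of $\cW^\mu_f$, namely that it allows unnormalized $w_x$ and zero vectors off-support. The case of $\mu$ without full support is handled by restricting to $\supp(\mu)$.

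The main obstacle is getting the conventions to match. We must make sure that the fidelity of the Hadamard-weighted matrices coincides with the purification overlap, with normalization $\mathrm{tr}(\Psi\circ\kbr\mu)=1$ since $\Psi$ has unit diagonal. We must also make sure the search-problem or partial-function version of "valid witness" is compatible with the projection construction above.
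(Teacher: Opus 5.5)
Your two directions are the paper's two directions, built from the same ingredients. In the algorithm-to-witness direction you use the normalized projected vectors $\Pi_{f(x)}\ket{\psi_x}/\sqrt{p}$, whose overlap $\sqrt{p}$ gives the Uhlmann lower bound. In the witness-to-algorithm direction you use an Uhlmann unitary, followed by a measurement whose correctness projector fixes the witness purification, and then Cauchy--Schwarz. For functional (including partial and decision) problems, where $\cW^\mu_f=\{W\succeq0: W\circ\Delta_f=0,\ \Tr[W\circ\kbr{\mu}]=1\}$, your argument is complete. The transpose convention you worried about is harmless: both reduced states come out transposed, and $\cF(A^\top,B^\top)=\cF(A,B)$.

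The caveat you raise at the end is, however, a real gap for search problems, which the paper's formal version covers.
\begin{itemize}
\item \emph{Why your measurement fails there.} A witness is only required to decompose as $W=\sum_s W^{(s)}$ with $W^{(s)}\succeq0$ and $W^{(s)}(x,x)=0$ for $s\notin f(x)$. Inputs whose answer sets differ but overlap can have non-orthogonal witness vectors. So the subspaces ``$S_b=\mathrm{span}\{\ket{w_x}: f(x)=b\}$'' are neither well defined nor mutually orthogonal, and an arbitrary Gram factorization of $W$ yields no measurement.
\item \emph{The missing step.} Use the decomposition itself. Take Gram vectors $\ket{\phi_{x,s}}$ of each $W^{(s)}$; these vanish when $s\notin f(x)$. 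Form the purification $\ket{\phi}=\sum_{x,s}\sqrt{\mu(x)}\ket{x}\ket{s}\ket{\phi_{x,s}}$, which carries an explicit answer register. Apply the Uhlmann unitary into this space and measure the answer register. Then $\Pi_f=\sum_x\kbr{x}\otimes\sum_{s\in f(x)}\kbr{s}\otimes I$ satisfies $\Pi_f\ket{\phi}=\ket{\phi}$, and your Cauchy--Schwarz step goes through unchanged. This is exactly the paper's construction.
\item \emph{The other direction.} The property $\braket{w_x}{w_y}=0$ for $f(x)\neq f(y)$ is not what certifies validity for a relation. Instead, exhibit $W^{(s)}$ as the Gram matrix of the vectors $\Pi_s\ket{\psi_x}/\sqrt{p}$ for $s\in f(x)$ (and $0$ otherwise). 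Their sum is exactly your $W$, since $\bra{\psi_x}\Pi_{f(x)}\Pi_{f(y)}\ket{\psi_y}=\sum_{s\in f(x)\cap f(y)}\bra{\psi_x}\Pi_s\ket{\psi_y}$.
\end{itemize}
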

Having this characterization, we can focus on state conversion and
work with Gram matrices directly.
In \Cref{sec:gram-matrix-formulation}, we formally define the witness
and target sets and prove this equivalence.

\paragraph{Distributional state conversion.}
To upper bound quantum query complexity, we also need to be able
to approximate Gram matrices with quantum algorithms. More generally,
we consider converting an initial Gram matrix $\Phi$ into a target
Gram matrix $\Psi$ as follows.

\begin{inftheorem}[Informal]\label{infthm:state-conversion}
    Suppose we can write the difference between two Gram matrices as
    \begin{align*}
        & \Phi-\Psi=\sum_i\Delta_i\circ X_i, && \Psi\succeq\frac{r-1}{2}\sum_iX_i,
    \end{align*}
     for PSD matrices $X_i$ and $r\geq 1$.
    Then, for any integer $T\geq1$, there exists a Gram matrix
    $\widetilde{\Psi}$ with query complexity $Q(\Phi\to\widetilde{\Psi})\leq T$ and fidelity
    \[
        \cF(\widetilde{\Psi}\circ\kbr{\mu},\Psi\circ\kbr{\mu})
        \geq
        1-\frac{\sum_i\Tr[X_i\circ\kbr{\mu}]}
                {\sum_{j=0}^{T-1}r^j}.
    \]
\end{inftheorem}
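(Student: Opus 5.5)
We will construct the algorithm explicitly, in the spirit of the Lee–Mittal–Reichardt–Špalek–Szegedy approximate state-conversion algorithm. The plan is to replace their spectral-gap/phase-estimation argument with a simpler ``multiplicative progress'' iteration that uses exactly $T$ queries. First fix vector realizations. Write $\Phi(x,y)=\braket{\phi_x}{\phi_y}$ and $\Psi(x,y)=\braket{\psi_x}{\psi_y}$, with $\Delta_i(x,y)=1-\delta_{x_i,y_i}$ the usual query-difference matrices. Factor each PSD $X_i$ as $X_i(x,y)=\braket{u_{x,i}}{u_{y,i}}$. The hypothesis $\Phi-\Psi=\sum_i\Delta_i\circ X_i$ then says that the map $\ket{x}\ket{\phi_x}\mapsto\ket{x}\ket{\psi_x}$ differs from an isometry only by the ``query part'' $\sum_i\ket{i}\ket{x_i}\ket{u_{x,i}}$. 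This is the standard filtered-$\gamma_2$ situation. The hypothesis $\Psi\succeq\frac{r-1}{2}\sum_iX_i$ lets us write $\Psi=\frac{r-1}{2}\sum_iX_i+R$ with $R\succeq0$. So each target vector $\ket{\psi_x}$ can be extended as $\sqrt{(r-1)/2}\,\bigoplus_i\ket{u_{x,i}}\oplus\ket{r_x}$, meaning the target itself contains a scaled copy of the query-part witness.

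The core step is a one-query ``improvement lemma''. We construct a unitary $U$ using one oracle call, in the style of the reflection-based state-conversion algorithm, such that $U$ maps a current intermediate state with Gram matrix $\Phi_t$ to one with Gram matrix $\Phi_{t+1}$. We want the defect to shrink geometrically: if $\Phi_t-\Psi=\sum_i\Delta_i\circ X_i^{(t)}$, then $\Phi_{t+1}-\Psi$ admits a decomposition with residual weight $\sum_i X_i^{(t+1)}$. The intended bookkeeping is that the accumulated denominator grows as $\sum_{j<t}r^j$. Concretely, the algorithm interpolates between $\ket{\phi_x}$ and $\ket{\psi_x}$ through intermediate vectors
\[
\ket{\phi^{(t)}_x}\propto \sum_{j<t} r^j \ket{\psi_x}+(\text{remainder}).
\]
Because $\Psi$ dominates $\frac{r-1}{2}\sum_iX_i$, each query lets us move a fraction of order $(r-1)r^{t}/(\text{normalizer})$ of the remaining mismatch into the target direction. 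I will first try to make this precise by a telescoping construction: at step $t$, apply the query to the register holding $\bigoplus_i\ket{u_{x,i}}$ with a weight $r^t$, and use the identity $r^{t+1}-r^t=(r-1)r^t$ together with the factor $\tfrac12$ coming from $\Delta_i$ being a difference of two projector terms.

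The final step is the fidelity calculation under $\mu$. The output of $T$ steps is a Gram matrix $\widetilde\Psi$ realized by vectors $\ket{\tilde\psi_x}$. We lower bound $\cF(\widetilde\Psi\circ\kbr{\mu},\Psi\circ\kbr{\mu})$ by the real overlap $\sum_x\mu(x)\mathrm{Re}\braket{\tilde\psi_x}{\psi_x}$, which follows from Uhlmann's theorem applied to the purifications $\sum_x\sqrt{\mu(x)}\ket{x}\ket{\cdot_x}$. The overlap deficit is then $\sum_x\mu(x)\|\text{error}_x\|^2$-type terms. After normalization these are bounded by $\sum_x\mu(x)\sum_i\|u_{x,i}\|^2/\sum_{j<T}r^j=\sum_i\Tr[X_i\circ\kbr{\mu}]/\sum_{j<T}r^j$. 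Note that the $\mu$-weighting enters only here, through a diagonal trace, which is why no worst-case normalization of individual $\ket{\phi_x}$ is needed.

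The main obstacle I expect is the improvement lemma. Specifically, we must show that a single oracle call achieves the geometric factor $r$ rather than an additive gain, while keeping the intermediate states normalized so that the procedure is an honest unitary algorithm. If the direct telescoping fails, the fallback is to use the LMRSS reflection algorithm with phase estimation replaced by the $T$-step polynomial $\sum_{j<T}r^j$ acting on the effective spectrum. The effective spectral gap is controlled by $r$ via the domination $\Psi\succeq\frac{r-1}{2}\sum_iX_i$, and we would then bound the error using the effective spectral gap lemma weighted by $\mu$.
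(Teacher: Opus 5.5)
There is a genuine gap: the step that actually produces the factor $1/\sum_{j<T}r^j$ is missing, as you yourself flag. Your bookends are fine. Writing $\Psi=\frac{r-1}{2}\sum_iX_i+R$ with $R\succeq0$ is the right use of the hypothesis, and closing with Uhlmann's theorem on the purifications $\sum_x\sqrt{\mu(x)}\ket{x}\ket{\cdot_x}$ correctly means only $\mu$-weighted squared error norms enter.

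The mechanism you sketch for the middle is unsupported. It is a catalyst-free one-query map taking unit vectors with Gram matrix $\Phi_t$ to unit vectors with Gram matrix $\Phi_{t+1}$, while a defect decomposition shrinks geometrically. Exactness makes such a step far from automatic. A one-query step exists iff $\Phi_t=\sum_{i=0}^nZ_i$ with $Z_i\succeq0$ and $\Phi_{t+1}=\sum_{i=0}^nS_i\circ Z_i$ (where $S_0=J$). So, for instance, from $\Phi_t=J$ one query reaches only convex combinations of $J,S_1,\dots,S_n$, not ``a fraction of the remaining mismatch'' $\sum_i\Delta_i\circ X_i$. The interpolating vectors $\propto\sum_{j<t}r^j\ket{\psi_x}+(\text{remainder})$ are a heuristic, not a construction. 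The fallback (reflections plus an effective spectral gap ``controlled by $r$'') is likewise not a proof, and would not give this exact constant.

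The missing idea is a catalyst: the Belovs--Yolcu catalyst-reuse trick \cite{BY23}, extended so the catalyst grows.
\begin{itemize}
\item Let $\ket{\kappa_x}:=\sum_i\ket{i}\ket{\kappa_x^i}$ with $\braket{\kappa_x^i}{\kappa_y^i}=X_i(x,y)/2$, and let $\ket{a_x}$ realize $\Psi-\frac{r-1}{2}\sum_iX_i\succeq0$.
\item Because $rI-(\cO^x)^\dagger\cO^y=(r-1)I+2\sum_i\Delta_i(x,y)\kbr{i}$, the families $\{\ket{\phi_x}\oplus\cO^x\ket{\kappa_x}\}$ and $\{\ket{a_x}\oplus\sqrt r\,\ket{\kappa_x}\}$ have the same Gram matrix. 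So an input-independent $U$ performs one exact conversion step per query while multiplying the catalyst by $\sqrt r$.
\item Since $\ket{\kappa_x}$ depends on $x$ and cannot be prepared, spread $\ket{\phi_x}$ over a $T$-level clock with amplitudes $r^{(j-1)/2}/\sqrt s$, where $s:=\sum_{j<T}r^j$. Share a single catalyst $\ket{\kappa_x}/\sqrt s$ and convert branch $t$ at query $t$; the catalyst's growth by $\sqrt r$ exactly matches the next branch's weight.
\item After $T$ queries you hold $\ket{a_x}\oplus r^{T/2}\ket{\kappa_x}/\sqrt s$. This has the same Gram matrix as $\ket{\psi_x}\oplus\ket{\kappa_x}/\sqrt s$ because $r^T/s=(r-1)+1/s$. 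This is precisely where $\Psi\succeq\frac{r-1}{2}\sum_iX_i$ is consumed.
\item Now run the same $T$-query algorithm on $\ket{\phi_x}\oplus0$, without the catalyst. By linearity it differs from the target only by $(I-U^x_{Q_T})(0\oplus\ket{\kappa_x}/\sqrt s)$, whose norm is at most $2\|\kappa_x\|/\sqrt s$.
\item Your Uhlmann step then gives $\operatorname{Re}\braket{\psi}{\tilde\psi}\ge1-2\sum_x\mu(x)\|\kappa_x\|^2/s=1-\sum_i\Tr[X_i\circ\kbr{\mu}]/s$.
\end{itemize}
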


The case $r=1$ follows by adapting the catalyst-reuse construction of Belovs
and Yolcu \cite{BY23} to the distributional setting. For the case $r>1$, however,
we extend their paradigm by allowing the
catalyst to be rescaled at each step.
We give the formal statement and proof in
\Cref{sec:distributional-state-conversion}.

\subsubsection{The distributional adversary method}

To develop our first adversary measure, we generalize the filtered
$\gamma_2$ norm of Lee et al.~\cite{LMR+11}. More specifically, we first
define its distributional analogue for Gram-matrix differences,
\[
    \bar{\gamma}_2^\mu(B\mid\{\Delta_i\})
    = \max_{\Gamma=\Gamma^\dagger}
      \frac{\langle\Gamma,B\circ\kbr{\mu}\rangle}
           {\max_i\|\Gamma\circ\Delta_i\|}.
\]
This measure has nice properties similar to the filtered $\gamma_2$
norm, and it can be used to lower and upper bound query complexity
for state conversion under $\mu$-average fidelity error. The lower
bound uses a simple hybrid argument, while the upper bound uses its
minimization form to give certificates for the state-conversion
\Cref{thm:catalyst-state-conversion}. It is also worth noting that maximizing over $\mu$ recovers the usual
filtered $\gamma_2$ norm. We cover the primal-dual forms and the query complexity bounds in
\Cref{sec:dist-gamma2}.
\par
Next, we define our distributional adversary measure as the minimum
$\bar{\gamma}_2^\mu$ cost of going from $J$ to a Gram matrix $\Psi$
that computes $f$ with error $\eps$ under $\mu$, i.e.,
\[
    \mathrm{Adv}_\eps^\mu(f)
    := \min_{\Psi\in\cG_\eps^\mu(f)}\bar{\gamma}_2^\mu(J-\Psi\mid\{\Delta_i\})
    = \max_{\Gamma=\Gamma^\dagger}\min_{\Psi\in\cG_\eps^\mu(f)}
      \frac{\langle\Gamma,(J-\Psi)\circ\kbr{\mu}\rangle}
           {\max_i\|\Gamma\circ\Delta_i\|}.
\]
Here, the exact maximization form follows from strong duality and allows a single adversary matrix $\Gamma$ to certify a lower bound on the query complexity of $f$.
\par
As a result of our definition and the bounds from $\bar{\gamma}_2^{\mu}$, we get the following characterization of quantum query complexity.

\begin{inftheorem}[Informal]\label{infthm:distributional-adversary}
For any search or decision problem $f$, any distribution $\mu$, and any error parameter $0<\eps<1$, we have
    \[
        \Omega\bigl(\mathrm{Adv}_{\eps}^\mu(f)\bigr)
        \leq Q_{\eps}^\mu(f)
        \leq O\left(\frac{\mathrm{Adv}_{\eps/2}^\mu(f)}{\eps}\right).
    \]
\end{inftheorem}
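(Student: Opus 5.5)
The two inequalities go separately, each by combining the Gram-matrix characterization (\Cref{infthm:gram-matrix-characterization}) with a bound for the distributional filtered norm $\bar{\gamma}_2^\mu$.

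\textbf{Lower bound.} Let an algorithm make $T=Q_\eps^\mu(f)$ queries. By \Cref{infthm:gram-matrix-characterization}, its final Gram matrix $\Psi$ lies in $\cG_\eps^\mu(f)$ and satisfies $Q(J\to\Psi)\le T$. I will prove a hybrid-argument lower bound, $Q(J\to\Psi)=\Omega(\bar{\gamma}_2^\mu(J-\Psi\mid\{\Delta_i\}))$, and then minimizing over $\Psi\in\cG_\eps^\mu(f)$ gives $\Omega(\mathrm{Adv}_\eps^\mu(f))\le Q_\eps^\mu(f)$.

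For the hybrid argument, fix a feasible $\Gamma$ with $\max_i\|\Gamma\circ\Delta_i\|\le 1$. Write $\Psi^{(t)}$ for the Gram matrix after $t$ queries, and use the $\mu$-weighted progress measure $\langle\Gamma,\Psi^{(t)}\circ\kbr{\mu}\rangle$.
\begin{itemize}
\item Unitaries independent of the input leave this progress measure unchanged.
\item A single query changes $\Psi^{(t)}$ by a matrix of the form $\sum_i\Delta_i\circ Y_i$. Each $Y_i$ is a Gram-type matrix of the components of the state acted on by index $i$, and these have total $\mu$-weighted norm at most $2$.
\item Hence each query changes the measure by at most $\sum_i\langle\Gamma\circ\Delta_i,Y_i\circ\kbr{\mu}\rangle\le 2\max_i\|\Gamma\circ\Delta_i\|$.
\end{itemize}
The key inequality here is $|\langle\Gamma\circ\Delta_i,(\sum_x\sqrt{\mu(x)}\ket{x}\ket{a_x})(\cdots)^\dagger\rangle|\le\|\Gamma\circ\Delta_i\|\cdot\|a\|\|b\|$, where the $\mu$-weighting is absorbed into the vectors. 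Summing over the $T$ queries gives $\langle\Gamma,(J-\Psi)\circ\kbr{\mu}\rangle\le 2T$.

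\textbf{Upper bound.} Choose $\Psi\in\cG_{\eps/2}^\mu(f)$ attaining $\mathrm{Adv}_{\eps/2}^\mu(f)=A$. Using the minimization (dual) form of $\bar{\gamma}_2^\mu$, write $J-\Psi=\sum_i\Delta_i\circ X_i$ with $X_i\succeq0$ and $\sum_i\Tr[X_i\circ\kbr{\mu}]\le A$. This is the certificate required by \Cref{infthm:state-conversion} with $r=1$; the PSD condition $\Psi\succeq 0$ holds trivially there, or I would use the dual's normalization if needed.

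Applying that theorem with $T$ steps yields $\widetilde\Psi$ with $Q(J\to\widetilde\Psi)\le T$ and $\cF(\widetilde\Psi\circ\kbr{\mu},\Psi\circ\kbr{\mu})\ge 1-A/T$. Since $\Psi$ has a witness $W$ with $\cF(\Psi\circ\kbr{\mu},W\circ\kbr{\mu})\ge\sqrt{1-\eps/2}$, I need $\cF(\widetilde\Psi\circ\kbr{\mu},W\circ\kbr{\mu})\ge\sqrt{1-\eps}$. The triangle inequality for the Bures angle (or purified distance) does this, provided the loss $A/T$ is at most a constant times $\eps$. Taking $T=cA/\eps$ then gives $\widetilde\Psi\in\cG_\eps^\mu(f)$, and \Cref{infthm:gram-matrix-characterization} gives $Q_\eps^\mu(f)\le O(A/\eps)$.

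\textbf{Main obstacle.} The delicate step is the error bookkeeping in the upper bound. The gap between $\sqrt{1-\eps/2}$ and $\sqrt{1-\eps}$ is only of order $\eps$, so the fidelity loss $A/T$ must be converted into angle terms carefully. If the naive triangle inequality loses a square root, I would instead bound $1-\cF$ additively through the purified distance and adjust the constant $c$. Alternatively, I would apply state conversion from $J$ to $W$ directly and absorb the discrepancy into the witness $W$. I would also need to check that the primal and dual forms of $\bar{\gamma}_2^\mu$ match, so that the value $A$ is actually attained by a decomposition; I take this strong duality as given from the paper's section on $\bar{\gamma}_2^\mu$.
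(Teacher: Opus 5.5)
Your proposal is correct and is essentially the paper's proof. The lower bound is the $\mu$-weighted hybrid argument of \Cref{prop:hybrid-gamma2}, built on the one-query decomposition. The upper bound is \Cref{prop:distalg-from-avggamma2} (a minimization-form certificate plus $r=1$ catalyst state conversion) followed by the purified-distance triangle inequality of \Cref{prop:adv-upperbound}, applied at error $\eps/2$ with $\eta=\eps/18$.

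Two small details need supplying. First, the minimization form gives only Hermitian $H_i$, or equivalently non-symmetric factors $\braket{u_x^i}{v_y^i}$. You can make these PSD without changing the $\mu$-weighted cost using $\Delta_i\circ S_i=-\Delta_i$ with $S_i\succeq0$; this is what the paper's vectors $\ket{\kappa_x^i}=\tfrac12(\ket{u_x^i}+\ket{v_x^i})\oplus(-1)^{x_i}\tfrac12(\ket{u_x^i}-\ket{v_x^i})$ do. Second, the square-root loss you worry about does not arise: since $\sqrt{1-(1-A/T)^2}\le\sqrt{2A/T}$, the condition $A/T\le(1-1/\sqrt{2})^2\eps/2$ suffices. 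When $A/\eps$ is small, use zero queries (as the paper does via Fuchs--van de Graaf) rather than rounding $T$ up to $1$.
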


We also mention that maximizing this
measure over $\mu$ bounds the general adversary method. We give the formal definition, primal-dual formulations, and the above bound in
\Cref{sec:distributional-adversary}.


\subsubsection{Composition via the Shaltiel-free adversary}

\paragraph{Shaltiel-free measure.}
The main object of interest for the composition theorem is our
Shaltiel-free adversary measure, which can be defined in its dual (maximization) form as follows.
    \begin{align*}
        \mathrm{sfAdv}^{(\nu_0,\nu_1)}(g)
        &= \progline{\max_{\Gamma=\Gamma^\dagger}}
           {\frac{1}{\max_i\|\Gamma\circ\Delta_i\|}}
           {\bra{\nu_0}\Gamma\ket{\nu_1}=\|\Gamma\|=1;
            \quad\Gamma\circ\Delta_g=\Gamma}.
    \end{align*}
This measure ends up tracking a certain
cost/bias tradeoff for the inner function
in a composition, which is necessary to overcome
Shaltiel's counterexample mentioned above. We
show that this measure characterizes the composition behavior of the inner function
in the distributional setting.  In \Cref{sec:relaxed-adversary},
we introduce an intermediate measure $\mathrm{rAdv}$, which we use in \Cref{sec:shaltiel-free-adversary} to define $\mathrm{sfAdv}$ in its primal formulation and derive
its dual formulation.
\paragraph{Composition theorem.}
Our main composition result is a consequence
of the following theorem.
\begin{inftheorem}[Informal]\label{infthm:composition-adversary}
    Let $f$ be a function on $n$ bits, let $g$ be a Boolean function,
    and let $\mu$ be a distribution on the inputs of $f$. For
    distributions $\nu_b$ supported on $g^{-1}(b)$ and $0<\eps<1/2$,
    we have
    \[
        \mathrm{Adv}^{\mu\circ(\nu_0,\nu_1)}_\eps(f\circ g^{\otimes n})
        \geq
        \mathrm{Adv}^\mu_\eps(f)\,
        \mathrm{sfAdv}^{(\nu_0,\nu_1)}(g).
    \]
\end{inftheorem}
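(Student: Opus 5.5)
We prove the bound in its dual (maximization) form by building an adversary matrix for the composed problem as a tensor-type composition of optimal dual solutions for $f$ and $g$, following the composition argument of \cite{LMR+11} but with the distributional weights built in.

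\textbf{Setup.} Fix an optimal $\Gamma_f$ attaining $\mathrm{Adv}^\mu_\eps(f)$ in the max--min form, normalized so that $\max_i\|\Gamma_f\circ\Delta_i\|=1$. Fix an optimal $\Gamma_g$ for $\mathrm{sfAdv}^{(\nu_0,\nu_1)}(g)$, normalized so that $\bra{\nu_0}\Gamma_g\ket{\nu_1}=\|\Gamma_g\|=1$, $\Gamma_g\circ\Delta_g=\Gamma_g$, and $\max_j\|\Gamma_g\circ\Delta_j\|=1/\mathrm{sfAdv}$. The condition $\Gamma_g\circ\Delta_g=\Gamma_g$ says $\Gamma_g$ lives only on the off-diagonal blocks $g^{-1}(0)\times g^{-1}(1)$ and $g^{-1}(1)\times g^{-1}(0)$.

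\textbf{Construction.} For $b,c\in\{0,1\}$, let $G^{b,c}$ be the $(b,c)$ block of a matrix on $g$-inputs. Define the block $\tilde\Gamma_g^{b,c}$ to be $\ket{\nu_b}\bra{\nu_c}$-like when $b=c$, namely $\ket{\sqrt{\nu_b}}\bra{\sqrt{\nu_b}}$ suitably rescaled. When $b\ne c$, set it to $\Gamma_g^{b,c}$ reweighted by $\sqrt{\nu_b}$, $\sqrt{\nu_c}$ so that it is compatible with the $\kbr{\mu\circ\nu}$ weighting. Then set
\[
\Gamma(x,y)=\Gamma_f(\tilde x,\tilde y)\prod_{i}\tilde\Gamma_g^{\tilde x_i,\tilde y_i}(x^{(i)},y^{(i)}),
\]
where $\tilde x=g^{\otimes n}(x)$. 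The exact form of the diagonal blocks should be chosen to make two facts hold.
\begin{itemize}
\item[(a)] Contracting against $\sqrt{\mu\circ\nu}$ reproduces $\Gamma_f$ against $\sqrt\mu$. This is what the identities $\bra{\nu_0}\Gamma_g\ket{\nu_1}=1$ and the normalization of $\nu_b$ give.
\item[(b)] $\|\Gamma\|\le\|\Gamma_f\|$-type spectral bounds hold blockwise.
\end{itemize}

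\textbf{Step 1 (query norm).} For a query to bit $j$ inside block $i$, we have $\Gamma\circ\Delta_{(i,j)}=(\Gamma_f\circ\Delta_i)\otimes(\text{block }i\text{ factor with }\Delta_j)\otimes(\text{others})$. The $\Delta_i$ arises because $\Delta_j$ kills the diagonal blocks of the $i$-th factor, via $\Gamma_g\circ\Delta_g=\Gamma_g$. The standard \cite{LMR+11} lemma, $\|A\circ(B\otimes\text{block})\|\le\|A\|\max\|\cdot\|$, then yields
\[
\|\Gamma\circ\Delta_{(i,j)}\|\le \|\Gamma_f\circ\Delta_i\|\cdot\|\Gamma_g\circ\Delta_j\|\le 1/\mathrm{sfAdv}.
\]
This uses $\|\Gamma_g\|=1$ and unit-norm diagonal blocks on the other coordinates.

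\textbf{Step 2 (objective).} We need, for every $\Psi\in\cG^{\mu\circ\nu}_\eps(f\circ g)$,
\[
\langle\Gamma,(J-\Psi)\circ\kbr{\mu\circ\nu}\rangle\ge\min_{\Psi'\in\cG^\mu_\eps(f)}\langle\Gamma_f,(J-\Psi')\circ\kbr\mu\rangle.
\]
The plan is to map $\Psi$ to a compressed Gram matrix $\Psi'$ on $f$-inputs, with $\Psi'(\tilde x,\tilde y)$ obtained by averaging $\Psi$ over $x\sim\nu_{\tilde x}$ weighted by the $g$-factors. We then show $\Psi'$ is PSD with unit diagonal (or subnormalized, then pad). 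We also show that a witness $W$ for $f\circ g$ compresses to a witness for $f$ with at least the same fidelity, by monotonicity of fidelity under the corresponding channel/contraction, so $\Psi'\in\cG^\mu_\eps(f)$. Finally, the inner product must be preserved exactly: $\langle\Gamma,\cdot\rangle=\langle\Gamma_f,\text{compression}(\cdot)\rangle$.

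\textbf{Main obstacle.} The main obstacle is Step 2. The off-diagonal $g$-blocks $\Gamma_g^{b,c}$ are not rank one, so the compression defined by $\Gamma$ is not a genuine averaging channel. To handle this, I would write $\Gamma_g=\ket{u_0}\bra{u_1}+\ket{u_1}\bra{u_0}+R$, where $u_b=\sqrt{\nu_b}$ and $R\perp$ these in the bra/ket sense, using $\bra{\nu_0}\Gamma_g\ket{\nu_1}=\|\Gamma_g\|=1$ to force $u_b$ to be top singular vectors. Expanding the product then isolates the rank-one part, which gives a true compression. The remainder terms can be shown to contribute nonnegatively, or to be absorbable because $\|\Gamma_g\|=1$ is attained on $u$. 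If the dual form is too rigid here, I would instead use the primal definition of $\mathrm{sfAdv}$ via $\mathrm{rAdv}$ and compose feasible primal solutions for the upper-bound direction of the dual.
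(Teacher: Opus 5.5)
Your architecture matches the paper's: lift $\Gamma_f$ by a tensor-type product with $\Gamma_g$, bound the filtered norms, and pull $\Psi$ and its witness back to a valid target for $f$ through an adjoint map under which fidelity is monotone. The concrete lift you choose, however, breaks both steps. On coordinates with $\tx_i=\ty_i$ you put a rank-one block $\kbr{\nu_b}$ (rescaled), and you reweight the off-diagonal blocks of $\Gamma_g$.

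\textbf{Step 1 fails.} $\Delta_j$ does \emph{not} annihilate $\kbr{\nu_b}$, since two inputs in the support of $\nu_b$ can differ in bit $j$. So $\Gamma\circ\Delta_{(i,j)}$ keeps entries with $\tx_i=\ty_i$. Those entries are weighted by $\Gamma_f\circ(J-\Delta_i)$, whose norm the constraint $\max_i\|\Gamma_f\circ\Delta_i\|\le1$ does not control. If taken literally, reweighting $\Gamma_g$ by $\sqrt{\nu_b}\sqrt{\nu_c}$ causes two further problems. Its filtered norms are no longer $\|\Gamma_g\circ\Delta_j\|$. And since $\kbr{\eta}$ already carries these weights, contracting against $J\circ\kbr{\eta}$ no longer reduces to $\bra{\nu_0}\Gamma_g\ket{\nu_1}$.

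\textbf{Step 2 fails, and the obstruction comes from your same-output block, not from $\Gamma_g$ having rank above one.} A block matrix with diagonal blocks $\kbr{\nu_0},\kbr{\nu_1}$ is PSD only if its off-diagonal block is a multiple of $\ketbra{\nu_0}{\nu_1}$. So your compression is not a positive map, $\Psi'$ need not be PSD, and fidelity monotonicity is unavailable. Your repair does not close this. It is true that $\bra{\nu_0}\Gamma_g\ket{\nu_1}=\|\Gamma_g\|=1$ forces $\Gamma_g\ket{\nu_1}=\ket{\nu_0}$. Hence $\Gamma_g=\ketbra{\nu_0}{\nu_1}+\ketbra{\nu_1}{\nu_0}+R$ with $R$ vanishing on $\mathrm{span}\{\ket{\nu_0},\ket{\nu_1}\}$. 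But the $R$-terms pair with the arbitrary $\Psi$ in $-\langle\Gamma,\Psi\circ\kbr{\eta}\rangle$, have no sign, and the minimizing $\Psi$ can exploit them. Your fallback of composing primal $\mathrm{rAdv}$ solutions would only upper bound the composed measure, so it cannot replace the dual argument.

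\textbf{The missing idea is to put the identity on agreeing coordinates and leave $\Gamma_g$ unweighted:}
\[
\cE^{\uparrow}(X)=\bG^\dagger X\bG\circ(I+\Gamma_g)^{\otimes n},\qquad \cE^{\downarrow}(Y)=\bG\bigl(Y\circ(I+\Gamma_g)^{\otimes n}\bigr)\bG^\dagger .
\]
Each property you need then follows directly.
\begin{itemize}
\item $\|\Gamma_g\|\le1$ gives $I+\Gamma_g\succeq0$, so $\cE^{\downarrow}$ is completely positive for the full $\Gamma_g$, with no rank-one splitting.
\item $A_g\circ(I+\Gamma_g)=I$ gives $\cE^{\uparrow}(I)=I$ and $\cE^{\uparrow}(\kbr{\tx})=I_{\bg^{-1}(\tx)}$. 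These yield trace preservation, the unit diagonal of the pulled-back Gram matrix, and the correctness constraints of the pulled-back witness.
\item $\bra{\nu_0}\Gamma_g\ket{\nu_1}=1$ gives $2G\bigl((I+\Gamma_g)\circ\kbr{\nu}\bigr)G^\dagger=J_2$, so $J\circ\kbr{\eta}$ maps to $J\circ\kbr{\mu}$.
\item $I\circ\Delta_j=0$ gives $(I+\Gamma_g)\circ\Delta_j=(\lambda I+\Gamma_g\circ\Delta_j)\circ\Delta_g$ with $\lambda=\|\Gamma_g\circ\Delta_j\|$. This produces the $\Delta_i$ filter on $\Gamma_f$. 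Combining $\bG^\dagger(X\circ\Delta_i)\bG\preceq\|X\circ\Delta_i\|\,\bG^\dagger\bG$ with the Schur product theorem for the PSD tensor $(I+\Gamma_g)^{\otimes(n-1)}\otimes_i(\lambda I+\Gamma_g\circ\Delta_j)$ then gives $\|\cE^{\uparrow}(\Gamma_f)\circ\Delta_{ij}\|\le\|\Gamma_f\circ\Delta_i\|\,\|\Gamma_g\circ\Delta_j\|$.
\end{itemize}
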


We show this using a ``dual-dual'' proof,
taking dual (lower bound type)
objects for $f$ and $g$ and combining them
into a dual object for the composition of $f$
and $g$. More specifically, we
start with an adversary matrix $\Gamma_g$ for  $\mathrm{sfAdv}^{(\nu_0,\nu_1)}(g)$, and use it to lift an adversary matrix $\Gamma_f$ for $\mathrm{Adv}^{\mu}_{\eps}(f)$ to one for the composed function. The use of the right definition of
$\mathrm{sfAdv}$ is key
to get this construction to work. The formal statements and their proofs can be found in
\Cref{sec:composition}.

\subsubsection{Direct product via the Alberti adversary method}


\paragraph{Multiplicative analogue of the $\gamma_2$ norm.}
To develop our direct product theorem, we begin by defining a multiplicative analogue of the $\bar{\gamma}_2^{\mu}$ measure
for a pair of PSD matrices $A$ and $B$ as follows.
    \begin{align*}
        \bar{\delta}_2^\mu(A,B\mid\{S_i\})
        &:= \max_{\Gamma\succ0}
        \frac{-2\log\sqrt{
            \langle\Gamma,A\circ\kbr{\mu}\rangle
            \langle\Gamma^{-1},B\circ\kbr{\mu}\rangle}}
             {\max_i D_{\max}(\Gamma\circ S_i\Vert\Gamma)}.
    \end{align*}

Intuitively speaking, this strange-looking object measures
the maximum of a log-over-log ratio: the numerator is derived from our notion of  ``filtered fidelity'' $\cF_{c}$, obtained by restricting Alberti's characterization of fidelity to a per-query budget $c$, while the denominator corresponds to the budget $c$.
\par
Similar to $\bar{\gamma}_2^{\mu}$, this measure can be used to bound the distributional quantum query complexity. But
most importantly, it has a
magical exact tensorization property across $k$ copies; i.e.,
\begin{align*}
      \bar{\delta}_2^{\mu^{\otimes k}}
    \bigl(A^{\otimes k},B^{\otimes k}\mid\{S_{ij}\}\bigr)
    = k\,\bar{\delta}_2^\mu(A,B\mid\{S_j\}).
\end{align*}
This property is the heart of our direct product result.
It is also worth mentioning that, for Gram matrices
$\Phi$ and $\Psi$, we have
$\bar{\delta}_2^\mu(\Phi,\Psi\mid\{S_i\})
= \bar{\gamma}_2^\mu(\Phi-\Psi\mid\{\Delta_i\})/2$.
We give the formal treatment of
 filtered fidelity and the $\bar{\delta}_2^\mu$ measure in
\Cref{sec:filtered-fidelity} and \Cref{sec:multiplicative-gamma2},
respectively.


\paragraph{Alberti adversary measure.}
To relate the $\bar{\delta}_2^{\mu}$ measure to quantum query complexity, we define a new ``Alberti adversary measure,''
which we essentially use in place of the
multiplicative adversary method. More formally,
    \begin{align*}
        \mathrm{AAdv}_\alpha^\mu(f)
        &:= \min_{W\in\cW_f^\mu}
            \bar{\delta}_2^\mu(J,W/\alpha\mid\{S_i\})\\
        &= \progline{\sup_{\Gamma\succ0}}
           {\frac{\log\alpha-\log\langle\Gamma,\kbr{\mu}\rangle}
                {\max_i D_{\max}(\Gamma\circ S_i\Vert\Gamma)}}
           {\Gamma\succeq I_{f^{-1}(s)}\quad\forall s\in\Omega},
    \end{align*}
    where $\cW^{\mu}_f$ denotes the set of valid witnesses for $f$ mentioned above.
\par
This measure inherits the nice tensorization
property of the $\bar{\delta}_2$ measure, which gives the following exact direct product theorem.
\begin{inftheorem}[Informal]\label{infthm:aadv-direct-product}
    For a search problem $f$, a distribution $\mu$, $0<\alpha\leq1$,
    and an integer $k\geq1$, we have
    \[
        \mathrm{AAdv}_{\alpha^k}^{\mu^{\otimes k}}(f^{\otimes k})
        = k\,\mathrm{AAdv}_\alpha^\mu(f).
    \]
\end{inftheorem}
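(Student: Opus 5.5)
The plan is to reduce the statement to the exact tensorization of $\bar{\delta}_2^\mu$ stated just above, applied with $A=J$ and $B=W/\alpha$. I will prove the two inequalities separately.

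\textbf{Upper bound on the left side.} Take $W\in\cW_f^\mu$ optimal for $\mathrm{AAdv}_\alpha^\mu(f)$. Then $W^{\otimes k}$ is a valid witness for $f^{\otimes k}$ under $\mu^{\otimes k}$. It is PSD. Its joint norm condition factorizes, since $\langle W^{\otimes k},\kbr{\mu}^{\otimes k}\rangle=\langle W,\kbr{\mu}\rangle^k$. Its orthogonality condition also holds: if two $k$-tuples have different outputs, they differ in some coordinate $j$, so the $j$-th factor $W(x_j,y_j)$ vanishes. Since $J_{\dom^k}=J^{\otimes k}$ and $(W/\alpha)^{\otimes k}=W^{\otimes k}/\alpha^k$, tensorization gives
\[
\mathrm{AAdv}_{\alpha^k}^{\mu^{\otimes k}}(f^{\otimes k})\le \bar{\delta}_2^{\mu^{\otimes k}}\bigl(J^{\otimes k},(W/\alpha)^{\otimes k}\mid\{S_{ij}\}\bigr)=k\,\mathrm{AAdv}_\alpha^\mu(f).
\]
If the minimum is not attained, I use an infimizing sequence instead.

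\textbf{Lower bound on the left side.} Here I use the dual (sup) form. Let $\Gamma\succ0$ be feasible for $f$, meaning $\Gamma\succeq I_{f^{-1}(s)}$ for every $s$, and set $\Gamma^{\otimes k}$ as the candidate for $f^{\otimes k}$. The two parts of the objective behave well:
\begin{itemize}
\item The numerator becomes $\log\alpha^k-\log\langle\Gamma,\kbr{\mu}\rangle^k$, which is $k$ times the single-copy numerator.
\item The denominator stays unchanged, because the filters $S_{ij}=J\otimes\cdots\otimes S_j\otimes\cdots\otimes J$ give $\Gamma^{\otimes k}\circ S_{ij}=\Gamma^{\otimes(i-1)}\otimes(\Gamma\circ S_j)\otimes\Gamma^{\otimes(k-i)}$. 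Since $D_{\max}$ is additive under tensor products and $D_{\max}(\Gamma\Vert\Gamma)=0$, the maximum over $(i,j)$ equals $\max_j D_{\max}(\Gamma\circ S_j\Vert\Gamma)$.
\end{itemize}

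\textbf{Main obstacle.} The difficulty is feasibility. For a search problem, an output $s=(s_1,\dots,s_k)$ has preimage $\prod_i f^{-1}(s_i)$, so $I_{(f^{\otimes k})^{-1}(s)}=\bigotimes_i I_{f^{-1}(s_i)}$. If $\Gamma\succeq P_i\succeq0$ for PSD $P_i$, then $\Gamma^{\otimes k}\succeq P_1\otimes\cdots\otimes P_k$. This follows from telescoping,
\[
A\otimes B-C\otimes D=(A-C)\otimes B+C\otimes(B-D)
\]
for $A\succeq C\succeq0$ and $B\succeq D\succeq0$. So the constraint holds.

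One subtlety remains. If the sup form implicitly restricts to $\supp(\mu)$, or if $f$ is relational (an input in several preimages), then products of preimages must still be exactly the preimages for $f^{\otimes k}$. For relations the set of valid tuple-outputs for $x$ is the product of the valid sets, so this still holds.

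Together the two bounds give the equality. Alternatively, the whole statement follows in one step from the $\bar{\delta}_2$ tensorization applied to the min-form, provided I separately show that the optimal witness for $f^{\otimes k}$ can be taken of product form. The dual argument above avoids needing that.
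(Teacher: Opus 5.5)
Your route is the paper's. For $\le$ you lift an optimal one-copy witness to $W^{\otimes k}$ and apply the exact tensorization of $\bar{\delta}_2^\mu$. For $\ge$ you lift a dual-feasible $\Gamma$ to $\Gamma^{\otimes k}$, using $I_{(f^{\otimes k})^{-1}(\mathbf{s})}=\bigotimes_i I_{f^{-1}(s_i)}$, monotonicity of tensor products of PSD matrices, and additivity of $D_{\max}$. The dual half is correct as written and matches the paper's claim item by item.

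The step that does not go through as stated is the verification that $W^{\otimes k}\in\cW^{\mu^{\otimes k}}_{f^{\otimes k}}$.

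\begin{itemize}
\item \emph{What you checked.} Your ``orthogonality'' argument (tuples with different outputs differ in some coordinate, so an entry of $W^{\otimes k}$ vanishes) is the condition $W\circ\Delta_f=0$. That condition defines a witness only for functional problems.
\item \emph{What membership actually requires.} For a search problem, $W$ must admit a decomposition $W=\sum_{s}W^{(s)}$ with $W^{(s)}\succeq0$ and $W^{(s)}(x,x)=0$ whenever $s\notin f(x)$. This implies $W(x,y)=0$ whenever $f(x)\cap f(y)=\emptyset$, but the converse fails for relations. A counterexample already exists with four inputs whose answer-sharing graph is a $4$-cycle: there is a PSD $W$ with the correct zero pattern that has no such decomposition.
\item \emph{The repair, which is the paper's argument.} Write $W^{\otimes k}=\sum_{\mathbf{s}\in\Omega^k}W^{(s_1)}\otimes\cdots\otimes W^{(s_k)}$. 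Each summand is PSD. Its diagonal entry at $\bx$ is $\prod_\ell W^{(s_\ell)}(x_\ell,x_\ell)$. If $\mathbf{s}\notin f^{\otimes k}(\bx)$, then some $s_\ell\notin f(x_\ell)$, so this entry is zero.
\end{itemize}

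There is also a smaller slip in the normalization. The constraint is $\Tr[W\circ\kbr{\mu}]=1$, not $\langle W,\kbr{\mu}\rangle=1$; the latter equals $\bra{\mu}W\ket{\mu}$. The correct quantity factorizes under tensor products in the same way, so your conclusion is unaffected.
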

Once we establish the dual (supremum) formulation above, the proof of this theorem is straightforward. Finally,
relating the measure back to the quantum query complexity,
we obtain our direct product theorem.
\par
In \Cref{sec:alberti-adversary-method}, we study $\mathrm{AAdv}$
and derive its dual formulation, and in \Cref{sec:direct-product},
we prove the direct product theorem.

\subsection{Related work}

\paragraph{Adversary method for worst-case quantum computing.}

The adversary method was initially introduced by Ambainis \cite{Amb00} in its combinatorial form as a technique for proving lower bounds on quantum query complexity. Following several independent extensions of this technique \cite{Amb06,BSS03,Zha05,LM04}, Špalek and Szegedy \cite{SS06} showed that the resulting formulations were all equivalent, unifying them into what is now known as the positive adversary method, $\mathrm{Adv}^{+}$.
\par
Høyer, Lee, and Špalek \cite{HLS07} strengthened this lower bound method to the generalized adversary method $\mathrm{Adv}^{\pm}$ by allowing negative weights. They also proved a composition theorem for $\mathrm{Adv}^{\pm}$, which was used to prove strict separation from $\mathrm{Adv}^{+}$. Subsequently, Reichardt \cite{Rei11}
transformed the minimization formulation of $\mathrm{Adv}^{\pm}$ into a quantum algorithm, proving that $\mathrm{Adv}^{\pm}$ characterizes the bounded-error quantum complexity, up to constant factors. This, together with the composition properties of $\mathrm{Adv}^{\pm}$, immediately proved a tight composition theorem for quantum query complexity. We also mention that Belovs and Lee \cite{BL20} extended the adversary composition theorem to outer search problems by adapting its proof to the relational adversary formulation introduced by Belovs \cite{Bel15}.
\par
Building on Reichardt’s work, Lee et al. \cite{LMR+11}
showed that the filtered $\gamma_2$ norm gives a semi-tight characterization of the quantum query complexity of general state-conversion problems. They also exploited this new framework to provide a matching upper bound to the lower bound of Ambainis et al. \cite{ACLT10} for $\mathrm{Adv}^{\pm}(f^{\otimes k})$, yielding a tight direct-sum theorem for quantum query complexity.
In a parallel line of work, Špalek \cite{Spa08} introduced the multiplicative adversary method, $\mathrm{MAdv}$, which lower bounds quantum query complexity and satisfies a strong direct-product theorem. Later, Lee and Roland \cite{LR12} strengthened the comparison between $\mathrm{MAdv}$ and $\mathrm{Adv}^{\pm}$ established by Ambainis et al. \cite{AMRR11} to prove a strong direct-product theorem for quantum query complexity, as well as an XOR lemma for Boolean functions.
\par
Finally, we mention that Belovs and Yolcu \cite{BY23} gave the catalyst-reuse construction, which provides a much simpler algorithm and removes a logarithmic factor in the dependence of the query complexity on the error parameter compared with prior state-conversion algorithms.

\paragraph{Comparison with prior distributional results.}
The adversary method has also been adapted to the distributional setting. Belovs \cite{Bel15}
formulated an adversary bound for search problems in the context of general unitary input oracles,
and later, Belovs et al. \cite{BBH+18} developed a related bound for distinguishing input
distributions and used it to prove a composition theorem for this bound when the inner problem is a
promise version of unstructured search, $\mathrm{pSEARCH}$. We note that our distributional
adversary measure, although related to these bounds, has a different formulation. Moreover, compared
with their work, our composition theorem applies to arbitrary Boolean inner functions and gives the best asymptotic
composition lower bound directly in terms of the outer problem's distributional quantum query
complexity and the Shaltiel-free measure of the inner function.  Our Shaltiel-free measure is inspired by the classical Shaltiel-free measure
introduced by Ben-David and Blais \cite{BB20a}, who formulated their measure to account for the
obstruction illustrated by Shaltiel's example \cite{Sha03}.
(See also \cite{BDG+20,BBGM22} for related classical composition results.)
\par
In the multiplicative setting, Jeffery and Zur \cite{JZ25} introduced the distributional
multiplicative ladder adversary method and showed that it satisfies a strong direct-product theorem.
As a result, they obtain a lower bound on the distributional quantum query complexity of the $k$-copy problem in terms of the single-copy value of their
measure. In recent work parallel to ours, Beame, Kornerup, and Whitmeyer \cite{BKW26} proved strong selective and list-decoding direct-product theorems, which also yield lower bounds on direct-product problems in terms of the single-copy distributional multiplicative adversary measure. In contrast, our method gives a direct-product lower bound directly in terms of the single-copy distributional quantum query complexity.


\section{Preliminaries}
\newcommand{\Dens}{\mathcal{D}} \newcommand{\Pos}{\mathrm{Pos}} \newcommand{\Herm}{\mathrm{Herm}}

\subsection{Notation and Hilbert spaces}\label{sec:prelim-hilbert-spaces}

For integers $0\leq m\leq n$ we write $[n] := \{1,\dots, n\}$ and $[m:n] :=
\{m,m+1,\dots,n\}$.
\paragraph{Operators.} We assume that all Hilbert spaces are finite-dimensional and complex unless otherwise stated. For
Hilbert spaces $\cH$ and $\cK$ we write $\cL(\cH,\cK)$ to denote the space of linear operators  from
$\cH$ to $\cK$ and abbreviate $\cL(\cH):=\cL(\cH,\cH)$. A linear map $V\in\cL(\cH,\cK)$ is called an
isometry if $V^\dagger V=I_{\cH}$, where $I_{\cH}$ is the identity map. If, in addition,
$\dim(\cH)=\dim(\cK)$, then $V$ is a unitary map from $\cH$ to $\cK$. We write $\cU(\cH,\cK)$ for
the set of unitary maps from $\cH$ to $\cK$ and abbreviate $ \cU(\cH):=\cU(\cH,\cH).$ Additionally,
we write $\Herm(\cH):=\{A\in\cL(\cH):A=A^\dagger\}$ for the space of Hermitian operators,
$\Pos(\cH):=\{A\in\Herm(\cH):A\succeq 0\}$ for the cone of positive semidefinite operators, and
$\Dens(\cH):=\{\rho\in\Pos(\cH):\Tr(\rho)=1\}$ for the set of density operators.
\par
For operators $A,B\in\Herm(\cH)$, we write $\supp(A)$ to denote the span of its eigenvectors with
nonzero eigenvalues, and $A\succeq B$ whenever $A-B \succeq 0$. We write $A^\top$,
$\overline{A}$, and $A^{\dagger}$ to denote the transpose, entrywise complex conjugate, and
adjoint of $A$, respectively.
\paragraph{Bra-ket notation.}
Using \emph{bra-ket} notation, we denote vectors in $\cH$ by kets, such as $\ket{u}$ and $\ket{v}$,
and their conjugate transposes by bras, such as $\bra{u}$ and $\bra{v}$. We write $\braket{u}{v}$ to
denote the Hermitian inner product of $\ket{u}$ and $\ket{v}$, $\ketbra{u}{v}$ to denote the
rank-one operator $\ket{u}\bra{v}$, and $\|u\|_{2}:=\sqrt{ \braket{u}{u}}$ to denote the Euclidean
norm of $\ket{u}$.
\paragraph{Hermitian and PSD matrices.}

The singular value decomposition (SVD) states that any operator
$A\in\cL(\cH)$ can be written as $A=UDV^\dagger$, where $U,V$
are unitary and $D$ is diagonal with nonnegative entries.
Equivalently, writing $d=\dim(\cH)$, there exist singular values
$s_i\geq0$ and orthonormal bases $\{\ket{u_i}\}_{i=1}^d$ and
$\{\ket{v_i}\}_{i=1}^d$ such that $A=\sum_{i=1}^d s_i\ketbra{u_i}{v_i}.$
If $A$ is Hermitian, its spectral decomposition is
$A=\sum_{i=1}^d\lambda_i\ketbra{u_i}$, where the eigenvalues
$\lambda_i$ are real. In particular, $A\succeq0$ if and only if all its eigenvalues are nonnegative, or equivalently, $A=UDU^\dagger$ for a unitary $U$ and a diagonal matrix $D$ with nonnegative entries. Similarly, $A\succ0$ if and only if all its eigenvalues are positive.
\par
To prove that block matrices are positive semidefinite, we often
use the Schur complement criterion. For $A\succ0$, $B=B^\dagger$,
and an operator $C$ of compatible dimensions, it states that
\[
    \begin{pmatrix}
        A & C\\
        C^\dagger & B
    \end{pmatrix}\succeq0
    \quad\Longleftrightarrow\quad
    B\succeq C^\dagger A^{-1}C.
\]

\paragraph{Indexed matrices.}
Let $\Sigma$ be a finite set. We identify $\bC^\Sigma \cong \bC^{|\Sigma|}$ with the complex Hilbert
space whose computational basis is $\{\ket{x}:x\in\Sigma\}$ and write $\cL(\Sigma)
:= \cL(\bC^{\Sigma})$ for the space of matrices whose rows and columns are indexed by $\Sigma$. With
slight abuse of notation we write $\Herm(\Sigma):=\Herm(\bC^\Sigma)$,
$\Pos(\Sigma):=\Pos(\bC^\Sigma)$, $\Dens(\Sigma):=\Dens(\bC^\Sigma)$, and $\cU(\Sigma) :=
\cU(\bC^{\Sigma})$. Equivalently, these are the Hermitian matrices, positive semidefinite matrices,
density matrices, and unitary matrices whose rows and columns are indexed by $\Sigma$. We write
$I_{\Sigma}$ and $J_{\Sigma}$ for the identity and all-one matrices in $\bC^{\Sigma\times\Sigma}$,
respectively, and omit the subscript when the index set is clear. For $A\in\cL(\Sigma)$ and $S\subseteq\Sigma$, we denote by
$A_S\in\cL(\Sigma)$ the matrix equal to $A$ on $S\times S$ and zero elsewhere, and by
$A|_S\in\cL(S)$ the principal submatrix indexed by $S$.

\paragraph{Tensor products and partial trace.}
For $A\in\cL(\cH\otimes\cK)$, we write $\Tr_{\cK}[A]\in\cL(\cH)$ for the partial trace of $A$
over $\cK$. For $\cH=\bC^\Sigma$ and an integer $k\geq1$, let $\cH_1,\dots,\cH_k$ be copies of
$\cH$, so that $\cH^{\otimes k}=\cH_1\otimes\dots\otimes\cH_k$. For an index $\ell\in[k]$, let
$\bar\ell:=[k]\setminus\{\ell\}$ and $\cH_{\bar\ell}:=\bigotimes_{r\in\bar\ell}\cH_r$. We write
$\cH^{\otimes k}=\cH_{\bar\ell}\otimes_\ell\cH_\ell$ to denote the decomposition of
$\cH^{\otimes k}$ into its $\ell$-th tensor factor and the remaining factors. For instance, for
$z=(z^1,\dots,z^{k-1})\in\Sigma^{k-1}$, we write $\ket{z}\otimes_\ell\ket{x}$ to denote the vector
$\ket{z^1}\otimes\dots\otimes\ket{z^{\ell-1}}\otimes\ket{x}\otimes
\ket{z^\ell}\otimes\dots\otimes\ket{z^{k-1}}$. We use the same notation for matrices. Given
$A\in\cL(\cH_{\bar\ell})$ and $B\in\cL(\cH_\ell)$, we use $A\otimes_\ell B$ to denote the matrix
with $B$ in its $\ell$-th tensor position and $A$ on the remaining positions. We abbreviate $\Tr_{\cH_{\bar\ell}}$ by
$\Tr_{\bar\ell}$ for the partial trace over the remaining factors.

\paragraph{Hadamard product.}
For matrices $A,B\in\cL(\Sigma)$, we write $A\circ B$ for their \emph{Hadamard product},
defined entrywise by $(A\circ B)(x,y):=A(x,y)B(x,y)$. The Schur product theorem states that
$A\circ B\succeq0$ whenever $A,B\succeq0$.

\paragraph{Distributions.}
We often use lowercase Greek letters, such as $\mu$ and $\nu$, to denote probability distributions. For a
probability distribution $\mu:\Sigma\to [0,1]$ with $\sum_{x\in\Sigma}\mu(x)=1$, we define the unit
vector $\ket{\mu} \in \mathbb{C}^{\Sigma}$ as $\ket{\mu}:=\sum_{x\in\Sigma}\sqrt{\mu(x)}\ket{x}$,
and denote its support by $\supp(\mu) = \{x\in\Sigma:\mu(x)>0\}$.

\subsection{Norms and measures of distinguishability}

Let $\cX,\cY$ be complex Hilbert spaces and let $\cL(\cX,\cY)$ be the complex Hilbert space equipped
with the \emph{Hilbert--Schmidt} inner product defined as $\langle A,B\rangle:=\Tr[A^{\dagger}B]$
for $A,B \in \cL(\cX,\cY)$. For a linear map $\cE:\cL(\cX)\to\cL(\cY)$, its adjoint
$\cE^\dagger:\cL(\cY)\to\cL(\cX)$ is uniquely defined with respect to this inner product by $\langle
\cE^\dagger(B),A\rangle=\langle B,\cE(A)\rangle$.

\subsubsection{Schatten norms}
The \emph{singular values} of $A$ are the eigenvalues of the positive semidefinite matrix
$\sqrt{A^{\dagger}A}$, denoted $s_1(A)\ge s_2(A)\ge \cdots \ge 0$. For $p\in[1,\infty)$, the
\emph{Schatten $p$-norm} of $A$ is defined by
\[
\|A\|_p := \Big(\sum_i s_i(A)^p\Big)^{1/p},
\]
\begin{itemize}
    \item The Schatten $2$-norm is also called the \emph{Hilbert--Schmidt} (or \emph{Frobenius}) norm, which is the
    inner-product norm induced by $\langle\cdot,\cdot\rangle$, namely $\|A\|_{2}=\sqrt{\langle
    A,A\rangle}=\sqrt{\Tr[A^{\dagger}A]}$.
    \item
    The Schatten $1$-norm is called the \emph{trace norm}, and satisfies
    $\|A\|_{\tr}:=\|A\|_{1}=\Tr[\sqrt{A^{\dagger}A}] =\sum_i s_i(A)$. So, if $H\in\cL(\cX)$ is
    Hermitian, then $\|H\|_{\tr}=\Tr[|H|]=\sum_i |\lambda_i(H)|$, where $|H|:=\sqrt{H^2}$. We can also characterize the trace-norm as
    \begin{align*}
        \|H\|_{\tr} &= \min_{P,Q\succeq 0} \{\Tr[P] +\Tr[Q] : H = P-Q\} \\
                    &= \progline
                       {\min_{\ket{u_x},\ket{v_x}\in\bC^d}}
                       {\frac12\sum_{x\in\Sigma}
                       \bigl(\|u_x\|^2+\|v_x\|^2\bigr)}
                       {H(x,y)=\braket{u_x}{v_y}
                       \quad\forall x,y\in\Sigma},
    \end{align*}
    where $\Sigma$ indexes an orthonormal basis of $\cX$, and $d$ is
    a sufficiently large finite dimension.
    \item The Schatten $\infty$-norm, also known as the \emph{operator} or \emph{spectral} norm, is
    defined separately as $\|A\|_{\infty}:=\max_i s_i(A)$. Throughout the paper, we use $\|A\|$ to denote the operator norm $\|A\|_{\infty}$.
\end{itemize}

With respect to the Hilbert--Schmidt inner product, the
Schatten $p$- and $q$-norms are dual whenever $1/p+1/q=1$, i.e.,
\[
\|A\|_{p}=\max_{\|M\|_{q}\le 1} |\Tr[M^{\dagger}A]|.
\]
With the convention $1/\infty=0$, $\|\cdot\|_{tr}$ is dual to $\|\cdot\|_{\infty}$. H\"older's inequality states that for $A,B\in\cL(\cX,\cY)$
\[
|\langle A,B\rangle|=|\Tr[A^{\dagger}B]|
\leq\|A\|_p\|B\|_q.
\]
The special case $p=q=2$ is the Cauchy--Schwarz inequality for the
Hilbert--Schmidt inner product.


\subsubsection{\texorpdfstring{$\gamma_2$ norm}{gamma 2 norm}}\label{sec:prelem:gamma2}
Let $\Sigma$ be a finite set. For a matrix $A\in \bC^{\Sigma\times \Sigma}$, the $\gamma_2$ norm of
$A$, denoted by $\gamma_2(A)$, measures the minimum size of vectors needed to write $A$ as an inner
product $\gamma_2(A) := \min_{U,V: A = UV^{\dagger}}\|U\|_{2\to\infty}\,\|V\|_{2\to\infty}$ where
$\|X\|_{2\to\infty} = \max_{x} \|X_{x,\cdot}\|_2$. Equivalently, we can write
\begin{align*}
    \gamma_2(A)
    &= {}
    \progbox
    {
    \min_{\substack{
    d\in\bN\\
    \ket{u_x},\ket{v_y}\in\bC^d
    }}
    }
    {
    \max\left\{
    \max_{x\in\Sigma}\|u_x\|_2^2,\,
    \max_{y\in\Sigma}\|v_y\|_2^2
    \right\}
    }
    {
    A(x,y)=\braket{u_x}{v_y}
    & \forall x,y\in\Sigma
    },
\end{align*}
where the minimum is over all vector families \(\{\ket{u_x}\}_{x\in \Sigma}\) and
\(\{\ket{v_y}\}_{y\in \Sigma}\) in an arbitrarily large common finite-dimensional Hilbert space
$\bC^{d}$. For future use, we record that the $\gamma_2$ norm satisfies the triangle inequality
$\gamma_2(X+Y)\leq\gamma_2(X)+\gamma_2(Y)$. Additionally, for a PSD matrix $P$, it can be computed as
$\gamma_2(P)=\max_x P(x,x)$ and we have bounds
\begin{align*}
    & \| P \circ A\|_{\tr} \leq \gamma_2(P) \cdot \|A\|_{\tr} && \| P \circ A\| \leq \gamma_2(P) \cdot \|A\|.
\end{align*}
More generally, let $\{Z_i\}_{i=1}^n\subseteq\bC^{\Sigma\times\Sigma}$ be a family of matrices,
called \emph{filters}. We define the \emph{filtered $\gamma_2$ norm} by
\begin{align*}
    \gamma_2\!\left(A\mid\{Z_i\}_{i=1}^n\right)
    &= {}
    \progbox
    {
    \min_{\substack{
    d\in\bN\\
    \ket{u_x^i},\ket{v_y^i}\in\bC^d
    }}
    }
    {
    \max\left\{
    \max_{x\in\Sigma}
      \sum_{i=1}^n\|u_x^i\|_2^2,
    \max_{y\in\Sigma}
      \sum_{i=1}^n\|v_y^i\|_2^2
    \right\}
    }
    {
    A(x,y)
    =
    \sum_{i=1}^n
    Z_i(x,y)\braket{u_x^i}{v_y^i}
    & \forall x,y\in\Sigma
    }
    \end{align*}
where again $\bC^d$ is an arbitrarily large finite-dimensional Hilbert space.\footnote{If no such
factorization exists, the minimum is understood to be $+\infty$. For $Z_i=\Delta_i$, every
zero-diagonal matrix admits such a factorization, as is the case in all our applications.} Note that using a
single all-one filter $J$, we recover the ordinary unfiltered norm $\gamma_2(A) =
\gamma_2(A|\{J\})$.
\par
We also note that the filtered $\gamma_2$ norm has an SDP formulation, and therefore, we can obtain a
maximization formulation for it as
\begin{align}
    \gamma_2(A|\{Z_i\}) = \max_{\Gamma\neq 0} \frac{\|\Gamma \circ A\|}{\max_{i}\|\Gamma \circ Z_i\|},
    \label{eq:filtered-gamma2-dual}
\end{align}
where $\Gamma \in \bC^{\Sigma\times \Sigma}$. If $A$ and $\{Z_i\}$ are Hermitian, we can further restrict the feasible set to Hermitian matrices. If further $Z_i$ are real, we have $\gamma_2(\overline{A}\mid\{Z_i\})=\gamma_2(A\mid\{Z_i\})$.
For a more complete list of properties of the filtered $\gamma_2$ norm, we refer the reader to
\cite{LMR+11}.

\subsubsection{Fidelity}
For PSD operators $A,B\in \Pos(\cX)$, we define the (root) fidelity by $ \mathcal{F}(A,B) :=
\bigl\|\sqrt{A}\sqrt{B}\bigr\|_{1} = \Tr\!\left[\sqrt{\sqrt{A}\,B\,\sqrt{A}}\right].$ For unit
vectors $\ket{u}$ and $\ket{v}$, fidelity reduces to $\cF(\kbr{u},\kbr{v})=|\bk{u}{v}|$. Throughout
this work, we use the non-squared fidelity $\mathcal{F}$, and we write $\mathcal{F}^2(A,B)$ for its
square. We will also use the following optimization characterization.
\begin{theorem}[Alberti's characterization of fidelity {See, e.g., \cite[Theorem~3.19 and Corollary~3.20]{Wat18}}]\label{thm:alberti}
    Let $A,B\in\Pos(\cX)$. Then,
    \begin{align*}
        \cF(A,B)
        &= \inf_{\Gamma\succ0}
        \sqrt{\langle\Gamma,A\rangle\langle\Gamma^{-1},B\rangle} = \frac{1}{2}\inf_{\Gamma\succ0}
        \left(\langle\Gamma,A\rangle+\langle\Gamma^{-1},B\rangle\right).
    \end{align*}
\end{theorem}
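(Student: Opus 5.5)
We prove the two equalities in turn; both follow from semidefinite duality applied to the trace-norm expression $\cF(A,B)=\|\sqrt A\sqrt B\|_1$.

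\textbf{Step 1: the product form is at least $\cF(A,B)$.} Assume first that $A,B\succ0$; the singular case is handled by a limit argument at the end. Fix $\Gamma\succ0$. The matrix $\sqrt A\sqrt B$ factors as
\[
\sqrt A\sqrt B=\bigl(\sqrt A\,\Gamma^{1/2}\bigr)\bigl(\Gamma^{-1/2}\sqrt B\bigr).
\]
Hölder's inequality for Schatten norms, in the form $\|XY\|_1\le\|X\|_2\|Y\|_2$, gives
\[
\cF(A,B)\le\sqrt{\Tr[\Gamma^{1/2}A\Gamma^{1/2}]\,\Tr[\Gamma^{-1/2}B\Gamma^{-1/2}]}=\sqrt{\langle\Gamma,A\rangle\langle\Gamma^{-1},B\rangle}.
\]

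\textbf{Step 2: the product form equals the sum form.} The AM--GM inequality gives
\[
\sqrt{\langle\Gamma,A\rangle\langle\Gamma^{-1},B\rangle}\le\tfrac12\bigl(\langle\Gamma,A\rangle+\langle\Gamma^{-1},B\rangle\bigr),
\]
so the product infimum is at most half the sum infimum. For the reverse, rescale $\Gamma\mapsto t\Gamma$ with $t>0$. This leaves the product unchanged and sends the sum to $t\langle\Gamma,A\rangle+t^{-1}\langle\Gamma^{-1},B\rangle$. Minimizing over $t$ makes the sum equal to twice the product. Hence the two infima coincide, and it suffices to show that the product infimum attains $\cF(A,B)$.

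\textbf{Step 3: an optimal $\Gamma$ achieves equality.} We need a $\Gamma$ for which equality holds in Hölder's inequality. Equality holds when $\Gamma^{-1/2}\sqrt B$ is proportional to the polar-aligned factor $\bigl(\sqrt A\,\Gamma^{1/2}\bigr)^\dagger U$, where $U$ is the unitary from the polar decomposition of $\sqrt A\sqrt B$.

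A cleaner choice is the operator geometric mean
\[
\Gamma=A^{-1}\#B=A^{-1/2}\bigl(A^{1/2}BA^{1/2}\bigr)^{1/2}A^{-1/2},
\]
which is the unique positive solution of $\Gamma A\Gamma=B$. With this choice,
\[
\langle\Gamma,A\rangle=\Tr\bigl[(A^{1/2}BA^{1/2})^{1/2}\bigr]=\cF(A,B).
\]
For the other factor, $\Gamma^{-1}=A^{1/2}(A^{1/2}BA^{1/2})^{-1/2}A^{1/2}$. Cycling the trace,
\[
\langle\Gamma^{-1},B\rangle=\Tr\bigl[(A^{1/2}BA^{1/2})^{-1/2}\,A^{1/2}BA^{1/2}\bigr]=\cF(A,B).
\]
The product of the two factors is therefore $\cF(A,B)^2$, and its square root is $\cF(A,B)$, so the infimum is attained.

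\textbf{Step 4: the general PSD case (the main obstacle).} When $A$ or $B$ is singular, the infimum need not be attained, which is why the statement uses $\inf$. Replace $A,B$ by $A_\delta=A+\delta I$ and $B_\delta=B+\delta I$.

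The bound of Step 1 already holds for singular $A,B$, since it never inverts them. This gives $\inf\ge\cF(A,B)$ directly.

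For the upper bound, take the optimal $\Gamma_\delta$ from Step 3 for the pair $(A_\delta,B_\delta)$. Since $A\preceq A_\delta$ and $B\preceq B_\delta$,
\[
\langle\Gamma_\delta,A\rangle\langle\Gamma_\delta^{-1},B\rangle\le\langle\Gamma_\delta,A_\delta\rangle\langle\Gamma_\delta^{-1},B_\delta\rangle=\cF(A_\delta,B_\delta)^2.
\]
Fidelity is continuous, so $\cF(A_\delta,B_\delta)\to\cF(A,B)$ as $\delta\to0$. Therefore the infimum is at most $\cF(A,B)$, which completes the proof.
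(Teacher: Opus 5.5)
Your proof is correct, but it takes a different route from the one the paper relies on. The paper does not prove this statement; it defers to \cite{Wat18}, where Alberti's theorem is read off from the semidefinite program for fidelity. In that program $\cF(A,B)=\max\{\operatorname{Re}\Tr X : \bigl(\begin{smallmatrix}A & X\\ X^\dagger & B\end{smallmatrix}\bigr)\succeq0\}$, and the dual minimizes $\tfrac12\bigl(\langle Y_0,A\rangle+\langle Y_1,B\rangle\bigr)$ subject to $\bigl(\begin{smallmatrix}Y_0 & -I\\ -I & Y_1\end{smallmatrix}\bigr)\succeq0$. That block constraint forces $Y_0\succ0$, and by a Schur complement it is equivalent to $Y_1\succeq Y_0^{-1}$. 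Strong duality therefore gives the sum form for all PSD $A,B$ at once, and the product form follows by AM--GM and rescaling, exactly as in your Step~2.

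You replace the weak-duality half by H\"older's inequality. You replace the strong-duality half by an explicit minimizer, the geometric mean $A^{-1}\#B$, together with a $\delta I$ perturbation and continuity of $\cF$ for singular inputs. Your route is more elementary and more informative: it needs no Slater argument, and it shows the infimum is attained whenever $A,B\succ0$.

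The duality route avoids the limiting argument. More importantly for this paper, it survives the added constraints $\Gamma\circ P_i\preceq c\Gamma$ of the filtered fidelity $\cF_c$. There no closed-form optimizer like $A^{-1}\#B$ is available in general, and \Cref{prop:equiv-filtered-fidelity} proceeds by precisely the Schur-complement and Lagrangian-duality argument, invoking this theorem only at the end.

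Two cosmetic points. Your opening sentence says both equalities follow from semidefinite duality, but your argument never uses duality. The remark in Step~3 about the ``polar-aligned factor'' is superfluous once you verify the geometric-mean choice directly. Both can be dropped.
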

\par
We now mention a few standard properties of fidelity: For any PSD matrices $A_1,B_1\in
\Pos(\cX)$ and $A_2,B_2\in \Pos(\cY)$, fidelity is additive over direct sum $\cF(A_1\oplus A_2,
B_1\oplus B_2) = \cF(A_1,B_1) + \cF(A_2,B_2)$, tensorises $\mathcal{F}(A_1\otimes A_2,B_1\otimes
B_2) = \mathcal{F}(A_1,B_1)\mathcal{F}(A_2,B_2)$, and for any constants $a,c\geq 0$,
$\mathcal{F}(aA_1,cB_1) = \sqrt{ac}\mathcal{F}(A_1,B_1)$.
\par
For density operators $\rho,\sigma \in \mathcal{D}(\cX)$, the fidelity is bounded by
$\mathcal{F}(\rho,\sigma) \in [0,1]$ and can be related to trace distance by the \emph{Fuchs--van de
Graaf inequalities}
\begin{align*}
1 - \mathcal{F}(\rho,\sigma)\le
\tfrac12 \|\rho-\sigma\|_{\tr}
\le
\sqrt{1 - \mathcal{F}^2(\rho,\sigma)},
\end{align*}
with equality $\tfrac12 \|\rho-\sigma\|_{\tr} = \sqrt{1 - \mathcal{F}^2(\rho,\sigma)}$ holding
whenever $\rho$ and $\sigma$ have rank $1$. Moreover, over the space of density matrices
$\mathcal{D}(\cX)$, fidelity is jointly concave; i.e.,
$\mathcal{F}(\sum_{i=0}^{n}p_i\sigma_{i},\sum_{i=0}^{n}p_i\rho_{i}) \geq
\sum_{i}p_i\mathcal{F}(\sigma_{i},\rho_i)$ where $p:[0:n]\to [0,1]$ is a probability distribution.
\par
We will also use a generalization of the {Fuchs--van de Graaf inequalities} to a sub-normalized
setting which we prove for completeness as follows.
\begin{lemma}[Fuchs--van de Graaf with priors]\label{lem:Fuchs-van-de-Graaf-priors} Let
$\sigma_0,\sigma_1\in\Dens(\cX)$ and let $p_0,p_1\geq 0$ satisfy $p_0+p_1=1$. Then
\begin{align*}
1-2\sqrt{p_0p_1}\cF(\sigma_0,\sigma_1)
\leq
\|p_0\sigma_0-p_1\sigma_1\|_{\tr}
\leq
\sqrt{1-4p_0p_1\cF(\sigma_0,\sigma_1)^2}.
\end{align*}
\end{lemma}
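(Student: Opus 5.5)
Both inequalities reduce to the standard Fuchs--van de Graaf bounds applied to suitably normalized objects. The plan is to treat the upper bound by purification and Uhlmann's theorem, and the lower bound by the Helstrom/measurement characterization of the trace norm together with the classical Bhattacharyya bound.

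\emph{Upper bound.} By Uhlmann's theorem, choose purifications $\ket{u_0},\ket{u_1}$ of $\sigma_0,\sigma_1$ with $|\braket{u_0}{u_1}|=\cF(\sigma_0,\sigma_1)$. The partial trace does not increase the trace norm, so
\[
\|p_0\sigma_0-p_1\sigma_1\|_{\tr}\le\|p_0\kbr{u_0}-p_1\kbr{u_1}\|_{\tr}.
\]
The right-hand operator has rank at most two, with trace $p_0-p_1$ and determinant on its span equal to $-p_0p_1(1-c^2)$, where $c=|\braket{u_0}{u_1}|$. Its two eigenvalues $\lambda_\pm$ therefore satisfy $\lambda_++\lambda_-=p_0-p_1$ and $\lambda_+\lambda_-=-p_0p_1(1-c^2)\le0$. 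Hence
\[
|\lambda_+|+|\lambda_-|=\sqrt{(p_0-p_1)^2+4p_0p_1(1-c^2)}=\sqrt{1-4p_0p_1c^2},
\]
using $(p_0+p_1)^2=1$. This gives the upper bound.

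\emph{Lower bound.} Write $\|p_0\sigma_0-p_1\sigma_1\|_{\tr}=\max_{0\preceq P\preceq I}\Tr[P(p_0\sigma_0-p_1\sigma_1)]\cdot 2-(p_0-p_1)$, or equivalently work with a measurement. Take the two-outcome projective measurement $\{P,I-P\}$ onto the positive and nonpositive parts of $p_0\sigma_0-p_1\sigma_1$. With $a_j=\Tr[P\sigma_j]$ and $b_j=1-a_j$, this gives
\[
\|p_0\sigma_0-p_1\sigma_1\|_{\tr}=|p_0a_0-p_1a_1|+|p_0b_0-p_1b_1|.
\]
For classical quantities, $|x-y|\ge x+y-2\sqrt{xy}$ for $x,y\ge0$, since this is equivalent to $|x-y|\ge(\sqrt x-\sqrt y)^2$. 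Summing over both outcomes yields
\[
\|p_0\sigma_0-p_1\sigma_1\|_{\tr}\ge1-2\sqrt{p_0p_1}\bigl(\sqrt{a_0a_1}+\sqrt{b_0b_1}\bigr).
\]
Finally, $\sqrt{a_0a_1}+\sqrt{b_0b_1}$ is the classical fidelity of the measured distributions. It is at least $\cF(\sigma_0,\sigma_1)$, which is the wrong direction, so this route needs fixing.

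The correct order is to apply the inequality to a measurement achieving the quantum fidelity, not the Helstrom measurement. By Fuchs--Caves, there is a measurement $\{M_k\}$ with $\sum_k\sqrt{\Tr[M_k\sigma_0]\Tr[M_k\sigma_1]}=\cF(\sigma_0,\sigma_1)$. For any measurement, monotonicity of the trace norm under measurement gives
\[
\|p_0\sigma_0-p_1\sigma_1\|_{\tr}\ge\sum_k|p_0x_k-p_1y_k|,
\]
where $x_k=\Tr[M_k\sigma_0]$ and $y_k=\Tr[M_k\sigma_1]$. Applying the elementwise bound $|x-y|\ge x+y-2\sqrt{xy}$ with $x=p_0x_k$ and $y=p_1y_k$, and summing, gives
\[
\sum_k|p_0x_k-p_1y_k|\ge1-2\sqrt{p_0p_1}\sum_k\sqrt{x_ky_k}=1-2\sqrt{p_0p_1}\,\cF(\sigma_0,\sigma_1).
\]

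The main point needing care is invoking the Fuchs--Caves optimal measurement; alternatively it can be avoided by an Alberti-type argument. The Uhlmann step and the rank-two eigenvalue computation are routine.
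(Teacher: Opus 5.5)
Your proof is correct, but it takes a different route from the paper's. The paper does not reprove Fuchs--van de Graaf. It sets $\rho_0=p_0\sigma_0\oplus p_1\sigma_1$ and $\rho_1=p_1\sigma_1\oplus p_0\sigma_0$ on $\cX\oplus\cX$. These are genuine density operators, and additivity of fidelity over direct sums together with scaling gives $\cF(\rho_0,\rho_1)=2\sqrt{p_0p_1}\,\cF(\sigma_0,\sigma_1)$. Moreover, $\tfrac12\|\rho_0-\rho_1\|_{\tr}=\|p_0\sigma_0-p_1\sigma_1\|_{\tr}$. The ordinary unweighted inequalities applied to $(\rho_0,\rho_1)$ then give both bounds at once, in two lines.

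You instead unroll the standard proof of Fuchs--van de Graaf in the weighted setting. For the upper bound you use Uhlmann purifications, contractivity of the trace norm under partial trace, and the rank-two eigenvalue computation (trace $p_0-p_1$, determinant $-p_0p_1(1-c^2)\le 0$, hence $|\lambda_+|+|\lambda_-|=\sqrt{1-4p_0p_1c^2}$). For the lower bound you use a fidelity-achieving measurement, contractivity under the measurement channel, and the classical estimate $|x-y|\ge(\sqrt x-\sqrt y)^2$. Both halves check out.

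The trade-off is this. The paper's symmetrization trick is shorter and treats the unweighted statement as a black box. Your version is self-contained and shows exactly where the priors enter, at the cost of invoking Fuchs--Caves.

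Two small remarks on the write-up. First, delete the Helstrom paragraph: as you yourself note, it yields the inequality in the wrong direction and plays no role in the final argument. Second, you do not need attainment in Fuchs--Caves. Take a measurement whose classical Bhattacharyya coefficient is at most $\cF(\sigma_0,\sigma_1)+\epsilon$, then let $\epsilon\to0$. Alternatively, the Powers--St{\o}rmer inequality $\|A-B\|_{\tr}\ge\Tr A+\Tr B-2\Tr\sqrt A\sqrt B$, with $A=p_0\sigma_0$, $B=p_1\sigma_1$ and $\Tr\sqrt A\sqrt B\le\cF(A,B)$, gives the lower bound directly with no measurement at all.
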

\begin{proof}
    Let $\rho_0 = p_0\sigma_0 \oplus p_1\sigma_1$ and $\rho_1 = p_1\sigma_1 \oplus p_0\sigma_0$ so
    that $\rho_0,\rho_1 \in \cD(\cX\oplus \cX)$. Then,
    \begin{align*}
        \cF(\rho_0,\rho_1) &= \cF( p_0\sigma_0 \oplus p_1\sigma_1, p_1\sigma_1\oplus p_0\sigma_0) = \cF(p_0\sigma_0,p_1\sigma_1) +
        \cF(p_1\sigma_1,p_0\sigma_0) = 2\sqrt{p_0p_1}\cF(\sigma_0,\sigma_1). \\
        \frac{1}{2}\|\rho_0 - \rho_1\|_{\tr}&= \frac{1}{2} \| p_0\sigma_0 \oplus p_1\sigma_1 -  p_1\sigma_1 \oplus p_0\sigma_0\|_{\tr} = \|p_0\sigma_0-p_1\sigma_1\|_{\tr}.
    \end{align*}
    The lemma follows by applying the ordinary Fuchs-van de Graaf inequalities to the density
    operators $\rho_0$ and $\rho_1$.
\end{proof}
\subsubsection{Distances between density operators.}
For a pair $\sigma,\rho\in \cD(\cX)$ of density matrices, the trace norm induces a {trace distance}
    $\cD(\rho,\sigma) = \frac12 \|\rho-\sigma\|_{1}$ and corresponds to the classical total
    variation distance. Fidelity itself is not a metric, but it induces several useful distances
    between quantum states: the \emph{Bures angle} is  defined as $d_{\angle}(\rho,\sigma) :=
    \arccos(\mathcal{F}(\rho,\sigma))$ and the \emph{Bures distance} is defined as
    $d_{B}(\rho,\sigma):=\sqrt{2-2\cF(\rho,\sigma)}$. We also introduce the purified distance
    $\cP(\rho,\sigma):= \sqrt{1-\mathcal{F}^{2}(\rho,\sigma)} =\sin(d_{\angle}(\rho,\sigma))$.
    Specifically, the Bures distance is closely related to the classical Hellinger distance, and
    we can reinterpret the Fuchs--van de Graaf inequalities as
\begin{align*}
    \frac12d^2_{B}(\rho,\sigma) \leq \cD(\rho,\sigma) \leq \cP(\rho,\sigma) \leq d_{B}(\rho,\sigma) \leq d_{\angle}(\rho,\sigma) \leq \frac\pi2  \cP(\rho,\sigma),
\end{align*}
with the equality  $\cD(\rho,\sigma) = \cP(\rho,\sigma)$ holding whenever $\rho$ and $\sigma$ are
rank-one operators. All three fidelity-based distances defined above are metrics and therefore
satisfy the triangle inequality.

\subsubsection{Divergences and max-relative entropy}
For density operators $\rho,\sigma\in\Dens(\cX)$ satisfying $\supp(\rho)\subseteq\supp(\sigma)$, the
\emph{quantum relative entropy} is
\begin{align*}
    D(\rho\Vert\sigma):=\Tr[\rho(\log_2\rho-\log_2\sigma)].
\end{align*}
If $\supp(\rho)\nsubseteq\supp(\sigma)$, we set $D(\rho\Vert\sigma):=+\infty$. Operationally, this
quantity measures how rapidly $\rho$ can be distinguished from $\sigma$ when one is given many
independent copies of the unknown state, represented by $\rho^{\otimes n}$ or $\sigma^{\otimes n}$,
in the limit as $n\to\infty$. When $\rho$ and $\sigma$ are diagonal in the same basis, with diagonal
entries $p$ and $q$, it reduces to the classical Kullback--Leibler divergence
$D_{\mathrm{KL}}(p\Vert q)=\sum_x p(x)\log_2(p(x)/q(x))$. Quantum relative entropy is additive under
tensor products. That is,
\begin{align*}
    D(\rho_1\otimes\rho_2\Vert\sigma_1\otimes\sigma_2)
    =D(\rho_1\Vert\sigma_1)+D(\rho_2\Vert\sigma_2).
\end{align*}
\par
For PSD operators $\rho,\sigma\in\Pos(\cX)$, the \emph{max-relative entropy} measures the smallest
factor by which $\sigma$ must be rescaled in order to dominate $\rho$. It is defined as
\begin{align*}
    D_{\max}(\rho\Vert\sigma):=\log_2\inf\{\lambda\geq 0:\rho\preceq\lambda\sigma\}.
\end{align*}
When $\supp(\rho)\subseteq\supp(\sigma)$, it can equivalently be written as
\begin{align*}
    2^{D_{\max}(\rho\Vert\sigma)}
    =\|\sigma^{-1/2}\rho\sigma^{-1/2}\|_{\infty}
    =\lambda_{\max}(\sigma^{-1/2}\rho\sigma^{-1/2}),
\end{align*}
where the inverse is taken on the support of $\sigma$. Otherwise,
$D_{\max}(\rho\Vert\sigma)=+\infty$.  Max-relative entropy is also additive,
\begin{align*}
    D_{\max}(\rho_1\otimes\rho_2\Vert\sigma_1\otimes\sigma_2)
    =D_{\max}(\rho_1\Vert\sigma_1)+D_{\max}(\rho_2\Vert\sigma_2).
\end{align*}
Consequently, its exponentiated form is multiplicative.

\subsection{Quantum information theory}

\subsubsection{Quantum channels}
A linear map $\cE:\cL(\cH)\to\cL(\cK)$ is a \emph{quantum channel} if it is completely positive and
trace preserving (CPTP). Complete positivity means that $\cE\otimes\operatorname{id}_{\cR}$ maps
positive semidefinite operators to positive semidefinite operators for all finite-dimensional
auxiliary spaces $\cR$, while trace preservation means that $\Tr[\cE(X)]=\Tr[X]$ for all
$X\in\cL(\cH)$. Equivalently, $\cE$ is a quantum channel if it admits a Kraus representation
$
\cE(X)=\sum_k K_kXK_k^\dagger
$
with $K_k\in\cL(\cH,\cK)$ satisfying
$
\sum_k K_k^\dagger K_k=I_{\cH}.
$
\par
Complete positivity ensures that $\cE$ remains positive when it acts on one part of a state that may
be entangled with an auxiliary register. Important examples of quantum channels include isometric
channels $X\mapsto VXV^\dagger$ and the partial trace. Quantum channels are closed under
composition, tensor products, and convex combinations.
\par
Quantum channels cannot increase the distinguishability of quantum states. For every quantum channel
$\cE$ and density operators $\rho,\sigma$, trace distance and all three fidelity-based distances
defined above are contractive; in particular, $\cD(\cE(\rho),\cE(\sigma))\leq\cD(\rho,\sigma)$,
while fidelity is monotone: $\cF(\cE(\rho),\cE(\sigma))\geq\cF(\rho,\sigma)$. Quantum relative
entropy and max-relative entropy also satisfy the \emph{data-processing inequalities}
$D(\cE(\rho)\Vert\cE(\sigma)) \leq D(\rho\Vert\sigma)$ and $D_{\max}(\cE(\rho)\Vert\cE(\sigma)) \leq
D_{\max}(\rho\Vert\sigma)$.
\par
We also use the following norm-contraction property of quantum channels.
\begin{lemma}\label{lem:channel-norm-contraction}
    Let $\cE:\cL(\cH)\to\cL(\cK)$ be a quantum channel. Then, for every
    $X\in\Herm(\cH)$ and $Y\in\Herm(\cK)$,
    \[
        \|\cE(X)\|_{\tr}\leq\|X\|_{\tr}
        \qquad\text{and}\qquad
        \|\cE^\dagger(Y)\|\leq\|Y\|.
    \]
\end{lemma}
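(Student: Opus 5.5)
The trace-norm bound will come from the Kraus representation and the decomposition of a Hermitian operator into positive and negative parts. The operator-norm bound will then follow by duality, using the fact that the trace norm and operator norm are dual with respect to the Hilbert--Schmidt inner product.

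\emph{Trace norm.} Write $X=P-Q$ with $P,Q\succeq0$ and $\Tr[P]+\Tr[Q]=\|X\|_{\tr}$; the Jordan decomposition $P=X_+$, $Q=X_-$ achieves the minimum in the characterization of the trace norm recalled earlier. By linearity, $\cE(X)=\cE(P)-\cE(Q)$. Complete positivity (positivity suffices here) gives $\cE(P),\cE(Q)\succeq0$. The same characterization then yields
\[\|\cE(X)\|_{\tr}\leq\Tr[\cE(P)]+\Tr[\cE(Q)].\]
Trace preservation turns the right-hand side into $\Tr[P]+\Tr[Q]=\|X\|_{\tr}$, which proves the first inequality.

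\emph{Operator norm.} First, $\cE^\dagger$ maps Hermitian operators to Hermitian operators. This follows because $\cE$ is Hermiticity-preserving, being of the form $\sum_k K_k(\cdot)K_k^\dagger$, and hence $\cE^\dagger(Y)=\sum_k K_k^\dagger YK_k$. Since $\cE^\dagger(Y)$ is Hermitian, its operator norm is attained on Hermitian test operators:
\[\|\cE^\dagger(Y)\|=\max\{|\langle X,\cE^\dagger(Y)\rangle| : X\in\Herm(\cH),\ \|X\|_{\tr}\le1\};\]
for instance, take $X$ to be the projector onto a top eigenvector. For each such $X$,
\[|\langle \cE^\dagger(Y),X\rangle|=|\langle Y,\cE(X)\rangle|\le\|Y\|\,\|\cE(X)\|_{\tr}\le\|Y\|\,\|X\|_{\tr}\le\|Y\|.\]
Here the first inequality is Hölder's inequality, and the second is the trace-norm bound just proved. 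Taking the maximum over $X$ gives the second claim.

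An alternative route is to note that $\cE^\dagger$ is unital, since $\sum_k K_k^\dagger K_k=I$, and positive. Then $-\|Y\|I\preceq Y\preceq\|Y\|I$ implies $-\|Y\|I\preceq\cE^\dagger(Y)\preceq\|Y\|I$ directly.

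The only point needing care is restricting the duality to Hermitian test operators, which is what makes the trace-norm bound applicable. This is handled by the eigenvector choice above. Otherwise the argument is routine.
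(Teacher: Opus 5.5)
Your proof is correct and follows the paper's argument essentially step for step. The trace-norm bound comes from the Jordan decomposition together with positivity and trace preservation. The operator-norm bound comes from Hermiticity of $\cE^\dagger(Y)$ via the Kraus form, then duality over Hermitian test operators, then H\"older's inequality combined with the first bound. Your eigenvector justification for restricting to Hermitian test operators fills in a step the paper leaves implicit. The unitality argument you sketch is a valid shortcut for the second inequality.
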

\begin{proof}
    Let $X = X_{+} - X_{-}$ be the decomposition of $X$ into its positive and negative parts, where  $X_+,X_-\succeq 0$. Then,
    \begin{align*}
        \|\cE(X)\|_{\tr}
        &\leq \|\cE(X_+)\|_{\tr} + \|\cE(X_-)\|_{\tr}
        = \Tr[\cE(X_+)]+\Tr[\cE(X_-)]
        = \Tr[X_+] + \Tr[X_-] = \|X\|_{\tr},
    \end{align*}
    where we used positivity and trace preservation of $\cE$. By the Kraus representation,
    $\cE^{\dagger}(Y)=\sum_r K_r^{\dagger}YK_r$ is also Hermitian. Thus, for the operator norm on the
    adjoint, we use the dual norm definition as
    \begin{align*}
        \|\cE^{\dagger}(Y)\|
        &= \max_{X\in\Herm(\cH):\|X\|_{\tr}\leq 1}
        |\langle \cE^{\dagger}(Y),X\rangle|
        = \max_{X\in\Herm(\cH):\|X\|_{\tr}\leq 1}
        |\langle Y,\cE(X)\rangle|\\
        &\leq \max_{X\in\Herm(\cH):\|X\|_{\tr}\leq 1}
        \|Y\|\cdot\|\cE(X)\|_{\tr}\leq \|Y\|. \qedhere
    \end{align*}
\end{proof}
\subsubsection{Quantum Measurements}
Let $\cH$ be a Hilbert space and let $\Omega$ be a finite set. A \emph{projective measurement}
on $\cH$ with outcome set $\Omega$ is a collection $\{\Pi_i:i\in \Omega\}\subset \cL(\cH)$ satisfying
$\Pi_s^2 = \Pi_s$, $\Pi_s \Pi_t = 0$, $\sum_{r\in \Omega} \Pi_r = I_{\cH}$ for all $s,t \in \Omega$
with $s\neq t$. When measuring a state $\rho \in \cD(\cH)$, the probability of obtaining outcome $s$
is $\Tr[\Pi_s\rho]$. Consequently, for a pure state $\ket{\psi}$ the probability of outcome $s$ can be
computed as  $\bra{\psi}\Pi_s \ket{\psi}$.
 For a subset $S\subset \Omega$, we abbreviate $\Pi_{S} := \sum_{s\in S} \Pi_s$.
 \par
 In particular, we would use the Helstrom theorem, which characterizes the optimal measurement for
 distinguishing two quantum states.
 \begin{theorem}[Helstrom's theorem \cite{Hel76}]
    \label{thm:helstrom}
    Let $\rho_0,\rho_1\in\cD(\cH)$ be given with prior probabilities
    $p_0,p_1\geq0$ satisfying $p_0+p_1=1$. Then the minimum average error
    probability over all binary projective measurements is
    \begin{align*}
    \min_{\substack{\Pi_0+\Pi_1=I_{\cH}\\
                    \Pi_0\Pi_1=0}}
    \left(
    p_0\Tr[\Pi_1\rho_0]+p_1\Tr[\Pi_0\rho_1]
    \right)
    =
    \frac{1}{2}\left(1-\|p_0\rho_0-p_1\rho_1\|_{\tr}\right).
    \end{align*}
\end{theorem}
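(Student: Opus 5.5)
This is the classical Helstrom theorem, so the plan is to reduce the error probability to an optimisation over a single projector and then evaluate that optimisation spectrally.

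Write $\Pi_0=I-\Pi_1$. The error probability of the measurement is
\[
p_0\Tr[\Pi_1\rho_0]+p_1\Tr[(I-\Pi_1)\rho_1]=p_1-\Tr[\Pi_1(p_1\rho_1-p_0\rho_0)].
\]
Minimising the error is therefore the same as maximising $\Tr[\Pi_1 H]$ over projectors $\Pi_1$, where $H:=p_1\rho_1-p_0\rho_0$ is Hermitian.

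The first step is to diagonalise $H=H_+-H_-$ with $H_\pm\succeq0$ supported on orthogonal eigenspaces. For any projector $\Pi_1$ we have $0\preceq \Pi_1\preceq I$. This gives $\Tr[\Pi_1 H]\leq \Tr[\Pi_1 H_+]\leq\Tr[H_+]$. The bound is attained by taking $\Pi_1$ to be the projector onto $\supp(H_+)$.

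The second step is to express $\Tr[H_+]$ in terms of quantities we know. We have $\Tr[H_+]+\Tr[H_-]=\|H\|_{\tr}$ and $\Tr[H_+]-\Tr[H_-]=\Tr H=p_1-p_0$. Solving, $\Tr[H_+]=\tfrac12(\|H\|_{\tr}+p_1-p_0)$.

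The final step is substitution. The minimum error is
\[
p_1-\tfrac12\bigl(\|H\|_{\tr}+p_1-p_0\bigr)=\tfrac12(p_0+p_1)-\tfrac12\|H\|_{\tr}=\tfrac12\bigl(1-\|p_0\rho_0-p_1\rho_1\|_{\tr}\bigr),
\]
since $p_0+p_1=1$ and $\|H\|_{\tr}=\|p_0\rho_0-p_1\rho_1\|_{\tr}$.

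The only subtle point is checking that the optimiser is a genuine projective measurement. Taking $\Pi_1$ as the spectral projector onto the positive part of $H$ and $\Pi_0=I-\Pi_1$ gives $\Pi_0\Pi_1=0$ and $\Pi_0+\Pi_1=I$. Zero eigenvalues of $H$ may be assigned to either outcome without changing the value. Since the upper bound $\Tr[\Pi_1H]\leq\Tr[H_+]$ holds for every projector, the minimum over projective measurements is exactly the claimed value.
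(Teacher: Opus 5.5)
Your proposal is correct and complete. It is the standard proof of Helstrom's theorem: split $H=p_1\rho_1-p_0\rho_0$ into positive and negative parts, bound $\Tr[\Pi_1 H]\leq\Tr[H_+]$ using only $0\preceq\Pi_1\preceq I$, attain the bound with the projector onto $\supp(H_+)$, and check that $p_1-\Tr[H_+]=\tfrac12(1-\|H\|_{\tr})$.

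The paper gives no proof of this statement; it simply quotes it from Helstrom as a preliminary, so there is no alternative route to compare against. Because your bound uses only $0\preceq\Pi_1\preceq I$, it also shows that the same minimum holds over all two-outcome POVMs, which is slightly stronger than what is stated.
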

\subsubsection{Gram matrices and purifications}
Let $n\in\bN$. By the spectral theorem, every PSD matrix $P\in\Pos(\bC^n)$ admits a decomposition
$P=W^\dagger W$, where $W\in\cL(\bC^n,\bC^r)$ and $r=\rank(P)$. Letting
$\ket{\phi_1},\dots,\ket{\phi_n}$ denote the columns of $W$, we have $P(i,j)=\bk{\phi_i}{\phi_j}$.
Then, the vectors $\ket{\phi_i}$ are all unit vectors if and only if all the diagonal entries of $P$
are one. In that case, we say $P$ is a \emph{Gram matrix} corresponding to unit vectors
$\ket{\phi_1},\dots,\ket{\phi_n}$. We can view a Gram matrix $\Psi$ as an unweighted description of
a state family $\{\ket{\phi_i}:i\in \Sigma\}$ that realizes it. Let $\mu:\Sigma \to [0,1]$ be a
distribution. Then, the weighted matrix $\Psi^{\top}\circ \kbr{\mu}$ is a density operator.
\par
Gram matrices induce a quantum channel through Hadamard product as follows.  To see this, consider matrices
$D_k:=\mathrm{Diag}(\bk{k}{\phi_1},\dots,\bk{k}{\phi_n})$, and  verify that
    $
    X\circ G^\top=\sum_{k=1}^r D_kXD_k^\dagger
    $
    and
    $
    \sum_{k=1}^rD_k^\dagger D_k=I.
    $
\begin{theorem}\label{thm:hadamard-channel} Let $G\in\bC^{n\times n}$ be a Gram matrix. Then the map
    $\cE_G:\cL(\bC^n)\to\cL(\bC^n)$ defined by $\cE_G(X):=X\circ G^\top$ is a quantum channel.
    \end{theorem}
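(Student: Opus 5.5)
The plan is to exhibit an explicit Kraus representation for $\cE_G$ and check the two defining conditions of a quantum channel: complete positivity and trace preservation. Since $G$ is a Gram matrix, I will write $G = W^\dagger W$ with $W\in\cL(\bC^n,\bC^r)$ whose columns $\ket{\phi_1},\dots,\ket{\phi_n}$ are unit vectors, so that $G(i,j)=\braket{\phi_i}{\phi_j}$. The Kraus operators are the diagonal matrices $D_k=\mathrm{Diag}(\braket{k}{\phi_1},\dots,\braket{k}{\phi_n})$ for $k\in[r]$, as already suggested in the text.

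First I verify the identity $X\circ G^\top=\sum_k D_kXD_k^\dagger$ entrywise. Because the $D_k$ are diagonal, the $(i,j)$ entry of $D_kXD_k^\dagger$ is $\braket{k}{\phi_i}X(i,j)\overline{\braket{k}{\phi_j}}$. Summing over $k$ and using completeness of $\{\ket{k}\}$ in $\bC^r$ gives
\[
X(i,j)\sum_k\braket{\phi_j}{k}\braket{k}{\phi_i}=X(i,j)\braket{\phi_j}{\phi_i}=X(i,j)G(j,i)=X(i,j)G^\top(i,j).
\]
This step needs some care with the conjugation convention, because the transpose of $G$, rather than $G$ itself, must be what appears.

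Second, I check the normalization $\sum_k D_k^\dagger D_k=I$. This matrix is diagonal with $i$-th entry $\sum_k|\braket{k}{\phi_i}|^2=\|\phi_i\|_2^2=G(i,i)=1$, and this is exactly where the unit-diagonal property of a Gram matrix is used. Finally, the standard Kraus characterization recalled in the preliminaries says that any map of the form $X\mapsto\sum_kK_kXK_k^\dagger$ with $\sum_kK_k^\dagger K_k=I$ is CPTP, which concludes the proof.

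There is no real obstacle here. The only delicate point is the conjugation bookkeeping in the first step, namely confirming that the transpose $G^\top$ appears with the given definition of $D_k$. As an independent sanity check of trace preservation, $\Tr[X\circ G^\top]=\sum_iX(i,i)G(i,i)=\Tr X$. Complete positivity can also be cross-checked via the Choi matrix: it is $\sum_{i,j}G(j,i)\ketbra{ii}{jj}$, which is unitarily a principal embedding of $G^\top=\overline{G}\succeq0$ and hence PSD.
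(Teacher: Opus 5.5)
Your proof is correct and is essentially the paper's argument: the paper also takes the diagonal Kraus operators $D_k=\mathrm{Diag}(\langle k|\phi_1\rangle,\dots,\langle k|\phi_n\rangle)$ and checks that $X\circ G^\top=\sum_k D_kXD_k^\dagger$ and that $\sum_k D_k^\dagger D_k=I$. Your entrywise computation confirms the transpose convention, and the unit-diagonal property gives the normalization. The trace and Choi-matrix checks are valid but not needed.
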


\paragraph{Purifications and Gram-matrix representations.}
Let $\cX$ and $\cY$ be Hilbert spaces. We call a vector $\ket{\psi}\in\cX\otimes\cY$ a
\emph{purification} of $\rho\in\Pos(\cX)$ if $\Tr_{\cY}[\kbr{\psi}]=\rho$. Given a state family
$\{\ket{\psi_x}:x\in\Sigma\}\subseteq\cY$ with Gram matrix $\Psi$ and a distribution $\mu$ on
$\Sigma$, the vector $\ket{\psi^\mu}:=\sum_{x\in\Sigma}\sqrt{\mu(x)}\ket{x}\ket{\psi_x}$ is a
purification of the weighted state $\Psi^\top\circ\kbr{\mu}$, since
$\Tr_{\cY}[\kbr{\psi^\mu}]=\Psi^\top\circ\kbr{\mu}$. We can therefore view state-generation and
state-evolution problems in terms of Gram matrices: two state families with the same Gram matrix are
related by an input-independent isometry. This follows from the uniqueness of purification and its
corollary below.
\begin{theorem}[Uniqueness of Purification]
    \label{thm:uniqueness-purification}
    Let $\cX,\cY_1,\cY_2$ be Hilbert spaces with $\dim(\cY_1)\leq\dim(\cY_2)$, and for
    $\rho\in\Pos(\cX)$, let $\ket{\Psi}\in\cX\otimes\cY_1$ and $\ket{\Phi}\in\cX\otimes\cY_2$ be two
    purifications of $\rho$. Then there exists an isometry $V\in\cL(\cY_1,\cY_2)$ such that
    $
        (I_{\cX}\otimes V)\ket{\Psi}=\ket{\Phi}.
    $
    \end{theorem}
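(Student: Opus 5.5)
The plan is to reduce the statement to the existence of a Schmidt decomposition and then build $V$ by hand. Write $d_1=\dim(\cY_1)\le d_2=\dim(\cY_2)$.

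First, take Schmidt decompositions of the two purifications:
\[
\ket{\Psi}=\sum_{j=1}^{r}\sqrt{\lambda_j}\,\ket{a_j}\ket{b_j},
\qquad
\ket{\Phi}=\sum_{j=1}^{r'}\sqrt{\lambda'_j}\,\ket{a'_j}\ket{c_j}.
\]
Here $\{\ket{a_j}\},\{\ket{a'_j}\}\subseteq\cX$, $\{\ket{b_j}\}\subseteq\cY_1$ and $\{\ket{c_j}\}\subseteq\cY_2$ are orthonormal families, and $\lambda_j,\lambda'_j>0$. These decompositions come from the singular value decomposition recalled in the preliminaries, applied to the coefficient matrices of $\ket{\Psi}$ and $\ket{\Phi}$ viewed as operators $\cY_1\to\cX$ and $\cY_2\to\cX$.

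Taking the partial trace of each decomposition gives two spectral decompositions of $\rho$:
\[
\Tr_{\cY_1}\kbr{\Psi}=\sum_j\lambda_j\kbr{a_j},
\qquad
\Tr_{\cY_2}\kbr{\Phi}=\sum_j\lambda'_j\kbr{a'_j}.
\]

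The key step, and the only subtle point, is that we may choose the same $\cX$-side vectors in both decompositions. Both expressions are spectral decompositions of the same positive operator $\rho$, so $r=r'=\rank\rho$ and the multisets $\{\lambda_j\}$ and $\{\lambda'_j\}$ coincide. When $\rho$ has repeated eigenvalues the eigenbasis is not unique, so we fix one orthonormal eigenbasis $\{\ket{a_j}\}$ of $\supp(\rho)$ and expand both purifications in it. Concretely, $\ket{\Psi}=\sum_j\ket{a_j}\otimes\ket{\tilde b_j}$ with $\ket{\tilde b_j}:=(\bra{a_j}\otimes I)\ket{\Psi}$; the component of $\ket{\Psi}$ outside $\supp(\rho)\otimes\cY_1$ vanishes because its reduced state is supported on $\supp(\rho)$. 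Computing $\Tr_{\cX}$ of the cross terms shows $\braket{\tilde b_j}{\tilde b_k}=\bra{a_k}\rho\ket{a_j}=\lambda_j\delta_{jk}$. Hence $\ket{b_j}:=\ket{\tilde b_j}/\sqrt{\lambda_j}$ are orthonormal, and likewise we obtain orthonormal $\ket{c_j}$ for $\ket{\Phi}$ with respect to the same $\ket{a_j}$. This sidesteps the degeneracy issue entirely.

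Finally, define $V$ on $\mathrm{span}\{\ket{b_j}\}$ by $V\ket{b_j}=\ket{c_j}$. Extend $\{\ket{b_j}\}$ to an orthonormal basis of $\cY_1$ and $\{\ket{c_j}\}$ to an orthonormal basis of $\cY_2$. Because $d_1\le d_2$, the remaining $d_1-r$ basis vectors of $\cY_1$ can be mapped injectively to remaining orthonormal vectors of $\cY_2$. The resulting $V$ sends an orthonormal basis to an orthonormal family, so $V^\dagger V=I_{\cY_1}$. By linearity, $(I_{\cX}\otimes V)\ket{\Psi}=\sum_j\sqrt{\lambda_j}\ket{a_j}\ket{c_j}=\ket{\Phi}$, which completes the argument.
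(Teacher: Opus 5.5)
The paper gives no proof of this statement; it is quoted in the preliminaries as a standard fact and used only to derive the subsequent corollary on Gram-matrix equivalence. Your argument is correct and is the standard textbook proof. You expand both purifications in one fixed orthonormal eigenbasis $\{\ket{a_j}\}$ of $\supp(\rho)$. This yields orthonormal families $\{\ket{b_j}\}\subseteq\cY_1$ and $\{\ket{c_j}\}\subseteq\cY_2$ with the same coefficients $\sqrt{\lambda_j}$. The hypothesis $\dim(\cY_1)\le\dim(\cY_2)$ is exactly what lets $\ket{b_j}\mapsto\ket{c_j}$ extend to an isometry. The edge case $\rho=0$ is covered by the empty sums.

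Two small remarks:
\begin{itemize}
\item The opening paragraph with two independent Schmidt decompositions is never used. The common-eigenbasis construction alone handles degenerate eigenvalues, and it is the real content of the proof.
\item The identity $\braket{\tilde b_j}{\tilde b_k}=\bra{a_k}\rho\ket{a_j}$ follows from $\braket{\tilde b_j}{\tilde b_k}=\Tr\bigl[(\ketbra{a_j}{a_k}\otimes I)\kbr{\Psi}\bigr]$. That is, it comes from the partial trace over $\cY_1$, not over $\cX$ as you wrote.
\end{itemize}
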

\begin{corollary}[Unitary equivalence between equal-dimensional
spaces]\label{cor:gram-unitary-equivalence} Let $\Sigma$ be a finite index set and let
$\{\ket{\phi_x}:x\in\Sigma\}\subseteq \cH_1$ and $\{\ket{\psi_x}:x\in\Sigma\}\subseteq \cH_2$ be two
families of vectors such that, for all $x,y\in \Sigma$, $\bk{\phi_x}{\phi_y}=\bk{\psi_x}{\psi_y}$.
If $\dim(\cH_1)=\dim(\cH_2)$, then there exists a unitary $U\in \cU(\cH_1,\cH_2)$ such that for all
$x\in \Sigma$,
$
    U\ket{\phi_x}=\ket{\psi_x}.
$
\end{corollary}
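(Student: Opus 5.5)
We derive the corollary from \Cref{thm:uniqueness-purification}, applied to a suitably chosen density operator on an auxiliary space indexed by $\Sigma$. Let $\cX:=\bC^\Sigma$ and fix the uniform distribution $\mu$ on $\Sigma$ (any full-support distribution would do). Define
\[
\ket{\Phi}:=\sum_{x\in\Sigma}\sqrt{\mu(x)}\ket{x}\ket{\phi_x}\in\cX\otimes\cH_1,
\qquad
\ket{\Psi}:=\sum_{x\in\Sigma}\sqrt{\mu(x)}\ket{x}\ket{\psi_x}\in\cX\otimes\cH_2 .
\]
By the computation recorded just before the theorem, $\Tr_{\cH_1}[\kbr{\Phi}]$ is the matrix with $(x,y)$ entry $\sqrt{\mu(x)\mu(y)}\braket{\phi_y}{\phi_x}$, and likewise for $\ket{\Psi}$. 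The hypothesis $\braket{\phi_x}{\phi_y}=\braket{\psi_x}{\psi_y}$ for all $x,y$ therefore shows that both vectors purify the same operator $\rho\in\Pos(\cX)$. The vectors need not be unit vectors, but this is harmless: the theorem is stated for $\rho\in\Pos(\cX)$, so no normalization is required.

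Since $\dim\cH_1=\dim\cH_2$, \Cref{thm:uniqueness-purification} gives an isometry $U\in\cL(\cH_1,\cH_2)$ with $(I_\cX\otimes U)\ket{\Phi}=\ket{\Psi}$. An isometry between spaces of equal finite dimension is unitary, so $U\in\cU(\cH_1,\cH_2)$. Expanding the purification identity gives
\[
\sum_x\sqrt{\mu(x)}\ket{x}\otimes U\ket{\phi_x}=\sum_x\sqrt{\mu(x)}\ket{x}\otimes\ket{\psi_x}.
\]
Applying $\bra{x}\otimes I$ to both sides and dividing by $\sqrt{\mu(x)}>0$ yields $U\ket{\phi_x}=\ket{\psi_x}$ for every $x\in\Sigma$.

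The only step that needs care is choosing $\mu$ with full support, so that every index $x$ can be recovered in the final step; the uniform distribution guarantees this. As a sanity check that avoids the cited theorem, one can also build $U$ directly. Define $U$ on $\mathrm{span}\{\ket{\phi_x}\}$ by linearity; this is well defined and isometric because equal Gram matrices preserve all inner products of linear combinations. The two spans then have equal dimension, so their orthogonal complements also have equal dimension, and we extend $U$ by an arbitrary unitary between these complements.
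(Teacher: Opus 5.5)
Your proposal is correct and takes essentially the paper's route: the paper gives no separate proof and presents this statement as an immediate corollary of the uniqueness of purification. Your construction supplies the implicit argument: a coherent purification weighted by a full-support $\mu$, the theorem applied with $\dim\cH_1\leq\dim\cH_2$, the observation that an isometry between spaces of equal dimension is unitary, and recovery of each $U\ket{\phi_x}=\ket{\psi_x}$ by projecting onto $\ket{x}$. Your direct linear-extension argument is also valid on its own.
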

Finally, we use Uhlmann's theorem to relate the fidelity of weighted Gram matrices to the overlap of
their purifications.
   \begin{theorem}[Uhlmann's theorem]\label{thm:uhlmann} Let $\cX$ and $\cY$ be Hilbert spaces, and
    let $P,Q \in \Pos(\cX)$ be positive semidefinite operators both having rank at most $\dim(\cY)$.
    For any purification $\ket{u} \in \cX \otimes \cY$ of $P$, we have
    \begin{align*}
        \cF(P,Q)
        =
        \max \left\{
            |\bk{u}{v}| :
            \ket{v}\in \cX\otimes \cY,\
            \Tr_{\cY}[\kbr{v}] = Q
        \right\}.
    \end{align*}
    Equivalently, if $\ket{w}\in\cX\otimes\cY$ is any purification of $Q$, then
\begin{align*}
\cF(P,Q)
=
\max_{U\in\cU(\cY)}
\left|
\bra{u}(I_{\cX}\otimes U)\ket{w}
\right|.
\end{align*}
In particular, the value is finite and both maxima are attained.
\end{theorem}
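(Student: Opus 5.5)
The statement to prove is Uhlmann's theorem, \Cref{thm:uhlmann}. I would reduce it to a polar-decomposition argument for the trace norm. Fix $d=\dim(\cY)$ and identify $\cY\cong\bC^d$. Every vector $\ket{u}\in\cX\otimes\cY$ corresponds to an operator $A_u\in\cL(\cY,\cX)$ via the vectorization $\ket{u}=\sum_{j}A_u\ket{j}\otimes\ket{j}$. Under this identification, $\Tr_\cY[\kbr{u}]=A_uA_u^\dagger$ and $\bk{u}{v}=\Tr[A_u^\dagger A_v]$. Hence $\ket{u}$ purifies $P$ if and only if $A_uA_u^\dagger=P$. By the polar decomposition, this holds if and only if $A_u=\sqrt{P}\,V$ for some $V\in\cL(\cY,\cX)$ that acts isometrically on the relevant support. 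The rank hypothesis $\rank P,\rank Q\le d$ is exactly what guarantees that such purifications exist and that $V$ can be completed to a partial isometry of the required form.

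First I would prove the second (unitary) formulation. Write $A_u=\sqrt{P}V_1$ and $A_w=\sqrt{Q}V_2$. Applying $I\otimes U$ to $\ket{w}$ corresponds to $A_w\mapsto A_wU^\top$, so
\[
\bra{u}(I\otimes U)\ket{w}=\Tr\bigl[V_1^\dagger\sqrt{P}\sqrt{Q}V_2U^\top\bigr]=\Tr\bigl[\sqrt{P}\sqrt{Q}\,M\bigr],
\]
where $M=V_2U^\top V_1^\dagger$ has $\|M\|\le 1$. Hölder's inequality then gives $|\cdot|\le\|\sqrt P\sqrt Q\|_1=\cF(P,Q)$.

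For attainment, take the polar decomposition $\sqrt{P}\sqrt{Q}=|\sqrt P\sqrt Q|\,W^\dagger$. I then want to choose $U$ so that $M$ restricted to the relevant supports equals $W$, which gives $\Tr[\sqrt P\sqrt Q M]=\Tr|\sqrt P\sqrt Q|$. This is the main obstacle: one has to check that the partial isometries $V_1,V_2$ can be extended so that a genuine unitary $U$ on $\cY$ produces the needed $M$. I would handle it as follows. Only the action of $M$ from $\supp(\sqrt Q)$-related space to $\supp(\sqrt P)$ matters, because $\sqrt P\sqrt Q M$ only sees $M$ restricted appropriately. Both $V_1$ and $V_2$ map isometrically between $\cY$-subspaces of dimension $\le d$ and these supports, so a unitary of $\cY$ realizing any prescribed partial isometry between them exists. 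The same applies to the transpose, since transposition preserves unitarity.

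Finally, the first formulation follows from the second by \Cref{thm:uniqueness-purification}. Any purification $\ket{v}\in\cX\otimes\cY$ of $Q$ equals $(I\otimes U)\ket{w}$ for some isometry $U$, which is unitary because the spaces coincide; conversely, every such vector purifies $Q$. The two maxima are therefore over the same set. Finiteness and attainment hold because the maximization is of a continuous function over the compact set $\cU(\cY)$.
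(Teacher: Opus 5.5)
The paper does not prove this statement; it records Uhlmann's theorem as standard background in the preliminaries and only uses it. Your argument is the standard proof, and it is correct. Vectorization turns $\bra{u}(I_{\cX}\otimes U)\ket{w}$ into $\Tr[A_u^\dagger A_w U^\top]$, H\"older bounds this by $\|\sqrt{P}\sqrt{Q}\|_1=\cF(P,Q)$, and a polar decomposition gives attainment. Then \Cref{thm:uniqueness-purification} with $\cY_1=\cY_2=\cY$ (so the isometry is a unitary) identifies the two maximizations, and the rank hypothesis guarantees that some purification of $Q$ exists in $\cX\otimes\cY$.

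Two small remarks. First, in the attainment step you have the direction reversed. Since $\Tr[\sqrt{P}\sqrt{Q}M]=\Tr[M\sqrt{P}\sqrt{Q}]$, what matters is the action of $M$ on $\operatorname{ran}(\sqrt{P}\sqrt{Q})\subseteq\supp(P)$, with values in $\supp(Q)$. Transport the polar isometry $\operatorname{ran}(\sqrt{P}\sqrt{Q})\to\operatorname{ran}(\sqrt{Q}\sqrt{P})$ into $\cY$ via $V_1^\dagger$ on the input side and $V_2^\dagger$ on the output side. This gives an isometry between two subspaces of $\cY$ of equal dimension, which extends to a unitary, as you claim.

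Second, you can avoid this bookkeeping with $V_1,V_2$ entirely. The operator $A_u^\dagger A_w\in\cL(\cY)$ is square, so trace-norm duality over $\cU(\cY)$ gives $\max_{U}|\Tr[A_u^\dagger A_wU^\top]|=\|A_u^\dagger A_w\|_1$ with a genuine unitary maximizer. Moreover, $\|A_u^\dagger A_w\|_1=\Tr[\sqrt{A_w^\dagger PA_w}]=\Tr[\sqrt{\sqrt{P}Q\sqrt{P}}]=\cF(P,Q)$, because $A_w^\dagger PA_w$ and $\sqrt{P}A_wA_w^\dagger\sqrt{P}$ have the same nonzero spectrum.
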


\subsection{Search and decision problems}\label{sec:prelim-search-problems}
We define a search problem as a family of relations $f=(f_n)$, where
$f_n\subseteq\{0,1\}^n\times\Omega$ for a set $\Omega$ of possible output values. For each
$x\in\Sigma_{f_n}:=\dom(f_n)$, the nonempty set $f_n(x):=\{s\in\Omega:(x,s)\in f_n\}$ consists of
the valid outputs on input $x$.
\par
When the input length is clear, we omit $n$ and use $f$ and $f_n$ interchangeably. Throughout,
for $\Sigma \subset \{0,1\}^{n}$, $f\subseteq\Sigma\times\Omega$ denotes a total relation on $\Sigma$;
that is, $\dom(f)=\Sigma$, or
equivalently, $f(x)\neq\varnothing$ for all $x\in\Sigma$. If every $f(x)$ is a singleton, we say
$f$ is a \emph{functional problem} and denote $f:\Sigma\to\Omega$. If additionally $\Omega =
\{0,1\}$, $f$ is called a \emph{decision problem}.

\subsubsection{Agreement and disagreement matrices}\label{sec:prelim-agreement}
For $i\in[n]$, we define the input-disagreement matrix $\Delta_i\in\Herm(\Sigma)$ by
$\Delta_i(x,y):=\ind[x_i\neq y_i]$ for $x,y\in\Sigma$. We also define the sign matrix
$S_i(x,y):=(-1)^{x_i+y_i}$, so that $S_i=J-2\Delta_i$.
For a functional problem $f:\Sigma\to\Omega$, we define its disagreement matrix $\Delta_f$ and
agreement matrix $A_f$ by $\Delta_f(x,y):=\ind[f(x)\neq f(y)]$ and
$A_f(x,y):=\ind[f(x)=f(y)]$. In particular, $A_f=J-\Delta_f$.
\par
Both $A_f$ and $S_i$ are Gram matrices, so $\gamma_2(A_f)=\gamma_2(S_i)=1$.
Since $\Delta_i=(J-S_i)/2$, the triangle inequality gives
$\gamma_2(\Delta_i)\leq \frac12+\frac12=1$. Similarly, $\Delta_f=J-A_f$ gives
$\gamma_2(\Delta_f)\leq2$.
For a decision problem, the matrix $S_f$ defined by $S_f(x,y):=(-1)^{f(x)+f(y)}$
is a Gram matrix, so $\gamma_2(S_f)=1$. Since $\Delta_f=(J-S_f)/2$,
$\gamma_2(\Delta_f)\leq1$.

\subsubsection{Tensorization}\label{sec:prelim-tensorization}
Let $\Sigma\subset\{0,1\}^n$, $f\subseteq\Sigma\times\Omega$ be a search problem, and
$k\in\bN$ with $k\geq1$. We use bold letters, such as $\bx,\by,\bz$, for tuples of inputs,
writing $\bx=(x^1,\dots,x^k)\in\Sigma^k$. We define
$f^{\otimes k}\subseteq\Sigma^k\times\Omega^k$ as $f^{\otimes k}(x^{1},\dots,x^{k}) := f(x^{1})\times \dots
\times f(x^{k})$ for all $x^{1},\dots,x^{k}\in \Sigma$.  Moreover, for a distribution $\mu:\Sigma
\to [0,1]$, we define $\mu^{\otimes k} :\Sigma^{k} \to [0,1]$ as $\mu^{\otimes k}(x^1,\dots,x^k) :=
\prod_{j=1}^{k} \mu(x^{j})$.

\subsubsection{Composition}\label{sec:prelim-composition}
Suppose $f$ is functional, and let $g:\Sigma_g\to\{0,1\}$ be a decision problem with
$\Sigma_g\subseteq\{0,1\}^m$.
We use $x,y,z\in\Sigma_g$ to denote the inner function inputs, while reserving
$\tx,\ty,\tz\in\Sigma$ for the outer function inputs.
For $\bx=(x^1,\dots,x^n)\in\Sigma_g^n$, we write $\bg:=g^{\otimes n}$, so that
$\bg(\bx)=(g(x^1),\dots,g(x^n))$. We define the composed problem
$f\circ\bg:\bg^{-1}(\Sigma)\to\Omega$ by $(f\circ\bg)(\bx):=f(\bg(\bx))$.
\par
Let $\nu_b:\Sigma_g\to[0,1]$ be distributions with
$\supp(\nu_b)\subseteq g^{-1}(b)$ for $b\in\{0,1\}$.
We write $\eta:=\mu\circ(\nu_0,\nu_1)$ for the composed distribution on
$\bg^{-1}(\Sigma)$, defined by
\begin{align*}
    \eta(x^1,\dots,x^n)
    :=\mu(g(x^1),\dots,g(x^n))\prod_{i=1}^n\nu_{g(x^i)}(x^i).
\end{align*}
Intuitively, this means sampling an input $\tilde{x}$ for $f$ according to $\mu$ and then sampling
the corresponding input $x^i$ for the $i$-th $g$ according to the zero or one distribution
$\nu_{\tilde{x}_i}$.

\subsection{Quantum query complexity}
In this part, we review the basic definitions and important results of quantum query complexity
using a coherent oracle model.

\subsubsection{The coherent oracle model.}
To discuss quantum query complexity for inputs in
$\Sigma\subseteq\{0,1\}^n$, we consider the following three registers.
\begin{itemize}
  \item an input register $\cX \cong \mathbb{C}^{\Sigma}$ with computational basis $\{\ket{x} : x
  \in \Sigma\}$,
  \item an index register $\cI \cong \mathbb{C}^{n+1}$ with basis $\{\ket{i} : i \in
  \{0,1,\dots,n\}\}$,
  \item an arbitrarily large ancilla  register $\cA$.
\end{itemize}
For simplicity, we often group the index and ancilla registers and consider a  \emph{work register}
defined by $\cW = \cI \otimes \cA$.
\begin{definition}
The \emph{coherent oracle} is the unitary
$\cO:=\sum_{x\in\Sigma}\ketbra{x}\otimes\cO^x
\in\cU(\cX\otimes\cI)$, where
\begin{align*}
    \cO^{x} \ket{i} =  \begin{cases}
         \ket{0}            & i = 0        \\
         (-1)^{x_i}\ket{i} & i\in [n].
     \end{cases}
\end{align*}
Alternatively, we can write
$
    \cO\ket{x}\ket{i} =
         (-1)^{x_i}\ket{x}\ket{i} $ for $i\in [n]$ and $\cO\ket{x}\ket{0} = \ket{x}\ket{0}$.
\end{definition}
Next, we review the definition of a $T$-query quantum algorithm as follows.
\begin{definition}
Let $T\in \bN\cup\{0\}$ and $\Sigma \subset \{0,1\}^{n}$. A $T$-query quantum algorithm $Q$ (over input space
$\cX = \bC^{\Sigma}$) is specified by $T+1$ arbitrary unitaries $U_0,\dots,U_T \in \cU(\cW)$ on the
work register (independent of $x$). The algorithm acts on the initial state via the unitary $U_Q$ on
$\cX \otimes \cI \otimes \cA$ defined as
\begin{align*}
  U_Q := (I_\cX \otimes U_T)\,(\mathcal{O} \otimes I_\cA)\,(I_\cX \otimes U_{T-1})\,(\mathcal{O} \otimes I_\cA)\,\cdots\,(\mathcal{O} \otimes I_\cA)\,(I_\cX \otimes U_0).
\end{align*}
For $T=0$, we set $U_Q:=I_\cX\otimes U_0$.
Specifically, for a fixed input $x\in \Sigma$, the total unitary acts as $U_Q
\ket{x}_{\cX}{\ket{0}_{\cW}} = \ket{x}_{\cX}\ket{\psi_{x}}_{\cW}$ where
$$\ket{\psi_x}_{\cW} = U_Q^{x} \ket{0}_{\cW} = U_{T}(\cO^{x}\otimes I_{\cA}) U_{T-1} \dots
(\cO^{x}\otimes I_{\cA})  U_0\ket{0}_{\cW}.$$ We call $\ket{\psi_{x}}$ the final work state of
the algorithm on input $x$, and define the Gram matrix generated by the algorithm to be $\Psi_{Q}\in
\Pos(\Sigma)$ where $\Psi_{Q}(x,y) := \braket{\psi_x}{\psi_y}$.
\end{definition}

\subsubsection{Query complexity of  Gram matrices.}
Equivalently, the above procedure implements a $T$-query \emph{state generation} algorithm that
starts with fixed $\ket{0}$ states, and applies input-specific queries $\cO^{x}$ to generate
$\ket{\psi_x}_{\cW}$. More generally, one can consider the task of \emph{state evolution}: given two
families of input-specific states $\{\ket{\phi_x}:x\in \Sigma\}$ and $\{\ket{\psi_x}:x\in \Sigma\}$,
we ask how many queries are needed for a quantum algorithm $Q$ to implement $\ket{\phi_x} \to
\ket{\psi_x} = U_Q^{x}\ket{\phi_x}$.
\par
Throughout the paper, we fix a sufficiently large Hilbert space $\cV$, called the realization space,
and assume that all initial and target state families are realized in $\cV$. The work register $\cW$
of an algorithm is always understood to contain $\cV$ together with any additional registers required
by the algorithm.
\par
By \Cref{cor:gram-unitary-equivalence}, the query  complexity of implementing such a state evolution
only depends on the Gram matrices of the initial and final state families which are $\Phi =
(\braket{\phi_x}{\phi_y})_{x,y\in \Sigma}$ and $\Psi = (\braket{\psi_x}{\psi_y})_{x,y\in \Sigma}$.
As a result, we can characterize the complexity of this task as follows.

\begin{definition}[Query complexity of a Gram matrix transformation]
Let $\Sigma \subset \{0,1\}^{n}$ and let $\Phi,\Psi\in\Pos(\Sigma)$ be two
Gram matrices realized by state families $\{\ket{\phi_x}:x\in\Sigma\}\subseteq\cV$ and
$\{\ket{\psi_x}:x\in\Sigma\}\subseteq\cV$, respectively.
We define the query complexity of transforming $\Phi$ to $\Psi$, denoted by $\mathsf{Q}(\Phi \to \Psi)$, as
the minimum number of queries $T$ such that there exists a $T$-query quantum algorithm $Q$
with input register $\cX = \bC^{\Sigma}$ and a work space  $\cW$ containing $\cV$ such that
$\ket{\psi_x}_{\cW}=U_Q^x\ket{\phi_x}_{\cW}$ for all $x\in\Sigma$. For the all-one Gram matrix $J$,
we abbreviate $\mathsf{Q}(\Psi):= \mathsf{Q}(J\to \Psi)$ to define the state-generation complexity of $\Psi$.
In particular, the value is finite and the minimum is attained.
\end{definition}

Lee et al.~\cite{LMR+11} showed that the filtered $\gamma_2$ norm gives a semi-tight characterization of the quantum query complexity of state conversion. To state the consequence we first need the following definition.
\par
Let $\Phi,\Psi\in\Pos(\Sigma)$ be Gram matrices. The \emph{Hadamard product fidelity} between $\Phi$ and $\Psi$ is defined by $\cF_{\cD}(\Phi,\Psi):=\min_{\mu}\cF(\Phi\circ\kbr{\mu},\Psi\circ\kbr{\mu})$, where the minimum is over all probability distributions $\mu:\Sigma \to [0,1]$.
In particular, the value is finite and the minimum is attained.
\begin{theorem}[Semi-tight characterization for Gram matrices and state evolution \cite{LMR+11}]
Let $\Sigma\subseteq\{0,1\}^n$ and let $\Phi,\Psi\in\Pos(\Sigma)$ be two Gram matrices. Let $\eps=\eps(n)\in
(0,1)$ be an error parameter and let $\gamma:=\gamma_2(\Phi-\Psi|\{\Delta_i\})$. Then, the following holds.
\begin{enumerate}
    \item $\mathsf{Q}(\Phi \to \Psi) = \Omega(\gamma)$.
    \item There exists a Gram matrix $\Psi'$ such that $\cF_{\cD}(\Psi',\Psi)\geq \sqrt{1-\eps}$ and $\mathsf{Q}(\Phi \to \Psi') = O(\gamma\log (1/\eps)/\eps^2)$.
\end{enumerate}
\end{theorem}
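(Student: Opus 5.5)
This is the semi-tight characterization of Lee et al., which the paper cites from \cite{LMR+11}; I would reproduce its structure.

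For part 1, the lower bound $\mathsf{Q}(\Phi\to\Psi)=\Omega(\gamma)$, I would use a hybrid argument on the dual form \eqref{eq:filtered-gamma2-dual}. Fix an optimal Hermitian $\Gamma$ with $\max_i\|\Gamma\circ\Delta_i\|=1$, so that $\|\Gamma\circ(\Phi-\Psi)\|=\gamma$. Let $\Psi^{(t)}$ be the Gram matrix of the work states after $t$ queries, with $\Psi^{(0)}=\Phi$ and $\Psi^{(T)}=\Psi$. Input-independent unitaries leave the Gram matrix unchanged. A single query changes it by
\[
\Psi^{(t)}-\Psi^{(t+1)}=\sum_i 2\Delta_i\circ X_i^{(t)},
\]
where $X_i^{(t)}(x,y)=\langle\phi_x^{t}|\Pi_i|\phi_y^{t}\rangle$ is PSD with $\sum_i X_i^{(t)}\preceq\Psi^{(t)}$, and hence has diagonal at most $1$. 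Using the bound $\|P\circ A\|\le\gamma_2(P)\|A\|$ from the preliminaries,
\[
\|\Gamma\circ(\Psi^{(t)}-\Psi^{(t+1)})\|\le\sum_i 2\,\|X_i^{(t)}\circ(\Gamma\circ\Delta_i)\|.
\]
The sum over $i$ is handled by writing $\sum_i X_i\circ M_i$ as $U^\dagger(\bigoplus_i M_i)V$, where $U$ and $V$ have column norms bounded by the diagonals of $\sum_iX_i$; this gives a bound of $2$ per query. Telescoping then yields $\gamma\le 2T$.

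For part 2, the upper bound, I would take the primal optimum. Write $\Phi-\Psi=\sum_i\Delta_i\circ\langle u^i_x|v^i_y\rangle$ with row norms at most $\gamma$. Symmetrizing gives
\[
\Phi-\Psi=\sum_i\Delta_i\circ(X_i-Y_i)
\]
with PSD $X_i,Y_i$ of diagonal at most $\gamma$. I would then feed this into a state-conversion algorithm in the style of Reichardt and LMR+11: a phase-estimation-based reflection about the span of the certificate vectors, run with $O(\gamma/\eps)$ precision and repeated $\log(1/\eps)$ times. Equivalently, one can use the catalyst-reuse idea adapted later in the paper, \Cref{infthm:state-conversion} with $r=1$. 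This produces $\Psi'$ with per-input fidelity error $O(\eps)$, and therefore Hadamard-product fidelity $\cF_\cD(\Psi',\Psi)\ge\sqrt{1-\eps}$ after adjusting constants, at cost $O(\gamma\log(1/\eps)/\eps^2)$.

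The main obstacle is part 2. The effective spectral gap lemma has to control the error uniformly over all inputs, which is what forces the min over $\mu$ and produces the $1/\eps^2$ rather than $1/\eps$ dependence. I would follow LMR+11's analysis closely there rather than re-derive it.
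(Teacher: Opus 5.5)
The paper does not prove this statement; it imports it from \cite{LMR+11}. So the comparison is with that argument and with the paper's own machinery.

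Your Part~1 is the standard LMR+11 hybrid argument, i.e.\ the worst-case form of \Cref{lem:one-query-decomposition} and \Cref{prop:hybrid-gamma2}. Equivalently, you could run \Cref{prop:hybrid-gamma2} from $\Phi$ for each $\mu$ and maximize over $\mu$ using \Cref{prop:average-to-worst-gamma2}.

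There is one slip. Your display applies the triangle inequality over $i$. Bounding each term by $\gamma_2(X_i^{(t)})\,\|\Gamma\circ\Delta_i\|$ then only gives $2\sum_i\max_x X_i^{(t)}(x,x)$, which can be of order $n$. Drop that display and apply your joint factorization $\sum_i X_i\circ M_i=U^\dagger\bigl(\bigoplus_i M_i\otimes I\bigr)U$, with $\|U\|^2=\max_x\sum_iX_i(x,x)$, directly to $\bigl\|\sum_i X_i^{(t)}\circ(\Gamma\circ\Delta_i)\bigr\|$. This bounds it by $1$ per query, hence $\gamma\leq 2T$.

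For Part~2, your two routes are not equivalent, and your diagnosis of the $\eps^{-2}$ is wrong. In LMR+11 the $\eps^{-2}$ comes from the effective spectral gap lemma forcing phase precision of order $\eps^2/\gamma$, not $\gamma/\eps$. The $\log(1/\eps)$ pays for the phase-estimation failure probability. Neither is caused by having to control all inputs uniformly.

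The catalyst route (Belovs--Yolcu \cite{BY23}, i.e.\ \Cref{thm:catalyst-state-conversion} with $r=1$) is genuinely different and proves more. However, two details are missing.

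\begin{enumerate}
\item That theorem needs a single PSD matrix per index, not $X_i-Y_i$. Replace $X_i-Y_i$ by $X_i':=X_i+S_i\circ Y_i\succeq 0$. This equals $X_i-Y_i$ wherever $\Delta_i=1$ and satisfies $\sum_iX_i'(x,x)\leq\gamma$. It is exactly the $(-1)^{x_i}$ trick in \Cref{prop:distalg-from-avggamma2}.
\item The output $\widetilde\Psi$ is built from $\Phi,\Psi,\{X_i'\}$ alone, so it does not depend on $\mu$. The proof in fact gives $\Re\langle\psi_x|\widetilde\psi_x\rangle\geq 1-\frac1T\sum_iX_i'(x,x)\geq 1-\gamma/T$ for every $x$. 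By Uhlmann, $\cF(\widetilde\Psi\circ\kbr{\mu},\Psi\circ\kbr{\mu})\geq 1-\gamma/T$ for all $\mu$ simultaneously.
\end{enumerate}

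Taking $T=\lceil 2\gamma/\eps\rceil$, or zero queries when $\gamma\leq\eps/2$ as in \Cref{prop:distalg-from-avggamma2}, gives $\cF_{\cD}(\widetilde\Psi,\Psi)\geq 1-\eps/2\geq\sqrt{1-\eps}$ with $O(\gamma/\eps)$ queries. For $\eps$ bounded away from $1$ this is stronger than the stated $O(\gamma\log(1/\eps)/\eps^2)$. You therefore never need to reproduce LMR+11's spectral analysis; drop the phase-estimation sketch and the closing remark about the effective spectral gap lemma.
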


\subsubsection{(Worst-case) quantum query complexity}
Let $f\subseteq \Sigma \times \Omega$ be a search problem. We recall that in the worst-case setting,
 the goal is to succeed on every input with high probability. That is, starting with input
 $\ket{x}_{\cX}$, we want to output a value $s\in \Omega$ such that with high probability $s\in
 f(x)$.

\begin{definition}[Worst-case query complexity of a search problem]
    Let $\Sigma \subset \{0,1\}^{n}$ and $f \subseteq \Sigma \times \Omega$ be a  search problem.
    Let $\eps=\eps(n)\in(0,1)$.  The worst-case $\eps$-error query complexity of $f$, denoted
    $\mathsf{Q}_{\eps}(f)$, is the minimum number of queries $t$ such that there exists a $t$-query quantum
    algorithm $Q$ and a projective measurement $\{\Pi_{s}:s\in \Omega\} \subset \cL(\cW)$
    such that, for all $x\in \Sigma$,
$
    \Pr_{Q}[Q(x) \in f(x)] = \sum_{s\in f(x)} \bra{\psi_{x}}\Pi_s \ket{\psi_x} \geq 1- \eps.
$
In particular, the value is finite and the minimum is attained.
\end{definition}

\paragraph{Negative-weight adversary method.}
We now restrict our attention to functional problems. The negative-weight adversary method gives a tight characterization of their bounded-error quantum query complexity~\cite{LMR+11}.
\begin{definition}[Negative-weight adversary method]
Let $\Sigma\subseteq\{0,1\}^n$ and let $f:\Sigma\to\Omega$ be a functional problem. The
\emph{negative-weight adversary bound} of $f$ is defined as
\begin{align*}
\mathrm{Adv}^{\pm}(f)
&:=\progline{\max_{\Gamma=\Gamma^\dagger}}{\|\Gamma\|}
{\Gamma\circ A_f=0; && \\
 \|\Gamma\circ\Delta_i\|\leq 1 && \forall i\in[n].}
\end{align*}
In particular, the value is finite and the maximum is attained.
\end{definition}
Equivalently, since $\Gamma\circ A_f=0$ if and only if $\Gamma=\Gamma\circ\Delta_f$, the dual
characterization of the filtered $\gamma_2$ norm gives
\begin{align*}
\mathrm{Adv}^{\pm}(f)
&=\gamma_2\bigl(\Delta_f\mid\{\Delta_i\circ\Delta_f\}\bigr).
\end{align*}
Every feasible $\Gamma$ is called an \emph{adversary matrix}. Moreover, the adversary bound is
equivalent, up to a constant factor, to the filtered $\gamma_2$ norm with the original query filters:
\begin{align*}
\mathrm{Adv}^{\pm}(f)
&=\Theta\bigl(\gamma_2(J-A_f\mid\{\Delta_i\})\bigr).
\end{align*}
When $f$ is a decision problem, the two quantities are equal~\cite{LMR+11}.
\par
\begin{theorem}[Tightness of the negative-weight adversary method (see, e.g. \cite{LMR+11})]
Let $\Sigma\subseteq\{0,1\}^n$ and let $f:\Sigma\to\Omega$ be a functional problem. For every fixed
constant $\eps\in(0,1/2)$,
\begin{align*}
\mathsf{Q}_{\eps}(f)=\Theta\bigl(\mathrm{Adv}^{\pm}(f)\bigr).
\end{align*}
\end{theorem}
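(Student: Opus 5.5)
The plan is to prove the two directions separately, reducing both to the semi-tight characterization of \cite{LMR+11} for Gram matrices together with the identity $\mathrm{Adv}^{\pm}(f)=\Theta(\gamma_2(J-A_f\mid\{\Delta_i\}))$ recorded above. Throughout, $\eps\in(0,1/2)$ is a fixed constant, so all dependencies on $\eps$ may be absorbed into the $\Theta(\cdot)$.

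\emph{Lower bound.} Fix an optimal $T$-query algorithm with final states $\ket{\psi_x}$, Gram matrix $\Psi$, and measurement $\{\Pi_s\}$. By item~1 of the semi-tight characterization,
\[
T=\mathsf{Q}(J\to\Psi)=\Omega\bigl(\gamma_2(J-\Psi\mid\{\Delta_i\})\bigr).
\]
I will lower bound this filtered norm through its dual form \eqref{eq:filtered-gamma2-dual}. Let $\Gamma$ be an optimal adversary matrix, so that $\Gamma\circ A_f=0$ and $\|\Gamma\circ\Delta_i\|\leq1$. Then
\[
\gamma_2(J-\Psi\mid\{\Delta_i\})\geq\|\Gamma\circ(J-\Psi)\|\geq\|\Gamma\|-\|\Gamma\circ\Psi\|.
\]
The key step is the standard output-condition estimate
\[
\|\Gamma\circ\Psi\|\leq 2\sqrt{\eps(1-\eps)}\,\|\Gamma\|\qquad\text{whenever }\Gamma\circ A_f=0.
\]
To prove it, I would split $\Psi=\sum_s\Psi_s$ with $\Psi_s(x,y)=\bra{\psi_x}\Pi_s\ket{\psi_y}$. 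Each $\Psi_s$ is PSD, and on the support of $\Gamma$ we have $f(x)\neq f(y)$. So each entry of $\Gamma\circ\Psi$ pairs a ``correct'' component of one input with an ``incorrect'' component of the other. Writing the correct parts as vectors of norm at most $1$ and the incorrect parts as vectors of norm at most $\sqrt{\eps}$, and applying Cauchy--Schwarz to a two-term $\gamma_2$-type factorization with rescaled weights, gives the bound. Since $2\sqrt{\eps(1-\eps)}<1$ for $\eps<1/2$, we obtain $T=\Omega(\|\Gamma\|)=\Omega(\mathrm{Adv}^{\pm}(f))$. I expect this inequality to be the main technical point of the lower bound.

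\emph{Upper bound.} Take as target the Gram matrix $A_f$, realized by the states $\ket{f(x)}$, and set $\gamma=\gamma_2(J-A_f\mid\{\Delta_i\})=\Theta(\mathrm{Adv}^{\pm}(f))$. Item~2 of the semi-tight characterization, applied with a small constant $\delta$, produces $\Psi'$ with $\mathsf{Q}(J\to\Psi')=O(\gamma)$ and $\cF_{\cD}(\Psi',A_f)\geq\sqrt{1-\delta}$. The step I expect to be the real obstacle is turning this into a worst-case guarantee. The condition $\cF_{\cD}(\Psi',A_f)\geq\sqrt{1-\delta}$ is a statement for every $\mu$, but Uhlmann's theorem (\Cref{thm:uhlmann}) then yields a $\mu$-dependent unitary aligning the output states with $\ket{f(x)}$, whereas a single decoding measurement is needed for all inputs.

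My first attempt would be a minimax argument. Consider the game where the algorithm picks the decoding unitary, or a convex combination of such unitaries, i.e.\ a channel, and the adversary picks $\mu$. The payoff $\sum_x\mu(x)\,\Pr[\text{correct on }x]$ is bilinear, and both strategy sets are convex and compact. Hence there is a single channel, i.e.\ a single POVM, achieving per-input success probability at least $1-O(\sqrt\delta)$ for every $x$. The success bound for each fixed $\mu$ would follow from Fuchs--van de Graaf. Alternatively, I would fall back on the original form of the \cite{LMR+11} guarantee, which bounds $\|\ket{\psi'_x}-\ket{f(x)}\ket{\mathrm{junk}_x}\|$ for each $x$ separately.

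Either route gives an algorithm with some small constant error in $O(\gamma)$ queries. Choosing $\delta$ small enough, or finishing with constant-fold majority amplification, reaches error $\eps$ at a constant-factor cost. This gives $\mathsf{Q}_\eps(f)=O(\mathrm{Adv}^{\pm}(f))$.
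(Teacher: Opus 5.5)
The paper does not prove this statement; it quotes it from \cite{LMR+11}, so your proposal has to stand on its own. Your upper bound does. Item~2 with a constant $\delta$ is fine, and so is the minimax over decoding POVMs versus input distributions: the payoff is bilinear, both strategy sets are compact and convex, and each fixed $\mu$ has a decoder from Uhlmann's theorem. Together these correctly turn $\cF_{\cD}(\Psi',A_f)\geq\sqrt{1-\delta}$ into a single worst-case decoder.

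The gap is in the key inequality of the lower bound. You justify it by saying every entry of $\Gamma\circ\Psi$ pairs a correct component of one input with an incorrect component of the other. That is only true when $|\Omega|=2$. For a general functional problem and $f(x)\neq f(y)$,
\[
\Psi(x,y)=\bra{\psi_x}\Pi_{f(x)}\ket{\psi_y}+\bra{\psi_x}\Pi_{f(y)}\ket{\psi_y}+\sum_{s\notin\{f(x),f(y)\}}\bra{\psi_x}\Pi_s\ket{\psi_y},
\]
and the last sum pairs an incorrect component with an incorrect component. The symmetric rescaling $\ket{u_x}=a\Pi_{f(x)}\ket{\psi_x}+a^{-1}(I-\Pi_{f(x)})\ket{\psi_x}$ reproduces the first two terms but multiplies the third by $a^{-2}\neq1$. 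So your two-term factorization does not establish $\|\Gamma\circ\Psi\|\leq2\sqrt{\eps(1-\eps)}\|\Gamma\|$ for non-Boolean $f$, and the statement covers arbitrary $\Omega$.

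The inequality is still true for all functional $f$, but it needs an asymmetric factorization. Let $r:=\sqrt{\eps/(1-\eps)}<1$ and $\eps_x:=1-\|\Pi_{f(x)}\ket{\psi_x}\|^2\leq\eps$, and set
\begin{align*}
\ket{u_x}&:=\sqrt{r}\,\ket{\psi_x}\oplus\sqrt{(1-r^2)/r}\,(I-\Pi_{f(x)})\ket{\psi_x},\\
\ket{v_y}&:=\bigl(r^{3/2}\Pi_{f(y)}+r^{-1/2}(I-\Pi_{f(y)})\bigr)\ket{\psi_y}\oplus\sqrt{r(1-r^2)}\,\Pi_{f(y)}\ket{\psi_y}.
\end{align*}
Since $(I-\Pi_{f(x)})\Pi_{f(y)}=\Pi_{f(y)}$ when $f(x)\neq f(y)$, one checks that $\braket{u_x}{v_y}=\braket{\psi_x}{\psi_y}$ on the support of $\Gamma$. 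Moreover $\|u_x\|^2=r(1-\eps_x)+\eps_x/r$ and $\|v_y\|^2=r(1-\eps_y)+\eps_y/r$ are both at most $2\sqrt{\eps(1-\eps)}$. The bound $\|\Gamma\circ M\|\leq\max_x\|u_x\|\max_y\|v_y\|\,\|\Gamma\|$ for $M(x,y)=\braket{u_x}{v_y}$ then gives your inequality.

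Alternatively, write $c_x:=\Pi_{f(x)}\ket{\psi_x}$ and $e_x:=(I-\Pi_{f(x)})\ket{\psi_x}$, and split $\Psi(x,y)=\braket{c_x}{e_y}+\braket{e_x}{c_y}+\braket{e_x}{e_y}$ into three slots. This only yields $(2\sqrt{\eps(1-\eps)}+\eps)\|\Gamma\|$, which is below $\|\Gamma\|$ only for $\eps<1/5$. You would then need majority-vote amplification to cover every constant $\eps\in(0,1/2)$; this is valid for functional problems because the correct answer has probability above $1/2$. With either repair, the rest of your lower bound goes through.
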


\subsection{Analysis and optimization}
In this part, we present basic analysis tools that will be used throughout the paper.

\paragraph{Semicontinuity and continuity.}
Let $\cX,\cY$ be complex Hilbert spaces.
A function $f:\cX\to[-\infty,+\infty]$ is lower semicontinuous if, for any convergent
sequence $X_k\to X$ in $\cX$, it satisfies $f(X)\leq\liminf_k f(X_k)$.
Similarly, $f$ is upper semicontinuous if
$f(X)\geq\limsup_k f(X_k)$.
\par
Let $F:\cX\times\cY\to\bR$, where $\cY$ is nonempty. If
$F(\cdot,Y)$ is lower (upper) semicontinuous for every fixed
$Y\in\cY$, then $X\mapsto\sup_{Y\in\cY}F(X,Y)$
($X\mapsto\inf_{Y\in\cY}F(X,Y)$) is lower (upper) semicontinuous.
Similarly, convexity (concavity) is preserved under pointwise
supremum (infimum).
For any nonempty compact set $\cK\subset\cX$,
the maximum (minimum) of $f$ on $\cK$ is attained if $f$ is upper
(lower) semicontinuous.
\par
A function is both lower and upper semicontinuous if and only if it
is continuous; that is, for any convergent sequence $X_k\to X$ in
$\cX$, it satisfies $f(X_k)\to f(X)$.
A function $F:\cX\times\cY\to\bR$ is jointly continuous if, for any
convergent sequence $(X_k,Y_k)\to(X,Y)$ in $\cX\times\cY$,
it satisfies $F(X_k,Y_k)\to F(X,Y)$.
Let $F:\cX\times\cY\to\bR$ be jointly continuous, and let
$\cK\subset\cY$ be nonempty and compact. Then
$X\mapsto\max_{Y\in\cK}F(X,Y)$
($X\mapsto\min_{Y\in\cK}F(X,Y)$) is continuous.


\paragraph{Sion's minimax theorem}
We begin with a
specific variant of Sion's minimax theorem~\cite{Sio58}.

\begin{theorem}[Sion's minimax theorem, special case]\label{thm:sion-minimax}
    Let $\cL$ and $\cK$ be nonempty convex subsets of Hilbert spaces, with $\cL$ compact, and let
    $F:\cL\times \cK \to \bR$ be a continuous function where $F(\Gamma,\cdot)$ is convex and
    $F(\cdot,K)$ is concave. Then,
    \begin{align*}
        \max_{\Gamma \in \cL} \inf_{K\in \cK} F(\Gamma,K) = \inf_{K\in \cK} \max_{\Gamma \in \cL} F(\Gamma,K).
    \end{align*}
    Moreover, if $\cK$ is compact, the infima can be replaced by minima.
\end{theorem}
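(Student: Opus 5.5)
We deduce this special case from the general form of Sion's minimax theorem \cite{Sio58}, so the work lies in checking hypotheses and showing that the optima are attained. The general theorem says: let $\cL$ be a compact convex subset of a topological vector space, $\cK$ a convex subset of a topological vector space, and $F:\cL\times\cK\to\bR$ a function with $F(\cdot,K)$ upper semicontinuous and quasi-concave for each $K$, and $F(\Gamma,\cdot)$ lower semicontinuous and quasi-convex for each $\Gamma$. Then $\sup_{\Gamma\in\cL}\inf_{K\in\cK}F=\inf_{K\in\cK}\sup_{\Gamma\in\cL}F$.

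First we verify these hypotheses. Continuity of $F$ gives both semicontinuity conditions. Concavity of $F(\cdot,K)$ implies quasi-concavity, and convexity of $F(\Gamma,\cdot)$ implies quasi-convexity. Subsets of Hilbert spaces carry the norm topology, and $\cL$ is compact and convex by assumption. So the sup--inf equality holds.

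Next we upgrade each supremum to a maximum. For fixed $K$, the map $\Gamma\mapsto F(\Gamma,K)$ is continuous on the compact set $\cL$, so $\sup_\Gamma F(\Gamma,K)$ is attained. On the other side, $h(\Gamma):=\inf_{K\in\cK}F(\Gamma,K)$ is a pointwise infimum of continuous functions, hence upper semicontinuous by the semicontinuity facts recorded above. A u.s.c.\ function on a nonempty compact set attains its maximum, provided $h$ does not take the value $-\infty$ everywhere. If $h\equiv-\infty$, both sides of the equality are $-\infty$ and the claim is trivial (we read the max as $-\infty$). Otherwise the maximum of $h$ exists, possibly equal to $-\infty$ at some points, and the outer $\sup$ is a $\max$. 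This establishes the displayed identity.

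Finally, suppose $\cK$ is also compact. Then for fixed $\Gamma$, the infimum of the continuous function $F(\Gamma,\cdot)$ over $\cK$ is attained, so the inner $\inf$ on the left becomes a $\min$. Also, $K\mapsto\max_{\Gamma}F(\Gamma,K)$ is a pointwise supremum of continuous functions, hence lower semicontinuous, and so it attains its minimum on the compact set $\cK$. The main, and only nontrivial, ingredient is the general Sion theorem itself.

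Should a self-contained argument be preferred, we would follow Komiya's elementary proof. The inequality $\max\inf\le\inf\max$ is immediate. For the reverse inequality we argue by contradiction. The first step is a two-point lemma: for any $K_1,K_2\in\cK$, if $\alpha<\min_{\Gamma}\max\{F(\Gamma,K_1),F(\Gamma,K_2)\}$, then some point on the segment $[K_1,K_2]$ has value above $\alpha$ for every $\Gamma$. This uses connectedness of the segment together with quasi-convexity and semicontinuity, and it is the delicate step. Induction then extends the lemma to finitely many points. Compactness of $\cL$ converts the finite statement into the full one via the finite intersection property of the closed sets $\{\Gamma:F(\Gamma,K)\ge\alpha\}$.
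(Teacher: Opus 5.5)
Your main argument is correct and is essentially the paper's route: the paper gives no proof and simply invokes Sion's theorem \cite{Sio58}, and your checks of the hypotheses and of attainment (upper semicontinuity of $\Gamma\mapsto\inf_{K}F(\Gamma,K)$ and lower semicontinuity of $K\mapsto\max_{\Gamma}F(\Gamma,K)$, each on a compact set) fill in exactly what the citation leaves implicit. The optional Komiya sketch, however, states the two-point lemma with $\min$ and $\max$ transposed, so that it matches quasi-concavity in $K$ rather than the convexity in $K$ and concavity in $\Gamma$ assumed here. The correct version is that $\alpha>\max_{\Gamma}\min\{F(\Gamma,K_1),F(\Gamma,K_2)\}$ implies $\max_{\Gamma}F(\Gamma,K_0)<\alpha$ for some $K_0\in[K_1,K_2]$; this is also the version your finite-intersection step actually needs. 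Your version is false: take $F(\Gamma,K)=K^2+2\Gamma K-\Gamma^2$ on $[-1,1]^2$ with $K_1=-1$, $K_2=1$, $\alpha=0.9$. Then $\min_{\Gamma}\max\{F(\Gamma,K_1),F(\Gamma,K_2)\}=1>\alpha$, yet $\min_{\Gamma}F(\Gamma,K)=(|K|-1)^2-2\le-1$ for every $K\in[-1,1]$, so no point of the segment exceeds $\alpha$ for all $\Gamma$.
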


\paragraph{Semidefinite programming and Slater's theorem}
Let $\cX$ and $\cY$ be two finite-dimensional complex Hilbert spaces.
Let $A\in\Herm(\cX)$ and $B\in\Herm(\cY)$ be Hermitian matrices,
and let $\cE:\Herm(\cX)\to\Herm(\cY)$ be a linear map.
The corresponding primal and dual SDPs are defined as
\[
\begin{progalign}
    \progrow[:=]
    {p}{\inf_{X\succeq0}}{\Tr[AX]}
    {\cE(X)\succeq B}
    \\[0.3em]
    \progrow[:=]
    {d}{\sup_{Y\succeq0}}{\Tr[BY]}
    {A\succeq\cE^*(Y)}.
\end{progalign}
\]
Here, $\cE^*$ denotes the adjoint of $\cE$ with respect to the
trace inner product.

\begin{theorem}[Slater's theorem (see, e.g.,
{\cite[Section~5.9.1]{BV04}})]\label{thm:slater}
If $p$ is finite and there exists a strictly feasible solution
$X_0\succ0$ satisfying $\cE(X_0)\succ B$, then $p=d$ and
the supremum defining $d$ is attained.
Similarly, if $d$ is finite and there exists a strictly feasible
solution $Y_0\succ0$ satisfying $A\succ\cE^*(Y_0)$, then
$p=d$ and the infimum defining $p$ is attained.
\end{theorem}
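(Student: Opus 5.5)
We prove the first assertion; the second follows by the symmetric argument with the roles of the primal and dual exchanged. The plan is weak duality followed by a separating-hyperplane argument in $\bR\times\Herm(\cY)$. Strict feasibility is used only to rule out a degenerate (``vertical'') separating hyperplane.

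\emph{Weak duality.} Let $X\succeq0$ be primal feasible and $Y\succeq0$ dual feasible. Then
\[
\Tr[AX]\geq\Tr[\cE^*(Y)X]=\Tr[Y\cE(X)]\geq\Tr[YB].
\]
The first inequality holds because $A-\cE^*(Y)\succeq0$ and $X\succeq0$. The second holds because $\cE(X)-B\succeq0$ and $Y\succeq0$. Hence $d\leq p$.

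\emph{Separation.} Define the set
\[
K:=\{(\Tr[AX]+r,\;B-\cE(X)+P): X\succeq0,\ P\succeq0,\ r>0\}.
\]
It is convex because it is the image of a convex cone under an affine map. We claim the point $(p,0)$ is not in $K$. Indeed, if it were, then $\cE(X)=B+P\succeq B$, so $X$ would be primal feasible with $\Tr[AX]=p-r<p$, a contradiction. By the separating hyperplane theorem in finite dimensions, there is $(\lambda,Y)\neq0$ with $\lambda\in\bR$ and $Y$ Hermitian such that
\[
\lambda(\Tr[AX]+r)+\Tr[Y(B-\cE(X)+P)]\geq\lambda p
\]
for all $X,P\succeq0$ and $r>0$.

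We now read off sign conditions. Letting $r\to\infty$ gives $\lambda\geq0$. Scaling $P$ arbitrarily large gives $\Tr[YP]\geq0$ for all $P\succeq0$, so $Y\succeq0$.

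\emph{Ruling out $\lambda=0$.} If $\lambda=0$, then $Y\neq0$ and, taking $P=0$, we get $\Tr[Y(B-\cE(X))]\geq0$ for every $X\succeq0$. Choose the strictly feasible point $X=X_0$. Then $B-\cE(X_0)\prec0$, and since $Y\succeq0$ is nonzero, $\Tr[Y(B-\cE(X_0))]<0$, a contradiction. This is the step I expect to be the crux, since it is the only place Slater's condition enters.

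\emph{Conclusion.} So $\lambda>0$, and we normalize to $\lambda=1$. Let $r\to0$ and set $P=0$. The separation inequality becomes
\[
\Tr[(A-\cE^*(Y))X]+\Tr[BY]\geq p\qquad\text{for all } X\succeq0.
\]
Scaling $X$ shows $A-\cE^*(Y)\succeq0$, since otherwise some $X\succeq0$ makes the left side tend to $-\infty$. Taking $X=0$ gives $\Tr[BY]\geq p$. Thus $Y$ is dual feasible with value at least $p$. Combined with weak duality, $d=p$ and the supremum is attained at $Y$.
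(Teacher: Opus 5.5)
Your argument is correct. It is the standard separating-hyperplane proof of Slater's theorem from the cited Boyd--Vandenberghe (Sections~5.3.2 and~5.9.1), which the paper invokes without proof; the weak-duality step, the sign conditions on $(\lambda,Y)$, and the use of $X_0$ to exclude $\lambda=0$ are all sound. Your ``symmetric argument'' for the second claim also goes through: rewriting the dual as $-\inf_{Y\succeq0}\{\Tr[(-B)Y] : -\cE^*(Y)\succeq -A\}$ puts it in primal form, its dual has value $-p$, and $Y_0$ is a Slater point for it.
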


\paragraph{Support restriction lemma.}
The following lemma allows us to introduce or remove constraints from specific optimization problems.
\begin{lemma}\label{lem:adversary-support-replacement}
    Let $\Sigma$ be a finite set, $\Delta\in\{0,1\}^{\Sigma\times\Sigma}$ be a symmetric matrix,
    and let $\cL\subseteq\Herm(\Sigma)$ be nonempty. Also, let $F:\cL\to\bR$ be a function
    satisfying $\Gamma\circ\Delta\in\cL$ and $F(\Gamma\circ\Delta)\geq F(\Gamma)$ for all
    $\Gamma\in\cL$. Then,
    \begin{align*}
        \sup_{\Gamma\in \cL}F(\Gamma)
        &=\sup_{\substack{\Gamma\in\cL:\\ \Gamma \circ \Delta = \Gamma}}F(\Gamma).
    \end{align*}
    Moreover, if either supremum is attained, then so is the other.
\end{lemma}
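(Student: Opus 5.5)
The statement is an equality of two suprema, so I will prove the two inequalities separately and then handle attainment.

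For the inequality ``$\geq$'', note that the right-hand side is a supremum over the subset $\{\Gamma\in\cL:\Gamma\circ\Delta=\Gamma\}\subseteq\cL$. A supremum over a subset can only be smaller, so nothing more is needed.

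For ``$\leq$'', I take an arbitrary $\Gamma\in\cL$ and set $\Gamma':=\Gamma\circ\Delta$. By hypothesis $\Gamma'\in\cL$ and $F(\Gamma')\geq F(\Gamma)$. It remains to check that $\Gamma'$ lies in the restricted set, i.e.\ $\Gamma'\circ\Delta=\Gamma'$. Since $\Delta$ has entries in $\{0,1\}$, we have $\Delta\circ\Delta=\Delta$ entrywise. Associativity of the Hadamard product then gives
\[
\Gamma'\circ\Delta=\Gamma\circ(\Delta\circ\Delta)=\Gamma\circ\Delta=\Gamma'.
\]
Hence $F(\Gamma)\leq F(\Gamma')\leq\sup_{\Gamma\circ\Delta=\Gamma}F$. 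Taking the supremum over $\Gamma\in\cL$ gives the inequality. Symmetry of $\Delta$ is not needed here; it is only there so that Hermiticity is preserved, and that is already covered by the assumption that $\cL$ is closed under $\Gamma\mapsto\Gamma\circ\Delta$.

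For attainment, both suprema are equal to a common value $s$. If the restricted supremum is attained at some $\Gamma^\ast$, then $\Gamma^\ast\in\cL$ and $F(\Gamma^\ast)=s$, so the unrestricted supremum is attained at the same point. Conversely, if $\Gamma^\ast\in\cL$ attains $s$, then $\Gamma^\ast\circ\Delta$ lies in the restricted set and satisfies $F(\Gamma^\ast\circ\Delta)\geq F(\Gamma^\ast)=s$. It cannot exceed $s$, so equality holds and $\Gamma^\ast\circ\Delta$ attains the restricted supremum.

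There is no real obstacle in this argument. The only point needing care is the idempotence $\Delta\circ\Delta=\Delta$, which is exactly where the $\{0,1\}$-valued hypothesis on $\Delta$ is used.
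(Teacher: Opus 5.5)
Your proof is correct and follows the paper's argument: map $\Gamma\mapsto\Gamma\circ\Delta$ and use the idempotence $\Delta\circ\Delta=\Delta$ for one inequality, and the inclusion of the restricted set in $\cL$ for the other. The paper leaves the attainment claim implicit, and your argument for it is correct.
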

\begin{proof}
    For every $\Gamma\in\cL$, let $\widetilde{\Gamma}:=\Gamma\circ\Delta$. By assumption,
    $\widetilde{\Gamma}\in\cL$ and $F(\widetilde{\Gamma})\geq F(\Gamma)$. Since
    $\Delta\circ\Delta=\Delta$, we have $\widetilde{\Gamma}\circ\Delta=\widetilde{\Gamma}$. This gives one direction of the equality;
    the reverse direction follows immediately as $\{\Gamma \in \cL: \Gamma \circ \Delta = \Gamma\} \subseteq\cL$.
\end{proof}

\section{Distributional quantum query complexity}
In the distributional setting, we are interested in the average success probability of an algorithm
under an input distribution $\mu$. To begin, we fix some notation that will be used throughout the
paper and recall the formal definition of distributional quantum query complexity.

\paragraph{Notation.} Let $\Sigma \subset \{0,1\}^{n}$ and $\cX = \bC^{\Sigma}$ be the input
register. We generally recognize a quantum algorithm by the Gram matrix $\Psi \in \Pos(\cX)$ that it
generates. We use $U_{\Psi} \in \cU(\cX\otimes \cW)$ to denote a corresponding unitary where $\cW$
denotes the work register. We can think of the coherent oracle model as a quantum channel
$\cE^{\Psi}(\rho) = U_\Psi (\rho_{\cX} \otimes \kbr{0}_{\cW})U_\Psi^{\dagger}$ where $\rho \in
\cD(\cX)$ denotes the initial state of the input register such as $\rho = \ketbra{x}$ or $\rho =
\ketbra{\mu}$ for a distribution $\mu$. Tracing out the work and input registers, respectively, we
get channels  $ \cE^{\Psi}_{\cX}(\rho) = \Tr_{\cW} [\cE^{\Psi}(\rho)]= \rho \circ {\Psi}^{\top}$ and
$\cE^{\Psi}_{\cW}(\rho) = \Tr_{\cX} [\cE^{\Psi}(\rho)]$.
\par

Throughout, the sets $\Sigma = \Sigma_n \subset \{0,1\}^{n}$ and $\Omega = \Omega_n$, problems $f =
f_n \subset \Sigma \times \Omega$, and distributions $\mu = \mu_n : \Sigma \to [0,1]$ are implicitly
indexed by the input length. For error parameters, however, we write $\eps=\eps(n)\in[0,1]$ to make
the possible dependence on $n$ explicit. We use the same convention for bias, success, and fidelity
parameters unless they are explicitly fixed constants.
\par

 \begin{definition}[Distributional Query Complexity]\label{def:distributional-query-complexity} Let
 $\Sigma \subset \{0,1\}^{n}$, $f \subseteq \Sigma \times \Omega$ be a  search problem, and
 $\mu:\Sigma \to [0,1]$ be a distribution. Also, let $\eps=\eps(n)\in[0,1)$ be an error parameter.
 We define $\mathsf{Q}^{\mu}_{\eps}(f)$ as the minimum query cost $t$ for which there exists a
 quantum algorithm with final states $\{\ket{\psi_x}:x\in \Sigma\}$ and projective measurement
 $\{\Pi_s:s\in \Omega\}$ satisfying
    \begin{align*}
        \E_{x\sim \mu} \left[ \sum_{s \in f(x)} \bra{\psi_x} \Pi_s \ket{\psi_x} \right] \geq 1-\eps.
    \end{align*}
\end{definition}
In the following, we first give a fidelity-based characterization for general
search problems. In \Cref{sec:functional-decision-problems}, we then give a simpler characterization
for decision problems and trace-norm bounds for functional and decision problems. Finally, in
\Cref{sec:distributional-state-conversion}, we construct an algorithm for distributional state
conversion.

\subsection{Gram matrix formulation of query complexity}\label{sec:gram-matrix-formulation}
While \Cref{def:distributional-query-complexity} is operationally precise, it carries a lot of
moving parts regarding the exact implementation of the work space, unitaries, and the measurements
that complicate the theoretical analysis. We therefore derive equivalent characterizations in terms
of the query complexity of Gram matrices that approximate a valid witness for the problem.
Related Gram-matrix and output-witness formulations appear in \cite[Claim~3.8]{LR12} and
\cite[Section~6.1]{Bel15}, respectively.
\subsubsection{Witness matrices and valid Gram matrices }

We begin by defining \emph{witness sets} for a given search problem and distribution; intuitively, a
valid witness encodes one of possibly many exact solutions to the problem $f$, where the solution on
input $x$  can only be supported on valid outputs in $f(x)$.

\begin{definition}[The set of $\mu$-witnesses for $f$]\label{def:witness-set} Let $\Sigma \subset
    \{0,1\}^{n}$, $f \subseteq \Sigma \times \Omega$ be a search problem, and $\mu:\Sigma\to [0,1]$
    be a distribution. We define the set of $\mu$-witnesses for $f$, denoted by
    $\mathcal{W}^{\mu}_f$, as the set of PSD matrices $W \in \Pos(\Sigma)$ that admit a
    decomposition
$
W=\sum_{s\in \Omega} W^{(s)}
$
such that the following conditions hold.
\begin{enumerate}
    \item \textbf{PSDness.} $W^{(s)} \succeq 0$ for all $s\in\Omega$.
    \item \textbf{Correctness.} For all $x\in \Sigma$ and $s\in\Omega$, if $s\notin f(x)$ then
    $W^{(s)}[x,x]=0$.
    \item \textbf{Normalization under $\mu$.} $\Tr[W\circ \kbr{\mu}] = 1$.
\end{enumerate}
In particular, for a functional problem $f:\Sigma \to \Omega$,  $\cW^{\mu}_{f} = \{W \in
\Pos(\Sigma): W\circ \Delta_f = 0,\Tr[W\circ \kbr{\mu}] = 1\}$.
\end{definition}

We emphasize that, unlike in the worst-case setting, a $\mu$-witness $W \in \mathcal{W}^{\mu}_{f}$
need not be a Gram matrix, as our goal is to get close to the density matrix $W\circ \kbr{\mu}$.
Therefore, we require normalization only on average under $\mu$.
\par
As previously discussed, quantum algorithms correspond to Gram matrices. We therefore define the set
of target Gram matrices that are sufficiently close to a valid witness under $\mu$.

\begin{definition}\label{def:valid-grams} Let $\Sigma \subset \{0,1\}^{n}$, $f \subseteq \Sigma
    \times \Omega$ be a search problem, and $\mu:\Sigma\to [0,1]$ be a distribution. Also, let
    $\eps=\eps(n)\in[0,1)$ be an error parameter. We define the set of $(\eps,\mu)$-targets for $f$
    as
    \begin{align*}
        \cG^{\mu}_{\eps}(f) := \{\Psi \in \Pos(\Sigma):\Psi \circ I = I;\,\max_{W\in \cW^{\mu}_f} \cF(W\circ \kbr{\mu}, \Psi \circ \kbr{\mu}) \geq \sqrt{1-\eps}\}.
    \end{align*}

\end{definition}
We also note that both sets have useful analytic properties that we will use repeatedly when
applying minimax theorems or establishing attainment of optima.
\begin{lemma}\label{lem:compactness-gmueps} The set $\cW^{\mu}_{f}$ is closed and convex, and its
restriction to the support of $\mu$, defined by
\begin{align*}
    \widehat{\cW}^{\mu}_{f}
    :=\cW^{\mu}_{f}\cap
    \{W:W\circ J_{\supp(\mu)}=W\}
\end{align*}
is convex and compact. Moreover, $\cG^{\mu}_{\eps}(f)$ is convex and compact.
\end{lemma}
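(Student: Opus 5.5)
We verify the three claims in turn, directly from \Cref{def:witness-set} and \Cref{def:valid-grams}.

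\textbf{$\cW^\mu_f$ is closed and convex.} For convexity, take $W_1=\sum_s W_1^{(s)}$ and $W_2=\sum_s W_2^{(s)}$ in $\cW^\mu_f$ and $t\in[0,1]$. The decomposition $tW_1^{(s)}+(1-t)W_2^{(s)}$ is PSD. It vanishes at $[x,x]$ whenever $s\notin f(x)$, since both summands do. The normalization constraint is linear, so it holds for the combination.

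Closedness is less direct, because the decomposition is existentially quantified. The key observation is that the decomposition pieces are bounded by $W$ itself. Since $0\preceq W^{(s)}\preceq W$, we have $\|W^{(s)}\|\le\Tr W$. So for a convergent sequence $W_k\to W$, the pieces $W_k^{(s)}$ lie in a bounded set, because $\Omega$ is finite. Passing to a subsequence, each $W_k^{(s)}$ converges to some $W^{(s)}$. PSDness, the diagonal zero conditions, the sum identity, and normalization are all closed conditions, so they pass to the limit. Hence $W\in\cW^\mu_f$.

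\textbf{$\widehat{\cW}^\mu_f$ is compact and convex.} Convexity is clear, since $W\circ J_{\supp(\mu)}=W$ is a linear constraint. Closedness follows likewise.

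For boundedness, take $W\in\widehat{\cW}^\mu_f$. Its diagonal is supported on $\supp(\mu)$, and normalization gives
\[
\sum_x \mu(x)W[x,x]=1.
\]
Hence $W[x,x]\le 1/\mu_{\min}$, where $\mu_{\min}$ is the minimum of $\mu$ on its support. For a PSD matrix, the entries satisfy $|W[x,y]|\le\sqrt{W[x,x]W[y,y]}$. So every entry is at most $1/\mu_{\min}$ in absolute value. Being closed and bounded in finite dimension, $\widehat{\cW}^\mu_f$ is compact.

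\textbf{$\cG^\mu_\eps(f)$ is compact and convex.} Gram matrices with unit diagonal form a compact convex set, since their entries are bounded by $1$.

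The first step is to replace the maximum over $\cW^\mu_f$ by a maximum over $\widehat{\cW}^\mu_f$. For any $W$, the matrix $W\circ\kbr{\mu}$ depends only on $W$ restricted to $\supp(\mu)$. Moreover, $W\circ J_{\supp(\mu)}$ is again a witness: use the pieces $W^{(s)}\circ J_{\supp(\mu)}$, which are PSD by the Schur product theorem, and check that correctness and normalization are preserved. So the maximum may be taken over the compact set $\widehat{\cW}^\mu_f$, and it is attained by continuity of fidelity.

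\emph{Closedness.} Consider the function
\[
\Psi\mapsto \max_{W\in\widehat{\cW}^\mu_f}\cF(W\circ\kbr{\mu},\Psi\circ\kbr{\mu}).
\]
Fidelity is jointly continuous, and the maximization is over a compact set, so this function is continuous (by the preliminaries on semicontinuity). Its superlevel set is therefore closed, and intersecting with the Gram-matrix set gives a compact set.

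\emph{Convexity.} This is the main obstacle. Take $\Psi_1,\Psi_2\in\cG^\mu_\eps(f)$ with attaining witnesses $W_1,W_2$, and let $t\in[0,1]$. Then $tW_1+(1-t)W_2$ is a witness. The density matrices $W_j\circ\kbr{\mu}$ and $\Psi_j\circ\kbr{\mu}$ have unit trace. This uses the normalization of $W_j$ and the unit diagonal of $\Psi_j$. Joint concavity of fidelity on density operators then gives
\[
\cF\bigl((tW_1+(1-t)W_2)\circ\kbr{\mu},(t\Psi_1+(1-t)\Psi_2)\circ\kbr{\mu}\bigr)\ge t\sqrt{1-\eps}+(1-t)\sqrt{1-\eps}=\sqrt{1-\eps}.
\]
Finally, $t\Psi_1+(1-t)\Psi_2$ is PSD with unit diagonal. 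So it lies in $\cG^\mu_\eps(f)$.
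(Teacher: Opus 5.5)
Your proof is correct and follows essentially the same route as the paper. The shared steps are: convexity plus subsequence-based closedness for $\cW^\mu_f$ using $0\preceq W^{(s)}\preceq W$; boundedness of $\widehat{\cW}^\mu_f$ via the normalization and $\min_{x\in\supp(\mu)}\mu(x)$; reduction of the witness maximum to $\widehat{\cW}^\mu_f$ via $W\mapsto W\circ J_{\supp(\mu)}$; and joint concavity of fidelity for convexity of $\cG^\mu_\eps(f)$. The only cosmetic difference is closedness of $\cG^\mu_\eps(f)$. You get it from continuity of $\Psi\mapsto\max_{W\in\widehat{\cW}^\mu_f}\cF(W\circ\kbr{\mu},\Psi\circ\kbr{\mu})$, whereas the paper extracts a convergent subsequence of witnesses directly; both rest on the same compactness of $\widehat{\cW}^\mu_f$.
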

We postpone the proof to \Cref{app:compactness}.
\subsubsection{Fidelity characterization}
We now prove an exact characterization of distributional query complexity in terms of the set
$\cG^{\mu}_{\eps}(f)$, allowing us to work directly with Gram matrices in the following sections.
\begin{theorem}[Fidelity characterization]\label{thm:fidelity-charecterization} Let $\Sigma \subset
\{0,1\}^{n}$, $f\subseteq\Sigma\times\Omega$ be a search problem, and $\mu:\Sigma\to [0,1]$ be a
distribution. Also, let $\eps=\eps(n)\in[0,1)$ be an error parameter. Then,
\[
\mathsf{Q}^\mu_{\eps}(f) = \min_{\Psi \in \cG^{\mu}_{\eps}(f)} \mathsf{Q}(\Psi).
\]
\end{theorem}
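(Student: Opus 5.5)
We prove the two inequalities separately.

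\textbf{Upper bound on the minimum.} Take an optimal $t$-query algorithm with final states $\ket{\psi_x}$ and projective measurement $\{\Pi_s\}$ achieving success at least $1-\eps$ under $\mu$. Its Gram matrix $\Psi$ has unit diagonal and $\mathsf{Q}(\Psi)\leq t$, so it suffices to show $\Psi\in\cG^\mu_\eps(f)$. We exhibit a witness explicitly. Let $p_x:=\sum_{s\in f(x)}\bra{\psi_x}\Pi_s\ket{\psi_x}$ and $p:=\sum_x\mu(x)p_x\geq1-\eps$. For each $x$ put $\ket{w_x}:=p^{-1/2}\sum_{s\in f(x)}\Pi_s\ket{\psi_x}$. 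Define $W^{(s)}(x,y):=p^{-1}\ind[s\in f(x)]\ind[s\in f(y)]\bra{\psi_x}\Pi_s\ket{\psi_y}$ and $W:=\sum_sW^{(s)}$. Then:
\begin{itemize}
\item each $W^{(s)}$ is a Gram matrix of the vectors $p^{-1/2}\ind[s\in f(x)]\Pi_s\ket{\psi_x}$, hence PSD;
\item correctness holds by construction;
\item $W(x,y)=\braket{w_x}{w_y}$ by orthogonality of the projectors;
\item $\Tr[W\circ\kbr{\mu}]=p^{-1}\sum_x\mu(x)p_x=1$.
\end{itemize}
By Uhlmann's theorem (\Cref{thm:uhlmann}), applied to the purifications $\sum_x\sqrt{\mu(x)}\ket{x}\ket{\psi_x}$ and $\sum_x\sqrt{\mu(x)}\ket{x}\ket{w_x}$ of $\Psi^\top\circ\kbr{\mu}$ and $W^\top\circ\kbr{\mu}$, the fidelity is at least
\[
\Bigl|\sum_x\mu(x)\braket{w_x}{\psi_x}\Bigr|=p^{-1/2}\sum_x\mu(x)p_x=\sqrt p\geq\sqrt{1-\eps}.
\]
Fidelity is invariant under transposition of both arguments, since transposition is entrywise conjugation composed with the adjoint, so $\cF(W\circ\kbr{\mu},\Psi\circ\kbr{\mu})\geq\sqrt{1-\eps}$, as required.

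\textbf{Lower bound.} Fix $\Psi\in\cG^\mu_\eps(f)$ with $\mathsf{Q}(\Psi)=t$, realized by states $\ket{\psi_x}$ in $\cW$, together with a witness $W=\sum_sW^{(s)}$ attaining the fidelity bound. We must build a measurement with success at least $1-\eps$. Factor $W^{(s)}(x,y)=\braket{w^s_x}{w^s_y}$ with $\ket{w^s_x}\in\cV'$. By correctness, $W^{(s)}[x,x]=0$ forces $\ket{w^s_x}=0$ whenever $s\notin f(x)$. Set $\ket{w_x}:=\sum_s\ket{s}\ket{w^s_x}\in\bC^\Omega\otimes\cV'$, so that $W$ is the Gram matrix of the $\ket{w_x}$. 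Enlarge the work space to contain $\bC^\Omega\otimes\cV'$ by appending ancillas in a fixed state; this changes neither $\Psi$ nor the query count. By Uhlmann (\Cref{thm:uhlmann}) there is a unitary $U$ on the enlarged work space with
\[
\Bigl|\sum_x\mu(x)\bra{w_x}U\ket{\psi_x}\Bigr|\geq\sqrt{1-\eps}.
\]
Append $U$ to the algorithm, at no query cost, and measure the $\bC^\Omega$ register with $\Pi_s=\kbr{s}\otimes I$. For the state $\ket{\phi_x}:=U\ket{\psi_x}$, let $P_x:=\sum_{s\in f(x)}\Pi_s$. Since $\ket{w_x}$ lies in the range of $P_x$,
\[
\Bigl|\sum_x\mu(x)\braket{w_x}{\phi_x}\Bigr|=\Bigl|\sum_x\mu(x)\bra{w_x}P_x\ket{\phi_x}\Bigr|\leq\sqrt{\sum_x\mu(x)\|w_x\|^2}\,\sqrt{\sum_x\mu(x)\|P_x\phi_x\|^2}
\]
by Cauchy--Schwarz on the weighted sum. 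The first factor equals $1$ by normalization, and the second is the square root of the success probability. Hence the success probability is at least $1-\eps$ and $\mathsf{Q}^\mu_\eps(f)\leq t$.

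\textbf{Attainment and the main obstacle.} The minimum over $\cG^\mu_\eps(f)$ is attained because $\mathsf{Q}$ takes values in $\bN$ and $\cG^\mu_\eps(f)$ is nonempty by the first direction. The main care point is the dimension bookkeeping in the Uhlmann step. The realization space must be large enough for the unitary to exist, which holds since $\cV$ and the ancillas are taken arbitrarily large. Transposes must also be handled consistently in the reduced states.
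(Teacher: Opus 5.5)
Your proposal is correct and follows essentially the same route as the paper. In one direction you build the witness from the normalized projections $\Pi_s\ket{\psi_x}/\sqrt{p}$ and lower-bound the fidelity via Uhlmann; in the other you purify the witness with an explicit output register, use the max-over-unitaries form of Uhlmann to obtain a post-processing unitary, and finish with Cauchy--Schwarz against the correct-output projector (your weighted-sum version of this step is equivalent to the paper's use of the global projector $\Pi_f$).
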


\begin{proof}
We prove each direction as follows.

\paragraph{Standard query $\Rightarrow$ witness.}
Let $t := \mathsf{Q}^{\mu}_{\eps}(f)$, and let $\Psi$ be the Gram matrix of a $t$-query quantum
algorithm that computes $f$ over $\mu$ with error at most $\eps$. Let $\ket{\psi_x}$ be the final
state of the algorithm on input $x$ (i.e., $\Psi(x,y) = \braket{\psi_x}{\psi_y}$) and let $\{\Pi_{s}:
s\in\Omega\}$ be the projective measurement used in the algorithm to generate outputs $s\in\Omega$.
Then the success probability of the algorithm can be computed as
\[
\delta := \mathbb{E}_{x\sim \mu}\big[\|\Pi_{f(x)}\ket{\psi_x}\|^2\big] \geq 1-\eps,
\]
where $f(x)\subseteq \Omega$ denotes the set of valid outputs on input $x$ and
$\Pi_{f(x)}:=\sum_{s\in f(x)}\Pi_s$. For each $s\in\Omega$ and input $x$, we define
\[
\ket{w_{x,s}} :=
\begin{cases}
\frac{1}{\sqrt{\delta}}\Pi_s\ket{\psi_x} & s\in f(x),\\
0 & \text{otherwise},
\end{cases}
\]
Then, letting $W^{(s)}(x,y):=\braket{w_{x,s}}{w_{y,s}}$ and
$
W:=\sum_{s\in\Omega} W^{(s)}$, we have $W^{(s)}\succeq 0$ for all $s$, and $W^{(s)}(x,x)=0$ whenever $s\notin f(x)$.
Moreover,
\begin{align*}
\Tr[ W\circ \kbr{\mu}]
&= \sum_x \mu(x)\,W(x,x)
= \sum_x \mu(x)\sum_{s\in f(x)} \frac{1}{\delta}\|\Pi_s\ket{\psi_x}\|^2
= \frac{1}{\delta}\mathbb{E}_{x\sim \mu}\big[\|\Pi_{f(x)}\ket{\psi_x}\|^2\big]
= 1.
\end{align*}
So, ${W}\in\cW^{\mu}_f$.
\par
Now consider the coherent states
$
\ket{\psi}:=\sum_x \sqrt{\mu(x)}\ket{x}\ket{\psi_x}
$ and
$
\ket{\phi}:=\sum_x \sqrt{\mu(x)}\ket{x}\cdot \frac{\Pi_{f(x)}\ket{\psi_x}}{\sqrt{\delta}}$,
purifying $(\Psi\circ \kbr{\mu})^\top$ and  $({W}\circ \kbr{\mu})^{\top}$, respectively. By
Uhlmann's theorem,
\begin{align*}
\cF(\Psi\circ \kbr{\mu},{W}\circ \kbr{\mu})& = \cF((\Psi\circ \kbr{\mu})^{\top},({W}\circ \kbr{\mu})^{\top})
 \geq |\braket{\phi}{\psi}|
=
\left|\frac{1}{\sqrt{\delta}}\sum_x \mu(x)\,\bra{\psi_x}\Pi_{f(x)}\ket{\psi_x}\right|
 \\
 &=
\frac{1}{\sqrt{\delta}}\,
\mathbb{E}_{x\sim \mu}\big[\|\Pi_{f(x)}\ket{\psi_x}\|^2\big]
=
\sqrt{\delta} \geq \sqrt{1-\eps}.
\end{align*}
Thus,  $\Psi \in \cG^{\mu}_{\eps}(f)$ and $\min_{\Phi \in \cG^{\mu}_{\eps}(f)} \mathsf{Q}(\Phi) \leq
\mathsf{Q}(\Psi) \leq t = \mathsf{Q}^{\mu}_{\eps}(f)$.
\paragraph{Witness $\Rightarrow$ standard query.}
Let $\Psi$ be the Gram matrix of an algorithm attaining $t=\mathsf{Q}(\Psi) = \min_{\Phi \in
\cG^{\mu}_{\eps}(f)} \mathsf{Q}(\Phi)$ with input and work registers $\cX$ and $\cW$, respectively.
Therefore, there exists a witness ${W}\in\cW^{\mu}_f$ such that $\Tr[{W}\circ \kbr{\mu}]=1$ and
$
\cF(\Psi\circ \kbr{\mu},{W}\circ \kbr{\mu})\ \ge\ \sqrt{1-\eps}.
$
Also, let
$
\ket{\psi}:=\sum_{x\in \Sigma} \sqrt{\mu(x)}\ket{x}\ket{\psi_x}
$
be a purification of $(\Psi\circ \kbr{\mu})^{\top}  = \Psi^{\top}\circ \kbr{\mu} =
\cE^{\Psi}_{\cX}(\kbr{\mu})$ generated by running the algorithm $\Psi$ with input register
initialized as the superposition $\sum_{x \in \Sigma} \sqrt{\mu(x)}\ket{x}$.

Since $W\in\cW^{\mu}_f$, we can write $W=\sum_{s\in \Omega} W^{(s)}$ with $W^{(s)}\succeq 0$. Taking
$\cY$ to be a sufficiently large space, we can find vectors $\{\ket{\phi_{x,s}}: x\in \Sigma,s\in
\Omega\}\subset \cY$ such that
$
W^{(s)}(x,y)=\braket{\phi_{x,s}}{\phi_{y,s}}
$ for all $x,y\in \Sigma$ and $s\in \Omega$.
Then,  the correctness condition $s\notin f(x)\Rightarrow W^{(s)}(x,x)=0$ implies
$\ket{\phi_{x,s}}=0$ whenever $s\notin f(x)$. Next, we define the output register $\cS :=
\bC^{\Omega}$ and a purification on registers $\cX\otimes \cS\otimes \cY$ by
\[
\ket{\phi}:=\sum_{x\in \Sigma}\sum_{s\in \Omega}\sqrt{\mu(x)}\ket{x}_{\cX}\ket{s}_{\cS}\ket{\phi_{x,s}}_{\cY}.
\]
As a result, $\ket{\phi}$ is a purification of ${W}^{\top}\circ \kbr{\mu} =
\Tr_{\cS,\cY}[\kbr{\phi}]$, and $\|\phi\|^2 = \Tr[\kbr{\phi}]=\Tr[W^{\top}\circ \kbr{\mu}] =
\Tr[W\circ \kbr{\mu}] =1$.
\par
Note that by appending fixed ancillas to the two purifying registers, we may assume without loss of
generality that $\dim(\cW)=\dim(\cS\otimes\cY)$. As a result, by Uhlmann's theorem
(\Cref{thm:uhlmann}) in the ``max-over-unitaries'' form, there exists a unitary $U\in \cU(\cW,
\cS\otimes \cY)$ such that
\begin{align*}
\big|\bra{\phi}(I_{\cX}\otimes U)\ket{\psi}\big| =  \cF((\Psi\circ \kbr{\mu})^{\top},({W}\circ \kbr{\mu})^{\top}) =
\cF(\Psi\circ \kbr{\mu},{W}\circ \kbr{\mu})
 \ge \sqrt{1-\eps}.
\end{align*}
\underline{\textit{The algorithm:}}
\begin{enumerate}
    \item Run the algorithm corresponding to $\Psi$ to prepare $\ket{\psi} \in \cX\otimes \cW$.
    \item Apply $U$ to the work registers to obtain $(I_{\cX}\otimes U)\ket{\psi} \in \cX\otimes \cS
    \otimes \cY$.
    \item Measure the register $\cS$ in the computational basis to output an outcome $s$.
\end{enumerate}
\noindent
\underline{\textit{Correctness analysis:}} Let $\Pi_f$ be the projector onto correct input-output
pairs:
\[
\Pi_f
:=\sum_{x\in \Sigma} \kbr{x}\ \otimes\ \left(\sum_{s\in f(x)} \kbr{s}\right)\ \otimes\ I_{\cY}.
\]
As a result, $\|\Pi_f(I_{\cX}\otimes U)\ket{\psi}\|^2 = \E_{x\sim\mu}[\sum_{s\in
f(x)}\bra{\psi_x}U^\dagger(\ketbra{s}\otimes I_{\cY})U\ket{\psi_x}]$ is the success probability of
the algorithm. For the target state $\ket{\phi}$ we have $\Pi_f\ket{\phi}=\ket{\phi}$, since
$\ket{\phi_{x,s}}=0$ when $s\notin f(x)$. Hence,
\begin{align*}
 \|\Pi_f(I_{\cX}\otimes U)\ket{\psi}\|^2
 &= \|\phi\|^2 \cdot \|\Pi_f(I_{\cX}\otimes U)\ket{\psi}\|^2 \ge\ |\bra{\phi}\Pi_f(I_{\cX}\otimes U)\ket{\psi}|^2
= |\bra{\phi}(I_{\cX}\otimes U)\ket{\psi}|^2
\\ &\ge ( \sqrt{1-\eps})^2
= 1-\eps
\end{align*}
where we used Cauchy--Schwarz in the first inequality. Thus, the constructed $t$-query algorithm
succeeds with probability at least $1-\eps$ under $\mu$. Therefore, $\mathsf{Q}^{\mu}_{\eps}(f)\leq
t$, completing the reverse-direction proof.
\end{proof}

\subsection{Functional and decision problems}\label{sec:functional-decision-problems}

In this part, we consider decision and functional problems where each input $x$ has a single corresponding
output, so $|f(x)|=1$. As a result, an exact algorithm must produce orthogonal states on different
inputs $x,y$ with different outputs $f(x)\neq f(y)$.
\subsubsection{A simpler characterization for decision problems}
For decision problems, we can replace the optimization over witness matrices in
\Cref{thm:fidelity-charecterization} with a single sign matrix, at the cost of a factor of two in
query complexity.

\begin{lemma}\label{lem:decision-sign-characterization} Let $\Sigma\subseteq\{0,1\}^n$,
    $g:\Sigma\to\{0,1\}$ be a decision problem, and $\mu:\Sigma\to[0,1]$ be a distribution. Then,
    for every bias $\beta=\beta(n)\in(0,1)$,
    \[
        \mathsf{Q}^{\mu}_{(1-\beta)/2}(g)
        \leq
        \min_{\substack{
            \Psi\succeq0,\ \Psi\circ I=I\\
            \cF(\Psi\circ\kbr{\mu},S_g\circ\kbr{\mu})\geq\beta
        }}
        \mathsf{Q}(\Psi)
        \leq
        2\mathsf{Q}^{\mu}_{(1-\beta)/2}(g),
    \]
    where $S_g(x,y):=(-1)^{g(x)+g(y)}$ is the sign matrix of $g$.
\end{lemma}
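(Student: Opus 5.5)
The plan is to prove the two inequalities separately. The right-hand inequality comes from a compute--phase--uncompute construction. The left-hand inequality comes from Helstrom's theorem (\Cref{thm:helstrom}) applied to a controlled version of the algorithm generating $\Psi$. Throughout, write $s_x:=(-1)^{g(x)}$ and $\ket{v}:=\sum_x s_x\sqrt{\mu(x)}\ket{x}$, so that $S_g\circ\kbr{\mu}=\kbr{v}$ is rank one. Hence
\[
\cF(\Psi\circ\kbr{\mu},\kbr{v})=\sqrt{\bra{v}\Psi\circ\kbr{\mu}\ket{v}}=\Bigl\|\textstyle\sum_x\mu(x)s_x\ket{\psi_x}\Bigr\|=:\|w\|,
\]
where $\{\ket{\psi_x}\}$ realizes $\Psi$. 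Equivalently, by Uhlmann (\Cref{thm:uhlmann}), $\|w\|=\max_{\|\phi\|=1}\bigl|\sum_x\mu(x)s_x\braket{\phi}{\psi_x}\bigr|$.

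\textbf{Right inequality.} Let $t=\mathsf{Q}^{\mu}_{(1-\beta)/2}(g)$, and take a $t$-query algorithm $U$ with two-outcome measurement $\{\Pi_0,\Pi_1\}$ and average error at most $(1-\beta)/2$. Consider the $2t$-query algorithm $U^\dagger(\Pi_0-\Pi_1)U$ started from $\ket{0}$. Its final states $\ket{\psi'_x}$ satisfy
\[
\braket{0}{\psi'_x}=\bra{\psi_x}(\Pi_0-\Pi_1)\ket{\psi_x}=\Pr[\text{out}=0\mid x]-\Pr[\text{out}=1\mid x].
\]
Taking $\phi=\ket{0}$ in the Uhlmann form above gives
\[
\cF\ \ge\ \Bigl|\textstyle\sum_x\mu(x)s_x\braket{0}{\psi'_x}\Bigr|=\E_{x\sim\mu}\bigl[\Pr[\text{correct}]-\Pr[\text{wrong}]\bigr]\ \ge\ 1-2\cdot\tfrac{1-\beta}{2}=\beta .
\]
So the Gram matrix $\Psi'$ is feasible for the middle minimization and has cost at most $2t$.

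\textbf{Left inequality.} Let $\Psi$ be feasible, generated with $T=\mathsf{Q}(\Psi)$ queries, and let $\ket{w}=\sum_x\mu(x)s_x\ket{\psi_x}$, so $\|w\|\ge\beta$. The idea is to run the algorithm for $\Psi$ controlled on an extra qubit prepared in $\ket{+}$.
\begin{itemize}
\item The controlled oracle costs no extra queries, since controlling a query can be simulated by routing the index register to $\ket{0}$, on which $\cO^x$ acts trivially.
\item On input $x$, the final work state is therefore $\ket{\chi_x}=\tfrac{1}{\sqrt2}\bigl(\ket{0}\ket{0}+\ket{1}\ket{\psi_x}\bigr)$.
\item With priors $p_b=\mu(g^{-1}(b))$ and $\rho_b$ the average of $\kbr{\chi_x}$ over $x\sim\mu|_{g^{-1}(b)}$, we have $M:=p_0\rho_0-p_1\rho_1=\sum_x\mu(x)s_x\kbr{\chi_x}$.
\end{itemize}
The off-diagonal block of $M$ (between control values $0$ and $1$) is $\tfrac12\ket{0}\bra{w}$. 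Define the Hermitian test operator
\[
X:=\ket{1}\!\bra{0}\otimes\ket{\hat w}\!\bra{0}+\ket{0}\!\bra{1}\otimes\ket{0}\!\bra{\hat w},\qquad \hat w:=w/\|w\|,
\]
which has $\|X\|=1$. Then
\[
\|M\|_{\tr}\ \ge\ |\Tr[MX]|=\tfrac12\bigl(\braket{w}{\hat w}+\braket{\hat w}{w}\bigr)=\|w\|\ \ge\ \beta .
\]
By Helstrom's theorem, there is a projective measurement on the work register with average error $\tfrac12(1-\|M\|_{\tr})\le(1-\beta)/2$. This gives $\mathsf{Q}^{\mu}_{(1-\beta)/2}(g)\le T$.

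\textbf{Main obstacle.} The delicate step is the left inequality. Applied directly to $\{\ket{\psi_x}\}$, Helstrom's theorem only sees the quadratic quantity $\sum_x\mu(x)s_x\kbr{\psi_x}$. From the fidelity one controls only the linear quantity $\|\sum_x\mu(x)s_x\ket{\psi_x}\|$, and naive comparisons lose a square, giving $\beta^2$ in place of $\beta$. The controlled-run trick converts the linear quantity into an off-diagonal block of $M$, where the trace-norm duality bound is tight. What must be checked carefully is that the controlled algorithm is a legitimate $T$-query algorithm in the coherent oracle model, which holds because the index value $0$ acts as a free no-op.
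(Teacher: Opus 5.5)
Your proof is correct and follows the paper's approach. The right inequality is the same compute--reflect--uncompute construction, with the Cauchy--Schwarz bound taken against $\ket{0}$; the left inequality uses the same controlled-run (Hadamard-test) idea, differing only in detail: the paper places the input-independent state $\ket{\eta}=\ket{w}/\|w\|$ in the reference branch and measures the control qubit in the Hadamard basis to get error exactly $(1-F)/2$, whereas you keep $\ket{0}$ there and obtain the measurement from Helstrom's theorem plus trace-norm duality with your test operator $X$.
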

\begin{proof}
    Let $\ket{\sigma}:=\sum_{x\in\Sigma}(-1)^{g(x)}\sqrt{\mu(x)}\ket{x}$. Then,
    $S_g\circ\kbr{\mu}=\ketbra{\sigma}$. For any Gram matrix $\Psi$ with decomposition
    $\Psi(x,y)=\braket{\psi_x}{\psi_y}$, we have
    \begin{align*}
        \cF(\Psi\circ\kbr{\mu},S_g\circ\kbr{\mu})
        &=\sqrt{\bra{\sigma}(\Psi\circ\kbr{\mu})\ket{\sigma}}
        =\left(
            \sum_{x,y\in\Sigma}\mu(x)\mu(y)(-1)^{g(x)+g(y)}
            \braket{\psi_x}{\psi_y}
            \right)^{1/2} \\
        &=\left\|\sum_{x\in\Sigma}\mu(x)(-1)^{g(x)}\ket{\psi_x}\right\|.
    \end{align*}

    \paragraph{From decision algorithms to Gram matrices.}
    Let $T:=\mathsf{Q}^{\mu}_{(1-\beta)/2}(g)$, and let $U_Q^x$ and $\{\Pi_0,\Pi_1\}$ describe an
    optimal algorithm. Define $\ket{\phi_x}:=(U_Q^x)^\dagger(\Pi_0-\Pi_1)U_Q^x\ket{0}$, and let
    $\Psi$ be their Gram matrix. These states are normalized and can be generated with $2T$ queries
    by running the algorithm, applying the output sign, and reversing the algorithm. Moreover,
    \begin{align*}
        (-1)^{g(x)}\braket{0}{\phi_x}
        &=(-1)^{g(x)}\bra{0}(U_Q^x)^\dagger(\Pi_0-\Pi_1)U_Q^x\ket{0}\\
        &=2\Pr_Q[Q(x)=g(x)]-1.
    \end{align*}
    Therefore, we can lower bound the fidelity using Cauchy--Schwarz as follows:
    \begin{align*}
        \cF(\Psi\circ\kbr{\mu},S_g\circ\kbr{\mu})
        &=\left\|\sum_{x\in\Sigma}\mu(x)(-1)^{g(x)}\ket{\phi_x}\right\|\cdot\|\ket{0}\|\\
        &\geq\left|\sum_{x\in\Sigma}\mu(x)(-1)^{g(x)}\braket{0}{\phi_x}\right|
        =2\Pr_{x\sim\mu,Q}[Q(x)=g(x)]-1\geq\beta.
    \end{align*}
    As a result, $\Psi$ is feasible for the minimization above and $\mathsf{Q}(\Psi)\leq2T$,
    completing this direction of the proof.

    \paragraph{From Gram matrices to decision algorithms.}
    Let $\Psi$ be feasible for the minimum in the statement, and let $\{\ket{\psi_x}\}$ be the final
    states of an algorithm generating $\Psi$ with $T=\mathsf{Q}(\Psi)$ queries. Define
    \begin{align*}
        F&:=\cF(\Psi\circ\kbr{\mu},S_g\circ\kbr{\mu})\geq\beta>0,\\
        \ket{\eta}&:=
        \frac{\sum_{x\in\Sigma}\mu(x)(-1)^{g(x)}\ket{\psi_x}}{F}.
    \end{align*}
    By the fidelity identity above, $\|\ket{\eta}\|_2=1$.
    This state is independent of the input and can be prepared without queries. We can therefore prepare
    $(\ket{0}\ket{\eta}+\ket{1}\ket{\psi_x})/\sqrt{2}$ using $T$ queries. Then, we measure the first qubit in the Hadamard
    basis, outputting $0$ on $\ket{+}$ and $1$ on $\ket{-}$.
    On input $x$, the probability of obtaining $\ket{+}$ is then
    \begin{align*}
        \left\|(\bra{+}\otimes I)
        \frac{\ket{0}\ket{\eta}+\ket{1}\ket{\psi_x}}{\sqrt{2}}\right\|_2^2
        &=\frac14\bigl\|\ket{\eta}+\ket{\psi_x}\bigr\|_2^2=\frac14\left(
        \braket{\eta}{\eta}+\braket{\eta}{\psi_x}
        +\braket{\psi_x}{\eta}+\braket{\psi_x}{\psi_x}\right)\\
        &=\frac12\bigl(1+\operatorname{Re}\braket{\eta}{\psi_x}\bigr).
    \end{align*}
    Similarly, the probability of obtaining $\ket{-}$ is
    $
        \frac12\bigl(1-\operatorname{Re}\braket{\eta}{\psi_x}\bigr).
    $
    The error outcome is $\ket{-}$ when $g(x)=0$ and $\ket{+}$ when $g(x)=1$.
    Thus, the probability of error on input $x$ is
    \[
        \frac12\bigl(1-(-1)^{g(x)}
        \operatorname{Re}\braket{\eta}{\psi_x}\bigr).
    \]

    Therefore, the error probability under $\mu$ is
    \begin{align*}
        &\frac12\sum_{x\in\Sigma}\mu(x)
        \bigl(1-(-1)^{g(x)}\operatorname{Re}\braket{\eta}{\psi_x}\bigr)
        =\frac12\left(1-\operatorname{Re}\left[\bra{\eta}F\ket{\eta}\right]\right)
        =\frac12\bigl(1-F\braket{\eta}{\eta}\bigr)
        =\frac{1-F}{2}\leq\frac{1-\beta}{2}. \qedhere
    \end{align*}
\end{proof}

\subsubsection{Trace-norm bounds}
In the following lemma, we bound the overlap
under $\mu$ that can remain between such inputs when allowing distributional error, in terms of the
trace norm. This bound will later allow us to define a relaxed measure directly based on the
trace-norm bound that remains closely related to distributional query complexity and will be used
later in the composition proof.

\begin{lemma}\label{lem:trace-norm-upper-bound-for-boolean-functions} Let $\Sigma\subset\{0,1\}^n$,
    $f:\Sigma\to\Omega$ be a functional problem, and $\mu:\Sigma\to[0,1]$ be a distribution. Also,
    let $\eps=\eps(n)\in[0,1/2)$ be an error parameter. Then, for every $\Psi\in\cG^\mu_\eps(f)$,
    \[
        \|\Psi\circ\Delta_f\circ\kbr{\mu}\|_{\tr}\leq 4\sqrt{\eps}.
    \]
    Moreover, if $f:\Sigma\to\{0,1\}$ is a decision problem, then
    \[
        \|\Psi\circ\Delta_f\circ\kbr{\mu}\|_{\tr}\leq 2\sqrt{\eps(1-\eps)}.
    \]
    \end{lemma}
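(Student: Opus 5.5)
Fix $\Psi\in\cG^\mu_\eps(f)$ and a witness $W\in\cW^\mu_f$ with $\cF(W\circ\kbr{\mu},\Psi\circ\kbr{\mu})\geq\sqrt{1-\eps}$; write $\rho:=\Psi\circ\kbr{\mu}$ and $\omega:=W\circ\kbr{\mu}$, both density matrices. Since $f$ is functional, $W\circ\Delta_f=0$, so $\omega$ is block diagonal with respect to the partition $\Sigma=\bigsqcup_s f^{-1}(s)$. Moreover $\rho\circ\Delta_f=\rho-\sum_s \rho_{f^{-1}(s)}$ is exactly the off-diagonal-block part of $\rho$.

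The plan is a triangle-inequality argument. Let $P:=\sum_s I_{f^{-1}(s)}$ denote the pinching onto the blocks, $\cP(X):=\sum_s I_{f^{-1}(s)}XI_{f^{-1}(s)}$, which is a channel. Then $\rho\circ\Delta_f=\rho-\cP(\rho)$ and $\omega-\cP(\omega)=0$. Hence
\[
\|\rho\circ\Delta_f\|_{\tr}=\|(\rho-\omega)-\cP(\rho-\omega)\|_{\tr}\leq 2\|\rho-\omega\|_{\tr},
\]
using \Cref{lem:channel-norm-contraction} for $\cP$. By Fuchs--van de Graaf, $\|\rho-\omega\|_{\tr}\leq 2\sqrt{1-\cF^2}\leq2\sqrt{\eps}$, giving $4\sqrt{\eps}$. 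One must check that $W$ here is normalized ($\Tr\omega=1$), which holds by the definition of $\cW^\mu_f$. The identity $\rho-\cP(\rho)=\rho\circ\Delta_f$ is immediate entrywise.

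For the decision case, a sharper bound should come from writing the off-diagonal block directly. With two blocks $A=f^{-1}(0)$ and $B=f^{-1}(1)$, $\rho\circ\Delta_f=\begin{pmatrix}0&C\\C^\dagger&0\end{pmatrix}$, whose trace norm is $2\|C\|_{\tr}$, where $C=\rho_{AB}$. I would use Uhlmann: purify $\rho$ by $\ket{\psi}=\sum_x\sqrt{\mu(x)}\ket{x}\ket{\psi_x}$ and decompose $\ket{\psi}=\ket{a}+\ket{b}$ by input block, with $p:=\|a\|^2$ and $1-p=\|b\|^2$. Then $C$ is a cross term whose trace norm is at most $\|a\|\|b\|$ (a partial trace of $\ketbra{a}{b}$), more precisely at most the fidelity-type overlap of the reduced work states. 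The witness purification $\ket{\phi}=\ket{\phi_A}+\ket{\phi_B}$ has work-register parts supported on orthogonal output subspaces $\Pi_0,\Pi_1$ after the optimal unitary. So the goal is to bound $\|\Tr_{\cW}\ketbra{a}{b}\|_{\tr}\leq\|\Tr_{\cX}\ketbra{b}{a}\|_{\tr}$-type quantities by $\sqrt{\eps(1-\eps)}$ via a Helstrom argument. Concretely, $\|C\|_{\tr}=\max_U|\langle b|(I\otimes U)|a\rangle|$ over unitaries on the work space, combined with the success condition that the measurement $\Pi_0,\Pi_1$ puts weight at most $\eps$ on wrong outputs. Splitting $\ket{a}=\Pi_0\ket{a}+\Pi_1\ket{a}$ and similarly for $\ket{b}$, and writing $e_a:=\|\Pi_1a\|^2$ and $e_b:=\|\Pi_0b\|^2$ with $e_a+e_b\leq\eps$, the best bound is $\sqrt{(p-e_a)e_b}+\sqrt{e_a(1-p-e_b)}$. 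Maximizing by Cauchy--Schwarz gives $\leq\sqrt{(e_a+e_b)(1-e_a-e_b)}\leq\sqrt{\eps(1-\eps)}$ for $\eps<1/2$. Doubling yields $2\sqrt{\eps(1-\eps)}$.

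The main obstacle is the decision case. The step of passing from fidelity $\sqrt{1-\eps}$ to an actual measurement with error $\leq\eps$ (via \Cref{thm:fidelity-charecterization}'s construction, i.e. Uhlmann unitary then $\Pi_f$ projection) must be done carefully. Also, the max over work unitaries $U$ must not interfere with the fixed measurement; I expect to handle this by noting that the cross term $C(x,y)=\sqrt{\mu(x)\mu(y)}\braket{\psi_x}{\psi_y}$ is invariant under applying the algorithm's post-processing unitary, and then inserting $I=\Pi_0+\Pi_1$.
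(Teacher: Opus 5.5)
Your general $4\sqrt{\eps}$ bound is the paper's argument. The pinching $\cP$ is exactly $X\mapsto X\circ A_f$. You use channel contractivity of $\cP$ where the paper uses $\|X\circ A_f\|_{\tr}\le\gamma_2(A_f)\|X\|_{\tr}=\|X\|_{\tr}$; the two are interchangeable.

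\textbf{The decision case has one step that fails as written.} Your vectors $\ket{a}$ and $\ket{b}$ are supported on the orthogonal input blocks $A=f^{-1}(0)$ and $B=f^{-1}(1)$. Hence $\bra{b}(I\otimes U)\ket{a}=0$ for every unitary $U$ on the work space, and likewise $\Tr_{\cX}\ketbra{b}{a}=0$. So the identity $\|C\|_{\tr}=\max_U|\bra{b}(I\otimes U)\ket{a}|$ is false. Your worry that $U$ might not commute with $\Pi_0,\Pi_1$ is a symptom of putting the optimization variable on the wrong register.

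The correct identity puts the variable on the input register. By trace-norm/operator-norm duality,
\[
\|C\|_{\tr}=\max_{V}\bigl|\bra{a}(V\otimes I_{\cW})\ket{b}\bigr|,
\]
where $V$ ranges over contractions from $\bC^{B}$ into $\bC^{A}$. Equivalently, this is Uhlmann's theorem for $\cF(\Tr_{\cX}\kbr{a},\Tr_{\cX}\kbr{b})$ with $\cX$ as the purifying system, which is the paper's identity $\|\Psi\circ\Delta_f\circ\kbr{\mu}\|_{\tr}=2\sqrt{p_0p_1}\,\cF(\sigma_0,\sigma_1)$.

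Now take $\Pi_0,\Pi_1$ to be the work-register projectors given by the reverse direction of the fidelity characterization (after padding and pulling back through the Uhlmann unitary). Since $V\otimes I_{\cW}$ commutes with $I\otimes\Pi_s$, you get
\[
\bigl|\bra{a}(V\otimes I)\ket{b}\bigr|\le\sum_s\|(I\otimes\Pi_s)a\|\,\|(I\otimes\Pi_s)b\|=\sqrt{(p-e_a)e_b}+\sqrt{e_a(1-p-e_b)}.
\]
Your Cauchy--Schwarz step then gives $\sqrt{e(1-e)}\le\sqrt{\eps(1-\eps)}$ with $e=e_a+e_b\le\eps<1/2$.

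With this correction your decision-case argument is valid, and it differs from the paper's route. The paper proves the fidelity identity via $\|B_0^\dagger B_1\|_{\tr}=\|\sqrt{B_0B_0^\dagger}\sqrt{B_1B_1^\dagger}\|_{\tr}$. It then uses Helstrom's theorem to get $\|p_0\sigma_0-p_1\sigma_1\|_{\tr}\ge 1-2\eps$, and the Fuchs--van de Graaf inequality with priors to turn this into the fidelity bound.

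You instead bound the same fidelity by the classical overlap of the two unnormalized outcome distributions of the algorithm's own measurement, which is data processing of fidelity under that measurement, and finish with Cauchy--Schwarz. This avoids Helstrom and the priors lemma entirely and gives the same constant.
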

    \begin{proof}
    By \Cref{def:valid-grams}, there exists a witness $W\in\cW_f^\mu$ such that
    $\cF(W\circ\kbr{\mu},\Psi\circ\kbr{\mu})\geq\sqrt{1-\eps}$. The Fuchs--van de Graaf inequalities
    give
    \[
        \|(W-\Psi)\circ\kbr{\mu}\|_{\tr}
        \leq 2\sqrt{1-\cF^2(W\circ\kbr{\mu},\Psi\circ\kbr{\mu})}
        \leq 2\sqrt{\eps}.
    \]
    Let $X:=(\Psi-W)\circ\kbr{\mu}$. Since $W\circ\Delta_f=0$, we have
    $\|\Psi\circ\Delta_f\circ\kbr{\mu}\|_{\tr}=\|X\circ\Delta_f\|_{\tr}$. Consequently,
    \[
        \|X\circ\Delta_f\|_{\tr}
        \leq\|X\|_{\tr}+\|X\circ A_f\|_{\tr}
        \leq \|X\|_{\tr}+\gamma_2(A_f)\|X\|_{\tr}
        \leq 2\|X\|_{\tr}
        \leq 4\sqrt{\eps}.
    \]
    Now suppose that $f$ is a decision problem. Write $\mu=p_0\mu_0+p_1\mu_1$, where $p_0+p_1=1$,
    and $\mu_0$ and $\mu_1$ are the conditional distributions of $\mu$ over the $0$-inputs
    $f^{-1}(0)$ and the $1$-inputs $f^{-1}(1)$, respectively. Without loss of generality, we assume
    that $p_0,p_1>0$, as otherwise $\Psi\circ\Delta_f\circ\kbr{\mu}=0$ and the bound is immediate.
    \par
    We define $\sigma_b^\Psi:=\cE^\Psi_{\cW}(\kbr{\mu_b})$ to be the work-register state after
    running the algorithm $\Psi$ on $\ket{\mu_b}$ and tracing out the input register, for
    $b\in\{0,1\}$. These states capture the information available to the algorithm for
    distinguishing inputs drawn from $\mu_0$ and $\mu_1$ through its final measurement. By the
    reverse direction of \Cref{thm:fidelity-charecterization}, the algorithm $\Psi$ computes $f$
    over $\mu$ with error at most $\eps$. By Helstrom's theorem (\Cref{thm:helstrom}), the minimum
    average error of any final measurement, with prior probabilities $p_0$ and $p_1$, is
    $\frac12(1-\|p_0\sigma_0^\Psi-p_1\sigma_1^\Psi\|_{\tr})$. Therefore,
    $\|p_0\sigma_0^\Psi-p_1\sigma_1^\Psi\|_{\tr}\geq1-2\eps$. We claim that
    \[
        \|\Psi\circ\Delta_f\circ\kbr{\mu}\|_{\tr}
        =2\sqrt{p_0p_1}\cF(\sigma_0^\Psi,\sigma_1^\Psi).
    \]
    Assuming the above, the generalized Fuchs--van de Graaf inequality
    (\Cref{lem:Fuchs-van-de-Graaf-priors}) yields
    \[
        \|\Psi\circ\Delta_f\circ\kbr{\mu}\|_{\tr}
        =2\sqrt{p_0p_1}\cF(\sigma_0^\Psi,\sigma_1^\Psi)
        \leq\sqrt{1-\|p_0\sigma_0^\Psi-p_1\sigma_1^\Psi\|_{\tr}^2}
        \leq 2\sqrt{\eps(1-\eps)},
    \]
   giving the desired bound.
    \par
    It only remains to prove the claim. Let $\{\ket{\psi_x}:x\in \Sigma\}$ be a Gram decomposition
of $\Psi$; i.e., $\Psi(x,y) = \braket{\psi_x}{\psi_y}$ for all $x,y\in \Sigma$. For $b\in \{0,1\}$,
let $\Sigma_{b} = f^{-1}(b)$, and define  $B_{b} \in \cL(\bC^{\Sigma_b}, \cW)$ so that $B_{b}\ket{x}
= \sqrt{\mu(x)}\ket{\psi_x}$ for all $x\in f^{-1}(b)$. Then,
    \begin{align*}
        \sigma_b^\Psi
        &= \cE^\Psi_{\cW}(\kbr{\mu_b})
        = \sum_{x\in\Sigma_b}\mu_b(x)\ketbra{\psi_x}
        = \frac{1}{p_b}B_bB_b^{\dagger}.
    \end{align*}
On the other hand, we have $(\Psi\circ\kbr{\mu})(x,y) = \braket{\psi_x}{\psi_y}\sqrt{\mu(x)\mu(y)} =
B^{\dagger}_{f(x)}B_{f(y)}^{\vphantom{\dagger}}(x,y)$. Therefore, $\Psi\circ \Delta_f \circ
\kbr{\mu}$ has the block form
\[
\Psi\circ \Delta_f \circ \kbr{\mu}
=
\begin{pmatrix}
0 & B_0^{\dagger}B_1\\
B_1^{\dagger}B_0 & 0
\end{pmatrix}.
\]
Hence,
\begin{align*}
\| \Psi\circ \Delta_f \circ \kbr{\mu} \|_{\tr}
&= 2\|B_0^{\dagger}B_1\|_{\tr}
 = 2\|\sqrt{B_0B_0^{\dagger}} \sqrt{B_1B_1^{\dagger}}\|_{\tr}
 = 2\sqrt{p_0p_1}\|\sqrt{\sigma_0^\Psi} \sqrt{\sigma_1^\Psi}\|_{\tr} \\
&= 2\sqrt{p_0p_1}\cF(\sigma_0^\Psi,\sigma_1^\Psi). \qedhere
\end{align*}
    \end{proof}

\subsection{Distributional state conversion}\label{sec:distributional-state-conversion} In this
part, we give a general algorithm for approximately converting an initial Gram matrix $\Phi$ into a
target Gram matrix $\Psi$, where the approximation is measured by fidelity under $\mu$. Later, we
use this algorithm to obtain upper bounds on distributional query complexity in terms of our
adversary measures.

\begin{theorem}[Distributional state conversion]\label{thm:catalyst-state-conversion} Let
    $\Sigma\subseteq\{0,1\}^{n}$, $\mu:\Sigma\to[0,1]$ be a distribution, and
    $\Phi,\Psi\in\Pos(\Sigma)$ be Gram matrices with $\Phi\circ I=\Psi\circ I=I$. Suppose that there
    exist $X_1,\ldots,X_n\in\Pos(\Sigma)$ and $r\geq1$ such that
    \begin{align*}
        \Phi-\Psi=\sum_{i=1}^{n}\Delta_i\circ X_i,
        \qquad
        \Psi\succeq\frac{r-1}{2}\sum_{i=1}^{n}X_i.
    \end{align*}
    Then, for every integer $T\geq1$, there exists a Gram matrix $\widetilde{\Psi}\in\Pos(\Sigma)$
    with $\widetilde{\Psi}\circ I=I$ such that $\mathsf{Q}(\Phi\to\widetilde{\Psi})\leq T$ and
    \begin{align*}
        \cF(\widetilde{\Psi}\circ\kbr{\mu},\Psi\circ\kbr{\mu})
        &\geq1-\eps_T,
    \end{align*}
    where
    $
        \eps_T:=
        {\sum_{i=1}^{n}\Tr\bigl(X_i\circ\kbr{\mu}\bigr)}/
             {\sum_{j=0}^{T-1}r^j}.
    $
    Moreover, for every $W\succeq0$ satisfying $\Tr(W\circ\kbr{\mu})=1$, the same $\widetilde{\Psi}$
    satisfies
    \begin{align*}
        \cF(\widetilde{\Psi}\circ\kbr{\mu},W\circ\kbr{\mu})
        &\geq(1-\eps_T)\alpha-\sqrt{\eps_T(1-\alpha)},
    \end{align*}
    where $\alpha:=\cF(\Psi\circ\kbr{\mu},W\circ\kbr{\mu})$.
\end{theorem}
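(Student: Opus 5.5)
The plan is to adapt the Belovs--Yolcu catalyst-reuse construction to the $\mu$-weighted setting. I would start with a one-query ``exchange step'' that moves the $\Phi$-states to the $\Psi$-states at the cost of a catalyst.

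\textbf{Step 1: the exchange step.} Factor each $X_i\succeq0$ as $X_i(x,y)=\braket{u_{x,i}}{u_{y,i}}$. Since $\Delta_i=(J-S_i)/2$, the hypothesis $\Phi-\Psi=\sum_i\Delta_i\circ X_i$ rearranges to
\[
\Phi+\tfrac12\textstyle\sum_i S_i\circ X_i=\Psi+\tfrac12\sum_i X_i .
\]
Define the (unnormalized) catalyst $\ket{c_x}:=\tfrac1{\sqrt2}\sum_i\ket{i}\ket{u_{x,i}}$. The two families
\[
\ket{\phi_x}\oplus\cO^x\ket{c_x}
\qquad\text{and}\qquad
\ket{\psi_x}\oplus\ket{c_x}
\]
then have equal Gram matrices. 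By \Cref{cor:gram-unitary-equivalence}, an input-independent unitary $V$ maps the first family to the second. So one query followed by $V$ implements $\ket{\phi_x}\oplus\ket{c_x}\mapsto\ket{\psi_x}\oplus\ket{c_x}$, after absorbing the oracle into the catalyst branch. Running this in reverse gives the opposite conversion.

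\textbf{Step 2: catalyst reuse with rescaling.} For $r=1$, I would follow Belovs--Yolcu. Prepare the $\Phi$-state on a counter register in superposition with catalyst copies, and apply $T$ controlled exchange steps that shift amplitude along a chain of length $T$. The ideal output is $\ket{\psi_x}$, and the only defect is the leftover catalyst mass, which is a $\sum_j r^j$ fraction after normalization.

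For $r>1$, I would use the extra hypothesis $\Psi\succeq\frac{r-1}{2}\sum_iX_i$. It lets me write $\ket{\psi_x}$, up to an input-independent isometry, as containing a component $\sqrt{r-1}\ket{c_x}$ orthogonal to the rest. I would then let the catalyst at stage $j$ carry weight $r^{j/2}$, so that each exchange step, besides converting $\Phi$ to $\Psi$, regenerates a catalyst rescaled by a factor $\sqrt r$. The defect between the actual final $\mu$-purification $\sum_x\sqrt{\mu(x)}\ket{x}\ket{\tilde\psi_x}$ and the ideal one is then controlled by the total catalyst weight $\sum_x\mu(x)\|c_x\|^2=\frac12\sum_i\Tr(X_i\circ\kbr{\mu})$ divided by the normalizer $\sum_{j=0}^{T-1}r^j$. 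This yields an inner product of at least $1-\eps_T$. Uhlmann's theorem (\Cref{thm:uhlmann}) converts that inner product into the fidelity bound $\cF(\widetilde\Psi\circ\kbr{\mu},\Psi\circ\kbr{\mu})\geq1-\eps_T$. I would also check that normalization gives $\widetilde\Psi\circ I=I$, and that the whole procedure uses exactly $T$ queries.

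\textbf{Step 3: the ``moreover'' bound.} Write the purification $\ket{\tilde\psi}=a\ket{\psi}+\ket{e}$ with $\ket{e}\perp\ket{\psi}$ and $a\geq1-\eps_T$, so that $\|e\|\leq\sqrt{1-a^2}$. Choose the purification $\ket{w}$ of $W\circ\kbr{\mu}$ aligned with $\ket{\psi}$, so that $\braket{w}{\psi}=\alpha$. Its component orthogonal to $\ket{\psi}$ then has norm $\sqrt{1-\alpha^2}$. Expanding $|\braket{w}{\tilde\psi}|\geq a\alpha-\|e\|\sqrt{1-\alpha^2}$ and bounding the error terms (using $1-a^2\leq2\eps_T$, and a slightly sharper computation to reach $\sqrt{\eps_T(1-\alpha)}$) gives the claim.

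\textbf{Main obstacle.} The hardest step is the rescaling in Step 2: arranging the telescoping chain so that the catalyst grows geometrically while the final error stays $\sum_i\Tr(X_i\circ\kbr{\mu})/\sum_j r^j$. I would first work out the $T=2$ case explicitly to fix the weights and the input-independent isometry supplied by $\Psi-\frac{r-1}2\sum_iX_i\succeq0$, and then generalize by induction on $T$.
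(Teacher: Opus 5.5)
Your Steps 1--2 are the paper's construction: a catalyst with Gram matrix $\frac12\sum_iX_i$, splitting off $A:=\Psi-\frac{r-1}{2}\sum_iX_i\succeq0$, and geometric rescaling. One correction is needed, and it also resolves your ``main obstacle.''

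\textbf{Correction to the exchange step.} The one-query exchange cannot output $\ket{\psi_x}$ together with a $\sqrt r$-rescaled catalyst. The hypothesis gives $\Phi+\frac12\sum_iS_i\circ X_i=\Psi+\frac12\sum_iX_i$, which is only one unit of catalyst on the $\Psi$ side, so the Gram matrices would not match. Instead the step maps $\ket{\phi_x}\oplus\cO^x\ket{c_x}\mapsto\ket{a_x}\oplus\sqrt r\,\ket{c_x}$, where $\ket{a_x}$ realizes $A$.

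\textbf{Fixing the weights.} Let $s:=\sum_{j=0}^{T-1}r^j$ and put clock amplitudes $r^{(j-1)/2}/\sqrt s$, so that at each stage the branch amplitude matches the catalyst amplitude. Then $T$ queries send $\ket{\phi_x}\oplus\ket{c_x}/\sqrt s$ to $\ket{a_x}\oplus r^{T/2}\ket{c_x}/\sqrt s$. The identity $r^T-(r-1)s=1$ makes this Gram-equivalent to $\ket{\psi_x}\oplus\ket{c_x}/\sqrt s$. A single input-independent unitary at the end therefore reassembles $\ket{\psi_x}$.

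\textbf{The defect.} Run the algorithm on $\ket{\phi_x}\oplus0$ and let $U^x$ be the resulting $T$-query unitary on input $x$. By linearity the defect is exactly $\ket{\tilde\psi}-\ket{\psi}=\sum_x\sqrt{\mu(x)}\ket{x}(I-U^x)(0\oplus\ket{c_x}/\sqrt s)$. This gives the first bound.

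\textbf{The gap: Step 3.} There you use only $\Re\braket{\psi}{\tilde\psi}\geq1-\eps_T$ and $\braket{w}{\psi}=\alpha$. From these data the best possible conclusion is $(1-\eps_T)\alpha-\sqrt{(2\eps_T-\eps_T^2)(1-\alpha^2)}$. That is the Bures-angle triangle inequality, which is tight for pure states. Since $(2-\eps_T)(1+\alpha)>1$ whenever $0<\eps_T<1$, it is strictly weaker than the claim. For instance, $\eps_T=0.01$ and $\alpha=0.99$ give $0.9602$ versus the required $0.9701$. So no ``slightly sharper computation'' on generic vectors can close it; you must use where the defect lives.

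\textbf{The missing idea.} Take $\dim\cV\geq|\Sigma|$ and choose the Uhlmann purification $\ket{\omega}$ of $(W\circ\kbr{\mu})^\top$ inside $\cX\otimes\cV$, the non-catalyst part, with $\braket{\psi}{\omega}=\alpha$. Write $\ket{\omega}=\alpha\ket{\psi}+\ket{e}$. Then $\ket{e}$ is orthogonal to $\ket{\psi}$ and has no component on the catalyst register, so the ``$I$'' half of the defect drops out:
\[
\braket{e}{\tilde\psi}=-\bra{e}\sum_x\sqrt{\mu(x)}\ket{x}\,U^x\bigl(0\oplus\ket{c_x}/\sqrt{s}\bigr).
\]
The vector on the right has norm $\sqrt{\eps_T/2}$, not the generic $\sqrt{2\eps_T}$. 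This gives $\cF\geq\alpha(1-\eps_T)-\sqrt{(1-\alpha^2)\eps_T/2}\geq\alpha(1-\eps_T)-\sqrt{(1-\alpha)\eps_T}$.
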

The hypothesis $\Phi - \Psi = \sum_{i=1}^{n}\Delta_i\circ X_i$ is standard in adversary formulations
of state conversion, where each $X_i$ encodes the contribution associated with querying index $i$ in
converting $\Phi$ to $\Psi$. For $r=1$, this naturally translates to paying cost $T$ for infidelity
$\sum_{i}\Tr[X_i\circ \kbr{\mu}]/ T$.
\par
For $r>1$, however, our extra condition $\Psi\succeq\frac{r-1}{2}\sum_iX_i$ states that the target
$\Psi$ already contains a PSD component proportional to $\sum_iX_i$. Roughly speaking, this allows
us to aim for the ``smaller'' target $\Psi - (r-1)/2 \sum_{i} X_i$, which remains PSD, and restore
the remaining component $(r-1)/2\sum_iX_i$ without additional queries. As a result, for the same
query cost $T$, we improve the infidelity bound by replacing the denominator $T$ with
$\sum_{j=0}^{T-1}r^j=(r^T-1)/(r-1)$.
\par

To obtain this improvement, we extend the catalyst-reuse argument of Belovs and Yolcu
\cite[Theorem~7.5]{BY23} by allowing the catalyst to be rescaled at each step.

\begin{restatable}[Reusing a catalyst]{lemma}{catalyst-reuse}\label{lem:catalyst-reuse} Let
    $\Sigma\subseteq\{0,1\}^{n}$, and for every $x\in\Sigma$, let $\ket{a_x},\ket{b_x}\in\cV$ be
    initial and target vectors. Let $\cK$ be a catalyst space containing the query-index register
    $\cI$, on which the oracle acts. Suppose that, for some $r>0$, there are catalyst vectors
    $\{\ket{\kappa_x}\}_{x\in\Sigma}$ and an input-independent unitary $U\in\cU(\cV\oplus\cK)$ such
    that, for every $x\in\Sigma$,
    \begin{align*}
        U\left(\ket{a_x}\oplus\cO^x\ket{\kappa_x}\right)
        =\ket{b_x}\oplus\sqrt{r}\,\ket{\kappa_x}.
    \end{align*}
    Then, for every integer $T\geq1$, there exists a $T$-query quantum algorithm $Q_T$ such
    that\footnote{Formally, $Q_T$ uses a clock register $\mathcal T\simeq\bC^T$ initialized and
    returned to $\ket{1}_{\mathcal T}$. We identify $\cV$ with $\ket{1}_{\mathcal T}\otimes\cV$ and
    suppress this fixed clock register from the input and output vectors.}
    \begin{align*}
        U_{Q_T}^{x}\left(
            \ket{a_x}\oplus\frac{\ket{\kappa_x}}{\sqrt{s}}
        \right)
        =\ket{b_x}\oplus\frac{r^{T/2}\ket{\kappa_x}}{\sqrt{s}},
    \end{align*}
    where $s:=\sum_{t=0}^{T-1}r^t$, and $U_{Q_T}^x$ denotes the unitary implemented by $Q_T$ using
    the oracle $\cO^x$.
    \footnote{Note that the two sides need not be unit vectors.}
\end{restatable}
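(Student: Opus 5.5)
We build $Q_T$ by adapting the catalyst-reuse construction of Belovs and Yolcu. A clock register $\mathcal T\simeq\bC^T$ holds the catalyst at different ``stages''. The idea is to spread the catalyst weight over $T$ stages, with stage $t$ carrying amplitude proportional to $r^{t/2}$. Each query then moves one unit of ``catalyst budget'' forward while $U$ converts $\ket{a_x}$ into $\ket{b_x}$ coherently across all stages.

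\textbf{Setting up the target decomposition.} Write $\ket{\kappa_x}/\sqrt{s}$ as a superposition over clock values,
\[
\frac{1}{\sqrt{s}}\ket{\kappa_x}\;\longleftrightarrow\;\sum_{t=0}^{T-1}\sqrt{\tfrac{r^t}{s}}\,\ket{t}_{\mathcal T}\ket{\kappa_x}.
\]
This is legitimate via an input-independent isometry from $\cK$ to $\mathcal T\otimes\cK$ (up to the clock identification in the footnote), since $\sum_t r^t/s=1$. Likewise the output $r^{T/2}\ket{\kappa_x}/\sqrt s$ corresponds to $\sum_{t=1}^{T}\sqrt{r^t/s}\,\ket{t}\ket{\kappa_x}$, which has norm squared $r\sum_{t=0}^{T-1}r^t/s = \sum_{t=1}^T r^t/s$. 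The input is thus a stage-weighted catalyst on levels $0,\dots,T-1$. One application of the oracle followed by $U$ on a single level maps the pair (vector in $\cV$, catalyst at level $t$ with weight $c$) to (output in $\cV$, catalyst at weight $\sqrt r\,c$). The relabeling $t\mapsto t+1$ then matches the weight pattern.

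\textbf{The iteration.} Step $j=1,\dots,T$ works as follows. Apply $\cO^x$ to the catalyst component currently stored at a designated level, apply $U$ on $\cV\oplus\cK$ for that level, then cyclically shift the clock. The key invariant is that before step $j$ the state is $\ket{a_x}\oplus(\text{catalyst weights } r^{t/2}/\sqrt s$ on some levels$)$, up to a linear combination. By linearity of $U$, feeding $\ket{a_x}\oplus\cO^x\ket{\kappa_x}$ scaled by $c$ yields $c(\ket{b_x}\oplus\sqrt r\ket{\kappa_x})$.

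To make the bookkeeping consistent, I would instead use the standard telescoping form. Decompose
\[
\ket{a_x}\oplus\tfrac{1}{\sqrt s}\ket{\kappa_x}
\]
so that the $\cV$-part rides along with the lowest stage. Each query advances every stage by one level with factor $\sqrt r$, and the top stage after $T$ queries contains $\ket{b_x}$. Concretely, since $U$ is defined only on vectors of the form $\ket{a_x}\oplus\cO^x\ket{\kappa_x}$, I would split $U$'s action as $\ket{a_x}\mapsto$ (its image) and $\cO^x\ket{\kappa_x}\mapsto$ (its image) via linearity. Summing over stages with geometric weights then telescopes: the contributions $\sqrt{r^t/s}$ from stage $t$ and $\sqrt{r^{t+1}/s}$ from stage $t+1$ cancel except at the ends, leaving $\ket{b_x}\oplus r^{T/2}\ket{\kappa_x}/\sqrt s$. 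Finally the output is verified by direct computation with the isometry identification.

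\textbf{Main obstacle.} The hard part is choosing the clock-controlled unitaries so that the $\cV$-component and all catalyst stages are processed consistently with only $T$ queries, each acting on all levels at once (controlled on the clock). Each query must be a single uncontrolled application of $\cO^x$ on the index register inside $\cK$, so the oracle must act on the catalyst at every level simultaneously. I would first try the Belovs--Yolcu scheme literally: apply $\cO^x$ to all of $\mathcal T\otimes\cK$, apply $U$ controlled on the clock level containing the $\cV$-vector, and increment the clock. The rescaling by $\sqrt r$ is then absorbed by having the initial weights geometric rather than uniform, so the isometry identifications at the start and end absorb the factors. Checking that norms match (both sides are non-unit but $U_{Q_T}^x$ is unitary) is the consistency check: $\|a_x\|^2+\|\kappa_x\|^2/s=\|b_x\|^2+r^T\|\kappa_x\|^2/s$ follows from the hypothesis $\|a_x\|^2+\|\kappa_x\|^2=\|b_x\|^2+r\|\kappa_x\|^2$ only after summing the per-stage identities, which is what the geometric weights guarantee.
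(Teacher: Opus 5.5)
Your proposal has a genuine gap, and it comes from putting the clock on the catalyst instead of on $\cV$.

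First, the identification $\frac{1}{\sqrt s}\ket{\kappa_x}\leftrightarrow\sum_{t=0}^{T-1}\sqrt{r^t/s}\,\ket{t}\ket{\kappa_x}$ is not an isometry. The right-hand side has squared norm $\|\kappa_x\|^2$, not $\|\kappa_x\|^2/s$. Likewise, your output vector has squared norm $r\|\kappa_x\|^2$ rather than $r^T\|\kappa_x\|^2/s$.

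More importantly, the hypothesis only tells you how $U$ acts on vectors of the form $c\,(\ket{a_x}\oplus\cO^x\ket{\kappa_x})$. There the $\cV$-part and the queried catalyst carry the \emph{same} scalar $c$. In your scheme the whole $\ket{a_x}$ rides along with a single catalyst stage of much smaller weight, so $U$ is never applied to a vector of this form. Splitting ``by linearity'' into $U(\ket{a_x}\oplus 0)$ and $U(0\oplus\cO^x\ket{\kappa_x})$ does not rescue this. The hypothesis constrains only their sum, and each summand can be an arbitrary input-dependent vector, so nothing telescopes.

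The missing idea is to split the \emph{$\cV$-state}, not the catalyst, and to reuse a single catalyst whose amplitude grows. You are right that geometric weights absorb the $\sqrt r$ rescaling, but they belong on the $\cV$ branches.

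\begin{itemize}
\item Attach the clock to $\cV$ and prepare, by an input-independent unitary on the clock, the vector $\frac{1}{\sqrt s}\sum_{j=1}^{T} r^{(j-1)/2}\ket{j}\ket{a_x}\oplus\frac{1}{\sqrt s}\ket{\kappa_x}$.
\item Inductively, before step $t$ the catalyst has amplitude $r^{(t-1)/2}/\sqrt s$, which equals the amplitude of branch $t$.
\item One query $I\oplus\cO^x$, followed by $U$ applied to (branch $t$, catalyst), is then exactly the hypothesis multiplied by the scalar $r^{(t-1)/2}/\sqrt s$. It turns branch $t$ into $r^{(t-1)/2}\ket{b_x}/\sqrt s$ and the catalyst into $r^{t/2}\ket{\kappa_x}/\sqrt s$, leaving the other branches untouched.
\item After $T$ steps every branch holds $\ket{b_x}$ with the original geometric weights. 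Undoing the clock preparation gives $\ket{b_x}\oplus r^{T/2}\ket{\kappa_x}/\sqrt s$.
\end{itemize}

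The weights $r^{(j-1)/2}/\sqrt s$ are chosen so that the growing catalyst matches the next branch at each step. That is the role of $s=\sum_{t=0}^{T-1} r^t$, not cancellation across catalyst stages.
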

To give a rough sketch of the proof, let us consider the case $r=1$ again. If the catalyst vector
$\ket{\kappa_x}$ is provided, one can unitarily map $\ket{\phi_x}\oplus\cO^x\ket{\kappa_x}$ to
$\ket{\psi_x}\oplus\ket{\kappa_x}$ in a single query. However, $\ket{\kappa_x}$ depends on $x$ and
therefore cannot be prepared without knowing the input. The key idea is to split $\ket{\phi_x}$ into
$T$ pieces $\ket{\phi_x}\ket{t}/\sqrt{T}$, while using a single shared, scaled catalyst
$\ket{\kappa_x}/\sqrt{T}$ to transform the pieces one at a time. Omitting the catalyst, we then
obtain a $T$-query algorithm with squared error of order $\|\kappa_x\|^2/T$. The proof of
\Cref{lem:catalyst-reuse} follows a similar construction with geometric weights and is postponed to
\Cref{app:catalyst-reuse}.
\par
We now provide the proof for the general state conversion.
\begin{proof}[Proof of \Cref{thm:catalyst-state-conversion}]
    First, without loss of generality, we assume that $\Sigma=\supp(\mu)$. Indeed, let
    $S:=\supp(\mu)$ and replace all matrices by their principal submatrices indexed by $S$. Any
    algorithm constructed on $S$ can be run on the inputs $\Sigma$, starting from the original
    initial states, with the same number of queries. Its output Gram matrix agrees with the one
    constructed on $S$, and all fidelity expressions and $\eps_T$ are unchanged since $\mu$ is
    supported on $S$. Thus, the conclusion on $S$ gives the required conclusion on $\Sigma$.
    \paragraph{Proof overview.}
    We prove the moreover part, which directly implies the first bound by taking $W=\Psi$. Choose
    initial and target states $\{\ket{\phi_x}:x\in\Sigma\}$ and $\{\ket{\psi_x}:x\in\Sigma\}$ in
    $\cV$ with Gram matrices $\Phi$ and $\Psi$, respectively. We choose $\dim\cV\geq|\Sigma|$ so
    that $\cV$ can accommodate a purification of any density matrix on $\cX$. Throughout the proof,
    whenever $\cO^x$ acts on a space containing the query-index register $\cI$, we use the same
    notation for its extension by the identity on all registers other than $\cI$. For clarity, we
    organize the proof in the following steps.
    \begin{itemize}
        \item First, we use the given decomposition of $\Phi-\Psi$ to construct a family of catalyst
        vectors $\ket{\kappa_x}$.
        \item Then, we apply the catalyst-reuse lemma to these vectors to obtain a Gram matrix
        $\widetilde{\Psi}$.
        \item Finally, for any $W\succeq0$ with $\Tr(W\circ\kbr{\mu})=1$, we bound the fidelity
        between $\widetilde{\Psi}\circ\kbr{\mu}$ and $W\circ\kbr{\mu}$ by constructing purifications
        and applying Uhlmann's theorem.
    \end{itemize}
    \paragraph{Synthesizing the catalyst.}
    For $i\in[n]$, choose vectors $\{\ket{\kappa_x^i}:x\in\Sigma\}\subseteq\cZ$ satisfying
    $\braket{\kappa_x^i}{\kappa_y^i}=X_i(x,y)/2$. Let $\cK:=\cI\otimes\cZ$, and for $x\in\Sigma$,
    define
    \begin{align*}
        \ket{\kappa_x}:=\sum_{i=1}^{n}\ket{i}_{\cI}\ket{\kappa_x^i}.
    \end{align*}
    We immediately obtain
    \begin{align*}
        \braket{\kappa_x}{\kappa_y}=\frac12\sum_{i=1}^{n}X_i(x,y),
        \qquad
        \|\kappa_x\|^2=\frac12\sum_{i=1}^{n}X_i(x,x).
    \end{align*}
    By the second hypothesis, there are vectors $\{\ket{a_x}:x\in\Sigma\}$ with Gram matrix $A :=
    \Psi-(r-1)\sum_iX_i/2\succeq 0$.
    \par
    The next claim allows us to construct the unitary needed to apply the catalyst-reuse lemma.
    \begin{claim}
        There exists an input-independent unitary $U$ such that, for every $x\in\Sigma$,
        \begin{equation}\label{eq:catalyst-unitary}
            U\left(\ket{\phi_x}\oplus\cO^x\ket{\kappa_x}\right)
            =\ket{a_x}\oplus\sqrt r\,\ket{\kappa_x}.
        \end{equation}
    \end{claim}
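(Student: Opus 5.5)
The claim asks for an input-independent unitary $U$ with $U(\ket{\phi_x}\oplus\cO^x\ket{\kappa_x})=\ket{a_x}\oplus\sqrt r\,\ket{\kappa_x}$ for all $x$. The natural route is through \Cref{cor:gram-unitary-equivalence}: it suffices to show that the two vector families
\[
\ket{L_x}:=\ket{\phi_x}\oplus\cO^x\ket{\kappa_x},\qquad \ket{R_x}:=\ket{a_x}\oplus\sqrt r\,\ket{\kappa_x}
\]
have the same Gram matrix. Both families live in the same space $\cV\oplus\cK$, so the dimension hypothesis is automatic. The corollary then yields a unitary $U\in\cU(\cV\oplus\cK)$ with $U\ket{L_x}=\ket{R_x}$, and this $U$ is independent of $x$ because it depends only on the two families.

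The main step is computing both Gram matrices. On the right,
\[
\braket{R_x}{R_y}=A(x,y)+r\braket{\kappa_x}{\kappa_y}=\Psi(x,y)-\tfrac{r-1}{2}\textstyle\sum_iX_i(x,y)+\tfrac r2\sum_iX_i(x,y)=\Psi(x,y)+\tfrac12\textstyle\sum_iX_i(x,y).
\]
On the left, the oracle acts on the $\cI$ component of $\ket{\kappa_x}=\sum_i\ket{i}\ket{\kappa^i_x}$ as $\cO^x\ket{\kappa_x}=\sum_i(-1)^{x_i}\ket{i}\ket{\kappa^i_x}$. Since the index $0$ never appears in the sum, the computation gives
\[
\braket{L_x}{L_y}=\Phi(x,y)+\tfrac12\textstyle\sum_i(-1)^{x_i+y_i}X_i(x,y)=\Phi(x,y)+\tfrac12\sum_iS_i(x,y)X_i(x,y).
\]
The two expressions are reconciled by $S_i=J-2\Delta_i$:
\[
\Phi+\tfrac12\textstyle\sum_iS_i\circ X_i=\Phi+\tfrac12\sum_iX_i-\sum_i\Delta_i\circ X_i=\Psi+\tfrac12\sum_iX_i,
\]
where the last equality uses the hypothesis $\Phi-\Psi=\sum_i\Delta_i\circ X_i$. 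So the Gram matrices agree.

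The only delicate point is bookkeeping about spaces. The vectors $\ket{a_x}$ must be chosen in $\cV$; this is possible because $\cV$ was taken large, with $\dim\cV\ge|\Sigma|\ge\operatorname{rank}A$. The oracle is also extended by the identity on $\cZ$, so the $\cO^x$ appearing here is a genuine unitary on $\cK$. With these choices the corollary applies directly, and there is no real obstacle beyond the sign identity above.
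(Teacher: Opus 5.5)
Your proof is correct and follows the paper's argument. Both reduce the claim, via \Cref{cor:gram-unitary-equivalence}, to showing that $\ket{\phi_x}\oplus\cO^x\ket{\kappa_x}$ and $\ket{a_x}\oplus\sqrt r\,\ket{\kappa_x}$ have equal Gram matrices. The only cosmetic difference is that the paper sandwiches the single operator $rI_{\cI}-(\cO^x)^\dagger\cO^y=(r-1)I_{\cI}+2\sum_i\Delta_i(x,y)\ketbra{i}{i}$ between $\bra{\kappa_x}$ and $\ket{\kappa_y}$, whereas you compute each Gram matrix separately and reconcile them through $S_i=J-2\Delta_i$.
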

    \begin{proof}
    It suffices to show that the two families of vectors
    \begin{align*}
        \left\{\ket{\phi_x}\oplus\cO^x\ket{\kappa_x}:x\in\Sigma\right\}
        \quad\text{and}\quad
        \left\{\ket{a_x}\oplus\sqrt r\,\ket{\kappa_x}:x\in\Sigma\right\}
    \end{align*}
    have the same Gram matrix. Indeed, we can write
    \begin{align*}
        rI_{\cI}-(\cO^x)^\dagger\cO^y
        &=r\sum_{i=0}^{n}\ketbra{i}_{\cI}
        -\left(\sum_{i=0}^{n}(-1)^{x_i}\ketbra{i}_{\cI}\right)
        \left(\sum_{i=0}^{n}(-1)^{y_i}\ketbra{i}_{\cI}\right)\\
        &=\sum_{i=0}^{n}\bigl(r-(-1)^{x_i\oplus y_i}\bigr)\ketbra{i}_{\cI}\\
        &=(r-1)I_{\cI}+2\sum_{i=1}^{n}\Delta_i(x,y)\ketbra{i}_{\cI}.
    \end{align*}
    Consequently, for all $x,y\in\Sigma$,
    \begin{align*}
        \bra{\kappa_x}\left(rI-(\cO^x)^\dagger\cO^y\right)\ket{\kappa_y}
        &=\frac{r-1}{2}\sum_{i=1}^{n}X_i(x,y)
          +\sum_{i=1}^{n}\Delta_i(x,y)X_i(x,y)\\
        &=(\Psi-A)(x,y)+(\Phi-\Psi)(x,y)\\
        &=(\Phi-A)(x,y)\\
        &=\braket{\phi_x}{\phi_y}-\braket{a_x}{a_y}.
    \end{align*}
    As a result, the two families have the same Gram matrix. By \Cref{cor:gram-unitary-equivalence},
    after padding the spaces if necessary, there exists an input-independent unitary $U$ satisfying
    \eqref{eq:catalyst-unitary}.
    \end{proof}
    \paragraph{Constructing the conversion algorithm.}
    Write $s:=\sum_{t=0}^{T-1}r^t$. We next use this unitary to construct an algorithm that performs
    the desired conversion when supplied with a catalyst.
    \begin{claim}
        There exists a $T$-query quantum algorithm $Q_T$ such that, for every $x\in\Sigma$,
        \begin{equation}\label{eq:catalyst-conversion}
            U_{Q_T}^x\left(\ket{\phi_x}\oplus\frac{\ket{\kappa_x}}{\sqrt{s}}\right)
            =\ket{\psi_x}\oplus\frac{\ket{\kappa_x}}{\sqrt{s}}.
        \end{equation}
    \end{claim}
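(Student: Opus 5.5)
The plan is to obtain the claim from \Cref{lem:catalyst-reuse}, followed by a query-free input-independent correction. First apply \Cref{lem:catalyst-reuse} with initial vectors $\ket{a_x}\leftarrow\ket{\phi_x}$, target vectors $\ket{b_x}\leftarrow\ket{a_x}$ (the vectors with Gram matrix $A=\Psi-\frac{r-1}{2}\sum_iX_i$), the catalyst vectors $\ket{\kappa_x}$ constructed above, and the unitary $U$ of \eqref{eq:catalyst-unitary}. The hypothesis of the lemma is exactly \eqref{eq:catalyst-unitary}. This yields a $T$-query algorithm $Q'_T$ with
\[
U_{Q'_T}^x\left(\ket{\phi_x}\oplus\frac{\ket{\kappa_x}}{\sqrt s}\right)=\ket{a_x}\oplus\frac{r^{T/2}\ket{\kappa_x}}{\sqrt s}.
\]

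It remains to map $\ket{a_x}\oplus r^{T/2}\ket{\kappa_x}/\sqrt s$ to $\ket{\psi_x}\oplus\ket{\kappa_x}/\sqrt s$ by an input-independent unitary, which costs no queries. By \Cref{cor:gram-unitary-equivalence}, after padding so that the ambient spaces have equal dimension, it suffices to check that the two families have the same Gram matrix. Write $K:=\frac12\sum_iX_i$, which is the Gram matrix of the $\ket{\kappa_x}$. The first family has Gram matrix $A+\frac{r^T}{s}K=\Psi-(r-1)K+\frac{r^T}{s}K$, and the second has Gram matrix $\Psi+\frac1sK$. The two agree if and only if $\frac{r^T-1}{s}=r-1$.

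This identity holds: for $r>1$ we have $s=\sum_{t=0}^{T-1}r^t=(r^T-1)/(r-1)$. For $r=1$ we have $s=T$, both sides vanish, and indeed $A=\Psi$ in that case. Composing $Q'_T$ with this final unitary gives the algorithm $Q_T$ satisfying \eqref{eq:catalyst-conversion}, still with $T$ queries.

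The only delicate point is this scaling bookkeeping. The catalyst leaves the lemma amplified by the factor $r^{T/2}$, and the excess $\frac{r^T-1}{s}K$ must be exactly the PSD slack $(r-1)K$ that was removed from $\Psi$ to form $A$. The geometric-sum identity is precisely what makes the two Gram matrices coincide.
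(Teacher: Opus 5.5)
Your proposal is correct and follows essentially the same route as the paper: apply \Cref{lem:catalyst-reuse} with $\ket{\phi_x}$ as initial vectors and $\ket{a_x}$ as targets, then verify via $r^T-(r-1)\sum_{j=0}^{T-1}r^j=1$ that $\{\ket{a_x}\oplus r^{T/2}\ket{\kappa_x}/\sqrt{s}\}$ and $\{\ket{\psi_x}\oplus\ket{\kappa_x}/\sqrt{s}\}$ have the same Gram matrix, and finish with a query-free input-independent unitary from \Cref{cor:gram-unitary-equivalence}. Your separate treatment of $r=1$ is harmless, since that telescoping identity already covers it.
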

    \begin{proof}
    Applying \Cref{lem:catalyst-reuse}, we obtain a $T$-query algorithm $R_T$ satisfying
    \begin{align*}
        U_{R_T}^x\left(\ket{\phi_x}\oplus\frac{\ket{\kappa_x}}{\sqrt{s}}\right)
        =\ket{a_x}\oplus\frac{r^{T/2}\ket{\kappa_x}}{\sqrt{s}}
        \qquad(x\in\Sigma).
    \end{align*}
    Moreover, for $x,y\in\Sigma$, we have
    \begin{align*}
        \braket{a_x}{a_y}+\frac{r^T}{s}\braket{\kappa_x}{\kappa_y}
        &=\braket{\psi_x}{\psi_y}-(r-1)\braket{\kappa_x}{\kappa_y}
          +\frac{r^T}{s}\braket{\kappa_x}{\kappa_y}\\
        &=\braket{\psi_x}{\psi_y}
          +\frac{r^T-(r-1)\sum_{j=0}^{T-1}r^j}{\sum_{j=0}^{T-1}r^j}
          \braket{\kappa_x}{\kappa_y}\\
        &=\braket{\psi_x}{\psi_y}+\frac{\braket{\kappa_x}{\kappa_y}}{s},
    \end{align*}
    where the first line is by definition of $A(x,y) = \braket{a_x}{a_y}$. As a consequence, the two
    families of vectors
    \begin{align*}
        \left\{\ket{a_x}\oplus\frac{r^{T/2}\ket{\kappa_x}}{\sqrt{s}}:x\in\Sigma\right\}
        \quad\text{and}\quad
        \left\{\ket{\psi_x}\oplus\frac{\ket{\kappa_x}}{\sqrt{s}}:x\in\Sigma\right\}
    \end{align*}
    also have the same Gram matrix. Thus, by \Cref{cor:gram-unitary-equivalence}, composing with
    another input-independent unitary gives a $T$-query algorithm $Q_T$ satisfying
    \eqref{eq:catalyst-conversion}.
    \end{proof}
    \paragraph{Bounding the fidelity.}
    Next, we forget about the catalyst and run the algorithm on $\ket*{\phi_x}\oplus0$. More
    specifically, on input $x\in\Sigma$, our algorithm computes the final state
    $\ket*{\widetilde{\psi}_x}:=U_{Q_T}^x(\ket*{\phi_x}\oplus0)$. These are unit vectors. Thus,
    their Gram matrix $\widetilde{\Psi}$ satisfies $\widetilde{\Psi}\circ I=I$ and
    $\mathsf{Q}(\Phi\to\widetilde{\Psi})\leq T$. Consider the states
    \begin{align*}
        \ket*{\widetilde{\psi}}
        :=\sum_{x\in\Sigma}\sqrt{\mu(x)}\ket*{x}\ket*{\widetilde{\psi}_x},
        \qquad
        \ket*{\psi}
        :=\sum_{x\in\Sigma}\sqrt{\mu(x)}\ket*{x}(\ket*{\psi_x}\oplus 0),
    \end{align*}
    which are purifications of $(\widetilde{\Psi}\circ\kbr{\mu})^{\top}$ and
    $(\Psi\circ\kbr{\mu})^{\top}$, respectively. By \eqref{eq:catalyst-conversion} and linearity,
    \begin{align*}
        \ket*{\widetilde{\psi}}-\ket*{\psi}
        &=\sum_{x\in\Sigma}\sqrt{\mu(x)}\ket*{x}
          \left[U_{Q_T}^x(\ket*{\phi_x}\oplus0)-(\ket*{\psi_x}\oplus0)\right]\\
        &=\sum_{x\in\Sigma}\sqrt{\mu(x)}\ket*{x}
          \left[\begin{aligned}
              &U_{Q_T}^x(\ket*{\phi_x}\oplus0)
              -U_{Q_T}^x\bigl(\ket*{\phi_x}\oplus\ket*{\kappa_x}/\sqrt{s}\bigr)\\
              &\quad+U_{Q_T}^x\bigl(\ket*{\phi_x}\oplus\ket*{\kappa_x}/\sqrt{s}\bigr)
              -(\ket*{\psi_x}\oplus0)
          \end{aligned}\right]\\
        &=\sum_{x\in\Sigma}\sqrt{\mu(x)}\ket*{x}
          \left[\begin{aligned}
              &U_{Q_T}^x(\ket*{\phi_x}\oplus0)
              -U_{Q_T}^x\bigl(\ket*{\phi_x}\oplus\ket*{\kappa_x}/\sqrt{s}\bigr)\\
              &\quad+\bigl(\ket*{\psi_x}\oplus\ket*{\kappa_x}/\sqrt{s}\bigr)
              -(\ket*{\psi_x}\oplus0)
          \end{aligned}\right]\\
        &=\sum_{x\in\Sigma}\sqrt{\mu(x)}\ket*{x}
          (I-U_{Q_T}^x)\bigl(0\oplus\ket*{\kappa_x}/\sqrt{s}\bigr).
    \end{align*}
    Fix any $W\succeq0$ with $\Tr(W\circ\kbr{\mu})=1$. By Uhlmann's theorem, there exists a
    purification $\ket*{\omega}\in\cX\otimes\cV$ of $(W\circ\kbr{\mu})^\top$ such that
    \begin{align*}
        \alpha
        =\cF(\Psi\circ\kbr{\mu},W\circ\kbr{\mu})
        =\braket{\psi}{\omega}\in[0,1].
    \end{align*}
    Write $\ket*{\omega}=\alpha\ket*{\psi}+\ket*{e}$, where $\braket{\psi}{e}=0$ and
    $\|e\|^2=1-\alpha^2$. Then,
    \begin{align*}
        \cF(\widetilde{\Psi}\circ\kbr{\mu},W\circ\kbr{\mu})
        &\geq\Re\braket*{\omega}{\widetilde{\psi}} =\alpha\Re\braket*{\psi}{\widetilde{\psi}}
          +\Re\braket*{e}{\widetilde{\psi}}.
    \end{align*}
    We bound the two terms separately.
    \begin{itemize}
        \item For the first term, since $\ket*{\psi}$ and $\ket*{\widetilde{\psi}}$ are unit
        vectors, we have
        \begin{align*}
            \Re\braket*{\psi}{\widetilde{\psi}}
            &=1-\tfrac12\|\widetilde{\psi}-\psi\|^2\\
            &=1-\tfrac12\sum_{x\in\Sigma}\mu(x)
              \left\|(I-U_{Q_T}^x)\bigl(0\oplus\ket*{\kappa_x}/\sqrt{s}\bigr)\right\|^2\\
            &\geq1-\tfrac12\sum_{x\in\Sigma}\mu(x)
              \left(\left\|0\oplus\ket*{\kappa_x}/\sqrt{s}\right\|
              +\left\|U_{Q_T}^x\bigl(0\oplus\ket*{\kappa_x}/\sqrt{s}\bigr)\right\|\right)^2\\
            &=1-2\sum_{x\in\Sigma}\mu(x)
              \left\|\kappa_x/\sqrt{s}\right\|^2\\
            &=1-\sum_{i=1}^{n}\Tr\bigl(X_i\circ\kbr{\mu}\bigr)/s\\
            &=1-\eps_T.
        \end{align*}
        Here we used the expression for $\ket*{\widetilde{\psi}}-\ket*{\psi}$ above, followed by the
        triangle inequality and unitarity. The last two equalities use
        $\|\kappa_x\|^2=\tfrac12\sum_iX_i(x,x)$ and the definition of $\eps_T$.

        \item For the second term, since both $\ket*{\omega}$ and $\ket*{\psi}$ lie in
        $\cX\otimes\cV$, so does $\ket*{e}$, and therefore it has no component in the catalyst
        space. That is, $\bra{e}\ket*{x}(0\oplus\ket*{\kappa_x})=0$. Using this and
        $\braket{e}{\psi}=0$, we have
        \begin{align*}
            \Re\braket*{e}{\widetilde{\psi}}
            &=\Re\left[\bra{e}\bigl(\ket*{\widetilde{\psi}}-\ket*{\psi}\bigr)\right]\\
            &=\Re\left[\bra{e}\sum_{x\in\Sigma}\sqrt{\mu(x)}\ket*{x}
              (I-U_{Q_T}^x)\bigl(0\oplus\ket*{\kappa_x}/\sqrt{s}\bigr)\right]\\
            &=0-\Re\left[\bra{e}\sum_{x\in\Sigma}\sqrt{\mu(x)}\ket*{x}
              U_{Q_T}^x\bigl(0\oplus\ket*{\kappa_x}/\sqrt{s}\bigr)\right]\\
            &\geq-\|e\|\left\|\sum_{x\in\Sigma}\sqrt{\mu(x)}\ket*{x}
              U_{Q_T}^x\bigl(0\oplus\ket*{\kappa_x}/\sqrt{s}\bigr)\right\|\\
            &=-\|e\|\left(\sum_{x\in\Sigma}\mu(x)
              \left\|U_{Q_T}^x\bigl(0\oplus\ket*{\kappa_x}/\sqrt{s}\bigr)\right\|^2\right)^{1/2}\\
            &=-\|e\|\left(\sum_{x\in\Sigma}\mu(x)
              \left\|\kappa_x/\sqrt{s}\right\|^2\right)^{1/2}\\
            &=-\sqrt{(1-\alpha^2)\eps_T/2}.
        \end{align*}
        The inequality is Cauchy--Schwarz, and the last two equalities use unitarity,
        $\|e\|^2=1-\alpha^2$, and the definition of $\eps_T$.
    \end{itemize}
    Combining the two bounds, we get
    \begin{align*}
        \cF(\widetilde{\Psi}\circ\kbr{\mu},W\circ\kbr{\mu})
        &\geq \alpha(1-\eps_T)-\sqrt{(1-\alpha^2)\eps_T/2}\\
        &\geq \alpha(1-\eps_T)-\sqrt{(1-\alpha)\eps_T},
    \end{align*}
    where the last inequality uses $1+\alpha\leq2$. This completes the proof.
\end{proof}

\section{\texorpdfstring{A distributional analogue of the $\gamma_2$ norm}{A distributional analogue of the gamma 2 norm}}\label{sec:dist-gamma2}
As mentioned in \Cref{sec:prelem:gamma2}, the filtered $\gamma_2$ norm measures a worst-case
factorization cost of a matrix, and it can be used to tightly bound the adversary method as well as
constant-error quantum query complexity. In \Cref{sec:dist-gamma2-properties}, we develop a
distributional variant of the $\gamma_2$ norm along with its basic properties. Then, in
\Cref{sec:dist-gamma2-query}, we specialize it to bound distributional quantum query complexity.

\subsection{Definition and general properties}\label{sec:dist-gamma2-properties} We begin by
defining a distributional analogue of the filtered $\gamma_2$ norm. In essence, for a matrix $B$, we
replace the spectral norm numerator $\|B\circ\Gamma\|$ in the definition of $\gamma_2$ with the
$\mu$-weighted inner product $|\langle\Gamma,B\circ\kbr{\mu}\rangle|$.
\begin{definition}
Let $n\in \bN$ and let $\Sigma$ be a finite set. For a distribution $\mu:\Sigma \to [0,1]$, the
\emph{filtered distributional $\gamma_2$ norm} of a matrix $B\in \bC^{\Sigma\times \Sigma}$ with
non-zero filters $ \{Z_i\}_{i\in [n]} \in \bC^{\Sigma\times \Sigma}$ is defined as
    \begin{align*}
        \bar{\gamma}^{\mu}_2(B|\{Z_i\}) :=
        \max_{\substack{\Gamma\in\cL(\Sigma):\\
        \max_{i\in[n]}\|\Gamma\circ Z_i\|\leq1}}
        |\langle\Gamma,B\circ\kbr{\mu}\rangle|.
    \end{align*}
    The value is understood to be $+\infty$ if the optimization is unbounded. Otherwise,  the
    maximum is attained.\footnote{To see this, assume $\bar{\gamma}_2^{\mu}(B|\{Z_i\})<\infty$.
    Then, having $Z_i(x,y)=0$ for all $i$ forces $B(x,y)\sqrt{\mu(x)\mu(y)}=0$, since otherwise
    $\Gamma_t:=t\ketbra{x}{y}$ would make the optimization unbounded as $t\to\infty$. We may
    therefore set $\Gamma(x,y)=0$ at these entries without changing the objective or any filtered
    matrix and write
    $\bar{\gamma}_2^{\mu}(B|\{Z_i\})=\max_{\Gamma\in\cL}|\langle\Gamma,B\circ\kbr{\mu}\rangle|$,
    where $\cL=\{\Gamma\in\cL(\Sigma):\|\Gamma\circ Z_i\|\leq1~\forall i\in[n],\
    \Gamma(x,y)=0~\text{whenever}~Z_i(x,y)=0~\text{for all}~i\in[n]\}$. Then, $\cL$ is bounded and
    closed, hence compact, and the objective is continuous. So the maximum is attained.
    }
\end{definition}

In the following, we focus on Hermitian matrices $B$ and real symmetric filters $\{Z_i\}$. We
provide equivalent primal and dual formulations in \Cref{prop:equiv-avg-gamma2} and collect a few
useful properties in \Cref{prop:avggamma2-properties}.

\subsubsection{Equivalent primal and dual formulations}
Any feasible solution to the maximization form in the definition gives a natural certificate for
lower-bounding $\bar{\gamma}_2^\mu$. It is also useful to obtain a minimization form that provides
certificates for upper bounds and tightness results. In the following, we use SDP strong duality to
derive equivalent primal and dual formulations of this measure.

\begin{proposition}\label{prop:equiv-avg-gamma2} Let $n\in \bN$ and $\Sigma$ be a finite set. Let $B
\in \Herm(\Sigma)$ and let $Z_1,\dots,Z_n\in\bR^{\Sigma\times\Sigma}$ be symmetric non-zero
matrices. Then, for any distribution $\mu:\Sigma \to [0,1]$, the following holds.
\begin{align*}
&\bar{\gamma}_2^{\mu}(B\mid\{Z_i\})\\[0.3em]
= &\progat{6.5cm}
{\max_{\Gamma=\Gamma^\dagger}}
{\langle \Gamma,B\circ\kbr{\mu}\rangle}
{\makebox[5.2cm][l]{$\displaystyle -I\preceq \Gamma\circ Z_i\preceq I$}
& \forall i\in[n]}
\\[0.6em]
= &\progat{6.5cm}
{\min_{H_i=H_i{}^\dagger}}
{\sum_{i\in[n]}\|H_i\circ\kbr{\mu}\|_{\tr}}
{\makebox[5.2cm][l]{$\displaystyle B(x,y)=\sum_{i\in[n]}Z_i(x,y)H_i(x,y)$}
& \forall x,y\in\supp(\mu)}
\\[0.6em]
= &\progat{6.5cm}
{\min_{d}\min}
{\frac12\sum_{i\in[n]}\E_{x\sim\mu}
\big[\|u_x^i\|^2+\|v_x^i\|^2\big]}
{
\makebox[5.2cm][l]{$\displaystyle B(x,y)=\sum_{i\in[n]} Z_i(x,y)\braket{u_x^i}{v_y^i}$}
& \forall x,y\in\supp(\mu) \\
\makebox[5.2cm][l]{$\displaystyle \ket{u_x^i},\ket{v_x^i}\in\bC^{d}$}
& \forall i\in[n],\ x\in\supp(\mu)
}
\end{align*}
Moreover, if $B$ and $\{Z_i\}_{i=1}^{n}$ are real-valued, all matrix and vector variables in the
above characterizations may be taken real-valued.
\end{proposition}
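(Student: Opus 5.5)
We prove the chain of equalities in order: first reduce the definition to a Hermitian maximization, then dualize that SDP to obtain the trace-norm minimization, and finally rewrite the trace norms through their factorization characterization to obtain the vector form. Throughout we may restrict to $\supp(\mu)$. Entries of $\Gamma$ outside $\supp(\mu)\times\supp(\mu)$ do not affect the objective $\langle\Gamma,B\circ\kbr{\mu}\rangle$. Setting them to zero cannot increase $\|\Gamma\circ Z_i\|$, since this replaces $\Gamma\circ Z_i$ by a principal submatrix padded with zeros.

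\emph{Step 1 (Hermitian reduction).} Given any feasible $\Gamma\in\cL(\Sigma)$, multiply by a phase so that $\langle\Gamma,B\circ\kbr{\mu}\rangle$ is real and nonnegative; this preserves feasibility. Then pass to $\Gamma':=(\Gamma+\Gamma^\dagger)/2$. Because $B\circ\kbr{\mu}$ is Hermitian, $\langle\Gamma^\dagger,B\circ\kbr{\mu}\rangle=\overline{\langle\Gamma,B\circ\kbr{\mu}\rangle}$, so the objective is unchanged. Because the $Z_i$ are real symmetric, $\Gamma'\circ Z_i=(\Gamma\circ Z_i+(\Gamma\circ Z_i)^\dagger)/2$, which has norm at most $1$ by the triangle inequality. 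For Hermitian matrices, $\|\Gamma'\circ Z_i\|\le1$ is equivalent to $-I\preceq\Gamma'\circ Z_i\preceq I$, and the absolute value can be dropped since $\pm\Gamma'$ are both feasible. This gives the first equality.

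\emph{Step 2 (SDP duality).} Write the Hermitian program as
\[
\sup_{\Gamma}\ \langle\Gamma,B\circ\kbr{\mu}\rangle\quad\text{s.t.}\quad I\mp\Gamma\circ Z_i\succeq0 .
\]
Introduce multipliers $P_i,Q_i\succeq0$ for the two constraints. The Lagrangian gives the dual
\[
\min\ \sum_i\Tr[P_i+Q_i]\quad\text{s.t.}\quad \sum_i Z_i\circ(P_i-Q_i)=B\circ\kbr{\mu}.
\]
This uses that $\langle\Gamma\circ Z_i,M\rangle=\langle\Gamma,Z_i\circ M\rangle$ for real $Z_i$. Setting $P_i-Q_i=:H_i\circ\kbr{\mu}$ on $\supp(\mu)$ (which is legitimate, since $\sqrt{\mu(x)\mu(y)}>0$ there) and minimizing over the splitting yields $\|H_i\circ\kbr{\mu}\|_{\tr}$, by the trace-norm characterization $\min\{\Tr P+\Tr Q: H=P-Q\}$. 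The constraint then becomes $B(x,y)=\sum_i Z_i(x,y)H_i(x,y)$ on $\supp(\mu)$. Strong duality and attainment follow from \Cref{thm:slater}: $\Gamma=0$ is strictly feasible for the max, giving equality with attainment in the min provided the max is finite. If the max is $+\infty$, weak duality shows the min is infeasible, consistent with the convention.

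I expect this step to be the main obstacle. The delicate points are bookkeeping the support restriction when $\mu$ is not full support, and handling entries where all $Z_i$ vanish. At such entries the primal is unbounded unless $B$ vanishes there (exactly the footnote's argument), which matches infeasibility of the dual constraint.

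\emph{Step 3 (vector form).} Apply the factorization characterization of the trace norm to each $M_i:=H_i\circ\kbr{\mu}$:
\[
\|M_i\|_{\tr}=\min\tfrac12\sum_x\bigl(\|a_x^i\|^2+\|b_x^i\|^2\bigr)\quad\text{with}\quad M_i(x,y)=\braket{a_x^i}{b_y^i}.
\]
Rescale by $\ket{u_x^i}=\ket{a_x^i}/\sqrt{\mu(x)}$ and $\ket{v_x^i}=\ket{b_x^i}/\sqrt{\mu(x)}$ on $\supp(\mu)$. Then $H_i(x,y)=\braket{u_x^i}{v_y^i}$, and the cost becomes $\frac12\E_{x\sim\mu}[\|u_x^i\|^2+\|v_x^i\|^2]$. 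This map is a bijection between feasible solutions, and a common dimension $d$ can be obtained by padding. This gives the third equality.

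\emph{Real case.} If $B$ and the $Z_i$ are real, replace an optimal $\Gamma$ by $\operatorname{Re}\Gamma=(\Gamma+\overline{\Gamma})/2$. The constraints are preserved because $\overline{\Gamma\circ Z_i}$ has the same spectrum as $\Gamma\circ Z_i$, and the objective is unchanged because it is real. The same averaging applies to the $H_i$ in the min, using $\|\overline{M}\|_{\tr}=\|M\|_{\tr}$. Real symmetric matrices then admit real factorizations, so the vectors may also be taken real.
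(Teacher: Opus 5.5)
Your route is the paper's: a phase rotation and Hermitian symmetrization of $\Gamma$, the Lagrangian dual with $P_i,Q_i\succeq0$ and Slater's condition at $\Gamma=0$, restriction to $\supp(\mu)$ with rescaling by $\sqrt{\mu(x)\mu(y)}$ to reach the trace-norm program, the trace-norm factorization for the vector form, and averaging with the complex conjugate for the real case.

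\textbf{One step needs repair: Step 3 is not a bijection between the two programs.}
\begin{itemize}
\item The rescaling $(a,b)\leftrightarrow(u,v)$ is a bijection on factorizations. But arbitrary families $\ket{u_x^i},\ket{v_y^i}$ produce $H_i(x,y)=\braket{u_x^i}{v_y^i}$ that need not be Hermitian.
\item Such a solution is therefore not feasible for the minimization over $H_i=H_i^\dagger$. As written, your argument only shows that the vector minimum is at most the trace-norm minimum.
\item For the reverse inequality you must show that non-Hermitian $H_i$ do not help. There are two ways to do this.
\begin{itemize}
\item Symmetrize, $H_i\mapsto(H_i+H_i^\dagger)/2$, as the paper does. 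This keeps the constraint because $B$ is Hermitian and the $Z_i$ are real symmetric, and it does not increase $\|H_i\circ\kbr{\mu}\|_{\tr}$.
\item Alternatively, use weak duality directly. For feasible Hermitian $\Gamma$ and any $H_i$ supported on $\supp(\mu)\times\supp(\mu)$, H\"older gives $\langle\Gamma,B\circ\kbr{\mu}\rangle=\sum_i\langle\Gamma\circ Z_i,H_i\circ\kbr{\mu}\rangle\le\sum_i\|H_i\circ\kbr{\mu}\|_{\tr}$. For arbitrary matrices, the factorization cost $\frac12\sum_x\mu(x)\bigl(\|u_x^i\|^2+\|v_x^i\|^2\bigr)$ is at least $\|H_i\circ\kbr{\mu}\|_{\tr}$.
\end{itemize}
\end{itemize}

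With this addition the proof is complete. Your real-case argument then also goes through, since it relies on the same comparison.
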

\begin{proof}
    Let $\gamma := \bar{\gamma}_{2}^{\mu}(B|\{Z_i\})$. If $\gamma=+\infty$, then some
    $x,y\in\supp(\mu)$ satisfy $B(x,y)\neq0$ and $Z_i(x,y)=0$ for every $i$. In this case, the
    maximization is unbounded and the minimization constraints cannot hold at $(x,y)$, so all values
    are understood to be $+\infty$. For the rest of the proof, we may therefore assume that
    $\gamma<+\infty$.
    \par
    Let $A:=B\circ\kbr{\mu}$, which is Hermitian. For an arbitrary $\Gamma$, choose $c\in\bC$ with
    $|c|=1$ such that $\langle c\Gamma,A\rangle=|\langle\Gamma,A\rangle|$, and define
    $\Gamma':=(c\Gamma+(c\Gamma)^\dagger)/2$. Since $A$ is Hermitian,
    $\langle\Gamma',A\rangle=\operatorname{Re}\langle c\Gamma,A\rangle =|\langle\Gamma,A\rangle|$.
    Moreover, since each $Z_i$ is Hermitian,
    \begin{align*}
        \|\Gamma'\circ Z_i\|
        &=\left\|\frac{c\Gamma\circ Z_i+(c\Gamma\circ Z_i)^\dagger}{2}\right\|
        \leq\|c\Gamma\circ Z_i\|
        =\|\Gamma\circ Z_i\|.
    \end{align*}
    Thus, we may restrict the maximum to Hermitian $\Gamma$ and omit the absolute value, giving the
    first equivalent formulation
    \begin{align*}
        \gamma &= \progat{6cm}
        {\max_{\Gamma=\Gamma^\dagger}}
        {\langle\Gamma,B\circ\kbr{\mu}\rangle}
        {-I\preceq\Gamma\circ Z_i\preceq I & \quad\forall i\in[n].}
    \end{align*}
    \par
    This is an SDP, and taking $\Gamma=0$ yields an interior feasible solution $-I\prec \Gamma \circ
    Z_i \prec I$. Therefore, Slater's condition is satisfied, so strong duality holds and the dual
    minimum is attained (\Cref{thm:slater}). To obtain the minimization
    form, we introduce Lagrangian variables $P_1,\dots,P_n\succeq 0$ and $Q_1,\dots,Q_n\succeq 0$.
    Thus, we can write
    \begin{align*}
        \gamma
        &=
        \sup_{\Gamma=\Gamma^\dagger}
        \inf_{P_i,Q_i\succeq 0}
        \left[
        \langle \Gamma,B\circ\kbr{\mu}\rangle
        +
        \sum_i
        \Big(
        \langle P_i,I-\Gamma\circ Z_i\rangle
        +
        \langle Q_i,I+\Gamma\circ Z_i\rangle
        \Big)
        \right]
        \\
        &=
        \sup_{\Gamma=\Gamma^\dagger}
        \inf_{P_i,Q_i\succeq 0}
        \left[
        \langle \Gamma,B\circ\kbr{\mu}\rangle
        +
        \sum_i\langle P_i+Q_i,I\rangle
        -
        \sum_i\langle\Gamma,(P_i-Q_i)\circ Z_i\rangle
        \right]
        \\
        &=
        \inf_{P_i,Q_i\succeq 0}
        \sup_{\Gamma=\Gamma^\dagger}
        \left[
        \sum_i \Tr[P_i+Q_i]
        +
        \left\langle
        \Gamma,
        B\circ\kbr{\mu}
        -
        \sum_i(P_i-Q_i)\circ Z_i
        \right\rangle
        \right]
        \\
        &=
        \progat{6cm}
        {\min_{P_i,Q_i\succeq0}}
        {\sum_{i\in[n]}\Tr[P_i+Q_i]}
        {B\circ\kbr{\mu}=\sum_{i\in[n]}(P_i-Q_i)\circ Z_i}.
    \end{align*}
    \par
    Next, we perform a variable change to remove the $\mu$-weights. Let $S:=\supp(\mu)$. Since
    $B\circ\kbr{\mu}$ is supported on $S\times S$, restricting $P_i,Q_i$ to $S$ and extending them
    by zero preserves the constraint and cannot increase the trace objective. We may therefore
    assume that $P_i,Q_i$ are supported on $S\times S$ and are zero elsewhere. We can then apply a
    PSD-preserving bijection $P_i\leftrightarrow P'_i$ and $Q_i\leftrightarrow Q'_i$ within this
    class, where $P'_i\circ\kbr{\mu}=P_i$ and $Q'_i\circ\kbr{\mu}=Q_i$, giving
    \begin{align*}
        \gamma &= \progat{6cm}
        {\min_{P'_i,Q'_i\succeq0}}
        {\sum_{i\in[n]}\Tr[(P'_i+Q'_i)\circ\kbr{\mu}]}
        {B(x,y)=\sum_{i\in[n]}Z_i(x,y)(P'_i-Q'_i)(x,y)
        & \quad\forall x,y\in S.}
    \end{align*}
    \par
    Next, we combine $P'_i$ and $Q'_i$ into $H_i:=P'_i-Q'_i$. We apply the trace-norm
    characterization $\|H_i\circ\kbr{\mu}\|_{\tr}
    =\min_{X_i,Y_i\succeq0}\{\Tr[X_i+Y_i]:H_i\circ\kbr{\mu}=X_i-Y_i\}$. A minimizing decomposition
    can be restricted to $S\times S$ and rescaled to PSD matrices $X'_i,Y'_i$ satisfying
    $H_i=X'_i-Y'_i$. Therefore, the optimal value is preserved, and we obtain the second equivalent
    formulation as
    \begin{align*}
        \gamma &= \progat{6cm}
        {\min_{H_i=H_i{}^\dagger}}
        {\sum_{i\in[n]}\|H_i\circ\kbr{\mu}\|_{\tr}}
        {B(x,y)=\sum_{i\in[n]}Z_i(x,y)H_i(x,y)
        & \quad\forall x,y\in S.}
    \end{align*}
    We now show that removing the Hermiticity condition $H_i=H_i^{\dagger}$ does not decrease the
    optimal value. So, let $\{K_i\}$ be an arbitrary family satisfying
    $\bra{x}B\ket{y}=\sum_iZ_i(x,y)\bra{x}K_i\ket{y}$ for every $x,y\in S$, and let
    $H_i=(K_i+K_i^{\dagger})/2$. By Hermiticity of $B$ and symmetry of the filters, the matrices
    $H_i$ satisfy the same constraints. Moreover, by the triangle inequality for the trace norm,
    $
        \|H_i\circ\kbr{\mu}\|_{\tr}
        =\frac{1}{2}\|K_i\circ\kbr{\mu}+K_i^{\dagger}\circ\kbr{\mu}\|_{\tr}
        \leq\|K_i\circ\kbr{\mu}\|_{\tr}$. Therefore, we conclude
        \begin{align*}
             \gamma
        &=
        \progat{6cm}
        {\min_{H_i}}
        {\sum_i\|H_i\circ\kbr{\mu}\|_{\tr}}
        {
        \bra{x}B\ket{y}=\sum_iZ_i(x,y)\bra{x}H_i\ket{y}
        &\quad\forall x,y\in S
        }.
        \end{align*}
        To get the final vector form, we apply the trace-norm factorization
    \begin{align*}
        \|H_i\circ\kbr{\mu}\|_{\tr}
        =\progbox
        {\min_d\min_{\ket{a_x^i},\ket{b_x^i}\in\bC^d}}
        {\frac12\sum_{x\in S}\big[\|a_x^i\|^2+\|b_x^i\|^2\big]}
        {\sqrt{\mu(x)\mu(y)}H_i(x,y)=\braket{a_x^i}{b_y^i}
        & x,y\in S}.
    \end{align*}
    Taking $\ket{u_x^i}:=\ket{a_x^i}/\sqrt{\mu(x)}$ and $\ket{v_x^i}:=\ket{b_x^i}/\sqrt{\mu(x)}$,
    summing over $i$, and substituting the constraints on $H_i$ gives the final formulation.
    \par
    Finally, for the moreover part, assume $B$ and the filters are real-valued. Then, in the first
    formulation replacing $\Gamma$ by $\Gamma':=(\Gamma+\overline{\Gamma})/2$ preserves feasibility
    and the objective. Similarly, in the second formulation replacing each $H_i$ by
    $H'_i:=(H_i+\overline{H_i})/2$ preserves the constraints and does not increase the trace-norm
    objective. For the third formulation, a real singular-value decomposition of each
    $H'_i\circ\kbr{\mu}$ then gives real factorization vectors. Thus, all matrix and vector
    variables may be taken real-valued as required.
\end{proof}

\subsubsection{Basic properties}
Throughout the rest of the paper, we will make use of a few basic properties of the $\bar{\gamma}_2$
norm.

\paragraph{Analogy with the worst-case $\gamma_2$ norm.}
First, we claim that, maximizing the average $\bar{\gamma}_{2}^{\mu}$ over the worst distribution
$\mu$, we recover the ordinary filtered $\gamma_2$ norm. So, our measure is in fact the
distributional analogue of the $\gamma_2$ norm.

\begin{proposition}\label{prop:average-to-worst-gamma2} Let $n\in \bN$ and let $\Sigma$ be a finite
 set. Let $A\in \Herm(\Sigma)$ and let $Z_i\in \bR^{\Sigma\times\Sigma}$ be symmetric for each $i\in
 [n]$. Then,
$$\max_{\mu} \bar{\gamma}_2^{\mu}(A|\{Z_i\}) = \gamma_2(A|\{Z_i\})$$ where the maximum is over all
distributions $\mu:\Sigma \to [0,1]$.
\end{proposition}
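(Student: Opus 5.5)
We prove the two inequalities separately, working throughout with the Hermitian maximization form from \Cref{prop:equiv-avg-gamma2} and the dual form \eqref{eq:filtered-gamma2-dual} of the ordinary filtered $\gamma_2$ norm.

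\textbf{The bound $\bar{\gamma}_2^\mu\leq\gamma_2$.} Fix a distribution $\mu$ and a Hermitian $\Gamma$ with $\max_i\|\Gamma\circ Z_i\|\leq1$. We rewrite the objective as
\[
\langle\Gamma,A\circ\kbr{\mu}\rangle=\bra{\mu}(\overline{\Gamma}\circ A)\ket{\mu}
\]
(up to conjugation conventions). Since $\ket{\mu}$ is a unit vector, its absolute value is at most $\|\overline{\Gamma}\circ A\|$. Applying \eqref{eq:filtered-gamma2-dual} to the matrix $\overline{\Gamma}$, and using that the $Z_i$ are real so that $\|\overline{\Gamma}\circ Z_i\|=\|\Gamma\circ Z_i\|\leq1$, gives $\|\overline{\Gamma}\circ A\|\leq\gamma_2(A\mid\{Z_i\})$. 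Maximizing over $\Gamma$ and then over $\mu$ yields this direction.

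\textbf{The reverse bound $\max_\mu\bar{\gamma}_2^\mu\geq\gamma_2$.} Take an optimal $\Gamma$ for \eqref{eq:filtered-gamma2-dual}, normalized so that $\max_i\|\Gamma\circ Z_i\|=1$ and $\|\Gamma\circ A\|=\gamma_2(A\mid\{Z_i\})$. Since $A$ and the $Z_i$ are Hermitian, we may take $\Gamma$ Hermitian; then $\Gamma\circ A$ is Hermitian. Replacing $\Gamma$ by $-\Gamma$ if necessary, the norm is attained by a top eigenvector $\ket{v}$ with eigenvalue $\|\Gamma\circ A\|$.

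Write $v_x=|v_x|e^{i\theta_x}$ and set $\mu(x)=|v_x|^2$ and $D=\mathrm{Diag}(e^{i\theta_x})$, so that $\ket{v}=D\ket{\mu}$. Define $\Gamma':=D^\dagger\Gamma D$, which is a diagonal unitary conjugation, i.e.\ an entrywise phase rescaling $\Gamma'(x,y)=e^{-i\theta_x}\Gamma(x,y)e^{i\theta_y}$. Such a conjugation commutes with Hadamard products, so
\[
\|\Gamma'\circ Z_i\|=\|D^\dagger(\Gamma\circ Z_i)D\|=\|\Gamma\circ Z_i\|\leq1 .
\]
Moreover,
\[
\bra{\mu}(\Gamma'\circ A)\ket{\mu}=\bra{v}(\Gamma\circ A)\ket{v}=\gamma_2(A\mid\{Z_i\}).
\]
Rewriting this back as $\langle\overline{\Gamma'},A\circ\kbr{\mu}\rangle$ shows that $\overline{\Gamma'}$ is feasible for $\bar{\gamma}^\mu_2$, again because the $Z_i$ are real and conjugation preserves the norm constraint, and it attains value $\gamma_2(A\mid\{Z_i\})$.

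\textbf{Remaining details.} The maximum over $\mu$ is attained by the $\mu$ constructed above. The main bookkeeping obstacle is getting the conjugation conventions right in the identity $\langle\Gamma,A\circ\kbr{\mu}\rangle=\bra{\mu}(\overline{\Gamma}\circ A)\ket{\mu}$. This can be checked directly from $\langle X,Y\rangle=\Tr[X^\dagger Y]$ together with Hermiticity of $\Gamma$. The degenerate case $\gamma_2=\infty$ should be handled separately: some entry has $A(x,y)\neq0$ while all $Z_i(x,y)=0$, and putting $\mu$ uniform on $\{x,y\}$ makes $\bar{\gamma}^\mu_2$ infinite as well.
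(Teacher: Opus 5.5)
Your proof is correct and is essentially the paper's argument. The upper bound is the trivial estimate $\bra{\mu}(\overline{\Gamma}\circ A)\ket{\mu}\leq\|\overline{\Gamma}\circ A\|$, and the lower bound absorbs the phases of a top eigenvector of $\Gamma\circ A$ into $\Gamma$ by a diagonal unitary conjugation, which leaves every $\|\Gamma\circ Z_i\|$ unchanged. The paper only packages this differently: it states the phase-absorption step as a general lemma (\Cref{lem:max-over-dist}) about sets of Hermitian matrices closed under negation and diagonal phase conjugation, applies it to $\{\Gamma\circ\overline{A}\}$, and gets attainment or unboundedness of the maximum over $\mu$ by compactness, whereas you exhibit the optimal $\mu$ directly and treat the case $\gamma_2=\infty$ by hand.
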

To give a rough sketch of the proof, consider taking the maximum over all unit vectors. That is,
$\max_{\|u\|=1} |\langle \Gamma, A\circ \kbr{u}\rangle| = \max_{\|u\|=1}
|\bra{u}(\Gamma\circ\overline{A})\ket{u}| = \|\Gamma\circ \overline{A}\|$. To restrict to vectors
$\ket{\mu}$, we can show that the phases of $\ket{u}$ can be absorbed into $\Gamma$ while preserving
the constraints $\|\Gamma\circ Z_i\|\leq 1$.  We refer the reader  to
\Cref{app:average-to-worst-gamma2} for the full details of the proof.

\paragraph{A list of properties.}
We also present a few properties of $\bar{\gamma}_2^{\mu}$ that naturally parallel those of the
filtered $\gamma_2$ norm (see, e.g., \cite[Lemma~A.2]{LMR+11}).

\begin{proposition}\label{prop:avggamma2-properties} Let $n\in \bN$ and let $\Sigma$ be a finite
    set. Let $A,B\in \Herm(\Sigma)$, where $B\neq 0$, and let $Z_i\in\bR^{\Sigma\times\Sigma}$ be
    symmetric for each $i\in[n]$. For any distribution $\mu:\Sigma\to[0,1]$, the following hold.
    \begin{enumerate}[i.]
        \item \textbf{Triangle inequality.} $\bar{\gamma}_2^{\mu}(A+B|\{ Z_i\}) \leq
        \bar{\gamma}_2^{\mu}(A|\{ Z_i\}) + \bar{\gamma}_2^{\mu}(B|\{ Z_i\}).$
        \item \textbf{Scalability.} For $\lambda > 0$, $\bar{\gamma}_2^{\mu}(\lambda A|\{ Z_i\}) =
        |\lambda| \bar{\gamma}_2^{\mu}(A|\{ Z_i\})$
        \item \textbf{Hadamard product bounds.}\label{item:avggamma2-hadamard-bounds} We have
        $\bar{\gamma}_2^{\mu}(A\circ B|\{Z_i\}) \leq \bar{\gamma}_2^{\mu}(A|\{Z_i\}) \cdot
        \gamma_2(B)$ and $\bar{\gamma}_2^{\mu}(A|\{Z_i\}) \leq \bar{\gamma}_2^{\mu}(A|\{Z_i\circ
        B\}) \cdot \gamma_2(B)$. If $B$ is real, then $\bar{\gamma}_2^{\mu}(A\circ B|\{Z_i\circ
        B\})\leq \bar{\gamma}_2^{\mu}(A|\{Z_i\})$.
        \item \textbf{Trace norm lower bound.}\label{item:avggamma2-tr-lower}
        $\bar{\gamma}_2^{\mu}(A|\{Z_i\}) \cdot {\max_i\gamma_2(Z_i)} \geq
        {\|A\circ\kbr{\mu}\|_{\tr}}$.
        \item \textbf{Trace norm upper bound.}\label{item:avggamma2-tr-upper}
        $\bar{\gamma}_2^{\mu}(A\circ B|\{Z_i\}) \leq \|A\circ\kbr{\mu}\|_{\tr} \cdot
        \gamma_2(B|\{Z_i\})$.
        \item \textbf{Support restriction.}\label{item:avggamma2-support} Let $\supp(\mu)\subseteq
        S\subseteq\Sigma$. Then,
        \begin{align*}
            \bar{\gamma}_2^\mu(A\mid\{Z_i\})
            &=\bar{\gamma}_2^\mu(A\circ J_S\mid\{Z_i\})
            =\bar{\gamma}_2^{\mu|_S}(A|_S\mid\{Z_i|_S\}),
        \end{align*}
        where $X|_S$ denotes the principal submatrix of $X$ indexed by $S$.
    \end{enumerate}
\end{proposition}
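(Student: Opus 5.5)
All six items will be proved from the maximization form in the definition of $\bar{\gamma}_2^\mu$, together with the Hermitian reformulation in \Cref{prop:equiv-avg-gamma2}. Two facts do most of the work. The first is the identity
\[
\langle\Gamma, A\circ B\circ\kbr{\mu}\rangle=\langle \Gamma\circ\overline{B}, A\circ\kbr{\mu}\rangle,
\]
which holds because the Hilbert--Schmidt inner product is entrywise and conjugates its first argument. The second is the Schur-multiplier bound $\|X\circ B\|\leq\gamma_2(B)\|X\|$.

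The main obstacle is that the preliminaries record this Schur-multiplier bound only for PSD $B$. For general $B$, I would prove it directly. Take a factorization $B(x,y)=\braket{u_x}{v_y}$ with $\max_x\|u_x\|^2,\max_y\|v_y\|^2\leq\gamma_2(B)$. Then
\[
X\circ B=\sum_k D_k^{u\,\dagger} X D_k^{v},
\]
where $D_k^u$ and $D_k^v$ are the diagonal matrices with entries $u_x(k)$ and $v_y(k)$. A Cauchy--Schwarz argument on $\bra{a}(X\circ B)\ket{b}$ then gives the bound. I also need $\gamma_2(\overline{B})=\gamma_2(B)$, which follows by conjugating the vectors.

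Items (i) and (ii) are immediate. For (i), I bound
\[
|\langle\Gamma,(A+B)\circ\kbr{\mu}\rangle|\leq|\langle\Gamma,A\circ\kbr{\mu}\rangle|+|\langle\Gamma,B\circ\kbr{\mu}\rangle|
\]
for any feasible $\Gamma$ and then maximize. For (ii), I use linearity of the objective in $A$.

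For (iii) there are three inequalities, each handled by choosing a feasible point.
\begin{itemize}
\item First inequality: take $\Gamma$ feasible for $\bar{\gamma}_2^\mu(A\circ B\mid\{Z_i\})$ and set $\Gamma':=\Gamma\circ\overline{B}/\gamma_2(B)$. Then $\|\Gamma'\circ Z_i\|\leq\|\Gamma\circ Z_i\|\leq1$ by the multiplier bound, and the identity above shows that the objectives match up to the factor $\gamma_2(B)$.
\item Second inequality: an optimal $\Gamma$ for the filters $\{Z_i\}$ satisfies $\|\Gamma\circ Z_i\circ B\|\leq\gamma_2(B)$. Hence $\Gamma/\gamma_2(B)$ is feasible for the filters $\{Z_i\circ B\}$, with the same objective divided by $\gamma_2(B)$.
\item Third inequality ($B$ real): if $\Gamma$ is feasible for $(A\circ B\mid\{Z_i\circ B\})$, then $\Gamma\circ B$ is feasible for $(A\mid\{Z_i\})$ with the same objective, again by the identity above.
\end{itemize}

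For (iv), I use the duality between the trace norm and the operator norm. This gives a Hermitian $\Gamma_0$ with $\|\Gamma_0\|\leq1$ and $\langle\Gamma_0,A\circ\kbr{\mu}\rangle=\|A\circ\kbr{\mu}\|_{\tr}$. By the multiplier bound, $\|\Gamma_0\circ Z_i\|\leq\gamma_2(Z_i)$. So $\Gamma_0/\max_i\gamma_2(Z_i)$ is feasible, and it certifies the claimed lower bound.

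For (v), I take any feasible $\Gamma$ and estimate
\[
|\langle\Gamma\circ\overline{B},A\circ\kbr{\mu}\rangle|\leq\|\Gamma\circ\overline{B}\|\,\|A\circ\kbr{\mu}\|_{\tr}
\]
by H\"older's inequality. By the dual form \eqref{eq:filtered-gamma2-dual} of the filtered $\gamma_2$ norm,
\[
\|\Gamma\circ\overline{B}\|\leq\gamma_2(\overline{B}\mid\{Z_i\})\max_i\|\Gamma\circ Z_i\|\leq\gamma_2(B\mid\{Z_i\}),
\]
where the last step uses that the $Z_i$ are real, as noted in the preliminaries.

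For (vi), the first equality holds because $A\circ\kbr{\mu}=A\circ J_S\circ\kbr{\mu}$ whenever $\supp(\mu)\subseteq S$. For the second equality, entries of $\Gamma$ outside $S\times S$ do not affect the objective. Restricting a feasible $\Gamma$ to $S\times S$ keeps it feasible, since the norm of a principal submatrix is at most the norm of the whole matrix. Conversely, extending a feasible matrix on $S$ by zero leaves $\|\Gamma\circ Z_i\|=\|\Gamma|_S\circ Z_i|_S\|$ unchanged. So the two optimization problems have the same value.
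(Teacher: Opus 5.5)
Your proposal is correct and follows the paper's appendix proof essentially item by item. It uses the same feasible-point rescalings via $\langle\Gamma,A\circ B\circ\kbr{\mu}\rangle=\langle\Gamma\circ\overline{B},A\circ\kbr{\mu}\rangle$ for (iii), trace/operator-norm duality for (iv), H\"older plus the dual form of the filtered $\gamma_2$ norm for (v), and principal-submatrix restriction with zero-extension for (vi). Your only addition is a direct proof of the Schur-multiplier bound $\|X\circ B\|\leq\gamma_2(B)\|X\|$ for non-PSD $B$, which the paper uses in (iii) and (iv) but records in the preliminaries only for PSD matrices, so your write-up is slightly more self-contained.
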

All proofs are straightforward, and, for the sake of brevity, are postponed to
\Cref{app:avggamma2-properties}.

\subsection{Bounding query complexity}\label{sec:dist-gamma2-query} To talk about query complexity,
we specialize the preceding formulations to the query filters $\{\Delta_i\}$.
\begin{lemma}[Equivalent forms for query filters]\label{lem:equiv-avg-gamma2-query} Let
$\Sigma\subseteq\{0,1\}^n$ and let $B\in\Herm(\Sigma)$ satisfy $B\circ I=0$. Let
$\cL:=\{\Gamma\in\Herm(\Sigma):\Gamma\circ I=0,\ \|\Gamma\circ\Delta_i\|\leq1 \text{ for all
}i\in[n]\}$. Then, $\cL$ is compact and convex, and for every distribution $\mu:\Sigma\to[0,1]$, we
have
\begin{align*}
    \bar{\gamma}_2^\mu(B\mid\{\Delta_i\})
    =\max_{\Gamma\in\cL}\ \langle\Gamma,B\circ\kbr{\mu}\rangle.
\end{align*}
Moreover, the map $(B,\mu)\mapsto\bar{\gamma}_2^\mu(B|\{\Delta_i\})$ is jointly continuous on the
Hermitian matrices $B$ with $B\circ I=0$ and probability distributions $\mu$ on $\Sigma$.
Consequently, we can equivalently write
\begin{align*}
    \bar{\gamma}_2^\mu(B\mid\{\Delta_i\})
    =\progline{\max_{\Gamma\neq0}}
    {\frac{\langle\Gamma,B\circ\kbr{\mu}\rangle}{\max_i\|\Gamma\circ\Delta_i\|}}
    {\Gamma=\Gamma^\dagger;\quad\Gamma\circ I=0.}
\end{align*}
In particular, the value is finite and both maxima are attained.\footnote{ Implicitly, we are
assuming $|\Sigma|\geq2$ throughout the paper. }
\end{lemma}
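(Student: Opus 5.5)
We start from the first equivalent form in \Cref{prop:equiv-avg-gamma2}: $\bar{\gamma}_2^\mu(B\mid\{\Delta_i\})$ is the maximum of $\langle\Gamma,B\circ\kbr{\mu}\rangle$ over Hermitian $\Gamma$ with $\|\Gamma\circ\Delta_i\|\leq1$ for all $i$. Because $B\circ I=0$, the objective is unchanged under $\Gamma\mapsto\Gamma-\Gamma\circ I$. Since $\Delta_i\circ I=0$, the filtered matrices $\Gamma\circ\Delta_i$ are also unchanged. So \Cref{lem:adversary-support-replacement}, applied with $\Delta=J-I$, lets us add the constraint $\Gamma\circ I=0$, which gives the maximization over $\cL$.

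Convexity of $\cL$ is immediate, since it is an intersection of a linear subspace with norm balls, and $\cL$ is closed. For boundedness, take $x\neq y$ in $\Sigma\subseteq\{0,1\}^n$. Since $x\neq y$, some $i$ has $x_i\neq y_i$, so $\Delta_i(x,y)=1$. Then $|\Gamma(x,y)|=|(\Gamma\circ\Delta_i)(x,y)|\leq\|\Gamma\circ\Delta_i\|\leq1$. The diagonal entries vanish, so $\cL$ is bounded and hence compact. The maximum is therefore attained and finite.

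For joint continuity, write $F(\Gamma,(B,\mu))=\langle\Gamma,B\circ\kbr{\mu}\rangle$. This is jointly continuous, because $\mu\mapsto\sqrt{\mu(x)\mu(y)}$ is continuous and the pairing is bilinear. Since $\cL$ is a fixed nonempty compact set, the preliminary fact that $X\mapsto\max_{Y\in\cK}F(X,Y)$ is continuous yields joint continuity in $(B,\mu)$. The only mild subtlety is that the entries $\sqrt{\mu(x)\mu(y)}$ are continuous but not smooth at the boundary of the simplex, which does not matter here.

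Finally, the ratio form follows by homogeneity. Any nonzero Hermitian $\Gamma$ with $\Gamma\circ I=0$ has $\max_i\|\Gamma\circ\Delta_i\|>0$ by the same entrywise argument. Rescaling it to $\Gamma/\max_i\|\Gamma\circ\Delta_i\|$ therefore lands in $\cL$ with the ratio as its objective value. Conversely, an optimum $\Gamma^*\in\cL$ with positive value is nonzero and satisfies $\max_i\|\Gamma^*\circ\Delta_i\|\leq1$, so its ratio is at least its objective value. If the optimum value is $0$, any nonzero feasible $\Gamma$ or its negative gives ratio $\geq0$, and $|\Sigma|\geq2$ guarantees such a $\Gamma$ exists.

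The main obstacle is really only the boundedness of $\cL$, which uses that distinct inputs differ in some coordinate. Everything else is bookkeeping with the earlier duality and semicontinuity facts.
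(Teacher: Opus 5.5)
Your proof is correct and follows the paper's argument essentially step for step: you restrict to zero-diagonal $\Gamma$ via the support-replacement lemma with mask $J-I$, bound each off-diagonal entry using a coordinate where $x$ and $y$ differ, get joint continuity by maximizing a jointly continuous objective over a fixed compact set, and derive the quotient form by homogeneity. Your explicit handling of the zero-value case (to show the quotient maximum is attained) is a small detail the paper leaves implicit.
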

\begin{proof}
Since $B\circ I=0$ and $\Delta_i\circ I=0$ for every $i\in[n]$, replacing $\Gamma$ by
$\Gamma':=\Gamma\circ(J-I)$ preserves the objective and every constraint in
\Cref{prop:equiv-avg-gamma2} as $\Gamma'\circ\Delta_i=\Gamma\circ\Delta_i$. Therefore, by
\Cref{lem:adversary-support-replacement}, we may assume $\Gamma\circ I=0$. Convexity and closedness
of the set $\cL$ are easy to verify. The set $\cL$ is also bounded. To see this, let $\Gamma\in\cL$
and $x,y\in\Sigma$. If $x=y$, then $\Gamma(x,y)=0$. Otherwise, there exists $i\in[n]$ such that
$\Delta_i(x,y)=1$, so $|\Gamma(x,y)|\leq\|\Gamma\circ\Delta_i\|\leq1$. Thus, $\cL$ is compact. Since
the objective is continuous on $\cL$, its maximum is finite and attained.
\par
Moreover, the objective is jointly continuous in $\Gamma$, $B$, and $\mu$. Since $\cL$ is compact
and independent of $B$ and $\mu$, its maximum over $\cL$, $\bar{\gamma}_2^\mu(B\mid\{\Delta_i\})$,
is also jointly continuous in $B$ and $\mu$.
\par
Finally, for every nonzero zero-diagonal matrix $\Gamma$, we have
$M:=\max_{i\in[n]}\|\Gamma\circ\Delta_i\|>0$. Indeed, if $\Gamma(x,y)\neq0$, then $x\neq y$, so
there exists an index $i\in[n]$ such that $\Delta_i(x,y)=1$. Consequently, rescaling to
$\Gamma':=\Gamma/M$ gives $\max_i\|\Gamma'\circ\Delta_i\|=1$, yielding the quotient formulation by
homogeneity.
\end{proof}

Next, we use $\bar{\gamma}_2^{\mu}$ to obtain lower bounds for generating a Gram matrix in
\Cref{prop:hybrid-gamma2} and upper bounds for state conversion in
\Cref{prop:distalg-from-avggamma2}.

\subsubsection{Lower bounding query complexity}
To prove the lower bound, we first present a standard one-query decomposition for quantum
algorithms \cite[Lemma~4.7]{LMR+11}.

\begin{lemma}[One-query decomposition]\label{lem:one-query-decomposition} Let $\Sigma\subset
\{0,1\}^{n}$ and let $Q$ be a $T$-query quantum algorithm. For $t\in[0:T]$, let $\Psi^{t}$ be the
Gram matrix of the algorithm after $t$ queries; i.e., $\Psi^{t}(x,y)=\braket{\psi_x^t}{\psi_y^t}$,
where $\ket{\psi_x^0}=U_0\ket{0}$ and $\ket{\psi_x^t}=U_t\cO^x\cdots
U_1\cO^xU_0\ket{0}=U_t\cO^x\ket{\psi_x^{t-1}}$ is the state of the algorithm on input $x$ after the
$t$-th query.
 Then, for $t\in[T]$, there exists a PSD decomposition $\Psi^{t-1}=\sum_{i=0}^{n}Z_i^t$, with
$Z_i^t\succeq0$, such that $\Psi^{t-1}-\Psi^t=2\sum_{i=1}^{n}\Delta_i\circ Z_i^t$.
\end{lemma}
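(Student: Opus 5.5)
The plan is a direct computation from the structure of a single query step. Fix $t\in[T]$ and write $\ket{\phi_x}:=\ket{\psi_x^{t-1}}\in\cW=\cI\otimes\cA$, so that $\ket{\psi_x^t}=U_t\cO^x\ket{\phi_x}$ (with $\cO^x$ extended by the identity on $\cA$). Since $U_t$ is input-independent and unitary, it drops out of all inner products:
\[
\Psi^t(x,y)=\bra{\phi_x}(\cO^x)^\dagger\cO^y\ket{\phi_y}.
\]
So the whole task is to express $(\cO^x)^\dagger\cO^y$ in terms of the index register and then read off the decomposition.

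First decompose each state along the index register. Let $\Pi_i:=\ketbra{i}_{\cI}\otimes I_{\cA}$ for $i\in[0:n]$, and define
\[
Z_i^t(x,y):=\bra{\phi_x}\Pi_i\ket{\phi_y},
\]
the Gram matrix of the vectors $\{\Pi_i\ket{\phi_x}\}_x$. Each $Z_i^t$ is therefore PSD, and since $\sum_i\Pi_i=I$ we get $\Psi^{t-1}=\sum_{i=0}^nZ_i^t$.

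Next compute the product of oracles. As in the claim inside the proof of \Cref{thm:catalyst-state-conversion},
\[
(\cO^x)^\dagger\cO^y=\sum_{i=0}^n(-1)^{x_i\oplus y_i}\Pi_i,
\]
with the convention that the $i=0$ sign is $+1$. Using $(-1)^{x_i\oplus y_i}=1-2\Delta_i(x,y)$, this becomes
\[
I-(\cO^x)^\dagger\cO^y=2\sum_{i=1}^n\Delta_i(x,y)\Pi_i.
\]
Sandwiching between $\bra{\phi_x}$ and $\ket{\phi_y}$ then gives
\[
\Psi^{t-1}(x,y)-\Psi^t(x,y)=2\sum_{i=1}^n\Delta_i(x,y)Z_i^t(x,y),
\]
which is the claimed identity.

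There is no real obstacle here. The only points needing care are that the oracle acts only on $\cI$, so the projectors must carry $I_{\cA}$, and that the $i=0$ component contributes to the PSD decomposition of $\Psi^{t-1}$ but not to the difference.
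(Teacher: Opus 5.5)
Your proposal is correct and follows the paper's proof essentially line by line. The paper also lets $U_t$ cancel, takes $Z_i^t$ to be the Gram matrix of the index-register projections $P_i\ket{\psi_x^{t-1}}$, and uses $(\cO^x)^\dagger\cO^y=\sum_{i=0}^n(-1)^{x_i\oplus y_i}P_i$ (with $x_0:=0$), so that the $i=0$ term enters the decomposition of $\Psi^{t-1}$ but drops out of the difference.
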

\begin{proof}
    Throughout the proof, $\cO^x$ is understood to act trivially on the auxiliary register. For
    simplicity, we define $x_0:=0$. Then, we can simply write $\cO^x=\sum_{i=0}^{n}(-1)^{x_i}P_i$,
    where $P_i$ is the projector onto the subspace where the index register equals $i$. Therefore,
    $\sum_{i=0}^{n}P_i=I$ and $P_i\cO^x=(-1)^{x_i}P_i$.
    \par
    Let $Z^t_i(x,y):=\bra{\psi_x^{t-1}}P_i\ket{\psi_y^{t-1}}$ for $x,y\in\Sigma$. Then,
    $Z_i^t\succeq0$, and
    \begin{align*}
        \sum_{i=0}^{n}Z_i^t(x,y)
        &=\bra{\psi_x^{t-1}}\Big(\sum_{i=0}^{n}P_i\Big)\ket{\psi_y^{t-1}}
        =\Psi^{t-1}(x,y).
    \end{align*}
    \par
    Next, for $x,y\in\Sigma$, we compute the entries of $\Psi^t$ as follows.
    \begin{align*}
        \Psi^t(x,y)
        &=\braket{\psi_x^t}{\psi_y^t}
        =\bra{\psi_x^{t-1}}(\cO^x)^{\dagger}U_t^{\dagger}U_t\cO^y\ket{\psi_y^{t-1}}
        =\sum_{i=0}^{n}\bra{\psi_x^{t-1}}(\cO^x)^{\dagger}P_i^2\cO^y\ket{\psi_y^{t-1}}\\
        &=\sum_{i=0}^{n}(-1)^{x_i\oplus y_i}Z_i^t(x,y).
    \end{align*}
    Therefore,
    \begin{align*}
        (\Psi^{t-1} - \Psi^{t})(x,y) &=  \sum_{i\in [0:n]} (1-(-1)^{x_i \oplus y_i}) Z_i^{t}(x,y) =
        0 + \sum_{i\in [n]} 2\Delta_i(x,y) Z_i^{t}(x,y)
    \end{align*}
    where we used the fact that for $i=0$, $x_0 = y_0 =0$ and for $i\in [n]$, $\Delta_i(x,y) =
    \mathds{1}[x_i\neq y_i]$. Therefore, $\Psi^{t-1}-\Psi^t=2\sum_{i=1}^{n}\Delta_i\circ Z_i^t$,
    completing the proof.
\end{proof}
We can now apply a hybrid argument to obtain our lower bound.
\begin{proposition}\label{prop:hybrid-gamma2} Let $\Sigma \subset \{0,1\}^{n}$ and let $\mu:\Sigma
\to [0,1]$ be a distribution. Also, let $\Psi \in \Pos(\Sigma)$ be a Gram matrix with $\Psi \circ I
= I$. Then, $\mathsf{Q}(\Psi) \geq \bar{\gamma}^{\mu}_2(J-\Psi|\{\Delta_i\})/2$.
\end{proposition}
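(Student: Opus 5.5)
The plan is a hybrid argument built on the one-query decomposition (\Cref{lem:one-query-decomposition}), followed by a bound on the $\bar{\gamma}_2^\mu$ cost of each query step through the minimization form in \Cref{prop:equiv-avg-gamma2}.

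Let $T=\mathsf{Q}(\Psi)$ and fix a $T$-query algorithm whose final Gram matrix is $\Psi$. Its initial Gram matrix is $\Psi^0=J$, since the starting state $U_0\ket{0}$ does not depend on $x$. Telescoping gives
\[
J-\Psi=\sum_{t=1}^{T}(\Psi^{t-1}-\Psi^t).
\]
By the triangle inequality (\Cref{prop:avggamma2-properties}, item i),
\[
\bar{\gamma}_2^\mu(J-\Psi\mid\{\Delta_i\})\leq\sum_{t=1}^T\bar{\gamma}_2^\mu(\Psi^{t-1}-\Psi^t\mid\{\Delta_i\}).
\]
It therefore suffices to prove that every one-query step has cost at most $2$.

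For a fixed step $t$, \Cref{lem:one-query-decomposition} gives
\[
\Psi^{t-1}-\Psi^t=\sum_{i=1}^n\Delta_i\circ(2Z_i^t),\qquad Z_i^t\succeq0,\qquad \sum_{i=0}^nZ_i^t=\Psi^{t-1}.
\]
This is exactly a feasible point for the second (minimization) formulation in \Cref{prop:equiv-avg-gamma2}, taking $H_i:=2Z_i^t$, which is Hermitian. The constraint only has to hold on $\supp(\mu)$, and here it holds everywhere. Since $Z_i^t\circ\kbr{\mu}$ is PSD by the Schur product theorem, its trace norm equals its trace. Hence
\[
\bar{\gamma}_2^\mu(\Psi^{t-1}-\Psi^t\mid\{\Delta_i\})\leq 2\sum_{i=1}^n\Tr[Z_i^t\circ\kbr{\mu}]\leq 2\Tr[\Psi^{t-1}\circ\kbr{\mu}]=2\sum_x\mu(x)=2.
\]
Here we used $Z_0^t\succeq0$ to drop that term, and the fact that $\Psi^{t-1}$ has unit diagonal because the states are normalized. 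Summing over $t$ gives $\bar{\gamma}_2^\mu(J-\Psi\mid\{\Delta_i\})\leq 2T$, which is the claim.

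The only potentially delicate point is making sure the minimization formulation applies, i.e.\ that strong duality holds for the filters $\{\Delta_i\}$. \Cref{prop:equiv-avg-gamma2} already establishes this for symmetric real filters, so no extra work is needed there. Alternatively, one can avoid duality altogether by bounding $\langle\Gamma,\Delta_i\circ Z_i\circ\kbr{\mu}\rangle=\langle\Gamma\circ\Delta_i,Z_i\circ\kbr{\mu}\rangle\leq\|\Gamma\circ\Delta_i\|\Tr[Z_i\circ\kbr{\mu}]$ directly, which gives the same bound.
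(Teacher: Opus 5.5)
Your proof is correct and is essentially the paper's hybrid argument seen from the primal side. The paper fixes an arbitrary Hermitian $\Gamma$, bounds each increment of $d_t=\langle\Gamma,\Psi^t\circ\kbr{\mu}\rangle$ by $2\max_i\|\Gamma\circ\Delta_i\|$ using H\"older's inequality and the PSD decomposition from \Cref{lem:one-query-decomposition}, and then maximizes over $\Gamma$ via the quotient form in \Cref{lem:equiv-avg-gamma2-query}---exactly your closing ``alternatively'' remark. Your main route packages the same estimate as a per-query primal certificate $H_i=2Z_i^t$ plus subadditivity of $\bar{\gamma}_2^\mu$; since bounding $\bar{\gamma}_2^\mu$ above by a feasible primal point needs only weak duality (the same one-line H\"older bound), the step you flag as delicate requires no strong duality.
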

\begin{proof}
Let $Q$ be a $T$-query algorithm computing $\Psi$, where $T=\mathsf{Q}(\Psi)$, and let $\Psi^t$ be
its Gram matrix after $t$ queries, as in \Cref{lem:one-query-decomposition}. Then, $\Psi^0=J$ and
$\Psi^T=\Psi$. Let $\Gamma \in \bC^{\Sigma \times \Sigma}$ be an arbitrary matrix. We define
$d_{t}:= \langle \Gamma, \Psi^{t} \circ \kbr{\mu}\rangle$ for $t\in [0:T]$, and keep track of
sequential differences as follows.
\begin{align*}
    | d_{t-1} - d_{t}| &= |\langle \Gamma, (\Psi^{t-1}-\Psi^{t}) \circ \kbr{\mu}\rangle| \\
     &= 2|\langle \Gamma, \sum_{i\in [n]} \Delta_i \circ Z_i^{t} \circ \kbr{\mu}\rangle| && \text{(By \Cref{lem:one-query-decomposition})}\\
     & \leq 2\sum_{i\in [n]} |\langle \Gamma, \Delta_i \circ Z_i^{t} \circ \kbr{\mu}\rangle|  \\
     & \leq 2\sum_{i\in [n]} \bigl(\|\Gamma\circ\Delta_i\|\,\|Z_i^{t}\circ\kbr{\mu}\|_{\tr}\bigr) \\
     & \leq 2\left(\max_{i\in[n]}\|\Gamma\circ\Delta_i\|\right)
     \Tr[\textstyle\sum_{i\in[0:n]}Z_i^{t}\circ\kbr{\mu}] && \text{(Since $Z_i^{t}\succeq 0$)} \\
     &= 2\left(\max_{i\in[n]}\|\Gamma\circ\Delta_i\|\right)
     \Tr[\Psi^{t-1}\circ\kbr{\mu}] \\
     &= 2\max_{i\in[n]}\|\Gamma\circ\Delta_i\|.
\end{align*}
Therefore, by the triangle inequality,
\begin{align*}
    |\langle \Gamma,(J-\Psi)\circ \kbr{\mu}\rangle|
    = |d_0 - d_T| \leq \sum_{t=1}^{T} |d_{t-1} - d_{t}| \leq T\cdot 2\max_{i}\|\Gamma \circ \Delta_i\|.
\end{align*}
Since $\Gamma$ was arbitrary, by \Cref{lem:equiv-avg-gamma2-query} we have
\begin{align*}
T &\geq \max_{\substack{\Gamma\neq0,\ \Gamma=\Gamma^{\dagger}\\\Gamma\circ I=0}}
\frac{\langle\Gamma,(J-\Psi)\circ\kbr{\mu}\rangle}{2\max_i\|\Gamma\circ\Delta_i\|}
=\frac{1}{2}\bar{\gamma}_2^{\mu}(J-\Psi|\{\Delta_i\}). \qedhere
\end{align*}
\end{proof}
\subsubsection{Upper bounding query complexity}
For the upper bound, we use the witnesses $\ket{u_x^i}$ and $\ket{v_x^i}$ from the minimization
formulation in \Cref{prop:equiv-avg-gamma2} to apply the distributional state-conversion algorithm
of \Cref{thm:catalyst-state-conversion}.

\begin{proposition}\label{prop:distalg-from-avggamma2} Let $\Sigma\subset \{0,1\}^{n}$ and let
    $\mu:\Sigma \to [0,1]$ be a distribution. Also, let $\Phi,\Psi\in \Pos(\Sigma)$ be Gram matrices
    where $\Phi\circ I=\Psi\circ I=I$. Then, for any fidelity parameter $\alpha=\alpha(n)\in(0,1)$,
    there exists a Gram matrix $\widetilde{\Psi}\in\Pos(\Sigma)$ with $\widetilde{\Psi}\circ I=I$
    such that $\cF(\widetilde{\Psi}\circ\kbr{\mu},\Psi\circ\kbr{\mu})\geq \alpha$ and
    \begin{align*}
        \mathsf{Q}(\Phi\to\widetilde{\Psi})
        \leq
            \frac{2}{1-\alpha}\cdot \bar{\gamma}_2^{\mu}(\Phi-\Psi|\{\Delta_i\}).
    \end{align*}
\end{proposition}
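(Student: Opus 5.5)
The plan is to feed the minimization form of $\bar{\gamma}_2^\mu$ from \Cref{prop:equiv-avg-gamma2} into \Cref{thm:catalyst-state-conversion} with $r=1$, and then choose $T$ appropriately. Let $\gamma:=\bar{\gamma}_2^{\mu}(\Phi-\Psi\mid\{\Delta_i\})$, which is finite because $(\Phi-\Psi)\circ I=0$ (\Cref{lem:equiv-avg-gamma2-query}). As in the proof of \Cref{thm:catalyst-state-conversion}, first reduce to $\Sigma=\supp(\mu)$, using the support restriction in \Cref{prop:avggamma2-properties}. The vector form of \Cref{prop:equiv-avg-gamma2} then gives vectors $\ket{u_x^i},\ket{v_x^i}$ with
\[
(\Phi-\Psi)(x,y)=\sum_i\Delta_i(x,y)\braket{u_x^i}{v_y^i}
\quad\text{and}\quad
\tfrac12\sum_i\E_{x\sim\mu}\bigl[\|u_x^i\|^2+\|v_x^i\|^2\bigr]=\gamma .
\]
Because $\Phi-\Psi$ is Hermitian and $\Delta_i$ is real symmetric, the constraint also holds with $\braket{u_x^i}{v_y^i}$ replaced by its Hermitian symmetrization $\tfrac12(\braket{u_x^i}{v_y^i}+\braket{v_x^i}{u_y^i})$.

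The key step is to convert this into PSD matrices $X_i$ with $\Phi-\Psi=\sum_i\Delta_i\circ X_i$, using the bipartite structure of $\Delta_i$. Define $\ket{w_x^i}:=(\ket{u_x^i}\oplus\ket{v_x^i})/\sqrt2$ if $x_i=0$, and $\ket{w_x^i}:=(\ket{v_x^i}\oplus\ket{u_x^i})/\sqrt2$ if $x_i=1$. Set $X_i(x,y):=\braket{w_x^i}{w_y^i}$, which is PSD as a Gram matrix. Whenever $x_i\neq y_i$, we get $X_i(x,y)=\tfrac12(\braket{u_x^i}{v_y^i}+\braket{v_x^i}{u_y^i})$, which is exactly the symmetrized coefficient above. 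Hence $\Delta_i\circ X_i$ agrees with the $i$-th summand of $\Phi-\Psi$. Moreover $X_i(x,x)=\tfrac12(\|u_x^i\|^2+\|v_x^i\|^2)$, so
\[
\sum_i\Tr[X_i\circ\kbr{\mu}]=\gamma .
\]
With $r=1$ the second hypothesis $\Psi\succeq0$ of \Cref{thm:catalyst-state-conversion} is trivial. The theorem then yields, for every $T\ge1$, a Gram matrix $\widetilde\Psi$ with $\mathsf{Q}(\Phi\to\widetilde\Psi)\le T$ and fidelity at least $1-\gamma/T$.

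It remains to choose $T$, which I expect to be the only delicate point because of integrality. If $\gamma>2(1-\alpha)$, take $T=\lceil\gamma/(1-\alpha)\rceil$. This gives fidelity at least $\alpha$, and since $\gamma/(1-\alpha)>2$, we have $T\le\gamma/(1-\alpha)+1<2\gamma/(1-\alpha)$.

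If instead $\gamma\le2(1-\alpha)$, use zero queries and take $\widetilde\Psi=\Phi$. The bound then follows from a direct fidelity estimate. By Fuchs--van de Graaf, $1-\cF(\Phi\circ\kbr{\mu},\Psi\circ\kbr{\mu})\le\tfrac12\|(\Phi-\Psi)\circ\kbr{\mu}\|_{\tr}$. By the trace-norm lower bound in \Cref{prop:avggamma2-properties}, together with $\gamma_2(\Delta_i)\le1$, this is at most $\gamma/2\le1-\alpha$. This case covers $\gamma=0$ and small positive $\gamma$, where $T=1$ could exceed the claimed bound.
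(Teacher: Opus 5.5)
Your proposal is correct and follows essentially the same route as the paper: reduce to $\supp(\mu)$, convert the vector witnesses from \Cref{prop:equiv-avg-gamma2} into PSD matrices $X_i$ with $\sum_i\Tr[X_i\circ\kbr{\mu}]=\gamma$, apply \Cref{thm:catalyst-state-conversion} with $r=1$ and $T=\lceil\gamma/(1-\alpha)\rceil$, and handle small $\gamma$ with zero queries via Fuchs--van de Graaf and the trace-norm lower bound. The differences are cosmetic. Your swapped vectors are unitarily equivalent to the paper's $\ket{\kappa_x^i}=\tfrac12(\ket{u_x^i}+\ket{v_x^i})\oplus(-1)^{x_i}\tfrac12(\ket{u_x^i}-\ket{v_x^i})$. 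Your zero-query threshold $\gamma\le2(1-\alpha)$ uses the sharper estimate $\gamma/2$, whereas the paper uses the cruder bound $1-\gamma$ and the threshold $\gamma\le1-\alpha$.
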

\begin{proof}
    Without loss of generality, we assume that $\Sigma=\supp(\mu)$. Indeed, replacing each matrix by
its principal submatrix indexed by $S:=\supp(\mu)$ leaves $\bar{\gamma}_2^\mu$ unchanged (see
\Crefitem{item:avggamma2-support}). Also, any state-conversion algorithm on $S$ can be run on all
inputs in $\Sigma$ with the same query cost, and its output Gram matrix agrees with the one
constructed on $S$, so the fidelity under Hadamard product with $\kbr{\mu}$ is unchanged.
\par

    Let $\gamma:=\bar{\gamma}_2^\mu(\Phi-\Psi\mid\{\Delta_i\})$. If $\gamma\leq1-\alpha$, then
    taking $\widetilde{\Psi}=\Phi$ requires zero queries and satisfies the fidelity requirement,
    since
    \begin{align*}
        \cF(\Phi\circ\kbr{\mu},\Psi\circ\kbr{\mu})
        &\geq 1-\frac12\|(\Phi-\Psi)\circ\kbr{\mu}\|_{\tr}
        \geq 1-\left(\max_i\gamma_2(\Delta_i)\right)
        \bar{\gamma}_2^\mu(\Phi-\Psi\mid\{\Delta_i\})
        \geq 1-\gamma
        \geq \alpha,
    \end{align*}
    where we used the Fuchs--van de Graaf inequalities, \Crefitem{item:avggamma2-tr-lower}, and
    $\gamma_2(\Delta_i)\leq1$.
  \par
  Therefore, assume that $\gamma>1-\alpha$. Choose witnesses $\ket{u_x^i}$ and $\ket{v_x^i}$
    attaining the minimum in \Cref{prop:equiv-avg-gamma2}, so that, for all $x,y\in \Sigma$,
    \begin{align*}
        \Phi(x,y)-\Psi(x,y)=\sum_{i=1}^{n}\Delta_i(x,y)\braket{u_x^i}{v_y^i}.
    \end{align*}
    To put the witnesses in the form required by \Cref{thm:catalyst-state-conversion}, we combine
    the two families of vectors by defining
    \begin{align*}
        \ket{\kappa_x^i}:=
        \frac{\ket{u_x^i}+\ket{v_x^i}}{2}
        \oplus(-1)^{x_i}\frac{\ket{u_x^i}-\ket{v_x^i}}{2}.
    \end{align*}
    Let $X_i(x,y):=\braket{\kappa_x^i}{\kappa_y^i}$. Therefore, $X_i\succeq0$. Moreover, whenever
    $\Delta_i(x,y)=1$, we have $(-1)^{x_i+y_i}=-1$. Thus, for all $x,y\in \Sigma$,
    \begin{align*}
        \sum_{i=1}^{n}\Delta_i(x,y)X_i(x,y)
        &=\frac12\sum_{i=1}^{n}\Delta_i(x,y)
        \bigl(\braket{u_x^i}{v_y^i}+\braket{v_x^i}{u_y^i}\bigr)
        =\Phi(x,y)-\Psi(x,y),
    \end{align*}
    where the last equality follows from Hermiticity of $\Phi-\Psi$. In addition,
    \begin{align*}
        \sum_{i=1}^{n}\Tr\bigl(X_i\circ\kbr{\mu}\bigr)
        &=\sum_{i=1}^{n}\E_{x\sim\mu}\|\kappa_x^i\|^2
        =\frac12\sum_{i=1}^{n}\E_{x\sim\mu}
        \bigl[\|u_x^i\|^2+\|v_x^i\|^2\bigr]
        =\gamma.
    \end{align*}
    Therefore, applying \Cref{thm:catalyst-state-conversion} with $r=1$ and
    $T:=\left\lceil\gamma/(1-\alpha)\right\rceil\leq 2\gamma/(1-\alpha)$ yields a Gram matrix
    $\widetilde{\Psi}$ with $\widetilde{\Psi}\circ I=I$, where
    $\mathsf{Q}(\Phi\to\widetilde{\Psi})\leq T$ and
    \begin{align*}
        \cF(\widetilde{\Psi}\circ\kbr{\mu},\Psi\circ\kbr{\mu})
        &\geq1-\frac{\sum_{i=1}^{n}\Tr\bigl(X_i\circ\kbr{\mu}\bigr)}{T}
        =1-\frac{\gamma}{T}\geq\alpha. \qedhere
    \end{align*}
\end{proof}


\section{Distributional adversary method}\label{sec:distributional-adversary}

In this section, we develop our notion of a distributional adversary method in
\Cref{sec:adv-definition} and relate it to quantum query complexity in \Cref{sec:adv-query}. In
\Cref{sec:adv-direct-sum}, we also show a direct-sum theorem for the measure itself.

\subsection{Definition and general properties}\label{sec:adv-definition} We begin by defining our
notion of a distributional adversary method, based on the Gram-matrix characterization of
distributional query complexity in \Cref{thm:fidelity-charecterization}. Specifically, we consider
the minimum cost of a transition from $J$ to a Gram matrix $\Psi$ computing $f$ over $\mu$, in terms
of the $\bar{\gamma}_2^\mu$ norm.

We note that Belovs \cite[Theorem~33 and Corollary~34]{Bel15} also gives a related
distributional adversary formulation for search problems.

\begin{definition}\label{def:adv} Let $\Sigma\subseteq\{0,1\}^n$, $f\subseteq\Sigma\times\Omega$ be
    a search problem, and $\mu:\Sigma\to[0,1]$ be a distribution. Also, let $\eps=\eps(n)\in[0,1)$
    be an error parameter. Then, the adversary measure for $f$ with respect to distribution $\mu$
    and error $\eps$ is defined as
    \begin{align*}
         \mathrm{Adv}^{\mu}_{\eps}(f) := \min_{\Psi \in \cG^{\mu}_{\eps}(f)}
         \bar{\gamma}_{2}^{\mu}(J-\Psi|\{\Delta_i\}),
    \end{align*}
    where $\cG^{\mu}_{\eps}(f)$ is the set of $(\eps,\mu)$-targets for $f$ as in
    \Cref{def:valid-grams}. In particular, the value is finite and the minimum is
    attained.\footnote{$\cG^{\mu}_{\eps}(f)$ is compact by \Cref{lem:compactness-gmueps} and
    $\bar{\gamma}_2^{\cdot}(\cdot|\{\Delta_i\})$ is jointly continuous as in
    \Cref{lem:equiv-avg-gamma2-query}.}
\end{definition}

\subsubsection{Equivalent primal and dual formulations}
We now show that a single adversary matrix can certify a common lower bound for all
$(\eps,\mu)$-targets for $f$. This will be especially useful when lifting a hardness witness for a
single instance to multiple copies of the problem.

\begin{lemma}\label{lem:adv-universal-hardness-cert} Let $\Sigma\subseteq\{0,1\}^n$,
$f\subseteq\Sigma\times\Omega$ be a search problem, and $\mu:\Sigma\to[0,1]$ be a distribution. For
$\eps=\eps(n)\in[0,1)$, we have
\begin{align*}
     \mathrm{Adv}^{\mu}_{\eps}(f)
     =\max_{\substack{\Gamma=\Gamma^\dagger\\
     \max_i\|\Gamma\circ\Delta_i\|\leq 1}}
     \min_{\Psi \in \cG^{\mu}_{\eps}(f)}
     \langle \Gamma,(J-\Psi) \circ \kbr{\mu}\rangle.
\end{align*}
Moreover, we may assume without loss of generality that $\Gamma\circ I=0$. In particular, the value
is finite and both the maximum and the inner minimum are attained.
\end{lemma}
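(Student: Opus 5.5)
The plan is to apply Sion's minimax theorem (\Cref{thm:sion-minimax}) to the bilinear function
\[
F(\Gamma,\Psi):=\langle\Gamma,(J-\Psi)\circ\kbr{\mu}\rangle .
\]
Here $\Gamma$ ranges over the compact convex set $\cL=\{\Gamma\in\Herm(\Sigma):\Gamma\circ I=0,\ \|\Gamma\circ\Delta_i\|\leq1\ \forall i\}$ from \Cref{lem:equiv-avg-gamma2-query}, and $\Psi$ ranges over $\cG^\mu_\eps(f)$. The starting point is \Cref{def:adv} together with \Cref{lem:equiv-avg-gamma2-query}. Every $\Psi\in\cG^\mu_\eps(f)$ has unit diagonal, so $B:=J-\Psi$ is Hermitian with $B\circ I=0$. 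Hence
\[
\mathrm{Adv}^\mu_\eps(f)=\min_{\Psi\in\cG^\mu_\eps(f)}\max_{\Gamma\in\cL}F(\Gamma,\Psi).
\]

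The hypotheses of Sion's theorem should then be routine to check. First, $F$ is real-valued, since both $\Gamma$ and $(J-\Psi)\circ\kbr{\mu}$ are Hermitian. Second, $F$ is affine, hence both convex and concave, in each argument separately, and it is jointly continuous. Third, $\cL$ is compact and convex by \Cref{lem:equiv-avg-gamma2-query}, and $\cG^\mu_\eps(f)$ is nonempty, convex and compact by \Cref{lem:compactness-gmueps}. Nonemptiness follows, for instance, from the Gram matrix of any exact algorithm, such as querying all bits, which lies in $\cG^\mu_\eps(f)$ by \Cref{thm:fidelity-charecterization}. Since both sets are compact, Sion's theorem with minima gives
\[
\min_{\Psi}\max_{\Gamma\in\cL}F(\Gamma,\Psi)=\max_{\Gamma\in\cL}\min_{\Psi}F(\Gamma,\Psi).
\]
The outer maximum is attained because $\Gamma\mapsto\min_\Psi F(\Gamma,\Psi)$ is upper semicontinuous (it is an infimum of continuous functions) on a compact set. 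The inner minimum is attained by continuity and compactness of $\cG^\mu_\eps(f)$. Finiteness is then immediate.

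It remains to remove the restriction $\Gamma\circ I=0$, so as to match the feasible set in the statement, where only $\Gamma=\Gamma^\dagger$ and $\max_i\|\Gamma\circ\Delta_i\|\leq1$ are imposed. The direction ``$\leq$'' is trivial, since $\cL$ is contained in the larger set. For ``$\geq$'', I would map an arbitrary feasible $\Gamma$ to $\Gamma\circ(J-I)$. This preserves feasibility, because $\Delta_i\circ I=0$ gives $(\Gamma\circ(J-I))\circ\Delta_i=\Gamma\circ\Delta_i$. It also leaves $F(\cdot,\Psi)$ unchanged for every $\Psi$, because $(J-\Psi)\circ\kbr{\mu}$ has zero diagonal. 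Thus the inner minimum is unchanged, and \Cref{lem:adversary-support-replacement}, applied with the mask $J-I$, yields equality together with the attainment transfer and the ``moreover'' claim.

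I expect no serious obstacle. The only points needing care are that compactness of $\cL$ genuinely uses the zero-diagonal restriction (without it the diagonal entries of $\Gamma$ are unbounded), which is exactly why minimax is applied on $\cL$ first and the restriction is lifted afterwards, and that $\cG^\mu_\eps(f)$ is nonempty.
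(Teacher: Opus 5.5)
Your proposal is correct and follows essentially the same route as the paper. You apply Sion's minimax on the compact convex sets $\cL$ (zero-diagonal, filter-bounded $\Gamma$) and $\cG^\mu_\eps(f)$, and then remove the constraint $\Gamma\circ I=0$ via \Cref{lem:adversary-support-replacement}, using that $(J-\Psi)\circ\kbr{\mu}$ has zero diagonal; your remarks on nonemptiness of $\cG^\mu_\eps(f)$ and attainment of the outer maximum are sound details the paper leaves implicit.
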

\begin{proof}
Let $\cL:=\{\Gamma\in\Herm(\Sigma):\Gamma\circ I=0,\ \|\Gamma\circ\Delta_i\|\leq1 \text{ for all
}i\in[n]\}$. Then, we can write
\[
    \mathrm{Adv}^{\mu}_{\eps}(f)
    =\min_{\Psi\in\cG^{\mu}_{\eps}(f)}\max_{\Gamma\in\cL}
    \langle\Gamma,(J-\Psi)\circ\kbr{\mu}\rangle,
\]
where we expanded $\bar{\gamma}_2^\mu$ according to \Cref{lem:equiv-avg-gamma2-query}. The sets
$\cL$ and $\cG^{\mu}_{\eps}(f)$ are compact and convex by \Cref{lem:equiv-avg-gamma2-query} and
\Cref{lem:compactness-gmueps}, respectively. The objective is continuous, linear in $\Gamma$, and
affine in $\Psi$. As a result, by Sion's minimax theorem (\Cref{thm:sion-minimax}),
\[
    \mathrm{Adv}^{\mu}_{\eps}(f)
    =\max_{\Gamma\in\cL}\min_{\Psi\in\cG^{\mu}_{\eps}(f)}
    \langle\Gamma,(J-\Psi)\circ\kbr{\mu}\rangle.
\]
Finally, replacing any feasible $\Gamma$ by $\Gamma':=\Gamma-\Gamma\circ I$ preserves the objective,
since $(J-\Psi)\circ I=0$, and preserves the constraints, since
$\|\Gamma'\circ\Delta_i\|=\|(\Gamma-\Gamma\circ I)\circ\Delta_i\| =\|\Gamma\circ\Delta_i\|$ for all
$i\in[n]$. Thus, by \Cref{lem:adversary-support-replacement}, we may remove the condition
$\Gamma\circ I=0$, completing the proof.
\end{proof}

\subsubsection{Analogy with the worst-case adversary bound}
In this part, we present the relation between $\mathrm{Adv}^{\mu}$ and the worst-case
negative-weight adversary bound $\mathrm{Adv}^{\pm}$, showing that our measure is in fact a
distributional analogue.
\begin{proposition}\label{prop:recovering-worst-case-adversary} Let $\Sigma\subset\{0,1\}^n$,
$f:\Sigma\to\Omega$ be a functional problem, and $\eps=\eps(n)\in[0,1/2)$ be an error parameter.
Then,
\[
    (1-4\sqrt{\eps})\mathrm{Adv}^{\pm}(f)
    \leq \max_{\mu}\mathrm{Adv}^{\mu}_{\eps}(f)
    \leq O\bigl(\mathrm{Adv}^{\pm}(f)\bigr),
\]
where the maximum is over all distributions $\mu:\Sigma\to[0,1]$.
Moreover, if $f:\Sigma\to\{0,1\}$ is a decision problem, we can replace $(1-4\sqrt{\eps})$ by the
tighter coefficient $(1-2\sqrt{\eps(1-\eps)})$ and $O\bigl(\mathrm{Adv}^{\pm}(f)\bigr)$ by
$\mathrm{Adv}^{\pm}(f)$.
\end{proposition}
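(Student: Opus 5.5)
Both bounds come from comparing the two sides through the filtered $\gamma_2$ norm. The lower bound uses an adversary matrix for $\mathrm{Adv}^{\pm}(f)$ together with the trace-norm bound of \Cref{lem:trace-norm-upper-bound-for-boolean-functions}. The upper bound uses \Cref{prop:average-to-worst-gamma2} applied to a single fixed target Gram matrix.

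\textbf{Lower bound.} Let $\Gamma$ be an optimal adversary matrix for $\mathrm{Adv}^{\pm}(f)$, so that $\Gamma=\Gamma\circ\Delta_f$, $\|\Gamma\circ\Delta_i\|\leq1$ and $\|\Gamma\|=\mathrm{Adv}^{\pm}(f)$. Multiplying $\Gamma$ by $-1$ if needed, we may assume $\|\Gamma\|=\lambda_{\max}(\Gamma)$. Let $\ket{u}$ be a top eigenvector. Using the phase-absorption trick from the proof of \Cref{prop:average-to-worst-gamma2}, replace $\Gamma$ by $D\Gamma D^\dagger$, where $D$ is the diagonal unitary of the phases of $\ket{u}$. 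This preserves $\Gamma\circ\Delta_f=\Gamma$ and all the norms $\|\Gamma\circ\Delta_i\|$, and it makes the top eigenvector equal to $\ket{\mu}$ with $\mu(x):=|u_x|^2$.

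Now take any $\Psi\in\cG^\mu_\eps(f)$ and use $\Gamma$ in the maximization form of \Cref{lem:adv-universal-hardness-cert}. Since $\Gamma=\Gamma\circ\Delta_f$, we get
\[
\langle\Gamma,(J-\Psi)\circ\kbr{\mu}\rangle=\bra{\mu}\Gamma\ket{\mu}-\langle\Gamma,\Psi\circ\Delta_f\circ\kbr{\mu}\rangle .
\]
H\"older's inequality and \Cref{lem:trace-norm-upper-bound-for-boolean-functions} give
\[
\langle\Gamma,(J-\Psi)\circ\kbr{\mu}\rangle\geq\|\Gamma\|-\|\Gamma\|\cdot4\sqrt{\eps}.
\]
For decision problems the same lemma replaces $4\sqrt{\eps}$ with $2\sqrt{\eps(1-\eps)}$. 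Minimizing over $\Psi$ then yields $\mathrm{Adv}^\mu_\eps(f)\geq(1-4\sqrt\eps)\mathrm{Adv}^{\pm}(f)$ for this particular $\mu$, and hence for the maximum over $\mu$.

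\textbf{Upper bound.} Take the worst-case optimal target $\Psi^\ast$ from \cite{LMR+11} that computes $f$ exactly. For a functional problem this is the agreement matrix $A_f$, a Gram matrix of unit vectors; for a decision problem one may equally take $S_f$. For every $\mu$, the matrix $W:=A_f$ satisfies $W\circ\Delta_f=0$ and $\Tr[W\circ\kbr{\mu}]=1$, so $W\in\cW^\mu_f$ and $\cF(W\circ\kbr\mu,A_f\circ\kbr\mu)=1$. Hence $A_f\in\cG^\mu_\eps(f)$ for all $\mu$ and all $\eps$. Consequently
\[
\max_\mu\mathrm{Adv}^\mu_\eps(f)\leq\max_\mu\bar\gamma_2^\mu(J-A_f\mid\{\Delta_i\})=\gamma_2(J-A_f\mid\{\Delta_i\})=\Theta(\mathrm{Adv}^{\pm}(f)),
\]
by \Cref{prop:average-to-worst-gamma2} and the stated equivalence of $\mathrm{Adv}^{\pm}$ with the filtered norm. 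In the decision case that equivalence is an exact equality, which gives the bound $\mathrm{Adv}^{\pm}(f)$ with no constant.

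\textbf{Main obstacle.} The delicate step is the phase absorption in the lower bound. It must preserve both $\Gamma\circ\Delta_f=\Gamma$ and the filter norms. Conjugation by a diagonal unitary commutes with Hadamard products by the $0/1$ matrices $\Delta_f$ and $\Delta_i$, and it preserves operator norms, so both properties hold and the step goes through.
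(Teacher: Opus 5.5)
Your proof is correct. The upper bound is exactly the paper's argument: $A_f$ is a valid target for every $\mu$ and every $\eps$, and \Cref{prop:average-to-worst-gamma2} converts $\max_\mu\bar{\gamma}_2^\mu(J-A_f\mid\{\Delta_i\})$ into the worst-case filtered norm.

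For the lower bound your route differs from the paper's.
\begin{itemize}
\item The paper works with the norm calculus of $\bar{\gamma}_2^\mu$. For every $\mu$ and every $\Psi\in\cG^\mu_\eps(f)$ it restricts to the filters $\Delta_i\circ\Delta_f$. It then applies the triangle inequality and the trace-norm upper bound to obtain $\mathrm{Adv}^\mu_\eps(f)\geq\bar{\gamma}_2^\mu(\Delta_f\mid\{\Delta_i\circ\Delta_f\})-4\sqrt{\eps}\,\mathrm{Adv}^{\pm}(f)$. Only at the end does it maximize over $\mu$ via \Cref{prop:average-to-worst-gamma2}.
\item You instead apply the phase-absorption step of \Cref{lem:max-over-dist} directly to one optimal $\mathrm{Adv}^{\pm}$ matrix. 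This produces an explicit hard distribution, namely the squared amplitudes of that matrix's principal eigenvector. The estimate then reduces to a single H\"older inequality in the dual form.
\end{itemize}
The ingredients are the same: \Cref{lem:trace-norm-upper-bound-for-boolean-functions}, phase absorption, and $A_f$ as a universal target. Your version is shorter and names the maximizing distribution. The paper's version gives an inequality valid for every $\mu$ and reuses general properties of $\bar{\gamma}_2^\mu$.

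Two small corrections.
\begin{itemize}
\item If $D\ket{\mu}=\ket{u}$, the rotated matrix with eigenvector $\ket{\mu}$ is $D^\dagger\Gamma D$, not $D\Gamma D^\dagger$. This is only a matter of convention.
\item Your aside that for decision problems one ``may equally take $S_f$'' is false in the distributional setting. $S_f\circ\kbr{\mu}$ is a pure state. Any witness state $W\circ\kbr{\mu}$ with $W\circ\Delta_f=0$ is block-diagonal, so its fidelity with $S_f\circ\kbr{\mu}$ is at most $\sqrt{\max_b\mu(f^{-1}(b))}$. Hence $S_f\notin\cG^\mu_\eps(f)$ in general. Even if it were, $J-S_f=2\Delta_f$ would lose a factor of two in the tight decision-case bound. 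Since your argument uses $A_f$ throughout, this does not affect the proof.
\end{itemize}
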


We postpone the proof to \Cref{app:recovering-worst-case-adversary}.

\subsection{Bounding query complexity}\label{sec:adv-query} Similar to the general adversary bound
\cite{LMR+11}, we show that $\mathrm{Adv}^{\mu}$ tightly characterizes $\mathsf{Q}^{\mu}$ up to a
constant-factor change in the error parameter. We begin with a one-line proof of the lower bound.
\begin{proposition}\label{adv-lowerbound} Let $\Sigma \subset \{0,1\}^{n}$, $f\subseteq \Sigma
    \times \Omega$ be a search problem, and $\mu:\Sigma\to [0,1]$ be a distribution. Also, let
    $\eps=\eps(n)\in[0,1)$ be an error parameter. Then, $\mathsf{Q}_{\eps}^{\mu}(f) \geq
    \mathrm{Adv}^{\mu}_{\eps}(f)/2.$
\end{proposition}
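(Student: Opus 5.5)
The argument chains three earlier results, and each step is essentially immediate. Let $t:=\mathsf{Q}^{\mu}_{\eps}(f)$.

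First, apply the fidelity characterization (\Cref{thm:fidelity-charecterization}). It gives a Gram matrix $\Psi^\star\in\cG^{\mu}_{\eps}(f)$ with $\mathsf{Q}(\Psi^\star)=t$. Concretely, this is the Gram matrix of the final work states of an optimal $t$-query algorithm, taken in the ``standard query $\Rightarrow$ witness'' direction. It satisfies $\Psi^\star\circ I=I$, since the final states are unit vectors, and it is PSD.

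Second, apply the hybrid lower bound (\Cref{prop:hybrid-gamma2}) to this $\Psi^\star$ and the distribution $\mu$. This yields
\[
t=\mathsf{Q}(\Psi^\star)\geq \tfrac12\,\bar{\gamma}_2^{\mu}(J-\Psi^\star\mid\{\Delta_i\}).
\]

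Third, use \Cref{def:adv}. Because $\mathrm{Adv}^{\mu}_{\eps}(f)$ is the minimum of $\bar{\gamma}_2^{\mu}(J-\Psi\mid\{\Delta_i\})$ over all $\Psi\in\cG^{\mu}_{\eps}(f)$, and $\Psi^\star$ lies in this set, we get
\[
\bar{\gamma}_2^{\mu}(J-\Psi^\star\mid\{\Delta_i\})\geq \mathrm{Adv}^{\mu}_{\eps}(f).
\]
Combining the two displays gives $\mathsf{Q}^{\mu}_{\eps}(f)\geq \mathrm{Adv}^{\mu}_{\eps}(f)/2$.

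There is no real obstacle here. The only points to check are bookkeeping conditions, all of which have already been established:
\begin{itemize}
\item the minimum in \Cref{thm:fidelity-charecterization} is attained;
\item the hypotheses of \Cref{prop:hybrid-gamma2} hold, namely that $\Psi^\star$ is a Gram matrix with unit diagonal;
\item the quantity $\bar{\gamma}_2^{\mu}(J-\Psi^\star\mid\{\Delta_i\})$ is finite, which follows from \Cref{lem:equiv-avg-gamma2-query} because $(J-\Psi^\star)\circ I=0$.
\end{itemize}
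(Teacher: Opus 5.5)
Your proof is correct and follows the paper's argument exactly: the paper also chains \Cref{thm:fidelity-charecterization} with \Cref{prop:hybrid-gamma2} and then minimizes over $\cG^{\mu}_{\eps}(f)$ to recover \Cref{def:adv}. Fixing an explicit minimizer $\Psi^\star$, as you do, simply unrolls the paper's one-line inequality between minima.
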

\begin{proof}
    \[
        \mathsf{Q}_{\eps}^{\mu}(f)
        = \min_{\Psi \in \cG^{\mu}_{\eps}(f)} \mathsf{Q}(\Psi)
        \geq \frac{1}{2}\min_{\Psi \in \cG^{\mu}_{\eps}(f)}
        \bar{\gamma}_2^{\mu}(J-\Psi|\{\Delta_i\})
        = \frac{1}{2}\mathrm{Adv}^{\mu}_{\eps}(f),
    \]
    where the first equality and the inequality follow from \Cref{thm:fidelity-charecterization} and
    \Cref{prop:hybrid-gamma2}, respectively.
\end{proof}

Next, we prove a general upper bound where the error from $\mathrm{Adv}^{\mu}$ is combined with a
state-conversion error.
\begin{proposition}\label{prop:adv-upperbound} Let $\Sigma \subset \{0,1\}^{n}$, $f\subseteq\Sigma
     \times \Omega$ be a search problem, and $\mu:\Sigma\to [0,1]$ be a distribution. Also, let
     $\eps=\eps(n)\in[0,1)$ and $\eta=\eta(n)\in(0,1)$ be error parameters satisfying
     $\sqrt{\eps}+\sqrt{\eta}<1$. Then,
    \begin{align*}
         \mathsf{Q}^{\mu}_{(\sqrt{\eps}+\sqrt{\eta})^2}(f)
         \leq \frac{4}{\eta} \cdot \mathrm{Adv}^{\mu}_{\eps}(f)
    \end{align*}
\end{proposition}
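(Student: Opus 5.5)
The plan is to combine the definition of $\mathrm{Adv}^\mu_\eps(f)$ with the state-conversion upper bound \Cref{prop:distalg-from-avggamma2}, and then convert the fidelity guarantee into a success-probability guarantee through \Cref{thm:fidelity-charecterization}.

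First, take $\Psi\in\cG^\mu_\eps(f)$ attaining $\mathrm{Adv}^\mu_\eps(f)=\bar\gamma_2^\mu(J-\Psi\mid\{\Delta_i\})$; this minimizer exists by \Cref{def:adv}. Let $W\in\cW^\mu_f$ be a witness with $\cF(W\circ\kbr{\mu},\Psi\circ\kbr{\mu})\geq\sqrt{1-\eps}$. Next, apply \Cref{prop:distalg-from-avggamma2} with $\Phi=J$ and fidelity parameter $\alpha:=1-\eta/2$. This gives a Gram matrix $\widetilde\Psi$ with $\widetilde\Psi\circ I=I$ and
\[
\cF(\widetilde\Psi\circ\kbr{\mu},\Psi\circ\kbr{\mu})\geq 1-\eta/2,
\qquad
\mathsf{Q}(\widetilde\Psi)\leq \frac{2}{\eta/2}\,\mathrm{Adv}^\mu_\eps(f)=\frac{4}{\eta}\mathrm{Adv}^\mu_\eps(f).
\]

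It then suffices to show $\widetilde\Psi\in\cG^\mu_{(\sqrt\eps+\sqrt\eta)^2}(f)$, since \Cref{thm:fidelity-charecterization} then yields the claimed bound. All three matrices $W\circ\kbr{\mu}$, $\Psi\circ\kbr{\mu}$ and $\widetilde\Psi\circ\kbr{\mu}$ are density operators, so we can use the purified distance $\cP$, which is a metric. We have
\[
\cP(\Psi\circ\kbr{\mu},W\circ\kbr{\mu})\leq\sqrt{\eps},
\qquad
\cP(\widetilde\Psi\circ\kbr{\mu},\Psi\circ\kbr{\mu})\leq\sqrt{1-(1-\eta/2)^2}=\sqrt{\eta-\eta^2/4}\leq\sqrt\eta.
\]
The triangle inequality gives $\cP(\widetilde\Psi\circ\kbr{\mu},W\circ\kbr{\mu})\leq\sqrt\eps+\sqrt\eta<1$, and hence
\[
\cF(\widetilde\Psi\circ\kbr{\mu},W\circ\kbr{\mu})\geq\sqrt{1-(\sqrt\eps+\sqrt\eta)^2}.
\]
This is exactly membership of $\widetilde\Psi$ in $\cG^\mu_{(\sqrt\eps+\sqrt\eta)^2}(f)$ with witness $W$.

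The only delicate step is choosing the fidelity parameter so that the query cost matches exactly $4/\eta$. With $\alpha=1-\eta/2$, the factor $2/(1-\alpha)=4/\eta$ comes out precisely, and the resulting purified-distance error $\sqrt{\eta-\eta^2/4}$ stays below $\sqrt\eta$.

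One could instead use the ``moreover'' part of \Cref{thm:catalyst-state-conversion} directly, but the metric argument is simpler. One should also check that the triangle inequality for $\cP$ is legitimate here; it is, because all the weighted matrices have unit trace.
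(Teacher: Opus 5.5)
Your proof is correct and follows essentially the same route as the paper. You take a minimizing $\Psi$ with its witness $W$, apply \Cref{prop:distalg-from-avggamma2} starting from $J$, use the triangle inequality for the purified distance, and conclude with \Cref{thm:fidelity-charecterization}. The only difference is cosmetic: the paper targets fidelity $\sqrt{1-\eta}$ and bounds the cost via $1-\sqrt{1-\eta}\geq\eta/2$, whereas your choice $\alpha=1-\eta/2$ gives the factor $4/\eta$ exactly, at the price of the slightly smaller purified-distance error $\sqrt{\eta-\eta^2/4}\leq\sqrt{\eta}$.
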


\begin{proof}[Proof of \Cref{prop:adv-upperbound}]
    Let $\Psi$ be a Gram matrix attaining the minimum $\mathrm{Adv}^{\mu}_{\eps}(f) =
    \bar{\gamma}_{2}^{\mu}(J-\Psi|\{\Delta_i\})$, and let $W\in \cW^{\mu}_{f}$ be the corresponding
    witness that satisfies $\cF(\Psi \circ \kbr{\mu},W\circ \kbr{\mu}) \geq \sqrt{1-\eps}$. By
    \Cref{prop:distalg-from-avggamma2}, there exists a quantum algorithm that converts Gram matrix
    $J$ to a Gram matrix $\widetilde{\Psi}$ satisfying
    $\cF(\Psi\circ\kbr{\mu},\widetilde{\Psi}\circ\kbr{\mu}) \geq \sqrt{1-\eta}$ using at most
    \[
       \frac{2}{1-\sqrt{1-\eta}}\bar{\gamma}_2^{\mu}(J-\Psi|\{\Delta_i\})
        \leq\frac{4}{\eta}\mathrm{Adv}^{\mu}_{\eps}(f)
    \]
    queries, where we used $1-\sqrt{1-\eta}\geq\eta/2$. As a result, by the triangle inequality for
    purified distance, we have
    \begin{align*}
        \sqrt{1-\cF^{2}(\widetilde{\Psi}\circ\kbr{\mu},W\circ\kbr{\mu})}
        &\leq \sqrt{1-\cF^{2}(\widetilde{\Psi}\circ\kbr{\mu},\Psi\circ\kbr{\mu})}
        +\sqrt{1-\cF^{2}(\Psi\circ\kbr{\mu},W\circ\kbr{\mu})} \\
        &\leq \sqrt{\eta}+\sqrt{\eps}.
    \end{align*}
    Therefore, by the reverse direction of \Cref{thm:fidelity-charecterization}, the algorithm
    computes $f$ over distribution $\mu$ with error at most $(\sqrt{\eps}+\sqrt{\eta})^2$.
\end{proof}

As a direct corollary, we obtain bounds in terms of a prescribed error or bias.
\begin{corollary}
    Let $\Sigma \subset \{0,1\}^{n}$, $f\subseteq\Sigma\times\Omega$ be a search problem, and
    $\mu:\Sigma\to[0,1]$ be a distribution. Also, let $\eps=\eps(n)\in(0,1)$ be an error parameter.
    Then,
    \begin{align*}
        \Omega\left(\mathrm{Adv}^{\mu}_{\eps}(f)\right)
        &\leq \mathsf{Q}^{\mu}_{\eps}(f)
        \leq O\left(\frac{\mathrm{Adv}^{\mu}_{\eps/2}(f)}{\eps}\right).
    \end{align*}
    Moreover, for a decision problem $f$, we also obtain the following bound in terms of a bias
    parameter $\delta=\delta(n)\in(0,1/2)$.\footnote{For $\delta>1/4$, the target error parameter
    $1/2-2\delta$ becomes negative. Since no valid target can have negative error, the second upper
    bound is vacuous in this case. The lower bound remains valid throughout the stated range.}
    \begin{align*}
        \Omega\left(\mathrm{Adv}^{\mu}_{1/2-\delta}(f)\right)
        &\leq \mathsf{Q}^{\mu}_{1/2-\delta}(f)
        \leq O\left(\frac{\mathrm{Adv}^{\mu}_{1/2-2\delta}(f)}{\delta^2}\right).
    \end{align*}
\end{corollary}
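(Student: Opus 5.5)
The plan is to read both bounds off \Cref{adv-lowerbound} and \Cref{prop:adv-upperbound} by choosing the auxiliary error parameters suitably. The lower bounds are immediate in both parts. \Cref{adv-lowerbound} gives $\mathsf{Q}^{\mu}_{\eps}(f)\geq\mathrm{Adv}^{\mu}_{\eps}(f)/2$ for every $\eps\in[0,1)$. Taking $\eps$ itself, and then $\eps=1/2-\delta$ for the decision case, gives the two $\Omega(\cdot)$ statements.

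For the first upper bound, I will apply \Cref{prop:adv-upperbound} with inner error $\eps/2$ and state-conversion error $\eta$, chosen so that $(\sqrt{\eps/2}+\sqrt{\eta})^2\leq\eps$. It suffices to take $\sqrt{\eta}=(1-1/\sqrt2)\sqrt{\eps}$, that is, $\eta=c\,\eps$ with $c=(1-1/\sqrt2)^2>0$. The hypothesis $\sqrt{\eps/2}+\sqrt{\eta}<1$ holds because the sum equals $\sqrt{\eps}<1$. The proposition then gives $\mathsf{Q}^{\mu}_{\eps}(f)\leq\mathsf{Q}^{\mu}_{(\sqrt{\eps/2}+\sqrt{\eta})^2}(f)\leq\frac{4}{c\eps}\mathrm{Adv}^{\mu}_{\eps/2}(f)$. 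The first inequality is the monotonicity of $\mathsf{Q}^\mu_\eps$ in the error, which is immediate from \Cref{def:distributional-query-complexity}.

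For the decision upper bound, I may assume $\delta\leq1/4$, since otherwise the right-hand side is vacuous as the footnote notes. I then apply \Cref{prop:adv-upperbound} with $\eps'=1/2-2\delta$ and some $\eta$. The requirement is $\sqrt{1/2-2\delta}+\sqrt{\eta}\leq\sqrt{1/2-\delta}$. Concavity of the square root gives $\sqrt{1/2-\delta}-\sqrt{1/2-2\delta}\geq\frac{\delta}{2\sqrt{1/2-\delta}}\geq\frac{\delta}{\sqrt2}$. So $\eta=\delta^2/2$ works, and the strict inequality $<1$ holds automatically because the sum is at most $\sqrt{1/2}$. This yields $\mathsf{Q}^{\mu}_{1/2-\delta}(f)\leq\frac{8}{\delta^2}\mathrm{Adv}^{\mu}_{1/2-2\delta}(f)$.

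No step is a real obstacle. The only points needing care are the elementary square-root inequality in the decision case and the edge cases: $\eps'=0$ when $\delta=1/4$ is allowed by \Cref{prop:adv-upperbound}, and $\eta\in(0,1)$ holds in both choices.
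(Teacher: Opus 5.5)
Your proposal is correct and follows the paper's proof. Both lower bounds come from \Cref{adv-lowerbound}, and both upper bounds come from \Cref{prop:adv-upperbound}, using inner error $\eps/2$ (respectively $1/2-2\delta$, after reducing to $\delta\le 1/4$) together with monotonicity of $\mathsf{Q}^\mu_\eps$ in the error. The only difference is your choice of $\eta$: you take $(1-1/\sqrt2)^2\eps$ and $\delta^2/2$, where the paper takes $\eps/18$ and $\delta^2/4$, and this affects only the constants (about $47/\eps$ and $8/\delta^2$ for you versus $72/\eps$ and $16/\delta^2$ in the paper).
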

\begin{proof}
    The lower-bound inequalities follow from \Cref{adv-lowerbound}. For the first upper bound, apply
    \Cref{prop:adv-upperbound} with error parameter $\eps/2$ and $\eta=\eps/18$. We have
    \begin{align*}
        \left(\sqrt{\eps/2}+\sqrt{\eps/18}\right)^2
        = \frac{8}{9}\eps \leq \eps.
    \end{align*}
    The query complexity is at most $72\mathrm{Adv}^{\mu}_{\eps/2}(f)/\eps$, which gives the first
    statement.
    \par
    For the second upper bound, assume $\delta\leq1/4$ and apply \Cref{prop:adv-upperbound} with
    error parameter $1/2-2\delta$ and $\eta=\delta^2/4$. We have
    \begin{align*}
        \left(\sqrt{\frac{1}{2}-2\delta}+\delta/2\right)^2
        &= \frac{1}{2}-2\delta+\delta\sqrt{\frac{1}{2}-2\delta}+\frac{\delta^2}{4}
        \leq \frac{1}{2}-\delta,
    \end{align*}
    where we used $\sqrt{1/2-2\delta}+\delta/4\leq1$. The query complexity is at most
    $16\mathrm{Adv}^{\mu}_{1/2-2\delta}(f)/\delta^2$, which gives the second statement.
\end{proof}

\subsection{Direct sum for distributional adversary}\label{sec:adv-direct-sum}
In the following, we prove a direct-sum theorem for the distributional adversary measure. The lower
bound preserves the error exactly, as in the worst-case setting \cite{ACLT10,LMR+11}. For the upper bound, however, our
construction uses a one-copy target with a smaller error than the $k$-copy target. The reason for
this distinction is that a single distribution can genuinely put weights on extremely hard inputs,
which naturally accumulates the error across the $k$ copies.

\begin{theorem}\label{thm:direct-sum-distributional-adversary} Let $\Sigma \subset \{0,1\}^{n}$,
    $f\subseteq \Sigma \times \Omega$ be a search problem, $\mu:\Sigma \to [0,1]$ be a distribution,
    $\eps=\eps(n)\in(0,1)$, and $k\in \bN$ with $k\geq 1$. Then,
    \begin{align*}
        k\cdot \mathrm{Adv}^{\mu}_{\eps}(f)\leq \mathrm{Adv}^{\mu^{\otimes k}}_{\eps}(f^{\otimes k})\leq
         k\cdot \mathrm{Adv}^{\mu}_{1-(1-\eps)^{1/k}}(f)
    \end{align*}
 \end{theorem}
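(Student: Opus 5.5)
The plan is to prove the two inequalities separately. The lower bound comes from a primal "restriction to one copy" argument using the minimization form of \Cref{prop:equiv-avg-gamma2}. The upper bound comes from tensoring an optimal one-copy target.

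\emph{Lower bound.} Let $\Psi_k\in\cG^{\mu^{\otimes k}}_\eps(f^{\otimes k})$ attain $\mathrm{Adv}^{\mu^{\otimes k}}_\eps(f^{\otimes k})$, with witness $W_k=\sum_{\mathbf s}W_k^{(\mathbf s)}$. The query filters on $k$ copies are $J\otimes_\ell\Delta_i$ for $\ell\in[k]$ and $i\in[n]$. Fix an optimal factorization $\{\ket{u^{\ell,i}_{\bx}},\ket{v^{\ell,i}_{\bx}}\}$ from the third form of \Cref{prop:equiv-avg-gamma2}.

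For each $\ell$ we build a one-copy Gram matrix. Set $\ket{\psi^{(\ell)}_x}:=\bigoplus_{z}\sqrt{\mu^{\otimes(k-1)}(z)}\,\ket{\psi_{z\otimes_\ell x}}$, which is a unit vector. This gives
\[
\Psi^{(\ell)}(x,y)=\E_{z}\Psi_k(z\otimes_\ell x,z\otimes_\ell y).
\]
On entries $(z\otimes_\ell x,\,z\otimes_\ell y)$, every filter with $\ell'\neq\ell$ vanishes. Hence
\[
(J-\Psi^{(\ell)})(x,y)=\sum_i\Delta_i(x,y)\braket{u^i_x}{v^i_y},
\]
where $\ket{u^i_x}:=\bigoplus_z\sqrt{\mu^{\otimes(k-1)}(z)}\ket{u^{\ell,i}_{z\otimes_\ell x}}$ and $\ket{v^i_y}$ is defined analogously. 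The $\mu$-cost of this factorization is exactly the $\ell$-th block of the $k$-copy cost. Summing over $\ell$ gives
\[
\sum_\ell\bar\gamma_2^\mu(J-\Psi^{(\ell)}\mid\{\Delta_i\})\leq\mathrm{Adv}^{\mu^{\otimes k}}_\eps(f^{\otimes k}).
\]

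It remains to show $\Psi^{(\ell)}\in\cG^\mu_\eps(f)$; I expect this to be the main obstacle. Define $W^{(\ell)}$ the same way from $W_k$, grouping the decomposition by the $\ell$-th output coordinate $s_\ell$. That is,
\[
W^{(\ell),(s)}(x,y)=\E_z\sum_{\mathbf s:\,s_\ell=s}W_k^{(\mathbf s)}(z\otimes_\ell x,z\otimes_\ell y).
\]
This decomposition is PSD, is correct because $s_\ell\notin f(x)$ forces the diagonal entries to vanish, and is $\mu$-normalized. For fidelity, note that $\sum_x\sqrt{\mu(x)}\ket{x}\ket{\psi^{(\ell)}_x}$ is the same vector as the $k$-copy purification, with the $z$ registers regarded as part of the purifying system. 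The same holds for the witness. Uhlmann's theorem (\Cref{thm:uhlmann}) then gives $\cF(\Psi^{(\ell)}\circ\kbr{\mu},W^{(\ell)}\circ\kbr{\mu})\geq\cF(\Psi_k\circ\kbr{\mu^{\otimes k}},W_k\circ\kbr{\mu^{\otimes k}})\geq\sqrt{1-\eps}$. The care needed is that the witness decomposition must be purified with the $\mathbf s$ register, and the $z$ and $s_{\bar\ell}$ registers moved into the purifying system. Each of the $k$ terms is therefore at least $\mathrm{Adv}^\mu_\eps(f)$, which gives the lower bound.

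\emph{Upper bound.} Let $\eps':=1-(1-\eps)^{1/k}$, and let $\Psi$ and $W$ be optimal for $\mathrm{Adv}^\mu_{\eps'}(f)$. Take $\Psi^{\otimes k}$ with witness $W^{\otimes k}=\sum_{\mathbf s}\bigotimes_\ell W^{(s_\ell)}$. This witness is valid, and fidelity tensorizes, so the fidelity is $(1-\eps')^{k/2}=\sqrt{1-\eps}$. Telescoping gives
\[
J-\Psi^{\otimes k}=\sum_\ell\bigl(\Psi^{\otimes(\ell-1)}\otimes J^{\otimes(k-\ell)}\bigr)\otimes_\ell(J-\Psi).
\]
By the triangle inequality and \Crefitem{item:avggamma2-hadamard-bounds}, and since the Gram matrix $\Psi^{\otimes(\ell-1)}\otimes J^{\otimes(k-\ell)}$ has $\gamma_2=1$, each term is at most $\bar\gamma_2^{\mu^{\otimes k}}(J\otimes_\ell(J-\Psi)\mid\{J\otimes_{\ell'}\Delta_i\})$. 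This in turn is at most $\bar\gamma_2^\mu(J-\Psi\mid\{\Delta_i\})$. To see this, use the vectors $\ket{u^{\ell,i}_{\bx}}:=\ket{u^i_{x^\ell}}$ (similarly for $v$) and set the other filters' vectors to zero; the cost is preserved because $\mu^{\otimes k}$ marginalizes to $\mu$. Summing over $\ell$ gives $k\,\mathrm{Adv}^\mu_{\eps'}(f)$.
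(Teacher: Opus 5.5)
Your proof is correct, but it works entirely on the primal (factorization) side, whereas the paper argues on the dual side in both directions.

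\textbf{Lower bound.} The paper invokes the minimax form of \Cref{lem:adv-universal-hardness-cert} to fix one optimal single-copy adversary matrix $\Gamma$. It lifts it to $\sum_{\ell} I^{\otimes(k-1)}\otimes_\ell\Gamma$, whose filtered norms equal those of $\Gamma$. It then uses the adjoint channel $\frac1k\sum_\ell\Tr_{\bar\ell}$ to pull every $k$-copy target $\Lambda$ back to a single averaged target $\Psi_\Lambda=\E_{\ell,z}[(\bra{z}\otimes_\ell I)\Lambda(\ket{z}\otimes_\ell I)]$. You instead fix an optimal $k$-copy target together with an optimal factorization from \Cref{prop:equiv-avg-gamma2}, and cut it into the $k$ per-coordinate targets $\Psi^{(\ell)}$. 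These are exactly the $\ell$-summands of the paper's $\Psi_\Lambda$, and your witness construction and fidelity step (monotonicity under partial trace) match the paper's.

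\textbf{Upper bound.} Both arguments use $\Psi^{\otimes k}$ and the same telescoping identity. The paper pushes it through the adjoint of $X\mapsto\sum_\ell\rho_\ell\otimes_\ell X$ and controls filtered norms by channel contraction (\Cref{lem:channel-norm-contraction}). You use the triangle inequality, the Hadamard bound against the Gram matrix $\Psi^{\otimes(\ell-1)}\otimes J^{\otimes(k-\ell+1)}$ (which has $\gamma_2=1$), and a padded one-copy factorization. These are dual views of the same estimate.

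\textbf{Trade-offs.} Your route needs only the SDP primal form (via Slater), not Sion's minimax theorem. It also shows directly that the $k$-copy factorization cost splits into $k$ per-coordinate costs, each at least $\mathrm{Adv}^\mu_\eps(f)$. The paper's route produces an explicit universal hardness certificate for $f^{\otimes k}$ from one for $f$, valid simultaneously against all targets. Its lifting-map/adjoint-channel format is the template the paper reuses for the composition theorem.
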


 \paragraph{Proof overview.}

 The main idea in both directions of the proof is to design a lifting map. For example, consider the
 lower bound. We construct a map $\cE^{\uparrow}$ that takes a one-copy hardness certificate
 $\Gamma$ to a $k$-copy candidate $\cE^{\uparrow}(\Gamma)$. To show that this candidate gives the
 desired lower bound, we use the adjoint map $\cE^{\downarrow}$ to relate its inner product with a
 $k$-copy target $\Lambda$ to a one-copy inner product:
\begin{align*}
\langle \cE^{\uparrow}(\Gamma),(J^{\otimes k}-\Lambda)\circ\kbr{\mu}^{\otimes k}\rangle
&=\langle\Gamma,\cE^{\downarrow}((J^{\otimes k}-\Lambda)\circ\kbr{\mu}^{\otimes k})\rangle\\
&=\langle\Gamma,(J-\Psi^{\downarrow})\circ\kbr{\mu}\rangle.
\end{align*}
The key is to show that $\cE^{\downarrow}$ takes valid $k$-copy targets to valid one-copy targets,
so that $\Psi^{\downarrow}\in\cG^\mu_\eps(f)$. Together with a bound on the query-filter norms of
$\cE^{\uparrow}(\Gamma)$, this allows us to obtain a $k$-copy hardness certificate from $\Gamma$.
\par
The upper bound follows a similar argument with the roles of $\Psi$ and $\Gamma$ reversed: we start
 with a one-copy algorithmic certificate $\Psi$ and lift it to the $k$-copy target $\Psi^{\otimes
 k}$. The lower and upper bounds are proved in \Cref{prop:direct-sum-lower-bound} and
 \Cref{prop:direct-sum-upper-bound}, respectively.

\subsubsection{Lower bound}
We begin with the harder direction, the lower bound.

\begin{proposition}\label{prop:direct-sum-lower-bound} Let $\Sigma \subset \{0,1\}^{n}$, $f\subseteq
    \Sigma \times \Omega$ be a search problem, and $\mu:\Sigma \to [0,1]$ be a distribution. Then,
    for $\eps=\eps(n)\in(0,1)$ and any integer $k\geq1$, $k\cdot \mathrm{Adv}^{\mu}_{\eps}(f)\leq
    \mathrm{Adv}^{\mu^{\otimes k}}_{\eps}(f^{\otimes k})$.
 \end{proposition}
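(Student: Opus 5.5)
The plan is to lift a single one-copy hardness certificate to a $k$-copy certificate. The adjoint of the lift should be a ``marginalize onto one copy'' map, and this map should send valid $k$-copy targets to valid one-copy targets. Concretely, by \Cref{lem:adv-universal-hardness-cert}, fix a Hermitian $\Gamma$ with $\Gamma\circ I=0$ and $\max_i\|\Gamma\circ\Delta_i\|\leq1$ attaining
\[
\min_{\Psi\in\cG^\mu_\eps(f)}\langle\Gamma,(J-\Psi)\circ\kbr{\mu}\rangle=\mathrm{Adv}^\mu_\eps(f).
\]
I define the lift
\[
\cE^{\uparrow}(\Gamma):=\sum_{\ell=1}^{k} I_{\bar\ell}\otimes_\ell\Gamma ,
\]
where $I_{\bar\ell}$ is the identity on the other $k-1$ copies. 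It then suffices, again by \Cref{lem:adv-universal-hardness-cert} applied to $f^{\otimes k}$ and $\mu^{\otimes k}$, to show two things. First, $\max_{\ell,i}\|\cE^{\uparrow}(\Gamma)\circ\Delta_{\ell,i}\|\leq1$. Second, for every $\Lambda\in\cG^{\mu^{\otimes k}}_\eps(f^{\otimes k})$ the objective is at least $k\,\mathrm{Adv}^\mu_\eps(f)$.

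For the filter constraint, the query filter $\Delta_{\ell,i}$ (bit $i$ of copy $\ell$) is a tensor product. It equals $\Delta_i$ on copy $\ell$ and $J$ elsewhere. For $\ell'\neq\ell$, the term $I_{\bar\ell'}\otimes_{\ell'}\Gamma$ has the factor $I$ on copy $\ell$, and $I\circ\Delta_i=0$, so that term is killed. Only $\ell'=\ell$ survives, giving
\[
\|\cE^{\uparrow}(\Gamma)\circ\Delta_{\ell,i}\|=\|I\otimes(\Gamma\circ\Delta_i)\|=\|\Gamma\circ\Delta_i\|\leq1.
\]

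For the objective, I compute the adjoint of $\Gamma\mapsto I_{\bar\ell}\otimes_\ell\Gamma$; it is the partial trace $\Tr_{\bar\ell}$. Since $\kbr{\mu}^{\otimes k}$ is a tensor product and the other factors have diagonal entries $\mu(z^r)$, we get
\[
\Tr_{\bar\ell}\bigl[(J^{\otimes k}-\Lambda)\circ\kbr{\mu}^{\otimes k}\bigr]=(J-\Psi^{\downarrow}_\ell)\circ\kbr{\mu},
\qquad
\Psi^{\downarrow}_\ell(x,y):=\sum_{z}\mu^{\otimes(k-1)}(z)\,\Lambda(z\otimes_\ell x,\,z\otimes_\ell y).
\]
So the objective equals $\sum_\ell\langle\Gamma,(J-\Psi^\downarrow_\ell)\circ\kbr{\mu}\rangle$. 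It therefore suffices to show $\Psi^\downarrow_\ell\in\cG^\mu_\eps(f)$ for each $\ell$, which gives at least $k\,\mathrm{Adv}^\mu_\eps(f)$.

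For validity, note first that $\Psi^\downarrow_\ell$ is the Gram matrix of $\bigoplus_z\sqrt{\mu^{\otimes(k-1)}(z)}\,\ket{\lambda_{z\otimes_\ell x}}$. Hence it is PSD with unit diagonal. For the fidelity condition, let $W$ be the $k$-copy witness for $\Lambda$, and set $W^\downarrow_\ell$ to be the analogous marginal, with $W^\downarrow_\ell\circ\kbr{\mu}=\Tr_{\bar\ell}[W\circ\kbr{\mu}^{\otimes k}]$. Since the partial trace is a channel, monotonicity of fidelity gives
\[
\cF(\Psi^\downarrow_\ell\circ\kbr{\mu},W^\downarrow_\ell\circ\kbr{\mu})\geq\cF(\Lambda\circ\kbr{\mu}^{\otimes k},W\circ\kbr{\mu}^{\otimes k})\geq\sqrt{1-\eps}.
\]
The step I expect to need the most care is showing $W^\downarrow_\ell\in\cW^\mu_f$ for search problems. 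I would decompose $W=\sum_{\mathbf s}W^{(\mathbf s)}$ and define
\[
W^{\downarrow,(s)}_\ell(x,y):=\sum_{\mathbf s:\,s_\ell=s}\ \sum_z\mu^{\otimes(k-1)}(z)\,W^{(\mathbf s)}(z\otimes_\ell x,\,z\otimes_\ell y).
\]
Each such matrix is PSD, being a weighted sum of principal-submatrix compressions of PSD matrices. Correctness holds because $W^{(\mathbf s)}$ has zero diagonal at $z\otimes_\ell x$ whenever $\mathbf s\notin f^{\otimes k}(z\otimes_\ell x)$. In particular this happens whenever $s_\ell\notin f(x)$, since $f^{\otimes k}$ is a product relation. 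Normalization follows from $\Tr[\Tr_{\bar\ell}(\cdot)]=\Tr[\cdot]$. Summing over $\ell$ then completes the bound.
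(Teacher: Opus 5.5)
Your proposal is correct and follows essentially the same route as the paper: lift $\Gamma$ by $\sum_\ell I_{\bar\ell}\otimes_\ell\Gamma$, check the filtered norms, and pull $k$-copy targets and witnesses back through partial traces using monotonicity of fidelity. The paper uses the normalized lift $\frac1k\sum_\ell I^{\otimes(k-1)}\otimes_\ell\Gamma$ with adjoint channel $\frac1k\sum_\ell\Tr_{\bar\ell}$, producing the single averaged target $\Psi_\Lambda=\frac1k\sum_\ell\Psi^\downarrow_\ell$; you instead keep the $k$ marginals separate and bound each term by $\mathrm{Adv}^\mu_\eps(f)$, which is only a cosmetic difference.
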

\begin{proof}
    \leavevmode
    \paragraph{Overview.}
    Let $\cH=\bC^\Sigma$, using the tensor notation from \Cref{sec:prelim-hilbert-spaces}. By
    \Cref{lem:adv-universal-hardness-cert}, we can write
    \begin{align*}
        \mathrm{Adv}^{\mu}_{\eps}(f)
        =\max_{\substack{\Gamma=\Gamma^\dagger\\
        \max_i\|\Gamma\circ\Delta_i\|\leq1}}
        \min_{\Psi \in \cG^{\mu}_{\eps}(f)}
        \langle \Gamma, (J-\Psi)\circ \kbr{\mu}\rangle.
    \end{align*}
    Let $\Gamma$ be the adversary matrix achieving the maximum. The main idea is to lift $\Gamma$
    into a good adversary matrix for $f^{\otimes k}$. To do so, we define a lifting map
    $\cE^{\uparrow}:\cL(\cH)\to \cL(\cH^{\otimes k})$ as
    \begin{align*}
        \cE^{\uparrow}(X)
        :=\frac{1}{k}\sum_{\ell=1}^{k}I^{\otimes(k-1)}\otimes_\ell X
        =\frac{1}{k}\sum_{\ell=1}^{k}I^{\otimes(\ell-1)}\otimes X\otimes I^{\otimes(k-\ell)}.
    \end{align*}
    That is, $\cE^{\uparrow}(\Gamma)$ consists of a uniform combination of a single copy of $\Gamma$
    on a block $\ell \in [k]$ with identity on the remaining blocks $\bar{\ell}$. To show that
    $\cE^{\uparrow}(\Gamma)$ is a good adversary candidate, we need to show that the lifting map
    behaves well with respect to the query filters $\Delta_{ij}$ and inner-products with target Gram
    matrices in $\cG^{\mu^{\otimes k}}_{\eps}(f^{\otimes k})$.

    \paragraph{Query filters.}
    \begin{claim}\label{claim:direct-sum-lower-filters} For $i\in[k]$ and $j\in[n]$, the query
        filter corresponding to coordinate $j$ of block $i$ is
        $\Delta_{ij}:=J^{\otimes(k-1)}\otimes_i\Delta_j$. Moreover, for every $X\in\cL(\cH)$,
        $
            \cE^{\uparrow}(X)\circ\Delta_{ij}
            =\frac{1}{k}I^{\otimes(k-1)}\otimes_i(X\circ\Delta_j).
        $
        Consequently,
        \begin{align*}
            \max_{i,j}\|\cE^{\uparrow}(\Gamma)\circ\Delta_{ij}\|
            =\frac{1}{k}\max_j\|\Gamma\circ\Delta_j\|.
        \end{align*}
    \end{claim}
    \begin{proof}
        For $\mathbf{x},\mathbf{y}\in\Sigma^k$, we have
        $
            (J^{\otimes(k-1)}\otimes_i\Delta_j)(\mathbf{x},\mathbf{y})
            =\Delta_j(x^i,y^i)=\ind[x_j^i\neq y_j^i],
        $
        which proves the stated form of $\Delta_{ij}$. Moreover, for every $X\in\cL(\cH)$,
        \begin{align*}
            \cE^{\uparrow}(X)\circ\Delta_{ij}
            &=\frac{1}{k}\sum_{\ell=1}^{k}
            (I^{\otimes(k-1)}\otimes_\ell X)
            \circ(J^{\otimes(k-1)}\otimes_i\Delta_j)
        =\frac{1}{k}I^{\otimes(k-1)}\otimes_i(X\circ\Delta_j),
        \end{align*}
        where every term with $\ell\neq i$ vanishes because its $i$-th tensor factor is
        $I\circ\Delta_j=0$. Taking $X=\Gamma$ and using multiplicativity of the operator norm gives
        $\|\cE^{\uparrow}(\Gamma)\circ\Delta_{ij}\| =\|\Gamma\circ\Delta_j\|/k$, which proves the
        claim.
    \end{proof}

    \paragraph{Inner product.}
    \begin{claim}\label{claim:direct-sum-lower-inner-product} For every
        $\Lambda\in\cG_{\eps}^{\mu^{\otimes k}}(f^{\otimes k})$, there exists
        $\Psi_\Lambda\in\cG_{\eps}^{\mu}(f)$ such that
        \begin{align*}
            \langle\Gamma,(J-\Psi_\Lambda)\circ\kbr{\mu}\rangle
            =\langle\cE^{\uparrow}(\Gamma),
            (J^{\otimes k}-\Lambda)\circ\kbr{\mu}^{\otimes k}\rangle.
        \end{align*}
    \end{claim}
    \begin{proof}
        We first compute the adjoint of $\cE^{\uparrow}$ as
        \begin{align*}
            \cE^{\downarrow}(Y)
            :=(\cE^{\uparrow})^{\dagger}(Y)
            =\frac{1}{k}\sum_{\ell=1}^{k}\Tr_{\bar\ell}[Y].
        \end{align*}
        The correctness of the adjoint follows from linearity and the fact that the adjoint of
        $X\mapsto I^{\otimes(k-1)}\otimes_\ell X$ is the partial trace $Y\mapsto\Tr_{\bar\ell}[Y]$.
        Also, $\cE^{\downarrow}$ is a quantum channel since it is an average of partial-trace
        channels. We note that, for every $X\in\cL(\cH^{\otimes k})$,
        \begin{align}\label{eq:direct-sum-weighted-reduction}
            \cE^{\downarrow}(X\circ\kbr{\mu}^{\otimes k})
            =\kbr{\mu}\circ
            \E_{\substack{\ell\sim[k]\\z\sim\mu^{\otimes(k-1)}}}
            \left[(\bra{z}\otimes_\ell I)X(\ket{z}\otimes_\ell I)\right].
        \end{align}
        Indeed, for every $x,y\in\Sigma$, the $(x,y)$-entry of the left-hand side is
        \begin{align*}
            &\frac{1}{k}\sum_{\ell=1}^k\sum_{z\in\Sigma^{k-1}}
            (\bra{z}\otimes_\ell\bra{x})(X\circ\kbr{\mu}^{\otimes k})
            (\ket{z}\otimes_\ell\ket{y})\\
            &\qquad=\sqrt{\mu(x)\mu(y)}\frac{1}{k}
            \sum_{\ell=1}^k\sum_{z\in\Sigma^{k-1}}
            \mu^{\otimes(k-1)}(z)
            (\bra{z}\otimes_\ell\bra{x})X(\ket{z}\otimes_\ell\ket{y}),
        \end{align*}
        which is the $(x,y)$-entry of the right-hand side. Define $\Psi_\Lambda :=
        \E_{\ell,z}[(\bra{z}\otimes_{\ell} I)\Lambda (\ket{z}\otimes_{\ell} I)]$. For $X =
        J^{\otimes k}$, we have $(\bra{z}\otimes_{\ell} I)J^{\otimes k}(\ket{z}\otimes_{\ell} I) =
        J^{\otimes(k-1)}(z,z)J = J$. Therefore, applying \eqref{eq:direct-sum-weighted-reduction} to
        $J^{\otimes k}$ and $\Lambda$, respectively, gives
        \begin{align*}
            \cE^{\downarrow}(J^{\otimes k}\circ\kbr{\mu}^{\otimes k})
            &=J\circ\kbr{\mu},\\
            \cE^{\downarrow}(\Lambda \circ \kbr{\mu}^{\otimes k}) &=
            \kbr{\mu}\circ \Psi_\Lambda.
        \end{align*}
        As a result, by definition of the adjoint map,
        \begin{align*}
            \langle\cE^{\uparrow}(\Gamma),
            (J^{\otimes k}-\Lambda)\circ\kbr{\mu}^{\otimes k}\rangle
            &=\langle\Gamma,
            \cE^{\downarrow}((J^{\otimes k}-\Lambda)\circ\kbr{\mu}^{\otimes k})\rangle\\
            &=\langle\Gamma,(J-\Psi_\Lambda)\circ\kbr{\mu}\rangle.
        \end{align*}
        \par
        It remains to prove that $\Psi_\Lambda\in\cG_{\eps}^{\mu}(f)$. Each compression of $\Lambda$
        in the definition of $\Psi_\Lambda$ is positive semidefinite. Moreover, for every
        $x\in\Sigma$,
        \begin{align*}
            \Psi_\Lambda(x,x)=\E_{\ell,z}
            (\bra{z}\otimes_\ell\bra{x})\Lambda
            (\ket{z}\otimes_\ell\ket{x})=1.
        \end{align*}
        Hence, $\Psi_\Lambda$ is a Gram matrix. Let $W =
        \sum_{\mathbf{s}\in\Omega^k}W^{\mathbf{s}}\in \cW_{f^{\otimes k}}^{\mu^{\otimes k}}$ be a
        witness such that $\cF(\Lambda\circ\kbr{\mu}^{\otimes k}, W\circ\kbr{\mu}^{\otimes k}) \geq
        \sqrt{1-\eps}$, and define $U := \E_{\ell,z}[(\bra{z}\otimes_{\ell} I)W
        (\ket{z}\otimes_{\ell} I)]$. By \eqref{eq:direct-sum-weighted-reduction},
        $\cE^{\downarrow}(W\circ\kbr{\mu}^{\otimes k}) = \kbr{\mu}\circ U$. For every $a \in
        \Omega$, define $U^a := \E_{\ell,z}\sum_{\mathbf{s}:s_{\ell}=a}(\bra{z}\otimes_{\ell} I)
        W^{\mathbf{s}}(\ket{z}\otimes_{\ell} I)$. Each $U^a$ is positive semidefinite and $\sum_a
        U^a = U$. Also, if $a \notin f(x)$, then for every $\ell,z$ and $\mathbf{s}$ with $s_{\ell}
        = a$, the output $\mathbf{s}$ is invalid for the product input corresponding to
        $\ket{z}\otimes_\ell\ket{x}$. The correctness of $W^{\mathbf{s}}$ therefore gives
        $U^a(x,x)=0$. Finally,
        \begin{align*}
            \Tr[U\circ\kbr{\mu}]
            =\Tr[\cE^{\downarrow}(W\circ\kbr{\mu}^{\otimes k})]
            =\Tr[W\circ\kbr{\mu}^{\otimes k}]=1.
        \end{align*}
        Therefore, $U\in\cW_f^\mu$. The matrices $\Lambda\circ\kbr{\mu}^{\otimes k}$ and
        $W\circ\kbr{\mu}^{\otimes k}$ are density matrices. Since $\cE^{\downarrow}$ is a quantum
        channel, monotonicity of fidelity gives
        \begin{align*}
            \cF(\Psi_\Lambda\circ\kbr{\mu},U\circ\kbr{\mu})
            &\geq\cF(\Lambda\circ\kbr{\mu}^{\otimes k},
            W\circ\kbr{\mu}^{\otimes k})
            \geq\sqrt{1-\eps}.
        \end{align*}
        Hence, $\Psi_\Lambda\in\cG_{\eps}^{\mu}(f)$.
    \end{proof}

    \paragraph{Putting everything together.}
    By \Cref{claim:direct-sum-lower-filters},
    $\max_{i,j}\|k\cE^{\uparrow}(\Gamma)\circ\Delta_{ij}\|\leq1$. Thus, by
    \Cref{lem:adv-universal-hardness-cert}, we have
    \begin{align*}
        \mathrm{Adv}_{\eps}^{\mu^{\otimes k}}(f^{\otimes k})
        &=
        \progline{\max_{\Theta=\Theta^\dagger}
        \min_{\Lambda \in \cG_{\eps}^{\mu^{\otimes k}}(f^{\otimes k})}}
        {\langle \Theta,(J^{\otimes k}-\Lambda)\circ\kbr{\mu}^{\otimes k}\rangle}
        {\max_{i,j}\|\Theta\circ\Delta_{ij}\|\leq1}\\
        &\geq
        \min_{\Lambda \in \cG_{\eps}^{\mu^{\otimes k}}(f^{\otimes k})}
        \langle k\cE^{\uparrow}(\Gamma),
        (J^{\otimes k}-\Lambda)\circ\kbr{\mu}^{\otimes k}\rangle\\
        &=k\min_{\Lambda \in \cG_{\eps}^{\mu^{\otimes k}}(f^{\otimes k})}
        \langle \Gamma,(J-\Psi_\Lambda)\circ\kbr{\mu}\rangle\\
        &\geq k\min_{\Psi \in \cG_{\eps}^{\mu}(f)}
        \langle \Gamma,(J-\Psi)\circ\kbr{\mu}\rangle\\
        &=k\cdot\mathrm{Adv}_{\eps}^{\mu}(f).
    \end{align*}
    Here, the second line follows by taking $\Theta=k\cE^{\uparrow}(\Gamma)$, the third line follows
    from \Cref{claim:direct-sum-lower-inner-product}, the fourth line uses
    $\Psi_\Lambda\in\cG_{\eps}^{\mu}(f)$, and the last equality follows from our choice of $\Gamma$.
\end{proof}
\subsubsection{Upper bound}
Next, we prove the upper bound in \Cref{thm:direct-sum-distributional-adversary}.
\begin{proposition}\label{prop:direct-sum-upper-bound} Let $\Sigma \subset \{0,1\}^{n}$, $f\subseteq
    \Sigma \times \Omega$ be a search problem, $\mu:\Sigma \to [0,1]$ be a distribution,
    $\eps=\eps(n)\in(0,1)$, and $k\in \bN$ with $k\geq 1$. Then $\mathrm{Adv}^{\mu^{\otimes
    k}}_{\eps}(f^{\otimes k}) \leq k\cdot \mathrm{Adv}^{\mu}_{1-(1-\eps)^{1/k}}(f)$.
\end{proposition}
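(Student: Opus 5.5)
The plan is the one sketched in the overview: lift a one-copy algorithmic certificate to a $k$-copy one by tensoring. Set $\eps':=1-(1-\eps)^{1/k}$ and let $\Psi\in\cG^{\mu}_{\eps'}(f)$ attain $\mathrm{Adv}^{\mu}_{\eps'}(f)=\bar{\gamma}_2^\mu(J-\Psi\mid\{\Delta_j\})$, with witness $W=\sum_s W^{(s)}\in\cW^\mu_f$ satisfying $\cF(\Psi\circ\kbr{\mu},W\circ\kbr{\mu})\geq\sqrt{1-\eps'}$. The candidate $k$-copy target is $\Psi^{\otimes k}$. It then suffices to prove two things: that $\Psi^{\otimes k}\in\cG^{\mu^{\otimes k}}_{\eps}(f^{\otimes k})$, and that
\[
\bar{\gamma}_2^{\mu^{\otimes k}}\bigl(J^{\otimes k}-\Psi^{\otimes k}\mid\{\Delta_{ij}\}\bigr)\leq k\,\bar{\gamma}_2^{\mu}(J-\Psi\mid\{\Delta_j\}).
\]
Here $\Delta_{ij}=J^{\otimes(k-1)}\otimes_i\Delta_j$, as in \Cref{claim:direct-sum-lower-filters}.

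\emph{Validity of the target.} First, $\Psi^{\otimes k}$ is PSD with unit diagonal. For the witness, take $W^{\otimes k}=\sum_{\mathbf{s}}W^{(s_1)}\otimes\cdots\otimes W^{(s_k)}$. Each summand is PSD. Its diagonal entry at $\bx$ is $\prod_\ell W^{(s_\ell)}(x^\ell,x^\ell)$, which vanishes unless $s_\ell\in f(x^\ell)$ for every $\ell$, so correctness holds. Normalization follows from $\Tr[W^{\otimes k}\circ\kbr{\mu}^{\otimes k}]=\Tr[W\circ\kbr{\mu}]^k=1$. Since $\kbr{\mu}^{\otimes k}=\kbr{\mu^{\otimes k}}$ and fidelity tensorizes,
\[
\cF\bigl(\Psi^{\otimes k}\circ\kbr{\mu}^{\otimes k},W^{\otimes k}\circ\kbr{\mu}^{\otimes k}\bigr)\geq(1-\eps')^{k/2}=\sqrt{1-\eps}.
\]

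\emph{The norm bound.} I would telescope:
\[
J^{\otimes k}-\Psi^{\otimes k}=\sum_{\ell=1}^k \Psi^{\otimes(\ell-1)}\otimes(J-\Psi)\otimes J^{\otimes(k-\ell)}.
\]
By the triangle inequality of \Cref{prop:avggamma2-properties}, it is enough to bound each term by $\gamma:=\bar{\gamma}_2^\mu(J-\Psi\mid\{\Delta_j\})$. To do this I use the vector minimization form of \Cref{prop:equiv-avg-gamma2}. Take optimal vectors $u_x^j,v_x^j$ with
\[
(J-\Psi)(x,y)=\sum_j\Delta_j(x,y)\braket{u_x^j}{v_y^j},\qquad \tfrac12\sum_j\E_\mu\bigl[\|u_x^j\|^2+\|v_x^j\|^2\bigr]=\gamma.
\]
Also take Gram vectors $\ket{\psi_x}$ for $\Psi$ (unit vectors) and $\ket{1}$ for $J$. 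For the $\ell$-th term, use the filters $\Delta_{\ell j}$ and set
\[
\ket{U_{\bx}^{\ell j}}:=\ket{\psi_{x^1}}\otimes\cdots\otimes\ket{\psi_{x^{\ell-1}}}\otimes\ket{u^j_{x^\ell}},
\]
and define $\ket{V_{\by}^{\ell j}}$ analogously. Their inner products reproduce the term exactly, because $\Delta_{\ell j}(\bx,\by)=\Delta_j(x^\ell,y^\ell)$. The cost is unchanged, since $\|U^{\ell j}_{\bx}\|^2=\|u^j_{x^\ell}\|^2$ and $x^\ell\sim\mu$ under $\mu^{\otimes k}$. Summing over $\ell$ gives a feasible factorization of $J^{\otimes k}-\Psi^{\otimes k}$ against $\{\Delta_{ij}\}$ with cost $k\gamma$. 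Therefore
\[
\mathrm{Adv}^{\mu^{\otimes k}}_\eps(f^{\otimes k})\leq\bar{\gamma}_2^{\mu^{\otimes k}}\bigl(J^{\otimes k}-\Psi^{\otimes k}\mid\{\Delta_{ij}\}\bigr)\leq k\gamma.
\]

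\emph{Main obstacle.} The only delicate point is the support caveat. The constraints in \Cref{prop:equiv-avg-gamma2} are required only on $\supp(\mu)$, respectively $\supp(\mu^{\otimes k})=\supp(\mu)^k$, and the factorization above is built coordinatewise on exactly this product set, so the caveat is consistent. I expect this to need only a brief remark.
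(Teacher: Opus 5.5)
Your proof is correct. The first half, validity of $\Psi^{\otimes k}$ via the tensored witness $W^{\otimes k}$ and multiplicativity of fidelity, is exactly the paper's argument; the difference is in the norm bound. The paper stays on the dual side. It packages the same telescoping identity into the map $\cE^{\uparrow}(X)=\sum_{\ell}\rho_\ell\otimes_\ell X$ with $\rho_\ell=(\Psi\circ\kbr{\mu})^{\otimes(\ell-1)}\otimes(J\circ\kbr{\mu})^{\otimes(k-\ell)}$, which gives $\langle\Gamma,(J^{\otimes k}-\Psi^{\otimes k})\circ\kbr{\mu}^{\otimes k}\rangle=\langle\cE^{\downarrow}(\Gamma),(J-\Psi)\circ\kbr{\mu}\rangle$. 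It then bounds $\|\cE^{\downarrow}(\Gamma)\circ\Delta_j\|\leq k\max_{\ell}\|\Gamma\circ\Delta_{\ell j}\|$, using that each $X\mapsto\rho_\ell\otimes_\ell X$ is a channel whose adjoint contracts the operator norm (\Cref{lem:channel-norm-contraction}). You stay on the primal side instead: you tensor an optimal one-copy factorization with Gram vectors of $\Psi$ and $J$ to get an explicit $k$-copy factorization of cost $k\gamma$.

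What each route buys:
\begin{itemize}
\item Your route is more concrete. It yields a primal certificate that could be fed straight into the state-conversion algorithm, and it is essentially the lifting the paper itself uses in \Cref{claim:triv-amplification}. It does rely on SDP strong duality (\Cref{prop:equiv-avg-gamma2}) to obtain a one-copy factorization of cost exactly $\gamma=\bar{\gamma}_2^{\mu}(J-\Psi\mid\{\Delta_j\})$; only weak duality is needed on the $k$-copy side.
\item The paper's route needs only the maximization form of $\bar{\gamma}_2$ for this step. It also gives the stronger structural fact that every $k$-copy adversary matrix pulls back to a one-copy one with filtered norms blown up by at most $k$, mirroring the lifting-map/adjoint structure of the matching lower bound.
\end{itemize}

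Your support remark is right. The triangle inequality is also unnecessary: different $\ell$ use different filter indices $\Delta_{\ell j}$, so your combined factorization already has cost $k\gamma$.
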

\begin{proof}
    Let $\delta := 1- (1-\eps)^{1/k}$. We begin with the min-max formulation in
    \Cref{def:adv,lem:equiv-avg-gamma2-query}:
    \begin{align*}
        \mathrm{Adv}^{\mu}_{\delta}(f)
        =\min_{\Psi \in \cG^{\mu}_{\delta}(f)}
        \max_{\substack{\Gamma=\Gamma^\dagger\\
        \max_i\|\Gamma\circ\Delta_i\|\leq1}}
        \langle \Gamma,(J-\Psi)\circ\kbr{\mu}\rangle,
    \end{align*}
    and choose $\Psi$ attaining the minimum above. Our goal is to show that the lifted Gram matrix
    $\Psi^{\otimes k}$ belongs to $\cG^{\mu^{\otimes k}}_{\eps}(f^{\otimes k})$ and use it as a
    certificate to upper bound $\mathrm{Adv}^{\mu^{\otimes k}}_{\eps}(f^{\otimes k})$.
    \paragraph{The lifted witness}
    \begin{claim}\label{claim:direct-sum-upper-target} $ \Psi^{\otimes k} \in \cG^{\mu^{\otimes
       k}}_{\eps}(f^{\otimes k}).$
    \end{claim}
\begin{proof}
    Since $\Psi$ is a Gram matrix, $\Psi^{\otimes k}$ is also a Gram matrix. Let $W\in \cW^{\mu}_f$
    be the corresponding witness of $\Psi$, where $\cF(\Psi \circ \kbr{\mu}, W\circ \kbr{\mu}) \geq
    \sqrt{1-\delta}$. Also, let $W = \sum_{s\in \Omega} W^{s}$ be the decomposition of $W$
    satisfying $W^{s}\succeq 0$ and $W^{s}(x,x) = 0$ for all $s\in \Omega$ and $s\notin f(x)$.
    \par
    Let $\hat{W} := \bigotimes_{j\in [k]} W$. Then,
    \begin{align*}
        \Tr[\hat{W}\circ \kbr{\mu^{\otimes k}}] =
        \Tr[\left(
        \bigotimes_{j\in [k]} W \right)\circ \left(\bigotimes_{j\in [k]} \kbr{\mu} \right)]
        = \prod_{j\in [k]} \Tr[W\circ \kbr{\mu}] = 1.
    \end{align*}
    Moreover, by the decomposition $W = \sum_{s} W^{s}$, we can write
    \begin{align*}
        \hat{W} = \bigotimes_{j\in [k]} \left(\sum_{s\in \Omega} W^s\right) = \sum_{(s^1,\dots,s^{k})\in\Omega^k}
        W^{s^1}\otimes \dots \otimes W^{s^{k}}.
    \end{align*}
    By PSDness of each $W^{s^{j}}$, each summand is also PSD. Moreover, if $(s^1,\dots,s^{k}) \notin
    f^{\otimes k}(x^1,\dots,x^k)$, there exists some index $j$ such that $s^{j} \notin f(x^{j})$.
    Consequently,
    $$\left(W^{s^1}\otimes \dots\otimes W^{s^{k}}\right) ((x^1,\dots,x^{k}),(x^1,\dots,x^{k})) =
    W^{s^{1}}(x^{1},x^1)\cdots W^{s^{j}}(x^{j},x^j)\cdots W^{s^{k}}(x^{k},x^k) = 0.$$ Thus, we
    conclude that $\hat{W} \in \cW^{\mu^{\otimes k}}_{f^{\otimes k}}$.
    \par
    We finish the proof by computing the fidelity as follows.
    \begin{align*}
        \cF(\Psi^{\otimes k} \circ \kbr{\mu}^{\otimes k}, \hat{W}\circ \kbr{\mu^{\otimes k}}) &=
        \cF(\Psi^{\otimes k} \circ \kbr{\mu}^{\otimes k}, W^{\otimes k}\circ \kbr{\mu}^{\otimes k})\\
        &=  \cF(\Psi\circ \kbr{\mu}, W\circ \kbr{\mu})^{k} \\
        &\geq (\sqrt{1-\delta})^{k} = \sqrt{1-\eps}.
        \qedhere
    \end{align*}
\end{proof}
\paragraph{Inner product}
\begin{claim}\label{claim:direct-sum-upper-inner-product} There exists a map $\cE^{\downarrow} :
    \Herm(\Sigma^{k}) \to \Herm(\Sigma)$ such that for all $\Gamma \in \Herm(\Sigma^{k})$,
    \begin{align}
        \langle \Gamma, (J^{\otimes k}-\Psi^{\otimes k})\circ \kbr{\mu}^{\otimes k}\rangle =
        \langle \cE^{\downarrow}(\Gamma), (J - \Psi)\circ \kbr{\mu}\rangle,
        \label{eq:direct-sum-upper-inner-product}
    \end{align}
    and $\max_j\|\cE^{\downarrow}(\Gamma)\circ \Delta_j\| \leq k\max_{\ell,j}\|\Gamma \circ
    \Delta_{\ell j}\|$.
\end{claim}
\begin{proof}
    Let $\rho_\ell := (\Psi\circ \kbr{\mu})^{\otimes (\ell-1)}\otimes (J\circ
    \kbr{\mu})^{\otimes(k-\ell)}$, and define the map $\cE^{\uparrow}(X) := \sum_{\ell \in
    [k]}\rho_\ell\otimes_{\ell} X$. Both $\Psi\circ\kbr{\mu}$ and $J\circ\kbr{\mu}$ are density
    operators, and hence, so is $\rho_\ell$. Then, by the telescoping identity,
    \begin{align*}
        \cE^{\uparrow}((J-\Psi)\circ \kbr{\mu})
        &=\sum_{\ell=1}^{k}(\Psi\circ \kbr{\mu})^{\otimes (\ell-1)}\otimes
        ((J-\Psi)\circ \kbr{\mu})\otimes
        (J\circ \kbr{\mu})^{\otimes(k-\ell)} \\
        &= (J\circ \kbr{\mu})^{\otimes k} - (\Psi \circ \kbr{\mu})^{\otimes k} \\
        &= (J^{\otimes k}- \Psi^{\otimes k})\circ \kbr{\mu}^{\otimes k}.
    \end{align*}
    Using the fact that adjoint and sum commute, and the adjoint of tensor product is a partial
    trace on the corresponding legs, the adjoint of $\cE^{\uparrow}$ can be computed as
    \begin{align*}
        \cE^{\downarrow}(Y) := (\cE^{\uparrow})^{\dagger}(Y)
        = \sum_{\ell=1}^{k} \Tr_{\bar{\ell}}[(\rho_\ell\otimes_{\ell} I_{\ell})Y],
    \end{align*}
    which maps Hermitian matrices to Hermitian matrices. As a result, we can compute
    \begin{align*}
        \langle \Gamma, (J^{\otimes k}-\Psi^{\otimes k})\circ \kbr{\mu}^{\otimes k}\rangle
        &=\langle \Gamma, \cE^{\uparrow}((J-\Psi)\circ \kbr{\mu})\rangle\\
        &=\langle \cE^{\downarrow}(\Gamma), (J-\Psi)\circ \kbr{\mu}\rangle.
    \end{align*}
    It only remains to show the bound on the filtered norms. For every $\ell\in[k]$, we have
    \begin{align*}
        \Tr_{\bar{\ell}}[(\rho_\ell\otimes_{\ell} I_{\ell})\Gamma]\circ\Delta_j
        = &\Tr_{\bar{\ell}}[((\rho_\ell\otimes_{\ell} I_{\ell})\Gamma)
        \circ(J^{\otimes(k-1)}\otimes_{\ell}\Delta_j)]\\
        = &\Tr_{\bar{\ell}}[(\rho_\ell\otimes_{\ell} I_{\ell})
        (\Gamma\circ(J^{\otimes(k-1)}\otimes_{\ell}\Delta_j))]\\
        = &\Tr_{\bar{\ell}}[(\rho_\ell\otimes_{\ell} I_{\ell})
        (\Gamma\circ\Delta_{\ell j})].
    \end{align*}
    The map $X\mapsto\rho_\ell\otimes_{\ell}X$ is a quantum channel, whose adjoint is
    $Y\mapsto\Tr_{\bar{\ell}}[(\rho_\ell\otimes_{\ell}I_{\ell})Y]$. Therefore, by
    \Cref{lem:channel-norm-contraction},
    \begin{align*}
        \|\cE^{\downarrow}(\Gamma) \circ \Delta_{j}\|
        &\leq\sum_{\ell=1}^{k}\|\Tr_{\bar{\ell}}[(\rho_\ell\otimes_{\ell}I_{\ell})
        (\Gamma\circ\Delta_{\ell j})]\|\\
        &\leq\sum_{\ell=1}^{k}\|\Gamma\circ\Delta_{\ell j}\|
        \leq k\max_{\ell}\|\Gamma\circ\Delta_{\ell j}\|.
    \end{align*}
    Taking the maximum over $j$ proves the claim.
\end{proof}
\paragraph{Putting things together}
By \Cref{claim:direct-sum-upper-inner-product}, $\max_j\|\cE^{\downarrow}(\Gamma)\circ\Delta_j\|\leq
k$ whenever $\max_{\ell,j}\|\Gamma\circ\Delta_{\ell j}\|\leq1$. Thus, by
\Cref{lem:adv-universal-hardness-cert}, we have
\begin{align*}
    \mathrm{Adv}^{\mu^{\otimes k}}_{\eps}(f^{\otimes k})
    &=\progline{\max_{\Gamma=\Gamma^\dagger}
    \min_{\Lambda\in\cG_{\eps}^{\mu^{\otimes k}}(f^{\otimes k})}}
    {\langle \Gamma,(J^{\otimes k}-\Lambda)\circ\kbr{\mu}^{\otimes k}\rangle}
    {\max_{\ell,j}\|\Gamma\circ\Delta_{\ell j}\|\leq1}\\
    &\leq\progline{\max_{\Gamma=\Gamma^\dagger}}
    {\langle \Gamma,(J^{\otimes k}-\Psi^{\otimes k})\circ\kbr{\mu}^{\otimes k}\rangle}
    {\max_{\ell,j}\|\Gamma\circ\Delta_{\ell j}\|\leq1}\\
    &=\progline{\max_{\Gamma=\Gamma^\dagger}}
    {\langle \cE^{\downarrow}(\Gamma),(J-\Psi)\circ\kbr{\mu}\rangle}
    {\max_{\ell,j}\|\Gamma\circ\Delta_{\ell j}\|\leq1}\\
    &\leq\progline{\max_{\Theta=\Theta^\dagger}}
    {\langle \Theta,(J-\Psi)\circ\kbr{\mu}\rangle}
    {\max_j\|\Theta\circ\Delta_j\|\leq k}\\
    &=k\progline{\max_{\Theta=\Theta^\dagger}}
    {\langle \Theta,(J-\Psi)\circ\kbr{\mu}\rangle}
    {\max_j\|\Theta\circ\Delta_j\|\leq1}\\
    &=k\cdot\mathrm{Adv}^{\mu}_{\delta}(f).
\end{align*}
Here, the second line follows from \Cref{claim:direct-sum-upper-target}, the third line follows from
\Cref{claim:direct-sum-upper-inner-product}, and the fourth line uses the bound above. The fifth
line follows by replacing $\Theta$ with $k\Theta$, and the last equality follows from our choice of
$\Psi$.
\end{proof}

\section{Relaxed and Shaltiel-free adversary method}\label{sec:relaxed-and-sf-adversary-method}
In this section, we introduce the measures needed for the composition theorem in
\Cref{sec:composition-adversary}. More specifically, we first relax the distributional adversary
method in \Cref{sec:relaxed-adversary} to obtain a more analytically flexible measure. We then use
the relaxed measure to define the Shaltiel-free adversary in \Cref{sec:shaltiel-free-adversary},
which will serve as the measure for the inner function in the composition lower bound.
\par
We note that, throughout this section, we will be working only on functional and decision problems
where disagreement and agreement matrices $\Delta_f$ and $A_f$ are well-defined.

\subsection{Relaxed adversary method}\label{sec:relaxed-adversary} We first motivate our definition
as follows. Let $\Psi$ be a given Gram matrix, and write $\Psi = \Psi \circ A_f + \Psi \circ
\Delta_f$. Then, intuitively, the disagreement part $\Psi \circ \Delta_f$ encodes the remaining
overlap between inputs $x,y$ with different outputs $f(x)\neq f(y)$, and therefore how
distinguishable the final states corresponding to $x$ and $y$ are. On the other hand, the agreement
part $\Psi \circ A_f$ records overlaps between inputs with the same answer $f(x)=f(y)$ and, in a
sense, reflects how the physical implementation of the algorithm retains information about the
inputs themselves.
\par
In this subsection, we consider a nonphysical relaxation based only on the disagreement part, using
a trace-norm bound. More specifically, we consider relaxed Hermitian targets $\Psi$ with the fixed
agreement part $\Psi \circ A_f = A_f$, so that $J-\Psi = (J-\Psi) \circ \Delta_f$ only records
disagreement entries. We do not require these targets to be PSD. We also measure the ``trace-norm
error'' $\delta$ in the disagreement part by considering $\|(\Psi - A_f) \circ \kbr{\mu}\|_{\tr} =
\|\Psi \circ \Delta_f \circ \kbr{\mu}\|_{\tr}\leq \delta$.
\par
As shown in \Cref{cor:radv-query-comparison}, this measure still bounds the quantum query
complexity, albeit with a quadratic error loss. Moreover, for decision problems,
\Cref{lem:radv-universal-hardness-cert} gives a very nice dual form that we will use to define the
composition measure from it.

We also note that our relaxation allows for Hermitian matrices $\Psi$. In fact, if $\Psi$
is a Gram matrix with $\Psi \circ A_f = A_f$, all entries $\Psi(x,y)$ will depend only on the
output values $f(x)$ and $f(y)$, restricting the disagreement part $\Psi\circ\Delta_f$ as well.

\begin{definition}\label{def:radv} Let $\Sigma \subset \{0,1\}^{n}$, $f:\Sigma \to \Omega$ be a
    functional problem, and $\mu:\Sigma\to [0,1]$ be a distribution. Also, let
    $\delta=\delta(n)\in[0,1]$ be a relaxation parameter. We define the set of (relaxed) targets for
    $f$ as the following set of Hermitian matrices.
    \begin{align*}
     \cR_{\delta}^{\mu}(f):= \{\Psi\in \Herm(\Sigma): \Psi \circ A_f = A_f ; \|\Psi \circ \Delta_f \circ \kbr{\mu}\|_{\tr} \leq \delta\}.
    \end{align*}
    Then, the relaxed adversary measure for $f$ with respect to distribution $\mu$ and relaxation
    parameter $\delta$ is defined as
    \begin{align*}
         \mathrm{rAdv}^{\mu}_{\delta}(f) :=& \min_{\Psi \in \cR^{\mu}_{\delta}(f)} \bar{\gamma}_{2}^{\mu}(J-\Psi|\{\Delta_i\})
         = \progbox{\min_{\Psi \in \Herm(\Sigma)}}{\bar{\gamma}_2^{\mu}(J-\Psi|\{\Delta_i\})}{ \Psi \circ A_f = A_f   \\ \|\Psi \circ \Delta_f \circ \kbr{\mu}\|_{\tr} \leq \delta}
    \end{align*}
\end{definition}

The following lemma shows that we may restrict the feasible set to a compact subset, justifying
attainment of the minimum.
\begin{lemma}\label{lem:radv-compact-restriction} Let $\Sigma,f,\mu,\delta$ be as in
\Cref{def:radv}, and write $S:=\supp(\mu)$. Then,
\begin{align*}
    \mathrm{rAdv}^{\mu}_{\delta}(f)
    &= \min_{\Psi\in\cR}\bar{\gamma}_2^\mu(J-\Psi|\{\Delta_i\}),
\end{align*}
where $\cR:=\{\Psi\in\cR^\mu_\delta(f): \Psi\circ\Delta_f\circ J_S=\Psi\circ\Delta_f\}$ is a
nonempty, convex, and compact set. In particular, the value is finite and the minimum is attained.
\end{lemma}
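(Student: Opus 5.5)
The plan is to show that replacing a feasible $\Psi$ by its ``support-restricted'' version never increases the objective, and then to verify that $\cR$ is compact and convex. The minimum is then attained by continuity.

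\emph{Reduction to $\cR$.} Given $\Psi\in\cR^\mu_\delta(f)$, define
\[
\Psi' := A_f + \Psi\circ\Delta_f\circ J_S.
\]
This matrix is Hermitian and satisfies $\Psi'\circ A_f=A_f$, since $\Delta_f\circ A_f=0$. Its disagreement part is $\Psi\circ\Delta_f\circ J_S$, and
\[
\Psi\circ\Delta_f\circ J_S\circ\kbr{\mu}=\Psi\circ\Delta_f\circ\kbr{\mu},
\]
because $\kbr{\mu}$ is already supported on $S\times S$. So the trace-norm constraint is unchanged and $\Psi'\in\cR$. For the objective, $J-\Psi'$ and $J-\Psi$ agree on $S\times S$. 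By the support restriction property \Crefitem{item:avggamma2-support}, $\bar\gamma_2^\mu(\cdot\mid\{\Delta_i\})$ depends only on the $S\times S$ block, so the two objective values coincide. Hence the infimum over $\cR^\mu_\delta(f)$ equals the infimum over $\cR$.

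\emph{Properties of $\cR$.} $\cR$ is nonempty because $A_f$ itself lies in it, as $A_f\circ\Delta_f=0$. Convexity holds because the constraints are affine equalities together with a convex trace-norm bound. Closedness holds because all constraints are continuous.

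Boundedness is the only step needing care, and I expect it to be the main obstacle (though a mild one). The entries of $\Psi\in\cR$ are fixed on agreement pairs, namely equal to $A_f$. They vanish on disagreement pairs outside $S\times S$. On disagreement pairs inside $S\times S$ they satisfy
\[
|\Psi(x,y)|\sqrt{\mu(x)\mu(y)}\le\|\Psi\circ\Delta_f\circ\kbr{\mu}\|\le\|\Psi\circ\Delta_f\circ\kbr{\mu}\|_{\tr}\le\delta,
\]
so $|\Psi(x,y)|\le\delta/\min_{z\in S}\mu(z)$. Every entry is therefore bounded, and $\cR$ is compact.

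\emph{Attainment.} The map $\Psi\mapsto\bar\gamma_2^\mu(J-\Psi\mid\{\Delta_i\})$ is continuous by \Cref{lem:equiv-avg-gamma2-query}. This lemma applies because $J-\Psi$ has zero diagonal: every diagonal pair is an agreement pair, so $\Psi\circ I=I$. A continuous function on a nonempty compact set attains its minimum, which is finite. This proves the lemma.
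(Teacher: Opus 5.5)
Your proposal is correct and follows essentially the same route as the paper: your $\Psi'$ is exactly the paper's masked matrix $\Psi\circ(A_f+\Delta_f\circ J_S)$, and your nonemptiness, convexity, entrywise-boundedness, and continuity arguments match the paper's. The only difference is cosmetic: you verify the reduction to $\cR$ directly instead of invoking the support-replacement lemma (\Cref{lem:adversary-support-replacement}).
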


\begin{proof}
Let $M:=A_f+\Delta_f\circ J_S$. Replacing $\Psi$ by $\Psi\circ M$ preserves both constraints, and by
    \Crefitem{item:avggamma2-support}, $\bar{\gamma}_2^\mu(J-\Psi\circ M|\{\Delta_i\})
    =\bar{\gamma}_2^\mu((J-\Psi)\circ J_S|\{\Delta_i\}) =\bar{\gamma}_2^\mu(J-\Psi|\{\Delta_i\})$.
    Therefore, \Cref{lem:adversary-support-replacement} applies with the mask $A_f+\Delta_f\circ
    J_S$ and function $F(\Psi) = -\bar{\gamma}_2^{\mu}(J-\Psi|\{\Delta_i\})$, so we may restrict to
    the set where $\Psi = \Psi\circ M = \Psi \circ (A_f + \Delta_f \circ J_S)$.

    This just means that $\Psi(x,y)=0$ whenever $f(x)\neq f(y)$ and $\mu(x)\mu(y)=0$. The set $\cR$
    is nonempty because $A_f\in\cR$, and convex because its defining constraints are affine or
    convex. It is also compact: it is closed, and its only unfixed entries satisfy
    $|\Psi(x,y)|\leq\delta/\sqrt{\mu(x)\mu(y)}$ for $x,y\in S$. Thus, by continuity of the objective
    in \Cref{lem:equiv-avg-gamma2-query}, the minimum is attained.
\end{proof}

\subsubsection{Equivalent primal and dual formulations}\label{sec:radv-dual} In this part, we first
provide a minimax characterization of $\mathrm{rAdv}^{\mu}_{\delta}(f)$ for functional problems. We
then obtain a simpler dual formulation for decision problems. Its pairing and spectral-norm
penalty are closely related to the distribution-distinguishing bound of \cite{BBH+18}.

\begin{proposition}\label{lem:radv-universal-hardness-cert} Let $\Sigma \subset \{0,1\}^{n}$,
$f:\Sigma\to\Omega$ be a functional problem, and $\mu:\Sigma\to[0,1]$ be a distribution. Then, for
any relaxation parameter $\delta=\delta(n)\in[0,1]$,
\begin{align*}
    \mathrm{rAdv}^{\mu}_{\delta}(f)
    = &\progat{6.5cm}
    {\max_{\Gamma=\Gamma^{\dagger}}}
    {\min_{\Psi\in\cR^{\mu}_{\delta}(f)}
    \langle\Gamma,(J-\Psi)\circ\kbr{\mu}\rangle}
    {&\Gamma\circ I=0; \\
     &\max_{i\in[n]}\|\Gamma\circ\Delta_i\|\leq 1.}
\end{align*}
Moreover, if $f$ is a decision problem, then
\begin{align*}
    \mathrm{rAdv}^{\mu}_{\delta}(f)
    = &\progat{6.5cm}
    {\max_{\Gamma=\Gamma^{\dagger}}}
    {\langle \Gamma,\Delta_f\circ \kbr{\mu}\rangle-\delta\|\Gamma\|}
    {&\Gamma\circ\Delta_f=\Gamma; \\
     &\max_{i\in[n]}\|\Gamma\circ\Delta_i\|\leq 1.}
\end{align*}
In particular, the value is finite and both maxima are attained.
\end{proposition}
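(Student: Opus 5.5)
For the first (minimax) formulation, I will follow the proof of \Cref{lem:adv-universal-hardness-cert}. By \Cref{lem:radv-compact-restriction}, the minimum over $\cR^\mu_\delta(f)$ may be replaced by the minimum over the compact convex set $\cR$ without changing the value. By \Cref{lem:equiv-avg-gamma2-query}, $\bar{\gamma}_2^\mu(J-\Psi\mid\{\Delta_i\})$ is the maximum of $\langle\Gamma,(J-\Psi)\circ\kbr{\mu}\rangle$ over the compact convex set $\cL$ of Hermitian, zero-diagonal $\Gamma$ with $\max_i\|\Gamma\circ\Delta_i\|\leq1$. The objective is bilinear and continuous, so Sion's theorem (\Cref{thm:sion-minimax}) swaps min and max. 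It remains to replace the inner minimum over $\cR$ back with the minimum over $\cR^\mu_\delta(f)$. For fixed $\Gamma$, masking $\Psi\mapsto\Psi\circ M$ with $M=A_f+\Delta_f\circ J_S$ keeps the objective unchanged, since the $\mu$-weight kills entries outside $S\times S$. Hence the inner minima over $\cR$ and over $\cR^\mu_\delta(f)$ agree.

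For the decision-problem form, I will evaluate the inner minimum explicitly. Every $\Psi\in\cR^\mu_\delta(f)$ has the form $\Psi=A_f+E$, where $E=E\circ\Delta_f$ is Hermitian and $\|E\circ\kbr{\mu}\|_{\tr}\leq\delta$. Then $J-\Psi=\Delta_f-E$, and
\[
\min_{\Psi}\langle\Gamma,(J-\Psi)\circ\kbr{\mu}\rangle=\langle\Gamma,\Delta_f\circ\kbr{\mu}\rangle-\max_{E}\langle\Gamma\circ\Delta_f,E\circ\kbr{\mu}\rangle .
\]
The key step is to compute this maximum over the set of $Y:=E\circ\kbr{\mu}$. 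These are the Hermitian matrices supported on $\Delta_f\circ J_S$ with $\|Y\|_{\tr}\leq\delta$. For a decision problem the support $\Delta_f$ is bipartite (off-diagonal blocks between $f^{-1}(0)$ and $f^{-1}(1)$), so the claim is that the maximum equals $\delta\|(\Gamma\circ\Delta_f)|_S\|$.

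The upper bound is H\"older. For the lower bound, I will take a top eigenvector $v=(v_0,v_1)$ of the block-antidiagonal Hermitian matrix $G=\Gamma\circ\Delta_f$. Its spectrum is symmetric, with $\|G\|=\lambda_{\max}$, and $\|v_0\|=\|v_1\|$. I then set $Y=\delta(\ketbra{v}-\ketbra{\tilde v})/2$ with $\tilde v=(v_0,-v_1)$. This $Y$ is block-antidiagonal, has trace norm at most $\delta$, and gives $\langle G,Y\rangle=\delta\|G\|$. This bipartite trick is where the decision-problem hypothesis is used, and I expect it to be the main point to get right.

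Finally I clean up the maximization over $\Gamma$. The objective now depends on $\Gamma$ only through $\Gamma\circ\Delta_f$ (restricted to $S$). Replacing $\Gamma$ by $\Gamma\circ\Delta_f$ keeps $\Gamma\circ I=0$ and does not increase $\|\Gamma\circ\Delta_i\|$, because $\Delta_f$ is block structured with $\gamma_2(\Delta_f)\leq1$. By \Cref{lem:adversary-support-replacement}, we may therefore impose $\Gamma\circ\Delta_f=\Gamma$; the zero-diagonal condition is then automatic.

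The remaining discrepancy is that the penalty is $\delta\|\Gamma|_S\|$ rather than $\delta\|\Gamma\|$. Restricting $\Gamma$ to $S$ (masking by $J_S$) only decreases the filter norms and leaves the pairing unchanged. So the supremum with penalty $\|\Gamma\|$ is the same as the one with penalty $\|\Gamma|_S\|$: we may always assume $\Gamma=\Gamma\circ J_S$, and in general $\|\Gamma\|\geq\|\Gamma|_S\|$. Attainment follows from compactness of the feasible set, using \Cref{lem:equiv-avg-gamma2-query}, and continuity of the objective.
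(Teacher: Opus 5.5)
Your proposal is correct and follows essentially the same route as the paper: Sion's theorem on the compact restrictions $\cR$ and $\cL$, then, for decision problems, evaluating the trace-norm-constrained inner problem by trace/spectral duality. Like the paper, you then clean up the constraints on $\Gamma$ via \Cref{lem:adversary-support-replacement} and $\gamma_2(\Delta_f)\leq1$; the paper merely restricts $\Gamma$ to $\Gamma\circ\Delta_f\circ J_S$ before, rather than after, computing the inner maximum.

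Your explicit optimizer $Y=\delta(\kbr{v}-\kbr{\tilde v})/2$ is exactly $\delta\,\kbr{v}\circ\Delta_f$. It is therefore a concrete instance of the paper's argument that masking by $\Delta_f\circ J_S$ does not increase the trace norm, so the support constraint on $K$ can be dropped.
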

\begin{proof}
Let $S:=\supp(\mu)$. We first apply Sion's minimax theorem to interchange the minimum and maximum.
For decision problems, we then use restrictions and a change of variables to evaluate the inner
optimization using duality between the spectral and trace norms.
\paragraph{Applying minimax.}
We begin by writing $\mathrm{rAdv}^{\mu}_{\delta}(f)$ in a form where we can apply Sion's minimax
theorem; that is,
\begin{align*}
    \mathrm{rAdv}^{\mu}_{\delta}(f)
    &= \min_{\Psi \in \cR'}\max_{\Gamma \in \cL}
    \langle\Gamma,(J-\Psi)\circ\kbr{\mu}\rangle,
\end{align*}
where we restricted the domains of $\Psi$ and $\Gamma$ to the sets
\begin{align*}
    \cR'
    &= \{\Psi \in \Herm(\Sigma): \Psi\circ A_f = A_f, \|\Psi \circ \Delta_f \circ \kbr{\mu}\|_{\tr}\leq \delta,
    \Psi\circ\Delta_f\circ J_S =\Psi\circ\Delta_f
    \}, \\
    \cL &= \{\Gamma \in \Herm(\Sigma): \Gamma\circ I=0,\ \max_{i\in[n]}\|\Gamma\circ\Delta_i\|\leq1\},
\end{align*}
by \Cref{lem:radv-compact-restriction} and \Cref{lem:equiv-avg-gamma2-query}, respectively. Both
$\cR'$ and $\cL$ are compact and convex sets and Sion's minimax \Cref{thm:sion-minimax} gives
\begin{align}
    \mathrm{rAdv}^{\mu}_{\delta}(f)
    =
   \max_{\Gamma \in \cL}  \min_{\Psi \in \cR'}
    {\langle\Gamma,(J-\Psi)\circ\kbr{\mu}\rangle}.
\end{align}
Enlarging $\cR'$ back to $\cR^{\mu}_{\delta}(f)$ does not change the inner minimum, proving the
general formulation.
We now continue with the case where $f$ is a decision problem.
\paragraph{Restricting $\cL$.}
Next, we claim that we can further restrict the outer set $\cL$ to
$\cL':=\{\Gamma\in\cL:\Gamma\circ\Delta_f\circ J_S=\Gamma\}$. Indeed, taking
$\Gamma'=\Gamma\circ\Delta_f\circ J_S$ does not change the objective, since
$(J-\Psi)\circ\kbr{\mu}=(J-\Psi)\circ\Delta_f\circ J_S\circ\kbr{\mu}$. Moreover, the constraints
remain satisfied, since
\begin{align*}
    \|\Gamma'\circ\Delta_i\|
    &\leq\gamma_2(\Delta_f)\gamma_2(J_S)\|\Gamma\circ\Delta_i\|\leq1,\\
    \Gamma'\circ I&=0,
\end{align*}
where $\gamma_2(\Delta_f)\leq1$ for decision problems and $\gamma_2(J_S)=1$. Therefore, \Cref{lem:adversary-support-replacement} applies, giving
\begin{align*}
    \mathrm{rAdv}^{\mu}_{\delta}(f)
    &=\max_{\Gamma\in\cL'}\min_{\Psi\in\cR'}
    \langle\Gamma,(J-\Psi)\circ\kbr{\mu}\rangle.
\end{align*}
\paragraph{Change of variable from $\cR'$.}
Next, we apply a change of variable for the set $\cR'$ by considering the map $\Psi \to K := \Psi
\circ \Delta_f \circ \kbr{\mu}$, where the set $\cR'$ maps to the set
\begin{align*}
\cK := \{K \in \Herm(\Sigma): K \circ \Delta_f = K, \|K\|_{\tr} \leq \delta, K \circ J_S = K\}
\end{align*}
It is easy to verify that the map $\Psi \to K$ is a bijection between $\cR'$ and $\cK$ by taking the
inverse map $K \to \Psi := A_f + K \circ \ketbra{1/\mu}$ where $\ket{1/\mu} = \sum_{x\in S}
1/\sqrt{\mu(x)} \ket{x}$. Therefore, we can write
\begin{align}
    \mathrm{rAdv}^{\mu}_{\delta}(f)
    &=\max_{\Gamma \in \cL'}\min_{K \in \cK}
    \left(\langle\Gamma,\kbr{\mu}\rangle-\langle\Gamma,K\rangle\right) \nonumber\\
    &=\max_{\Gamma \in \cL'}\left(
    \langle\Gamma,\kbr{\mu}\rangle-\max_{K \in \cK}\langle\Gamma,K\rangle\right).
    \label{eq:radv-before-inner-max}
\end{align}

\paragraph{Recovering spectral norm.} We next replace the inner maximum using norm duality.

\begin{claim}
    If $\Gamma \circ \Delta_f \circ J_{S} = \Gamma \in \Herm(\Sigma)$, then $\max_{K\in \cK}\langle
    \Gamma,K\rangle  = \delta \cdot \|\Gamma\|$.
\end{claim}
\begin{proof}
Let $\cK':= \{K \in \Herm(\Sigma):\|K\|_{\tr}\leq \delta\}$. Then, for every $K\in\cK'$, $\langle
 \Gamma,K\rangle = \langle \Gamma \circ \Delta_f \circ J_S, K\rangle  = \langle \Gamma, K \circ
 \Delta_f \circ J_S \rangle$. Moreover, if $K \in \cK'$ so is $K\circ \Delta_f \circ J_S$ as
 $\|K\circ \Delta_f \circ J_S\|_{\tr}\leq \gamma_2(\Delta_f) \cdot \gamma_2(J_S) \cdot \|K\|_{\tr}
 \leq 1\cdot \delta$. Therefore, \Cref{lem:adversary-support-replacement} applies and we can remove
 the condition $K\circ \Delta_f \circ J_S = K$ to obtain
\begin{align*}
    \max_{K\in \cK}\langle \Gamma,K\rangle
    &=\max_{K \in \cK'}\langle \Gamma,K\rangle
    =\max_{\substack{K=K^\dagger\\\|K\|_{\tr}\leq\delta}}\langle \Gamma,K\rangle
    =\max_{\substack{K=K^\dagger\\\|K\|_{\tr}\leq1}}\delta\cdot\langle \Gamma,K\rangle
    =\delta\cdot\|\Gamma\|,
\end{align*}
where in the last step we used duality between the trace and spectral norms.
\end{proof}
Consequently, we can continue \Cref{eq:radv-before-inner-max} as
\begin{align*}
    \mathrm{rAdv}^{\mu}_{\delta}(f) = \max_{\Gamma \in \cL'}\left(\langle \Gamma,\kbr{\mu}\rangle-\delta\|\Gamma\|\right).
\end{align*}
\paragraph{Enlarging $\cL'$.}
Finally, to complete the proof, let $\cL'':=\{\Gamma\in\Herm(\Sigma):\Gamma\circ\Delta_f=\Gamma,\
\max_i\|\Gamma\circ\Delta_i\|\leq1\}$. Similar to the above, $\Gamma\circ J_S\in\cL'$ for every
$\Gamma\in\cL''$, and this replacement does not decrease the objective. Thus, by
\Cref{lem:adversary-support-replacement}, we may replace $\cL'$ with $\cL''$ in the maximum,
giving the stated form.
\end{proof}

\subsubsection{Bounding query complexity}\label{sec:radv-bounds} In this part, we show that, with a
loss in the error parameter, our relaxed measure can still upper and lower bound the distributional
adversary measure, and therefore, quantum query complexity.
\paragraph{Adversary lower bounds relaxed adversary.}

We show that $\mathrm{Adv}$ lower bounds $\mathrm{rAdv}$ by replacing the agreement part of a valid
Gram matrix $\Psi \in \cG^{\mu}_{\eps}(f)$ with $A_f$, giving a feasible relaxed target.
\begin{proposition}
    Let $\Sigma \subset \{0,1\}^{n}$, $f:\Sigma \to \Omega$ be a functional problem, and
    $\mu:\Sigma\to [0,1]$ be a distribution. Also, let $\eps=\eps(n)\in[0,1/2)$ be an error
    parameter. Then, for $\eps\leq1/16$, we have $\mathrm{Adv}_{\eps}^{\mu}(f) \geq
    \mathrm{rAdv}^{\mu}_{4\sqrt{\eps}}(f)/2.$
    \par
    Moreover, if $f:\Sigma \to \{0,1\}$ is a decision problem, then for every
    $\eps=\eps(n)\in[0,1/2)$, we have the better bound $\mathrm{Adv}^{\mu}_{\eps}(f) \geq
    \mathrm{rAdv}^{\mu}_{2\sqrt{\eps(1-\eps)}}(f).$
\end{proposition}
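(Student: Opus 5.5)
Let $\Psi\in\cG^\mu_\eps(f)$ attain $\mathrm{Adv}^\mu_\eps(f)=\bar\gamma_2^\mu(J-\Psi\mid\{\Delta_i\})$, as in \Cref{def:adv}. The plan is to replace the agreement part of $\Psi$ by $A_f$, i.e.\ to set
\[
\Psi':=\Psi\circ\Delta_f+A_f .
\]
Then $\Psi'\circ A_f=A_f$, and $\Psi'\circ\Delta_f=\Psi\circ\Delta_f$. By \Cref{lem:trace-norm-upper-bound-for-boolean-functions} we have $\|\Psi'\circ\Delta_f\circ\kbr{\mu}\|_{\tr}\le 4\sqrt{\eps}$ in general, and $\le 2\sqrt{\eps(1-\eps)}$ for decision problems. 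So $\Psi'$ lies in $\cR^\mu_{4\sqrt\eps}(f)$, respectively $\cR^\mu_{2\sqrt{\eps(1-\eps)}}(f)$. The condition $4\sqrt\eps\le 1$, i.e.\ $\eps\le 1/16$, is only needed so that the relaxation parameter lies in $[0,1]$ as \Cref{def:radv} requires. It remains to bound $\bar\gamma_2^\mu(J-\Psi'\mid\{\Delta_i\})$ in terms of $\bar\gamma_2^\mu(J-\Psi\mid\{\Delta_i\})$.

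Since $J=J\circ A_f+J\circ\Delta_f$ and $\Psi\circ\Delta_f=\Psi'\circ\Delta_f$, we get
\[
J-\Psi'=(J-\Psi)\circ\Delta_f .
\]
The Hadamard product bound \Crefitem{item:avggamma2-hadamard-bounds} then gives
\[
\bar\gamma_2^\mu(J-\Psi'\mid\{\Delta_i\})\le\gamma_2(\Delta_f)\,\bar\gamma_2^\mu(J-\Psi\mid\{\Delta_i\}).
\]
For functional problems $\gamma_2(\Delta_f)\le 2$ (\Cref{sec:prelim-agreement}), which yields $\mathrm{rAdv}^\mu_{4\sqrt\eps}(f)\le 2\,\mathrm{Adv}^\mu_\eps(f)$. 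For decision problems $\gamma_2(\Delta_f)\le1$, which yields the sharper statement with constant $1$.

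The only step needing care is the Hadamard bound: $\Delta_f$ is real symmetric and $J-\Psi$ is Hermitian, so $\Psi'$ is Hermitian and both \Crefitem{item:avggamma2-hadamard-bounds} and the definition of $\cR^\mu_\delta(f)$ apply. I would double-check that $\Psi'$ need not be PSD, which is fine since $\cR^\mu_\delta(f)$ only requires Hermiticity. Apart from that, the argument is a direct composition of earlier results, with the trace-norm lemma doing the real work.
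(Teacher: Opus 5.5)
Your proposal is correct and matches the paper's proof essentially step for step. The paper also sets $\widehat{\Psi}=A_f+\Psi\circ\Delta_f$ and places it in $\cR^\mu_{4\sqrt{\eps}}(f)$ (resp.\ $\cR^\mu_{2\sqrt{\eps(1-\eps)}}(f)$) via \Cref{lem:trace-norm-upper-bound-for-boolean-functions}, then uses $J-\widehat{\Psi}=(J-\Psi)\circ\Delta_f$ with the Hadamard bound and $\gamma_2(\Delta_f)\le 2$ (resp.\ $\le 1$); the only cosmetic difference is that it argues for an arbitrary $\Psi\in\cG^\mu_\eps(f)$ and minimizes at the end instead of fixing a minimizer.
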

\begin{proof}
For the first bound, assume $\eps\leq1/16$. Let $\Psi\in\cG^{\mu}_{\eps}(f)$ and define
$\widehat{\Psi}:=A_f+\Psi\circ\Delta_f$. Then, $\widehat{\Psi}\circ A_f=A_f$ and
\begin{align*}
\|\widehat{\Psi}\circ\Delta_f \circ \kbr{\mu}\|_{\tr}
= \|\Psi\circ\Delta_f \circ \kbr{\mu}\|_{\tr}
\leq 4\sqrt{\eps},
\end{align*}
where we used \Cref{lem:trace-norm-upper-bound-for-boolean-functions} in the last step. Thus,
$\widehat{\Psi}\in\cR^{\mu}_{4\sqrt{\eps}}(f)$. Moreover, $J-\widehat{\Psi}=(J-\Psi)\circ\Delta_f$.
Therefore,
\begin{align*}
    \mathrm{rAdv}^{\mu}_{4\sqrt{\eps}}(f)
    &\leq \bar{\gamma}_2^{\mu}(J-\widehat{\Psi}|\{\Delta_i\})
    \leq \gamma_2(\Delta_f)\bar{\gamma}_2^{\mu}(J-\Psi|\{\Delta_i\})
    \leq 2\bar{\gamma}_2^{\mu}(J-\Psi|\{\Delta_i\}),
\end{align*}
where we applied \Crefitem{item:avggamma2-hadamard-bounds} and $\gamma_2(\Delta_f)\leq2$ in the last
two steps, respectively. Since $\Psi\in\cG^{\mu}_{\eps}(f)$ was arbitrary,
$\mathrm{rAdv}^{\mu}_{4\sqrt{\eps}}(f)
\leq\min_{\Psi\in\cG^{\mu}_{\eps}(f)}2\bar{\gamma}_2^{\mu}(J-\Psi|\{\Delta_i\})
=2\mathrm{Adv}^{\mu}_{\eps}(f)$.
\par
If $f$ is a decision problem, then \Cref{lem:trace-norm-upper-bound-for-boolean-functions} provides
a sharper bound, giving $\widehat{\Psi}\in\cR^{\mu}_{2\sqrt{\eps(1-\eps)}}(f)$ for every
$\eps\in[0,1/2)$. The claim similarly follows, since $\gamma_2(\Delta_f)\leq1$ for decision
problems.
\end{proof}

\paragraph{relaxed adversary upper bounds adversary up to a loss in error}
We next show that we can obtain a quantum algorithm from a $\mathrm{rAdv}$ certificate up to a loss
in error.
\begin{proposition}\label{prop:radv-upperbound} Let $\Sigma \subset \{0,1\}^{n}$, and let $f:\Sigma
    \to \Omega$ be a functional problem. Let $\delta=\delta(n)\in(0,1)$ be an error parameter,
    $\mu:\Sigma\to [0,1]$ be a distribution, and $\eta=\eta(n)\in(0,1]$ satisfy $(2+\eta)\delta<1$.
    Then,
    \begin{align*}
        \mathrm{Adv}^{\mu}_{(2+\eta)\delta}(f)\leq 2\mathsf{Q}^{\mu}_{(2+\eta)\delta}(f)
        =O\left(\frac{\mathrm{rAdv}^{\mu}_{\delta}(f)}{\eta^2\delta}\right).
    \end{align*}
\end{proposition}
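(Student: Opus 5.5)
The first inequality is just \Cref{adv-lowerbound} applied with error $(2+\eta)\delta$, so all the work is in the upper bound $\mathsf{Q}^{\mu}_{(2+\eta)\delta}(f)=O(\mathrm{rAdv}^\mu_\delta(f)/(\eta^2\delta))$. The plan is to turn an optimal relaxed certificate into a genuine PSD target to which \Cref{prop:distalg-from-avggamma2} (equivalently \Cref{thm:catalyst-state-conversion} with $r=1$) can be applied. We then read off the success probability through \Cref{thm:fidelity-charecterization} and a trace-norm/Fuchs--van de Graaf argument.

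\textbf{Step 1 (certificate).} Using \Cref{lem:radv-compact-restriction}, fix $\Psi\in\cR^\mu_\delta(f)$ attaining $\gamma:=\mathrm{rAdv}^\mu_\delta(f)=\bar\gamma_2^\mu(J-\Psi\mid\{\Delta_i\})$. By support restriction we may assume $\Sigma=\supp(\mu)$. Write $E:=\Psi\circ\Delta_f$, so that $\Psi=A_f+E$ and $\|E\circ\kbr{\mu}\|_{\tr}\le\delta$. Split $E\circ\kbr{\mu}=P-Q$ with $P,Q\succeq0$ and $\Tr P+\Tr Q\le\delta$, and pull back to $E=P'-Q'$ with $P'\circ\kbr{\mu}=P$ and $Q'\circ\kbr{\mu}=Q$.

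\textbf{Step 2 (genuine target).} The obstacle is that $\Psi$ need not be PSD, so it cannot itself be the output of an algorithm. I would repair it by adding the negative part, setting $\Psi_+:=(A_f+E+Q'+Q'\circ A_f)/Z$, or more simply $\Psi_+$ proportional to $A_f+P'+Q'$ on a doubled output register. Concretely, I would take states $\ket{f(x)}\ket{0}$ perturbed by small vectors realizing $P'$ and $Q'$, with diagonal normalization fixed by padding into an extra register. The aims are:
\begin{itemize}
\item[(a)] $\Psi_+$ is a Gram matrix;
\item[(b)] $\bar\gamma_2^\mu(J-\Psi_+\mid\{\Delta_i\})\le\gamma+O(\bar\gamma_2^\mu(P'+Q'\mid\{\Delta_i\}))$, where the extra term is bounded by the trace-norm upper bound \Crefitem{item:avggamma2-tr-upper} (giving $O(\delta)$ times a constant, since the correction is supported on $\Delta_f$, which is covered by the $\Delta_i$ with $\gamma_2(\Delta_f\mid\{\Delta_i\})\le$ the worst-case quantity);
\item[(c)] $\Psi_+$ has bias toward $A_f$, in the sense that $\|\Psi_+\circ\Delta_f\circ\kbr{\mu}\|_{\tr}\le 2\delta$.
\end{itemize}
If this padding cannot be bounded by $O(\gamma)$, the fallback is to mix: use the target $(1-\delta)A_f\oplus\delta(\cdot)$ and absorb the $Q'$ part into the error budget instead of the cost.

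\textbf{Step 3 (algorithm and error).} Apply \Cref{prop:distalg-from-avggamma2} with fidelity $\alpha=1-\Theta(\eta^2\delta)$. This gives $\widetilde\Psi$ with $\mathsf{Q}(\widetilde\Psi)=O(\gamma/(\eta^2\delta))$ and $\|(\widetilde\Psi-\Psi_+)\circ\kbr{\mu}\|_{\tr}\le O(\eta\sqrt\delta)$ by Fuchs--van de Graaf. Next, measure the output register. As in the proof of \Cref{lem:trace-norm-upper-bound-for-boolean-functions}, via Helstrom (\Cref{thm:helstrom}) and \Cref{lem:Fuchs-van-de-Graaf-priors}, the error is controlled by the disagreement trace norm. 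This gives error at most $\delta+\delta+\eta\delta$ once the constants in $\alpha$ are tuned, which is $(2+\eta)\delta$.

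The hardest step is Step 2, namely making $\Psi$ PSD while keeping both the $\bar\gamma_2^\mu$ cost and the disagreement weight linear in $\delta$. The factor $2$ in $(2+\eta)\delta$ suggests that the intended construction pays $\delta$ once for $P$ and once for $Q$. The reason Step 3 needs $\eta^2\delta$ rather than $\eta\delta$ in $\alpha$ is the square root lost in Fuchs--van de Graaf.
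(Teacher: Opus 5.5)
Your reduction of the first inequality to \Cref{adv-lowerbound} is correct, and so is your overall plan of repairing the optimal relaxed certificate and then applying \Cref{prop:distalg-from-avggamma2} with $1-\alpha=\Theta(\eta^2\delta)$. Step 2, however, has a genuine gap.

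You repair only the target and keep the start at $J$. Whatever you add to make $\Psi$ positive semidefinite therefore reappears in $J-\Psi_+$ and must be paid for in $\bar{\gamma}_2^{\mu}$. That correction cannot be confined to $\Delta_f$. Indeed, $Q'$ is PSD with the same nonzero diagonal as $P'$; as written, $\bar{\gamma}_2^{\mu}(P'+Q'\mid\{\Delta_i\})=+\infty$ whenever $E\neq0$, since every filter vanishes on the diagonal. Restoring $\Psi_+\circ I=I$ then rescales the agreement block as well.

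The tool you invoke, \Crefitem{item:avggamma2-tr-upper}, can only bound the extra cost by $\delta$ times a worst-case filtered norm, such as $\gamma_2(\Delta_f\mid\{\Delta_i\})=\Theta(\mathrm{Adv}^{\pm}(f))$. This remains worst-case over $\supp(\mu)$ even after support restriction. After dividing by $\eta^2\delta$, it leaves an additive $\mathrm{Adv}^{\pm}(f)/\eta^2$ term, which is not $O(\mathrm{rAdv}^{\mu}_{\delta}(f)/(\eta^2\delta))$. For example, take $\mathrm{OR}_n$ under a full-support distribution on which a few random queries succeed with high probability. Then $\mathrm{rAdv}^{\mu}_{\delta}=O(1)$ for constant $\delta$, while $\gamma_2(\Delta_f\mid\{\Delta_i\})=\Theta(\sqrt{n})$.

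The missing idea is to add the same correction to the start; your fallback of moving $Q'$ into the error budget only gestures at this. Since $E\circ\kbr{\mu}$ has zero diagonal, $P'$ and $Q'$ have equal diagonals, so $\Tr[Q]\leq\delta/2$. Hence $\Psi+Q'=A_f+P'$ and $J+Q'$ are both PSD with the common diagonal $d(x)=1+Q'(x,x)$. Rescaling both by $\kbr{d^{-1}}$ gives Gram matrices $\widehat{\Psi}$ and $\widehat{J}$ with $\widehat{J}-\widehat{\Psi}=(J-\Psi)\circ\kbr{d^{-1}}$. The cost therefore does not increase at all, because $\gamma_2(\kbr{d^{-1}})\leq1$. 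Jensen's inequality gives $\cF(\widehat{J}\circ\kbr{\mu},J\circ\kbr{\mu})$ and $\cF(\widehat{\Psi}\circ\kbr{\mu},A_f\circ\kbr{\mu})$ both at least $1/\sqrt{1+\delta/2}$; this is \Cref{lem:radv-psd-repair}. You then build the conversion $\widehat{J}\to\widetilde{\Psi}$ and run the same algorithm from the true start $J$. The mismatch between $\widehat{J}$ and $J$ is paid for by monotonicity of fidelity under the channel the algorithm implements.

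Step 3 also does not deliver $(2+\eta)\delta$ as written. With $1-\alpha=\Theta(\eta^2\delta)$, Fuchs--van de Graaf only gives $\|(\widetilde{\Psi}-\Psi_+)\circ\kbr{\mu}\|_{\tr}=O(\eta\sqrt{\delta})$. Accumulating errors linearly through the disagreement trace norm then yields error $\delta+O(\eta\sqrt{\delta})$, which is weaker than $(2+\eta)\delta$ whenever $\delta\ll\eta^2$. No retuning of constants fixes this short of taking $1-\alpha=\Theta(\eta^2\delta^2)$, which costs $1/(\eta^2\delta^2)$ queries. Moreover, Helstrom's theorem handles only binary outputs, whereas $f\colon\Sigma\to\Omega$ is a general functional problem.

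The fix is to stay in purified distance and square only at the end. Take $\alpha=1-\eta^2\delta/32$. The triangle inequality over the start mismatch, the conversion, and the target repair gives $\sqrt{1-\cF^2}\leq2\sqrt{\delta/(2+\delta)}+\sqrt{2(1-\alpha)}\leq\sqrt{\delta}\,(\sqrt{2}+\eta/4)\leq\sqrt{(2+\eta)\delta}$ between the final Gram matrix and the witness $A_f$. Then \Cref{thm:fidelity-charecterization} converts this into error at most $(2+\eta)\delta$. In particular, the factor $2$ comes from repairing both the start and the target, each at purified distance about $\sqrt{\delta/2}$, not from paying separately for $P$ and $Q$.
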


\paragraph{Overview of proof.} The first inequality follows from \Cref{adv-lowerbound}. Therefore,
it remains to construct a quantum algorithm from a $\mathrm{rAdv}$ certificate. The key idea of the
proof is to ``repair'' the Hermitian matrix $\Psi=A_f+\Psi\circ\Delta_f\in\cR^\mu_\delta(f)$ into a
Gram matrix suitable for state conversion. We do this in two steps.
\begin{itemize}
    \item First, we repair $\Psi$ into a PSD matrix by compensating for the negative contribution of
    $\Psi\circ\Delta_f\circ\kbr{\mu}$.
    \item Then, we normalize the obtained PSD matrix into a Gram matrix $\widehat{\Psi}$.
\end{itemize}
To apply our $\bar{\gamma}_2^\mu$ state-conversion bound in \Cref{prop:distalg-from-avggamma2}, we
also apply the same procedure to $J$, obtaining $\widehat{J}$ such that
$\bar{\gamma}_2^\mu(\widehat{J}-\widehat{\Psi}|\{\Delta_i\})
\leq\bar{\gamma}_2^\mu(J-\Psi|\{\Delta_i\})$. Starting from $\widehat{J}$, this gives an algorithm
that produces a Gram matrix $\widetilde{\Psi}$ close to $\widehat{\Psi}$. However, the algorithm
must start from $J$. We therefore run the same algorithm from $J$, obtaining a final Gram matrix
$\doubletilde{\Psi}$, and bound the additional error using the fidelity bounds from the repair
process.

The following lemma formalizes the repair process.

\begin{lemma}[PSD repair lemma]\label{lem:radv-psd-repair} Let $\Sigma\subset \{0,1\}^{n}$,
    $f:\Sigma \to \Omega$ be a functional problem,  $\mu:\Sigma \to [0,1]$ be a distribution, and
    $\delta=\delta(n)\in[0,1]$. Let $\Psi\in \Herm(\Sigma)$ be a Hermitian matrix satisfying $\Psi
    \circ A_f = A_f$ and $\|\Psi \circ \Delta_f \circ \kbr{\mu}\|_{\tr}\leq \delta$.
    \par
    Then, there exists a pair of Gram matrices $\widehat{J},\widehat{\Psi}$ such that the following
    holds.
        \begin{enumerate}[i.]
            \item $\widehat{J},\widehat{\Psi}$ are both Gram matrices with $\widehat{J}\circ I = I$
            and $\widehat{\Psi}\circ I = I.$
            \item $\cF(\widehat{J}\circ \kbr{\mu}, J \circ \kbr{\mu}) \geq
            \frac{1}{\sqrt{1+\delta/2}}$ and $\cF(\widehat{\Psi}\circ \kbr{\mu}, A_f\circ \kbr{\mu})
            \geq \frac{1}{\sqrt{1+\delta/2}}.$
            \item $\bar{\gamma}_{2}^{\mu}(\widehat{J}-\widehat{\Psi}|\{\Delta_i\}) \leq
            \bar{\gamma}_{2}^{\mu}(J-\Psi|\{\Delta_i\})$.
        \end{enumerate}
    \end{lemma}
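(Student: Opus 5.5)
The plan is to repair $\Psi$ by adding a single PSD correction built from the negative part of its weighted disagreement block, and then rescale rows and columns to restore a unit diagonal. We add the same correction to $J$. Because the correction is common to both, the difference $\widehat{J}-\widehat{\Psi}$ is just a diagonal rescaling of $J-\Psi$.

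\emph{Setup.} Let $S:=\supp(\mu)$. First replace $\Psi$ by $A_f+\Psi\circ\Delta_f\circ J_S$. This preserves $\Psi\circ A_f=A_f$ and the trace-norm constraint. Since $(J-\Psi)\circ J_S$ is unchanged, \Crefitem{item:avggamma2-support} shows that $\bar{\gamma}_2^\mu(J-\Psi|\{\Delta_i\})$ is unchanged, so it suffices to prove (iii) for the new $\Psi$. Put $K:=\Psi\circ\Delta_f\circ\kbr{\mu}$ and write $K=K_+-K_-$ with $K_\pm\succeq0$ supported on $S\times S$. Then $\Tr K_++\Tr K_-=\|K\|_{\tr}\leq\delta$. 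Since $\Delta_f\circ I=0$, $K$ has zero diagonal. Hence $\Tr K_+=\Tr K_-$, and moreover $K_+(x,x)=K_-(x,x)$ for every $x$, so both traces are at most $\delta/2$.

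\emph{Construction.} Define $P:=K_-\circ\ketbra{1/\mu}$, where $\ket{1/\mu}=\sum_{x\in S}\mu(x)^{-1/2}\ket{x}$ as in the proof of \Cref{lem:radv-universal-hardness-cert}; this $P$ is PSD and supported on $S$. Then $J+P\succeq0$, and
\[
\Psi+P=A_f+K_+\circ\ketbra{1/\mu}\succeq0,
\]
because $K\circ\ketbra{1/\mu}=\Psi\circ\Delta_f$ on $S$. Let $D:=\mathrm{Diag}\bigl((1+P(x,x))^{-1/2}\bigr)$ and set $\widehat J:=D(J+P)D$ and $\widehat\Psi:=D(\Psi+P)D$. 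Both are PSD with unit diagonal, which is (i); for $\widehat\Psi$ this uses $(\Psi+P)(x,x)=1+P(x,x)$. For (iii), note $\widehat J-\widehat\Psi=(J-\Psi)\circ dd^{\top}$ with $d_x\leq1$. Since $dd^\top$ is PSD, $\gamma_2(dd^\top)=\max_x d_x^2\leq1$, and the Hadamard bound in \Crefitem{item:avggamma2-hadamard-bounds} gives (iii).

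\emph{Fidelity bounds.} Because $D$ is diagonal, $\widehat J\circ\kbr{\mu}=D\bigl(\kbr{\mu}+K_-\bigr)D\succeq D\kbr{\mu}D$. Fidelity is monotone in the PSD order in each argument. This step should be checked, e.g.\ via Alberti's characterization (\Cref{thm:alberti}), where the objective is increasing in $A$. Using it,
\[
\cF\bigl(\widehat J\circ\kbr{\mu},\kbr{\mu}\bigr)\geq\bra{\mu}D\ket{\mu}=\sum_x\mu(x)\bigl(1+P(x,x)\bigr)^{-1/2}.
\]
By convexity of $t\mapsto(1+t)^{-1/2}$ and Jensen, this is at least $\bigl(1+\sum_x\mu(x)P(x,x)\bigr)^{-1/2}=(1+\Tr K_-)^{-1/2}\geq(1+\delta/2)^{-1/2}$.

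For $\widehat\Psi$, we have $\widehat\Psi\circ\kbr{\mu}=D\bigl(A_f\circ\kbr{\mu}+K_+\bigr)D\succeq D(A_f\circ\kbr{\mu})D$. Decompose $A_f\circ\kbr{\mu}=\bigoplus_{\omega}\ketbra{\mu_\omega}$ over the blocks $f^{-1}(\omega)$, where $\ket{\mu_\omega}$ is the restriction of $\ket{\mu}$ to $f^{-1}(\omega)$. Additivity of fidelity over direct sums gives
\[
\cF\geq\sum_\omega\bra{\mu_\omega}D\ket{\mu_\omega}=\bra{\mu}D\ket{\mu},
\]
and the same Jensen bound applies. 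The main point needing care is this monotonicity-plus-block step, in particular that $D$ respects the block structure; it does, since $D$ is diagonal.
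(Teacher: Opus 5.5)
Your proposal is correct and follows the paper's proof essentially step for step. You use the same support restriction, add the same correction $P$ (the paper's $H^-$) to both $J$ and $\Psi$, apply the same diagonal rescaling, and get (iii) from the same Hadamard $\gamma_2$ bound. The only difference is the fidelity step: the paper builds explicit purifications and applies Uhlmann's theorem, whereas you drop $K_\pm$ using PSD monotonicity of fidelity (via Alberti) and then use the rank-one and direct-sum formulas. Both routes arrive at the same quantity $\E_{x\sim\mu}\bigl[(1+P(x,x))^{-1/2}\bigr]$ before Jensen.
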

    \begin{proof}
        Let $S:=\supp(\mu)$. Without loss of generality, we assume that $\Psi-A_f$ is supported on
        $S\times S$. Otherwise, by \Crefitem{item:avggamma2-support}, we may replace $\Psi$ with
        $\Psi':=A_f+(\Psi\circ\Delta_f)\circ J_S$, so that $\Psi'\circ\kbr{\mu}=\Psi\circ\kbr{\mu}$,
        while
        \begin{align*}
            \bar{\gamma}_2^\mu(J-\Psi'|\{\Delta_i\})
            &=\bar{\gamma}_2^\mu((J-\Psi)\circ J_S|\{\Delta_i\})
            =\bar{\gamma}_2^\mu(J-\Psi|\{\Delta_i\}).
        \end{align*}
        \par
        We begin by factoring $\Psi \circ \Delta_f$ into a PSD difference as follows.
        \begin{claim}\label{clm:radv-psd-decomposition} There exist $H^{+},H^{-}\succeq0$ such that
            $\Psi\circ\Delta_f=H^{+}-H^{-}$ and $\Tr[H^{-}\circ\kbr{\mu}]\leq\delta/2$.
        \end{claim}
        \begin{proof}
        Indeed, let $H:=\Psi\circ\Delta_f$ and define $H_{\mu}:=H\circ\kbr{\mu}$. By the trace-norm
        characterization, there exist $H_{\mu}^{+},H_{\mu}^{-}\succeq0$ such that
        $H_{\mu}=H_{\mu}^{+}-H_{\mu}^{-}$ and
        $\Tr[H_{\mu}^{+}]+\Tr[H_{\mu}^{-}]=\|H_{\mu}\|_{\tr}\leq\delta$. Moreover, since $H_{\mu}$
        is zero on its diagonal entries, $H_{\mu}^{+}$ and $H_{\mu}^{-}$ have the same diagonal
        entries, so $\Tr[H_{\mu}^{+}]=\Tr[H_{\mu}^{-}]=\frac12\|H_{\mu}\|_{\tr}\leq\delta/2$. Next,
        we rescale back by $\kbr{1/\mu}:=\sum_{x,y\in S}\frac{\ketbra{x}{y}}{\sqrt{\mu(x)\mu(y)}}$
        to get $H^{\pm}:=H_{\mu}^{\pm}\circ\kbr{1/\mu}\succeq0$. Since we assumed that
        $\Psi\circ\Delta_f$ is supported on $S\times S$, we obtain $\Psi\circ\Delta_f=H=H^{+}-H^{-}$
        and $\Tr[H^{-}\circ\kbr{\mu}]=\Tr[H_{\mu}^{-}]\leq\delta/2$.
        \end{proof}
        \par
        As a result,
        $
            \Psi':=\Psi+H^{-}=A_f+\Psi\circ\Delta_f+H^{-}=A_f+H^{+}\succeq0
        $
        and $J':=J+H^{-}\succeq0$. Since $H^{+}$ and $H^{-}$ have the same diagonal, the matrices
        $\Psi'$ and $J'$ have the same diagonal $d(x):=\Psi'(x,x)=J'(x,x)=1+H^{-}(x,x)\geq1$. So, we
        can rescale our matrices using
        $\ket*{d^{-1}}:=\sum_{x\in\Sigma}\frac{1}{\sqrt{d(x)}}\ket*{x}$ to obtain Gram matrices
        $\widehat{\Psi}:=\Psi'\circ\kbr{d^{-1}}$ and $\widehat{J}:=J'\circ\kbr{d^{-1}}$. Since
        $\widehat{J}-\widehat{\Psi}=(J-\Psi)\circ\kbr{d^{-1}}$, we can also bound
        \begin{align*}
            \bar{\gamma}_2^\mu(\widehat{J}-\widehat{\Psi}|\{\Delta_i\})
            \leq\bar{\gamma}_2^\mu(J-\Psi|\{\Delta_i\})\gamma_2(\kbr{d^{-1}})
            \leq\bar{\gamma}_2^\mu(J-\Psi|\{\Delta_i\}),
        \end{align*}
        where we used \Crefitem{item:avggamma2-hadamard-bounds} and
        $\gamma_2(\kbr{d^{-1}})=\max_x1/d(x)\leq1$, respectively.
        \par
        It only remains to prove the fidelity lower bounds. To do so, let
        $A_f(x,y)=\braket{a_x}{a_y}$ and $H^{+}(x,y)=\braket{h_x}{h_y}$ be PSD decompositions. Then,
        the vectors
        \begin{align*}
            \ket*{a}&:=\sum_x\sqrt{\mu(x)}\ket*{x}\otimes(\ket*{a_x}\oplus0),\\
            \ket*{\widehat{\psi}}&:=\sum_x\sqrt{\frac{\mu(x)}{d(x)}}\ket*{x}
            \otimes(\ket*{a_x}\oplus\ket*{h_x})
        \end{align*}
        purify $(A_f\circ\kbr{\mu})^\top$ and $(\widehat{\Psi}\circ\kbr{\mu})^\top$, respectively.
        Thus, by Uhlmann's theorem,
        \begin{align*}
            \cF(\widehat{\Psi}\circ\kbr{\mu},A_f\circ\kbr{\mu})
            &\geq|\braket*{a}{\widehat{\psi}}|
            =\E_{x\sim\mu}\left[\frac{1}{\sqrt{d(x)}}\right]
            \geq\frac{1}{\sqrt{\E_{x\sim\mu}[d(x)]}}
            =\frac{1}{\sqrt{1+\Tr[H^{-}\circ\kbr{\mu}]}} \\
            &\geq\frac{1}{\sqrt{1+\delta/2}},
        \end{align*}
        where we used $\|a_x\|^2=A_f(x,x)=1$, Jensen's inequality, and
        \Cref{clm:radv-psd-decomposition}. Using the same argument with $J$ and $H^{-}$ gives
        $\cF(\widehat{J}\circ\kbr{\mu},J\circ\kbr{\mu})\geq1/\sqrt{1+\delta/2}$, completing the
        proof.
    \end{proof}

We can now present the proof of \Cref{prop:radv-upperbound} as follows.
\begin{proof}[Proof of \Cref{prop:radv-upperbound}]
    Let $\Psi\in\cR^{\mu}_{\delta}(f)$ be a Hermitian matrix attaining the minimum
    $\mathrm{rAdv}^{\mu}_{\delta}(f)= \bar{\gamma}_{2}^{\mu}(J-\Psi|\{\Delta_i\})$, and let
    $\widehat{J}$ and $\widehat{\Psi}$ be the Gram matrices given by the PSD repair lemma. Set
    $\alpha:=1-\eta^2\delta/32$. Applying \Cref{prop:distalg-from-avggamma2} to $\widehat{J}$ and
    $\widehat{\Psi}$ with fidelity parameter $\alpha$, we obtain a  Gram matrix $\widetilde{\Psi}$
    where $ \cF(\widetilde{\Psi}\circ\kbr{\mu}, \widehat{\Psi}\circ\kbr{\mu})\geq\alpha,
    $
     and
    \begin{align*}
        T := \mathsf{Q}(\widehat{J}\to \widetilde{\Psi})
        &\leq \frac{2}{1-\alpha}\bar{\gamma}_2^{\mu}(\widehat{J}-\widehat{\Psi}
        |\{\Delta_i\})
        \leq \frac{64\mathrm{rAdv}^{\mu}_{\delta}(f)}{\eta^2\delta}.
    \end{align*}
    We now run the same query algorithm starting from Gram matrix $J$, and let
$\doubletilde{\Psi}$ denote its final Gram matrix. By the data-processing inequality for fidelity
and the fidelity bound for $\hat{J}$ in \Cref{lem:radv-psd-repair},
    \begin{align*}
        \cF(\doubletilde{\Psi}\circ\kbr{\mu},\widetilde{\Psi}\circ\kbr{\mu})
        \geq \cF(J\circ\kbr{\mu},\widehat{J}\circ\kbr{\mu})
        \geq \frac{1}{\sqrt{1+\delta/2}},
    \end{align*}
    Finally, we apply the triangle inequality for the purified distance $\sqrt{1-\cF^2}$ to obtain
    \begin{align*}
        &\sqrt{1-\cF^2(\doubletilde{\Psi}\circ\kbr{\mu},A_f\circ\kbr{\mu})}\\
        \leq &\sqrt{1-\cF^2(\doubletilde{\Psi}\circ\kbr{\mu},\widetilde{\Psi}\circ\kbr{\mu})}
        +\sqrt{1-\cF^2(\widetilde{\Psi}\circ\kbr{\mu},\widehat{\Psi}\circ\kbr{\mu})}
        +\sqrt{1-\cF^2(\widehat{\Psi}\circ\kbr{\mu},A_f\circ\kbr{\mu})}\\
        \leq &2\sqrt{\frac{\delta}{2+\delta}}+\sqrt{1-\alpha^2}
        \leq\sqrt{2\delta}+\sqrt{1-\alpha^2}
        \leq\sqrt{2\delta}+\sqrt{2(1-\alpha)}
        =\sqrt{\delta}\left(\sqrt{2}+\frac{\eta}{4}\right)\\
        \leq &\sqrt{(2+\eta)\delta},
    \end{align*}
    where we used $\delta>0$, $\alpha\leq1$, and $\eta\leq1$, respectively. Consequently,
    $\cF(\doubletilde{\Psi}\circ\kbr{\mu},A_f\circ\kbr{\mu})\geq \sqrt{1-(2+\eta)\delta}$. Since
    $A_f$ is a valid witness for $f$, $\doubletilde{\Psi} \in \cG^{\mu}_{(2+\eta)\delta}(f)$, which
    completes the proof.
\end{proof}

As a corollary, we get the following.
\begin{corollary}\label{cor:radv-query-comparison} Let $\Sigma \subset \{0,1\}^{n}$, $f:\Sigma \to
    \Omega$ be a functional problem, and $\mu:\Sigma \to [0,1]$ be a distribution. Also, let
    $\eps=\eps(n)\in(0,1/2)$ be an error parameter. Then, for $\eps\leq1/16$, we have
    \begin{align*}
        \Omega\left(\mathrm{rAdv}^{\mu}_{4\sqrt{\eps}}(f)\right)
        \leq \mathsf{Q}^{\mu}_{\eps}(f)
        \leq O\left(\frac{\mathrm{rAdv}^{\mu}_{\eps/3}(f)}{\eps}\right).
    \end{align*}
    Moreover, if $f:\Sigma \to \{0,1\}$ is a decision problem, then for every
    $\eps=\eps(n)\in(0,1/2)$, we have
    \begin{align*}
        \Omega\left(\mathrm{rAdv}^{\mu}_{2\sqrt{\eps(1-\eps)}}(f)\right)
        \leq \mathsf{Q}^{\mu}_{\eps}(f)
        \leq O\left(\frac{\mathrm{rAdv}^{\mu}_{\eps/3}(f)}{\eps}\right).
    \end{align*}
\end{corollary}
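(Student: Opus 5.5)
The corollary follows by chaining the three query-complexity bounds established earlier in this section and in \Cref{sec:adv-query}. There are two directions.

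\textbf{Lower bounds.} By \Cref{adv-lowerbound}, $\mathsf{Q}^\mu_\eps(f)\geq \mathrm{Adv}^\mu_\eps(f)/2$. For a functional problem with $\eps\leq 1/16$, the proposition comparing $\mathrm{Adv}$ with $\mathrm{rAdv}$ gives $\mathrm{Adv}^\mu_\eps(f)\geq \mathrm{rAdv}^\mu_{4\sqrt{\eps}}(f)/2$. Combining the two,
\[
\mathsf{Q}^\mu_\eps(f)\geq \mathrm{rAdv}^\mu_{4\sqrt\eps}(f)/4.
\]
The restriction $\eps\leq 1/16$ guarantees $4\sqrt\eps\leq 1$, so the relaxation parameter lies in $[0,1]$ as \Cref{def:radv} requires. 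For decision problems, the same proposition gives $\mathrm{Adv}^\mu_\eps(f)\geq \mathrm{rAdv}^\mu_{2\sqrt{\eps(1-\eps)}}(f)$ for every $\eps\in[0,1/2)$. Here $2\sqrt{\eps(1-\eps)}\leq 1$ automatically, so this holds for all $\eps\in(0,1/2)$ with no further restriction.

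\textbf{Upper bound.} Apply \Cref{prop:radv-upperbound} with $\delta:=\eps/3$ and $\eta:=1$. The hypothesis $(2+\eta)\delta<1$ becomes $\eps<1$, which holds. The target error is $(2+\eta)\delta=\eps$ exactly, so
\[
\mathsf{Q}^\mu_\eps(f)=O\bigl(\mathrm{rAdv}^\mu_{\eps/3}(f)/\delta\bigr)=O\bigl(\mathrm{rAdv}^\mu_{\eps/3}(f)/\eps\bigr).
\]
We need $\delta\in(0,1)$, which is clear since $\eps\in(0,1/2)$. This argument uses nothing specific to decision problems, so the same upper bound applies in both parts of the corollary.

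\textbf{Potential obstacle.} There is no substantive difficulty beyond parameter bookkeeping. The only point to watch is that \Cref{prop:radv-upperbound} is stated with the error on the left as $(2+\eta)\delta$. The choice $\eta=1$ makes this equal to $\eps$ on the nose, so no monotonicity of $\mathsf{Q}^\mu_\eps$ in $\eps$ is needed. Should some other value of $\eta$ be preferred, one would instead invoke that $\mathsf{Q}^\mu_\eps$ is non-increasing in $\eps$, which is immediate from \Cref{def:distributional-query-complexity}.
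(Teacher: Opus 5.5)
Your proposal is correct and follows the intended derivation. The paper gives no separate proof and states this as an immediate corollary. It is obtained by combining $\mathsf{Q}^\mu_\eps(f)\geq \mathrm{Adv}^\mu_\eps(f)/2$ with the $\mathrm{Adv}$-versus-$\mathrm{rAdv}$ comparison, which yields constants $1/4$ in the functional case and $1/2$ in the decision case. The upper bound then comes from instantiating the $\mathrm{rAdv}$ upper bound with $\delta=\eps/3$ and $\eta=1$, so that $(2+\eta)\delta=\eps$ exactly; your parameter checks ($4\sqrt{\eps}\le 1$, $(2+\eta)\delta<1$, $\delta\in(0,1)$) are the only verifications needed, and they hold.
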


\subsection{Shaltiel-free distributional adversary}\label{sec:shaltiel-free-adversary}

In this subsection, we define the Shaltiel-free distributional adversary $\mathrm{sfAdv}$ as
a measure for an inner function $\bg$ for lower bounding the composition complexity of $f\circ
\bg$.
The terminology and motivation follow the classical Shaltiel-free measure of Ben-David and Blais
\cite{BB20a}, which accounts for Shaltiel's example \cite{Sha03}.

When computing a single value $g(x)$, the bias in the algorithm’s belief about $g(x)$ directly
determines its error. In composition, however, small biases about the inner outputs may suffice to
obtain a strong belief about the final answer. For instance, consider the extreme case of the
$\mathrm{TRIV}$ function, defined by $\mathrm{TRIV}(0^n)=0$ and $\mathrm{TRIV}(1^n)=1$. Then, small
biases in the algorithm’s beliefs about $g(x^1),\dots,g(x^n)$ across sufficiently many independent
inputs can accumulate into a strong belief about whether $\bg(\bx)=0^n$ or $1^n$.
\par
Consequently, when proving a composition lower bound, we need to be careful about the tradeoff
between the cost of an inner computation and the bias it provides. Intuitively, the following
measure reflects the best tradeoff that an algorithm can achieve between the cost and the bias it
provides.

\begin{definition}\label{def:sfadv} Let $\Sigma\subset \{0,1\}^{n}$ and $g:\Sigma\to \{0,1\}$ be a
    decision problem. Let $\nu_0,\nu_1:\Sigma\to [0,1]$ be a pair of distributions supported on the
    $0$-inputs $g^{-1}(0)$ and the $1$-inputs $g^{-1}(1)$, respectively, and let $\nu =
    \frac{1}{2}(\nu_0+\nu_1)$. We define the Shaltiel-free adversary measure for $g$ with respect to
    $(\nu_0,\nu_1)$ as
    \begin{align*}
        \mathrm{sfAdv}^{(\nu_0,\nu_1)}(g) :=\inf_{\delta\in [0,1)}
        \frac{\mathrm{rAdv}^{\nu}_{\delta}(g)}{1-\delta},
    \end{align*}
    where $\mathrm{rAdv}$ is the relaxed adversary measure in \Cref{def:radv}.
\end{definition}

\subsubsection{Monotonicity of the ratio}
In the following, we show that the ratio in \Cref{def:sfadv} is non-increasing as $\delta \to 1$.

\begin{lemma}\label{lemma:sfadv-monotonicity} Let $\Sigma \subset \{0,1\}^{m}$, $g:\Sigma\to
    \{0,1\}$ be a decision problem, and $\nu_0,\nu_1:\Sigma\to [0,1]$ be a pair of distributions
    supported on $g^{-1}(0)$ and $g^{-1}(1)$, respectively. Write $\nu=(\nu_0+\nu_1)/2$. Then, for
    every $0\leq \alpha < \beta <1$, we have
    \begin{align*}
        \frac{\mathrm{rAdv}^{\nu}_{\alpha}(g)}{1-\alpha} \geq \frac{\mathrm{rAdv}^{\nu}_{\beta}(g)}{1-\beta}.
    \end{align*}
\end{lemma}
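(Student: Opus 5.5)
Our plan is to exploit convexity of $\delta\mapsto\mathrm{rAdv}^{\nu}_{\delta}(g)$ together with its value at $\delta=1$. The cleanest route is the dual form for decision problems in \Cref{lem:radv-universal-hardness-cert}:
\[
\mathrm{rAdv}^{\nu}_{\delta}(g)=\max_{\Gamma\in\cL''}\bigl(\langle\Gamma,\Delta_g\circ\kbr{\nu}\rangle-\delta\|\Gamma\|\bigr),
\]
where $\cL''$ is the $\delta$-independent feasible set. This exhibits $\delta\mapsto\mathrm{rAdv}^{\nu}_{\delta}(g)$ as a pointwise maximum of affine functions of $\delta$, hence convex on $[0,1]$. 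It is also nonincreasing and nonnegative, since $\Gamma=0$ is feasible.

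The key step is to bound the value at $\delta=1$, namely to show $\mathrm{rAdv}^{\nu}_{1}(g)\leq 0$ (so it equals $0$). For this, we show that for every feasible $\Gamma$ (Hermitian, $\Gamma\circ\Delta_g=\Gamma$), $\langle\Gamma,\Delta_g\circ\kbr{\nu}\rangle\leq\|\Gamma\|$. Indeed,
\[
\langle\Gamma,\Delta_g\circ\kbr{\nu}\rangle=\langle\Gamma,\kbr{\nu}\rangle=\bra{\nu}\Gamma\ket{\nu}\leq\|\Gamma\|,
\]
because $\ket{\nu}$ is a unit vector and $\Gamma\circ\Delta_g=\Gamma$. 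Alternatively, on the primal side, $\Psi=J$ lies in $\cR^\nu_1(g)$, since
\[
\|\Delta_g\circ\kbr{\nu}\|_{\tr}=2\sqrt{p_0p_1}\leq 1
\]
for the balanced mixture $\nu$; this gives $\bar\gamma_2^\nu(0)=0$ directly.

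With convexity and $\mathrm{rAdv}^\nu_1=0$ in hand, write $\beta=\lambda\alpha+(1-\lambda)\cdot1$ with $\lambda=(1-\beta)/(1-\alpha)\in(0,1]$. Convexity gives
\[
\mathrm{rAdv}^{\nu}_{\beta}\leq\lambda\,\mathrm{rAdv}^{\nu}_{\alpha}+(1-\lambda)\cdot 0,
\]
and dividing by $1-\beta$ yields the claimed inequality.

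The only potential obstacle is confirming that the dual formula remains valid at $\delta=1$ (the proposition covers $\delta\in[0,1]$, so this is fine). We also need to confirm that the feasible set $\cL''$ is genuinely independent of $\delta$, so that convexity is legitimate. If the endpoint caused trouble, we would instead use the primal side: combine minimizers $\Psi_\alpha\in\cR^\nu_\alpha$ and $J\in\cR^\nu_1$ via $\lambda\Psi_\alpha+(1-\lambda)J$. This combination lies in $\cR^\nu_\beta$ by the triangle inequality for the trace norm. Then the triangle inequality and scalability of $\bar\gamma_2^\nu$ (\Cref{prop:avggamma2-properties}) give the same bound.
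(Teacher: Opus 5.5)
Your proof is correct. Your fallback is exactly the paper's proof. Your primary route reaches the same conclusion from the dual side.

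The paper argues purely on the primal side. It takes a minimizer $\Psi\in\cR^{\nu}_{\alpha}(g)$ and forms $\Psi'=c\Psi+(1-c)J$ with $c=(1-\beta)/(1-\alpha)$. It checks $\Psi'\in\cR^{\nu}_{\beta}(g)$ using the trace-norm triangle inequality and $\|\Delta_g\circ\kbr{\nu}\|_{\tr}\le\gamma_2(\Delta_g)\|\kbr{\nu}\|_{\tr}\le 1$. It then concludes from $J-\Psi'=c(J-\Psi)$ and homogeneity of $\bar{\gamma}_2^{\nu}$.

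Your main route instead reads off convexity in $\delta$ and the endpoint value $\mathrm{rAdv}^{\nu}_{1}(g)=0$ from the decision-problem dual in \Cref{lem:radv-universal-hardness-cert}. That proposition covers $\delta\in[0,1]$ with a $\delta$-independent feasible set, so the endpoint causes no trouble. This is the observation the paper itself uses later in the proof of \Cref{lem:dual-sfadv-hardness}. There the objective is rewritten as $\|\Gamma\|+(\langle\Gamma,\kbr{\nu}\rangle-\|\Gamma\|)/(1-\delta)$, whose numerator is nonpositive because $\bra{\nu}\Gamma\ket{\nu}\le\|\Gamma\|$. That rewriting gives you an even shorter version of your argument: for each fixed feasible $\Gamma$ the ratio is already nonincreasing in $\delta$, and a pointwise maximum preserves this, so you never need to invoke convexity explicitly.

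Conceptually, both proofs are the same ``convex in $\delta$ and vanishing at $\delta=1$'' argument seen from the two sides of the duality. The paper's version buys self-containedness: it uses only the definition of $\mathrm{rAdv}$ and elementary norm facts, with no reliance on the minimax/strong-duality machinery behind \Cref{lem:radv-universal-hardness-cert}. Your version ties the monotonicity directly to the dual certificate structure that later defines $\mathrm{sfAdv}$.
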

\begin{proof}
    Let $\Psi\in\cR^{\nu}_{\alpha}(g)$ attain the minimum defining
    $\mathrm{rAdv}^{\nu}_{\alpha}(g)$, so that $\Psi \circ A_g = A_g$, $\|\Psi \circ \Delta_g \circ
    \kbr{\nu}\|_{\tr}\leq \alpha$, and $\mathrm{rAdv}^{\nu}_{\alpha}(g) =
    \bar{\gamma}_2^{\nu}(J-\Psi|\{\Delta_i\})$. We define $c:=\frac{1-\beta}{1-\alpha}\in(0,1)$ and
    $\Psi' := c\Psi + (1-c)J$. Then, $\Psi' \circ A_g = A_g$ and
    \begin{align*}
        \|\Psi' \circ \Delta_g \circ \kbr{\nu}\|_{\tr} &\leq c\|\Psi \circ \Delta_g \circ \kbr{\nu}\|_{\tr}
        + (1-c)\|\Delta_g\circ \kbr{\nu}\|_{\tr}  \\
        & \leq c \alpha + (1-c) \gamma_2(\Delta_g) \|\kbr{\nu}\|_{\tr} \\
        &\leq \alpha c + 1-c = \beta,
    \end{align*}
    where $\gamma_2(\Delta_g)\leq 1$ as $g$ is a decision problem. Therefore,
    \begin{align*}
        \mathrm{rAdv}^{\nu}_{\beta}(g) &\leq
        \bar{\gamma}_2^{\nu}(J-\Psi' |\{\Delta_i\}) = \bar{\gamma}_2^{\nu}(c (J - \Psi) |\{\Delta_i\})
        = \frac{1-\beta}{1-\alpha}  \mathrm{rAdv}^{\nu}_{\alpha}(g). \qedhere
    \end{align*}
\end{proof}

\subsubsection{Equivalent primal and dual formulations}
We also provide a dual formulation for $\mathrm{sfAdv}$ which yields a universal certificate of
hardness.
\begin{lemma}\label{lem:dual-sfadv-hardness} Let $\Sigma \subset \{0,1\}^{m}$ and $g:\Sigma\to
 \{0,1\}$ be a decision problem. Also, let $\nu_0,\nu_1:\Sigma\to [0,1]$ be a pair of distributions
 supported on $g^{-1}(0)$ and $g^{-1}(1)$, respectively. Then,
    \begin{align*}
        \mathrm{sfAdv}^{(\nu_0,\nu_1)}(g) = &\progline
        {\max_{\Gamma=\Gamma^\dagger}}
        {\frac{1}{\max_i\|\Gamma\circ\Delta_i\|}}
        {&\bra{\nu_0}\Gamma\ket{\nu_1}=\|\Gamma\|=1;\quad \Gamma\circ\Delta_g=\Gamma},
    \end{align*}
    where $\ket{\nu_b} = \sum_{x \in g^{-1}(b)} \sqrt{\nu_b(x)} \ket{x} \in \bR^{\Sigma}$ is the
    distribution vector supported on $g^{-1}(b)$, with $\|\nu_b\| = 1$. In particular, the value is
    finite and the maximum is attained by a real symmetric matrix $\Gamma$.
\end{lemma}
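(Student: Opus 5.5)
The plan is to start from the decision-problem dual of $\mathrm{rAdv}$ in \Cref{lem:radv-universal-hardness-cert} and take the limit $\delta\to1$. By \Cref{lemma:sfadv-monotonicity}, this limit is exactly the infimum in \Cref{def:sfadv}.

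\textbf{Rewriting the objective.} For feasible $\Gamma$ we have $\Gamma\circ\Delta_g=\Gamma$. Since $\ket{\nu}=(\ket{\nu_0}+\ket{\nu_1})/\sqrt2$, with $\ket{\nu_0}$ and $\ket{\nu_1}$ supported on the two sides of $g$, we get
\[
\langle\Gamma,\Delta_g\circ\kbr{\nu}\rangle=\operatorname{Re}\bra{\nu_0}\Gamma\ket{\nu_1}=:b(\Gamma).
\]
By the moreover part of \Cref{prop:equiv-avg-gamma2}, or by averaging $\Gamma$ with $\overline{\Gamma}$, we may take $\Gamma$ real symmetric, so that $b(\Gamma)=\bra{\nu_0}\Gamma\ket{\nu_1}$. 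Let $\cL:=\{\Gamma=\Gamma^\dagger:\Gamma\circ\Delta_g=\Gamma,\ \max_i\|\Gamma\circ\Delta_i\|\le1\}$, which is compact as in \Cref{lem:equiv-avg-gamma2-query}, and write $s(\Gamma):=\|\Gamma\|$. Then
\[
\frac{\mathrm{rAdv}^{\nu}_{\delta}(g)}{1-\delta}=\max_{\Gamma\in\cL}\Bigl(s(\Gamma)-\frac{s(\Gamma)-b(\Gamma)}{1-\delta}\Bigr).
\]
Here $b\le s$, because $\ket{\nu_0}$ and $\ket{\nu_1}$ are unit vectors. Hence
\[
\mathrm{sfAdv}^{(\nu_0,\nu_1)}(g)=\lim_{\delta\to1}\max_{\Gamma\in\cL}\bigl(s-(s-b)/(1-\delta)\bigr).
\]

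\textbf{Target value and lower bound.} Define $V:=\max\{s(\Gamma):\Gamma\in\cL,\ b(\Gamma)=s(\Gamma)\}$. This maximum is attained because the feasible set is a closed subset of the compact set $\cL$. Every $\Gamma$ with $b=s$ achieves value $s$ for every $\delta$, so the limit is at least $V$.

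\textbf{Upper bound.} Take $\delta_k\to1$ and maximizers $\Gamma_k\in\cL$. Their values are at least $V\ge0$, which forces $s_k-b_k\le s_k(1-\delta_k)\to0$, since $s_k$ is bounded on $\cL$. Passing to a convergent subsequence $\Gamma_k\to\Gamma^*$, continuity of $b$ and $s$ gives $b(\Gamma^*)=s(\Gamma^*)$. Each value is at most $s_k$, which tends to $s(\Gamma^*)\le V$. So the limit equals $V$. This limit exchange is the main point needing care; the compactness of $\cL$ is what makes it work.

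\textbf{Normalization.} The set $\{\Gamma:b(\Gamma)=s(\Gamma)>0,\ \Gamma\circ\Delta_g=\Gamma\}$ is nonempty: $\Gamma_0=\ketbra{\nu_0}{\nu_1}+\ketbra{\nu_1}{\nu_0}$ has $\|\Gamma_0\|=1$ and $\bra{\nu_0}\Gamma_0\ket{\nu_1}=1$. For any nonzero $\Gamma$ with $\Gamma\circ\Delta_g=\Gamma$, the quantity $m(\Gamma):=\max_i\|\Gamma\circ\Delta_i\|$ is positive, since $\Gamma$ has zero diagonal (see \Cref{lem:equiv-avg-gamma2-query}). Rescaling by homogeneity then gives
\[
V=\max\bigl\{1/m(\Gamma):\ \bra{\nu_0}\Gamma\ket{\nu_1}=\|\Gamma\|=1,\ \Gamma\circ\Delta_g=\Gamma\bigr\}.
\]
This is a continuous positive function maximized over a compact set, so the value is finite and attained by a real symmetric matrix.
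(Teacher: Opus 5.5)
Your proof is correct, but it handles the central exchange differently from the paper.

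\textbf{The paper's route.} The paper substitutes $r=1/(1-\delta)$ and writes the ratio as $\inf_{r\ge1}\max_{\Gamma}\bigl(r\langle\Gamma,\kbr{\nu}\rangle-(r-1)\|\Gamma\|\bigr)$. It then applies Sion's minimax theorem (\Cref{thm:sion-minimax}) to move to $\max_\Gamma\inf_{r\ge1}$. This needs the objective to be concave in $\Gamma$ for $r\ge1$ and affine in $r$. After the swap, the inner infimum is $-\infty$ unless $\|\Gamma\|=\langle\Gamma,\kbr{\nu}\rangle$.

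\textbf{Your route.} You use no minimax theorem. Instead you note that $\phi_\delta(\Gamma)=s-(s-b)/(1-\delta)$ is pointwise non-increasing in $\delta$, since $s\ge b$. As $\delta\to1$, it tends to $s(\Gamma)$ on $\{b=s\}$ and to $-\infty$ elsewhere. You then exchange $\lim_{\delta\to1}$ with $\max_{\Gamma\in\cL}$ by a Dini-type compactness argument.

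\textbf{Comparison.} Your argument needs only continuity of $b,s$ and compactness of $\cL$; no convexity or concavity is used. It also shows concretely that maximizers for $\delta$ near $1$ accumulate at optimal Shaltiel-free certificates. The paper's argument follows the minimax template used throughout the paper.

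\textbf{Minor points.} You do not need \Cref{lemma:sfadv-monotonicity}: monotonicity of $\max_\Gamma\phi_\delta$ in $\delta$ is already visible in your rewriting. Also, the moreover part of \Cref{prop:equiv-avg-gamma2} concerns $\bar{\gamma}_2^\mu$, not the dual form of $\mathrm{rAdv}$. The averaging $\Gamma\mapsto(\Gamma+\overline{\Gamma})/2$ is the correct justification for restricting to real symmetric $\Gamma$. Since you define $b=\operatorname{Re}\bra{\nu_0}\Gamma\ket{\nu_1}$, your argument already works for Hermitian $\Gamma$, and the averaging is needed only for the final attainment claim. Your homogeneity normalization and real-symmetric reduction are the same as the paper's.
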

\begin{proof}
    Let $\ket{\nu}:=(\ket{\nu_0}+\ket{\nu_1})/\sqrt{2}$. Using the $\mathrm{rAdv}^{\nu}_{\delta}$
    characterization in \Cref{lem:radv-universal-hardness-cert}, we have
    \begin{align*}
        \begin{alignedat}{2}
        \mathrm{sfAdv}^{(\nu_0,\nu_1)}(g) = &\inf_{\delta\in[0,1)}\max_{\Gamma=\Gamma^\dagger}
        \frac{\langle\Gamma,\kbr{\nu}\rangle-\delta\|\Gamma\|}{1-\delta}
        &\quad\mathrm{s.t.}\quad &\Gamma\circ\Delta_g=\Gamma;\quad \max_i\|\Gamma\circ\Delta_i\|\leq1\\[0.6em]
        = &\inf_{\delta\in[0,1)}\max_{\Gamma=\Gamma^\dagger}
        \frac{\langle\Gamma,\kbr{\nu}\rangle-\|\Gamma\|}{1-\delta}+\|\Gamma\|
        &\quad\mathrm{s.t.}\quad &\Gamma\circ\Delta_g=\Gamma;\quad \max_i\|\Gamma\circ\Delta_i\|\leq1\\[0.6em]
        = &\inf_{r\in[1,\infty)}\max_{\Gamma=\Gamma^\dagger}
        r\bigl(\langle\Gamma,\kbr{\nu}\rangle-\|\Gamma\|\bigr)+\|\Gamma\|
        &\quad\mathrm{s.t.}\quad &\Gamma\circ\Delta_g=\Gamma;\quad \max_i\|\Gamma\circ\Delta_i\|\leq1,
        \end{alignedat}
    \end{align*}
    where we applied a change of variable $r:=1/(1-\delta)$. The feasible set of $\Gamma$ is compact
    and convex: it is closed, and every entry that can be nonzero is bounded by one of the filter
    constraints. Moreover, $[1,\infty)$ is convex, and the objective is continuous, concave in
    $\Gamma$ for $r\geq1$, and affine in $r$. Hence, by Sion's minimax theorem
    (\Cref{thm:sion-minimax}),
    \begin{align*}
        \mathrm{sfAdv}^{(\nu_0,\nu_1)}(g) = &\progline
        {\max_{\Gamma=\Gamma^\dagger}\inf_{r\in[1,\infty)}}
        {r\bigl(\langle\Gamma,\kbr{\nu}\rangle-\|\Gamma\|\bigr)+\|\Gamma\|}
        {&\Gamma\circ\Delta_g=\Gamma;\quad \max_i\|\Gamma\circ\Delta_i\|\leq1}
    \end{align*}
    Since $\|\nu\|=1$, $\langle \Gamma,\kbr{\nu}\rangle - \|\Gamma\| \leq 0$. Hence, if the
    inequality is strict, the infimum will be $-\infty$. So we may assume $\|\Gamma\| = \langle
    \Gamma,\kbr{\nu}\rangle$. Opening the inner product with respect to $\ket{\nu_0}$ and
    $\ket{\nu_1}$ we get
    \begin{align*}
        \langle \Gamma, \kbr{\nu}\rangle = \frac{1}{2}(\bra{\nu_0}+\bra{\nu_1})\Gamma (\ket{\nu_0}+\ket{\nu_1}) = \frac{1}{2} (\bra{\nu_0} \Gamma \ket{\nu_1} + \bra{\nu_1}\Gamma \ket{\nu_0}) = \Re \bra{\nu_0} \Gamma \ket{\nu_1},
    \end{align*}
    where we used $\bra{\nu_b}\Gamma\ket{\nu_b}=0$ for $b\in\{0,1\}$, as
    $\Gamma\circ\Delta_g=\Gamma$. Since $|\bra{\nu_0}\Gamma\ket{\nu_1}|\leq\|\Gamma\|$, the equality
    $\|\Gamma\| = \langle \Gamma,\kbr{\nu}\rangle$ holds if and only if
    $\bra{\nu_0}\Gamma\ket{\nu_1}=\|\Gamma\|$. Consequently, we can rewrite the program as
    \begin{align*}
        \mathrm{sfAdv}^{(\nu_0,\nu_1)}(g) = &\progline
        {\max_{\Gamma=\Gamma^\dagger}}
        {\|\Gamma\|}
        {&\Gamma\circ\Delta_g=\Gamma;\quad \max_i\|\Gamma\circ\Delta_i\|\leq1; \\
         &\bra{\nu_0}\Gamma\ket{\nu_1}=\|\Gamma\|}.
    \end{align*}
    This feasible set is closed and contained in the compact set above, so the maximum is finite and
    attained.
Next, we use homogeneity to obtain the quotient formulation. Note that $\Gamma_0 :=
\ketbra{\nu_0}{\nu_1}+\ketbra{\nu_1}{\nu_0}$ is feasible with norm $\|\Gamma_0\|=1$, so we may
restrict the problem to $\Gamma\neq 0$. Then $\langle\nu_0|\Gamma|\nu_1\rangle=\|\Gamma\|>0$ implies
that $\Gamma(x,y)\neq0$ for some $x,y$ with $g(x)\neq g(y)$, and hence
$\max_i\|\Gamma\circ\Delta_i\|>0$. Write $M:=\max_i\|\Gamma\circ\Delta_i\|$. Any nonzero $\Gamma$
feasible for the above formulation has $M\leq1$, so $\|\Gamma\|/M\geq\|\Gamma\|$. Conversely, any
$\Gamma$ feasible for the quotient formulation can be normalized to $\Gamma':=\Gamma/M$, which
satisfies $\max_i\|\Gamma'\circ\Delta_i\|=1$ and all the remaining constraints, with
$\|\Gamma'\|=\|\Gamma\|/M$. Therefore, the two optimal values agree, giving
\begin{align*}
        \mathrm{sfAdv}^{(\nu_0,\nu_1)}(g) = &\progline
        {\max_{\Gamma = \Gamma^{\dagger}\neq 0}}
        {\frac{\|\Gamma\|}{\max_{i}\|\Gamma \circ \Delta_i\|}}
        {&\bra{\nu_0}\Gamma\ket{\nu_1}=\|\Gamma\|;\quad \Gamma\circ\Delta_g=\Gamma}.
    \end{align*}
 This quotient is unchanged by scaling, so we may normalize
 $\langle\nu_0|\Gamma|\nu_1\rangle=\|\Gamma\|=1$, obtaining the stated form.
\par
Finally, to obtain a real symmetric maximizer, replace $\Gamma$ by $\Gamma' =
(\Gamma+\overline{\Gamma})/2$. This preserves $\Gamma'\circ\Delta_g=\Gamma'$ and
$\bra{\nu_0}\Gamma'\ket{\nu_1}=1$, since $\ket{\nu_0}$ and $\ket{\nu_1}$ are real. Moreover, the
operator norm and any filtered norms do not increase by the triangle inequality. The equality
$\bra{\nu_0}\Gamma'\ket{\nu_1}=1$ also forces the operator norm to be exactly $\|\Gamma'\| = 1$, so
the replacement is feasible and its objective value cannot decrease.
\end{proof}

\section{A composition theorem via the Shaltiel-free adversary}\label{sec:composition} In this
section, we present our composition theorems for computing $f\circ g^{\otimes n}$. More
specifically, in \Cref{sec:composition-adversary}, we prove composition theorems for our adversary
method and its relaxation, using the Shaltiel-free measure for the inner function. Then, in
\Cref{sec:composition-quantum}, we recover bounds on distributional quantum query complexity and
show that, in the limit, $\mathrm{sfAdv}$ is the optimal inner factor up to constant factors. We use
the notation in \Cref{sec:prelim-composition}.

\subsection{Composition for distributional adversary methods}\label{sec:composition-adversary}
\begin{theorem}\label{thm:composition-adversary} Let $\Sigma\subset \{0,1\}^{n}$,
    $f:\Sigma\to\Omega$ be a functional problem, and $\mu:\Sigma\to[0,1]$ be a distribution. Let
    $\Sigma_g\subset \{0,1\}^{m}$, $g:\Sigma_g\to\{0,1\}$ be a decision problem, and
    $\nu_0,\nu_1:\Sigma_g\to[0,1]$ be distributions supported on $g^{-1}(0)$ and $g^{-1}(1)$,
    respectively. Then, the following hold.
    \begin{itemize}
        \item For an error parameter $\eps=\eps(n)\in(0,1/2)$,
        \begin{align*}
            \mathrm{Adv}^{\mu\circ(\nu_0,\nu_1)}_{\eps}(f\circ\bg)
            &\geq\mathrm{Adv}^{\mu}_{\eps}(f)\cdot\mathrm{sfAdv}^{(\nu_0,\nu_1)}(g).
        \end{align*}
        \item For $\delta=\delta(n)\in(0,1)$,
        \begin{align*}
            \mathrm{rAdv}^{\mu\circ(\nu_0,\nu_1)}_{\delta}(f\circ\bg)
            &\geq\mathrm{rAdv}^{\mu}_{\delta}(f)\cdot\mathrm{sfAdv}^{(\nu_0,\nu_1)}(g).
        \end{align*}
        \item For every fixed $\delta\in(0,1)$,
        \begin{align*}
            \mathrm{sfAdv}^{(\nu_0,\nu_1)}(g)
            &=\lim_{k\to\infty}\frac{\mathrm{rAdv}^{\eta_k}_{\delta}
            (\mathrm{TRIV}_k\circ g^{\otimes k})}{1-\delta},
        \end{align*}
        where $\mathrm{TRIV}_k:\{0^k,1^k\}\to\{0,1\}$ is defined by $\mathrm{TRIV}_k(b^k)=b$ and
        $\eta_k=(\nu_0^{\otimes k}+\nu_1^{\otimes k})/2$.
    \end{itemize}
\end{theorem}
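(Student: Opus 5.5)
We use a ``dual-dual'' argument for the first two items and a matching primal construction for the third. Fix an optimal adversary matrix $\Gamma_f$ for $\mathrm{Adv}^\mu_\eps(f)$ from \Cref{lem:adv-universal-hardness-cert} (resp.\ for $\mathrm{rAdv}^\mu_\delta(f)$ from \Cref{lem:radv-universal-hardness-cert}). Also fix a real symmetric maximizer $\Gamma_g$ of \Cref{lem:dual-sfadv-hardness}, so that $\bra{\nu_0}\Gamma_g\ket{\nu_1}=\|\Gamma_g\|=1$, $\Gamma_g\circ\Delta_g=\Gamma_g$, and $\max_j\|\Gamma_g\circ\Delta_j\|=1/\mathrm{sfAdv}^{(\nu_0,\nu_1)}(g)$.

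The first observation is that the norm condition is rigid: $\bra{\nu_0}\Gamma_g\ket{\nu_1}=\|\Gamma_g\|=1$ forces $\Gamma_g\ket{\nu_1}=\ket{\nu_0}$ and $\Gamma_g\ket{\nu_0}=\ket{\nu_1}$. Hence $\Gamma_g=\ketbra{\nu_0}{\nu_1}+\ketbra{\nu_1}{\nu_0}+R$, where $R$ annihilates $\mathrm{span}\{\ket{\nu_0},\ket{\nu_1}\}$. With this decomposition, define the lifting map $\cE^\uparrow$ on $\cL(\bC^\Sigma)$ entrywise. For outer inputs $\tx,\ty$, the $(\tx,\ty)$ block of $\cE^\uparrow(X)$ is $X(\tx,\ty)\bigotimes_{i=1}^n M_{\tx_i\ty_i}$, with $M_{bb}:=\ketbra{\nu_b}$ and $M_{b\bar b}:=\Gamma_g$ restricted to $g^{-1}(b)\times g^{-1}(\bar b)$. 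The candidate composed adversary is $\mathrm{sfAdv}\cdot\cE^\uparrow(\Gamma_f)$.

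\emph{Inner product step.} As in \Cref{claim:direct-sum-lower-inner-product}, we compute the adjoint $\cE^\downarrow$ on matrices of the form $Y\circ\kbr{\eta}$ with $\eta=\mu\circ(\nu_0,\nu_1)$. The action of $\Gamma_g$ on $\ket{\nu_b}$ reduces to the rank-two part; the goal is to show that each inner factor then compresses by the vectors $\ket{\nu_{\tx_i}}$. This should give $\cE^\downarrow((J-\Lambda)\circ\kbr{\eta})=(J-\Lambda^\downarrow)\circ\kbr{\mu}$, where $\Lambda^\downarrow(\tx,\ty)=(\bigotimes_i\bra{\nu_{\tx_i}})\Lambda(\bigotimes_i\ket{\nu_{\ty_i}})$. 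This is a compression by an isometry, hence a channel on the weighted matrices. Consequently $\Lambda^\downarrow$ is a Gram matrix, and valid witnesses map to valid witnesses for $f$ (correctness survives because $\nu_b$ is supported on $g^{-1}(b)$). Monotonicity of fidelity then gives $\Lambda^\downarrow\in\cG^\mu_\eps(f)$. For the relaxed item, the agreement pattern is preserved and \Cref{lem:channel-norm-contraction} gives $\Lambda^\downarrow\in\cR^\mu_\delta(f)$.

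\emph{Filter step.} This is the step I expect to be the main obstacle. We must show $\|\cE^\uparrow(\Gamma_f)\circ\Delta_{(i,j)}\|\leq\|\Gamma_f\circ\Delta_i\|\cdot\|\Gamma_g\circ\Delta_j\|$. The filter acts only on the $i$-th inner factor. The off-diagonal factor becomes $\Gamma_g\circ\Delta_j$, but the diagonal factor $\ketbra{\nu_b}\circ\Delta_j$ need not vanish. I would first try the LMR block trick, writing the $2\times2$ operator block $\bigl(\begin{smallmatrix}P_0&\Gamma_{01}\\ \Gamma_{10}&P_1\end{smallmatrix}\bigr)$ as a combination of $\Gamma_g$ and $\pm(P_0+P_1)$. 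Since $\Gamma_g$ acts as a Pauli-$X$ on $\mathrm{span}\{\nu_0,\nu_1\}$, the goal is to absorb the diagonal contributions into terms killed by $\Delta_j$ or dominated via $\gamma_2$ bounds. If that fails, the fallback is to replace $M_{bb}$ by an operator commuting with the filters, at the cost of reworking the inner-product step. Once the bound holds, \Cref{lem:adv-universal-hardness-cert} gives the first item, and \Cref{lem:radv-universal-hardness-cert} gives the second.

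For the third item, the ``$\geq$'' direction follows from the second item with $f=\mathrm{TRIV}_k$ and $\mu$ uniform on $\{0^k,1^k\}$, which yields $\eta_k$ as the composed distribution. Here every filter $\Delta_i$ equals $\Delta_f$, so the dual of \Cref{lem:radv-universal-hardness-cert} with $\Gamma=\ketbra{0^k}{1^k}+\ketbra{1^k}{0^k}$ gives $\mathrm{rAdv}_\delta(\mathrm{TRIV}_k)\geq1-\delta$. For ``$\leq$'', take near-optimal $\Psi\in\cR^\nu_{\delta'}(g)$ and lift it by the telescoping map of \Cref{claim:direct-sum-upper-inner-product}, which costs at most $k\,\mathrm{rAdv}^\nu_{\delta'}(g)$. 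The cross-block trace norm of the disagreement part should decay roughly like $(\delta')^k$. Choosing $\delta'=1-c/k$ and then mixing with $J$ as in \Cref{lemma:sfadv-monotonicity}, I would aim to bring the target error down to $\delta$ with total cost $(1-\delta)\,\mathrm{sfAdv}$ in the limit, using the monotone limit $\delta'\to1$ of the ratio. The exact constant matching here is delicate, and I would treat it as a second potential difficulty.
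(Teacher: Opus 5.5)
Your overall architecture matches the paper's: lift a dual pair $(\Gamma_f,\Gamma_g)$, pull targets back through the adjoint, bound the filtered norms, and then handle item 3 via item 2 plus a primal $k$-fold lift. The specific lift you chose does not work, though, and the item-3 upper bound loses a constant.

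\textbf{Items 1 and 2: the lift fails.}
\emph{Filter step.} With $M_{bb}=\kbr{\nu_b}$ the filter inequality is false in general. Since $\kbr{\nu_b}\circ\Delta_j\neq0$, $\cE^\uparrow(\Gamma_f)\circ\Delta_{(i,j)}$ contains a term built from the entries of $\Gamma_f$ with $\tx_i=\ty_i$, tensored with $\kbr{\nu_{\tx_i}}\circ\Delta_j$. For example, if $\Gamma_f\circ\Delta_i=0$ but $\Gamma_f\neq0$, your right-hand side vanishes while the left does not.

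\emph{Inner-product step.} The inner block matrix is $K=(\ket{\nu_0}+\ket{\nu_1})(\bra{\nu_0}+\bra{\nu_1})+R$. The remainder $R$ anticommutes with $I_{g^{-1}(0)}-I_{g^{-1}(1)}$, which preserves $\mathrm{span}\{\ket{\nu_0},\ket{\nu_1}\}$. So $R$ has spectrum symmetric about $0$, and $K\not\succeq0$ unless $R=0$. Hence your adjoint is not completely positive, and fidelity monotonicity is unavailable. Moreover, on the diagonal every factor is $\kbr{\nu_{\tx_i}}$. So the pulled-back matrix satisfies $\Lambda^\downarrow(\tx,\tx)=\E_{\bx,\by\sim\nu_{\tx}}\Lambda(\bx,\by)=\|\E_{\bx\sim\nu_{\tx}}\ket{\lambda_{\bx}}\|^2$, where $\nu_{\tx}:=\bigotimes_i\nu_{\tx_i}$ and $\Lambda(\bx,\by)=\braket{\lambda_{\bx}}{\lambda_{\by}}$. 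This is below $1$ in general, so $\Lambda^\downarrow$ is not a Gram matrix. Your rigidity observation fixes $\Gamma_g\ket{\nu_b}$, but not the Hadamard pairing $\langle\Gamma_g,\Lambda\circ\ketbra{\nu_0}{\nu_1}\rangle$ for general $\Lambda$, so the adjoint is not a compression by $\bigotimes_i\ket{\nu_{\tx_i}}$.

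\emph{The missing idea.} Put the identity on the diagonal blocks: $\cE^\uparrow(X)=\bG^\dagger X\bG\circ(\Gamma_g+I)^{\otimes n}$. This repairs every step:
\begin{itemize}
\item $\|\Gamma_g\|\leq1$ gives $\Gamma_g+I\succeq0$, hence complete positivity.
\item $\cE^\uparrow(I)=I$ gives trace preservation.
\item With $\nu=(\nu_0+\nu_1)/2$, the identity $2G((\Gamma_g+I)\circ\kbr{\nu})G^\dagger=J_2$ uses only $\bra{\nu_0}\Gamma_g\ket{\nu_1}=1$. It sends $\kbr{\eta}$ to $\kbr{\mu}$ and unit diagonals to unit diagonals.
\item Since $I\circ\Delta_j=0$, you get $(I+\Gamma_g)\circ\Delta_j=(\lambda I+\Gamma_g\circ\Delta_j)\circ\Delta_g$ with $\lambda=\|\Gamma_g\circ\Delta_j\|$. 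This moves $\Delta_g$ onto the outer filter via $\bG^\dagger\Delta_i\bG=J^{\otimes(n-1)}\otimes_i\Delta_g$. The bound then closes with a Schur-product sandwich against $\pm\|X\circ\Delta_i\|\,\bG^\dagger\bG$.
\end{itemize}

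\textbf{Item 3.} Your lower bound (item 2 plus the certificate $\ketbra{0^k}{1^k}+\ketbra{1^k}{0^k}$) is fine once item 2 is repaired. The upper bound as planned cannot reach equality, however.

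Write $F_k:=\mathrm{TRIV}_k\circ g^{\otimes k}$ and $r(\delta'):=\mathrm{rAdv}^\nu_{\delta'}(g)/(1-\delta')$. You bound the lifted cost by $k\,\mathrm{rAdv}^\nu_{\delta'}(g)=k(1-\delta')r(\delta')$ at error $\delta'^k$. The monotonicity lemma then gives at best $\frac{k(1-\delta')}{1-\delta'^k}\,r(\delta')$, subject to $\delta'^k\leq\delta$. Since $k(1-x)/(1-x^k)=k/(1+x+\dots+x^{k-1})$ is decreasing in $x$, this is at least $\frac{k(1-\delta^{1/k})}{1-\delta}\,\mathrm{sfAdv}$. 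As $k\to\infty$ this tends to $\frac{\ln(1/\delta)}{1-\delta}\,\mathrm{sfAdv}$, a fixed constant factor too large.

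What you are missing is that the cost also decays geometrically. Write $J-\Psi=\sum_jX_j\circ\Delta_j$ on $\supp(\nu)$ and set $Z_{ij}:=\Delta_{F_k}\circ(J^{\otimes(i-1)}\otimes X_j\otimes\Psi^{\otimes(k-i)})$. On the disagreement blocks, $\kbr{\eta_k}$ reduces to $(\ketbra{\nu_0}{\nu_1})^{\otimes k}$, and trace norm is multiplicative over tensor products. So each $\Psi$ factor contributes $\|\Psi\circ\Delta_g\circ\kbr{\nu}\|_{\tr}\leq\delta'$ and each $J$ factor contributes $1$. The total cost is therefore $\frac{1-\delta'^k}{1-\delta'}\,\mathrm{rAdv}^\nu_{\delta'}(g)$ at error $\delta'^k$.

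Then keep the near-optimal $\delta'$ fixed, rather than setting $\delta'=1-c/k$. Let $k\to\infty$ until $\delta'^k\leq\delta$, and apply \Cref{lemma:sfadv-monotonicity} to $F_k$. Your telescoping map would also yield this bound if you tracked $\|\rho_\ell\|_{\tr}\leq\delta'^{\ell-1}$ instead of bounding every term uniformly.
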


\paragraph{Proof overview.}
Our proof strategy is in essence a dual composition proof, with a lifting construction related to
those in \cite{HLS07,BL20}. We prove both inequalities of
\Cref{thm:composition-adversary} through the following steps.
\begin{itemize}
    \item \textbf{The lifting map and its adjoint.} We use a hardness certificate $\Gamma_g$ for
    $\mathrm{sfAdv}$ as in \Cref{lem:dual-sfadv-hardness} to define a lifting map
    $\cE^{\uparrow}_{\Gamma_g}$ and its adjoint $\cE^{\downarrow}_{\Gamma_g}$. The lifting map sends
    a hardness certificate $\Gamma_f$ of $f$ (as in \Cref{lem:adv-universal-hardness-cert} for
    $\mathrm{Adv}$ or \Cref{lem:radv-universal-hardness-cert} for $\mathrm{rAdv}$) to a candidate
    certificate $\cE^{\uparrow}_{\Gamma_g}(\Gamma_f)$ for the composed function $f\circ\bg$.

    \item \textbf{Preservation of valid targets.} We show that the adjoint preserves the weighted
    initial matrix and valid targets:
    $\cE^{\downarrow}_{\Gamma_g}(J\circ\kbr{\eta})=J\circ\kbr{\mu}$ and
    $\cE^{\downarrow}_{\Gamma_g}(\Lambda\circ\kbr{\eta})=\Phi\circ\kbr{\mu}$, where
    $\Phi\in\cG^{\mu}_{\eps}(f)$ for $\mathrm{Adv}$ and $\Phi\in\cR^{\mu}_{\delta}(f)$ for
    $\mathrm{rAdv}$.

    \item \textbf{Bounding filtered norms.} We bound the filtered norms
    $\|\cE^{\uparrow}_{\Gamma_g}(\Gamma_f)\circ\Delta_{ij}\|$ in terms of
    $\|\Gamma_f\circ\Delta_i\|$ and $\|\Gamma_g\circ\Delta_j\|$, so that the lifted certificate can
    be normalized for the composed problem.

    \item \textbf{Proof of the composition theorem.} Finally, we use adjointness to write
    \begin{align*}
        &\langle \cE^{\uparrow}_{\Gamma_g}(\Gamma_f),(J-\Lambda)\circ\kbr{\eta}\rangle =\langle\Gamma_f,\cE^{\downarrow}_{\Gamma_g}(J\circ\kbr{\eta})
        -\cE^{\downarrow}_{\Gamma_g}(\Lambda\circ\kbr{\eta})\rangle.
    \end{align*}
    The preservation properties and the filtered-norm bounds then give both composition
    inequalities.
\end{itemize}
These four steps are carried out in \Cref{sec:composition-lifting},
\Cref{sec:composition-preservation}, \Cref{sec:composition-filtered-norms}, and
\Cref{sec:composition-proof}, respectively. The tightness result for $\mathrm{TRIV}$ is proven in
\Cref{sec:composition-triv}.

\subsubsection{The lifting map and its adjoint}\label{sec:composition-lifting} In this part, we
define our lifting map and obtain its adjoint as well as a few useful properties. We begin by
encoding the action of the function $x\to g(x)$ as a matrix mapping $\ket{x} \to \ket{g(x)}$.
\begin{definition}\label{def:composition-G} Let $g:\Sigma_g\to\{0,1\}$ be a decision problem, where
    $\Sigma_g\subset\{0,1\}^{m}$. We define the matrix $G$ so that, for $x\in\Sigma_g$, it outputs
    $G\ket{x}=\ket{g(x)}$; i.e.,
    \begin{align*}
        G &:= \sum_{x\in\Sigma_g}\ketbra{g(x)}{x}\in\cL(\Sigma_g,\{0,1\}).
    \end{align*}
    Then, the map $\bG:=G^{\otimes n}\in\cL(\Sigma_g^n,\{0,1\}^n)$ maps
    $\bG\ket{\bx}=\ket{\bg(\bx)}$ for $\bx\in\Sigma_g^n$. Specifically, $\bG$ can be used to lift a
    matrix $B\in\cL(\{0,1\}^n)$ to $\bG^\dagger B\bG\in\cL(\Sigma_g^n)$.
\end{definition}
Expanding the above sum can be visually distracting.  So, to streamline future calculations, we
collect a few useful identities for $\bG$ in the following lemma. The proofs are straightforward
calculations and, for the sake of brevity, are postponed to \Cref{app:G-properties}.
\begin{lemma}\label{lem:G-properties} Let $g$ and $\bG$ be as in \Cref{def:composition-G},
    $X,X_1,X_2\in\cL(\{0,1\}^n)$, and $Y\in\cL(\Sigma_g^n)$. The following hold.
    \begin{enumerate}[(i)]
        \item\label{item:G-entries} For $\bx,\by\in\Sigma_g^n$, we have $(\bG^\dagger
        X\bG)(\bx,\by)=X(\bg(\bx),\bg(\by))$.
        \item\label{item:G-hadamard-lift} We have $\bG^\dagger(X_1\circ X_2)\bG =(\bG^\dagger
        X_1\bG)\circ(\bG^\dagger X_2\bG)$.
        \item\label{item:G-hadamard-pull-through} We have $\bG(Y\circ(\bG^\dagger X\bG))\bG^\dagger
        =X\circ(\bG Y\bG^\dagger)$.
    \end{enumerate}
\end{lemma}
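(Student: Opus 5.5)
All three identities reduce to direct entrywise computation in the computational basis $\{\ket{\bx}:\bx\in\Sigma_g^n\}$. We use only $\bG\ket{\bx}=\ket{\bg(\bx)}$ and its adjoint form $\bra{\bx}\bG^\dagger=\bra{\bg(\bx)}$. Both follow from $G\ket{x}=\ket{g(x)}$ and the tensor structure $\bG=G^{\otimes n}$, since $G^{\otimes n}(\ket{x^1}\otimes\cdots\otimes\ket{x^n})=\ket{g(x^1)}\otimes\cdots\otimes\ket{g(x^n)}$. We prove the items in order, each using the previous one.

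For \ref{item:G-entries}, we compute
\[
(\bG^\dagger X\bG)(\bx,\by)=\bra{\bx}\bG^\dagger X\bG\ket{\by}=\bra{\bg(\bx)}X\ket{\bg(\by)}=X(\bg(\bx),\bg(\by)).
\]
Item \ref{item:G-hadamard-lift} then follows immediately by applying \ref{item:G-entries} to $X_1\circ X_2$. Both sides have $(\bx,\by)$-entry $X_1(\bg(\bx),\bg(\by))\,X_2(\bg(\bx),\bg(\by))$.

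For \ref{item:G-hadamard-pull-through}, we expand using $\bG=\sum_{\bx}\ketbra{\bg(\bx)}{\bx}$. For $a,b\in\{0,1\}^n$,
\[
\bra{a}\bG Z\bG^\dagger\ket{b}=\sum_{\bx:\bg(\bx)=a}\ \sum_{\by:\bg(\by)=b}Z(\bx,\by)
\]
for any $Z\in\cL(\Sigma_g^n)$. Take $Z=Y\circ(\bG^\dagger X\bG)$. By \ref{item:G-entries}, every term in the sum carries the factor $X(\bg(\bx),\bg(\by))=X(a,b)$. This factor is constant on the summation range, so it can be pulled out. The result is $X(a,b)\,(\bG Y\bG^\dagger)(a,b)$, which is the $(a,b)$-entry of $X\circ(\bG Y\bG^\dagger)$.

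None of these steps is a real obstacle. The only point needing care is the fiber-sum description of $\bG Z\bG^\dagger$ in \ref{item:G-hadamard-pull-through}. When a fiber $\bg^{-1}(a)$ is empty, the corresponding row of $\bG Z\bG^\dagger$ is zero. The corresponding entry of the right-hand side $X\circ(\bG Y\bG^\dagger)$ is then also zero, so the identity still holds there.
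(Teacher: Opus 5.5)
Your proof is correct and follows essentially the same route as the paper. Both compute (i) entrywise from $\bG\ket{\bx}=\ket{\bg(\bx)}$ and reduce (ii) to (i). For (iii), both expand $\bG Z\bG^\dagger$ as a sum over the fibers of $\bg$ and pull out the constant factor $X(\tx,\ty)$.
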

We are now ready to define our lifting map with respect to an adversary matrix $\Gamma$ for the inner
decision problem $g$.
\begin{definition}[Lifting map]\label{def:composition-lifting-map} Let $g$ and $\bG$ be as in
    \Cref{def:composition-G}, and let $\Gamma\in\cL(\Sigma_g)$ be real symmetric with
    $\|\Gamma\|\leq1$ and $\Gamma\circ\Delta_g=\Gamma$. We define the lifting map
    $\cE^{\uparrow}_{\Gamma}:\cL(\{0,1\}^{n})\to\cL(\Sigma_g^n)$ by
    \[
        \cE^{\uparrow}_{\Gamma}(X):=\bG^\dagger X\bG\circ(\Gamma+I)^{\otimes n}.
    \]
\end{definition}
To get some intuition, we can imagine that the adversary $\Gamma=\Gamma\circ\Delta_g$ has a
predetermined query budget $\max_{j\in[m]}\|\Gamma\circ\Delta_j\|\leq1$, which it spends according
to how much it cares about the algorithm producing distinguishable states on each pair of inner
inputs. Given two composed inputs $\bx=(x^1,\dots,x^n)$ and $\by=(y^1,\dots,y^n)$, we use the same
weight $\Gamma(x^i,y^i)$ whenever $g(x^i)\neq g(y^i)$. When the outputs agree, however,
distinguishing $x^i$ from $y^i$ does not change the corresponding input to the outer function.
But we cannot assign weight zero, since the weights multiply across coordinates and this would
remove the entire comparison between $\bx$ and $\by$. Instead, we use the identity weight $I$, which
lets identical inner inputs contribute weight one without spending any query budget on
distinguishing them.
\par
The following lemma collects a few useful properties of our map and its adjoint, including their
interaction with Hadamard products and weighted matrices.
\begin{lemma}\label{lem:composition-channel-properties} Let $g$, $\Gamma$, and
    $\cE^{\uparrow}_{\Gamma}$ be as in \Cref{def:composition-lifting-map}. The following hold.
    \begin{enumerate}[(i)]
        \item\label{item:composition-diagonal-lift} For every $X\in\cL(\{0,1\}^n)$, we have
        $\cE^{\uparrow}_{\Gamma}(X\circ I)=\cE^{\uparrow}_{\Gamma}(X)\circ I$. In particular, for
        every $\tx\in\{0,1\}^n$, we have $\cE^{\uparrow}_{\Gamma}(\kbr{\tx})=I_{\bg^{-1}(\tx)}$ and
        $\cE^{\uparrow}_{\Gamma}(I)=I$.
        \item\label{item:composition-channel} The adjoint of $\cE^{\uparrow}_{\Gamma}$ is a quantum
        channel $\cE^{\downarrow}_{\Gamma}:\cL(\Sigma_g^n)\to\cL(\{0,1\}^n)$ given by
        $\cE^{\downarrow}_{\Gamma}(Y)=\bG(Y\circ(\Gamma+I)^{\otimes n})\bG^\dagger$.
        \item\label{item:composition-hadamard} For every $Y\in\cL(\Sigma_g^n)$ and
        $X\in\cL(\{0,1\}^n)$, we have $\cE^{\downarrow}_{\Gamma}(Y)\circ X
        =\cE^{\downarrow}_{\Gamma}(Y\circ(\bG^\dagger X\bG))$.
        \item\label{item:composition-weighted-input} Let $\nu=(\nu_0+\nu_1)/2$, where each $\nu_b$
        is a distribution supported on $g^{-1}(b)$ for $b\in\{0,1\}$, and let
        $\mu:\{0,1\}^n\to[0,1]$ be a distribution. Then, for every $Y\in\cL(\Sigma_g^n)$,
        \[
            \cE^{\downarrow}_{\Gamma}(Y \circ \kbr{\eta})=2^n\cE^{\downarrow}_{\Gamma}(Y\circ \kbr{\nu}^{\otimes n})\circ \kbr{\mu},
        \]
        where $\eta=\mu\circ(\nu_0,\nu_1)$ is the composed distribution.
    \end{enumerate}
\end{lemma}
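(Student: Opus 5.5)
The plan is to prove the four items in order, working entrywise and relying on \Cref{lem:G-properties}. Item (i) comes from looking at diagonal entries. The $(\bx,\by)$ entry of $\cE^{\uparrow}_\Gamma(X)$ is
\[
X(\bg(\bx),\bg(\by))\prod_{i=1}^n(\Gamma+I)(x^i,y^i).
\]
Since $\Gamma\circ\Delta_g=\Gamma$, we have $\Gamma(x^i,y^i)=0$ whenever $g(x^i)=g(y^i)$, and so $(\Gamma+I)(x^i,y^i)=\ind[x^i=y^i]$ in that case. Hence whenever $\bg(\bx)=\bg(\by)$, the product equals $\ind[\bx=\by]$. This shows that $\cE^{\uparrow}_\Gamma(X\circ I)$, which is supported on pairs with $\bg(\bx)=\bg(\by)$, equals $\cE^{\uparrow}_\Gamma(X)\circ I$. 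Taking $X=\kbr{\tx}$ gives $I_{\bg^{-1}(\tx)}$. Summing over $\tx$ gives $\cE^{\uparrow}_\Gamma(I)=I$, because $\bg$ is total on $\Sigma_g^n$.

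For item (ii), I will compute the adjoint from
\[
\langle \bG^\dagger X\bG\circ M, Y\rangle=\langle X,\bG(Y\circ \overline{M})\bG^\dagger\rangle,
\]
where $M=(\Gamma+I)^{\otimes n}$ is real symmetric. This gives the stated formula for $\cE^{\downarrow}_\Gamma$. To show it is a channel, I first check trace preservation: $\Tr[\cE^\downarrow_\Gamma(Y)]=\langle I,\cE^\downarrow_\Gamma(Y)\rangle=\langle\cE^\uparrow_\Gamma(I),Y\rangle=\Tr Y$ by (i). Complete positivity is where the work lies. Because $\|\Gamma\|\le1$, we have $\Gamma+I\succeq0$, so $M\succeq0$, and the Hadamard map $Y\mapsto Y\circ M$ with a PSD matrix $M$ is completely positive. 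Explicitly, factor $M=\sum_k \ket{m_k}\bra{m_k}$ to get Kraus operators $D_k=\mathrm{Diag}(m_k)$, just as in \Cref{thm:hadamard-channel} but without requiring unit diagonal. Conjugation by $\bG$ is also completely positive, so the composition is completely positive. I expect this to be the main obstacle only in the sense that positivity of $\Gamma+I$ needs $\|\Gamma\|\le1$ together with $\Gamma$ being symmetric; both are given by hypothesis.

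For item (iii), I combine \Cref{lem:G-properties}\ref{item:G-hadamard-pull-through} with commutativity of the Hadamard product:
\[
\cE^\downarrow_\Gamma(Y\circ \bG^\dagger X\bG)=\bG\bigl((Y\circ M)\circ \bG^\dagger X\bG\bigr)\bG^\dagger=X\circ \bG(Y\circ M)\bG^\dagger.
\]

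For item (iv), I factor the composed vector as $\ket{\eta}=\bigl(\bG^\dagger\ket{\sqrt{2^n\mu}}\bigr)\circ\ket{\nu}^{\otimes n}$ entrywise, where $\ket{\sqrt{2^n\mu}}$ denotes the vector with entries $\sqrt{2^n\mu(\tx)}$. To verify this, note that $\nu(x)=\nu_{g(x)}(x)/2$ because the supports of $\nu_0$ and $\nu_1$ are disjoint, so
\[
\eta(\bx)=\mu(\bg(\bx))\prod_i\nu_{g(x^i)}(x^i)=2^n\mu(\bg(\bx))\prod_i\nu(x^i).
\]
Therefore $\kbr{\eta}=\kbr{\nu}^{\otimes n}\circ \bG^\dagger(2^n\kbr{\mu})\bG$, using \Cref{lem:G-properties}\ref{item:G-entries}. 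Substituting this and applying (iii) with $X=2^n\kbr{\mu}$ yields the claim.
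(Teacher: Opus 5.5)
Your proof is correct and follows the paper's argument essentially step by step: the same adjoint computation, trace preservation via (i), complete positivity from $\Gamma+I\succeq 0$, the pull-through identity of \Cref{lem:G-properties} for (iii), and the factorization $\kbr{\eta}=(\bG^\dagger\kbr{\mu}\bG)\circ(2\kbr{\nu})^{\otimes n}$ for (iv). The only cosmetic difference is that you verify (i) entrywise, whereas the paper uses the identities $\bG^\dagger\bG=A_g^{\otimes n}$ and $A_g\circ(\Gamma+I)=I$.
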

\begin{proof}
    We prove each item as follows.
\begin{enumerate}[(i)]
    \item For $\bx,\by\in\Sigma_g^n$, by \Crefitem{item:G-entries}, we have
$(\bG^\dagger\bG)(\bx,\by) =\ind[\bg(\bx)=\bg(\by)]$. Thus, $\bG^\dagger\bG=A_g^{\otimes n}$. Since
$\Gamma\circ\Delta_g=\Gamma$, we have $\Gamma\circ A_g=0$ and $\Gamma\circ I=0$. Therefore,
\[
    \bG^\dagger\bG\circ(\Gamma+I)^{\otimes n}
    =(A_g\circ(\Gamma+I))^{\otimes n}=I.
\]
Now let $X\in\cL(\{0,1\}^n)$. Using \Crefitem{item:G-hadamard-lift}, we can write
\begin{align*}
    \cE^{\uparrow}_{\Gamma}(X\circ I)
    = &\bG^\dagger(X\circ I)\bG\circ(\Gamma+I)^{\otimes n}\\
    = &(\bG^\dagger X\bG)\circ(\bG^\dagger I\bG)
       \circ(\Gamma+I)^{\otimes n}\\
    = &(\bG^\dagger X\bG)\circ I\\
    = &\cE^{\uparrow}_{\Gamma}(X)\circ I.
\end{align*}
Taking $X=\kbr{\tx}$ for $\tx\in\{0,1\}^n$ gives
\[
    \cE^{\uparrow}_{\Gamma}(\kbr{\tx})
    =(\bG^\dagger\kbr{\tx}\bG)\circ I
    =I_{\bg^{-1}(\tx)},
\]
where we used $(\bG^\dagger\kbr{\tx}\bG)(\bx,\bx)=\ind[\bg(\bx)=\tx]$. Finally, taking $X=I$ gives
$\cE^{\uparrow}_{\Gamma}(I)=(\bG^\dagger\bG)\circ I=I$.
    \item The adjoint of the Hadamard-product map $X \to X \circ (I+\Gamma)^{\otimes n}$ is the map $X \to X
\circ \overline{(I+\Gamma)^{\otimes n}} = X\circ (I+\Gamma)^{\otimes n}$. Moreover, the adjoint of
the map $X\to \bG^{\dagger} X \bG$ is the map $Y\to \bG Y \bG^{\dagger}$. Composing the two adjoints
we get $\cE^{\downarrow}_{\Gamma}(Y) = \bG (Y \circ (\Gamma+I)^{\otimes n} )\bG^{\dagger}.$
\par
We next prove that $\cE^{\downarrow}_{\Gamma}$ is a quantum channel. Hadamard product with PSD
 matrices and conjugation $\bG Y\bG^{\dagger}$ are both completely positive maps, and so is their
 composition. Moreover, since $\|\Gamma\|\leq 1$, we have that $I+\Gamma\succeq 0$. As a result,
 $\cE^{\downarrow}_{\Gamma}$ is completely positive. Additionally, the map is trace preserving. To
 see this, we can write
\begin{align*}
    \Tr[\cE^{\downarrow}_{\Gamma}(Y)]
    = &\langle I,\cE^{\downarrow}_{\Gamma}(Y)\rangle
    =\langle\cE^{\uparrow}_{\Gamma}(I),Y\rangle
    =\langle I,Y\rangle=\Tr[Y],
\end{align*}
where we used \Crefitem{item:composition-diagonal-lift}.

    \item We compute
$$\cE^{\downarrow}_{\Gamma}(Y\circ (\bG^{\dagger}X\bG)) = \bG \left(Y\circ (\bG^{\dagger}X\bG) \circ
(\Gamma + I)^{\otimes n}\right) \bG^{\dagger} = \bG \left(Y \circ (\Gamma + I)^{\otimes n}\right)
\bG^{\dagger} \circ X = \cE^{\downarrow}_{\Gamma}(Y)\circ X,$$ where the second equality follows
from \Crefitem{item:G-hadamard-pull-through}.

    \item For $\bx,\by\in \Sigma_g^{n}$, we have
\begin{align*}
    \kbr{\eta}(\bx,\by) &= \sqrt{\eta(\bx)\eta(\by)} = \sqrt{\mu(\bg(\bx))\mu(\bg(\by))}\cdot\prod_{i=1}^{n}\sqrt{\nu_{g(x^i)}(x^i)\nu_{g(y^i)}(y^i)}\\
    &= \sqrt{\mu(\bg(\bx))\mu(\bg(\by))}\cdot\prod_{i=1}^{n}2\sqrt{\nu(x^i)\nu(y^i)}\\
    &= (\bG^\dagger\kbr{\mu}\bG)(\bx,\by)\cdot(2\kbr{\nu})^{\otimes n}(\bx,\by),
\end{align*}
where the second and third lines follow from $\nu_{g(x)}(x)=2\nu(x)$ and \Crefitem{item:G-entries},
respectively. Consequently, $\kbr{\eta} = (\bG^\dagger\kbr{\mu}\bG) \circ(2\kbr{\nu})^{\otimes n}.$
By \Crefitem{item:composition-hadamard} and linearity, we immediately get
\begin{align*}
    \cE^{\downarrow}_{\Gamma}(Y\circ \kbr{\eta}) = &
    \cE^{\downarrow}_{\Gamma}(Y\circ (\bG^\dagger\kbr{\mu}\bG)\circ(2\kbr{\nu})^{\otimes n})
    = 2^n \cdot \cE^{\downarrow}_{\Gamma}(Y\circ \kbr{\nu}^{\otimes n})\circ \kbr{\mu}.\qedhere
\end{align*}
\end{enumerate}
\end{proof}

\subsubsection{Preservation of valid targets}\label{sec:composition-preservation} We now show that
the adjoint map takes valid targets for the composed problem back to valid targets for the outer
problem.
\begin{lemma}\label{lem:comp-adjoint-applications} Let $g$ and $\nu_0,\nu_1$ be as in
    \Cref{thm:composition-adversary}. Let $\Gamma\in\cL(\Sigma_g)$ be real symmetric with
    $\Gamma\circ\Delta_g=\Gamma$ and $\bra{\nu_0}\Gamma\ket{\nu_1}=\|\Gamma\|=1$. For
    $\cE^{\downarrow}_{\Gamma}$ as defined in \Cref{lem:composition-channel-properties}, the
    following hold.
    \begin{enumerate}[(i)]
        \item\label{item:comp-preserves-initial} Let $\mu:\Sigma\to[0,1]$ be a distribution with
        $\Sigma\subseteq\{0,1\}^n$, and set $\eta=\mu\circ(\nu_0,\nu_1)$. Then,
        $\cE^{\downarrow}_{\Gamma}(\kbr{\eta})=\kbr{\mu}$.
        \item\label{item:comp-preserves-gram} Let $f\subseteq\Sigma\times\Omega$ be a search problem
        and $\eps=\eps(n)\in[0,1)$. Then, for $\Psi\in\cG^{\eta}_{\eps}(f\circ\bg)$, we have
        \begin{align*}
            \cE^{\downarrow}_{\Gamma}(\Psi\circ\kbr{\eta}) = &\Psi^{\downarrow}\circ\kbr{\mu},
        \end{align*}
        for some $\Psi^{\downarrow}\in\cG^{\mu}_{\eps}(f)$.
        \item\label{item:comp-preserves-relaxed} Let $f:\Sigma\to\Omega$ be a functional problem and
        $\delta=\delta(n)\in[0,1]$. Then, for $\Psi\in\cR^{\eta}_{\delta}(f\circ\bg)$, we have
        \begin{align*}
            \cE^{\downarrow}_{\Gamma}(\Psi\circ\kbr{\eta}) = &\Psi^{\downarrow}\circ\kbr{\mu},
        \end{align*}
        for some $\Psi^{\downarrow}\in\cR^{\mu}_{\delta}(f)$.
    \end{enumerate}
    Here and in the proof, matrices indexed by $\Sigma$ and $\bg^{-1}(\Sigma)$ are implicitly
    extended by zero to $\{0,1\}^n$ and $\Sigma_g^n$, respectively, when applying the maps, and the
    resulting target matrices are restricted back to $\Sigma$.
\end{lemma}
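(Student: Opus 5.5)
The plan is to reduce every item to one computation on a single inner copy. By \Crefitem{item:composition-weighted-input}, for every $Y$ we have $\cE^{\downarrow}_{\Gamma}(Y\circ\kbr{\eta})=2^n\cE^{\downarrow}_{\Gamma}(Y\circ\kbr{\nu}^{\otimes n})\circ\kbr{\mu}$. So the natural candidate is $\Psi^{\downarrow}:=2^n\cE^{\downarrow}_{\Gamma}(\Psi\circ\kbr{\nu}^{\otimes n})$, restricted to $\Sigma$ (with the off-support bookkeeping adjusted in item (iii)).

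\textbf{Item (i).} Take $Y=J$. Since $(\Gamma+I)^{\otimes n}$ and $\bG=G^{\otimes n}$ are tensor powers, we get $\cE^{\downarrow}_{\Gamma}(\kbr{\nu}^{\otimes n})=M^{\otimes n}$ with $M:=G(\kbr{\nu}\circ(\Gamma+I))G^\dagger\in\cL(\{0,1\})$. The entries of $M$ are as follows.
\begin{itemize}
\item Diagonal entry $M(b,b)$: it sums $\sqrt{\nu(x)\nu(y)}(\Gamma+I)(x,y)$ over $x,y\in g^{-1}(b)$. Here $\Gamma$ vanishes because $\Gamma\circ\Delta_g=\Gamma$, so only the identity survives and $M(b,b)=\sum_{x\in g^{-1}(b)}\nu(x)=1/2$.
\item Off-diagonal entry: using $\nu=\nu_b/2$ on $g^{-1}(b)$, it equals $\frac12\bra{\nu_0}\Gamma\ket{\nu_1}=\frac12$ (and symmetrically, since $\Gamma$ is real symmetric).
\end{itemize}
Hence $M=J/2$, so $2^nM^{\otimes n}=J$, which gives $\cE^{\downarrow}_{\Gamma}(\kbr{\eta})=J\circ\kbr{\mu}=\kbr{\mu}$. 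Zero-extension is harmless here because $\mu$ vanishes outside $\Sigma$.

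\textbf{Item (ii).} The steps are:
\begin{itemize}
\item \emph{PSD.} $\Psi^{\downarrow}\succeq0$ because $\cE^{\downarrow}_{\Gamma}$ is a channel (\Crefitem{item:composition-channel}).
\item \emph{Unit diagonal.} The entry at $(\tx,\tx)$ is $2^n\sum_{\bx,\by\in\bg^{-1}(\tx)}\Psi(\bx,\by)\kbr{\nu}^{\otimes n}(\bx,\by)(\Gamma+I)^{\otimes n}(\bx,\by)$. Within an agreement block $\Gamma$ vanishes coordinatewise, so only $\bx=\by$ survives, giving $2^n\sum_{\bx\in\bg^{-1}(\tx)}\nu^{\otimes n}(\bx)=1$ by the computation in (i). Hence $\Psi^{\downarrow}\circ I=I$.
\item \emph{Witness.} Take a witness $W=\sum_sW^{(s)}\in\cW^\eta_{f\circ\bg}$ with $\cF(\Psi\circ\kbr{\eta},W\circ\kbr{\eta})\geq\sqrt{1-\eps}$. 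Push it down by the same rule, $W^{\downarrow,(s)}:=2^n\cE^{\downarrow}_{\Gamma}(W^{(s)}\circ\kbr{\nu}^{\otimes n})$. Each piece is PSD. Its diagonal at $\tx$ is $2^n\sum_{\bx\in\bg^{-1}(\tx)}W^{(s)}(\bx,\bx)\nu^{\otimes n}(\bx)$, which is $0$ when $s\notin f(\tx)$, since then $s\notin(f\circ\bg)(\bx)$.
\item \emph{Normalization and fidelity.} Normalization follows from trace preservation together with (iv). Finally, $\Psi\circ\kbr{\eta}$ and $W\circ\kbr{\eta}$ are density operators, so monotonicity of fidelity under the channel $\cE^{\downarrow}_{\Gamma}$ gives $\cF(\Psi^{\downarrow}\circ\kbr{\mu},W^{\downarrow}\circ\kbr{\mu})\geq\sqrt{1-\eps}$.
\end{itemize}

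\textbf{Item (iii).} The tool is \Crefitem{item:composition-hadamard} together with $\bG^\dagger A_f\bG=A_{f\circ\bg}$ (and likewise for $\Delta$), which follows from \Crefitem{item:G-entries}.
\begin{itemize}
\item \emph{Agreement part.} $\cE^{\downarrow}_{\Gamma}(\Psi\circ\kbr{\eta})\circ A_f=\cE^{\downarrow}_{\Gamma}(\Psi\circ A_{f\circ\bg}\circ\kbr{\eta})=\cE^{\downarrow}_{\Gamma}(A_{f\circ\bg}\circ\kbr{\eta})=A_f\circ\kbr{\mu}$, using (i) once more.
\item \emph{Disagreement part.} $\cE^{\downarrow}_{\Gamma}(\Psi\circ\kbr{\eta})\circ\Delta_f=\cE^{\downarrow}_{\Gamma}(\Psi\circ\Delta_{f\circ\bg}\circ\kbr{\eta})$. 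By the trace-norm contraction \Cref{lem:channel-norm-contraction}, its trace norm is at most $\delta$.
\item \emph{Defining $\Psi^{\downarrow}$.} Set $\Psi^{\downarrow}:=A_f+K\circ\kbr{1/\mu}$, where $K$ is this disagreement part, supported on $\supp(\mu)\times\supp(\mu)$ as in \Cref{lem:radv-universal-hardness-cert}. This matrix is Hermitian (the channel maps Hermitian to Hermitian), satisfies $\Psi^{\downarrow}\circ A_f=A_f$, and has $\Psi^{\downarrow}\circ\kbr{\mu}$ equal to the pushed-down matrix.
\end{itemize}

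The main obstacle I expect is not conceptual but bookkeeping: the zero-extension from $\Sigma$ to $\{0,1\}^n$ and from $\bg^{-1}(\Sigma)$ to $\Sigma_g^n$. Concretely, I must check that the entries of the pushed-down matrices outside $\supp(\mu)$ are either zero or can be redefined freely without affecting the weighted identities. For item (iii) I handle this by defining $\Psi^{\downarrow}$ only through its $\mu$-weighted version as above.
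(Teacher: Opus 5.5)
Your proposal is correct and follows essentially the paper's route. The ingredients match: the push-down candidate $\Psi^{\downarrow}=2^n\cE^{\downarrow}_{\Gamma}(\Psi\circ\kbr{\nu}^{\otimes n})$, the $2\times 2$ block computation giving $J_2^{\otimes n}$ in (i), channel positivity and monotonicity plus the vanishing of $\Gamma$ on agreement blocks in (ii) (the paper phrases this via $\cE^{\uparrow}_{\Gamma}(\kbr{\tx})=I_{\bg^{-1}(\tx)}$ rather than as an entrywise sum), and Hadamard pull-through with trace-norm contraction in (iii). The only difference is that in (iii) the paper verifies $\Psi^{\downarrow}\circ A_f=2^n\cE^{\downarrow}_{\Gamma}(\kbr{\nu}^{\otimes n})\circ A_f=A_f$ directly for that unweighted candidate, so your reconstruction $A_f+K\circ\kbr{1/\mu}$ is valid but unnecessary.
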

\begin{proof}
    Let $\nu=(\nu_0+\nu_1)/2$. We prove each item separately as follows.
    \begin{enumerate}[(i)]
    \item \textbf{Channel preserves the initial state.}
    By \Crefitem{item:composition-weighted-input}, we can write
    \begin{align*}
        \cE^{\downarrow}_{\Gamma}(\kbr{\eta})
        = & 2^n\cE^{\downarrow}_{\Gamma}(\kbr{\nu}^{\otimes n})\circ\kbr{\mu}.
    \end{align*}
    We claim that
    \begin{equation}\label{eq:composition-balanced-input}
        2^n\cE^{\downarrow}_{\Gamma}(\kbr{\nu}^{\otimes n})=J_2^{\otimes n},
    \end{equation}
    which gives $\cE^{\downarrow}_{\Gamma}(\kbr{\eta}) =J_2^{\otimes n}\circ\kbr{\mu}=\kbr{\mu}$. By
    the definition of $\cE^{\downarrow}_{\Gamma}$ in \Crefitem{item:composition-channel},
    \begin{align*}
        2^n\cE^{\downarrow}_{\Gamma}(\kbr{\nu}^{\otimes n})
        = & 2^n\bG\bigl((\Gamma+I)^{\otimes n}\circ\kbr{\nu}^{\otimes n}\bigr)\bG^\dagger
        =\bigl(2G\bigl((\Gamma+I)\circ\kbr{\nu}\bigr)G^\dagger\bigr)^{\otimes n}.
    \end{align*}
    Therefore, it suffices to show that $2G\bigl((\Gamma+I)\circ\kbr{\nu}\bigr)G^\dagger=J_2$. To
    prove this, we partition the inputs as $\Sigma_g=g^{-1}(0)\sqcup g^{-1}(1)$. Writing $\one_b$
    for the all-ones vector on $g^{-1}(b)$, we have
\begin{align*}
    G &=
    \begin{bmatrix}
        \one_0^\top & 0 \\
        0 & \one_1^\top
    \end{bmatrix},
    &
    \Gamma &=
    \begin{bmatrix}
        0 & \Gamma_0 \\
        \Gamma_0^\dagger & 0
    \end{bmatrix},
\end{align*}
where the diagonal blocks of $\Gamma$ vanish since $\Gamma\circ\Delta_g=\Gamma$. Within these
blocks, we identify $\ket{\nu_b}$ with its restriction to $g^{-1}(b)$. Since
$\ket{\nu}=(\ket{\nu_0}+\ket{\nu_1})/\sqrt{2}$,
\begin{align*}
    (\Gamma+I)\circ(2\kbr{\nu})
    &=
    \begin{bmatrix}
        I\circ\kbr{\nu_0}
        & \Gamma_0\circ\ketbra{\nu_0}{\nu_1} \\
        \Gamma_0^\dagger\circ\ketbra{\nu_1}{\nu_0}
        & I\circ\kbr{\nu_1}
    \end{bmatrix}.
\end{align*}
Conjugating by $G$ sums the entries of each block. Thus,
\begin{align*}
    G\bigl((\Gamma+I)\circ(2\kbr{\nu})\bigr)G^\dagger
    &=
    \begin{bmatrix}
        \braket{\nu_0}{\nu_0}
        & \bra{\nu_0}\Gamma_0\ket{\nu_1} \\
        \bra{\nu_1}\Gamma_0^\dagger\ket{\nu_0}
        & \braket{\nu_1}{\nu_1}
    \end{bmatrix}
    =
    \begin{bmatrix}
        1 & 1 \\
        1 & 1
    \end{bmatrix}
    =J_2,
\end{align*}
where we used normalization of $\nu_0,\nu_1$ and $\bra{\nu_0}\Gamma\ket{\nu_1}=1$.
\item \textbf{Channel preserves valid Gram matrices
$\cG^{\eta}_{\eps}(f\circ\bg)\to\cG^{\mu}_{\eps}(f)$.} Let $\Psi\in\cG^{\eta}_{\eps}(f\circ\bg)$ be
given with corresponding witness $W\in\cW^{\eta}_{f\circ\bg}$ such that $\Tr[W\circ\kbr{\eta}]=1$
and $\cF(\Psi\circ\kbr{\eta},W\circ\kbr{\eta})\geq\sqrt{1-\eps}$. We define
$\Psi^{\downarrow}:=2^n\cE^{\downarrow}_{\Gamma}(\Psi\circ\kbr{\nu}^{\otimes n})$ and
$W^{\downarrow}:=2^n\cE^{\downarrow}_{\Gamma}(W\circ\kbr{\nu}^{\otimes n})$. Therefore, by
\Crefitem{item:composition-weighted-input}, we have
$\cE^{\downarrow}_{\Gamma}(\Psi\circ\kbr{\eta})=\Psi^{\downarrow}\circ\kbr{\mu}$ and
$\cE^{\downarrow}_{\Gamma}(W\circ\kbr{\eta})=W^{\downarrow}\circ\kbr{\mu}$. We claim that
$\Psi^{\downarrow}\in\cG^{\mu}_{\eps}(f)$.
    \begin{itemize}
        \item \textbf{Fidelity.} By monotonicity of fidelity under the quantum channel
        $\cE^{\downarrow}_{\Gamma}$,
        \begin{align*}
            \cF(\Psi^{\downarrow}\circ\kbr{\mu},W^{\downarrow}\circ\kbr{\mu})
            &=\cF\bigl(\cE^{\downarrow}_{\Gamma}(\Psi\circ\kbr{\eta}),
                      \cE^{\downarrow}_{\Gamma}(W\circ\kbr{\eta})\bigr)
            \geq\cF(\Psi\circ\kbr{\eta},W\circ\kbr{\eta})\\
            &\geq\sqrt{1-\eps}.
        \end{align*}
        \item \textbf{Gram matrixness.} Since $\cE^{\downarrow}_{\Gamma}$ is a quantum channel
        (\Crefitem{item:composition-channel}), it preserves positivity. Therefore,
        $\Psi^{\downarrow} = 2^n\cE^{\downarrow}_{\Gamma}(\Psi \circ \kbr{\nu}^{\otimes n}) \succeq
        0$. Let $\tx\in \Sigma$ be a given index. We prove that $\Psi^{\downarrow}(\tx,\tx) = 1$.
        Recall that, by definition of $\cG^{\eta}_{\eps}(f\circ\bg)$, $\Psi\circ I=I$ on the domain
        $\bg^{-1}(\Sigma)$ of $f\circ\bg$. Therefore, by \Crefitem{item:composition-diagonal-lift},
        $\cE^{\uparrow}_{\Gamma}(\kbr{\tx}) \circ \overline{\Psi} = I_{\bg^{-1}(\tx)}\circ
        \overline{\Psi} = \cE^{\uparrow}_{\Gamma}(\kbr{\tx})$. So,
        \begin{align*}
            \Psi^{\downarrow}(\tx,\tx)
            = & 2^n\langle \kbr{\tx}, \cE^{\downarrow}_{\Gamma}(\Psi\circ \kbr{\nu}^{\otimes n})\rangle\\
            = & 2^n\langle \cE^{\uparrow}_{\Gamma}(\kbr{\tx}), \Psi\circ \kbr{\nu}^{\otimes n}\rangle\\
            = & 2^n\langle \cE^{\uparrow}_{\Gamma}(\kbr{\tx}) \circ \overline{\Psi}, \kbr{\nu}^{\otimes n} \rangle\\
            = & 2^n\langle \cE^{\uparrow}_{\Gamma}(\kbr{\tx}),\kbr{\nu}^{\otimes n} \rangle\\
            = & 2^n\langle \kbr{\tx},\cE^{\downarrow}_{\Gamma}(\kbr{\nu}^{\otimes n})\rangle\\
            = & \bra{\tx}J_2^{\otimes n}\ket{\tx} = 1,
        \end{align*}
        where in the last line we used \Cref{eq:composition-balanced-input}.
        \item \textbf{Witness properties.} We have to show that $W^{\downarrow} \in \cW^{\mu}_f$.
        First of all, by trace preservation of $\cE^{\downarrow}_{\Gamma}$, $\Tr[W^{\downarrow}\circ
        \kbr{\mu}] = \Tr[\cE^{\downarrow}_{\Gamma}(W\circ \kbr{\eta})] =1$.
        \par
        Next, let $W = \sum_{s\in\Omega} W^{(s)}$ be the decomposition from \Cref{def:witness-set},
        where $W^{(s)}\succeq 0$ and $W^{(s)}(\bx,\bx) = 0$ whenever $s\notin (f\circ\bg)(\bx)$. We
        define $W^{(s)\downarrow} := 2^n\cE^{\downarrow}_{\Gamma}(W^{(s)}\circ\kbr{\nu}^{\otimes
        n})$. Consequently, by linearity of channels,
        \[
            W^{\downarrow} = 2^n\cE^{\downarrow}_{\Gamma}\Bigl(\Bigl(\sum_{s\in\Omega}W^{(s)}\Bigr)\circ\kbr{\nu}^{\otimes n}\Bigr)
            = \sum_{s\in\Omega} W^{(s)\downarrow},
        \]
        and each $W^{(s)\downarrow}\succeq0$ by positivity. Moreover, for $s\notin f(\tx)$, we have
        $I_{\bg^{-1}(\tx)}\circ W^{(s)}=0$, since $s\notin(f\circ\bg)(\bx)$ whenever $\bg(\bx)=\tx$.
        Therefore,
        \begin{align*}
            W^{(s)\downarrow}(\tx,\tx)
            = & 2^n\langle \kbr{\tx}, \cE^{\downarrow}_{\Gamma}(W^{(s)}\circ \kbr{\nu}^{\otimes n})\rangle\\
            = & 2^n\langle \cE^{\uparrow}_{\Gamma}(\kbr{\tx}), W^{(s)}\circ \kbr{\nu}^{\otimes n}\rangle\\
            = & 2^n\langle \cE^{\uparrow}_{\Gamma}(\kbr{\tx}) \circ \overline{W^{(s)}}, \kbr{\nu}^{\otimes n} \rangle\\
            = & 2^n\langle I_{\bg^{-1}(\tx)}\circ\overline{W^{(s)}}, \kbr{\nu}^{\otimes n} \rangle\\
            = & 2^n\langle 0, \kbr{\nu}^{\otimes n} \rangle = 0,
        \end{align*}
        where in the fourth line we used \Crefitem{item:composition-diagonal-lift}.
    \end{itemize}

    \item \textbf{Channel preserves relaxed matrices $\cR^{\eta}_{\delta}(f\circ\bg)\to
        \cR^{\mu}_{\delta}(f)$.} Let $\Psi\in\cR^{\eta}_{\delta}(f\circ\bg)$, so that $\Psi\circ
        A_{f\circ\bg}=A_{f\circ\bg}$ and
        $\|\Psi\circ\Delta_{f\circ\bg}\circ\kbr{\eta}\|_{\tr}\leq\delta$. Since $\Psi$ is Hermitian,
        $\Psi^{\downarrow} := 2^n\cE^{\downarrow}_{\Gamma}(\Psi\circ\kbr{\nu}^{\otimes n})$ is also
        Hermitian. By \Crefitem{item:composition-weighted-input}, we have
        $\cE^{\downarrow}_{\Gamma}(\Psi\circ\kbr{\eta})=\Psi^{\downarrow}\circ\kbr{\mu}$. By
        \Crefitem{item:G-entries},  $(\bG^{\dagger} A_f \bG)(\bx,\by) = A_f(\bg(\bx),\bg(\by)) =
        A_{f\circ\bg}(\bx,\by)$. Therefore, $\bG^{\dagger} A_f \bG = A_{f\circ\bg}$. So, we can
        compute
        \begin{align*}
            \Psi^{\downarrow} \circ A_f
            = & 2^n\cE^{\downarrow}_{\Gamma}(\Psi \circ \kbr{\nu}^{\otimes n}\circ(\bG^\dagger A_f\bG))\\
            = & 2^n\cE^{\downarrow}_{\Gamma}(\Psi \circ \kbr{\nu}^{\otimes n}\circ A_{f\circ\bg})\\
            = & 2^n\cE^{\downarrow}_{\Gamma}(\kbr{\nu}^{\otimes n}\circ A_{f\circ\bg})\\
            = & 2^n\cE^{\downarrow}_{\Gamma}(\kbr{\nu}^{\otimes n}) \circ A_f = A_f,
        \end{align*}
        where we used \Crefitem{item:composition-hadamard} in the first line and
        \Cref{eq:composition-balanced-input} in the last. Similarly, we can prove
        $\bG^{\dagger}\Delta_f \bG = \Delta_{f\circ\bg}$. Therefore,
        \begin{align*}
            \| \Psi^{\downarrow} \circ \Delta_f \circ \kbr{\mu}\|_{\tr}
            = & \| \cE^{\downarrow}_{\Gamma}(\Psi \circ \kbr{\eta}) \circ \Delta_f\|_{\tr}
                \quad\text{(By \Crefitem{item:composition-weighted-input})}\\
            = & \|\cE^{\downarrow}_{\Gamma}(\Psi \circ \kbr{\eta} \circ \Delta_{f\circ\bg})\|_{\tr}
                \quad\text{(By \Crefitem{item:composition-hadamard})}\\
            \leq & \|\Psi \circ \kbr{\eta} \circ \Delta_{f\circ\bg}\|_{\tr}
                \quad\text{(By \Cref{lem:channel-norm-contraction})}\\
            \leq & \delta. \qedhere
        \end{align*}
\end{enumerate}
\end{proof}


\subsubsection{Bounding filtered norms}\label{sec:composition-filtered-norms} Next, we bound the
filtered norms of the lifted certificate.
\begin{lemma}\label{lem:composition-filtered-norms} Let $g:\Sigma_g\to\{0,1\}$ be a decision problem
    on $\Sigma_g\subseteq\{0,1\}^m$, and let $\Gamma\in\cL(\Sigma_g)$ be real symmetric with
    $\|\Gamma\|\leq1$ and $\Gamma\circ\Delta_g=\Gamma$. Then, for every $X\in\Herm(\{0,1\}^n)$,
    $i\in[n]$, and $j\in[m]$, we have
    \[
        \|\cE^{\uparrow}_{\Gamma}(X)\circ\Delta_{ij}\|
        \leq \|X\circ\Delta_i\|\cdot\|\Gamma\circ\Delta_j\|,
    \]
    where $\cE^{\uparrow}_{\Gamma}$ is the lifting map in \Cref{def:composition-lifting-map}.
\end{lemma}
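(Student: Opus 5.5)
We will write $\Delta_{ij}$ for the query filter of the composed problem on coordinate $j$ of the $i$-th inner block. In tensor form, $\Delta_{ij}=J^{\otimes(n-1)}\otimes_i\Delta_j$ on $\Sigma_g^n$, exactly as in \Cref{claim:direct-sum-lower-filters}. The plan is to split the lifted matrix according to block $i$ and reduce the filtered norm to a Hadamard product of two simpler objects. The first object is a lift of $X\circ\Delta_i$. The second is the single-block matrix $J^{\otimes(n-1)}\otimes_i(\Gamma\circ\Delta_j)$, which is then controlled through its $\gamma_2$-norm or a factorization argument.

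First, observe that $\Delta_{ij}$ vanishes whenever $x^i=y^i$. The $i$-th factor of $(\Gamma+I)^{\otimes n}\circ\Delta_{ij}$ is therefore $(\Gamma+I)\circ\Delta_j=\Gamma\circ\Delta_j$, since $I\circ\Delta_j=0$. Moreover, $\Gamma\circ\Delta_j=\Gamma\circ\Delta_j\circ\Delta_g$ is supported on pairs with $g(x^i)\neq g(y^i)$. Consequently the outer entry $X(\bg(\bx),\bg(\by))$ only matters on pairs with $\tx_i\neq\ty_i$, and we may replace $X$ by $X\circ\Delta_i$ for free. Using \Crefitem{item:G-hadamard-lift}, this gives
\[
\cE^{\uparrow}_{\Gamma}(X)\circ\Delta_{ij}
=\bigl(\bG^\dagger(X\circ\Delta_i)\bG\bigr)\circ\bigl((\Gamma+I)^{\otimes(n-1)}\otimes_i(\Gamma\circ\Delta_j)\bigr).
\]

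Next, factor the second term as a Hadamard product of $(\Gamma+I)^{\otimes(n-1)}\otimes_i J$ and $J^{\otimes(n-1)}\otimes_i(\Gamma\circ\Delta_j)$. The key step is to treat $Y\mapsto \bG^\dagger Y\bG\circ\bigl((\Gamma+I)^{\otimes(n-1)}\otimes_i J\bigr)$ as essentially a lifting map on $n-1$ blocks, with block $i$ lifted trivially by $G^\dagger(\cdot)G$. I would show this map is a contraction in operator norm on its input. Its adjoint is trace-norm contractive by the channel argument of \Crefitem{item:composition-channel}: the factor $\Gamma+I\succeq0$ appears on the $n-1$ blocks, and the normalization $\bG^\dagger\bG\circ(\cdot)=I$ holds on those blocks. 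By \Cref{lem:channel-norm-contraction} this yields an operator-norm contraction.

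The remaining obstacle is to combine this contraction with the factor $\Gamma\circ\Delta_j$ on block $i$. The conjugation by $G$ there does not directly compose with $\Gamma\circ\Delta_j$, since that matrix has off-diagonal blocks $\Gamma_0\circ\Delta_j$ between $g^{-1}(0)$ and $g^{-1}(1)$, while $X\circ\Delta_i$ restricted to block $i$ pairs outer bit $0$ with bit $1$. I would write $X\circ\Delta_i$ in $2\times2$ block form on coordinate $i$ as $\begin{pmatrix}0&B\\B^\dagger&0\end{pmatrix}$. Block $i$ then contributes the operator $(\one_0\one_1^\top)\circ(\Gamma_0\circ\Delta_j)=\Gamma_0\circ\Delta_j$, so the whole matrix becomes the off-diagonal block matrix built from
\[
\bigl[\bG'^\dagger B\bG'\circ(\Gamma+I)^{\otimes(n-1)}\bigr]\otimes(\Gamma_0\circ\Delta_j),
\]
after reordering tensor factors. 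Here $\bG'$ acts on the other $n-1$ blocks. The operator norm of a tensor product is multiplicative, so this is at most $\|\cE'^{\uparrow}(B)\|\cdot\|\Gamma\circ\Delta_j\|$. By the contraction step, the first factor is at most $\|B\|=\|X\circ\Delta_i\|$, which gives the claimed bound. I expect the careful bookkeeping of this block/tensor reordering to be the main technical step.
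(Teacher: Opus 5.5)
Your proof is correct, and it departs from the paper's after the common first identity
$\cE^{\uparrow}_{\Gamma}(X)\circ\Delta_{ij}=\bG^\dagger(X\circ\Delta_i)\bG\circ\bigl[(\Gamma+I)^{\otimes(n-1)}\otimes_i(\Gamma\circ\Delta_j)\bigr]$.

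\textbf{The paper's route.} It avoids all block bookkeeping with a L\"owner-order sandwich.
\begin{itemize}
\item It rewrites the block-$i$ factor as $(\lambda I+\Gamma\circ\Delta_j)\circ\Delta_g$ with $\lambda=\|\Gamma\circ\Delta_j\|$. This makes the whole multiplier $(I+\Gamma)^{\otimes(n-1)}\otimes_i(\lambda I+\Gamma\circ\Delta_j)$ PSD.
\item It then applies the Schur product theorem to $-\|X\circ\Delta_i\|\,\bG^\dagger\bG\preceq\bG^\dagger(X\circ\Delta_i)\bG\preceq\|X\circ\Delta_i\|\,\bG^\dagger\bG$.
\item It finishes with $A_g^{\otimes n}\circ[\cdots]=\lambda I$.
\end{itemize}

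\textbf{Your route.} You split off coordinate $i$. After reordering, the matrix is the off-diagonal block matrix with block $(\Gamma_0\circ\Delta_j)\otimes M$, where $M=\bG_{\bar i}^{\dagger}B\,\bG_{\bar i}\circ(\Gamma+I)^{\otimes(n-1)}$ and $\bG_{\bar i}:=G^{\otimes(n-1)}$ acts on the other blocks. Its norm is therefore exactly $\|\Gamma\circ\Delta_j\|\cdot\|M\|$. The bound then follows from contractivity of the $(n-1)$-block lift $\cE^{\uparrow}_{n-1}(Y):=\bG_{\bar i}^{\dagger}Y\bG_{\bar i}\circ(\Gamma+I)^{\otimes(n-1)}$. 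That contractivity is just item (ii) of \Cref{lem:composition-channel-properties} applied with $n-1$ blocks.

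\textbf{Comparison.} Your version gives an exact factorization of the norm, with the only inequality in the last step, and it reuses structure already proven. The paper's version is shorter and never leaves Hermitian matrices.

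\textbf{Two points to tighten.}
\begin{itemize}
\item The contraction you need holds only for $\cE^{\uparrow}_{n-1}$. The map in your third paragraph, which also applies $G^\dagger(\cdot)G$ on block $i$, is not a contraction, because $G^\dagger G=A_g$. For example, it sends $I$ to $I^{\otimes(n-1)}\otimes_i A_g$, whose norm is $\max_b|g^{-1}(b)|$. This is exactly why block $i$ must be split off first, as your final paragraph does.
\item $B$ is generally not Hermitian, while \Cref{lem:channel-norm-contraction} is stated only for Hermitian inputs. You can either invoke the fact that unital completely positive maps are contractive on all operators, or apply the lemma to a Hermitian input instead. For the second option, take the Hermitian matrix $X\circ\Delta_i=\begin{pmatrix}0&B\\B^\dagger&0\end{pmatrix}$ and the channel $\mathrm{id}_2\otimes(\cE^{\uparrow}_{n-1})^{\dagger}$. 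The adjoint of this channel, $\mathrm{id}_2\otimes\cE^{\uparrow}_{n-1}$, maps $X\circ\Delta_i$ to $\begin{pmatrix}0&M\\M^\dagger&0\end{pmatrix}$, whose norm is $\|M\|$.
\end{itemize}
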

\begin{proof}
    By \Crefitem{item:G-entries}, $(\bG^{\dagger}\Delta_i\bG)(\bx,\by)
    =\Delta_i(\bg(\bx),\bg(\by))=\ind\{\bg(\bx)_i\neq\bg(\by)_i\}$. Therefore,
    $\bG^{\dagger}\Delta_i\bG=J^{\otimes(n-1)}\otimes_i\Delta_g$. Let
    $\lambda:=\|\Gamma\circ\Delta_j\|$. Then, $(\lambda I+\Gamma\circ\Delta_j)\circ\Delta_g
    =\lambda(I\circ\Delta_g)+(\Gamma\circ\Delta_g)\circ\Delta_j =\Gamma\circ\Delta_j$. Thus,
    \begin{align*}
        \cE^{\uparrow}_{\Gamma}(X)\circ\Delta_{ij}
        = &\bG^{\dagger}X\bG\circ(I+\Gamma)^{\otimes n}\circ\Delta_{ij}\\
        = &\bG^{\dagger}X\bG\circ(I+\Gamma)^{\otimes n}
        \circ\left(J^{\otimes(n-1)}\otimes_i\Delta_j\right)\\
        = &\bG^{\dagger}X\bG\circ\left[
        (I+\Gamma)^{\otimes(n-1)}
        \otimes_i\bigl((I+\Gamma)\circ\Delta_j\bigr)\right]\\
        = &\bG^{\dagger}X\bG\circ\left[
        (I+\Gamma)^{\otimes(n-1)}
        \otimes_i(\Gamma\circ\Delta_j)\right]\\
        = &\bG^{\dagger}X\bG\circ\left[
        (I+\Gamma)^{\otimes(n-1)}
        \otimes_i\bigl((\lambda I+\Gamma\circ\Delta_j)\circ\Delta_g\bigr)\right]\\
        = &\bG^{\dagger}X\bG\circ\left[
        \bigl((I+\Gamma)^{\otimes(n-1)}\circ J^{\otimes(n-1)}\bigr)
        \otimes_i\bigl((\lambda I+\Gamma\circ\Delta_j)\circ\Delta_g\bigr)\right]\\
        = &\bG^{\dagger}X\bG
        \circ\left[(I+\Gamma)^{\otimes(n-1)}
        \otimes_i(\lambda I+\Gamma\circ\Delta_j)\right]
        \circ\left(J^{\otimes(n-1)}\otimes_i\Delta_g\right)\\
        = &(\bG^{\dagger}X\bG)
        \circ\left[(I+\Gamma)^{\otimes(n-1)}
        \otimes_i(\lambda I+\Gamma\circ\Delta_j)\right]
        \circ(\bG^{\dagger}\Delta_i\bG)\\
        = &\bG^{\dagger}(X\circ\Delta_i)\bG
        \circ\left[(I+\Gamma)^{\otimes(n-1)}
        \otimes_i(\lambda I+\Gamma\circ\Delta_j)\right],
    \end{align*}
    where in the last line we used \Crefitem{item:G-hadamard-lift}.
    \par
    Next, we bound the spectral norm of the matrix on the right-hand side. Note that
    $\bG^{\dagger}(X\circ \Delta_i) \bG \preceq \|X\circ \Delta_i\| \bG^{\dagger}\bG$. Moreover, the
    final tensor on the right-hand side is PSD, since $I+\Gamma\succeq0$ and $\lambda
    I+\Gamma\circ\Delta_j\succeq0$. Therefore, by the Schur product theorem,
    \begin{align*}
        \cE^{\uparrow}_{\Gamma}(X)\circ \Delta_{ij}
        &=\bG^{\dagger}(X\circ\Delta_i)\bG
        \circ\left[(I+\Gamma)^{\otimes(n-1)}
        \otimes_i(\lambda I+\Gamma\circ\Delta_j)\right]\\
        &\preceq
        \|X\circ \Delta_i\| \bG^{\dagger}\bG \circ \left[(I+\Gamma)^{\otimes(n-1)}
        \otimes_i(\lambda I+\Gamma\circ\Delta_j)\right]\\
        & = \|X\circ \Delta_i\| A_g^{\otimes n}\circ \left[(I+\Gamma)^{\otimes(n-1)}
        \otimes_i(\lambda I+\Gamma\circ\Delta_j)\right]\\
        &= \|X\circ \Delta_i\|\cdot(\lambda I)\\
        &= \|X\circ \Delta_i\|\cdot \|\Gamma \circ \Delta_j\| \cdot I,
    \end{align*}
    where we used $A_g\circ\Gamma=0$ to resolve the Hadamard products. Similarly,
    $\bG^{\dagger}(X\circ \Delta_i) \bG \succeq -\|X\circ \Delta_i\| \bG^{\dagger}\bG$ and
    therefore,
    \begin{align*}
        \cE^{\uparrow}_{\Gamma}(X)\circ \Delta_{ij} \succeq -\|X\circ \Delta_i\|\cdot \|\Gamma \circ \Delta_j\| \cdot I.
    \end{align*}
    Having both sides, we conclude $\| \cE^{\uparrow}_{\Gamma}(X)\circ \Delta_{ij}\| \leq \|X\circ
    \Delta_i\|\cdot \|\Gamma \circ \Delta_j\|$, which completes the proof.
\end{proof}

\subsubsection{Proof of the composition theorem}\label{sec:composition-proof}
\begin{proposition}\label{prop:composition-adversary-bounds} Let $f,g,\mu,\nu_0,\nu_1$ be as in
    \Cref{thm:composition-adversary}, and set $\eta=\mu\circ(\nu_0,\nu_1)$. Then, for
    $\eps=\eps(n)\in(0,1/2)$ and $\delta=\delta(n)\in(0,1)$,
    \begin{align*}
        \mathrm{Adv}^{\eta}_{\eps}(f\circ\bg)
        &\geq \mathrm{Adv}^{\mu}_{\eps}(f)\cdot\mathrm{sfAdv}^{(\nu_0,\nu_1)}(g),\\
        \mathrm{rAdv}^{\eta}_{\delta}(f\circ\bg)
        &\geq \mathrm{rAdv}^{\mu}_{\delta}(f)\cdot\mathrm{sfAdv}^{(\nu_0,\nu_1)}(g).
    \end{align*}
\end{proposition}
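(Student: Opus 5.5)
We prove both inequalities with one dual-dual argument. By \Cref{lem:dual-sfadv-hardness}, fix a real symmetric $\Gamma_g$ with $\Gamma_g\circ\Delta_g=\Gamma_g$, $\bra{\nu_0}\Gamma_g\ket{\nu_1}=\|\Gamma_g\|=1$, and $\max_j\|\Gamma_g\circ\Delta_j\|=1/s$, where $s:=\mathrm{sfAdv}^{(\nu_0,\nu_1)}(g)$. For the $\mathrm{Adv}$ bound, take $\Gamma_f$ attaining the maximum in \Cref{lem:adv-universal-hardness-cert}, so $\Gamma_f=\Gamma_f^\dagger$, $\max_i\|\Gamma_f\circ\Delta_i\|\le1$, and
\[
\min_{\Phi\in\cG^\mu_\eps(f)}\langle\Gamma_f,(J-\Phi)\circ\kbr{\mu}\rangle=\mathrm{Adv}^\mu_\eps(f).
\]
We extend $\Gamma_f$ by zero from $\Sigma$ to $\{0,1\}^n$. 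The candidate for the composed problem is $\Theta:=s\,\cE^{\uparrow}_{\Gamma_g}(\Gamma_f)$, restricted to $\bg^{-1}(\Sigma)$. It is Hermitian because $\Gamma_f$ is Hermitian and $(\Gamma_g+I)^{\otimes n}$ is real symmetric.

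\textbf{Feasibility.} By \Cref{lem:composition-filtered-norms},
\[
\|\Theta\circ\Delta_{ij}\|\le s\,\|\Gamma_f\circ\Delta_i\|\,\|\Gamma_g\circ\Delta_j\|\le1
\]
for every outer index $i$ and inner index $j$. Restricting to the principal submatrix on $\bg^{-1}(\Sigma)$ does not increase operator norms, so $\Theta$ is feasible in \Cref{lem:adv-universal-hardness-cert} for $f\circ\bg$ under $\eta$.

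\textbf{Objective.} For any $\Lambda\in\cG^\eta_\eps(f\circ\bg)$, adjointness and \Cref{lem:comp-adjoint-applications}\ref{item:comp-preserves-initial},\ref{item:comp-preserves-gram} give
\[
\langle\Theta,(J-\Lambda)\circ\kbr{\eta}\rangle
= s\,\langle\Gamma_f,\cE^{\downarrow}_{\Gamma_g}(\kbr{\eta})-\cE^{\downarrow}_{\Gamma_g}(\Lambda\circ\kbr{\eta})\rangle
= s\,\langle\Gamma_f,(J-\Lambda^{\downarrow})\circ\kbr{\mu}\rangle
\]
for some $\Lambda^{\downarrow}\in\cG^\mu_\eps(f)$. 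The right-hand side is at least $s\,\mathrm{Adv}^\mu_\eps(f)$. Minimizing over $\Lambda$ and applying \Cref{lem:adv-universal-hardness-cert} for $f\circ\bg$ gives the first inequality.

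\textbf{Relaxed bound.} The argument is identical. We take $\Gamma_f$ from the general minimax form in \Cref{lem:radv-universal-hardness-cert}, which holds for functional $f$. We then replace item \ref{item:comp-preserves-gram} by item \ref{item:comp-preserves-relaxed}, which sends $\cR^\eta_\delta(f\circ\bg)$ into $\cR^\mu_\delta(f)$. That same minimax form, applied now to the functional problem $f\circ\bg$, converts the bound on the objective into $\mathrm{rAdv}^\eta_\delta(f\circ\bg)\ge s\,\mathrm{rAdv}^\mu_\delta(f)$.

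\textbf{Main obstacle: domain bookkeeping.} $\Gamma_f$ lives on $\Sigma\subseteq\{0,1\}^n$, while the lifting map acts on all of $\{0,1\}^n$. We must check that the zero-extension is compatible with the lemmas. First, the filtered-norm lemma still applies, since $\|X\circ\Delta_i\|$ is unchanged by zero-padding. Second, the adjoint pairing is unaffected, because $\kbr{\eta}$ is supported on $\bg^{-1}(\Sigma)$ and $\cE^{\downarrow}_{\Gamma_g}$ maps matrices supported on $\bg^{-1}(\Sigma)$ to matrices supported on $\Sigma$. This support claim follows from the entrywise description in \Crefitem{item:G-entries}, and it is exactly the convention stated after \Cref{lem:comp-adjoint-applications}. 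The remaining steps are direct applications of earlier results.
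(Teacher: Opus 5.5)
Your proposal is correct and follows the paper's proof essentially step for step: you lift a dual maximizer $\Gamma_f$ through $\cE^{\uparrow}_{\Gamma_g}$, bound its filtered norms with \Cref{lem:composition-filtered-norms}, and pull the objective back through the adjoint using \Cref{lem:comp-adjoint-applications}. The one difference is cosmetic: you rescale by $s$ up front rather than working with the quotient, which also removes the need for the paper's ``assume the right-hand side is positive'' caveat. In the relaxed case, the minimax form in \Cref{lem:radv-universal-hardness-cert} includes the constraint $\Gamma\circ I=0$, so you should also note that $\Theta\circ I=s\,\cE^{\uparrow}_{\Gamma_g}(\Gamma_f\circ I)=0$ by \Crefitem{item:composition-diagonal-lift}; alternatively, invoke weak duality directly from the definition of $\bar{\gamma}_2$, which needs no zero-diagonal condition.
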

\begin{proof}
    Let $\Gamma_g$ be a real symmetric maximizer in \Cref{lem:dual-sfadv-hardness}, so that
    \begin{align*}
        \Gamma_g\circ\Delta_g=\Gamma_g,
        &&\bra{\nu_0}\Gamma_g\ket{\nu_1}=\|\Gamma_g\|=1,
        &&\mathrm{sfAdv}^{(\nu_0,\nu_1)}(g)=\frac{1}{\max_j\|\Gamma_g\circ\Delta_j\|}.
    \end{align*}
    Let $\cE^{\uparrow}:=\cE^{\uparrow}_{\Gamma_g}$ be the lifting map in
    \Cref{def:composition-lifting-map}, and let $\cE^{\downarrow}$ be its adjoint. For either
    inequality, we may assume that the one-copy measure on the right-hand side is positive, since
    otherwise the claim is trivial.
    \par
    For the first inequality, let $\Gamma_f$ attain the maximum in
    \Cref{lem:adv-universal-hardness-cert}, satisfying $\Gamma_f=\Gamma_f^\dagger$, $\Gamma_f\circ
    I=0$, and $\max_i\|\Gamma_f\circ\Delta_i\|\leq1$. We lift $\Gamma_f$ with respect to $\Gamma_g$
    using $\cE^{\uparrow}(\Gamma_f)$. Here $\Gamma_f$ is extended by zero outside $\Sigma$, and its
    lift is restricted to $\bg^{-1}(\Sigma)$. By definition, the lift is Hermitian, and by
    \Crefitem{item:composition-diagonal-lift}, $\cE^{\uparrow}(\Gamma_f)\circ I =
    \cE^{\uparrow}(\Gamma_f \circ I) = 0$. Moreover, by  \Cref{lem:composition-filtered-norms},
    \begin{align}\label{eq:composition-lift-filtered-norm}
        \max_{i,j}\|\cE^{\uparrow}(\Gamma_f)\circ\Delta_{ij}\|
        \leq &\max_i\|\Gamma_f\circ\Delta_i\|\cdot
        \max_j\|\Gamma_g\circ\Delta_j\|
        \leq\max_j\|\Gamma_g\circ\Delta_j\|.
    \end{align}
    Therefore,
    \begin{align*}
        \mathrm{Adv}^{\eta}_{\eps}(f\circ\bg)
        \geq &\min_{\Psi\in\cG^{\eta}_{\eps}(f\circ\bg)}
        \frac{\langle\cE^{\uparrow}(\Gamma_f),(J-\Psi)\circ\kbr{\eta}\rangle}
        {\max_{i,j}\|\cE^{\uparrow}(\Gamma_f)\circ\Delta_{ij}\|}
        &&\text{(By \Cref{lem:adv-universal-hardness-cert}).}\\
        \geq &\min_{\Psi\in\cG^{\eta}_{\eps}(f\circ\bg)}
        \frac{\langle\Gamma_f,\cE^{\downarrow}((J-\Psi)\circ\kbr{\eta})\rangle}
        {\max_j\|\Gamma_g\circ\Delta_j\|}
        &&\text{(By Eq.~\eqref{eq:composition-lift-filtered-norm}).}\\
        = &\min_{\Psi\in\cG^{\eta}_{\eps}(f\circ\bg)}
        \frac{\langle\Gamma_f,(J-\Psi^{\downarrow})\circ\kbr{\mu}\rangle}
        {\max_j\|\Gamma_g\circ\Delta_j\|}
        &&\text{($\Psi^{\downarrow}\in\cG^{\mu}_{\eps}(f)$ by \Cref{lem:comp-adjoint-applications})}\\
        \geq &\min_{\Phi\in\cG^{\mu}_{\eps}(f)}
        \frac{\langle\Gamma_f,(J-\Phi)\circ\kbr{\mu}\rangle}
        {\max_j\|\Gamma_g\circ\Delta_j\|}\\
        = &\mathrm{Adv}^{\mu}_{\eps}(f)\cdot\mathrm{sfAdv}^{(\nu_0,\nu_1)}(g),
        &&\text{(By the choice of $\Gamma_f$ and $\Gamma_g$).}
    \end{align*}
    \par
    For the second inequality, choose a maximizer $\Gamma_f$ in
    \Cref{lem:radv-universal-hardness-cert}, satisfying the same Hermitian, zero-diagonal, and
    filtered-norm conditions. Then,
    \begin{align*}
        \min_{\Phi\in\cR^{\mu}_{\delta}(f)}
        \langle\Gamma_f,(J-\Phi)\circ\kbr{\mu}\rangle
        = &\mathrm{rAdv}^{\mu}_{\delta}(f).
    \end{align*}
    The same lift and filtered-norm bound give
    \begin{align*}
        \mathrm{rAdv}^{\eta}_{\delta}(f\circ\bg)
        \geq &\min_{\Psi\in\cR^{\eta}_{\delta}(f\circ\bg)}
        \frac{\langle\cE^{\uparrow}(\Gamma_f),(J-\Psi)\circ\kbr{\eta}\rangle}
        {\max_{i,j}\|\cE^{\uparrow}(\Gamma_f)\circ\Delta_{ij}\|}\\
        = &\min_{\Psi\in\cR^{\eta}_{\delta}(f\circ\bg)}
        \frac{\langle\Gamma_f,\cE^{\downarrow}((J-\Psi)\circ\kbr{\eta})\rangle}
        {\max_{i,j}\|\cE^{\uparrow}(\Gamma_f)\circ\Delta_{ij}\|}\\
        \geq &\min_{\Psi\in\cR^{\eta}_{\delta}(f\circ\bg)}
        \frac{\langle\Gamma_f,(J-\Psi^{\downarrow})\circ\kbr{\mu}\rangle}
        {\max_j\|\Gamma_g\circ\Delta_j\|}
        &&\text{($\Psi^{\downarrow}\in\cR^{\mu}_{\delta}(f)$ by \Cref{lem:comp-adjoint-applications} )}\\
        \geq &\min_{\Phi\in\cR^{\mu}_{\delta}(f)}
        \frac{\langle\Gamma_f,(J-\Phi)\circ\kbr{\mu}\rangle}
        {\max_j\|\Gamma_g\circ\Delta_j\|}\\
        = &\mathrm{rAdv}^{\mu}_{\delta}(f)\cdot\mathrm{sfAdv}^{(\nu_0,\nu_1)}(g).
       &&  \qedhere
    \end{align*}
\end{proof}

\subsubsection{\texorpdfstring{TRIV and tightness for $\mathrm{rAdv}$}{TRIV and tightness for the relaxed adversary}}\label{sec:composition-triv} Recall that
$\mathrm{TRIV}:\{0^n,1^n\}\to\{0,1\}$ is defined by $\mathrm{TRIV}(b^n)=b$.
\begin{lemma}\label{lem:triv-radv-certificate} For $\eps=\eps(n)\in[0,1]$, we have
    $\mathrm{rAdv}^{(1/2,1/2)}_{\eps}(\mathrm{TRIV})=1-\eps$.
\end{lemma}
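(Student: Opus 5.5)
The plan is to compute both sides of the duality in \Cref{lem:radv-universal-hardness-cert}, using the decision-problem form. The domain is $\Sigma=\{0^n,1^n\}$ with $\mu$ uniform, so every object is a $2\times 2$ matrix indexed by $0^n$ and $1^n$. We have $\Delta_{\mathrm{TRIV}}=\Delta_i=\begin{pmatrix}0&1\\1&0\end{pmatrix}=:X$ for every $i\in[n]$, since $0^n$ and $1^n$ differ in every coordinate. Also $\kbr{\mu}=J/2$.

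\emph{Upper bound.} Take $\Psi=A_{\mathrm{TRIV}}+\eps X=\begin{pmatrix}1&\eps\\\eps&1\end{pmatrix}$. Then $\|\Psi\circ\Delta_f\circ\kbr{\mu}\|_{\tr}=\|\eps X/2\|_{\tr}=\eps$, so $\Psi\in\cR^\mu_\eps(\mathrm{TRIV})$. Next, $J-\Psi=(1-\eps)X$. By \Cref{lem:equiv-avg-gamma2-query}, $\bar\gamma_2^\mu((1-\eps)X\mid\{\Delta_i\})$ is the maximum of $\langle\Gamma,(1-\eps)X/2\rangle$ over zero-diagonal Hermitian $\Gamma=\begin{pmatrix}0&c\\\bar c&0\end{pmatrix}$ with $\|\Gamma\circ X\|=|c|\le 1$. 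This maximum equals $(1-\eps)\max\operatorname{Re}c=1-\eps$, which gives $\mathrm{rAdv}\le 1-\eps$.

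\emph{Lower bound.} Use the dual form from \Cref{lem:radv-universal-hardness-cert} with $\Gamma=X$. It satisfies $\Gamma\circ\Delta_f=\Gamma$ and $\|\Gamma\circ\Delta_i\|=\|X\|=1$. Its objective is $\langle X,X\circ J/2\rangle-\eps\|X\|=1-\eps$. Hence $\mathrm{rAdv}\ge1-\eps$.

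No step is really an obstacle. The only points needing care are the trace-norm normalization ($\|X\|_{\tr}=2$, hence the factor $1/2$ from $\kbr{\mu}$) and checking that the edge cases $\eps=0$ and $\eps=1$ remain consistent with $\delta\in[0,1]$ in \Cref{def:radv}.
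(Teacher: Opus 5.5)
Your proof is correct, and it differs from the paper's mainly in how the lower bound is obtained. The paper parametrizes every relaxed target as $\Psi=\begin{bmatrix}1&\alpha\\\alpha^{\ast}&1\end{bmatrix}$ and shows that feasibility is exactly $|\alpha|\le\eps$. It then computes $\bar{\gamma}_2^{\mu}(J-\Psi\mid\{\Delta_i\})=|1-\alpha|$ exactly for each such $\Psi$, and minimizes over the disk. Your upper bound is just the attaining point $\alpha=\eps$ of that computation. Your lower bound instead comes from the single certificate $\Gamma=X$ in the decision-problem dual form of \Cref{lem:radv-universal-hardness-cert}.

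The paper's route is fully self-contained: it uses only \Cref{lem:equiv-avg-gamma2-query}, needs no minimax, and gives the exact cost of every relaxed target. Yours is a clean primal--dual sandwich, and it exhibits the uniform hardness certificate that the composition argument later lifts. However, it cites a proposition whose proof goes through Sion's theorem. In fact you only use the easy weak-duality direction, which you can check directly. For any feasible $\Psi$ with off-diagonal entry $\alpha$, the matrix $X$ is feasible for $\bar{\gamma}_2^{\mu}$, and $\langle X,(J-\Psi)\circ\kbr{\mu}\rangle=\operatorname{Re}(1-\alpha)\ge 1-|\alpha|\ge 1-\eps$. This gives the lower bound without invoking the full dual characterization.
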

\begin{proof}
    Let $\mu=(1/2,1/2)$ and $\Psi=\begin{bmatrix}1&\alpha\\\alpha^{\ast}&1\end{bmatrix}$. Then,
    $\Psi\circ\Delta_{\mathrm{TRIV}}\circ\kbr{\mu}
    =\begin{bmatrix}0&\alpha/2\\\alpha^{\ast}/2&0\end{bmatrix}$. Therefore,
    $\|\Psi\circ\Delta_{\mathrm{TRIV}}\circ\kbr{\mu}\|_{\tr}=|\alpha|$, so
    $\Psi\in\cR^{\mu}_{\eps}(\mathrm{TRIV})$ if and only if $|\alpha|\leq\eps$. Also, by
    \Cref{lem:equiv-avg-gamma2-query}, we may assume that the adversary matrix is
    $\Gamma=\begin{bmatrix}0&\beta\\\beta^{\ast}&0\end{bmatrix}$. Since
    $\Delta_i=\Delta_{\mathrm{TRIV}}$ for every $i\in[n]$, we have
    $\max_i\|\Gamma\circ\Delta_i\|=\|\Gamma\|=|\beta|$. Thus,
    \begin{align*}
        \bar{\gamma}_2^{\mu}(J-\Psi\mid\{\Delta_i\})
        &=\max_{\substack{\Gamma=\Gamma^{\dagger},\ \Gamma\circ I=0\\
            \max_i\|\Gamma\circ\Delta_i\|\leq1}}
            \langle\Gamma,(J-\Psi)\circ\kbr{\mu}\rangle\\
        &=\max_{|\beta|\leq1}
            \left\langle
            \begin{bmatrix}0&\beta\\\beta^{\ast}&0\end{bmatrix},
            \frac12\begin{bmatrix}0&1-\alpha\\1-\alpha^{\ast}&0\end{bmatrix}
            \right\rangle\\
        &=\max_{|\beta|\leq1}
            \frac{\beta^{\ast}(1-\alpha)+\beta(1-\alpha^{\ast})}{2}\\
        &=\max_{|\beta|\leq1}\Re\bigl(\beta^{\ast}(1-\alpha)\bigr)
        =|1-\alpha|.
    \end{align*}
    Therefore, by \Cref{def:radv},
    \begin{align*}
        \mathrm{rAdv}^{\mu}_{\eps}(\mathrm{TRIV})
        = &\min_{\Psi\in\cR^{\mu}_{\eps}(\mathrm{TRIV})}
        \bar{\gamma}_2^{\mu}(J-\Psi\mid\{\Delta_i\})
        =\min_{|\alpha|\leq\eps}|1-\alpha|=1-\eps.
    \end{align*}
    Taking $\alpha=\eps$ attains the minimum.
\end{proof}
We now present the proof of the third item in \Cref{thm:composition-adversary}.
The distribution $(\nu_0^{\otimes n}+\nu_1^{\otimes n})/2$ describes samples conditioned on a
common output label, as in the classical correlated-samples setting of Bassilakis et al.
\cite{BDG+20}.
\begin{proposition}\label{prop:composition-triv-tightness} Let $\Sigma_g\subset\{0,1\}^{m}$,
    $g:\Sigma_g\to\{0,1\}$ be a decision problem, and $\nu_0,\nu_1:\Sigma_g\to[0,1]$ be
    distributions supported on $g^{-1}(0)$ and $g^{-1}(1)$, respectively. Then, for any fixed
    $\eps\in(0,1)$, we have
    \begin{align*}
        \mathrm{sfAdv}^{(\nu_0,\nu_1)}(g)
        = &\lim_{n\to\infty}
        \frac{\mathrm{rAdv}^{(\nu_0^{\otimes n}+\nu_1^{\otimes n})/2}_{\eps}(\mathrm{TRIV}\circ\bg)}{1-\eps}.
    \end{align*}
\end{proposition}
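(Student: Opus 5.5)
We prove two inequalities. The lower bound comes from the composition theorem already proved. The upper bound comes from an explicit relaxed target for $\mathrm{TRIV}\circ\bg$, built as a tensor power of a near-optimal one-copy target for $g$.

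\textbf{Lower bound.} Observe that $(\nu_0^{\otimes n}+\nu_1^{\otimes n})/2=\mu\circ(\nu_0,\nu_1)$ with $\mu=(1/2,1/2)$ on $\{0^n,1^n\}$. By the second item of \Cref{prop:composition-adversary-bounds} and \Cref{lem:triv-radv-certificate},
\[
\mathrm{rAdv}^{\eta_n}_{\eps}(\mathrm{TRIV}\circ\bg)\geq(1-\eps)\,\mathrm{sfAdv}^{(\nu_0,\nu_1)}(g)
\]
for every $n$. This gives $\liminf_n \mathrm{rAdv}^{\eta_n}_{\eps}(\mathrm{TRIV}\circ\bg)/(1-\eps)\geq\mathrm{sfAdv}^{(\nu_0,\nu_1)}(g)$.

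\textbf{Upper bound: construction.} Fix $\delta\in[0,1)$ and a minimizer $\Psi\in\cR^{\nu}_{\delta}(g)$ of $\mathrm{rAdv}^\nu_\delta(g)$. Write $\Psi=A_g+K$, where $K=\Psi\circ\Delta_g$. Let $\beta:=\bra{\nu_0}K_{01}\ket{\nu_1}$, the weighted average overlap between $0$- and $1$-inputs. Then $|\beta|\leq\|K\circ\kbr{\nu}\|_{\tr}\leq\delta$; note that $\|K\circ\kbr\nu\|_{\tr}$ is the trace norm of the off-diagonal block weighted by $1/2$, which is why a factor $2$ drops. As a target for $\mathrm{TRIV}\circ\bg$ on $\supp(\eta_n)=\supp(\nu_0)^n\cup\supp(\nu_1)^n$, take:
\begin{itemize}
\item on the two agreement blocks, $J$, as required by $A_{\mathrm{TRIV}\circ\bg}$;
\item on the disagreement block between $\bx\in g^{-1}(0)^n$ and $\by\in g^{-1}(1)^n$, the entry $\prod_{i}\Psi(x^i,y^i)$, i.e.\ the off-diagonal block of $\Psi^{\otimes n}$.
\end{itemize}
Its weighted disagreement part is $\frac12(\ket{\nu_0}\bra{\nu_0})^{\otimes n}$-type conjugations of $K_{01}^{\otimes n}$. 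The goal is to replace this by a rank-one-like tail whose trace norm is $|\beta|^n$-controlled. More precisely, I would use the slightly different target whose disagreement block is $c\cdot\one\one^\top+$(the tensor structure), chosen so that it equals $J-\Theta$, where $\Theta$ collects the products $\prod_i\Psi$.

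\textbf{Upper bound: estimates.} The two quantities to control are the following.
\begin{itemize}
\item \emph{Cost.} Bound $\bar\gamma_2^{\eta_n}$ of the difference by $n$ times a one-copy quantity, using a telescoping hybrid $J^{\otimes n}-\Psi^{\otimes n}=\sum_\ell \Psi^{\otimes(\ell-1)}\otimes(J-\Psi)\otimes J^{\otimes(n-\ell)}$, as in \Cref{prop:direct-sum-upper-bound}. This is the wrong scaling, since it gives $n\cdot\mathrm{rAdv}$.
\item \emph{Error.} The trace norm of the error must go to $\eps$ while the cost stays $O(\mathrm{rAdv}^\nu_\delta(g)/(1-\delta))$.
\end{itemize}

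The scaling issue is fixed by using only $k\approx \log(1/(1-\eps))/\log(1/\delta)$-type effective copies. Alternatively, exploit the monotonicity in \Cref{lemma:sfadv-monotonicity}: choose $\delta$ close to $1$. The point is that the disagreement overlap of $\Psi^{\otimes n}$ decays like $\delta^n$, while only a $(1-\delta)$-fraction of the one-copy cost is needed per copy. So the plan is as follows.
\begin{enumerate}
\item Pick $\delta=\delta_n\to1$ with $\delta_n^n\to$ an appropriate constant.
\item Use \Cref{lemma:sfadv-monotonicity} to obtain $\mathrm{rAdv}^\nu_{\delta_n}(g)\leq(1-\delta_n)\bigl(\mathrm{sfAdv}^{(\nu_0,\nu_1)}(g)+o(1)\bigr)$.
\item Show that the $n$-fold product target has $\bar\gamma_2^{\eta_n}$ cost at most $n\,\mathrm{rAdv}^\nu_{\delta_n}(g)\approx n(1-\delta_n)\,\mathrm{sfAdv}$.
\item Show that its disagreement trace norm is about $\delta_n^n$.
\end{enumerate}
Then, for TRIV, one interpolates with $J$ as in \Cref{lemma:sfadv-monotonicity} to reach relaxation exactly $\eps$, costing a factor $(1-\eps)/(1-\delta_n^n)$. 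With $\delta_n^n\to\theta$, this gives a cost ratio of
\[
\mathrm{sfAdv}\cdot\frac{n(1-\delta_n)}{1-\delta_n^n}\cdot(1-\eps),
\]
which tends to $(1-\eps)\,\mathrm{sfAdv}\cdot\log(1/\theta)/(1-\theta)$. Letting $\theta\to1$ slowly gives the limit $(1-\eps)\,\mathrm{sfAdv}$.

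\textbf{Main obstacle.} The hard step is the \emph{trace-norm bound}, namely that the weighted disagreement block of the product target has trace norm $\lesssim\delta^n$. Trace norm is not multiplicative under this block-restricted tensor product, since it involves $K_{01}^{\otimes n}$ weighted by $\nu_0^{\otimes n},\nu_1^{\otimes n}$. It is, however, submultiplicative under tensor products: $\|(K_{01}\circ\ket{\nu_0}\bra{\nu_1})^{\otimes n}\|_{\tr}=\|K_{01}\circ\ket{\nu_0}\bra{\nu_1}\|_{\tr}^n\leq\delta^n$, which is exactly what is needed. The remaining work is checking the cost bound via the telescoping/adjoint argument of \Cref{claim:direct-sum-upper-inner-product}, where the $\rho_\ell$ are replaced by the appropriately weighted one-copy matrices, and verifying that they contract the filtered norms.
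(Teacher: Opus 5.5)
Your lower bound is exactly the paper's argument, but the upper bound as planned cannot reach the stated equality.

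Write $F_n:=\mathrm{TRIV}\circ\bg$ and $\eta_n:=(\nu_0^{\otimes n}+\nu_1^{\otimes n})/2$. Interpolating with $J$ (\Cref{lemma:sfadv-monotonicity} applied to $F_n$) can only \emph{raise} the relaxation parameter. To end at relaxation $\eps$ you therefore need $\theta_n:=\delta_n^n\le\eps$. ``Letting $\theta\to1$'' is not available, because a target with relaxation $\theta>\eps$ says nothing about $\mathrm{rAdv}^{\eta_n}_\eps$. With $\theta_n\le\eps$, your cost bound $n\,\mathrm{rAdv}^\nu_{\delta_n}(g)$ gives
\[
\frac{\mathrm{rAdv}^{\eta_n}_{\eps}(F_n)}{1-\eps}\leq\frac{n(1-\delta_n)}{1-\delta_n^n}\cdot\frac{\mathrm{rAdv}^{\nu}_{\delta_n}(g)}{1-\delta_n}.
\]
When $\delta_n^n\to\theta$, the prefactor tends to $\ln(1/\theta)/(1-\theta)\geq\ln(1/\eps)/(1-\eps)>1$. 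If instead $\delta_n$ stays away from $1$, the prefactor tends to $\infty$, since it is at least $n(1-\delta_n)$. So you only obtain
\[
\limsup_n\frac{\mathrm{rAdv}^{\eta_n}_\eps(F_n)}{1-\eps}\leq\frac{\ln(1/\eps)}{1-\eps}\,\mathrm{sfAdv}^{(\nu_0,\nu_1)}(g),
\]
and no choice of $\delta_n$ removes this factor.

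Separately, the channel/adjoint route of \Cref{claim:direct-sum-upper-inner-product} does not transfer here. $\Psi\in\cR^\nu_\delta(g)$ is only Hermitian, so $\Psi\circ\kbr{\nu}$ is not a density operator, and $\eta_n$ is not a product distribution.

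The missing idea is that the cost is a geometric series rather than $n$ times the one-copy cost, which is exactly the ``retry until success'' picture. Take an optimal primal certificate $J-\Psi=\sum_jX_j\circ\Delta_j$ on $\supp(\nu)$ with $\sum_j\|X_j\circ\kbr{\nu}\|_{\tr}=\mathrm{rAdv}^\nu_\delta(g)$ (\Cref{prop:equiv-avg-gamma2}). Lift it with the mask $\Delta_{F_n}$:
\[
Z_{ij}:=\Delta_{F_n}\circ\bigl(J^{\otimes(i-1)}\otimes X_j\otimes\Psi^{\otimes(n-i)}\bigr).
\]
The mask costs nothing. Indeed, $J^{\otimes n}-\Psi^{\otimes n}$ vanishes on the agreement blocks of $D_n:=g^{-1}(0)^n\cup g^{-1}(1)^n$, so $\sum_{i,j}Z_{ij}\circ\Delta_{ij}=J^{\otimes n}-\Psi^{\otimes n}$ there.

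After masking, $Z_{ij}\circ\kbr{\eta_n}$ retains only the cross term $(\ketbra{\nu_0}{\nu_1})^{\otimes n}$ and its adjoint. Its trace norm therefore factorizes just as in your trace-norm computation:
\begin{itemize}
\item each leading $J$ contributes $1$;
\item the $X_j$ factor contributes at most $\|X_j\circ\kbr{\nu}\|_{\tr}$;
\item each trailing $\Psi$ contributes at most $\delta$.
\end{itemize}
Summing gives cost at most $\sum_{i=1}^n\delta^{n-i}\,\mathrm{rAdv}^\nu_\delta(g)=\frac{1-\delta^n}{1-\delta}\,\mathrm{rAdv}^\nu_\delta(g)$. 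Equivalently, $\mathrm{rAdv}^{\eta_n}_{\delta^n}(F_n)/(1-\delta^n)\leq\mathrm{rAdv}^\nu_\delta(g)/(1-\delta)$, with no loss. You can then keep $\delta$ fixed within $\tau$ of the infimum defining $\mathrm{sfAdv}$, take $n$ large enough that $\delta^n\leq\eps$, and finish with \Cref{lemma:sfadv-monotonicity}. No balancing of $\delta_n\to1$ against $n$ is needed.
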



\begin{proof}
    Let $\nu=(\nu_0+\nu_1)/2$, and for $n\geq1$ let $\eta_n=(\nu_0^{\otimes n}+\nu_1^{\otimes n})/2$
    and $F_n=\mathrm{TRIV}\circ\bg$. By \Cref{prop:composition-adversary-bounds} and
    \Cref{lem:triv-radv-certificate}, for every $n\geq1$,
    \begin{align*}
        \mathrm{sfAdv}^{(\nu_0,\nu_1)}(g)
        &\leq \frac{\mathrm{rAdv}^{\eta_n}_{\eps}(F_n)}
            {\mathrm{rAdv}^{(1/2,1/2)}_{\eps}(\mathrm{TRIV})}
        =\frac{\mathrm{rAdv}^{\eta_n}_{\eps}(F_n)}{1-\eps}.
    \end{align*}
    For the other direction, let $\tau>0$ be arbitrary. By \Cref{def:sfadv}, there exists
    $\delta=\delta_{\tau}\in[0,1)$ such that
    \begin{align*}
        \frac{\mathrm{rAdv}^{\nu}_{\delta}(g)}{1-\delta}
        \leq \mathrm{sfAdv}^{(\nu_0,\nu_1)}(g)+\tau.
    \end{align*}
    Our goal is to show the following.
    \begin{claim}\label{claim:triv-amplification} Then, for every $n\geq1$, we have
        \begin{align*}
            \frac{\mathrm{rAdv}^{\eta_n}_{\delta^n}(F_n)}{1-\delta^n}
            \leq \frac{\mathrm{rAdv}^{\nu}_{\delta}(g)}{1-\delta}.
        \end{align*}
    \end{claim}
    Then, assuming the above, the proposition follows. Indeed,  for all sufficiently large $n\in\bN$
    we have $\delta^n\leq\eps$, and therefore, by \Cref{lemma:sfadv-monotonicity},
    \begin{align*}
        \frac{\mathrm{rAdv}^{\eta_n}_{\eps}(F_n)}{1-\eps}
        &\leq \frac{\mathrm{rAdv}^{\eta_n}_{\delta^n}(F_n)}{1-\delta^n}
        \leq \frac{\mathrm{rAdv}^{\nu}_{\delta}(g)}{1-\delta}
        \leq \mathrm{sfAdv}^{(\nu_0,\nu_1)}(g)+\tau.
    \end{align*}
    Since $\tau>0$ was arbitrary, taking the limit $n\to \infty$, we obtain a matching lower bound
    for $\mathrm{sfAdv}^{(\nu_0,\nu_1)}$, completing the proof.
    \par
    To prove the claim, we can intuitively think of $\mathrm{rAdv}^{\nu}_{\delta}(g)$ as the cost of
    an idealized algorithm that returns the correct answer with probability $1-\delta$ and otherwise
    reports failure. Given $n$ independent copies with the same answer, we could run this algorithm
    on successive copies until one succeeds. The failure probability would then be $\delta^n$, while
    the expected total cost would be
    \begin{align*}
        (1+\delta+\cdots+\delta^{n-1})\cdot\mathrm{rAdv}^{\nu}_{\delta}(g)
        =\frac{1-\delta^n}{1-\delta}\,\mathrm{rAdv}^{\nu}_{\delta}(g).
    \end{align*}
    Formally, we proceed as follows.
\paragraph{Proof of the claim.}
For ease of notation, we implicitly restrict all tensor-product matrices below to $D_n\times D_n$,
where $D_n:=g^{-1}(0)^n\cup g^{-1}(1)^n$ is the domain of $F_n$. Let $m$ be the input length of $g$.
By \Cref{def:radv}, we have
        \begin{align*}
            \mathrm{rAdv}^{\nu}_{\delta}(g)
            =\min_{\Psi\in\cR^{\nu}_{\delta}(g)}
            \bar{\gamma}_2^{\nu}(J-\Psi\mid\{\Delta_j\}).
        \end{align*}
        Let $\Psi\in\cR^{\nu}_{\delta}(g)$ attain the minimum.
        We first show that $\Psi^{\otimes n}\in\cR^{\eta_n}_{\delta^n}(F_n)$. The matrix
        $\Psi^{\otimes n}$ is Hermitian, since $\Psi$ is Hermitian. Also, for $\bx,\by\in D_n$ and
        $i\in [n]$, $g(x^i)=F_n(\bx)$ and $g(y^i)=F_n(\by)$. Hence,
        \begin{align*}
            (A_g^{\otimes n})(\bx,\by)
            &=\prod_{i=1}^{n}\ind\{g(x^i)=g(y^i)\}
            =\ind\{F_n(\bx)=F_n(\by)\}=A_{F_n}(\bx,\by),
        \end{align*}
        As a result, $\Psi^{\otimes n}\circ A_{F_n} =(\Psi\circ A_g)^{\otimes n} =A_g^{\otimes
        n}=A_{F_n}.$ Similarly,
        \begin{align*}
            (\Delta_g^{\otimes n})(\bx,\by)
            &=\prod_{i=1}^{n}\ind\{g(x^i)\neq g(y^i)\}
            =\ind\{F_n(\bx)\neq F_n(\by)\}=\Delta_{F_n}(\bx,\by).
        \end{align*}

        Therefore, $\Delta_g^{\otimes n} = \Delta_{F_n}$, and we can bound
        \begin{align*}
            \|\Psi^{\otimes n}\circ\Delta_{F_n}\circ\kbr{\eta_n}\|_{\tr}
            =&\frac12\Bigl\|\Psi^{\otimes n}\circ\Delta_g^{\otimes n}\circ\bigl[(\ketbra{\nu_0}{\nu_0})^{\otimes n}+(\ketbra{\nu_0}{\nu_1})^{\otimes n}+(\ketbra{\nu_1}{\nu_0})^{\otimes n}+(\ketbra{\nu_1}{\nu_1})^{\otimes n}\bigr]\Bigr\|_{\tr}\\
            =&\frac12\Bigl\|\Psi^{\otimes n}\circ\Delta_g^{\otimes n}\circ\bigl[(\ketbra{\nu_0}{\nu_1})^{\otimes n}+(\ketbra{\nu_1}{\nu_0})^{\otimes n}\bigr]\Bigr\|_{\tr}\\
            =&\|\Psi^{\otimes n}\circ\Delta_g^{\otimes n}\circ(\ketbra{\nu_0}{\nu_1})^{\otimes n}\|_{\tr}\\
            =&\|(\Psi\circ\Delta_g\circ\ketbra{\nu_0}{\nu_1})^{\otimes n}\|_{\tr}\\
            =&\|\Psi\circ\Delta_g\circ\ketbra{\nu_0}{\nu_1}\|_{\tr}^{n}\\
            =&\|\Psi\circ\Delta_g\circ\kbr{\nu}\|_{\tr}^{n}\\
            \leq &\delta^n,
        \end{align*}
        Here, we could remove and add the terms involving $\ketbra{\nu_b}{\nu_b}$ because $\nu_b$ is
        supported on $g^{-1}(b)$, so $\Delta_g\circ\kbr{\nu_b}=0$. This completes the proof of
        $\Psi^{\otimes n}\in\cR^{\eta_n}_{\delta^n}(F_n)$. Therefore,
        \begin{align*}
            \mathrm{rAdv}^{\eta_n}_{\delta^n}(F_n)
            \leq &\bar{\gamma}_2^{\eta_n}(J^{\otimes n}-\Psi^{\otimes n}\mid\{\Delta_{ij}\}).
        \end{align*}
\par
        We now directly upper bound the right-hand side by lifting a primal certificate for
        $\bar{\gamma}_2^{\nu}(J-\Psi\mid\{\Delta_j\})$ to one for
        $\bar{\gamma}_2^{\eta_n}(J^{\otimes n}-\Psi^{\otimes n}\mid\{\Delta_{ij}\})$. By the primal
        form in \Cref{prop:equiv-avg-gamma2}, let $X_j\in\Herm(\Sigma_g)$, $j\in[m]$, be such that
        \begin{align*}
            \mathrm{rAdv}^{\nu}_{\delta}(g)
            = &\sum_{j=1}^{m}\|X_j\circ\kbr{\nu}\|_{\tr},\\
            J-\Psi
            = &\sum_{j=1}^{m}X_j\circ\Delta_j
            &&\text{on $\supp(\nu)\times\supp(\nu)$.}
        \end{align*}
        Mirroring the definition of $\Delta_{ij}$, we define the lifted matrices
        $Z_{ij}\in\Herm(D_n)$, for $i\in[n]$ and $j\in[m]$, as
        \begin{align*}
            Z_{ij}:=\Delta_{F_n}\circ
            \bigl(J^{\otimes(i-1)}\otimes X_j\otimes\Psi^{\otimes(n-i)}\bigr).
        \end{align*}
        Then, on $\supp(\eta_n)\times\supp(\eta_n)$, we have
        \begin{align*}
            \sum_{i,j}Z_{ij}\circ\Delta_{ij}
            &=\sum_{i,j}\Delta_{F_n}\circ
            \bigl(J^{\otimes(i-1)}\otimes X_j\otimes\Psi^{\otimes(n-i)}\bigr)
            \circ\bigl(J^{\otimes(i-1)}\otimes\Delta_j\otimes J^{\otimes(n-i)}\bigr)\\
            &=\sum_{i,j}\Delta_{F_n}\circ
            \bigl(J^{\otimes(i-1)}\otimes(X_j\circ\Delta_j)\otimes\Psi^{\otimes(n-i)}\bigr)\\
            &=\sum_i\Delta_{F_n}\circ
            \bigl(J^{\otimes(i-1)}\otimes(J-\Psi)\otimes\Psi^{\otimes(n-i)}\bigr)\\
            &=\Delta_{F_n}\circ(J^{\otimes n}-\Psi^{\otimes n})\\
            &=J^{\otimes n}-\Psi^{\otimes n},
        \end{align*}
        where in the last line we used $\Psi^{\otimes n}\circ A_{F_n} = A_{F_n}$. Thus, $\{Z_{ij}\}$
        form a feasible primal certificate for $\bar{\gamma}_2^{\eta_n}(J^{\otimes n}-\Psi^{\otimes
        n}\mid\{\Delta_{ij}\})$ by \Cref{prop:equiv-avg-gamma2}. Hence,
        \begin{align*}
            \mathrm{rAdv}^{\eta_n}_{\delta^n}(F_n)
            \leq &\bar{\gamma}_2^{\eta_n}(J^{\otimes n}-\Psi^{\otimes n}\mid\{\Delta_{ij}\})\\
            \leq &\sum_{i,j}\|Z_{ij}\circ\kbr{\eta_n}\|_{\tr}\\
            = &\sum_{i,j}\Bigl\|\Delta_{F_n}\circ
            \bigl(J^{\otimes(i-1)}\otimes X_j\otimes\Psi^{\otimes(n-i)}\bigr)
            \circ\kbr{\eta_n}\Bigr\|_{\tr}\\
            = &\sum_{i,j}\Bigl\|\Delta_g^{\otimes n}\circ
            \bigl(J^{\otimes(i-1)}\otimes X_j\otimes\Psi^{\otimes(n-i)}\bigr)
            \circ(\ketbra{\nu_0}{\nu_1})^{\otimes n}\Bigr\|_{\tr}\\
            = &\sum_{i,j}\|\Delta_g\circ\ketbra{\nu_0}{\nu_1}\|_{\tr}^{i-1}
            \|X_j\circ\Delta_g\circ\ketbra{\nu_0}{\nu_1}\|_{\tr}
            \|\Psi\circ\Delta_g\circ\ketbra{\nu_0}{\nu_1}\|_{\tr}^{n-i}\\
            = &\sum_{i,j}\|X_j\circ\Delta_g\circ\kbr{\nu}\|_{\tr}
            \|\Psi\circ\Delta_g\circ\kbr{\nu}\|_{\tr}^{n-i}\\
            \leq &\sum_{i,j}\gamma_2(\Delta_g)\cdot\|X_j\circ\kbr{\nu}\|_{\tr}
            \cdot\delta^{n-i}\\
            \leq &\sum_{i=1}^{n}\delta^{n-i}
            \sum_{j=1}^{m}\|X_j\circ\kbr{\nu}\|_{\tr}\\
            = &\frac{1-\delta^n}{1-\delta}\,\mathrm{rAdv}^{\nu}_{\delta}(g),
        \end{align*}
        where we used $\gamma_2(\Delta_g)\leq1$ for decision problems. Dividing both sides by
        $1-\delta^n$ completes the proof of the claim.
    \end{proof}

\subsection{Composition for distributional quantum query complexity}\label{sec:composition-quantum}
In this section, we relate the adversary composition bounds back to the  distributional quantum query
complexity, giving the composition theorem and tightness of $\mathrm{sfAdv}$ in the limit.

\begin{theorem}\label{thm:composition-distributional-query} Let $\Sigma\subset\{0,1\}^{n}$,
    $f:\Sigma\to\Omega$ be a functional problem, and $\mu:\Sigma\to[0,1]$ be a distribution. Let
    $\Sigma_g\subset\{0,1\}^{m}$, $g:\Sigma_g\to\{0,1\}$ be a decision problem, and
    $\nu_0,\nu_1:\Sigma_g\to[0,1]$ be a pair of distributions supported on $g^{-1}(0)$ and
    $g^{-1}(1)$, respectively. For constants $0<\eps<\eps'<1/2$, we have
    \begin{align*}
        \mathsf{Q}^{\mu\circ(\nu_0,\nu_1)}_{\eps}(f\circ\bg)
        = &\Omega_{\eps',\eps}\!\left(
        \mathsf{Q}^{\mu}_{\eps'}(f)\cdot\mathrm{sfAdv}^{(\nu_0,\nu_1)}(g)\right).
    \end{align*}
    Moreover,
    \begin{align*}
        \mathrm{sfAdv}^{(\nu_0,\nu_1)}(g)
        = &\Theta_{\eps}\!\left(
        \lim_{n\to\infty}\mathsf{Q}^{(\nu_0^{\otimes n}+\nu_1^{\otimes n})/2}_{\eps}
        (\mathrm{TRIV}\circ\bg)\right).
    \end{align*}
\end{theorem}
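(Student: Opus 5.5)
The plan is to go through the adversary measures. \Cref{adv-lowerbound} and \Cref{prop:adv-upperbound} pass between $\mathsf{Q}$ and $\mathrm{Adv}$, \Cref{cor:radv-query-comparison} passes between $\mathsf{Q}$ and $\mathrm{rAdv}$, and the composition and tightness statements of \Cref{thm:composition-adversary} supply the link through $\mathrm{sfAdv}$. Throughout, write $\eta=\mu\circ(\nu_0,\nu_1)$.

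\textbf{Main inequality.} First, \Cref{adv-lowerbound} and the first item of \Cref{thm:composition-adversary} give
\[
\mathsf{Q}^{\eta}_{\eps}(f\circ\bg)\geq \tfrac12\mathrm{Adv}^{\eta}_{\eps}(f\circ\bg)\geq \tfrac12\mathrm{Adv}^{\mu}_{\eps}(f)\,\mathrm{sfAdv}^{(\nu_0,\nu_1)}(g).
\]
It then remains to show $\mathrm{Adv}^\mu_\eps(f)=\Omega_{\eps,\eps'}(\mathsf{Q}^\mu_{\eps'}(f))$. For this I would pick $\eta'>0$ with $(\sqrt{\eps}+\sqrt{\eta'})^2=\eps'$. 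Such an $\eta'$ exists because $\eps<\eps'<1$, and it is a constant depending only on $\eps,\eps'$. \Cref{prop:adv-upperbound} then gives $\mathsf{Q}^\mu_{\eps'}(f)\leq (4/\eta')\,\mathrm{Adv}^\mu_\eps(f)$, and combining the two displays proves the first claim.

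\textbf{Moreover part: setup.} Put $\eta_n=(\nu_0^{\otimes n}+\nu_1^{\otimes n})/2$ and $F_n=\mathrm{TRIV}\circ\bg$. This is exactly the composed distribution $(1/2,1/2)\circ(\nu_0,\nu_1)$ on the domain of $F_n$.

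First I would show that the limit exists. Let $q_n:=\mathsf{Q}^{\eta_n}_\eps(F_n)$. An algorithm for $F_n$ that reads only the first $n$ blocks solves $F_{n+1}$ under $\eta_{n+1}$ with the same error, because the marginal of $\eta_{n+1}$ on the first $n$ blocks is $\eta_n$ together with the correct label. Hence $q_{n+1}\leq q_n$. A nonincreasing sequence of nonnegative integers is eventually constant, so the limit exists.

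\textbf{Moreover part: lower bound.} Apply the main inequality with $f=\mathrm{TRIV}_n$, $\mu=(1/2,1/2)$, and some fixed $\eps'\in(\eps,1/2)$. A $0$-query algorithm has error exactly $1/2$ on the uniform distribution over $\{0^n,1^n\}$. Therefore $\mathsf{Q}^{(1/2,1/2)}_{\eps'}(\mathrm{TRIV}_n)\geq1$, and so $q_n=\Omega_\eps(\mathrm{sfAdv}^{(\nu_0,\nu_1)}(g))$ for every $n$.

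\textbf{Moreover part: upper bound.} Since $F_n$ is a decision problem, the decision-problem case of \Cref{cor:radv-query-comparison} gives $q_n=O(\mathrm{rAdv}^{\eta_n}_{\eps/3}(F_n)/\eps)$. By the third item of \Cref{thm:composition-adversary} (that is, \Cref{prop:composition-triv-tightness} with fixed parameter $\eps/3$), the quantity $\mathrm{rAdv}^{\eta_n}_{\eps/3}(F_n)/(1-\eps/3)$ converges to $\mathrm{sfAdv}^{(\nu_0,\nu_1)}(g)$. Taking limits gives $\lim_n q_n=O_\eps(\mathrm{sfAdv}^{(\nu_0,\nu_1)}(g))$.

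\textbf{Expected obstacle.} The step I expect to need the most care is the bookkeeping of error parameters. The lower bound must use an error level strictly between $\eps$ and $1/2$ (here $\eps'$), while the upper bound must use the level $\eps/3$. This forces the constants to depend on $\eps$, and it is why the existence of the limit has to be argued separately by the monotonicity argument above.
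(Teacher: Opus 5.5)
Your proposal is correct and matches the paper's proof essentially step for step. For the first bound you use the same chain $\mathsf{Q}\ge\frac12\mathrm{Adv}$, then the composition inequality, then \Cref{prop:adv-upperbound} with $(\sqrt{\eps}+\sqrt{\eta'})^2=\eps'$; for the moreover part you use the same monotonicity argument for the limit, the first bound applied to $\mathrm{TRIV}$ at an intermediate error level, and \Cref{cor:radv-query-comparison} combined with \Cref{prop:composition-triv-tightness} at level $\eps/3$ (your remark that only $\mathsf{Q}^{(1/2,1/2)}_{\eps'}(\mathrm{TRIV})\ge 1$ is needed is a slightly cleaner phrasing of the paper's step).
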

\begin{proof}
    For the first bound, we have
    \begin{align*}
        \mathsf{Q}^{\mu\circ(\nu_0,\nu_1)}_{\eps}(f\circ\bg)
        &\geq \frac{1}{2}\mathrm{Adv}^{\mu\circ(\nu_0,\nu_1)}_{\eps}(f\circ\bg)
        &&\text{(By \Cref{adv-lowerbound}).}\\
        &\geq \frac{1}{2}\mathrm{Adv}^{\mu}_{\eps}(f)
        \cdot\mathrm{sfAdv}^{(\nu_0,\nu_1)}(g)
        &&\text{(By \Cref{prop:composition-adversary-bounds}).}
    \end{align*}
    Letting $\tau:=(\sqrt{\eps'}-\sqrt{\eps})^2$, we have $(\sqrt{\eps}+\sqrt{\tau})^2=\eps'$.
    Applying \Cref{prop:adv-upperbound} gives
    \begin{align*}
        \mathsf{Q}^{\mu}_{\eps'}(f)
        \leq \frac{4}{\tau}\mathrm{Adv}^{\mu}_{\eps}(f).
    \end{align*}
    Rearranging gives $\mathrm{Adv}^{\mu}_{\eps}(f)\geq\frac{\tau}{4}\mathsf{Q}^{\mu}_{\eps'}(f)$. Since
    $\tau>0$ depends only on $\eps',\eps$, combining with the above proves the first bound:
    \begin{align*}
        \mathsf{Q}^{\mu\circ(\nu_0,\nu_1)}_{\eps}(f\circ\bg)
        \geq &\frac{\tau}{8}\mathsf{Q}^{\mu}_{\eps'}(f)\cdot\mathrm{sfAdv}^{(\nu_0,\nu_1)}(g)
        =\Omega_{\eps',\eps}\!\left(
        \mathsf{Q}^{\mu}_{\eps'}(f)\cdot\mathrm{sfAdv}^{(\nu_0,\nu_1)}(g)\right).
    \end{align*}
    \par
    For the second statement, write $F_n:=\mathrm{TRIV}\circ\bg$ and $\mu_n=(\nu_0^{\otimes
    n}+\nu_1^{\otimes n})/2$. The sequence $\mathsf{Q}^{\mu_n}_{\eps}(F_n)$ is nonincreasing, since
    ignoring additional copies leaves the same marginal distribution, so its limit exists. Applying
    the first bound with outer problem $\mathrm{TRIV}$, outer distribution $(1/2,1/2)$, and
    $\eps''=(\eps+1/2)/2$, we get
    \begin{align*}
        \mathsf{Q}^{\mu_n}_{\eps}(F_n)
        = &\Omega_{\eps'',\eps}\!\left(
        \mathsf{Q}^{(1/2,1/2)}_{\eps''}(\mathrm{TRIV})
        \cdot\mathrm{sfAdv}^{(\nu_0,\nu_1)}(g)\right)
        =\Omega_{\eps'',\eps}\!\left(\mathrm{sfAdv}^{(\nu_0,\nu_1)}(g)\right),
    \end{align*}
    where we used $\mathsf{Q}^{(1/2,1/2)}_{\eps''}(\mathrm{TRIV})=1$. Since $\eps''$ depends only on
    the fixed $\eps$, we get $\lim_{n\to\infty}\mathsf{Q}^{\mu_n}_{\eps}(F_n)
    =\Omega_{\eps}\!\left(\mathrm{sfAdv}^{(\nu_0,\nu_1)}(g)\right)$.
    \par
    For the other direction, by \Cref{cor:radv-query-comparison}, we have
    \begin{align*}
        \mathsf{Q}^{\mu_n}_{\eps}(F_n)
        = &O_{\eps}\!\left(\mathrm{rAdv}^{\mu_n}_{\eps/3}(F_n)\right).
    \end{align*}
    Taking the limit on both sides and applying \Cref{prop:composition-triv-tightness}, we get
    \begin{align*}
        \lim_{n\to\infty}\mathsf{Q}^{\mu_n}_{\eps}(F_n)
        = &O_{\eps}\!\left(\lim_{n\to\infty}\mathrm{rAdv}^{\mu_n}_{\eps/3}(F_n)\right)
        =O_{\eps}\!\left((1-\eps/3)\mathrm{sfAdv}^{(\nu_0,\nu_1)}(g)\right)
        =O_{\eps}\!\left(\mathrm{sfAdv}^{(\nu_0,\nu_1)}(g)\right).
    \end{align*}
    Combining both directions completes the proof.
\end{proof}

\newpage
\section{\texorpdfstring{An XOR lemma via a multiplicative analogue of the $\gamma_2$ norm}{An XOR lemma via a multiplicative analogue of the gamma 2 norm}}

In this section, we develop the tools needed for our new multiplicative adversary measure and the
direct-product theorem. In \Cref{sec:filtered-fidelity}, we introduce a filtered analogue of
fidelity, which we use in \Cref{sec:multiplicative-gamma2} to define our multiplicative analogue of
the $\gamma_2$ norm. In \Cref{sec:xor-lemma}, we then use this measure to prove an XOR lemma for
decision problems.

\subsection{Filtered fidelity}\label{sec:filtered-fidelity} We begin by restricting the minimization
in Alberti’s characterization of fidelity (\Cref{thm:alberti}) using a family of filters.
\begin{definition}
    For a finite set $\Sigma$, a constant $c>1$, and Gram matrices
    $\{P_i\}_{i\in[n]}\subset\Pos(\Sigma)$ with $P_i\circ I=I$, we define the filtered fidelity of a
    pair of PSD matrices $A,B\in\Pos(\Sigma)$ with cost $c$ and with respect to the filters
    $\{P_i\}$ as
    \begin{align*}
        \cF_c(A,B\mid\{P_i\})
        :=\progline{\inf_{\Gamma\succ0}}
        {\sqrt{\langle\Gamma,A\rangle\langle\Gamma^{-1},B\rangle}\quad}
        {\Gamma\circ P_i\preceq c\Gamma\quad\forall i\in[n]}.
    \end{align*}
\end{definition}
For our applications, we consider sign matrices $S_i(x,y)=(-1)^{x_i\oplus y_i}$ as our filters. We
can roughly think of $\cF_c(A,B\mid{S_i})$ as measuring the overlap between $A$ and $B$, using
adversary matrices whose change under a single query is bounded by a multiplicative factor $c$.
\par
In the following, we derive arithmetic and dual formulations of $\cF_c$ in
\Cref{prop:equiv-filtered-fidelity} and collect a few useful and natural properties in
\Cref{prop:filtered-fidelity-properties}.

\subsubsection{Equivalent primal and dual formulations}
\begin{proposition}[Minimax and dual formulation of $\cF_c$]\label{prop:equiv-filtered-fidelity} Let
$\Sigma$ be a finite set, $c>1$ be a constant, and $\{P_i\}_{i\in[n]}\subset\Pos(\Sigma)$ be Gram
matrices with $P_i\circ I=I$. Then, for $A,B\in\Pos(\Sigma)$, we have
\begin{equation}
\begin{progalign}
\progrow
{\cF_c(A,B\mid\{P_i\})}
{\inf_{\Gamma\succ0}}
{\frac12\bigl(\langle\Gamma,A\rangle+\langle\Gamma^{-1},B\rangle\bigr)}
{\Gamma\circ P_i\preceq c\Gamma & \forall i\in[n]}
\\[0.6em]
\progrow
{}
{\max_{C,X_i\succeq0}}
{\cF(C,B)}
{A-C=\sum_iX_i\circ(cJ-\overline{P_i}) &}
\end{progalign}
\end{equation}
In particular, the value is finite and the maximum is attained.
\end{proposition}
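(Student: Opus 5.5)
The plan has two parts. First I pass from the geometric mean to the arithmetic mean. Then I use Sion's minimax theorem to dualize the arithmetic form.

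\textbf{Geometric to arithmetic.} The feasible set $\{\Gamma\succ0:\Gamma\circ P_i\preceq c\Gamma\}$ is a cone, since it is invariant under $\Gamma\mapsto t\Gamma$ for $t>0$. The product $\langle\Gamma,A\rangle\langle\Gamma^{-1},B\rangle$ is unchanged under this scaling. By AM--GM, $\sqrt{ab}=\tfrac12\inf_{t>0}(ta+b/t)$ for $a,b\geq0$. This identity also covers the degenerate cases where $a$ or $b$ vanishes. Taking the infimum over $t$ inside the infimum over $\Gamma$ gives the first equality, exactly as in \Cref{thm:alberti}.

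\textbf{Dualization.} I write each constraint as $\Gamma\circ(cJ-P_i)\succeq0$ and introduce multipliers $X_i\succeq0$ for it. The pairing is $\langle X_i,\Gamma\circ(cJ-P_i)\rangle=\langle X_i\circ(cJ-\overline{P_i}),\Gamma\rangle$. This accounts for the conjugate $\overline{P_i}$ in the statement, because $\langle X,\Gamma\circ Y\rangle=\langle X\circ\overline{Y},\Gamma\rangle$.

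The Lagrangian is
\[
L(\Gamma,\{X_i\})=\tfrac12\bigl(\langle\Gamma,A\rangle+\langle\Gamma^{-1},B\rangle\bigr)-\sum_i\langle\Gamma,X_i\circ(cJ-\overline{P_i})\rangle.
\]
Set $C:=A-2\sum_iX_i\circ(cJ-\overline{P_i})$; the factor $2$ can be absorbed into $X_i$. Then $L=\tfrac12(\langle\Gamma,C\rangle+\langle\Gamma^{-1},B\rangle)$. The infimum over $\Gamma\succ0$ is $-\infty$ unless $C\succeq0$. When $C\succeq0$, the infimum equals $\cF(C,B)$ by \Cref{thm:alberti}.

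\textbf{Weak duality.} The inequality ``$\geq\max$'' is immediate. For feasible $\Gamma$ and any $X_i\succeq0$, we have $\langle\Gamma,A\rangle\geq\langle\Gamma,C\rangle$, since $\Gamma\circ(cJ-P_i)\succeq0$ pairs nonnegatively with $X_i$. Hence
\[
\tfrac12\bigl(\langle\Gamma,A\rangle+\langle\Gamma^{-1},B\rangle\bigr)\geq\tfrac12\bigl(\langle\Gamma,C\rangle+\langle\Gamma^{-1},B\rangle\bigr)\geq\cF(C,B).
\]

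\textbf{Strong duality (main obstacle).} For the reverse inequality I expect the hard part to be justifying the minimax swap and dual attainment. The reason is that $\Gamma\mapsto\langle\Gamma^{-1},B\rangle$ is convex but not compact-domain, and the primal infimum need not be attained.

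I would first check that the primal has a strictly feasible point. Take $\Gamma=I$. Then $I\circ P_i=I\prec cI$ because $P_i\circ I=I$ and $c>1$. So a Slater-type condition holds.

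I would then apply \Cref{thm:sion-minimax} with the compact side on the dual variables. To make that side compact, restrict to $\{(C,X_i)\succeq0:\ \Tr[C]+\sum_i\Tr[X_i](c-1)\leq\Tr[A]\}$. This bound holds automatically: take the trace of the constraint and use that $cJ-\overline{P_i}$ has diagonal $c-1>0$. The objective $\cF(C,B)$ is concave and continuous, and restricting to the affine constraint set keeps it compact.

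Concretely, consider $F(\Gamma,\{X_i\}):=L(\Gamma,\{X_i\})$ on the compact set $\{X_i\succeq0:\sum_i(c-1)\Tr[X_i]\leq\Tr[A]/2\}$ and on $\Gamma\succ0$. This $F$ is convex in $\Gamma$ and affine (hence concave) in $X_i$. Sion's theorem then gives $\inf_\Gamma\max_X=\max_X\inf_\Gamma$.

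It remains to argue that restricting the $X_i$ to this trace-bounded set does not lose the primal constraints. I would argue that $\inf_\Gamma\max_X$ over the truncated set still equals the constrained primal value. If $\Gamma$ violates a constraint, the violation can be exploited up to the trace bound. Combined with the scaling freedom $\Gamma\mapsto t\Gamma$, this pushes the value to at least the constrained infimum.

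If that step proves delicate, the fallback is to treat the problem as a conic program. I would use the Schur-complement SDP form of $\langle\Gamma^{-1},B\rangle$, namely the epigraph $\begin{pmatrix}\Gamma & \cdot\\ \cdot & Y\end{pmatrix}\succeq0$ with objective $\langle Y,B\rangle$ after purifying $B$. Strict feasibility at $\Gamma=I$ together with \Cref{thm:slater} then yields both equality and attainment of the maximum. Finally, attainment on the dual side also follows from compactness and upper semicontinuity of $\cF(\cdot,B)$.
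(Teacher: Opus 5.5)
Your first equality (cone invariance plus AM--GM), the Lagrangian with multipliers $X_i\succeq0$ paired against $\Gamma\circ(cJ-P_i)$, the reduction to $C\succeq0$ with Alberti's theorem, and weak duality all match the paper. The gap is in your primary strong-duality route.

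Sion on the truncated set $K=\{X_i\succeq0:\sum_i(c-1)\Tr[X_i]\leq\Tr[A]/2\}$ gives $\max_K\inf_\Gamma L=\inf_\Gamma\max_K L$. Your trace bound correctly shows that the left side is the full dual value, since every $X$ with $C\succeq0$ lies in $K$. But the trace bound says nothing about the right side. There, $\max_K L(\Gamma,\cdot)=\tfrac12(\langle\Gamma,A\rangle+\langle\Gamma^{-1},B\rangle)+\frac{\Tr[A]}{2(c-1)}v(\Gamma)$, where $v(\Gamma):=\max\{0,\max_i\lambda_{\max}(\Gamma\circ P_i-c\Gamma)\}$. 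This is only a finite-weight penalty. Showing that its infimum equals the constrained infimum is an exact-penalty statement, which is essentially the strong duality you are trying to prove. The mechanism you suggest does not supply it: the feasible set is a cone, so rescaling $\Gamma\mapsto t\Gamma$ never makes an infeasible $\Gamma$ feasible.

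The missing idea is to repair an infeasible $\Gamma$ by shifting it by the identity. Set $s:=v(\Gamma)/(c-1)$ and use $I\circ P_i=I$. Then $(\Gamma+sI)\circ P_i-c(\Gamma+sI)=\Gamma\circ P_i-c\Gamma-s(c-1)I\preceq0$, so $\Gamma+sI$ is feasible. Since $(\Gamma+sI)^{-1}\preceq\Gamma^{-1}$ and $B\succeq0$, the objective rises by at most $s\Tr[A]/2=\frac{\Tr[A]}{2(c-1)}v(\Gamma)$, which is exactly the penalty. Hence $\inf_\Gamma\max_K L$ equals the constrained value.

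With this one line your route becomes a genuinely different, self-contained proof that needs only Sion and Alberti. Attainment then comes from maximizing the upper semicontinuous map $X\mapsto\inf_\Gamma L$ over the compact set $K$, or simply from compactness of the dual feasible set.

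Without this repair, the proof rests on your fallback, which is exactly what the paper does. It replaces $\langle\Gamma^{-1},B\rangle$ by $\langle H,B\rangle$ under the Schur-complement constraint $\left[\begin{smallmatrix}H&I\\I&\Gamma\end{smallmatrix}\right]\succeq0$; no purification of $B$ is needed. It then notes strict feasibility at $\Gamma=I$, $H=2I$, and applies Slater's theorem.
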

\begin{proof}
    By the AM-GM inequality, we immediately get that
    \begin{align*}
    \cF_{c}(A,B|\{P_i\}) \leq
    \progat{5cm}{\inf_{\Gamma \succ 0 }}{\frac{1}{2}(\langle \Gamma,A\rangle + \langle \Gamma^{-1},B\rangle )}{
        \Gamma \circ P_i \preceq c\Gamma && \forall i\in [n]
    }
    \end{align*}
    For the reverse inequality, let $\Gamma \succ 0$ be an arbitrary feasible matrix, where $\Gamma
    \circ P_i \preceq c\Gamma$ for all $i\in [n]$. For any $t>0$, $t\Gamma \succ 0$ is also
    feasible. If $A = 0$ or $B=0$, both sides become zero. So assume $A,B\neq 0$. Then, taking
    $t=\sqrt{\langle\Gamma^{-1},B\rangle/\langle\Gamma,A\rangle}$, we have
    \begin{align*}
        &\progat{5cm}{\inf_{\Gamma' \succ 0 }}{\frac{1}{2}(\langle \Gamma',A\rangle + \langle (\Gamma')^{-1},B\rangle )}{
        \Gamma' \circ P_i \preceq c\Gamma' && \forall i\in [n]}\\
        \leq&
        \progat{5cm}{\inf_{t>0}}{\frac{1}{2}(\langle t\Gamma,A\rangle + \langle (t\Gamma)^{-1},B\rangle )}{
        \Gamma \circ P_i \preceq c\Gamma && \forall i\in [n]}\\
        =&
        \sqrt{\langle \Gamma, A\rangle \langle \Gamma^{-1},B\rangle}.
    \end{align*}
    Since this holds for every feasible $\Gamma$, the two infima are equal, and hence,
    \begin{align*}
        \cF_{c}(A,B|\{P_i\}) &= \progat{5cm}{\inf_{\Gamma\succ 0}}{\frac{1}{2}(\langle \Gamma,A\rangle + \langle \Gamma^{-1},B\rangle )}{
            \Gamma \circ P_i \preceq c\Gamma && \forall i\in [n]}\\
            &=\progat{5cm}{\inf_{\substack{\Gamma\succ0\\H=H^\dagger}}}{\frac{1}{2}(\langle \Gamma,A\rangle + \langle H,B\rangle )}{
                 \Gamma \circ P_i \preceq c\Gamma && \forall i\in [n] \\ H\succeq \Gamma^{-1}}\\
            &=\progat{5cm}{\inf_{\substack{\Gamma=\Gamma^{\dagger}\\H=H^{\dagger}}}}{\frac{1}{2}(\langle \Gamma,A\rangle + \langle H,B\rangle )}{
                \Gamma \circ P_i \preceq c\Gamma && \forall i\in [n] \\
                \begin{bmatrix}
                    H & I \\
                    I & \Gamma
                \end{bmatrix} \succeq 0
            }.
    \end{align*}
    The second line holds as $B\succeq0$ implies $\langle
    H,B\rangle\geq\langle\Gamma^{-1},B\rangle$, and equality can be attained at $H=\Gamma^{-1}$. The
    third line holds by the Schur-complement equivalence
    $H\succeq\Gamma^{-1}\iff\left[\begin{smallmatrix}H&I\\I&\Gamma\end{smallmatrix}\right]\succeq0$.
    This is an SDP. The point $\Gamma = I$ and $H=2I$ is strictly feasible, since
    $cI-I\circ P_i=(c-1)I\succ0$ and $2I-I^{-1}=I\succ0$, and the objective is nonnegative and has a
    finite value at this point. Therefore, Slater's condition is satisfied, so strong duality holds
    and the dual maximum is attained (\Cref{thm:slater}). To get the maximization form, we introduce Lagrangian variables
    $Y\succeq0$ and $\{X_i\}_{i\in [n]}\succeq0$ for the second formulation above as follows.
    \begin{align*}
        \cF_{c}(A,B|\{P_i\}) &= \frac{1}{2}\inf_{\substack{\Gamma\succ0\\H=H^\dagger}}\sup_{Y,X_i\succeq0}
        \left(\langle \Gamma,A\rangle + \langle H,B\rangle
        - \sum_i\langle X_i,c\Gamma-\Gamma\circ P_i\rangle
        - \langle Y,H-\Gamma^{-1}\rangle\right)\\
        &= \frac{1}{2}\sup_{Y,X_i\succeq0}\inf_{\substack{\Gamma\succ0\\H=H^\dagger}}
        \left(\langle \Gamma,A\rangle + \langle H,B\rangle
        + \sum_i\langle X_i\circ\overline{P_i}-cX_i,\Gamma\rangle
        - \langle Y,H-\Gamma^{-1}\rangle\right)\\
        &= \frac{1}{2}\sup_{Y,X_i\succeq0}\inf_{\substack{\Gamma\succ0\\H=H^\dagger}}
        \left(\langle \Gamma,A+\sum_i(X_i\circ\overline{P_i}-cX_i)\rangle
        + \langle H,B-Y\rangle
        + \langle \Gamma^{-1},Y\rangle\right).
    \end{align*}
    If $B\neq Y$, then we may choose $H$ so that the infimum goes to $-\infty$. Thus, we may
    restrict to $Y=B$, in which case the term involving $H$ vanishes. In addition, if
    $C:=A+\sum_i(X_i\circ\overline{P_i}-cX_i)\not\succeq0$, choose a unit vector $\ket{v}$ such that
    $\bra{v}C\ket{v}<0$. Taking $\Gamma_t:=I+t\ketbra{v}$ for $t\geq0$ gives
    $0\prec\Gamma_t^{-1}\preceq I$, so $\langle\Gamma_t^{-1},B\rangle\leq\Tr B$, while
    $\langle\Gamma_t,C\rangle=\Tr C+t\bra{v}C\ket{v}\to-\infty$, making the infimum go to $-\infty$.
    So, we may additionally restrict to $C\succeq0$, which gives
    \begin{align*}
        \cF_{c}(A,B|\{P_i\})
        &= \frac{1}{2}\progat{5cm}{\sup_{X_i,C} \inf_{\Gamma \succ 0}}{\langle \Gamma,C\rangle + \langle \Gamma^{-1},B\rangle}{
            C,X_i \succeq 0 \quad \forall i \in [n] \\
            C = A+\sum_i(X_i\circ\overline{P_i}-cX_i)
        } \\
        \cF_{c}(A,B|\{P_i\})
        &= \progat{5cm}{\sup_{X_i,C}}{\cF(C,B)}{
            C,X_i \succeq 0 \quad \forall i \in [n] \\
            C = A+\sum_i(X_i\circ\overline{P_i}-cX_i)
        },
    \end{align*}
    where the last line is by Alberti's theorem (\Cref{thm:alberti}). The supremum is attained by
    the dual-attainment conclusion above, completing the proof.
\end{proof}

\subsubsection{Basic properties}

\begin{proposition}\label{prop:filtered-fidelity-properties} Let $\Sigma$ be a finite set, $c>1$ be
a constant, and $\{P_i\}_{i\in[n]}\subset\Pos(\Sigma)$ be Gram matrices with $P_i\circ I=I$. Then,
for $A,B\in\Pos(\Sigma)$, the following properties hold.
    \begin{enumerate}[i.]
        \item \textbf{Scalability.}\label{item:filtered-fidelity-scalability} For $a,b>0$,
        $\cF_c(aA,bB\mid\{P_i\})=\sqrt{ab}\,\cF_c(A,B\mid\{P_i\})$.
        \item \textbf{Tensorization.}\label{item:filtered-fidelity-tensorization} Let
        $\Sigma_1,\Sigma_2$ be finite sets, $A_1,B_1\in\Pos(\Sigma_1)$  and
        $A_2,B_2\in\Pos(\Sigma_2)$. Also, let $\{P_i\}_{i=1}^{n}\subset\Pos(\Sigma_1)$ and
        $\{Q_j\}_{j=1}^{m}\subset\Pos(\Sigma_2)$ be families of Gram matrices with $P_i\circ I=I$
        and $Q_j\circ I=I$ for every $i,j$. Then,
        \begin{align*}
            \cF_c\bigl(A_1\otimes A_2,B_1\otimes B_2
            \mid\{P_i\otimes J\}_i\cup\{J\otimes Q_j\}_j\bigr)
            =\cF_c(A_1,B_1\mid\{P_i\})\,
             \cF_c(A_2,B_2\mid\{Q_j\}).
        \end{align*}
        \item \textbf{Log-log convexity in $c$.}\label{item:filtered-fidelity-interpolation} For
        $c_0,c_1>1$ and $\lambda\in(0,1)$,
        \begin{align*}
            \cF_{c_0^{1-\lambda}c_1^\lambda}(A,B\mid\{P_i\})
            \le
            \cF_{c_0}(A,B\mid\{P_i\})^{1-\lambda}
            \cF_{c_1}(A,B\mid\{P_i\})^\lambda.
        \end{align*}
        \item \textbf{Joint concavity.}\label{item:filtered-fidelity-joint-concavity} For
        $A_1,A_2,B_1,B_2\in\Pos(\Sigma)$ and $\lambda\in[0,1]$,
        \begin{align*}
            &\cF_c\bigl(\lambda A_1+(1-\lambda)A_2,
            \lambda B_1+(1-\lambda)B_2\mid\{P_i\}\bigr)
            \ge
            \lambda\cF_c(A_1,B_1\mid\{P_i\})
            +(1-\lambda)\cF_c(A_2,B_2\mid\{P_i\}).
        \end{align*}
        \item \textbf{Data processing.} Let $\cE:\cL(\Sigma)\to\cL(\Sigma)$ be a quantum channel
        satisfying $\cE(X\circ\overline{P_i})=\cE(X)\circ\overline{P_i}$ for every $X\in\cL(\Sigma)$
        and every $i$. Then,
        $
            \cF_c\bigl(\cE(A),\cE(B)\mid\{P_i\}\bigr)
            \ge \cF_c(A,B\mid\{P_i\}).
        $
        \item \textbf{Continuity.}\label{item:filtered-fidelity-continuity} For fixed filters
        $\{P_i\}$, the map $(A,B,c)\mapsto\cF_c(A,B\mid\{P_i\})$ is jointly continuous on
        $\Pos(\Sigma)\times\Pos(\Sigma)\times(1,\infty)$.
        \item \textbf{Monotonicity in $c$.}\label{item:filtered-fidelity-cost-monotonicity} For
        $1<c\le d$,
        \begin{align*}
            0\le \cF(A,B)
            \le \cF_d(A,B\mid\{P_i\})
            \le \cF_c(A,B\mid\{P_i\})
            \le \sqrt{\Tr(A)\Tr(B)}.
        \end{align*}
        Moreover, $\lim_{c\to\infty}\cF_c(A,B\mid\{P_i\})=\cF(A,B)$.
        \item \textbf{Monotonicity in $A$ and $B$.}\label{item:filtered-fidelity-input-monotonicity}
        For $A\preceq A'$ and $B\preceq B'$
        \begin{align*}
            \cF_c(A,B\mid\{P_i\})
            \leq \cF_c(A',B'\mid\{P_i\}).
        \end{align*}
        \item \textbf{Lower bound on fidelity.}\label{item:filtered-fidelity-lowerbound}
        \begin{align*}
            \cF(A,B)
            \geq
            \frac{\sqrt{c}\,\cF_c(A,B\mid\{P_i\})-\sqrt{\Tr(A)\Tr(B)}}{\sqrt{c}-1}.
        \end{align*}
        \item \textbf{Hadamard restriction.}\label{item:filtered-fidelity-hadamard-restriction} Let
        $S\subseteq\Sigma$. Then,
        \[
            \cF_c(A\circ J_S,B\circ J_S\mid\{P_i\})
            =\cF_c(A|_S,B|_S\mid\{P_i|_S\}),
        \]
        where $X|_S$ denotes the principal submatrix of $X$ indexed by $S$.
    \end{enumerate}
    Moreover, suppose $\Sigma\subseteq\{0,1\}^m$, and let $\{S_i\}_{i\in[m]}$ be sign filters. Then,
    we have the following additional properties.
    \begin{enumerate}[i.]
        \setcounter{enumi}{10}
        \item \textbf{Symmetry.}\label{item:filtered-fidelity-symmetry}
        $\cF_c(A,B\mid\{S_i\})=\cF_c(B,A\mid\{S_i\})$.
        \item \textbf{Vanishing.}\label{item:filtered-fidelity-vanishing} $\cF_c(A,B\mid\{S_i\})=0$
        if and only if $A\circ B=0$.
    \end{enumerate}
\end{proposition}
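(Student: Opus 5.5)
Each property is proved from one of the two formulations in \Cref{prop:equiv-filtered-fidelity}. An upper bound on $\cF_c$ comes from exhibiting a feasible $\Gamma\succ0$ in the infimum form. A lower bound comes from exhibiting feasible $(C,\{X_i\})$ in the maximization form, which gives $\cF_c\ge\cF(C,B)$.

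Scalability follows by rescaling $\Gamma\mapsto t\Gamma$ in the geometric-mean form. Monotonicity in $A$ and $B$ is immediate, since $\langle\Gamma,A\rangle$ and $\langle\Gamma^{-1},B\rangle$ are monotone for $\Gamma\succ0$.

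For monotonicity in $c$, note that the feasible set grows with $c$, which gives $\cF_d\le\cF_c$. Taking $\Gamma=I$, which is feasible because $I\circ P_i=I\preceq cI$, gives the upper bound $\sqrt{\Tr A\Tr B}$. Dropping constraints gives $\cF\le\cF_d$ via \Cref{thm:alberti}. For the limit $c\to\infty$, take a near-optimal $\Gamma\succ0$ for the unconstrained Alberti problem. Its feasibility $\Gamma\circ P_i\preceq c\Gamma$ holds once $c\ge\max_i\|\Gamma\circ P_i\|/\lambda_{\min}(\Gamma)$.

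Joint concavity follows from the max form. Averaging feasible $(C_1,X^1_i)$ and $(C_2,X^2_i)$ stays feasible, because the constraint is linear, and ordinary fidelity is jointly concave (after rescaling to densities, or directly from the Alberti form as an infimum of concave functions).

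Data processing also uses the max form. Apply $\cE$ to the feasibility equation. The commutation hypothesis gives $\cE(X_i\circ(cJ-\overline{P_i}))=\cE(X_i)\circ(cJ-\overline{P_i})$, and $\cE$ preserves PSD-ness. Monotonicity of $\cF$ under channels then gives $\cF(\cE(C),\cE(B))\ge\cF(C,B)$.

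Hadamard restriction is obtained in both directions by padding or restricting $\Gamma$ (block-diagonally with identity) and $X_i,C$ (by zero).

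For the fidelity lower bound, use the min form with optimal $\Gamma$ for $\cF$ interpolated against $I$. A cleaner route is the max form: feasibility gives $A-C=\sum_i X_i\circ(cJ-\overline{P_i})\succeq(c-1)\sum_iX_i\circ I$, which bounds $\Tr(A-C)$ in terms of $\Tr\sum X_i$. Then the key fact is that $\cF(A,B)\ge\cF(C,B)-\sqrt{\Tr(A-C)\Tr B}$-type estimates can be combined, with $c$ entering through $\sqrt c$.

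I would actually first try log-log convexity. Given feasible $\Gamma_0$ for $c_0$ and $\Gamma_1$ for $c_1$, the matrix geometric mean $\Gamma_0\#_\lambda\Gamma_1$ is feasible for $c_0^{1-\lambda}c_1^\lambda$. This uses joint operator concavity of the weighted geometric mean together with its compatibility with the positive map $X\mapsto X\circ P_i$ (Ando). Hölder-type bounds $\langle\Gamma_0\#_\lambda\Gamma_1,A\rangle\le\langle\Gamma_0,A\rangle^{1-\lambda}\langle\Gamma_1,A\rangle^{\lambda}$, and likewise for inverses, finish the convexity claim.

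The lower bound on fidelity then follows from convexity in $\log c$ between $c$ and $c\to\infty$, where $\cF_\infty=\cF$, together with $\cF_1$-type endpoints. Concretely, write $\log\cF_{c}$ as a convex function of $\log c$ and use the chord through $\sqrt{\Tr A\Tr B}$. I expect this step to be the main obstacle, since getting exactly $\frac{\sqrt c\cF_c-\sqrt{\Tr A\Tr B}}{\sqrt c-1}$ likely requires a direct primal construction instead. That construction takes $\Gamma'=\Gamma^{*}+sI$-style mixtures of the Alberti optimizer with $I$ in the arithmetic form, and I would fall back to it if the convexity argument does not give the exact constant.

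For tensorization, the $\le$ direction takes $\Gamma_1\otimes\Gamma_2$. This is feasible because $(\Gamma_1\otimes\Gamma_2)\circ(P_i\otimes J)=(\Gamma_1\circ P_i)\otimes\Gamma_2\preceq c\,\Gamma_1\otimes\Gamma_2$, and the objective factorizes. The $\ge$ direction is the main obstacle. I would take dual optimizers $(C_1,X_i)$ and $(C_2,Y_j)$ and build $C=C_1\otimes C_2$ with multipliers $X_i\otimes A_2$ and $C_1\otimes Y_j$, using the telescoping identity $A_1\otimes A_2-C_1\otimes C_2=(A_1-C_1)\otimes A_2+C_1\otimes(A_2-C_2)$. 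Multiplicativity of $\cF$ under tensor products then gives the product.

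Continuity follows from the min form, which makes $\cF_c$ upper semicontinuous as an infimum of continuous functions. For lower semicontinuity I would use the max form together with compactness of the feasible $(C,X_i)$. These are bounded because $\Tr X_i\le\Tr A/(c-1)$, obtained by taking diagonals, and that bound is locally uniform in $c$.

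For the sign-filter items, symmetry uses $\Gamma\mapsto\Gamma^{-1}$. The task is to show that $\Gamma\circ S_i\preceq c\Gamma$ implies $\Gamma^{-1}\circ S_i\preceq c\Gamma^{-1}$. Here $\Gamma\circ S_i=D_i\Gamma D_i$ with $D_i=\mathrm{diag}((-1)^{x_i})$ unitary, so $(\Gamma\circ S_i)^{-1}=\Gamma^{-1}\circ S_i$. The constraint becomes $D\Gamma D\preceq c\Gamma$, and this inverts via $c^{-1}\Gamma^{-1}\preceq D\Gamma^{-1}D$. Conjugating by $D$ gives $\Gamma^{-1}\preceq cD\Gamma^{-1}D$, and symmetrizing under $D$ yields the required bound.

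For vanishing, the direction $A\circ B=0$ gives $\cF_c=0$: take $\Gamma=\mathrm{diag}$ weights $t$ on the support of $\mathrm{diag}(A)$ and $1$ elsewhere. Diagonal $\Gamma$ is always feasible since $\Gamma\circ S_i=\Gamma$, and PSD matrices with disjoint diagonal supports make the objective tend to $0$. For the converse, use the max form with $C=A$ and $X_i=0$, which gives $\cF_c\ge\cF(A,B)$. That is not enough on its own, so instead use $C=A\circ I$-type choices: show that some feasible $C$ with $\cF(C,B)>0$ exists whenever $A(x,x)B(x,x)>0$ for some $x$, by taking $C=\epsilon\kbr{x}$ with $A-C$ written through $X_i$ via small multiples.
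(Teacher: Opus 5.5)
Most of your plan is sound. Scalability, the two monotonicity items, joint concavity, data processing, tensorization, Hadamard restriction, symmetry, and the forward half of vanishing either match the paper or differ harmlessly. Your route to (iii) is a valid alternative to the paper's. Ando's inequality $\Phi(X\#_\lambda Y)\preceq\Phi(X)\#_\lambda\Phi(Y)$, applied to the positive maps $X\mapsto X\circ P_i$ and $X\mapsto\langle X,A\rangle$, together with $(X\#_\lambda Y)^{-1}=X^{-1}\#_\lambda Y^{-1}$, handles every $\lambda$ at once. The paper instead builds $\Gamma_0^{1/2}U\Gamma_1^{1/2}$ by hand from Schur complements and then extends from the midpoint by dyadic induction and monotonicity in $c$.

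The genuine gap is item (ix): you have no argument that yields the stated constant.
\begin{itemize}
\item Your inequality $A-C\succeq(c-1)\sum_iX_i\circ I$ is false in general; only its trace version holds. For a counterexample take $|\Sigma|=2$, $X_1=J$, $P_1=I$, where $cJ-I\not\succeq(c-1)I$.
\item Completing the max-form idea correctly, via $A+\sum_iX_i\circ\overline{P_i}\succeq C$ and subadditivity of $\cF$, gives only $\cF(A,B)\ge\cF_c-\sqrt{(1-\cF_c^2)/(c-1)}$ for unit-trace $A,B$. This is weaker than $(\sqrt c\,\cF_c-1)/(\sqrt c-1)$ when $\cF_c$ is near $1$: for $c=4$ and $\cF_c=0.99$ it gives about $0.91$ instead of $0.98$. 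The exact form is what \Cref{lem:filtered-delta2-upperbound} uses.
\item Convexity in $\log c$ gives only multiplicative chord bounds, not this affine one.
\item Your fallback $\Gamma^\ast+sI$ is feasible only for $s\ge\|\Gamma^\ast\|/(c-1)$, which is uncontrolled. Moreover, an Alberti optimizer need not exist.
\end{itemize}
The missing idea is a two-sided eigenvalue clip of an arbitrary $\Gamma\succ0$. Set $h(x)=\min\{c^{-1/2}+(1-c^{-1/2})x,\sqrt c\}$. Then $c^{-1/2}I\preceq h(\Gamma)\preceq\sqrt c\,I$, so $h(\Gamma)$ is feasible. Also $h(x)\le c^{-1/2}+(1-c^{-1/2})x$ and $1/h(x)\le c^{-1/2}+(1-c^{-1/2})/x$. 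The additive form therefore gives $\cF_c\le c^{-1/2}+(1-c^{-1/2})\cF$.

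Two further steps do not work as written.

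In (vi), compactness of the feasible $(C,X_i)$ lets you pass to a limit of \emph{maximizers} for $(A_k,B_k,c_k)$, and that limit is feasible at $(A,B,c)$. This proves $\limsup_k\cF_{c_k}(A_k,B_k)\le\cF_c(A,B)$, which is upper semicontinuity again, not lower. Lower semicontinuity of a maximum over a moving feasible set needs the opposite: feasible points at $(A_k,c_k)$ approaching an optimal $(C,X_i)$ at $(A,c)$. The natural candidate $C_k=C+(A_k-A)-(c_k-c)\sum_iX_i$ satisfies the linear constraint but need not be PSD. The paper absorbs its negative part $C_k^-\to0$ using $\cF_c(X+D,Y)\le\cF_c(X,Y)+\sqrt{\Tr[D]\Tr[Y]}$, proved by replacing a feasible $\Gamma$ with $(\Gamma^{-1}+\eps I)^{-1}$. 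Your min-form upper-semicontinuity argument also ignores that the feasible set shrinks as $c$ decreases. The paper handles this with $\Gamma+\eps I$, which satisfies the constraint with slack $(c-1)\eps I$.

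In (xii), writing $A-\eps\kbr{x}$ as $\sum_iX_i\circ(cJ-S_i)$ with $X_i\succeq0$ is the whole difficulty, and ``small multiples'' does not achieve it. The paper telescopes $C_i=C_{i-1}\circ(J+S_i)/(c+1)$, so that $C_{i-1}-C_i=\frac{C_{i-1}}{c+1}\circ(cJ-S_i)$ with every $C_i\succeq0$. Since the Hadamard product of the $J+S_i$ equals $2^mI$ on $\Sigma$, this ends at $C_m=(2/(c+1))^m\,A\circ I$. Hence $\cF_c\ge\cF(C_m,B)\ge\sqrt{\Tr[C_mB]}>0$.
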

For the sake of brevity, we postpone the proofs to \Cref{app:filtered-fidelity-properties}.

\subsection{\texorpdfstring{Multiplicative variant of $\gamma_2$ norm}{Multiplicative variant of the gamma 2 norm}}\label{sec:multiplicative-gamma2} Next, we
define a multiplicative variant of the $\bar{\gamma}_2$ measure as follows.

\begin{definition}\label{def:filtered-delta2} Let $\Sigma\subseteq\{0,1\}^{n}$, and let
    $\{S_i\}_{i\in[n]}$ be sign filters on $\Sigma$. The filtered $\delta_2$ measure of a pair of
    PSD matrices $A,B\in\Pos(\Sigma)$ with respect to a distribution $\mu:\Sigma\to[0,1]$ is defined
    as\footnote{If the filtered fidelity vanishes for some $c>1$, we define the value of the
    function to be $+\infty$.}
    \begin{align*}
        \bar{\delta}_2^{\mu}(A,B\mid\{S_i\})
        :=\sup_{c>1}
        \frac{-2\log\cF_c\bigl(
            A\circ\kbr{\mu},B\circ\kbr{\mu}\mid\{S_i\}\bigr)}
        {\log c}.
    \end{align*}
\end{definition}
With the previous intuition for $\cF_c$, we can think of $ \bar{\delta}_2^{\mu}$ as measuring the
log-over-log ratio of distinguishability versus per-query cost. We first show how
$\bar{\gamma}_2^{\mu}$ can be recovered from $\bar{\delta}_2^{\mu}$ in
\Cref{prop:delta2-gamma2-connection}, and then establish a few useful properties of
$\bar{\delta}_2^{\mu}$ in \Cref{prop:avgdelta2-properties}. Finally, we show that
it provides upper and lower bounds on quantum query complexity in
\Cref{lem:filtered-delta2-upperbound} and \Cref{lem:filtered-delta2-lowerbound}, respectively.

\subsubsection{\texorpdfstring{Analogy with $\bar{\gamma}_2^{\mu}$}{Analogy with the distributional gamma 2 measure}}
We begin by mentioning that
$\bar{\gamma}_2^\mu$ can be recovered from $\bar{\delta}_2^\mu$ as follows.
\begin{proposition}\label{prop:delta2-gamma2-connection} Let $\Sigma\subseteq\{0,1\}^{n}$, let
    $\mu:\Sigma\to[0,1]$ be a distribution, and let $\{S_i\}_{i\in[n]}$ be sign filters on $\Sigma$.
    Then, for a pair of Gram matrices $\Phi$ and $\Psi$ with $\Phi \circ I = I$ and $\Psi \circ I =
    I$, we have the following characterization:
    \begin{align*}
        \frac12\bar{\gamma}_2^{\mu}(\Phi - \Psi|\{\Delta_i\}) =
        \sup_{c>1}
        \frac{1 - \cF_c^2(\Phi \circ \kbr{\mu}, \Psi \circ \kbr{\mu}|\{S_i\})}{c-1}
         = \bar{\delta}_2^{\mu}(\Phi, \Psi|\{S_i\}).
    \end{align*}
\end{proposition}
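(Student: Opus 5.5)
The plan is to reduce all three quantities to a single number, the right-derivative of $c\mapsto-\log\cF_c^2$ at $c=1$, and then to identify that derivative with $\tfrac12\bar{\gamma}_2^\mu(\Phi-\Psi\mid\{\Delta_i\})$ using the two SDP formulations already available. Throughout, write
\[
A:=\Phi\circ\kbr{\mu},\qquad B:=\Psi\circ\kbr{\mu},\qquad h(s):=-2\log\cF_{e^s}(A,B\mid\{S_i\}),
\]
and note $\Tr A=\Tr B=1$.

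\textbf{Step 1: a monotone ratio.} By log-log convexity (\Crefitem{item:filtered-fidelity-interpolation}), $h$ is concave on $(0,\infty)$. By monotonicity in $c$ (\Crefitem{item:filtered-fidelity-cost-monotonicity}), $h\geq0$. I would also show $h(s)\to0$ as $s\to0^+$, i.e.\ $\cF_c\to1$ as $c\to1$. For this I would take $\Gamma=I$ in the primal: since $S_i\circ I=I$, it is feasible for every $c>1$, and it gives $\cF_c\le1$. For the matching lower bound I would use the fidelity lower bound of \Crefitem{item:filtered-fidelity-lowerbound} together with continuity (\Crefitem{item:filtered-fidelity-continuity}). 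Concavity with $h(0^+)=0$ then makes $h(s)/s$ nonincreasing in $s$. Hence
\[
\bar{\delta}_2^\mu(\Phi,\Psi\mid\{S_i\})=\lim_{c\to1^+}\frac{-2\log\cF_c}{\log c}=:D .
\]

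\textbf{Step 2: squeezing the middle quantity.} Using $-\log x^2\geq1-x^2$ and $\log c\leq c-1$, each term of the middle supremum satisfies $(1-\cF_c^2)/(c-1)\le-2\log\cF_c/\log c$, so the middle supremum is at most $D$. As $c\to1^+$ both ratios tend to the same limit, because $\cF_c\to1$ and $\log c\sim c-1$. Therefore the middle supremum is at least $D$ as well, and both the middle quantity and $\bar{\delta}_2^\mu$ equal $D$.

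\textbf{Step 3: $D=\tfrac12\bar{\gamma}_2^\mu(\Phi-\Psi\mid\{\Delta_i\})$.} Write $\gamma:=\bar{\gamma}_2^\mu(\Phi-\Psi\mid\{\Delta_i\})$. Since $S_i=J-2\Delta_i$ and $S_i$ is real, the dual constraint of \Cref{prop:equiv-filtered-fidelity} reads
\[
A-C=\sum_i X_i\circ\bigl((c-1)J+2\Delta_i\bigr),\qquad X_i\succeq0 .
\]
\begin{itemize}
\item \emph{Upper bound on $D$.} Take the PSD decomposition $\Phi-\Psi=\sum_i\Delta_i\circ X_i$ with $\sum_i\Tr(X_i\circ\kbr{\mu})=\gamma$, built exactly as in the proof of \Cref{prop:distalg-from-avggamma2}. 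Plug $X_i':=\tfrac12X_i\circ\kbr{\mu}$ into the dual, giving $C=B-\tfrac{c-1}{2}K$ with $K:=\sum_iX_i\circ\kbr{\mu}$ and $\Tr K=\gamma$. Then $\cF(C,B)^2\approx1-(c-1)\gamma/2$ to first order, so $D\le\gamma/2$.
\item \emph{Lower bound on $D$.} Take an optimal Hermitian $\Gamma_0$ for $\bar{\gamma}_2^\mu$ with $\|\Gamma_0\circ\Delta_i\|\le1$. Use $\Gamma=I+t\,\Gamma_0$ in the primal of $\cF_c$ with $t\asymp(c-1)$. Checking feasibility amounts to $(c-1)\Gamma+2\Gamma\circ\Delta_i\succeq0$, which holds to first order. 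Expanding $\langle\Gamma,A\rangle\langle\Gamma^{-1},B\rangle$ to first order in $t$ then gives $1-t\langle\Gamma_0,(\Phi-\Psi)\circ\kbr{\mu}\rangle+O(t^2)$. Optimizing $t$ yields $D\ge\gamma/2$.
\end{itemize}

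The main obstacle is the upper-bound direction: $C=B-\tfrac{c-1}{2}K$ need not be PSD. The first fix I would try is to enlarge $B$ via monotonicity (\Crefitem{item:filtered-fidelity-input-monotonicity}), or equivalently to mix $C$ with a small multiple of $B$. Failing that, I would rescale so that $C=(1-\lambda)B+\text{(PSD)}$, in the spirit of the $r>1$ catalyst condition in \Cref{thm:catalyst-state-conversion}. In either case I must verify that the resulting error is $o(c-1)$, so that the first-order coefficient $\gamma/2$ survives in the limit.
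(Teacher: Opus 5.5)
Your lower-bound half is sound and is essentially the paper's argument. Your elementary inequalities show that the middle supremum is at most $\bar{\delta}_2^\mu(\Phi,\Psi\mid\{S_i\})$. The perturbation $\Gamma=I-t\Gamma_0$ is exactly feasible for $c_t=1+2t/(1-t\|\Gamma_0\|)$, and it gives the middle supremum $\geq\frac12\bar{\gamma}_2^\mu(\Phi-\Psi\mid\{\Delta_i\})$. Note the sign: $I+t\Gamma_0$ gives $1+t\langle\Gamma_0,(\Phi-\Psi)\circ\kbr{\mu}\rangle$ to first order, so you need $I-t\Gamma_0$.

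The gap is the reverse inequality $\bar{\delta}_2^\mu(\Phi,\Psi\mid\{S_i\})\leq\frac12\bar{\gamma}_2^\mu(\Phi-\Psi\mid\{\Delta_i\})$, which your plan never establishes.

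\emph{Step 1.} The claim $\cF_c\to1$ as $c\to1^+$ does not follow from what you cite. Item (ix) of \Cref{prop:filtered-fidelity-properties} is an \emph{upper} bound on $\cF_c$, and continuity is only stated on $(1,\infty)$. This limit is not a soft fact. Diagonal $\Gamma$ are feasible for every $c$, so if $A$ and $B$ had different diagonals one would get $\cF_c\leq\sum_x\sqrt{A(x,x)B(x,x)}<1$ for all $c$, and your concavity argument would then give $\bar{\delta}_2^\mu=+\infty$. So Steps 1--2 already presuppose the hard direction.

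\emph{Step 3.} The candidate $C=B-\frac{c-1}{2}K$ is in general not PSD for any $c>1$. The reason is that $B=\Psi\circ\kbr{\mu}$ can have rank one (e.g.\ $\Psi=J$ or $\Psi=S_g$) while $K$ has range outside $\supp(B)$. None of your proposed fixes repairs this:
\begin{itemize}
\item Monotonicity in $B$ points the wrong way: $B\preceq B'$ only gives $\cF_c(A,B)\leq\cF_c(A,B')$.
\item Perturbing $B$ to $B+D$, e.g.\ via the estimate $\cF_c(X+D,Y)\leq\cF_c(X,Y)+\sqrt{\Tr D\,\Tr Y}$ from the continuity proof, needs $\Tr D$ at least of order $c-1$ to restore positivity. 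That costs order $\sqrt{c-1}$, which swamps the first-order term.
\item Rescaling your $X_i'$ by $\theta$ gives $C=(1-\theta)A+\theta B-\theta\frac{c-1}{2}K$, which has the same support obstruction.
\item Writing $C=(1-\lambda)B+P$ with $P\succeq0$ forces $\lambda B\succeq\frac{c-1}{2}K$. Even when this holds, taking traces gives $\lambda\geq\frac{c-1}{2}\Tr K$, so the bound $\cF(C,B)\geq1-\lambda$ loses a factor of two in the first-order coefficient.
\end{itemize}
The domination condition in \Cref{thm:catalyst-state-conversion} is a hypothesis there, not something you can produce here.

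The missing idea is to stay on the primal side and work with $L:=\log\Gamma$ for an arbitrary feasible $\Gamma\succ0$.
\begin{itemize}
\item Since $\Gamma\circ S_i=D_i\Gamma D_i$ with $D_i=\mathrm{Diag}((-1)^{x_i})$, conjugating the constraint by $D_i$ gives $c^{-1}\Gamma\preceq D_i\Gamma D_i\preceq c\Gamma$.
\item Operator monotonicity of $\log$ then gives $\|L\circ\Delta_i\|=\frac12\|L-L\circ S_i\|\leq\frac12\log c$.
\item Concavity of $\log$ on the spectrum of $\Gamma$, together with $\Tr A=\Tr B=1$, gives $\log\bigl(\langle\Gamma,A\rangle\langle\Gamma^{-1},B\rangle\bigr)\geq\langle L,A-B\rangle\geq-\frac{\log c}{2}\,\bar{\gamma}_2^\mu(\Phi-\Psi\mid\{\Delta_i\})$.
\item If every $L\circ\Delta_i$ vanishes, then $L$ is diagonal and $\langle L,A-B\rangle=0$, because $A$ and $B$ have the same diagonal.
\end{itemize}
Taking the infimum over $\Gamma$ yields $-2\log\cF_c\leq\frac12\bar{\gamma}_2^\mu(\Phi-\Psi\mid\{\Delta_i\})\log c$ for \emph{every} $c>1$. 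This is the paper's route. It gives $\bar{\delta}_2^\mu\leq\frac12\bar{\gamma}_2^\mu$ directly, with no limit argument, and as a by-product it proves $\cF_c\to1$.
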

For the sake of continuity, we defer the proof to
\Cref{app:delta2-gamma2-connection}.

\subsubsection{Basic properties}

\begin{proposition}\label{prop:avgdelta2-properties} Let $\Sigma\subseteq\{0,1\}^{n}$, let
    $A,B\in\Pos(\Sigma)$, and let $\{S_i\}_{i\in[n]}$ be sign filters on $\Sigma$. For any
    distribution $\mu:\Sigma\to[0,1]$, the following hold.
    \begin{enumerate}[i.]
        \item \textbf{Symmetry.} $\bar{\delta}_2^\mu(A,B\mid\{S_i\})
        =\bar{\delta}_2^\mu(B,A\mid\{S_i\})$.
        \item \textbf{Scalability.} For $a,b>0$, $\bar{\delta}_2^\mu(aA,bB\mid\{S_i\})
        =\bar{\delta}_2^\mu(A,abB\mid\{S_i\}) =\bar{\delta}_2^\mu(abA,B\mid\{S_i\})$.
        \item \textbf{Tensorization.} Let $k\in\bN$ with $k\geq1$. Then,
        \begin{align*}
            \bar{\delta}_2^{\mu^{\otimes k}}(A^{\otimes k},B^{\otimes k}\mid\{S_{ij}\}_{i\in[k],\,j\in[n]})
            =k\,\bar{\delta}_2^\mu(A,B\mid\{S_i\}),
        \end{align*}
        where $S_{ij}:=J^{\otimes(k-1)}\otimes_i S_j$.
        \item \textbf{Support restriction.}\label{item:avgdelta2-support} Let $S:=\supp(\mu)$. Then,
        \begin{align*}
            \bar{\delta}_2^\mu(A,B\mid\{S_i\})
            &=\bar{\delta}_2^\mu(A\circ J_S,B\circ J_S\mid\{S_i\})
            =\bar{\delta}_2^{\mu|_S}(A|_S,B|_S\mid\{S_i|_S\}),
        \end{align*}
        where $X|_S$ denotes the principal submatrix of $X$ indexed by $S$.
        \item \textbf{Nonnegativity and vanishing.}\label{item:avgdelta2-nonnegative}
        $\bar{\delta}_2^\mu(A,B\mid\{S_i\})\geq0$, with equality if and only if
        $\cF(A\circ\kbr{\mu},B\circ\kbr{\mu})\geq1$.
        \item \textbf{Lower semicontinuity.}\label{item:avgdelta2-lower-semicontinuity} For fixed
        $\mu$, the map $(A,B)\mapsto\bar{\delta}_2^\mu(A,B\mid\{S_i\})$ is lower semicontinuous.
    \end{enumerate}
\end{proposition}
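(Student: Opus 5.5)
Each of the six properties of \Cref{prop:avgdelta2-properties} will be derived from the matching property of the filtered fidelity in \Cref{prop:filtered-fidelity-properties}, applied pointwise in $c$ and then passed through the supremum in \Cref{def:filtered-delta2}. Throughout, write $A_\mu:=A\circ\kbr{\mu}$ and $B_\mu:=B\circ\kbr{\mu}$.

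\emph{Symmetry} is immediate from \Crefitem{item:filtered-fidelity-symmetry}. \emph{Scalability} follows from \Crefitem{item:filtered-fidelity-scalability}: we have $\cF_c(aA_\mu,bB_\mu)=\sqrt{ab}\,\cF_c(A_\mu,B_\mu)$, and the same value arises for the pairs $(A,abB)$ and $(abA,B)$. So the three suprema agree for every $c$.

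\emph{Support restriction.} First note that $A\circ J_S\circ\kbr{\mu}=A_\mu$ because $\kbr{\mu}$ vanishes off $S\times S$. This gives the first equality directly. The second follows from \Crefitem{item:filtered-fidelity-hadamard-restriction}, using $(A\circ\kbr{\mu})|_S=A|_S\circ\kbr{\mu|_S}$.

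\emph{Tensorization.} Since $(A^{\otimes k})\circ\kbr{\mu}^{\otimes k}=A_\mu^{\otimes k}$, we can iterate \Crefitem{item:filtered-fidelity-tensorization} with filter families $\{S_{ij}\}$. This is valid because the sign filters $S_{ij}=J^{\otimes(k-1)}\otimes_i S_j$ are exactly the tensor-padded filters required there. The result is $\cF_c(A_\mu^{\otimes k},B_\mu^{\otimes k}\mid\{S_{ij}\})=\cF_c(A_\mu,B_\mu\mid\{S_j\})^k$. Taking $-2\log$, dividing by $\log c$, and taking the supremum gives the factor $k$. The convention for vanishing fidelity is consistent, since $\cF_c^k=0$ iff $\cF_c=0$.

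\emph{Nonnegativity and vanishing.} By \Crefitem{item:filtered-fidelity-cost-monotonicity}, $\cF_c(A_\mu,B_\mu)\geq\cF(A_\mu,B_\mu)$ and $\cF_c$ is nonincreasing in $c$ with limit $\cF(A_\mu,B_\mu)$ as $c\to\infty$.
\begin{itemize}
\item If $\cF(A_\mu,B_\mu)\geq 1$, then every $\cF_c\geq1$. Each ratio is then $\leq0$, and the ratios tend to $0$ as $c\to\infty$ because $\log\cF_c\to\log\cF$ is finite while $\log c\to\infty$. Hence the supremum equals $0$.
\item If $\cF(A_\mu,B_\mu)<1$, then for large $c$ we have $\cF_c<1$. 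That ratio is strictly positive, so $\bar\delta_2^\mu>0$.
\end{itemize}
Nonnegativity in general comes from the same limit $c\to\infty$: since $\log\cF_c$ stays bounded, the ratios converge to $0$ along $c\to\infty$, so the supremum is at least $0$. The degenerate case $\cF_c=0$ gives $+\infty$ by convention and needs no further argument.

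\emph{Lower semicontinuity.} For each fixed $c>1$, the map $(A,B)\mapsto -2\log\cF_c(A_\mu,B_\mu)/\log c$ is continuous as an extended-valued map into $(-\infty,+\infty]$. This holds by \Crefitem{item:filtered-fidelity-continuity}, together with continuity of $-\log$ on $[0,\infty)$ into $(-\infty,\infty]$. A pointwise supremum of such maps is lower semicontinuous, and this is exactly the preliminaries' semicontinuity fact.

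\emph{Main obstacle.} The delicate points are the boundary conventions: the $+\infty$ case when some $\cF_c$ vanishes, and checking that in the tensorization step the filter family matches the form $\{P_i\otimes J\}\cup\{J\otimes Q_j\}$ under the $k$-fold iteration. I would handle the latter by induction on $k$, splitting $\Sigma^k=\Sigma^{k-1}\times\Sigma$.
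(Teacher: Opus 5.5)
Your proposal is correct and takes the paper's route: each item comes from the matching property of $\cF_c$ in \Cref{prop:filtered-fidelity-properties} at fixed $c$, passed through the supremum over $c>1$; the only difference is that for nonnegativity the paper uses the bound $\cF_c\le\sqrt{\Tr[A\circ\kbr{\mu}]\Tr[B\circ\kbr{\mu}]}$, whereas you use the limit $c\to\infty$. One small slip: your remark that $\log\cF_c$ stays bounded as $c\to\infty$ fails when $\cF(A\circ\kbr{\mu},B\circ\kbr{\mu})=0$ but no $\cF_c$ vanishes; this does not hurt the proof, because your case split already handles it, since the ratios are then positive for all large $c$.
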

\begin{proof}
    Properties (i)--(ii) follow directly from the corresponding properties of $\cF_c$ in
    \Crefitem{item:filtered-fidelity-symmetry,item:filtered-fidelity-scalability} by taking the
    supremum over $c>1$. We prove the remaining items as follows.
    \begin{itemize}
    \item[(iii)] Exact tensorization of $\cF_c$ (\Crefitem{item:filtered-fidelity-tensorization})
    gives
    \begin{align*}
        \bar{\delta}_2^{\mu^{\otimes k}}(A^{\otimes k},B^{\otimes k}\mid\{S_{ij}\})
        &=\sup_{c>1}\frac{-2\log\cF_c\bigl(A^{\otimes k}\circ\kbr{\mu^{\otimes k}},B^{\otimes k}\circ\kbr{\mu^{\otimes k}}\mid\{S_{ij}\}\bigr)}{\log c}\\
        &=\sup_{c>1}\frac{-2k\log\cF_c(A\circ\kbr{\mu},B\circ\kbr{\mu}\mid\{S_i\})}{\log c}\\
        &=k\sup_{c>1}\frac{-2\log\cF_c(A\circ\kbr{\mu},B\circ\kbr{\mu}\mid\{S_i\})}{\log c}
        =k\,\bar{\delta}_2^\mu(A,B\mid\{S_i\}).
    \end{align*}
    \item[(iv)] For $M\in\{A,B\}$, we have $(M\circ J_S)\circ\kbr{\mu}=M\circ\kbr{\mu}$, so
    $\bar{\delta}_2^\mu(A,B\mid\{S_i\})=\bar{\delta}_2^\mu(A\circ J_S,B\circ J_S\mid\{S_i\})$.
    Moreover, by \Crefitem{item:filtered-fidelity-hadamard-restriction}, for every $c>1$,
    \begin{align*}
        &\cF_c((A\circ J_S)\circ\kbr{\mu},(B\circ J_S)\circ\kbr{\mu}\mid\{S_i\})\\
        &\qquad=\cF_c(A|_S\circ\kbr{\mu|_S},B|_S\circ\kbr{\mu|_S}\mid\{S_i|_S\}).
    \end{align*}
    Taking the supremum over $c>1$ in the definition of $\bar{\delta}_2^\mu$ proves the second
    equality.
    \item[(v)] If the filtered fidelity vanishes for some $c>1$, then $\bar{\delta}_2^\mu=+\infty$
    by definition. Otherwise, let
    \begin{align*}
        M:=-\log\bigl(\Tr[A\circ\kbr{\mu}]\Tr[B\circ\kbr{\mu}]\bigr).
    \end{align*}
    By \Crefitem{item:filtered-fidelity-cost-monotonicity}, for every $c>1$ we have
    \begin{align*}
        -2\log\cF_c(A\circ\kbr{\mu},B\circ\kbr{\mu}\mid\{S_i\})\geq M.
    \end{align*}
    Therefore,
    \begin{align*}
        \bar{\delta}_2^\mu(A,B\mid\{S_i\})
        \geq\sup_{c>1}\frac{M}{\log c}
        =
        \begin{cases}
            +\infty,& M>0,\\
            0,& M\leq0.
        \end{cases}
    \end{align*}
    Moreover, $\bar{\delta}_2^\mu(A,B\mid\{S_i\})=0$ if and only if
    $\cF_c(A\circ\kbr{\mu},B\circ\kbr{\mu}\mid\{S_i\})\geq1$ for every $c>1$. By
    \Crefitem{item:filtered-fidelity-cost-monotonicity}, taking $c\to\infty$ implies
    $\cF(A\circ\kbr{\mu},B\circ\kbr{\mu})\geq1$. Conversely, if
    $\cF(A\circ\kbr{\mu},B\circ\kbr{\mu})\geq1$, then, for every $c>1$,
    \[
        \cF_c(A\circ\kbr{\mu},B\circ\kbr{\mu}\mid\{S_i\})
        \geq\cF(A\circ\kbr{\mu},B\circ\kbr{\mu})\geq1.
    \]
    So, $\bar{\delta}_2^\mu(A,B\mid\{S_i\})=0$.
    \item[(vi)] Let $A_k\to A$ and $B_k\to B$ be given. For every fixed $c>1$, continuity of $\cF_c$
    gives
    \begin{align*}
        \liminf_k\bar{\delta}_2^\mu(A_k,B_k\mid\{S_i\})
        &\geq \liminf_k
        \frac{-2\log\cF_c(A_k\circ\kbr{\mu},B_k\circ\kbr{\mu}\mid\{S_i\})}{\log c}\\
        &=\frac{-2\log\cF_c(A\circ\kbr{\mu},B\circ\kbr{\mu}\mid\{S_i\})}{\log c}.
    \end{align*}
    Taking the supremum over $c>1$ on the right gives
    \begin{align*}
        \liminf_k\bar{\delta}_2^\mu(A_k,B_k\mid\{S_i\})
        \geq\bar{\delta}_2^\mu(A,B\mid\{S_i\}),
    \end{align*}
    proving lower semicontinuity. \qedhere
    \end{itemize}
\end{proof}

\subsubsection{Lower bounding query complexity}
Similar to \cite{Spa08,LR12}, we track the multiplicative change of a progress function to obtain the following lower bound.

\begin{lemma}\label{lem:filtered-delta2-lowerbound} Let $\Sigma\subset \{0,1\}^{n}$,
    $\mu:\Sigma\to[0,1]$ be a distribution, $W\in\Pos(\Sigma)$ be a target witness satisfying
    $\Tr[W\circ\kbr{\mu}]=1$, and $\alpha=\alpha(n)\in(0,1]$ be a success parameter. Then, any Gram
    matrix $\Psi\in\Pos(\Sigma)$ with $\Psi\circ I=I$ and
    $\cF(\Psi\circ\kbr{\mu},W\circ\kbr{\mu})\geq\sqrt{\alpha}$ has query complexity at least
   \begin{align*}
    \mathsf{Q}(\Psi) \geq \bar{\delta}_2^{\mu}(J,W/\alpha\mid\{S_i\}).
   \end{align*}
\end{lemma}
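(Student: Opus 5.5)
We argue by a multiplicative hybrid argument over the queries, using the Alberti-type definition of $\cF_c$ directly. Fix $c>1$ and an arbitrary feasible $\Gamma\succ0$ with $\Gamma\circ S_i\preceq c\Gamma$ for all $i$; we may assume $\cF_c(J\circ\kbr{\mu},W\circ\kbr{\mu}/\alpha\mid\{S_i\})>0$, since otherwise the conclusion is handled separately. Let $T=\mathsf{Q}(\Psi)$ and let $\Psi^t$ be the Gram matrices after $t$ queries as in \Cref{lem:one-query-decomposition}, so $\Psi^0=J$ and $\Psi^T=\Psi$. The goal is the inequality
\[
\cF_c(J\circ\kbr{\mu},W\circ\kbr{\mu}\mid\{S_i\})\;\geq\;c^{-T/2}\,\cF(\Psi\circ\kbr{\mu},W\circ\kbr{\mu})\;\geq\;c^{-T/2}\sqrt{\alpha}.
\]
Combined with scalability (\Crefitem{item:filtered-fidelity-scalability}), this gives $\cF_c(J\circ\kbr{\mu},(W/\alpha)\circ\kbr{\mu}\mid\{S_i\})\geq c^{-T/2}$, i.e.\ $-2\log\cF_c/\log c\leq T$. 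Taking the supremum over $c>1$ then yields the statement.

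\textbf{Progress step.} We track $p_t:=\langle\Gamma,\Psi^t\circ\kbr{\mu}\rangle$. By \Cref{lem:one-query-decomposition}, $\Psi^t=\sum_{i=0}^n Z_i^{t}\circ S_i$ with $Z_i^t\succeq0$, $\sum_i Z_i^t=\Psi^{t-1}$, and $S_0=J$. Since $S_i$ is real, we get
\[
p_t=\sum_i\langle\Gamma\circ S_i,Z_i^t\circ\kbr{\mu}\rangle\leq c\sum_i\langle\Gamma,Z_i^t\circ\kbr{\mu}\rangle=c\,p_{t-1}.
\]
This uses $\Gamma\circ S_i\preceq c\Gamma$, positivity of $Z_i^t\circ\kbr{\mu}$, and, for $i=0$, the bound $\Gamma\preceq c\Gamma$. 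Hence $\langle\Gamma,\Psi\circ\kbr{\mu}\rangle\leq c^T\langle\Gamma,J\circ\kbr{\mu}\rangle$.

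\textbf{Combining with Alberti.} By \Cref{thm:alberti},
\[
\cF(\Psi\circ\kbr{\mu},W\circ\kbr{\mu})^2\leq\langle\Gamma,\Psi\circ\kbr{\mu}\rangle\langle\Gamma^{-1},W\circ\kbr{\mu}\rangle\leq c^T\langle\Gamma,J\circ\kbr{\mu}\rangle\langle\Gamma^{-1},W\circ\kbr{\mu}\rangle.
\]
Taking the infimum over feasible $\Gamma$ gives $\alpha\leq c^T\cF_c(J\circ\kbr{\mu},W\circ\kbr{\mu}\mid\{S_i\})^2$, which is exactly the goal inequality.

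\textbf{Main obstacle.} The delicate step is the progress inequality. We must check the index-$0$ term, where no sign is applied. We must also confirm that the Hadamard and inner-product manipulation is legitimate: the conjugation needed for $\langle\Gamma\circ S_i,\cdot\rangle$ is harmless because $S_i$ is real and symmetric, and PSD-ness of $c\Gamma-\Gamma\circ S_i$ is used against the PSD matrix $Z_i^t\circ\kbr{\mu}$. Support issues, such as $\mu$ not being full support, are harmless because $\Gamma\succ0$ is defined on all of $\Sigma$. Finally, the degenerate case where $\cF_c$ vanishes must be excluded, which the fidelity lower bound $\sqrt{\alpha}>0$ does.
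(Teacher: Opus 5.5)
Your proposal is correct and matches the paper's argument step for step. You write $\Psi^t=\sum_{i=0}^n S_i\circ Z_i^t$ with $S_0=J$, prove the multiplicative progress bound $\langle\Gamma,\Psi^t\circ\kbr{\mu}\rangle\le c\,\langle\Gamma,\Psi^{t-1}\circ\kbr{\mu}\rangle$ for feasible $\Gamma$, combine it with Alberti's characterization, then take the infimum over $\Gamma$, apply scalability, and take the supremum over $c>1$. The one small difference is your preliminary case split on $\cF_c$ vanishing, which is unnecessary: the derived bound $\alpha\le c^T\cF_c^2$ already forces $\cF_c>0$, as you note at the end.
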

\begin{proof}
    Let $Q$ be a quantum algorithm attaining $T=\mathsf{Q}(\Psi)$. Let $\Psi^t$ be the Gram matrix
    of the algorithm after the $t$-th query, as in \Cref{lem:one-query-decomposition}, so $\Psi^0=J$
    and $\Psi^T=\Psi$. Then, for every $t\in[T]$, there exists a PSD decomposition
    $\{Z_i^t\}_{i=0}^n$ such that
    \begin{align*}
        \Psi^{t-1}=\sum_{i=0}^n Z_i^t,
        \qquad
        \Psi^{t-1}=\Psi^t+2\sum_{i=1}^n Z_i^t\circ\Delta_i.
    \end{align*}
    Letting $S_0=J$ and using $S_i=J-2\Delta_i$, we can therefore write $\Psi^t=\sum_{i=0}^n
    S_i\circ Z_i^t$.
\par
Fix $c>1$, and let $\Gamma\succ0$ be an arbitrary feasible matrix for $\cF_c$; that is, $\Gamma\circ
S_i\preceq c\Gamma$ for every $i\in[n]$. Defining $d_t:=\langle\Gamma,\Psi^t\circ\kbr{\mu}\rangle$
for $t\in[0:T]$, we have
\begin{align*}
    d_t &= \langle \Gamma, \Psi^{t} \circ \kbr{\mu}\rangle
    = \sum_{i=0}^n \langle \Gamma, S_i\circ Z_i^t\circ\kbr{\mu}\rangle
    = \sum_{i=0}^n \langle \Gamma\circ S_i,Z_i^t\circ\kbr{\mu}\rangle \\
    &\leq c\sum_{i=0}^n\langle\Gamma,Z_i^t\circ\kbr{\mu}\rangle
    = c d_{t-1},
\end{align*}
where in the last line we used $\Gamma\circ S_0=\Gamma\preceq c\Gamma$ and
$Z_i^t\circ\kbr{\mu}\succeq0$. By repeated iteration, we therefore get
\begin{align*}
    \langle\Gamma,\Psi\circ\kbr{\mu}\rangle
    =d_T\leq c^T d_0
    =c^T\langle\Gamma,\kbr{\mu}\rangle.
\end{align*}
\par
Next, by the fidelity assumption and Alberti's characterization (\Cref{thm:alberti}), we have
\begin{align*}
    \alpha &\leq \cF^2(\Psi\circ\kbr{\mu},W\circ\kbr{\mu})
    \leq \langle\Gamma,\Psi\circ\kbr{\mu}\rangle\langle\Gamma^{-1},W\circ\kbr{\mu}\rangle \leq c^T\langle\Gamma,\kbr{\mu}\rangle\langle\Gamma^{-1},W\circ\kbr{\mu}\rangle.
\end{align*}
Since $\Gamma$ was an arbitrary feasible matrix, we can take the infimum on the right-hand side to
get
$$\alpha\leq c^T\cF_c^2(\kbr{\mu},W\circ\kbr{\mu}\mid\{S_i\}).$$ By
\Crefitem{item:filtered-fidelity-scalability}, we can rescale both sides by $1/\alpha$ to get
\begin{align*}
    c^T\cF_c^2(\kbr{\mu},(W/\alpha)\circ\kbr{\mu}\mid\{S_i\})\geq1,
\end{align*}
which implies
\begin{align*}
    T\geq\frac{-2\log\cF_c(\kbr{\mu},(W/\alpha)\circ\kbr{\mu}\mid\{S_i\})}{\log c}.
\end{align*}
Finally, since $c>1$ was arbitrary, we can take the supremum over $c>1$ to get
\begin{align*}
    \mathsf{Q}(\Psi)=T
    &\geq \sup_{c>1}\frac{-2\log\cF_c(\kbr{\mu},(W/\alpha)\circ\kbr{\mu}\mid\{S_i\})}{\log c}\\
    &=\bar{\delta}_2^\mu(J,W/\alpha\mid\{S_i\}). \qedhere
\end{align*}
\end{proof}

\subsubsection{Upper bounding query complexity}
Using the extension of the catalyst state-conversion theorem in
(\Cref{thm:catalyst-state-conversion}), we can prove the following upper bound
on quantum query complexity.

\begin{lemma}\label{lem:filtered-delta2-upperbound} Let $\Sigma\subset\{0,1\}^{n}$,
    $\mu:\Sigma\to[0,1]$ be a distribution, and $W\in\Pos(\Sigma)$ be a target witness satisfying
    $\Tr[W\circ\kbr{\mu}]=1$. Let $\alpha=\alpha(n)$ and $\beta=\beta(n)$ be success parameters with
    $0<\beta<\alpha<1$, and assume that $\bar{\delta}_2^\mu(J,W/\alpha\mid\{S_i\})<+\infty$. Then,
    there exists a Gram matrix $\Psi\in\Pos(\Sigma)$ with $\Psi\circ I=I$ such that
    $\cF(\Psi\circ\kbr{\mu},W\circ\kbr{\mu})\geq\sqrt{\beta}$ and
    \begin{align*}
        \mathsf{Q}(\Psi)
        \leq
        \frac{10\alpha}{\alpha-\beta}
        \log\left(\frac{2(1-\beta)}{\alpha-\beta}\right)
        \bar{\delta}_2^\mu(J,W/\alpha\mid\{S_i\}).
    \end{align*}

\end{lemma}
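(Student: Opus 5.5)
The plan is to turn an optimal dual certificate for the filtered fidelity into a catalyst decomposition, and then apply \Cref{thm:catalyst-state-conversion} with rescaling factor $r=c>1$. Write $D:=\bar{\delta}_2^\mu(J,W/\alpha\mid\{S_i\})<\infty$. By \Cref{def:filtered-delta2}, for every $c>1$ we have $\cF_c(\kbr{\mu},(W/\alpha)\circ\kbr{\mu}\mid\{S_i\})\geq c^{-D/2}$. We will fix one $c$ at the end. As in earlier proofs, we first reduce to $\Sigma=\supp(\mu)$ using \Crefitem{item:avgdelta2-support}.

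\textbf{Step 1 (catalyst from the dual of $\cF_c$).} By \Cref{prop:equiv-filtered-fidelity} there are $C,X_i\succeq0$ with
\[
\kbr{\mu}-C=\sum_i X_i\circ(cJ-S_i)=(c-1)\sum_iX_i+2\sum_i\Delta_i\circ X_i
\]
and $\cF(C,(W/\alpha)\circ\kbr{\mu})=\cF_c\geq c^{-D/2}$. Here we used that the sign filters are real, so $\overline{S_i}=S_i$. Undo the $\mu$-weights entrywise on $\supp(\mu)$: put $X_i'=X_i\circ\kbr{1/\mu}$ and $C'=C\circ\kbr{1/\mu}$, which stay PSD. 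Set $\Phi:=J$ and $\Psi:=C'+(c-1)\sum_iX_i'$. Then $J-\Psi=\sum_i\Delta_i\circ(2X_i')$. Since $\Delta_i\circ I=0$, $\Psi$ has unit diagonal, so it is a Gram matrix. Moreover $\Psi\succeq\frac{c-1}{2}\sum_i(2X_i')$. These are exactly the hypotheses of \Cref{thm:catalyst-state-conversion} with $r=c$. Taking traces in the displayed identity gives $1=\Tr C+(c-1)\sum_i\Tr X_i$. Hence $\sum_i\Tr(2X_i'\circ\kbr{\mu})\leq 2/(c-1)$, and so
\[
\eps_T\leq \frac{2}{(c-1)\sum_{j<T}c^j}=\frac{2}{c^T-1}.
\]

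\textbf{Step 2 (fidelity of the intermediate target).} Since $C\preceq\Psi\circ\kbr{\mu}$, monotonicity of fidelity in its arguments together with scaling gives
\[
\alpha':=\cF(\Psi\circ\kbr{\mu},W\circ\kbr{\mu})\geq\sqrt{\alpha}\,\cF(C,(W/\alpha)\circ\kbr{\mu})\geq\sqrt{\alpha}\,c^{-D/2}.
\]
The moreover part of \Cref{thm:catalyst-state-conversion} then produces $\widetilde\Psi$ with $\mathsf{Q}(\widetilde\Psi)\leq T$ and
\[
\cF(\widetilde\Psi\circ\kbr{\mu},W\circ\kbr{\mu})\geq(1-\eps_T)\alpha'-\sqrt{\eps_T(1-\alpha')}.
\]

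\textbf{Step 3 (parameter choice; the main obstacle).} It remains to choose $c$ and $T$ so that the right-hand side is at least $\sqrt\beta$, and to check the cost. First choose $c=1+\theta/D$ with $\theta\asymp(\alpha-\beta)/\alpha$. Then $c^{D}\leq e^{\theta}$, so $\alpha'^2\geq\alpha e^{-\theta}$, which leaves a margin of order $\alpha-\beta$ above $\beta$. Next choose $\eps_T\asymp\big((\alpha-\beta)/(1-\beta)\big)^2$, small enough that the two loss terms eat only the remaining margin. The term $\sqrt{\eps_T(1-\alpha')}$ is what forces the $(1-\beta)$ in the logarithm. This requires $c^T\geq1+2/\eps_T$, so $T=\lceil\log(1+2/\eps_T)/\log c\rceil$. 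Using $\log(1+x)\geq x/2$ for $x\leq1$, this is $O\!\left(\frac{D\alpha}{\alpha-\beta}\log\frac{2(1-\beta)}{\alpha-\beta}\right)$. Tracking the constants to reach exactly $10$ is where the care lies. I would also treat separately the degenerate case where $D$ is tiny, e.g.\ $D<\theta$, so that $c$ is not close to $1$. In that case one should either take $c=2$ directly, or note that $D=0$ means $\cF(\kbr{\mu},W\circ\kbr{\mu})\geq\sqrt\alpha$ by \Crefitem{item:avgdelta2-nonnegative}, so that zero queries suffice.
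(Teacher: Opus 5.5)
Your Steps 1 and 2 match the paper's construction: the dual certificate for $\cF_c$, $\Psi=\hat C+(c-1)\sum_i\hat X_i$, catalyst conversion with $r=c$, and $\alpha'\ge\sqrt\alpha\,c^{-D/2}$. Step 3, however, has a genuine gap.

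\textbf{Main gap: the estimate $\eps_T\le2/(c^T-1)$ is too lossy.} It discards $\Tr C$, and with only this estimate your choice of $\eps_T$ does not push the losses $\alpha'\eps_T+\sqrt{\eps_T(1-\alpha')}$ below the margin. Take $\alpha=1-\delta$, $\beta=1-2\delta$ with $\delta\to0$. Then $\frac{2(1-\beta)}{\alpha-\beta}=4$, so the claim is $\mathsf{Q}(\Psi)=O(D/\delta)$. The margin you can guarantee is at most $\sqrt\alpha-\sqrt\beta\le\delta$. Your $\eps_T\asymp((\alpha-\beta)/(1-\beta))^2$ is a constant here, so $\alpha'\eps_T=\Theta(1)$ alone swamps the margin. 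With the crude bound you genuinely need $\eps_T=O(\delta)$, i.e.\ $c^T\gtrsim1/\delta$, which gives $T\asymp\frac{D}{\delta}\log\frac1\delta$, a $\log(1/\delta)$ factor too many. In general the crude bound yields $\log\frac{1-\beta}{(\alpha-\beta)^2}$ instead of $\log\frac{2(1-\beta)}{\alpha-\beta}$. This matters in the regime $1-\beta\asymp\alpha-\beta$, which is the one used for the informal direct-product theorem.

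\textbf{The fix: keep $\Tr C$.} Take $m:=(\alpha+\beta)/2$ and $c:=(\alpha/m)^{1/D}$, the analogue of your $1+\theta/D$. Then $\cF(C,W\circ\kbr{\mu})\ge\sqrt m$, hence
$\Tr C=\Tr C\cdot\Tr[W\circ\kbr{\mu}]\ge\cF(C,W\circ\kbr{\mu})^2\ge m$.
Your trace identity now gives $\eps_T\le2(1-m)/(c^T-1)\le4(1-\sqrt m)/(c^T-1)$. Both losses then carry a factor $1-\sqrt m$:
\[
\cF(\widetilde\Psi\circ\kbr{\mu},W\circ\kbr{\mu})\ge\sqrt m-(1-\sqrt m)\Bigl(\frac{4}{c^T-1}+\frac{2}{\sqrt{c^T-1}}\Bigr).
\]
This is at least $\sqrt\beta$ once $c^T-1\ge4L^2$, where $L:=(1-\sqrt\beta)/(\sqrt m-\sqrt\beta)\le\frac{2(1-\beta)}{\alpha-\beta}$. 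The $(1-\beta)$ inside the logarithm comes from this factor $1-\Tr C$ in $\eps_T$, together with the factor $1-\alpha'$ in the second loss term.

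\textbf{Second gap: small $D$.} Since $\frac{10\alpha}{\alpha-\beta}\log\frac{2(1-\beta)}{\alpha-\beta}\ge10$, the claimed bound is below $1$ for every $0<D<\frac{\alpha-\beta}{10\alpha\log(2(1-\beta)/(\alpha-\beta))}$. In that range you must show that zero queries suffice. Taking $c=2$, or any construction with $T\ge1$, cannot meet the bound, and handling only $D=0$ is not enough. The paper uses item~ix of \Cref{prop:filtered-fidelity-properties}: if $c\ge L^2$, then
$\cF(\kbr{\mu},W\circ\kbr{\mu})\ge\frac{\sqrt c\sqrt m-1}{\sqrt c-1}\ge\frac{L\sqrt m-1}{L-1}=\sqrt\beta$,
so $\Psi=J$ works. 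When $c<L^2$, the $+1$ from the ceiling in $T=\lceil\log(1+4L^2)/\log c\rceil$ is absorbed, because $1<2\log L/\log c$.
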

\begin{proof}
    Without loss of generality, we may assume that $\Sigma=\supp(\mu)$. Indeed, replacing each
    matrix by its principal submatrix indexed by $S:=\supp(\mu)$ leaves $\bar{\delta}_2^\mu$
    unchanged (see \Crefitem{item:avgdelta2-support}). Also, an algorithm on $S$ extends to $\Sigma$
    without increasing its query cost or changing the fidelity under $\mu$.

    %
    \par
    Let $d:= \bar{\delta}_{2}^{\mu}(J,W/\alpha|\{S_i\})< \infty$. If $d=0$, the definition of
    $\bar{\delta}_2^\mu$ and scalability give
    $\cF_c(\kbr{\mu},W\circ\kbr{\mu}\mid\{S_i\})\geq\sqrt{\alpha}$ for every $c>1$. Letting
    $c\to\infty$ gives $\cF(\kbr{\mu},W\circ\kbr{\mu})\geq\sqrt{\alpha}$, so taking $\Psi:=J$
    satisfies the fidelity condition with zero queries. Hence, we assume $d>0$.
    \par
    Define $m := (\alpha + \beta)/2$, and let $L:= (1-\sqrt{\beta})/(\sqrt{m}-\sqrt{\beta}) >1$.
    Then,
\begin{align}\label{eq:filtered-delta2-L-identity}
    L-1 &= \frac{1-\sqrt{m}}{\sqrt{m}-\sqrt{\beta}}
    = \frac{L\sqrt{m}-1}{\sqrt{\beta}}.
\end{align}

    Let $c:=(\alpha/m)^{1/d}>1$. By the definition of $d$, we have
    $(-2\log\cF_c(\kbr{\mu},W\circ\kbr{\mu}/\alpha\mid\{S_i\}))/(\log c)\leq d$. By rearrangement
    and scalability (\Crefitem{item:filtered-fidelity-scalability}), this gives
    \begin{align*}
        \cF_c(\kbr{\mu},W\circ\kbr{\mu}\mid\{S_i\})
        &=\sqrt{\alpha}\,\cF_c(\kbr{\mu},W\circ\kbr{\mu}/\alpha\mid\{S_i\})
        \geq\sqrt{\alpha}c^{-d/2}
        =\sqrt{m}.
    \end{align*}
    \par
    We claim that if $c \geq L^2$ ($d\leq\log(\alpha/m)/(2\log L)$), then taking $\Psi:=J$ satisfies
    the fidelity condition, and we are done with zero queries. Indeed, in that case $c=
    (\alpha/m)^{1/d}\geq L^2$, and therefore, $\sqrt{c}\geq L>1$. By
    \Crefitem{item:filtered-fidelity-lowerbound}, and since both matrices have trace one, we get
    \begin{align*}
        \cF(\kbr{\mu},W\circ\kbr{\mu})
        &\geq\frac{\sqrt{c}\cF_c(\kbr{\mu},W\circ\kbr{\mu}\mid\{S_i\})-1}{\sqrt{c}-1}
        \geq\frac{\sqrt{c}\sqrt{m}-1}{\sqrt{c}-1}
        \geq\frac{L\sqrt{m}-1}{L-1}
        =\sqrt{\beta}.
    \end{align*}
    The last equality follows from \eqref{eq:filtered-delta2-L-identity}. Hence, we assume $c<L^2$
    throughout the rest of the proof.
    \par
    The rest of the proof is organized as follows.
    \begin{itemize}
        \item \textbf{Constructing $\Psi$.} First, we use the maximization form of $\cF_c$ to
        construct a target Gram matrix $\Psi$ with large overlap
        $\cF(\Psi\circ\kbr{\mu},W\circ\kbr{\mu})\geq\sqrt{m}$.
        \item \textbf{State conversion.} We then use the decomposition of $J-\Psi$ to apply the
        state-conversion theorem \Cref{thm:catalyst-state-conversion} with query cost $T$. This
        gives us an approximate target $\widetilde{\Psi}$ satisfying
        $\cF(\widetilde{\Psi}\circ\kbr{\mu},W\circ\kbr{\mu})\geq
        \eta(1-\eps_T)-\sqrt{\eps_T(1-\eta)}$.
        \item \textbf{Choosing $T$.} Finally, we choose $T$ to obtain fidelity at least
        $\sqrt{\beta}$ and the required query bound.
    \end{itemize}

    \paragraph{Constructing $\Psi$.}
    By the equivalent maximization formulation in \Cref{prop:equiv-filtered-fidelity}, there exist $C,X_i\succeq
    0$ such that
    \begin{align*}
        \cF(C,W\circ\kbr{\mu})
        &=\cF_c(\kbr{\mu},W\circ\kbr{\mu}\mid\{S_i\})\geq\sqrt{m}
    \end{align*}
    and
    \begin{align}\label{eq:filtered-delta2-primal-decomposition}
        J\circ\kbr{\mu}-C
        &=\sum_iX_i\circ(cJ-S_i)
        =(c-1)\sum_iX_i+2\sum_iX_i\circ\Delta_i.
    \end{align}
    We first normalize both sides to get rid of the extra $\ketbra{\mu}$. More specifically, we
    define $\hat{C}\in\Pos(\Sigma)$ and $\hat{X}_i\in\Pos(\Sigma)$ so that $\hat{C}\circ\kbr{\mu}=C$
    and $\hat{X}_i\circ\kbr{\mu}=X_i$. These matrices are well-defined since $\mu(x)>0$ for every
    $x\in\Sigma$, and positive semidefinite by the Schur product theorem. Therefore,
    \begin{align*}
        J-\hat{C}=(c-1)\sum_i\hat{X}_i+2\sum_i\hat{X}_i\circ\Delta_i.
    \end{align*}
    Next, we claim that $\Psi:=\hat{C}+(c-1)\sum_i\hat{X}_i$ is a Gram matrix with $\Psi\circ I=I$.
    Indeed, by the above, $\Psi(x,x)=J(x,x)-2\sum_i(\hat{X}_i\circ\Delta_i)(x,x)=1$, as
    $\Delta_i(x,x)=0$. Also, since $\Psi \succeq \hat{C}$,
    \begin{align*}
        \eta:=\cF(\Psi \circ \kbr{\mu}, W\circ \kbr{\mu}) \geq \cF(\hat{C} \circ \kbr{\mu},W\circ \kbr{\mu}) \geq \sqrt{m}.
    \end{align*}
    Additionally,
    \begin{align*}
        J-\Psi=2\sum_i\hat{X}_i\circ\Delta_i,
        \qquad
        \Psi\succeq(c-1)\sum_i\hat{X}_i,
    \end{align*}
    where the first equality follows from the equation above.

    \paragraph{Applying state conversion.}
    For every integer $T\geq1$, \Cref{thm:catalyst-state-conversion}, applied with $\Phi=J$, $r=c$,
    and witnesses $2\hat{X}_i$, gives a Gram matrix $\widetilde{\Psi}$ with $\widetilde{\Psi}\circ
    I=I$, $\mathsf{Q}(\widetilde{\Psi})\leq T$, and
    \begin{align*}
        \cF(\widetilde{\Psi}\circ\kbr{\mu},W\circ\kbr{\mu})
        \geq \eta(1-\eps_T)-\sqrt{\eps_T(1-\eta)},
    \end{align*}
    where
    \begin{align}\label{eq:filtered-delta2-epsilon-bound}
        \eps_T &:=\frac{2(c-1)}{c^T-1}\sum_i\Tr[X_i]=\frac{2(c-1)}{c^T-1}\cdot\frac{1-\Tr[C]}{c-1}\leq\frac{2(1-m)}{c^T-1}\leq\frac{4(1-\sqrt{m})}{c^T-1}.
    \end{align}
    Here, we used the primal decomposition \eqref{eq:filtered-delta2-primal-decomposition} and
    $\Tr[C]=\Tr[C]\Tr[W\circ\kbr{\mu}]\geq\cF(C,W\circ\kbr{\mu})^2\geq m$. Consequently, we can
    continue the lower bound as
    \begin{align*}
        \cF(\widetilde{\Psi}\circ\kbr{\mu},W\circ\kbr{\mu})
        &\geq\sqrt{m}(1-\eps_T)-\sqrt{\eps_T(1-\sqrt{m})}
        &&\left(\text{By }\sqrt{m}\leq\eta\right)\\
        &\geq\sqrt{m}-\eps_T-\sqrt{\eps_T(1-\sqrt{m})}
        &&\left(\text{By }\sqrt{m}\leq1\right)\\
        &\geq\sqrt{m}-(1-\sqrt{m})
        \left(\frac{4}{c^T-1}+\frac{2}{\sqrt{c^T-1}}\right)
        &&\left(\text{By }\eqref{eq:filtered-delta2-epsilon-bound}\right)\\
        &\geq\sqrt{m}-(1-\sqrt{m})\cdot\frac{2}{\sqrt{c^T-1}-2}
        &&\text{(By geometric series)}\\
        &=\sqrt{m}-(\sqrt{m}-\sqrt{\beta})
        \frac{2(L-1)}{\sqrt{c^T-1}-2}
        &&\left(\text{By }\eqref{eq:filtered-delta2-L-identity}\right).
    \end{align*}
    In the fourth line, we assumed $c^T-1>4$.
    \paragraph{Choosing the parameter $T$}
    To obtain fidelity at least $\sqrt{\beta}$, it suffices for the fraction in the last line to be
    at most $1$, i.e., $c^T-1\geq4L^2$. Therefore, we choose
    \begin{align*}
        T:=\left\lceil\frac{\log(1+4L^2)}{\log c}\right\rceil.
    \end{align*}
    Since $L>1$, this choice also satisfies the condition $c^T-1>4$, so the fidelity bound follows.
    To obtain the query upper bound, let $R:=2(1-\beta)/(\alpha-\beta)\geq2$. Then,
    \begin{align}\label{eq:filtered-delta2-L-upperbound}
        L
        &=\frac{1-\sqrt{\beta}}{\sqrt{m}-\sqrt{\beta}}
        =\frac{(1-\sqrt{\beta})(\sqrt{m}+\sqrt{\beta})}{m-\beta}
        \leq R,
    \end{align}
    where we used $\sqrt{m}\leq1$ and $m-\beta=(\alpha-\beta)/2$. Therefore, we can bound the query
    complexity as follows.
    \begin{align*}
        \mathsf{Q}(\widetilde{\Psi})
        &\leq T=\left\lceil\frac{\log(1+4L^2)}{\log c}\right\rceil\\
        &\leq\frac{\log(1+4L^2)}{\log c}+1\\
        &\leq\frac{3+2\log L}{\log c}+\frac{2\log L}{\log c}
        &&\left(\text{By }c<L^2\right)\\
        &\leq\frac{7\log R}{\log c}
        &&\left(\text{By }L\leq R,\ 2\leq R\right).
    \end{align*}
    Next, we bound $\log c$. By the definitions of $c$ and $m$,
    \begin{align}\label{eq:filtered-delta2-log-c-bound}
        \log c
        &=\frac{1}{d}\log(\alpha/m)
        =\frac{1}{d}\log\frac{2\alpha}{\alpha+\beta}
        =-\frac{1}{d}\log\left(1-\frac{\alpha-\beta}{2\alpha}\right)
        \geq\frac{\alpha-\beta}{2\alpha d\ln2},
    \end{align}
    where we used $-\ln(1-x)\geq x$ for $0<x<1$. Consequently,
    \begin{align*}
        \mathsf{Q}(\widetilde{\Psi})
        &\leq\frac{7}{\log c}\log R
        \leq\frac{14\alpha d\ln2}{\alpha-\beta}
        \log\frac{2(1-\beta)}{\alpha-\beta}
        \leq\frac{10\alpha d}{\alpha-\beta}
        \log\frac{2(1-\beta)}{\alpha-\beta}.\qedhere
    \end{align*}

\end{proof}

\subsection{XOR lemma}\label{sec:xor-lemma} In this
section, we prove the XOR lemma for decision problems. For the worst-case setting, see
\cite{LR12}.

\begin{theorem}\label{thm:distributional-xor} Let $\Sigma \subset \{0,1\}^{n}$, $f:\Sigma \to
    \{0,1\}$ be a decision problem, and $\mu:\Sigma \to [0,1]$ be a distribution. Then, for
    $\eps=\eps(n)\in(0,1/2)$ and $k\in\bN$ with $k\geq1$, we have
    \begin{align*}
        \mathsf{Q}^{\mu^{\otimes k}}_{\frac12-\frac12(2\eps)^{k/2}}(f^{\oplus k})
        \geq
        \frac{k(1-2\eps)}{40\log(1/\eps)}
        \mathsf{Q}^{\mu}_{\frac12-\eps}(f),
    \end{align*}
    where $f^{\oplus k}(x_1,\dots,x_k):=f(x_1)\oplus\cdots\oplus f(x_k)$ and the logarithm is taken
    in base two.
\end{theorem}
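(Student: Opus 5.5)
The plan is to work entirely through the sign-matrix characterization of decision problems (\Cref{lem:decision-sign-characterization}) and the multiplicative measure $\bar{\delta}_2^\mu$. The lower bound of \Cref{lem:filtered-delta2-lowerbound} controls the $k$-copy XOR problem, exact tensorization (\Cref{prop:avgdelta2-properties}(iii)) reduces it to one copy, and the upper bound of \Cref{lem:filtered-delta2-upperbound} converts the one-copy $\bar{\delta}_2^\mu$ value back into $\mathsf{Q}^\mu_{\frac12-\eps}(f)$. The witness in both directions is the sign matrix $S_f$. It is PSD and satisfies $S_f\circ\kbr{\mu}=\kbr{\sigma}$, which has trace one, so $\Tr[S_f\circ\kbr{\mu}]=1$.

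\textbf{Step 1 ($k$-copy lower bound).} Note that $S_{f^{\oplus k}}=S_f^{\otimes k}$ and $J^{\otimes k}=J$. Error $\frac12-\frac12(2\eps)^{k/2}$ corresponds to bias $\beta_k=(2\eps)^{k/2}$. By the right inequality of \Cref{lem:decision-sign-characterization}, there is a Gram matrix $\Lambda$ with
\[
\cF(\Lambda\circ\kbr{\mu}^{\otimes k},S_f^{\otimes k}\circ\kbr{\mu}^{\otimes k})\geq(2\eps)^{k/2}
\quad\text{and}\quad
\mathsf{Q}(\Lambda)\leq 2\mathsf{Q}^{\mu^{\otimes k}}_{\frac12-\frac12(2\eps)^{k/2}}(f^{\oplus k}).
\]
Apply \Cref{lem:filtered-delta2-lowerbound} with witness $S_f^{\otimes k}$ and success parameter $(2\eps)^k$. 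This gives $\mathsf{Q}(\Lambda)\geq\bar{\delta}_2^{\mu^{\otimes k}}\bigl(J^{\otimes k},(S_f/(2\eps))^{\otimes k}\mid\{S_{ij}\}\bigr)$. Tensorization then yields
\[
\mathsf{Q}^{\mu^{\otimes k}}_{\frac12-\frac12(2\eps)^{k/2}}(f^{\oplus k})\geq \tfrac{k}{2}\,d,
\qquad d:=\bar{\delta}_2^\mu(J,S_f/(2\eps)\mid\{S_i\}).
\]
If $d=+\infty$, the theorem is immediate.

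\textbf{Step 2 (one-copy upper bound).} Apply \Cref{lem:filtered-delta2-upperbound} with $W=S_f$, $\alpha=2\eps$, and $\beta=4\eps^2$. These are admissible, since $\eps<1/2$ gives $0<\beta<\alpha<1$. The lemma produces a Gram matrix $\Psi$ with $\cF(\Psi\circ\kbr{\mu},S_f\circ\kbr{\mu})\geq 2\eps$ and
\[
\mathsf{Q}(\Psi)\leq \frac{10\cdot 2\eps}{2\eps-4\eps^2}\log\frac{2(1-4\eps^2)}{2\eps(1-2\eps)}\,d
=\frac{10}{1-2\eps}\log\frac{1+2\eps}{\eps}\,d .
\]
By the left inequality of \Cref{lem:decision-sign-characterization}, bias $2\eps$ gives $\mathsf{Q}^\mu_{\frac12-\eps}(f)\leq\mathsf{Q}(\Psi)$. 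Since $\eps<1/2$, we have $\log\frac{1+2\eps}{\eps}\leq\log(2/\eps)\leq 2\log(1/\eps)$. Together these give
\[
d\geq\frac{(1-2\eps)\,\mathsf{Q}^\mu_{\frac12-\eps}(f)}{20\log(1/\eps)} .
\]

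\textbf{Step 3 (combine).} Substituting the bound on $d$ from Step 2 into Step 1 gives exactly the constant $40$ in the statement. The main point to verify carefully is the parameter bookkeeping: bias corresponds to fidelity, while the $\bar{\delta}_2$ lemmas use squared fidelity $\alpha$. The rescaling $W/\alpha$ must also factor correctly under tensor powers, namely $(S_f/(2\eps))^{\otimes k}=S_f^{\otimes k}/(2\eps)^k$. Everything else is a direct invocation of earlier results.
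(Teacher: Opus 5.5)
Your proposal is correct and follows the paper's proof. The paper packages your Steps 1 and 2 into \Cref{lem:decision-delta2-bounds}, applied with bias $(2\eps)^{k/2}$ for the lower bound and bias $2\eps$ (success parameters $2\eps$ and $4\eps^2$) for the upper bound, and combines it with the same exact tensorization $\bar{\delta}_2^{\mu^{\otimes k}}(J^{\otimes k},S_f^{\otimes k}/(2\eps)^k\mid\{S_{ij}\})=k\,\bar{\delta}_2^\mu(J,S_f/(2\eps)\mid\{S_j\})$; your parameter bookkeeping, including $\log\frac{1+2\eps}{\eps}\leq 2\log(1/\eps)$ and the resulting constant $40$, matches it exactly.
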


We use the following lemma in the proof.

\begin{lemma}\label{lem:decision-delta2-bounds} Let $\Sigma\subseteq\{0,1\}^{n}$,
    $g:\Sigma\to\{0,1\}$ be a decision problem, and $\mu:\Sigma\to[0,1]$ be a distribution. Then,
    for every bias $\beta=\beta(n)\in(0,1)$, we have
    \begin{align*}
        \frac12\,
        \bar{\delta}_2^\mu\left(J,\frac{S_g}{\beta^2}\mid\{S_i\}\right)
        \leq \mathsf{Q}^{\mu}_{(1-\beta)/2}(g)
        \leq
        \frac{20\log(2/\beta)}{1-\beta}\,
        \bar{\delta}_2^\mu\left(J,\frac{S_g}{\beta}\mid\{S_i\}\right),
    \end{align*}
    where $S_g$ is the sign matrix of the function $g$, given by $S_g(x,y):=(-1)^{g(x)+g(y)}$. The
    logarithm is taken in base two.
\end{lemma}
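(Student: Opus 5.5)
The plan is to combine the sign-matrix characterization of \Cref{lem:decision-sign-characterization} with the two bounds on $\bar{\delta}_2^\mu$ from \Cref{lem:filtered-delta2-lowerbound,lem:filtered-delta2-upperbound}, taking the witness to be $W:=S_g$. This is a legitimate target witness. Writing $\ket{\sigma}:=\sum_x(-1)^{g(x)}\sqrt{\mu(x)}\ket{x}$, we have $S_g\circ\kbr{\mu}=\ketbra{\sigma}$, so $S_g\succeq0$ and $\Tr[S_g\circ\kbr{\mu}]=1$. Both bounds then reduce to choosing parameters and doing a little arithmetic.

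\textbf{Lower bound.} Let $T:=\mathsf{Q}^{\mu}_{(1-\beta)/2}(g)$.
\begin{itemize}
\item By the second inequality of \Cref{lem:decision-sign-characterization}, there is a Gram matrix $\Psi$ with $\Psi\circ I=I$, $\cF(\Psi\circ\kbr{\mu},S_g\circ\kbr{\mu})\geq\beta$, and $\mathsf{Q}(\Psi)\leq 2T$.
\item Apply \Cref{lem:filtered-delta2-lowerbound} with $W=S_g$ and success parameter $\alpha:=\beta^2\in(0,1]$, so that the fidelity hypothesis reads $\cF\geq\sqrt{\alpha}=\beta$. This gives $\mathsf{Q}(\Psi)\geq\bar{\delta}_2^\mu(J,S_g/\beta^2\mid\{S_i\})$.
\item Combining the two displays gives $T\geq\frac12\bar{\delta}_2^\mu(J,S_g/\beta^2\mid\{S_i\})$.
\end{itemize}

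\textbf{Upper bound.} If $\bar{\delta}_2^\mu(J,S_g/\beta\mid\{S_i\})=+\infty$, there is nothing to prove, so assume it is finite.
\begin{itemize}
\item Apply \Cref{lem:filtered-delta2-upperbound} with $W=S_g$, $\alpha:=\beta$, and the smaller success parameter $\beta^2$. These satisfy $0<\beta^2<\beta<1$ as required, and the output fidelity is $\sqrt{\beta^2}=\beta$.
\item This yields a Gram matrix $\Psi$ with $\cF(\Psi\circ\kbr{\mu},S_g\circ\kbr{\mu})\geq\beta$ and
\[
\mathsf{Q}(\Psi)\leq\frac{10\beta}{\beta-\beta^2}\log\frac{2(1-\beta^2)}{\beta-\beta^2}\,\bar{\delta}_2^\mu
=\frac{10}{1-\beta}\log\frac{2(1+\beta)}{\beta}\,\bar{\delta}_2^\mu .
\]
\item Since this $\Psi$ is feasible for the middle minimization in \Cref{lem:decision-sign-characterization}, the first inequality of that lemma gives $\mathsf{Q}^\mu_{(1-\beta)/2}(g)\leq\mathsf{Q}(\Psi)$.
\item For the logarithmic factor, use $\beta<1$ to get $\log\frac{2(1+\beta)}{\beta}\leq\log\frac{4}{\beta}=1+\log\frac{2}{\beta}$. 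Since $\log(2/\beta)\geq1$ for $\beta<1$, this is at most $2\log(2/\beta)$, which produces the constant $20$.
\end{itemize}

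The main point needing care is the parameter bookkeeping. One must get the roles of $\alpha$ and $\beta$ in \Cref{lem:filtered-delta2-upperbound} right: the witness is rescaled by $\alpha=\beta$, while the target fidelity is $\sqrt{\beta^2}$. One must also confirm that the factor of $2$ in \Cref{lem:decision-sign-characterization} is lost only in the lower-bound direction. Beyond these checks, the argument is just a chaining of earlier results.
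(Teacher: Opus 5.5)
Your proposal is correct and follows the paper's proof essentially step for step. It uses the same witness $W=S_g$, the choice $\alpha=\beta^2$ in \Cref{lem:filtered-delta2-lowerbound}, the pair $(\beta,\beta^2)$ in \Cref{lem:filtered-delta2-upperbound}, and the two directions of \Cref{lem:decision-sign-characterization} exactly where the paper uses them.

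The only differences are cosmetic. You bound the logarithm via $\log(4/\beta)=1+\log(2/\beta)\leq2\log(2/\beta)$, where the paper uses $1+\beta\leq2/\beta$. Also, $S_g\succeq0$ is better seen directly from $S_g=\ketbra{s}{s}$ with $s_x=(-1)^{g(x)}$, since $S_g\circ\kbr{\mu}\succeq0$ only constrains $S_g$ on $\supp(\mu)\times\supp(\mu)$.
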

\begin{proof}
    We prove each direction as follows.
    \paragraph{Lower bound.}
    Let $T:=\mathsf{Q}^{\mu}_{(1-\beta)/2}(g)$. By \Cref{lem:decision-sign-characterization}, there
    is a Gram matrix $\Phi$ with $\mathsf{Q}(\Phi)\leq2T$ and
    $\cF(\Phi\circ\kbr{\mu},S_g\circ\kbr{\mu})\geq\beta$. Applying
    \Cref{lem:filtered-delta2-lowerbound} with $W=S_g$ and success parameter $\beta^2$ gives
    $\bar{\delta}_2^\mu(J,S_g/\beta^2\mid\{S_i\})\leq\mathsf{Q}(\Phi)\leq2T$.

    \paragraph{Upper bound.}
    If $\bar{\delta}_2^\mu(J,S_g/\beta\mid\{S_i\})=+\infty$, the bound is immediate. Otherwise,
    since $S_g\succeq0$ and $\Tr[S_g\circ\kbr{\mu}]=1$, \Cref{lem:filtered-delta2-upperbound}, with
    success parameters $\beta$ and $\beta^2$, gives a Gram matrix $\Psi$ with $\Psi\circ I=I$ such
    that $\cF(\Psi\circ\kbr{\mu},S_g\circ\kbr{\mu})\geq\beta$ and
    \begin{align*}
        \mathsf{Q}(\Psi)
        &\leq\frac{10}{1-\beta}
        \log\left(\frac{2(1+\beta)}{\beta}\right)
        \bar{\delta}_2^\mu\left(J,\frac{S_g}{\beta}\mid\{S_i\}\right)\\
        &\leq\frac{20\log(2/\beta)}{1-\beta}
        \bar{\delta}_2^\mu\left(J,\frac{S_g}{\beta}\mid\{S_i\}\right),
    \end{align*}
    where the last inequality uses $1+\beta\leq2/\beta$. By
    \Cref{lem:decision-sign-characterization}, we have
    $\mathsf{Q}^{\mu}_{(1-\beta)/2}(g)\leq\mathsf{Q}(\Psi)$, proving the upper bound.
\end{proof}

\begin{proof}[Proof of \Cref{thm:distributional-xor}]
    Since $S_{f^{\oplus k}}=S_f^{\otimes k}$, filtered-fidelity tensorization
    (\Crefitem{item:filtered-fidelity-tensorization}) and \Cref{def:filtered-delta2} give
    \begin{align*}
        \bar{\delta}_2^{\mu^{\otimes k}}\left(
        J^{\otimes k},\frac{S_{f^{\oplus k}}}{(2\eps)^k}\mid\{S_{ij}\}_{i\in[k],j\in[n]}\right)
        =k\,\bar{\delta}_2^\mu\left(J,\frac{S_f}{2\eps}\mid\{S_j\}\right),
    \end{align*}
    where $S_{ij}:=J^{\otimes(i-1)}\otimes S_j\otimes J^{\otimes(k-i)}$. Indeed, for each fixed
    $c>1$, the filtered fidelity on the product instance is the $k$-th power of the one-copy
    filtered fidelity, so its logarithm and then its supremum are multiplied by $k$. Applying the
    lower bound in \Cref{lem:decision-delta2-bounds} with bias $(2\eps)^{k/2}$ and its upper bound
    with bias $2\eps$ gives
    \begin{align*}
        \mathsf{Q}^{\mu^{\otimes k}}_{\frac12-\frac12(2\eps)^{k/2}}(f^{\oplus k})
        &\geq\frac{k}{2}\,
        \bar{\delta}_2^\mu\left(J,\frac{S_f}{2\eps}\mid\{S_j\}\right)
        \geq\frac{k(1-2\eps)}{40\log(1/\eps)}
        \mathsf{Q}^{\mu}_{\frac12-\eps}(f). \qedhere
    \end{align*}
\end{proof}
\section{A direct-product theorem via the Alberti adversary method}
In this section, we finally turn to the direct-product theorems. In
\Cref{sec:alberti-adversary-method}, we formally define the new Alberti
adversary measure and study its dual formulation and relation to query
complexity. Then, in \Cref{sec:direct-product}, we use these characterizations
to prove direct-product theorems for the Alberti adversary measure and
distributional quantum query complexity.
\subsection{Alberti adversary method} \label{sec:alberti-adversary-method}
In this subsection, we define the Alberti adversary measure and relate it to distributional quantum
query complexity.
To obtain a measure for a problem $f$, we define the Alberti adversary measure
as the minimum $\bar{\delta}_2^\mu$ cost of transitioning from $J$ to
$W/\alpha$, over all valid $\mu$-witnesses $W$ for $f$. We emphasize that the
success parameter $\alpha$ enters by scaling the second argument, which
allows us to use the tensorization property of $\bar{\delta}_2^\mu$.

\begin{definition}\label{def:aadv} Let $\Sigma\subseteq\{0,1\}^{n}$, and let
    $f\subseteq\Sigma\times\Omega$ be a search problem. Let $\alpha=\alpha(n)\in(0,1]$ be a success
    parameter and $\mu:\Sigma\to[0,1]$ be a distribution. Let $\{S_i\}_{i\in[n]}$ be sign filters.
    Let $\cW_f^\mu$ be the set of $\mu$-witnesses for $f$ as in \Cref{def:witness-set}. Then, the
    Alberti adversary measure for $f$ with respect to distribution $\mu$ and success $\alpha$ is
    defined as
    \begin{align*}
        \mathrm{AAdv}_{\alpha}^{\mu}(f)
        :=\min_{W\in\cW_f^\mu}
        \bar{\delta}_2^\mu\bigl(J,W/\alpha\mid\{S_i\}\bigr).
    \end{align*}
    In particular, the value $0\leq\mathrm{AAdv}_{\alpha}^{\mu}(f)\leq n$ is finite, and the minimum
    is attained.\footnote{The lower bound follows from $\bar{\delta}_2^\mu(A,B\mid\{S_i\})\geq0$
    (see \Crefitem{item:avgdelta2-nonnegative}). For the upper bound, $I\in\cW_f^\mu$, and
    \Cref{lem:filtered-delta2-lowerbound} gives
    $\bar{\delta}_2^\mu(J,I/\alpha\mid\{S_i\})\leq\mathsf{Q}(I)\leq n$. For attainment, without loss
    of generality, we may replace $\cW_f^\mu$ with the compact set
    $\widehat{\cW}_f^\mu:=\{W\in\cW_f^\mu:W\circ J_S=W\}$, where $S=\supp(\mu)$ and $J_S$ is the
    indicator matrix of $S\times S$. Moreover, $W\mapsto\bar{\delta}_2^\mu(J,W/\alpha\mid\{S_i\})$
    is lower semicontinuous by \Crefitem{item:avgdelta2-lower-semicontinuity}. Therefore, it attains
    its minimum on $\widehat{\cW}_f^\mu$.}
\end{definition}

\subsubsection{Equivalent primal and dual formulations}
We now give a characterization of the Alberti adversary measure by a single positive definite
matrix.

\begin{lemma}\label{lem:aadv-universal-hardness-cert} Let $\Sigma\subseteq\{0,1\}^{n}$, let
    $f\subseteq\Sigma\times\Omega$ be a search problem, and let $\mu:\Sigma\to[0,1]$ be a
    distribution. Also, for $s\in \Omega$ let $I_{f^{-1}(s)}:=\sum_{\substack{x\in\Sigma\\s\in
    f(x)}}\kbr{x}$ be the projector onto the subspace of inputs for which $s$ is a valid answer.
    Then, for any success parameter $\alpha=\alpha(n)\in(0,1]$, we have
    \begin{align*}
        \mathrm{AAdv}_{\alpha}^{\mu}(f)
        &=\progat{6cm}{\sup_{\Gamma\succ0}}
        {\frac{\log\alpha-\log\langle\Gamma,\kbr{\mu}\rangle}
        {\max_{i\in[n]}D_{\max}(\Gamma\circ S_i\Vert\Gamma)}}
        {\Gamma\succeq I_{f^{-1}(s)}\quad\forall s\in\Omega}.
    \end{align*}

    The quotient is defined to be zero when its denominator vanishes.
\end{lemma}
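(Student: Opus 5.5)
The plan is to unfold the definitions until $\mathrm{AAdv}^\mu_\alpha(f)$ becomes a min–sup of an explicit expression in $(W,\Gamma)$, evaluate the inner minimum over witnesses in closed form, and justify exchanging the minimum and the supremum. Set $W':=W\circ\kbr{\mu}$, which ranges over density matrices of the form $\sum_s W'^{(s)}$ with $W'^{(s)}\succeq0$. Correctness together with positivity forces each $W'^{(s)}$ to be supported on the block $\Sigma_s:=\{x\in\supp(\mu): s\in f(x)\}$.

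First I would make the sign filters concrete. Since $S_i=\ket{\sigma_i}\bra{\sigma_i}$ for the sign vector $\sigma_i$, we have $\Gamma\circ S_i=D_i\Gamma D_i$ with $D_i$ diagonal and unitary. Hence $\Gamma\circ S_i\preceq c\Gamma$ iff $D_{\max}(\Gamma\circ S_i\Vert\Gamma)\leq\log c$. Conjugating by $D_i$ and inverting shows that the feasible set $\{\Gamma\succ0:\Gamma\circ S_i\preceq c\Gamma\ \forall i\}$ is closed under $\Gamma\mapsto\Gamma^{-1}$, so I may place $\Gamma$ or $\Gamma^{-1}$ on either argument of $\cF_c$. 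Substituting the definition of $\cF_c$ into \Cref{def:filtered-delta2} and merging $\sup_{c>1}$ with $\inf_\Gamma$ by choosing the tightest $c=2^{\max_iD_{\max}(\Gamma\circ S_i\Vert\Gamma)}$, I expect to get
\[
\mathrm{AAdv}^\mu_\alpha(f)=\min_{W'}\ \sup_{\Gamma\succ0}\ \frac{\log\alpha-\log\langle\Gamma,\kbr{\mu}\rangle-\log\langle\Gamma^{-1},W'\rangle}{\max_iD_{\max}(\Gamma\circ S_i\Vert\Gamma)}.
\]
This step also needs a sanity check of the degenerate cases: vanishing denominators, where the quotient is defined to be $0$, and the regime where $\log\alpha-\log\langle\Gamma,\kbr{\mu}\rangle$ is nonpositive.

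Second, assuming the minimum over $W'$ can be moved inside, I would compute $\min_{W'}\langle\Gamma^{-1},W'\rangle$ exactly. By the block structure it equals $\min_s\lambda_{\min}$ of the compression of $\Gamma^{-1}$ to $\Sigma_s$. Both the objective and the filter constraint are scale-invariant in the sense that $\Gamma\mapsto t\Gamma$ shifts the two log terms by opposite amounts and leaves $D_{\max}$ unchanged. I can therefore normalize this minimum to $1$, which removes the third log term and turns it into a constraint on $\Gamma$. The remaining task is to convert that constraint, stated on compressions of $\Gamma^{-1}$, into the stated form $\Gamma\succeq I_{f^{-1}(s)}$, possibly after relabeling $\Gamma\leftrightarrow\Gamma^{-1}$ as permitted by the first step. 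For this I would use the Schur-complement identities relating $(\Gamma^{-1})_{SS}$ to $\Gamma_{SS}$, and \Cref{lem:adversary-support-replacement}-type restrictions to $\supp(\mu)$, to show that the optimal value is unchanged. I would verify each inequality direction separately, since only one direction is an immediate consequence of Schur complements.

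The main obstacle is the exchange $\min_{W'}\sup_\Gamma=\sup_\Gamma\min_{W'}$. The weak direction ($\geq$) is free. For the reverse I would fix a threshold $t$ and write $\mathrm{AAdv}\leq t$ as an inequality that must hold for every $c>1$: $c^{t}\cF_c^2(\kbr{\mu},W'/\alpha\mid\{S_i\})\geq1$. For fixed $c$, the map $W'\mapsto\cF_c(\kbr{\mu},W')$ is concave by joint concavity (item~\ref{item:filtered-fidelity-joint-concavity}) and continuous (item~\ref{item:filtered-fidelity-continuity}), and the witness set $\widehat{\cW}^\mu_f$ is convex and compact by \Cref{lem:compactness-gmueps}. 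So for each fixed $c$, Sion's theorem (\Cref{thm:sion-minimax}) applied to the arithmetic form of \Cref{prop:equiv-filtered-fidelity} swaps $\inf_\Gamma$ with $\max_{W'}$. The remaining difficulty is that the order of $\sup_c$ and $\min_{W'}$ must also be controlled uniformly in $c$. I would handle this with the log-log convexity in $c$ (item~\ref{item:filtered-fidelity-interpolation}): it makes $\log c\mapsto -2\log\cF_c$ concave in an appropriate sense, so the family of constraints indexed by $c$ behaves like a convex family. Combined with compactness and lower semicontinuity (item~\ref{item:avgdelta2-lower-semicontinuity}), this should allow a finite-intersection argument that produces a single $W'$ working for all $c$ simultaneously.
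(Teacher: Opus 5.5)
Your overall architecture matches the paper's: unfold $\bar{\delta}_2^\mu$, exchange the witness minimum with $\sup_c$ and with $\inf_\Gamma$, solve the witness problem in closed form, then rescale $\Gamma$. Your exchange argument is workable. For a threshold $t$, the map $(W',x)\mapsto-2\log\cF_{2^x}(\kbr{\mu},W'/\alpha\mid\{S_i\})-tx$ is convex in $W'$ and, by log-log convexity, concave in $x=\log c$. So Sion over the compact set $\widehat{\cW}^\mu_f$ (or your finite-intersection argument) gives the swap; the paper does the same after substituting $r=1/\log c$ and using a perspective argument.

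The genuine error is in the closed-form step. For fixed $\Gamma$ your quotient is \emph{decreasing} in $\langle\Gamma^{-1},W'\rangle$. So minimizing it over witnesses means \emph{maximizing} $\langle\Gamma^{-1},W'\rangle$, which gives $\max_s\lambda_{\max}(P_s\Gamma^{-1}P_s)$ with $P_s$ the projector onto $\Sigma_s$ (the paper's $\max_s\|\Gamma^{-1}\circ J_S\circ J_{f^{-1}(s)}\|$), not $\min_s\lambda_{\min}$. Normalizing your version yields the constraints $(\Gamma^{-1})|_{\Sigma_s}\succeq I$, which define a different program. Take a problem with one answer valid on every input and $\mu$ of full support. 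Then $\mathrm{AAdv}^\mu_1(f)=0$ (take $W=J$). Your constraint reduces to $\|\Gamma\|=1$ and admits $\Gamma=\kbr{v}+\epsilon(I-\kbr{v})$ with $v\perp\ket{\mu}$ and $0<\epsilon<1$. Its quotient $-\log\epsilon/\max_iD_{\max}(\Gamma\circ S_i\Vert\Gamma)$ is strictly positive. Relabeling $\Gamma\leftrightarrow\Gamma^{-1}$ cannot repair this, because the objective $\langle\Gamma,\kbr{\mu}\rangle$ is not invariant under it.

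With the direction fixed, the conversion you were unsure about is an exact equivalence, not a one-sided Schur-complement bound. For $\Gamma\succ0$ and a projector $P$, $\|P\Gamma^{-1}P\|=\|\Gamma^{-1/2}P\Gamma^{-1/2}\|$, so $\|P\Gamma^{-1}P\|\leq1$ if and only if $\Gamma\succeq P$. By the scale invariance you noted, you may normalize $\max_s\|P_s\Gamma^{-1}P_s\|=1$. Relaxing this to $\leq1$ does not change the supremum, since rescaling by a factor $m<1$ only increases the numerator. Finally, since $\Sigma_s\subseteq\supp(\mu)$, you still need to pass from $\Gamma\succeq P_s$ to $\Gamma\succeq I_{f^{-1}(s)}$. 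The paper does this by replacing $\Gamma$ with $\Gamma\circ J_S+I_{\Sigma\setminus S}$, which preserves $\langle\Gamma,\kbr{\mu}\rangle$ and does not increase $\max_iD_{\max}(\Gamma\circ S_i\Vert\Gamma)$.
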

\begin{proof}
    We begin with the primal formulation
    \begin{align*}
        \mathrm{AAdv}^{\mu}_{\alpha}(f)
        &=\min_{W\in\cW^{\mu}_{f}}
        \bar{\delta}_2^{\mu}(J,W/\alpha\mid\{S_i\})\\
        &=\min_{W\in\cW^{\mu}_{f}}
        \sup_{c>1}
        \frac{-2\log\cF_c(\kbr{\mu},W\circ\kbr{\mu}/\alpha)}{\log c}\\
        &=\min_{W\in\cW^{\mu}_{f}}
        \sup_{r>0}
        -2r\log\cF_{2^{r^{-1}}}(\kbr{\mu},W\circ\kbr{\mu}/\alpha).
    \end{align*}
    Here we substituted $r:=1/\log c$, with $c>1$ corresponding to $r>0$.
    \paragraph{Swapping $\min_{W\in \cW^{\mu}_f} \sup_{r>0}$.} To apply Sion's minimax, we first
    restrict the set of minimizers. Let $S:=\supp(\mu)$, let $J_S$ be the indicator matrix of
    $S\times S$, and define
    \begin{align*}
        \widehat{\cW}^{\mu}_{f}
        :=\{W\in\cW^{\mu}_{f}:W\circ J_S=W\}.
    \end{align*}
    Since the value depends on $W$ only through $W\circ\kbr{\mu}$, by
    \Cref{lem:adversary-support-replacement} we may replace $\cW^{\mu}_{f}$ by
    $\widehat{\cW}^{\mu}_{f}$ to get

    \begin{align*}
        \mathrm{AAdv}^{\mu}_{\alpha}(f)
        =\min_{W\in\widehat{\cW}^{\mu}_{f}}\sup_{r>0}
        -2r\log\cF_{2^{r^{-1}}}(\kbr{\mu},W\circ\kbr{\mu}/\alpha).
    \end{align*}
    The set $\widehat{\cW}^{\mu}_{f}$ is convex and compact (see \Cref{lem:compactness-gmueps}), and
    the set $(0,\infty)$ is convex. The following claim establishes the remaining continuity,
    convexity, and concavity needed to apply Sion's minimax.
    \begin{claim}
        The function $F(W,r) := -2r\log \cF_{2^{r^{-1}}}(\kbr{\mu}, W\circ \kbr{\mu}/\alpha)$ is
        jointly continuous, convex in $W$, and concave in $r$.
    \end{claim}
    \begin{proof}
        Convexity in $W$ is a direct consequence of
        \Crefitem{item:filtered-fidelity-joint-concavity}. For  concavity in $r$, let
        $h(x):=-2\log\cF_{2^x}(\kbr{\mu},W\circ\kbr{\mu}/\alpha)$ for $x>0$. This function is
        concave. Indeed, for $\lambda \in [0,1]$, we have
    \begin{align*}
        h(\lambda x_1 + (1-\lambda)x_2) &=
        -2\log \cF_{2^{\lambda x_1 + (1-\lambda)x_2 }}(\kbr{\mu},W\circ \kbr{\mu}/\alpha)  \\
        &\geq
        -2\log \left(\cF_{2^{x_1}}(\kbr{\mu},W\circ \kbr{\mu}/\alpha)^{\lambda} \cF_{2^{x_2}}(\kbr{\mu},W\circ \kbr{\mu}/\alpha)^{(1-\lambda)}\right)\\
        &= \lambda h(x_1) + (1-\lambda) h(x_2).
    \end{align*}
    Here we used \Crefitem{item:filtered-fidelity-interpolation}. Moreover, since $h$ is concave,
    the function $r \cdot h(1/r)$ is also concave. More specifically, for $r_1,r_2>0$, let
    $r=\lambda r_1+(1-\lambda)r_2$. Then, we can decompose
    \begin{align*}
        \frac{1}{r}
        =\frac{\lambda r_1}{r}\cdot\frac{1}{r_1}
        +\frac{(1-\lambda)r_2}{r}\cdot\frac{1}{r_2}.
    \end{align*}
    Therefore, by Jensen's inequality
    \begin{align*}
        r h\left(\frac{1}{r}\right)
        &\geq r\left(
        \frac{\lambda r_1}{r}h\left(\frac{1}{r_1}\right)
        +\frac{(1-\lambda)r_2}{r}h\left(\frac{1}{r_2}\right)\right)
        =\lambda r_1h\left(\frac{1}{r_1}\right)
        +(1-\lambda)r_2h\left(\frac{1}{r_2}\right).
    \end{align*}
    We now verify continuity. By \Crefitem{item:filtered-fidelity-continuity}, the map $(W,r)
    \to \cF_{2^{1/r}}(\kbr{\mu},W\circ \kbr{\mu}/\alpha)$ is jointly continuous. Moreover, by
    \Crefitem{item:filtered-fidelity-vanishing}, $\cF_{2^{1/r}}(\kbr{\mu},W\circ
    \kbr{\mu}/\alpha)>0$ as $\Tr[W\circ \kbr{\mu}] = 1$ and $\kbr{\mu}\circ W\circ
    \kbr{\mu}/\alpha\neq 0$. Since $\log$ is continuous on $(0,\infty)$, $F(W,r)$ is jointly
    continuous.
    \end{proof}
    Applying Sion's minimax, we get
    \begin{align*}
        &\begin{progalign}
        \progexprrow
        {\mathrm{AAdv}^{\mu}_{\alpha}(f)}
        {\sup_{r>0}\min_{W\in\widehat{\cW}^{\mu}_{f}}
        \left[-2r\log\cF_{2^{1/r}}
        \bigl(\kbr{\mu},W\circ\kbr{\mu}/\alpha\bigr)\right]}\\[0.6em]
        \progrow
        {}
        {\sup_{r>0}}
        {-r\log\left[
        \max_{W\in\widehat{\cW}^{\mu}_{f}}\inf_{\Gamma}
        \langle\Gamma,\kbr{\mu}\rangle
        \langle\Gamma^{-1},W\circ\kbr{\mu}/\alpha\rangle
        \right]}
        {\Gamma\succ0;\\
        \Gamma\circ S_i\preceq2^{1/r}\Gamma\quad\forall i\in[n]}\\[0.6em]
        \progrow
        {}
        {\sup_{r>0}}
        {-r\log\left[
        \max_{W\in\widehat{\cW}^{\mu}_{f}}\inf_{\Gamma}
        \langle\Gamma^{-1},W\circ\kbr{\mu}/\alpha\rangle
        \right]}
        {\Gamma\succ0,\quad\langle\Gamma,\kbr{\mu}\rangle=1;\\
        \Gamma\circ S_i\preceq2^{1/r}\Gamma\quad\forall i\in[n]}.
        \end{progalign}
    \end{align*}
    \paragraph{Swapping $\max_{W\in \widehat{\cW}^{\mu}_f}\inf_{\Gamma}$.} For each fixed $r>0$, let
    \begin{align*}
        \cL:=\left\{\Gamma\succ0:
        \langle\Gamma,\kbr{\mu}\rangle=1,\quad
        \Gamma\circ S_i\preceq2^{1/r}\Gamma\quad\forall i\in[n]\right\}.
    \end{align*}
    This set is convex. The function
    $(\Gamma,W)\mapsto\langle\Gamma^{-1},W\circ\kbr{\mu}/\alpha\rangle$ is convex in $\Gamma$ and
    linear in $W$, since inversion is operator convex and $W\circ\kbr{\mu}/\alpha\succeq0$.
    Therefore, we can apply Sion's minimax once again and obtain
    \begin{align*}
        &\begin{progalign}
        \progexprrow
        {\mathrm{AAdv}^{\mu}_{\alpha}(f)}
        {\sup_{r>0}-r\log\left[
        \inf_{\Gamma\in\cL}\max_{W\in\widehat{\cW}^{\mu}_{f}}
        \langle\Gamma^{-1},W\circ\kbr{\mu}/\alpha\rangle
        \right]}\\[0.6em]
        \progrow
        {}
        {\sup_{\Gamma}\sup_{c>1}}
        {\frac{-\log\left[
        \max_{W\in\widehat{\cW}^{\mu}_{f}}
        \langle\Gamma^{-1},W\circ\kbr{\mu}/\alpha\rangle
        \right]}{\log c}}
        {\Gamma\succ0,\quad\langle\Gamma,\kbr{\mu}\rangle=1;\\
        \Gamma\circ S_i\preceq c\Gamma\quad\forall i\in[n]}.
        \end{progalign}
    \end{align*}
    \paragraph{Replacing $\sup_{c>1}$ with $D_{\max}$.} For this step, let us abbreviate
    $d_{\Gamma}:=\max_{i\in[n]}D_{\max}(\Gamma\circ S_i\Vert\Gamma)\geq0$ and
    $M_{\Gamma}:=\max_{W\in\widehat{\cW}^{\mu}_{f}}
    \langle\Gamma^{-1},W\circ\kbr{\mu}/\alpha\rangle>0$. We consider the following cases.
    \begin{enumerate}
        \item If $d_{\Gamma}=0$, then $\Gamma$ is diagonal. Since
        $\langle\Gamma,\kbr{\mu}\rangle=1$, there exists $x\in S$ such that $\Gamma(x,x)\leq1$.
        Taking $W=\kbr{x}/\mu(x)$ gives $M_{\Gamma}\geq1/(\alpha\Gamma(x,x))\geq1$.
        $$
        \sup_{c>1}\frac{-\log M_{\Gamma}}{\log c}= 0
        $$
        \item
        If $d_{\Gamma}>0$ and $M_{\Gamma}\geq1$, we have
        \begin{align*}
            \sup_{c>1, \log c>d_{\Gamma}}\frac{-\log M_{\Gamma}}{\log c}
            = \sup_{t>d_{\Gamma}}\frac{-\log M_{\Gamma}}{t}
            = 0.
        \end{align*}
        \item Finally, if $d_{\Gamma} >0$ and $M_{\Gamma}<1$, we get
            \begin{align*}
                  \sup_{c>1, \log c>d_{\Gamma}}\frac{-\log M_{\Gamma}}{\log c}
                = \sup_{t>d_{\Gamma}}\frac{-\log M_{\Gamma}}{t}
                = \frac{-\log M_{\Gamma}}{d_{\Gamma}}.
            \end{align*}
    \end{enumerate}
    Therefore, defining the quotient to be zero whenever its denominator vanishes, we can rewrite
    the program as
    \begin{align*}
        &\begin{progalign}
        \progrow
        {\mathrm{AAdv}^{\mu}_{\alpha}(f)}
        {\sup_{\Gamma}}
        {\frac{-\log\left[
        \max_{W\in\widehat{\cW}^{\mu}_{f}}
        \langle\Gamma^{-1},W\circ\kbr{\mu}/\alpha\rangle
        \right]}{\max_{i\in[n]}D_{\max}(\Gamma\circ S_i\Vert\Gamma)}}
        {\Gamma\succ0,\quad\langle\Gamma,\kbr{\mu}\rangle=1}.
        \end{progalign}
    \end{align*}
    \paragraph{Resolving $\max_{W\in \widehat{\cW}^{\mu}_f}$} Next, we remove the maximization over
    $W$ as follows.
    \begin{claim}
        For every $\Gamma\succ0$,
        $
            \max_{W\in\widehat{\cW}^{\mu}_{f}}
            \langle\Gamma^{-1},W\circ\kbr{\mu}\rangle
            =\max_{s\in\Omega}
            \|\Gamma^{-1}\circ J_S\circ J_{f^{-1}(s)}\|,
        $
        where $J_{f^{-1}(s)}:=\sum_{\substack{x,y\in\Sigma\\s\in f(x)\cap f(y)}}\ket{x}\bra{y}$ is
        the indicator matrix of $f^{-1}(s)\times f^{-1}(s)$.
    \end{claim}
    \begin{proof}
    Expanding the definition of $\widehat{\cW}^{\mu}_{f}$, we have
    \begin{align*}
        \widehat{\cW}^{\mu}_{f}
        &=\left\{\sum_s W^s: W^s\succeq0,\ W^s\circ J_S=W^s,\ W^s(x,x)=0\ (s\notin f(x)),\ \sum_s\Tr[W^s\circ\kbr{\mu}]=1\right\}\\
        &=\left\{\sum_s W^s: W^s\succeq0,\ W^s\circ J_S\circ J_{f^{-1}(s)}=W^s,\ \sum_s\Tr[W^s\circ\kbr{\mu}]=1\right\},
    \end{align*}
    where we used the equivalence between the diagonal-zero condition and $W^s\circ
    J_{f^{-1}(s)}=W^s$ for PSD blocks. Write $\Omega_S:=\{s\in\Omega:J_S\circ J_{f^{-1}(s)}\neq0\}$.
    Since $W\circ J_S=W$, the map $W\mapsto X:=W\circ\kbr{\mu}$ can be uniquely inverted and gives a
    bijection from $\widehat{\cW}^{\mu}_{f}$ to
    \begin{align*}
        \cX_f
        &=\left\{\sum_{s\in\Omega_S}X^s:X^s\succeq0,\ X^s\circ J_S\circ J_{f^{-1}(s)}=X^s,\ \sum_s\Tr[X^s]=1\right\}\\
        &=\left\{\E_{s\sim p}X^s:X^s\succeq0,\ X^s\circ J_S\circ J_{f^{-1}(s)}=X^s,\ \Tr[X^s]=1,\ p\text{ a distribution on }\Omega_S\right\}.
    \end{align*}
    The second equality follows by taking $p(s):=\Tr[X^s]$ and normalizing $(X')^s:=X^s/p(s)$ when
    $p(s)>0$.\footnote{The blocks $(X')^s$ with $p(s)=0$ may be chosen arbitrarily among trace-one
    PSD matrices in their answer subspaces.} Let $p$ range over distributions on $\Omega_S$. Then,
    \begin{align*}
        &\begin{progalign}
        \progexprrow[]
        {}
        {\max_{W\in\widehat{\cW}^{\mu}_{f}}
        \langle\Gamma^{-1},W\circ\kbr{\mu}\rangle}\\[0.6em]
        \progrow
        {}
        {\max_p\max_{\{X^s\succeq0\}}}
        {\E_{s\sim p}\langle\Gamma^{-1},X^s\rangle}
        {\Tr[X^s]=1,\quad X^s\circ J_S\circ J_{f^{-1}(s)}=X^s}\\[0.6em]
        \progrow
        {}
        {\max_p\E_{s\sim p}\max_{X^s\succeq0}}
        {\langle\Gamma^{-1},X^s\circ J_S\circ J_{f^{-1}(s)}\rangle}
        {\Tr[X^s]=1,\quad X^s\circ J_S\circ J_{f^{-1}(s)}=X^s}\\[0.6em]
        \progrow
        {}
        {\max_p\E_{s\sim p}\max_{X^s\succeq0}}
        {\langle\Gamma^{-1}\circ J_S\circ J_{f^{-1}(s)},X^s\rangle}
        {\Tr[X^s]=1,\quad X^s\circ J_S\circ J_{f^{-1}(s)}=X^s}\\[0.6em]
        \progrow
        {}
        {\max_p\E_{s\sim p}\max_{X^s\succeq0}}
        {\langle\Gamma^{-1}\circ J_S\circ J_{f^{-1}(s)},X^s\rangle}
        {\Tr[X^s]\leq1,\quad X^s\circ J_S\circ J_{f^{-1}(s)}=X^s}\\[0.6em]
        \progrow
        {}
        {\max_p\E_{s\sim p}\max_{X^s\succeq0}}
        {\langle\Gamma^{-1}\circ J_S\circ J_{f^{-1}(s)},X^s\rangle}
        {\Tr[X^s]\leq1}\\[0.6em]
        \progexprrow
        {}
        {\max_p\E_{s\sim p}
        \|\Gamma^{-1}\circ J_S\circ J_{f^{-1}(s)}\|}\\[0.6em]
        \progexprrow
        {}
        {\max_{s\in\Omega}
        \|\Gamma^{-1}\circ J_S\circ J_{f^{-1}(s)}\|}.
        \end{progalign}
    \end{align*}
    In the third-to-last equality, we remove the support condition using
    \Cref{lem:adversary-support-replacement}, since $X^s\mapsto X^s\circ J_S\circ J_{f^{-1}(s)}$
    preserves positivity, the trace bound, and the objective. The penultimate equality is
    trace--spectral norm duality. The final equality follows by taking $p$ to be a point mass at a
    maximizing answer; answers outside $\Omega_S$ contribute zero.
    \end{proof}
    As a result, we can write
    \begin{align*}
    \mathrm{AAdv}^{\mu}_{\alpha}(f) &=
    \progline{
    {\sup_{\Gamma}}}
    {\frac{-\log\left[
    \max_{W\in\widehat{\cW}^{\mu}_{f}}
    \langle\Gamma^{-1},W\circ\kbr{\mu}/\alpha\rangle
    \right]}{\max_{i\in[n]}D_{\max}(\Gamma\circ S_i\Vert\Gamma)}}
    {\Gamma\succ0,\quad\langle\Gamma,\kbr{\mu}\rangle=1} \\
    &=
    \progline{
    {\sup_{\Gamma}}}
    {\frac{\log \alpha -\log \max_{s} \|\Gamma^{-1}\circ J_S\circ J_{f^{-1}(s)}\|}{\max_{i\in[n]}D_{\max}(\Gamma\circ S_i\Vert\Gamma)}}
    {\Gamma\succ0,\quad\langle\Gamma,\kbr{\mu}\rangle=1}
    \end{align*}
    Finally, set $m:=\max_{s\in\Omega}\|\Gamma^{-1}\circ J_S\circ J_{f^{-1}(s)}\|$ and
    $\Gamma':=m\Gamma$. Then $\langle\Gamma',\kbr{\mu}\rangle=m$ and
    $\max_{s\in\Omega}\|(\Gamma')^{-1}\circ J_S\circ J_{f^{-1}(s)}\|=1$, while the denominator is
    unchanged. Renaming $\Gamma'$ as $\Gamma$, we get
    \begin{align*}
        &\begin{progalign}
        \progrow
        {\mathrm{AAdv}^{\mu}_{\alpha}(f)}
        {\sup_{\Gamma\succ0}}
        {\frac{\log\alpha-\log\langle\Gamma,\kbr{\mu}\rangle}
        {\max_{i\in[n]}D_{\max}(\Gamma\circ S_i\Vert\Gamma)}}
        {\max_{s\in\Omega}\|\Gamma^{-1}\circ J_S\circ J_{f^{-1}(s)}\|=1}\\[0.6em]
        \progrow
        {}
        {\sup_{\Gamma\succ0}}
        {\frac{\log\alpha-\log\langle\Gamma,\kbr{\mu}\rangle}
        {\max_{i\in[n]}D_{\max}(\Gamma\circ S_i\Vert\Gamma)}}
        {\max_{s\in\Omega}\|\Gamma^{-1}\circ J_S\circ J_{f^{-1}(s)}\|\leq1}\\[0.6em]
        \progrow
        {}
        {\sup_{\Gamma\succ0}}
        {\frac{\log\alpha-\log\langle\Gamma,\kbr{\mu}\rangle}
        {\max_{i\in[n]}D_{\max}(\Gamma\circ S_i\Vert\Gamma)}}
        {\Gamma\succeq I_S\circ I_{f^{-1}(s)}\quad\forall s\in\Omega}.
        \end{progalign}
    \end{align*}
    For the penultimate equality, if the maximum is $m<1$, replacing $\Gamma$ by $m\Gamma$ makes it
    equal to one without decreasing the objective.
    \paragraph{Removing support filter.}
    Finally, we remove the $I_S$ filter in the constraint. Let $\Gamma\succ0$ be a feasible
    adversary matrix with $\Gamma\succeq I_S\circ I_{f^{-1}(s)}$ for all $s\in\Omega$. Define
    $\Gamma':=\Gamma\circ J_S+I_{\Sigma\setminus S}$. Then $\Gamma'\succ0$ and
    \begin{align*}
        \Gamma'
        &=\Gamma\circ J_S+I_{\Sigma\setminus S}\\
        &\succeq I_S\circ I_{f^{-1}(s)}+I_{\Sigma\setminus S}
        \succeq I_{f^{-1}(s)}
        \qquad\forall s\in\Omega.
    \end{align*}
    Moreover, $\langle\Gamma',\kbr{\mu}\rangle=\langle\Gamma,\kbr{\mu}\rangle$. For every $i\in[n]$
    and $c\geq1$ satisfying $\Gamma\circ S_i\preceq c\Gamma$, we have
    \begin{align*}
        \Gamma'\circ S_i-c\Gamma'
        &=(\Gamma\circ J_S+I_{\Sigma\setminus S})\circ S_i
        -c(\Gamma\circ J_S+I_{\Sigma\setminus S})\\
        &=(\Gamma\circ S_i-c\Gamma)\circ J_S
        +(1-c)I_{\Sigma\setminus S}\\
        &\preceq0,
    \end{align*}
    where we used the Schur product theorem and $c\geq1$. Therefore,
    \begin{align*}
        D_{\max}(\Gamma'\circ S_i\Vert\Gamma')
        &=\inf\{\log c:c\geq1,\ \Gamma'\circ S_i\preceq c\Gamma'\}\\
        &\leq\inf\{\log c:c\geq1,\ \Gamma\circ S_i\preceq c\Gamma\}\\
        &=D_{\max}(\Gamma\circ S_i\Vert\Gamma).
    \end{align*}
    Therefore, the value of the optimization problem can only increase. This yields
    \begin{align*}
        &\begin{progalign}
        \progrow
        {\mathrm{AAdv}^{\mu}_{\alpha}(f)}
        {\sup_{\Gamma\succ0}}
        {\frac{\log\alpha-\log\langle\Gamma,\kbr{\mu}\rangle}
        {\max_{i\in[n]}D_{\max}(\Gamma\circ S_i\Vert\Gamma)}}
        {\Gamma\succeq I_{f^{-1}(s)}\quad\forall s\in\Omega.}
        \end{progalign} \qedhere
    \end{align*}
\end{proof}

\subsubsection{Bounding query complexity}
We relate the Alberti adversary measure to distributional quantum query complexity.

\begin{proposition}\label{prop:aadv-lowerbound}\label{prop:aadv-upperbound} Let
    $\Sigma\subseteq\{0,1\}^{n}$, let $f\subseteq\Sigma\times\Omega$ be a search problem, and let
    $\mu:\Sigma\to[0,1]$ be a distribution. Then, for success parameters $\alpha=\alpha(n)$ and
    $\beta=\beta(n)$ with $0<\beta<\alpha<1$,
    \begin{align*}
        \frac{\alpha-\beta}
        {10\alpha\log\left(\frac{2(1-\beta)}{\alpha-\beta}\right)}
        \mathsf{Q}_{1-\beta}^{\mu}(f)
        \leq \mathrm{AAdv}_{\alpha}^{\mu}(f)
        \leq \mathsf{Q}_{1-\alpha}^{\mu}(f).
    \end{align*}
    The right-hand inequality also holds for $\alpha=1$.
\end{proposition}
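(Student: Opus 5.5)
Both inequalities follow by combining the fidelity characterization (\Cref{thm:fidelity-charecterization}) with the two bounds on $\bar{\delta}_2^\mu$, \Cref{lem:filtered-delta2-lowerbound} and \Cref{lem:filtered-delta2-upperbound}. The work is in matching the success parameter $\alpha$ in $W/\alpha$ with the fidelity threshold $\sqrt{\alpha}$.

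\textbf{Upper bound ($\mathrm{AAdv}\le \mathsf{Q}$), valid also for $\alpha=1$.} Let $T:=\mathsf{Q}^\mu_{1-\alpha}(f)$. By \Cref{thm:fidelity-charecterization} there is $\Psi\in\cG^\mu_{1-\alpha}(f)$ with $\mathsf{Q}(\Psi)=T$. By \Cref{def:valid-grams} there is then a witness $W\in\cW^\mu_f$ with $\cF(\Psi\circ\kbr{\mu},W\circ\kbr{\mu})\ge\sqrt{\alpha}$. Since $\Tr[W\circ\kbr{\mu}]=1$, \Cref{lem:filtered-delta2-lowerbound} gives $T=\mathsf{Q}(\Psi)\ge\bar{\delta}_2^\mu(J,W/\alpha\mid\{S_i\})$. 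The right-hand side is at least the minimum over $\cW^\mu_f$, which is $\mathrm{AAdv}^\mu_\alpha(f)$. Nothing here uses $\alpha<1$, so the bound holds for $\alpha\in(0,1]$.

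\textbf{Lower bound.} Let $W\in\cW^\mu_f$ attain the minimum in \Cref{def:aadv}, so that $d:=\bar{\delta}_2^\mu(J,W/\alpha\mid\{S_i\})=\mathrm{AAdv}^\mu_\alpha(f)$. By the footnote to \Cref{def:aadv}, $d\le n<\infty$, so the finiteness hypothesis of \Cref{lem:filtered-delta2-upperbound} holds. That lemma, applied with parameters $\beta<\alpha<1$, yields a Gram matrix $\Psi$ with $\Psi\circ I=I$, $\cF(\Psi\circ\kbr{\mu},W\circ\kbr{\mu})\ge\sqrt{\beta}=\sqrt{1-(1-\beta)}$, and
\[
\mathsf{Q}(\Psi)\le \frac{10\alpha}{\alpha-\beta}\log\!\left(\frac{2(1-\beta)}{\alpha-\beta}\right) d.
\]
Hence $\Psi\in\cG^\mu_{1-\beta}(f)$ with witness $W$. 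By \Cref{thm:fidelity-charecterization}, $\mathsf{Q}^\mu_{1-\beta}(f)\le\mathsf{Q}(\Psi)$, and rearranging gives the stated inequality.

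\textbf{Main obstacle.} There is essentially none beyond bookkeeping. The one point to verify is that the minimizing witness in $\mathrm{AAdv}$ is attained and has finite $\bar{\delta}_2^\mu$, which the footnote to \Cref{def:aadv} provides. The case $d=0$ is already handled inside \Cref{lem:filtered-delta2-upperbound}.
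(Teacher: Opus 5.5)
Your proposal is correct and follows the paper's proof essentially step for step. The right-hand inequality comes from \Cref{thm:fidelity-charecterization} together with \Cref{lem:filtered-delta2-lowerbound}, and the left-hand inequality comes from applying \Cref{lem:filtered-delta2-upperbound} to a minimizing witness for \Cref{def:aadv}. Your explicit check of that lemma's finiteness hypothesis, via the footnote bound $\mathrm{AAdv}^\mu_\alpha(f)\le n$, is a small detail the paper leaves implicit.
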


\begin{proof}
    We first prove the lower bound as follows. By \Cref{thm:fidelity-charecterization}, there exists
    $\Psi\in\cG^\mu_{1-\alpha}(f)$ such that $\mathsf{Q}(\Psi)=\mathsf{Q}^\mu_{1-\alpha}(f)$. By
    \Cref{def:valid-grams}, there exists $W\in\cW^\mu_f$ such that
    $\cF(\Psi\circ\kbr{\mu},W\circ\kbr{\mu})\geq\sqrt{\alpha}$. From
    \Cref{lem:filtered-delta2-lowerbound}, we immediately get
    \begin{align*}
        \mathsf{Q}^\mu_{1-\alpha}(f)=\mathsf{Q}(\Psi)
        \geq\bar{\delta}_2^\mu(J,W/\alpha\mid\{S_i\})
        \geq\mathrm{AAdv}^\mu_\alpha(f),
    \end{align*}
    where the last inequality follows from \Cref{def:aadv}.
    \par
    Let $W\in\cW^\mu_f$ attain the minimum in \Cref{def:aadv}, so that
    $\bar{\delta}_2^\mu(J,W/\alpha\mid\{S_i\})=\mathrm{AAdv}^\mu_\alpha(f)$. Applying
    \Cref{lem:filtered-delta2-upperbound} gives a Gram matrix $\Psi$ with $\Psi\circ I=I$ and
    $\cF(\Psi\circ\kbr{\mu},W\circ\kbr{\mu})\geq\sqrt{\beta}$ such that
    \begin{align*}
        \mathsf{Q}(\Psi)
        \leq
        \frac{10\alpha}{\alpha-\beta}
        \log\left(\frac{2(1-\beta)}{\alpha-\beta}\right)
        \mathrm{AAdv}^\mu_\alpha(f).
    \end{align*}
    Hence $\Psi\in\cG^\mu_{1-\beta}(f)$ by \Cref{def:valid-grams}, and
    \Cref{thm:fidelity-charecterization} gives $\mathsf{Q}^\mu_{1-\beta}(f)\leq\mathsf{Q}(\Psi)$,
    completing the proof.
\end{proof}

\subsection{Direct-product theorem} \label{sec:direct-product}
In this subsection, we prove direct-product theorems for the Alberti adversary measure and
distributional quantum query complexity.

\subsubsection{\texorpdfstring{Exact direct product for $\mathrm{AAdv}$}{Exact direct product for the Alberti adversary}} We first show that the Alberti adversary
measure satisfies an exact direct-product identity.
\begin{proposition}\label{prop:aadv-direct-product} Let $\Sigma\subseteq\{0,1\}^{n}$, and let
    $f\subseteq\Sigma\times\Omega$ be a search problem. Let $\mu:\Sigma\to[0,1]$ be a distribution,
    $\alpha=\alpha(n)\in(0,1]$, and $k\in\bN$ with $k\geq1$. Then,
    \begin{align*}
        \mathrm{AAdv}_{\alpha^k}^{\mu^{\otimes k}}(f^{\otimes k})
        =k\cdot\mathrm{AAdv}_{\alpha}^{\mu}(f).
    \end{align*}
\end{proposition}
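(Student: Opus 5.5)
We prove the two inequalities separately: the upper bound from the primal form in \Cref{def:aadv}, and the lower bound from the dual form in \Cref{lem:aadv-universal-hardness-cert}. Throughout, the filters of the $k$-copy instance are $S_{ij}:=J^{\otimes(k-1)}\otimes_i S_j$, as in \Cref{prop:avgdelta2-properties}.

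\textbf{Upper bound.} Let $W\in\cW^\mu_f$ attain the minimum defining $\mathrm{AAdv}^\mu_\alpha(f)$. As in the proof of \Cref{claim:direct-sum-upper-target}, the tensor power $W^{\otimes k}$ lies in $\cW^{\mu^{\otimes k}}_{f^{\otimes k}}$. The decomposition is $W^{\otimes k}=\sum_{\mathbf{s}}W^{s^1}\otimes\cdots\otimes W^{s^k}$; correctness holds coordinatewise; and the normalization is multiplicative. Since $W^{\otimes k}/\alpha^k=(W/\alpha)^{\otimes k}$ and $J^{\otimes k}$ is the all-ones matrix on $\Sigma^k$, the tensorization property of $\bar\delta_2$ (\Cref{prop:avgdelta2-properties}(iii)) gives
\[
\mathrm{AAdv}^{\mu^{\otimes k}}_{\alpha^k}(f^{\otimes k})\le\bar\delta_2^{\mu^{\otimes k}}\bigl(J^{\otimes k},(W/\alpha)^{\otimes k}\mid\{S_{ij}\}\bigr)=k\,\bar\delta_2^\mu(J,W/\alpha\mid\{S_j\})=k\,\mathrm{AAdv}^\mu_\alpha(f).
\]

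\textbf{Lower bound.} Take any $\Gamma\succ0$ that is feasible for the dual of $f$, i.e.\ $\Gamma\succeq I_{f^{-1}(s)}$ for all $s\in\Omega$. We use $\Gamma^{\otimes k}\succ0$ as a candidate for $f^{\otimes k}$ and verify three things.

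\emph{Feasibility.} For $\mathbf{s}=(s^1,\dots,s^k)$, the set $(f^{\otimes k})^{-1}(\mathbf{s})$ is the product $\prod_\ell f^{-1}(s^\ell)$, so $I_{(f^{\otimes k})^{-1}(\mathbf{s})}=\bigotimes_\ell I_{f^{-1}(s^\ell)}$. From $\Gamma\succeq P_\ell:=I_{f^{-1}(s^\ell)}\succeq0$ we get $\Gamma^{\otimes k}\succeq\bigotimes_\ell P_\ell$ by a telescoping argument, replacing one factor at a time: each difference has the form $(\cdots)\otimes(\Gamma-P_\ell)\otimes(\cdots)$ with PSD factors, hence is PSD.

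\emph{Numerator.} The numerator is additive: $\log\alpha^k-\log\langle\Gamma^{\otimes k},\kbr{\mu}^{\otimes k}\rangle=k\bigl(\log\alpha-\log\langle\Gamma,\kbr{\mu}\rangle\bigr)$.

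\emph{Denominator.} The filter satisfies $\Gamma^{\otimes k}\circ S_{ij}=\Gamma^{\otimes(k-1)}\otimes_i(\Gamma\circ S_j)$. By additivity of $D_{\max}$ under tensor products, together with $D_{\max}(\Gamma\Vert\Gamma)=0$, we get $D_{\max}(\Gamma^{\otimes k}\circ S_{ij}\Vert\Gamma^{\otimes k})=D_{\max}(\Gamma\circ S_j\Vert\Gamma)$. Hence the denominator is unchanged.

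The objective of $\Gamma^{\otimes k}$ is therefore exactly $k$ times that of $\Gamma$. This also holds when the denominator is $0$, where both sides vanish by convention. Taking the supremum over $\Gamma$ gives $\mathrm{AAdv}^{\mu^{\otimes k}}_{\alpha^k}(f^{\otimes k})\ge k\,\mathrm{AAdv}^\mu_\alpha(f)$.

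\textbf{Main obstacle.} The delicate step is the denominator. The additivity of $D_{\max}$ stated in the preliminaries is for general PSD pairs, but $\Gamma\circ S_j$ need not be PSD. It is, however, Hermitian, and $\Gamma\succ0$. So we argue directly from the characterization $2^{D_{\max}}=\lambda_{\max}(\Gamma^{-1/2}(\Gamma\circ S_j)\Gamma^{-1/2})$, which remains valid for Hermitian $\rho$ with invertible $\sigma$ (reading $D_{\max}$ as $\log\inf\{c:\rho\preceq c\sigma\}$). For the tensor product, $(\Gamma^{\otimes k})^{-1/2}\bigl(\Gamma^{\otimes(k-1)}\otimes_i(\Gamma\circ S_j)\bigr)(\Gamma^{\otimes k})^{-1/2}=I\otimes_i\Gamma^{-1/2}(\Gamma\circ S_j)\Gamma^{-1/2}$, which has the same largest eigenvalue. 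We must also check that the convention $c>1$ versus $c\ge1$ matches: since $\Gamma\circ S_j$ has the same diagonal as $\Gamma$, its trace equals $\Tr\Gamma$, which forces $\lambda_{\max}\ge1$.
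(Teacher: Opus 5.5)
Your proposal is correct and follows the paper's proof: the upper bound lifts an optimal witness to $W^{\otimes k}$ and applies tensorization of $\bar{\delta}_2$, and the lower bound lifts a dual certificate to $\Gamma^{\otimes k}$, using $I_{(f^{\otimes k})^{-1}(\mathbf{s})}=\bigotimes_\ell I_{f^{-1}(s^\ell)}$ for feasibility and tensor additivity of $D_{\max}$ to show the denominator is unchanged. One small correction: your ``main obstacle'' does not arise, because $\Gamma\circ S_j=D_j\Gamma D_j\succ0$ with $D_j=\mathrm{Diag}((-1)^{x_j})$ (equivalently, a Schur product with the rank-one PSD matrix $S_j$), so the PSD additivity of $D_{\max}$ from the preliminaries applies directly, and your eigenvalue argument, though valid, is not needed.
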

\begin{proof}
    We prove each direction as follows.
    \paragraph{Proof of $\leq$.}
    Let $W\in\cW_f^\mu$ be such that
    $\mathrm{AAdv}_\alpha^\mu(f)=\bar{\delta}_2^\mu\left(J,W/\alpha\mid\{S_j\}\right)$. We claim
    that $W^{\otimes k} \in \cW^{\mu^{\otimes k}}_{f^{\otimes k}}$, which immediately gives
    \begin{align*}
        \mathrm{AAdv}^{\mu^{\otimes k}}_{\alpha^k}(f^{\otimes k})
        &\leq\bar{\delta}_2^{\mu^{\otimes k}}\left(
        J^{\otimes k},\frac{W^{\otimes k}}{\alpha^k}
        \mid\{S_{ij}\}_{i\in[k],j\in[n]}\right)\\
        &=\bar{\delta}_2^{\mu^{\otimes k}}\left(
        J^{\otimes k},\frac{W^{\otimes k}}{\alpha^k}
        \mid\bigcup_{i\in[k]}
        \left\{J^{\otimes(i-1)}\otimes S_j\otimes J^{\otimes(k-i)}\right\}_{j\in[n]}
        \right)\\
        &=k\,\bar{\delta}_2^\mu\left(J,\frac{W}{\alpha}\mid\{S_j\}_{j\in[n]}\right)\\
        &=k\,\mathrm{AAdv}_\alpha^\mu(f),
    \end{align*}
    where the tensorization equality follows from \Crefitem[iii]{prop:avgdelta2-properties}.
    \par
    We now prove the claim. Let $W=\sum_{s\in\Omega}W^s$ be a decomposition of $W$ where
    $W^s\succeq0$ and $W^s(x,x)=0$ whenever $s\notin f(x)$. Then, we have
    \begin{align*}
        W^{\otimes k}
        =\left(\sum_{s\in\Omega}W^s\right)^{\otimes k}
        =\sum_{s_1,\dots,s_k\in\Omega}W^{s_1}\otimes\cdots\otimes W^{s_k}.
    \end{align*}
    First of all, each summand on the right-hand side is PSD since each tensor factor is. Next, let
    $\mathbf{s}=(s_1,\dots,s_k)\in\Omega^k$ and $\bx=(x_1,\dots,x_k)\in\Sigma^k$ be such that
    $\mathbf{s}\notin f^{\otimes k}(\bx)$. Therefore, there exists an index $j$ such that $s_j\notin
    f(x_j)$. Consequently,
    \begin{align*}
        \left(W^{s_1}\otimes\cdots\otimes W^{s_k}\right)(\bx,\bx)
        =\prod_{\ell=1}^{k}W^{s_\ell}(x_\ell,x_\ell)=0.
    \end{align*}
    Finally, $\Tr[W^{\otimes k}\circ\kbr{\mu}^{\otimes k}] =\bigl(\Tr[W\circ\kbr{\mu}]\bigr)^k=1$,
    which completes the proof of the claim.

    \paragraph{Proof of $\geq$.}
    For the lower bound, we use the dual formulation in \Cref{lem:aadv-universal-hardness-cert}:
    \begin{align*}
        \mathrm{AAdv}_\alpha^\mu(f)
        &=\progat{6cm}{\sup_{\Gamma\succ0}}
        {\frac{\log\alpha-\log\langle\Gamma,\kbr{\mu}\rangle}
        {\max_{j\in[n]}D_{\max}(\Gamma\circ S_j\Vert\Gamma)}}
        {\Gamma\succeq I_{f^{-1}(s)}\quad\forall s\in\Omega}.
    \end{align*}
    We use the following claim to lift each
    adversary matrix for $\mathrm{AAdv}_\alpha^\mu(f)$ to an adversary matrix for
    $\mathrm{AAdv}_{\alpha^k}^{\mu^{\otimes k}}(f^{\otimes k})$.
    \begin{claim}
        Let $\Gamma\succ0$ satisfy $\Gamma\succeq I_{f^{-1}(s)}$ for every $s\in\Omega$. Then,
        \begin{enumerate}[(i)]
            \item For every $\mathbf{s}\in\Omega^k$, $\Gamma^{\otimes k}\succeq I_{(f^{\otimes
            k})^{-1}(\mathbf{s})}$.
            \item $\displaystyle \max_{i\in[k],j\in[n]}D_{\max}(\Gamma^{\otimes k}\circ
            S_{ij}\Vert\Gamma^{\otimes k}) =\max_{j\in[n]}D_{\max}(\Gamma\circ S_j\Vert\Gamma)$.
        \end{enumerate}
    \end{claim}
    Assuming the claim, for any feasible $\Gamma$ we have
    \begin{align*}
        \mathrm{AAdv}_{\alpha^k}^{\mu^{\otimes k}}(f^{\otimes k})
        &\geq\frac{\log(\alpha^k)-\log\langle\Gamma^{\otimes k},\kbr{\mu}^{\otimes k}\rangle}
        {\max_{i\in[k],j\in[n]}D_{\max}(\Gamma^{\otimes k}\circ S_{ij}\Vert\Gamma^{\otimes k})}\\
        &=\frac{k\log\alpha-\log\left(\langle\Gamma,\kbr{\mu}\rangle^k\right)}
        {\max_{j\in[n]}D_{\max}(\Gamma\circ S_j\Vert\Gamma)}\\
        &=k\,\frac{\log\alpha-\log\langle\Gamma,\kbr{\mu}\rangle}
        {\max_{j\in[n]}D_{\max}(\Gamma\circ S_j\Vert\Gamma)}.
    \end{align*}
    Taking the supremum over feasible $\Gamma$ gives
    \begin{align*}
        \mathrm{AAdv}_{\alpha^k}^{\mu^{\otimes k}}(f^{\otimes k})
        &\geq\progat{6cm}{\sup_{\Gamma\succ0}}
        {k\,\frac{\log\alpha-\log\langle\Gamma,\kbr{\mu}\rangle}
        {\max_{j\in[n]}D_{\max}(\Gamma\circ S_j\Vert\Gamma)}}
        {\Gamma\succeq I_{f^{-1}(s)}\quad\forall s\in\Omega}\\
        &=k\,\mathrm{AAdv}_\alpha^\mu(f).
    \end{align*}
    We now prove the claim. For every $\mathbf{s}=(s_1,\dots,s_k)\in\Omega^k$, we have
    \begin{align*}
        I_{(f^{\otimes k})^{-1}(\mathbf{s})}
        &=\sum_{\substack{x_1,\dots,x_k\in\Sigma\\s_i\in f(x_i)\ \forall i\in[k]}}
        \kbr{x_1\cdots x_k}\\
        &=\bigotimes_{i=1}^{k}
        \left(\sum_{\substack{x_i\in\Sigma\\s_i\in f(x_i)}}\kbr{x_i}\right)\\
        &=\bigotimes_{i=1}^{k}I_{f^{-1}(s_i)}.
    \end{align*}
    Item (i) then follows from $\Gamma\succeq I_{f^{-1}(s)}$ for every $s\in\Omega$ and monotonicity
    of tensor products of positive semidefinite matrices.
    \par
    For the second part, for every $i\in[k]$ and $j\in[n]$, we have
    \begin{align*}
        D_{\max}(\Gamma^{\otimes k}\circ S_{ij}\Vert\Gamma^{\otimes k})
        &=D_{\max}\left(
        \Gamma^{\otimes(i-1)}\otimes(\Gamma\circ S_j)\otimes\Gamma^{\otimes(k-i)}
        \Vert\Gamma^{\otimes k}\right)\\
        &=(k-1)D_{\max}(\Gamma\Vert\Gamma)+D_{\max}(\Gamma\circ S_j\Vert\Gamma)\\
        &=D_{\max}(\Gamma\circ S_j\Vert\Gamma),
    \end{align*}
    where we used $D_{\max}(A\otimes B\Vert C\otimes D)=D_{\max}(A\Vert C)+D_{\max}(B\Vert D)$ and
    $D_{\max}(\Gamma\Vert\Gamma)=0$. Taking the maximum over $i,j$ proves item (ii) and completes
    the proof.
\end{proof}
\subsubsection{Direct product for query complexity}
Combining the direct-product identity with the query-complexity bounds gives the following theorem.
\begin{theorem}\label{thm:direct-product-distributional-query} Let $\Sigma\subseteq\{0,1\}^{n}$, and
    let $f\subseteq\Sigma\times\Omega$ be a search problem. Let $\mu:\Sigma\to[0,1]$ be a
    distribution, let $\alpha=\alpha(n)$ and $\beta=\beta(n)$ be success parameters with
    $0<\beta<\alpha<1$, and let $k\in\bN$ with $k\geq1$. Then,
    \begin{align*}
        \mathsf{Q}_{1-\alpha^k}^{\mu^{\otimes k}}(f^{\otimes k})
        \geq
        \frac{k(\alpha-\beta)}
        {10\alpha\log\left(\frac{2(1-\beta)}{\alpha-\beta}\right)}
        \mathsf{Q}_{1-\beta}^{\mu}(f).
    \end{align*}
\end{theorem}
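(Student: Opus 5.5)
The plan is a three-step chain: apply the upper bound of \Cref{prop:aadv-upperbound} to the $k$-copy problem, use the exact direct product of \Cref{prop:aadv-direct-product} to pass to one copy, and then apply the lower bound of \Cref{prop:aadv-lowerbound} to one copy. No new machinery is needed.

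First, consider the $k$-fold instance $f^{\otimes k}$ on $\Sigma^k\subseteq\{0,1\}^{nk}$ with distribution $\mu^{\otimes k}$ and success parameter $\alpha^k\in(0,1]$. The right-hand inequality of \Cref{prop:aadv-upperbound} gives
\[
\mathsf{Q}_{1-\alpha^k}^{\mu^{\otimes k}}(f^{\otimes k})\geq \mathrm{AAdv}^{\mu^{\otimes k}}_{\alpha^k}(f^{\otimes k}).
\]
This inequality is stated for all success parameters in $(0,1]$, so only the $\alpha$-side is used here and no $\beta$ is needed at the $k$-copy level. Two hypotheses should be checked. The sign filters of $f^{\otimes k}$ are exactly $S_{ij}=J^{\otimes(i-1)}\otimes S_j\otimes J^{\otimes(k-i)}$, because coordinate $j$ of block $i$ is an input coordinate of the product instance. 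Also, $f^{\otimes k}$ is a search problem with output set $\Omega^k$, so the proposition applies.

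Second, \Cref{prop:aadv-direct-product} gives $\mathrm{AAdv}^{\mu^{\otimes k}}_{\alpha^k}(f^{\otimes k})=k\,\mathrm{AAdv}^\mu_\alpha(f)$. The filter family used in that proposition is the same $\{S_{ij}\}$ as above, so the two statements match.

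Third, the left-hand inequality of \Cref{prop:aadv-lowerbound} with $0<\beta<\alpha<1$ gives
\[
\mathrm{AAdv}^\mu_\alpha(f)\geq \frac{\alpha-\beta}{10\alpha\log\bigl(\frac{2(1-\beta)}{\alpha-\beta}\bigr)}\,\mathsf{Q}^\mu_{1-\beta}(f).
\]
Chaining the three displays and multiplying by $k$ yields the theorem.

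None of these steps is a real obstacle; the only care needed is bookkeeping. One must check that the product filters and the product witness set match those in the definitions, and that the case $\alpha^k$ close to $0$ causes no issue; the upper bound holds for every $\alpha\in(0,1]$, so it does not.
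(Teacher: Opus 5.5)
Your proposal is correct and follows the paper's proof step for step. You bound $\mathsf{Q}_{1-\alpha^k}^{\mu^{\otimes k}}(f^{\otimes k})\geq\mathrm{AAdv}^{\mu^{\otimes k}}_{\alpha^k}(f^{\otimes k})$ via the right-hand inequality of \Cref{prop:aadv-lowerbound}, apply the exact identity $\mathrm{AAdv}^{\mu^{\otimes k}}_{\alpha^k}(f^{\otimes k})=k\,\mathrm{AAdv}^{\mu}_{\alpha}(f)$ from \Cref{prop:aadv-direct-product}, and finish with the left-hand (algorithmic) inequality of the same proposition at one copy. This is exactly the paper's chain.
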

\begin{proof}
    By \Cref{prop:aadv-lowerbound,prop:aadv-direct-product}, we have
    \begin{align*}
        \mathsf{Q}_{1-\alpha^k}^{\mu^{\otimes k}}(f^{\otimes k})
        &\geq\mathrm{AAdv}_{\alpha^k}^{\mu^{\otimes k}}(f^{\otimes k}) =k\cdot\mathrm{AAdv}_{\alpha}^{\mu}(f).
    \end{align*}
    Applying \Cref{prop:aadv-upperbound} completes the proof.
\end{proof}

\section*{AI Acknowledgment}
ChatGPT was used extensively to brainstorm different adversary measures
and to compute their duals. Substantive assistance concerned the development
of the filtered-fidelity, multiplicative-\(\gamma_2\), and Alberti-adversary
formulations through discussions directed by the authors. Assistance with
duality and minimax arguments was particularly substantial in
\Cref{lem:dual-sfadv-hardness,lem:aadv-universal-hardness-cert}
and \Cref{prop:equiv-filtered-fidelity}. GPT also assisted with the proof of
\Cref{prop:filtered-fidelity-properties} and the extension of the
catalyst-reuse lemma (\Cref{lem:catalyst-reuse}), and supplied proof arguments
for \Cref{lem:decision-delta2-bounds} and \Cref{thm:distributional-xor}.
The models used were mainly gpt-5.6-sol, gpt-6-astra, and gpt-6.1-sol.

The tools were also used to assist the authors with proofreading,
phrasing, clarity, and \LaTeX{} presentation. All AI outputs used in this work were checked
by the authors. The document was written
by the human authors.

\printbibliography

\appendix
\section{\texorpdfstring{Proof of \Cref{lem:compactness-gmueps}: compactness and convexity}{Proof of
the compactness and convexity lemma}}\label{app:compactness}

\begin{proof}[Proof of \Cref{lem:compactness-gmueps}]
    Let $\lambda\in [0,1]$, $W_1 , W_2 \in \cW^{\mu}_{f}$ with corresponding decompositions $W_1 =
    \sum_{s\in \Omega} W_1^{s}\in \cW^{\mu}_{f}$ and $W_{2} = \sum_{s\in \Omega} W_2^{s}$ as in
    \Cref{def:witness-set}. Then, define $ W^{s} := \lambda W_1^{s} + (1-\lambda) W_2^{s}$; it
    satisfies both PSDness and $W^{s}(x,x) = 0$ whenever $s\notin f(x)$. Consequently, $W := \lambda
    W_1 + (1-\lambda) W_2 = \sum_{s\in\Omega}  W^{s}$ also is in $\cW^{\mu}_{f}$ as $\Tr[W\circ
    \kbr{\mu}] = 1$.
    \par
    For closedness of $\cW^{\mu}_f$, consider a converging family $W_{k} \to W$ with decompositions
    $W_{k} = \sum_{s} W^{s}_k$. Since $W_k\to W$, there exists $r \geq 0$ such that $\|W_k\|\leq r$.
    Let $\mathcal{B} := \{X: \|X\| \leq r\}$. Since $0\preceq W_k^s\preceq W_k$, we have
    $\|W_k^s\|\leq\|W_k\|\leq r$. Hence, for all $s\in \Omega$ and $k\in \bN$, $W^{s}_k \in
    \mathcal{B}$. Moreover, $\mathcal{B}$ is compact, and hence any sequence of elements $W^{s}_k$
    has a convergent subsequence. Since $\Omega$ is finite, we can successively extract convergent
    subsequences to obtain a common subsequence $\{k_j:j\in\bN\}$ along which $W^s_{k_j}\to W^s$ for
    every $s\in\Omega$. Consequently,
    $$W = \lim_{j\to \infty} W_{k_j} = \lim_{j\to \infty} \sum_{s\in \Omega} W^{s}_{k_j} =
    \sum_{s\in \Omega}\lim_{j\to \infty} W^{s}_{k_j} = \sum_{s\in \Omega} W^{s}.$$ Since PSDness and
    the correctness constraints are preserved under limits, and $\Tr[W\circ\kbr{\mu}]=1$, this is a
    feasible decomposition of $W$. It follows that $W \in \cW^{\mu}_{f}$, and $\cW^{\mu}_f$ is
    closed.
    \par
    Next, let $S:=\supp(\mu)$. The constrained set $\widehat{\cW}^{\mu}_{f}$ is closed and convex,
    since it is the intersection of $\cW^{\mu}_{f}$ with the closed linear subspace $\{W:W\circ
    J_S=W\}$. Moreover, for every $W\in\widehat{\cW}^{\mu}_{f}$, we have
    \begin{align*}
        \|W\|\leq\Tr[W]
        =\sum_{x\in S}W(x,x)
        \leq\frac{1}{\min_{x\in S}\mu(x)}
        \sum_{x\in S}\mu(x)W(x,x)
        =\frac{1}{\min_{x\in S}\mu(x)}.
    \end{align*}
    Thus, $\widehat{\cW}^{\mu}_{f}$ is bounded, and hence compact.
    \par
    We now turn to the set $\cG^{\mu}_{\eps}(f)$. For $W=\sum_sW^s\in\cW^{\mu}_{f}$, replacing each
    $W^s$ by $W^s\circ J_S$ preserves PSDness and the correctness constraints. Also, $(W\circ
    J_S)\circ\kbr{\mu}=W\circ\kbr{\mu}$, so normalization under $\mu$ is preserved and $W\circ
    J_S\in\widehat{\cW}^{\mu}_{f}$. The fidelity objective is unchanged as well. Therefore, by
    \Cref{lem:adversary-support-replacement}, for every Gram matrix $\Psi$,
    \begin{align*}
        \sup_{W\in\cW^{\mu}_{f}}\cF(W\circ\kbr{\mu},\Psi\circ\kbr{\mu})
        =\max_{W\in\widehat{\cW}^{\mu}_{f}}\cF(W\circ\kbr{\mu},\Psi\circ\kbr{\mu}),
    \end{align*}
    where the maximum on the right is attained by compactness and continuity of fidelity.
    Consequently, the maximum in \Cref{def:valid-grams} is attained, and we may choose its witnesses
    in $\widehat{\cW}^{\mu}_{f}$.
    \par
    For convexity, let $\lambda\in [0,1]$, $\Psi_1 , \Psi_2 \in \cG^{\mu}_{\eps}(f)$ with
    corresponding witnesses $W_1, W_2 \in \widehat{\cW}_f^{\mu}$ where $\cF(\Psi_i \circ \kbr{\mu},
    W_i\circ \kbr{\mu}) \geq \sqrt{1-\eps}$. First of all, $\lambda \Psi_1 + (1-\lambda)\Psi_2$ is
    also a Gram matrix. Moreover, by convexity of $\widehat{\cW}^{\mu}_f$, $\lambda W_1 +
    (1-\lambda) W_2 \in \widehat{\cW}^{\mu}_f$. By joint concavity of fidelity,
    \begin{align*}
        &\cF( (\lambda \Psi_1 + (1-\lambda)\Psi_2)\circ \kbr{\mu},(\lambda W_1 + (1-\lambda)W_2)\circ \kbr{\mu} ) &\\
        \geq& \lambda \cF(\Psi_1\circ \kbr{\mu},W_1\circ \kbr{\mu})  +
         (1-\lambda)\cF(\Psi_2\circ \kbr{\mu},W_2\circ \kbr{\mu}) \\ \geq& \sqrt{1-\eps}.
    \end{align*}
    For closedness, let $\Psi_k\in\cG^{\mu}_{\eps}(f)$ converge to $\Psi$, with corresponding
    witnesses $W_k\in\widehat{\cW}^{\mu}_{f}$. Then, $\Psi$ is a Gram matrix. By compactness of
    $\widehat{\cW}^{\mu}_{f}$, there exists a subsequence $W_{k_j}$ converging to
    $W\in\widehat{\cW}^{\mu}_{f}$. Therefore, by continuity of fidelity,
    \begin{align*}
        \cF(\Psi\circ\kbr{\mu},W\circ\kbr{\mu})
        =\lim_{j\to\infty}\cF(\Psi_{k_j}\circ\kbr{\mu},W_{k_j}\circ\kbr{\mu})
        \geq\sqrt{1-\eps}.
    \end{align*}
    Thus, $\Psi\in\cG^{\mu}_{\eps}(f)$, and $\cG^{\mu}_{\eps}(f)$ is closed.
    \par
    Finally, the set $\cG^{\mu}_{\eps}(f)$ is bounded as $\Psi \in \cG^{\mu}_{\eps}(f)$ is a Gram
    matrix with bounded norm $\|\Psi\| \leq \Tr[\Psi] = 1\cdot |\Sigma|$. Consequently,
    $\cG^{\mu}_{\eps}(f)$ is a compact subset of $\cL(\Sigma)$.
\end{proof}

\section{\texorpdfstring{Proof of \Cref{lem:catalyst-reuse}: catalyst-reuse lemma}{Proof of the
catalyst-reuse lemma}}\label{app:catalyst-reuse}


\begin{proof}
    Write $s:=\sum_{t=0}^{T-1}r^t$. Let $\mathcal{T}:=\bC^T$ be a clock register with computational
    basis $\{\ket{t}:t\in[T]\}$. Our algorithm will work in the space
    $(\mathcal{T}\otimes\cV)\oplus\cK$; that is, we attach the clock register to the $\cV$-space and
    add the catalyst space $\cK$ to the end.
    \par
    Let $U$ be the input-independent unitary in the lemma. For $t\in[T]$, we define $U_t$ to be the
    extension of $U$ to the space $(\mathcal{T}\otimes\cV)\oplus\cK$, where it applies $U$ jointly
    to the $t$-th $\cV$-component and the catalyst space $\cK$, and leaves the other clock
    components unchanged. More specifically, for all $\{\ket{v_j}\}_{j=1}^{T}\subset\cV$ and
    $\ket{\kappa}\in\cK$,
    \begin{align*}
        U_t\left(
            \left(
                \sum_{j=1}^{T}
                \ket{j}_{\mathcal T}\ket{v_j}_{\cV}
            \right)
            \oplus\ket{\kappa}_{\cK}
        \right)
        =
        \left(
            \sum_{j\neq t}
            \ket{j}_{\mathcal T}\ket{v_j}_{\cV}
            +
            \ket{t}_{\mathcal T}\ket{v_t'}_{\cV}
        \right)
        \oplus\ket{\kappa_t'}_{\cK},
    \end{align*}
    where $ \ket{v_t'}_{\cV}\oplus\ket{\kappa_t'}_{\cK} =
        U\left(\ket{v_t}_{\cV}\oplus\ket{\kappa}_{\cK}\right).$
    \par
    Given the initial vector $\ket{a_x}\oplus \ket{\kappa_x}/\sqrt{s}$, our algorithm proceeds as
    follows.
    \begin{itemize}
        \item First, we split $\ket{a_x}$ into a geometric superposition of $T$ branches by applying
        a unitary $H$ satisfying $H\ket{1}_{\mathcal{T}} = \sum_{j}r^{(j-1)/2}\ket{j}_{\mathcal{T}}/\sqrt{s}$ to
        the clock register; i.e., we prepare the vector
        \begin{align*}
            \ket{\sigma_x^0}
            :=\frac{1}{\sqrt{s}}\left(\sum_{j=1}^{T}r^{(j-1)/2} \ket{j}_{\mathcal{T}} \ket{a_x}_{\cV}\right)\oplus
            \frac{\ket{\kappa_x}_{\cK}}{\sqrt{s}}
        \end{align*}
        \item Then, we inductively apply a series of $T$ many oracle queries $\cO^{x}$ followed by
        the unitary $U_t$ to flip each coordinate $\ket{j}_{\mathcal{T}}\ket{a_x}_{\cV}$ into
        $\ket{j}_{\mathcal{T}} \ket{b_x}_{\cV}$ by reusing the catalyst $\ket{\kappa_x}$. More
        specifically, let $t\in[T]$, and suppose that after $t-1$ steps we have computed the vector
        \begin{align*}
            \ket{\sigma_x^{t-1}}
            &=\frac{1}{\sqrt{s}}\left(\sum_{j=1}^{t-1}r^{(j-1)/2}\ket{j}_{\mathcal T}
                \ket{b_x}_{\cV} + \sum_{j=t}^{T}r^{(j-1)/2}\ket{j}_{\mathcal T}
                \ket{a_x}_{\cV}\right) \oplus \frac{r^{(t-1)/2}\ket{\kappa_{x}}_{\cK}}{\sqrt{s}}
        \end{align*}
        As a result, at step $t$ we obtain the next vector as
        \begin{align*}
            \ket{\sigma^{t}_x} :=& U_t(I \oplus \cO^{x}) \ket{\sigma^{t-1}_x} \\
            =&U_t(I \oplus \cO^{x}) \left(\frac{1}{\sqrt{s}}\left(\sum_{j=1}^{t-1}r^{(j-1)/2}\ket{j}_{\mathcal T}
                \ket{b_x}_{\cV} + \sum_{j=t}^{T}r^{(j-1)/2}\ket{j}_{\mathcal T}
                \ket{a_x}_{\cV}\right) \oplus \frac{r^{(t-1)/2}\ket{\kappa_{x}}_{\cK}}{\sqrt{s}}\right) \\
            =& U_t \left(\frac{1}{\sqrt{s}}\left(\sum_{j=1}^{t-1}r^{(j-1)/2}\ket{j}_{\mathcal T}
            \ket{b_x}_{\cV} + \sum_{j=t}^{T}r^{(j-1)/2}\ket{j}_{\mathcal T}
            \ket{a_x}_{\cV}\right) \oplus \frac{r^{(t-1)/2}\cO^{x}\ket{\kappa_{x}}_{\cK}}{\sqrt{s}}\right) \\
            =&
            \frac{1}{\sqrt{s}}\left(\sum_{j=1}^{t-1}r^{(j-1)/2}\ket{j}_{\mathcal T}
            \ket{b_x}_{\cV}\right) \oplus 0_{\cK}
            + \left(\ket{t}_{\mathcal{T}}\frac{r^{(t-1)/2}\ket{b_x}}{\sqrt{s}}\right)\oplus \frac{r^{t/2}\ket{\kappa_x}_{\cK}}{\sqrt{s}}
            \\ &\quad{}+\frac{1}{\sqrt{s}}\left(\sum_{j=t+1}^{T}r^{(j-1)/2}\ket{j}_{\mathcal T}
            \ket{a_x}_{\cV} \right)\oplus 0_{\cK} \\
            =&\frac{1}{\sqrt{s}}\left(\sum_{j=1}^{t}r^{(j-1)/2}\ket{j}_{\mathcal T}
            \ket{b_x}_{\cV} + \sum_{j=t+1}^{T}r^{(j-1)/2}\ket{j}_{\mathcal T}
            \ket{a_x}_{\cV}\right) \oplus \frac{r^{t/2}\ket{\kappa_{x}}_{\cK}}{\sqrt{s}}
        \end{align*}
        where we used the fact that $U(\ket{a_x}\oplus \cO^{x}\ket{\kappa_x}) = \ket{b_x}\oplus
        \sqrt{r}\,\ket{\kappa_x}$, and the unitary $(I\oplus
        \cO^{x})$ can be implemented using a single query to $\cO^{x}$.
        \item Consequently, after $T$ queries, we obtain
        \begin{align*}
            \ket{{\sigma}_x^T}
            =\frac{1}{\sqrt{s}}\left(\sum_{j=1}^{T}r^{(j-1)/2} \ket{j}_{\mathcal{T}} \ket{b_x}_{\cV}\right)\oplus
            \frac{r^{T/2}\ket{\kappa_x}_{\cK}}{\sqrt{s}}
        \end{align*}
       \item Finally, we unprepare the clock register by applying $H^{\dagger}\ket{\sigma_{x}^{T}}$
        and obtain the desired final vector $\ket{b_x}\oplus r^{T/2}\ket{\kappa_x}/\sqrt{s}$.
    \end{itemize}
    Each step makes exactly one use of $\cO^x$, and therefore, the algorithm makes a total of $T$
    queries.
\end{proof}

\section{Proofs of basic properties}\label{app:basic-properties}

\subsection{\texorpdfstring{Proof of \Cref{prop:avggamma2-properties}: properties of
$\bar{\gamma}_2$}{Proof of the distributional gamma 2 properties}}\label{app:avggamma2-properties}

\begin{proof}
We prove each property as follows.
        \begin{enumerate}[i.]
            \item For the triangle inequality, we have
            \begin{align*}
                \bar{\gamma}_2^{\mu}(A+B|\{ Z_i\})
                &= \max_{\substack{\Gamma\\\max_i\|\Gamma\circ Z_i\|\leq1}}
                |\langle \Gamma, (A+B)\circ\kbr{\mu}\rangle| \\
                &\leq \max_{\substack{\Gamma\\\max_i\|\Gamma\circ Z_i\|\leq1}}
                \bigl(|\langle \Gamma,A\circ\kbr{\mu}\rangle|
                +|\langle \Gamma,B\circ\kbr{\mu}\rangle|\bigr) \\
                &\leq \bar{\gamma}_2^{\mu}(A|\{ Z_i\}) + \bar{\gamma}_2^{\mu}(B|\{ Z_i\}).
            \end{align*}
            \item For $\lambda > 0$, we have
            \[
            \bar{\gamma}_2^{\mu}(\lambda A|\{ Z_i\})
            =\max_{\substack{\Gamma\\\max_i\|\Gamma\circ Z_i\|\leq1}}
            |\langle \Gamma,\lambda A\circ\kbr{\mu}\rangle|
            =|\lambda|\bar{\gamma}_2^{\mu}(A|\{ Z_i\}).
            \]
            \item Since $\gamma_2(\overline{B})=\gamma_2(B)$, we obtain
            \begin{alignat*}{2}
            \bar{\gamma}_2^{\mu}(A\circ B|\{Z_i\}) &=\max_{\Gamma}\;
            |\langle\Gamma\circ\overline{B},A\circ\kbr{\mu}\rangle| &&\quad\mathrm{s.t.}\quad
            \max_i\|\Gamma\circ Z_i\|\leq1 \\
            &\leq\max_{\Gamma}\; |\langle\Gamma,A\circ\kbr{\mu}\rangle| &&\quad\mathrm{s.t.}\quad
            \max_i\|\Gamma\circ Z_i\|\leq\gamma_2(B) \\
            &=\gamma_2(B)\bar{\gamma}_2^{\mu}(A|\{Z_i\}). \intertext{Here, the inequality follows
            from $\|\Gamma\circ\overline{B}\circ Z_i\| \leq\gamma_2(B)\|\Gamma\circ Z_i\|$, and the
            last equality follows by homogeneity. Similarly,}
            \gamma_2(B)\bar{\gamma}_2^{\mu}(A|\{Z_i\circ B\}) &=\max_{\Gamma}\;
            |\langle\Gamma,A\circ\kbr{\mu}\rangle| &&\quad\mathrm{s.t.}\quad \max_i\|\Gamma\circ
            Z_i\circ B\|\leq\gamma_2(B) \\
            &\geq\max_{\Gamma}\; |\langle\Gamma,A\circ\kbr{\mu}\rangle| &&\quad\mathrm{s.t.}\quad
            \max_i\|\Gamma\circ Z_i\|\leq1 \\
            &=\bar{\gamma}_2^{\mu}(A|\{Z_i\}).
            \intertext{If $B$ is real, then}
            \bar{\gamma}_2^{\mu}(A\circ B|\{Z_i\circ B\}) &=\max_{\Gamma}\; |\langle\Gamma\circ
            B,A\circ\kbr{\mu}\rangle| &&\quad\mathrm{s.t.}\quad \max_i\|\Gamma\circ Z_i\circ
            B\|\leq1 \\
            &\leq\bar{\gamma}_2^{\mu}(A|\{Z_i\}),
            \end{alignat*}
            since $\Gamma\circ B$ is feasible for the maximization on the right.
            \item Since $\|\Gamma\circ Z_i\|\leq\|\Gamma\|\gamma_2(Z_i)$ for every $i$, we obtain
            \begin{alignat*}{2}
            \bar{\gamma}_2^{\mu}(A|\{Z_i\}) &=\max_{\Gamma}\; |\langle\Gamma,A\circ\kbr{\mu}\rangle|
            &&\quad\mathrm{s.t.}\quad \max_i\|\Gamma\circ Z_i\|\leq1 \\
            &\geq\max_{\Gamma}\; |\langle\Gamma,A\circ\kbr{\mu}\rangle| &&\quad\mathrm{s.t.}\quad
            \bigl(\max_i\gamma_2(Z_i)\bigr)\|\Gamma\|\leq1 \\
            &=\frac{1}{\max_i\gamma_2(Z_i)} \max_{\Gamma}\; |\langle\Gamma,A\circ\kbr{\mu}\rangle|
            &&\quad\mathrm{s.t.}\quad \|\Gamma\|\leq1 \\
            &=\frac{\|A\circ\kbr{\mu}\|_{\tr}}{\max_i\gamma_2(Z_i)}.
            \end{alignat*}
            The last equality is the duality between the operator norm and the trace norm.
            \item By H\"older's inequality, we have
            \begin{alignat*}{2}
            \bar{\gamma}_2^{\mu}(A\circ B|\{Z_i\}) &=\max_{\Gamma}\;
            |\langle\Gamma\circ\overline{B},A\circ\kbr{\mu}\rangle| &&\quad\mathrm{s.t.}\quad
            \max_i\|\Gamma\circ Z_i\|\leq1 \\
            &\leq\max_{\Gamma}\; \|\Gamma\circ\overline{B}\|\,\|A\circ\kbr{\mu}\|_{\tr}
            &&\quad\mathrm{s.t.}\quad \max_i\|\Gamma\circ Z_i\|\leq1 \\
            &=\|A\circ\kbr{\mu}\|_{\tr} \max_{\Gamma}\; \|\Gamma\circ\overline{B}\|
            &&\quad\mathrm{s.t.}\quad \max_i\|\Gamma\circ Z_i\|\leq1 \\
            &=\|A\circ\kbr{\mu}\|_{\tr}\gamma_2(B|\{Z_i\}),
            \end{alignat*}
            The last equality follows from \Cref{eq:filtered-gamma2-dual} and
            $\gamma_2(\overline{B}|\{Z_i\})=\gamma_2(B|\{Z_i\})$, since the filters are real.
            \item Since $\supp(\mu)\subseteq S$, we have $(A\circ
            J_S)\circ\kbr{\mu}=A\circ\kbr{\mu}$. Thus, the objective is unchanged, proving the first
            equality. For the second equality, we prove each direction separately. Let $\Gamma$ be
            feasible on $\Sigma$. Then, $\|(\Gamma|_S)\circ(Z_i|_S)\|\leq\|\Gamma\circ Z_i\|\leq1$
            for every $i$, so $\Gamma|_S$ is feasible on $S$. The objective is unchanged, as
            \[
                \langle\Gamma,A\circ J_S\circ\kbr{\mu}\rangle
                =\langle\Gamma|_S,(A|_S)\circ\kbr{\mu|_S}\rangle.
            \]
            Taking suprema gives
            \[
                \bar{\gamma}_2^\mu(A\circ J_S\mid\{Z_i\})
                \leq\bar{\gamma}_2^{\mu|_S}(A|_S\mid\{Z_i|_S\}).
            \]
            Conversely, extending any feasible matrix on $S$ by zero preserves the objective and the
            filtered operator norms. Taking suprema gives
            \[
                \bar{\gamma}_2^{\mu|_S}(A|_S\mid\{Z_i|_S\})
                \leq\bar{\gamma}_2^\mu(A\circ J_S\mid\{Z_i\}),
            \]
           which completes the proof. \qedhere
        \end{enumerate}
\end{proof}

\subsection{\texorpdfstring{Proof of \Cref{lem:G-properties}: identities for the composition matrix}{Proof of identities for the composition matrix}}\label{app:G-properties}\label{app:composition-auxiliary}

\begin{proof}
    For (\ref{item:G-entries}), let $\bx,\by\in\Sigma_g^n$. We have
    \begin{align*}
        (\bG^\dagger X\bG)(\bx,\by)
        &= \bra{\bx}\bG^\dagger X\bG\ket{\by}= \bra{\bg(\bx)}X\ket{\bg(\by)}
        = X(\bg(\bx),\bg(\by)).
    \end{align*}
    For (\ref{item:G-hadamard-lift}), applying this entrywise formula gives
    \begin{align*}
        (\bG^\dagger(X_1\circ X_2)\bG)(\bx,\by)
        &= (X_1\circ X_2)(\bg(\bx),\bg(\by))
        = X_1(\bg(\bx),\bg(\by))X_2(\bg(\bx),\bg(\by))\\
        &= (\bG^\dagger X_1\bG)(\bx,\by)\cdot
           (\bG^\dagger X_2\bG)(\bx,\by).
    \end{align*}

    For (\ref{item:G-hadamard-pull-through}), fix $\tx,\ty\in\{0,1\}^n$. Expanding the matrix
    product, we obtain
    \begin{align*}
        \bigl(\bG(Y\circ(\bG^\dagger X\bG))\bG^\dagger\bigr)(\tx,\ty)
        &=
        \bra{\tx}
        \left(\sum_{\bx\in\Sigma_g^n}\ket{\bg(\bx)}\bra{\bx}\right)
        (Y\circ(\bG^\dagger X\bG))
        \left(\sum_{\by\in\Sigma_g^n}\ket{\by}\bra{\bg(\by)}\right)
        \ket{\ty}\\
        &=
        \sum_{\bx,\by\in\Sigma_g^n}
        \braket{\tx}{\bg(\bx)}
        \bra{\bx}Y\ket{\by}
        \bra{\bx}\bG^\dagger X\bG\ket{\by}
        \braket{\bg(\by)}{\ty}\\
        &=
        \sum_{\bx,\by\in\Sigma_g^n}
        \braket{\tx}{\bg(\bx)}
        \bra{\bx}Y\ket{\by}
        \bra{\bg(\bx)}X\ket{\bg(\by)}
        \braket{\bg(\by)}{\ty}\\
        &=
        \sum_{\substack{\bx,\by\in\Sigma_g^n\\
                        \bg(\bx)=\tx,\ \bg(\by)=\ty}}
        \bra{\bx}Y\ket{\by}
        \bra{\bg(\bx)}X\ket{\bg(\by)}\\
        &=
        \bra{\tx}X\ket{\ty}\cdot
        \sum_{\substack{\bx,\by\in\Sigma_g^n\\
                        \bg(\bx)=\tx,\ \bg(\by)=\ty}}
        \bra{\bx}Y\ket{\by}\\
        &=
        \bra{\tx}X\ket{\ty}\cdot
        \sum_{\bx,\by\in\Sigma_g^n}
        \braket{\tx}{\bg(\bx)}
        \bra{\bx}Y\ket{\by}
        \braket{\bg(\by)}{\ty}\\
        &=
        \bra{\tx}X\ket{\ty}\cdot
        \bra{\tx}\bG Y\bG^\dagger\ket{\ty}\\
        &=
        X(\tx,\ty)\cdot(\bG Y\bG^\dagger)(\tx,\ty). \qedhere
        \end{align*}
\end{proof}

\subsection{\texorpdfstring{Proof of \Cref{prop:filtered-fidelity-properties}: properties of
filtered fidelity}{Proof of the filtered-fidelity properties
proposition}}\label{app:filtered-fidelity-properties}

\begin{proof}
    We prove each property as follows.
    \begin{enumerate}[i.]
        \item Let $a,b>0$. Then,
        \begin{align*}
            \cF_c(aA,bB|\{P_i\})
            &=\progat{6cm}{\inf_{\Gamma\succ0}}
            {\sqrt{\langle\Gamma,aA\rangle\langle\Gamma^{-1},bB\rangle}}
            {\Gamma\circ P_i\preceq c\Gamma\quad\forall i}\\
            &=\progat{6cm}{\inf_{\Gamma\succ0}}
            {\sqrt{ab}\sqrt{\langle\Gamma,A\rangle\langle\Gamma^{-1},B\rangle}}
            {\Gamma\circ P_i\preceq c\Gamma\quad\forall i}\\
            &=\sqrt{ab}\cF_c(A,B|\{P_i\}).
        \end{align*}
        \item We prove each direction separately. First,
        \begin{align*}
            &\ \cF_c(A_1,B_1\mid\{P_i\})
            \cF_c(A_2,B_2\mid\{Q_j\})\\
            &=\progat{8.5cm}{\inf_{\Gamma_1,\Gamma_2\succ0}}
            {\sqrt{
                \langle\Gamma_1,A_1\rangle
                \langle\Gamma_1^{-1},B_1\rangle
                \langle\Gamma_2,A_2\rangle
                \langle\Gamma_2^{-1},B_2\rangle
            }}
            {\Gamma_1\circ P_i\preceq c\Gamma_1\quad\forall i\\
             \Gamma_2\circ Q_j\preceq c\Gamma_2\quad\forall j}\\
            &=\progat{8.5cm}{\inf_{\Gamma_1,\Gamma_2\succ0}}
            {\sqrt{
                \langle\Gamma_1\otimes\Gamma_2,A_1\otimes A_2\rangle
                \langle(\Gamma_1\otimes\Gamma_2)^{-1},B_1\otimes B_2\rangle
            }}
            {(\Gamma_1\otimes\Gamma_2)\circ(P_i\otimes J)
             \preceq c(\Gamma_1\otimes\Gamma_2)\quad\forall i\\
             (\Gamma_1\otimes\Gamma_2)\circ(J\otimes Q_j)
             \preceq c(\Gamma_1\otimes\Gamma_2)\quad\forall j}\\
            &\geq\cF_c\!\left(
                A_1\otimes A_2,B_1\otimes B_2
                \mid\{P_i\otimes J\}_i\cup\{J\otimes Q_j\}_j
            \right).
        \end{align*}
        The last inequality follows because $\Gamma_1\otimes\Gamma_2$ is feasible for the combined
        filter family. For the other direction, we use the fidelity formulation in
        \Cref{prop:equiv-filtered-fidelity}. Let $X_i,C_1\succeq0$  be such that
        $\cF_c(A_1,B_1|\{P_i\})=\cF(C_1,B_1)$ and $A_1-C_1=\sum_iX_i\circ(cJ-\overline{P_i})$.
        Similarly, we get $Y_j,C_2\succeq0$ such that $\cF_c(A_2,B_2|\{Q_j\})=\cF(C_2,B_2)$ and
        $A_2-C_2=\sum_jY_j\circ(cJ-\overline{Q_j})$. Then, define $C:=C_1\otimes C_2$,
        $X'_i:=X_i\otimes A_2$, and $Y'_j:=C_1\otimes Y_j$, which are all positive semidefinite. We
        have
        \begin{align*}
            A_1\otimes A_2-C
            &=(A_1-C_1)\otimes A_2+C_1\otimes(A_2-C_2)\\
            &=\sum_i\bigl(X_i\circ(cJ-\overline{P_i})\bigr)\otimes A_2
              +C_1\otimes\sum_j\bigl(Y_j\circ(cJ-\overline{Q_j})\bigr)\\
            &=\sum_iX'_i\circ(cJ\otimes J-\overline{P_i}\otimes J)
              +\sum_jY'_j\circ(cJ\otimes J-J\otimes\overline{Q_j}).
        \end{align*}
        Consequently,
        \begin{align*}
            &\ \cF_c(A_1\otimes A_2,B_1\otimes B_2|\{P_i\otimes J\}_i\cup\{J\otimes Q_j\}_j)\\
            &\geq\cF(C_1\otimes C_2,B_1\otimes B_2)\\
            &=\cF(C_1,B_1)\cF(C_2,B_2)\\
            &=\cF_c(A_1,B_1|\{P_i\})\cF_c(A_2,B_2|\{Q_j\}),
        \end{align*}
        which completes the proof.
 \item
 We claim that it suffices to show $\cF_{\sqrt{c_0c_1}}(A,B|\{P_i\}) \leq
 \sqrt{\cF_{c_0}(A,B|\{P_i\})\cF_{c_1}(A,B|\{P_i\})}$. First, we can extend the mid-point inequality
 to $\cF_{c_0^{1-\lambda}c_1^\lambda}(A,B|\{P_i\}) \leq \cF_{c_0}(A,B|\{P_i\})^{1-\lambda}
 \cF_{c_1}(A,B|\{P_i\})^\lambda$ for any dyadic number $\lambda=k/2^n$ by induction on $n$: if $k$
 is even, we are done as $\lambda=(k/2)/2^{n-1}$. Otherwise, we can write
 \begin{align*}
     \lambda=\frac12\left(\frac{(k-1)/2}{2^{n-1}}+\frac{(k+1)/2}{2^{n-1}}\right).
 \end{align*}
 Then, by the midpoint inequality and the induction hypothesis, the claim extends to
 $\lambda=k/2^n$.
Next, we can extend the result to any $\lambda \in [0,1]$ as dyadic numbers are dense and $\cF_{c}$
is monotone with respect to $c$.
\par
It only remains to prove the claim. Let $\Gamma_0$ and $\Gamma_1$ be arbitrary feasible solutions
for $\cF_{c_0}(A,B|\{P_i\})$ and $\cF_{c_1}(A,B|\{P_i\})$, respectively.
            \par
             The idea is to construct a geometric average of $\Gamma_0$ and $\Gamma_1$. Let
            $\Gamma_1^{1/2} \Gamma_0^{-1/2} = VDW^{\dagger}$ be the singular-value decomposition, where $D\succ
            0$ is diagonal and $V,W$ are unitary. Defining $U =WV^{\dagger}$, we get
            \begin{align*}
                \Gamma:= \Gamma_0^{1/2}U\Gamma_1^{1/2} = \Gamma_0^{1/2} WV^{\dagger} VDW^{\dagger} \Gamma_0^{1/2}
                 = \Gamma_0^{1/2} WDW^{\dagger} \Gamma_0^{1/2} \succ 0.
            \end{align*}
            We first prove that $\Gamma$ is a feasible matrix for $\cF_{\sqrt{c_0c_1}}$. Since
            $\Gamma= \Gamma^{\dagger}$, we have
            \begin{align*}
                \Gamma\Gamma_0^{-1}\Gamma
                &=\Gamma_1^{1/2}U^\dagger
                \Gamma_0^{1/2}\Gamma_0^{-1}\Gamma_0^{1/2}
                U\Gamma_1^{1/2}
                =\Gamma_1^{1/2}U^\dagger U\Gamma_1^{1/2}
                =\Gamma_1.
            \end{align*}
        As a result,
        \begin{align*}
            &\phantom{\iff}~\Gamma_1\succeq\Gamma\Gamma_0^{-1}\Gamma\\
            &\iff
            \begin{bmatrix}\Gamma_0&\Gamma\\\Gamma&\Gamma_1\end{bmatrix}\succeq0
            &&\text{(By the Schur complement)}\\
            &\Rightarrow
            \begin{bmatrix}\Gamma_0&\Gamma\\\Gamma&\Gamma_1\end{bmatrix}
            \circ
            \begin{bmatrix}P_i&P_i\\P_i&P_i\end{bmatrix}
            =
            \begin{bmatrix}\Gamma_0\circ P_i&\Gamma\circ P_i\\\Gamma\circ P_i&\Gamma_1\circ P_i\end{bmatrix}
            \succeq0
            &&\text{(By the Schur product theorem)}\\
            &\Rightarrow
            \begin{bmatrix}c_0\Gamma_0&\Gamma\circ P_i\\\Gamma\circ P_i&c_1\Gamma_1\end{bmatrix}\succeq0
            &&\text{(By choice of }\Gamma_0,\Gamma_1\text{)}\\
            &\iff c_1\Gamma_1\succeq
            \frac1{c_0}(\Gamma\circ P_i)\Gamma_0^{-1}(\Gamma\circ P_i)
            &&\text{(By the Schur complement)}\\
            &\iff c_0c_1\Gamma_1\succeq
            (\Gamma\circ P_i)\Gamma_0^{-1}(\Gamma\circ P_i)\\
            &\iff c_0c_1\Gamma\Gamma_0^{-1}\Gamma\succeq
            (\Gamma\circ P_i)\Gamma_0^{-1}(\Gamma\circ P_i)\\
            &\iff c_0c_1\Gamma_0^{-1/2}\Gamma\Gamma_0^{-1}\Gamma\Gamma_0^{-1/2}
            \succeq
            \Gamma_0^{-1/2}(\Gamma\circ P_i)\Gamma_0^{-1}(\Gamma\circ P_i)\Gamma_0^{-1/2}\\
            &\iff c_0c_1\bigl(\Gamma_0^{-1/2}\Gamma\Gamma_0^{-1/2}\bigr)^2
            \succeq
            \bigl(\Gamma_0^{-1/2}(\Gamma\circ P_i)\Gamma_0^{-1/2}\bigr)^2\\
            &\Rightarrow
            \sqrt{c_0c_1}\,\Gamma_0^{-1/2}\Gamma\Gamma_0^{-1/2}
            \succeq
            \Gamma_0^{-1/2}(\Gamma\circ P_i)\Gamma_0^{-1/2}
            &&\text{(By monotonicity of the square root)}\\
            &\iff \sqrt{c_0c_1}\,\Gamma\succeq\Gamma\circ P_i.
        \end{align*}
        Therefore, we can bound the filtered fidelity as
        \begin{align*}
            \cF_{\sqrt{c_0c_1}}(A,B|\{P_i\})
            \leq \sqrt{\langle\Gamma,A\rangle\langle\Gamma^{-1},B\rangle}.
        \end{align*}
        We bound each factor on the right hand side as follows. Let $X_b:=\Gamma_b^{1/2}A^{1/2}$ for
        $b\in\{0,1\}$. Then,
            \begin{align*}
                \langle \Gamma,A\rangle &= \Tr[\Gamma_0^{1/2}U\Gamma_1^{1/2}A] = \Tr[A^{1/2}\Gamma_0^{1/2}U\Gamma_1^{1/2}A^{1/2}] = \langle X_0,UX_1\rangle \\
                & \leq \|X_0\|_2\|UX_1\|_2 = \|X_0\|_2\|X_1\|_2 = \sqrt{\Tr[X_0^\dagger X_0]\Tr[X_1^\dagger X_1]} \\
                &=\sqrt{\Tr[\Gamma_0^{1/2}A\Gamma_0^{1/2}]\Tr[\Gamma_1^{1/2}A\Gamma_1^{1/2}]} \\
                & = \sqrt{\langle \Gamma_0,A\rangle\langle \Gamma_1,A\rangle}.
            \end{align*}
            Similarly, we can prove
            \begin{align*}
                \langle \Gamma^{-1},B\rangle &\leq \sqrt{\langle \Gamma_0^{-1},B\rangle\langle \Gamma_1^{-1},B\rangle}.
            \end{align*}
            Combining these estimates gives
            \begin{align*}
                \cF_{\sqrt{c_0c_1}}(A,B|\{P_i\})
                &\leq \sqrt{\langle\Gamma,A\rangle\langle\Gamma^{-1},B\rangle}\\
                &\leq \sqrt{
                    \sqrt{\langle\Gamma_0,A\rangle\langle\Gamma_0^{-1},B\rangle}
                    \sqrt{\langle\Gamma_1,A\rangle\langle\Gamma_1^{-1},B\rangle}
                }.
            \end{align*}
            Since $\Gamma_0$ and $\Gamma_1$ were arbitrary feasible solutions, taking infima over
            them completes the proof of the claim.
        \item Let $\lambda\in[0,1]$, and let $A_1,A_2,B_1,B_2\in\Pos(\Sigma)$ be given. By the
        additive formulation of $\cF_c$ in \Cref{prop:equiv-filtered-fidelity}, we have
        \begin{align*}
            &\ \cF_c(\lambda A_1+(1-\lambda)A_2,\lambda B_1+(1-\lambda)B_2|\{P_i\})\\
            &=\inf_{\substack{\Gamma\succ0\\\Gamma\circ P_i\preceq c\Gamma\ \forall i}}
            \frac{\langle\Gamma,\lambda A_1+(1-\lambda)A_2\rangle
            +\langle\Gamma^{-1},\lambda B_1+(1-\lambda)B_2\rangle}{2}\\
            &\geq\lambda\inf_{\substack{\Gamma\succ0\\\Gamma\circ P_i\preceq c\Gamma\ \forall i}}
            \frac{\langle\Gamma,A_1\rangle+\langle\Gamma^{-1},B_1\rangle}{2}
            +(1-\lambda)\inf_{\substack{\Gamma\succ0\\\Gamma\circ P_i\preceq c\Gamma\ \forall i}}
            \frac{\langle\Gamma,A_2\rangle+\langle\Gamma^{-1},B_2\rangle}{2}\\
            &=\lambda\cF_c(A_1,B_1|\{P_i\})+(1-\lambda)\cF_c(A_2,B_2|\{P_i\}).
        \end{align*}
        \item Let $\cE$ be the given channel. By the fidelity formulation in
        \Cref{prop:equiv-filtered-fidelity}, there exist $X_i,C\succeq0$ such that
        $\cF_c(A,B|\{P_i\})=\cF(C,B)$ and $A-C=\sum_iX_i\circ(cJ-\overline{P_i})$. By data
        processing, we have $\cF(\cE(C),\cE(B))\geq\cF(C,B)$. Moreover,
        \begin{align*}
            \cE(A)-\cE(C)
            &=\sum_i\bigl(c\cE(X_i)-\cE(X_i\circ\overline{P_i})\bigr)\\
            &=\sum_i\cE(X_i)\circ(cJ-\overline{P_i}).
        \end{align*}
        Here we used linearity of $\cE$ as well as $\cE(X\circ\overline{P_i}) =
        \cE(X)\circ\overline{P_i}$. Thus, $\cE(C)$ and $\cE(X_i)$ form a feasible solution.
        Consequently,
        \begin{align*}
            \cF_c(\cE(A),\cE(B)|\{P_i\})
            &\geq\cF(\cE(C),\cE(B))
            \geq\cF(C,B)=\cF_c(A,B|\{P_i\}).
        \end{align*}

        \item We prove that the map $(A,B,c) \to \cF_{c}(A,B|\{P_i\})$ is
        both jointly upper semicontinuous and lower semicontinuous, which gives the joint continuity.
        So, let $A_k\to A$, $B_k\to B$, and $c_k \to c$ be given.
        \paragraph{Upper semicontinuity.}
        We want to show $ \limsup_{k} \cF_{c_k}(A_k,B_k|\{P_i\}) \leq \cF_{c}(A,B|\{P_i\})$. For
        simplicity, we denote $F(A,B,\Gamma) :=
        \sqrt{\langle\Gamma,A\rangle\langle\Gamma^{-1},B\rangle}$, which is jointly continuous in $A,B,\Gamma$.
        Let $\Gamma \succ 0$ be a feasible solution with $\Gamma \circ P_i \preceq c \Gamma$.
        For $\eps >0$, we define $\Gamma_{\eps} := \Gamma + \eps I$. Then, for every $i$, we have
        \begin{align*}
            \Gamma_{\eps} \circ P_i \preceq \Gamma \circ P_i + \eps I\circ P_i
            \preceq c \Gamma + \eps I
            = c(\Gamma + \eps I) - (c-1) \eps I =
            c\Gamma_{\eps} - (c-1)\eps I.
        \end{align*}
        Since $c_k\to c$, for large enough $k$, this yields $ \Gamma_{\eps} \circ P_i
            \preceq  c_k\Gamma_{\eps}$. As a result, $\Gamma_{\eps}$ is a feasible solution for such $k$ and we get
        \begin{align*}
            \limsup_{k} \cF_{c_k}(A_k,B_k|\{P_i\}) \leq F(A,B,\Gamma_{\eps}).
        \end{align*}
        Since $\eps>0$ was arbitrary, taking $\eps \to 0$ and using continuity of $F$ gives $  \limsup_{k}
        \cF_{c_k}(A_k,B_k|\{P_i\})  \leq F(A,B,\Gamma)$. Since $\Gamma$ was also an arbitrary
        solution, the inequality follows by taking the infimum of the right hand side over $\Gamma$.
        \paragraph{Lower semicontinuity.}
        We use the following claim.
        \begin{claim}
            Let $c>1$, $X,Y,D\succeq0$.
            \[
                \cF_c(X+D,Y)\leq \cF_c(X,Y)
                + \sqrt{\Tr[D]\Tr[Y]}.
            \]
        \end{claim}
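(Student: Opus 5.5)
The claim states that for $c>1$ and $X,Y,D\succeq0$,
\[
    \cF_c(X+D,Y)\leq \cF_c(X,Y)+\sqrt{\Tr[D]\Tr[Y]}.
\]
The natural route is the additive (arithmetic) formulation of $\cF_c$ from \Cref{prop:equiv-filtered-fidelity}, together with a rescaling trick inside the feasible cone. Two facts about the feasible set drive the argument. First, it is a convex cone, so $t\Gamma$ and $\Gamma+\eps I$ stay feasible; the perturbation case was already checked in the upper-semicontinuity argument, using $I\circ P_i=I\preceq cI$. Second, adding a multiple of $I$ to $\Gamma$ both bounds $\langle\Gamma,D\rangle$ and controls $\Gamma^{-1}$.

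The steps I plan are as follows.

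\textbf{Step 1: fix a near-optimal certificate.} Fix $\eps'>0$ and take a feasible $\Gamma\succ0$ for $\cF_c(X,Y)$ with $\tfrac12(\langle\Gamma,X\rangle+\langle\Gamma^{-1},Y\rangle)\leq\cF_c(X,Y)+\eps'$.

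\textbf{Step 2: perturb the certificate.} For a parameter $s>0$, set $\Gamma_s:=\Gamma+sI$. This is feasible for the same $c$, since $\Gamma_s\circ P_i\preceq c\Gamma+sI\preceq c\Gamma_s$. Because $\Gamma_s\succeq\Gamma$, operator monotonicity of the inverse gives $\Gamma_s^{-1}\preceq\Gamma^{-1}$, and hence $\langle\Gamma_s^{-1},Y\rangle\leq\langle\Gamma^{-1},Y\rangle$.

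\textbf{Step 3: bound the objective.} Writing $\langle\Gamma_s,X+D\rangle=\langle\Gamma,X\rangle+s\Tr X+\langle\Gamma,D\rangle+s\Tr D$ leaves an unwanted term $\langle\Gamma,D\rangle$, which is not controlled by $\Tr D$ alone. So I would use a different certificate that dominates $\Gamma$ only in the inverse direction. A cleaner choice is the harmonic-type combination
\[
    \Gamma':=\bigl(\Gamma^{-1}+s^{-1}I\bigr)^{-1}.
\]
This satisfies $\Gamma'\preceq\Gamma$ and $\Gamma'\preceq sI$, so $\langle\Gamma',X+D\rangle\leq\langle\Gamma,X\rangle+s\Tr D$. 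It also satisfies $\Gamma'^{-1}=\Gamma^{-1}+s^{-1}I$, so $\langle\Gamma'^{-1},Y\rangle=\langle\Gamma^{-1},Y\rangle+s^{-1}\Tr Y$.

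Combining these, the additive objective at $\Gamma'$ is at most $\cF_c(X,Y)+\eps'+\tfrac12(s\Tr D+s^{-1}\Tr Y)$. Choosing $s=\sqrt{\Tr Y/\Tr D}$ makes the extra term equal to $\sqrt{\Tr D\Tr Y}$. The degenerate cases $\Tr D=0$ or $\Tr Y=0$ follow by letting $s\to0$ or $s\to\infty$. Letting $\eps'\to0$ then gives the claim.

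\textbf{Main obstacle: feasibility of $\Gamma'$.} The crux is showing $\Gamma'\circ P_i\preceq c\Gamma'$. The constraint is equivalent to $D_{\max}(\Gamma'\circ P_i\Vert\Gamma')\leq\log c$, and the feasible set is a cone closed under sums but not obviously under this parallel-sum operation.

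My first attempt is the parallel-sum identity $\Gamma'=\Gamma:(sI)$, whose variational form is
\[
    \langle v,(A:B)v\rangle=\inf_{u+w=v}\bigl(\langle u,Au\rangle+\langle w,Bw\rangle\bigr).
\]
I would combine this with the Schur-complement block argument used in the log-log convexity proof. Concretely, $\Gamma'$ is the Schur complement of the block matrix $\left[\begin{smallmatrix}\Gamma+sI&\Gamma\\\Gamma&\Gamma\end{smallmatrix}\right]$. Hadamard-multiplying this block matrix by $P_i\otimes J_2$ preserves positivity, which I hope yields $\Gamma'\circ P_i\preceq c\Gamma'$ through joint concavity of the parallel sum.

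If that fails, the fallback is to avoid $\Gamma'$ altogether. I would instead use the dual max formulation of \Cref{prop:equiv-filtered-fidelity} for $\cF_c(X+D,Y)$. Given an optimal $(C,X_i)$ with $X+D-C=\sum_iX_i\circ(cJ-\overline{P_i})$, I would try to split off $D$ and compare with a feasible point for $\cF_c(X,Y)$ via $\cF(C,Y)\leq\cF(C',Y)+\cF(D',Y)$ with $\cF(D',Y)\leq\sqrt{\Tr D'\Tr Y}$. The difficulty there is that the split must keep $C'$ PSD.
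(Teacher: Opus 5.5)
Your certificate $\Gamma'=(\Gamma^{-1}+s^{-1}I)^{-1}$ is exactly the paper's $\Gamma_\eps$ with $\eps=s^{-1}$. Your estimates $\langle\Gamma',X+D\rangle\le\langle\Gamma,X\rangle+s\Tr[D]$ and $\langle\Gamma'^{-1},Y\rangle=\langle\Gamma^{-1},Y\rangle+s^{-1}\Tr[Y]$, together with your choice of $s$, reproduce the paper's bookkeeping.

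The gap is the step you flag yourself. You never prove $\Gamma'\circ P_i\preceq c\Gamma'$, and that is the only nontrivial content of the claim. Joint concavity of the parallel sum does not give it on its own. What you need is that the positive map $Z\mapsto Z\circ P_i$ can be pushed inside the parallel sum, $\bigl(\Gamma:(sI)\bigr)\circ P_i\preceq(\Gamma\circ P_i):(sI)$; monotonicity then finishes. Also, Hadamard-multiplying the block matrix you wrote, $\left[\begin{smallmatrix}\Gamma+sI&\Gamma\\\Gamma&\Gamma\end{smallmatrix}\right]$, only yields $(\Gamma\circ P_i):(sI)\succeq0$. It says nothing about $\Gamma'\circ P_i$.

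The fix is to Hadamard-multiply the block matrix whose Schur complement vanishes. Since $\Gamma'=\Gamma-\Gamma(\Gamma+sI)^{-1}\Gamma$, the matrix $\left[\begin{smallmatrix}\Gamma+sI&\Gamma\\\Gamma&\Gamma-\Gamma'\end{smallmatrix}\right]$ is PSD. Multiply it entrywise by $\left[\begin{smallmatrix}P_i&P_i\\P_i&P_i\end{smallmatrix}\right]\succeq0$, use $I\circ P_i=I$, and take the Schur complement with $G:=\Gamma\circ P_i$. This gives
\begin{align*}
\Gamma'\circ P_i
&\preceq G-G(G+sI)^{-1}G
 =s\bigl(I-s(G+sI)^{-1}\bigr)
 \preceq s\bigl(I-s(c\Gamma+sI)^{-1}\bigr)\\
&=sc\,\Gamma(c\Gamma+sI)^{-1}
 \preceq sc\,\Gamma(\Gamma+sI)^{-1}
 =c\Gamma'.
\end{align*}
The second inequality uses $G\preceq c\Gamma$ and the fact that inversion reverses the order. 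The last inequality uses $c>1$ on commuting functions of $\Gamma$.

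This is the same chain the paper runs. The paper obtains the first inequality differently: it writes $\Gamma_\eps=\eps^{-1}\bigl(I-(I+\eps\Gamma)^{-1}\bigr)$ and applies $M^{-1}\circ P_i\succeq(M\circ P_i)^{-1}$ with $M=I+\eps\Gamma$. That inequality itself comes from the Schur complement plus the Schur product theorem applied to $\left[\begin{smallmatrix}M&I\\I&M^{-1}\end{smallmatrix}\right]$. With this step supplied your proof is complete, and the dual-formulation fallback is unnecessary.
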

        \begin{proof}
            Let $\Gamma$ be a feasible solution for $\cF_{c}$. We claim that  $\Gamma_{\eps} :=
            (\Gamma^{-1} + \eps I)^{-1}$ is also feasible. Indeed, for any matrix $M\succ 0$, the Schur complement and the Schur product theorem give
            \begin{align*}
                \begin{bmatrix}
                    M & I \\
                    I & M^{-1}
                \end{bmatrix} \succeq 0
                \Rightarrow
                \begin{bmatrix}
                    M\circ P_i & I \\
                    I & M^{-1}\circ P_i
                \end{bmatrix} \succeq 0
                \Rightarrow
                M^{-1} \circ P_i \succeq (M \circ P_i)^{-1}.
            \end{align*}
            Then, we can write
            \begin{align*}
                \Gamma_\eps\circ P_i
                &= (\Gamma^{-1}+\eps I)^{-1}\circ P_i\\
                &= \frac{1}{\eps}\bigl(I-(I+\eps\Gamma)^{-1}\bigr)\circ P_i
                &&\left(\text{By~}\frac{1}{x^{-1}+\eps}
                   =\frac{1}{\eps}\left(1-\frac{1}{1+\eps x}\right)\right)\\
                &\preceq \frac{1}{\eps}\bigl(I-(I+\eps(\Gamma\circ P_i))^{-1}\bigr)
                &&\text{(By the above)}\\
                &\preceq \frac{1}{\eps}\bigl(I-(I+\eps c\Gamma)^{-1}\bigr)
                &&\left(\text{By~}\Gamma\circ P_i\preceq c\Gamma\right)\\
                &= c\Gamma(I+\eps c\Gamma)^{-1}
                &&\left(\text{By~}
                   1-\frac{1}{1+x}
                   =\frac{x}{1+x}\right)\\
                &\preceq c\Gamma(I+\eps\Gamma)^{-1}
                &&\left(\text{By~}c > 1\right)\\
                &= c\Gamma_\eps
                &&\left(\text{By~}\frac{x}{1+\eps x}
                   =\frac{1}{x^{-1}+\eps}\right).
            \end{align*}
            Therefore, we can compute
            \begin{align*}
                \cF_c(X+D,Y)
                &\leq \frac{1}{2}\bigl(
                    \langle\Gamma_{\eps},X+D\rangle
                    + \langle\Gamma_{\eps}^{-1},Y\rangle
                \bigr)\\
                &= \frac{1}{2}\bigl(
                    \langle\Gamma_{\eps},X\rangle
                    + \langle\Gamma_{\eps},D\rangle
                    + \langle\Gamma^{-1}+\eps I,Y\rangle
                \bigr)\\
                &\leq \frac{1}{2}\bigl(
                    \langle\Gamma,X\rangle
                    + \langle\frac{1}{\eps}I,D\rangle
                    + \langle\Gamma^{-1}+\eps I,Y\rangle
                \bigr)
                &&\left(\text{By }
                    \Gamma_{\eps}\preceq\Gamma,\
                    \Gamma_{\eps}\preceq\frac{1}{\eps}I
                \right)\\
                &= \frac{1}{2}\bigl(
                    \langle\Gamma,X\rangle
                    + \langle\Gamma^{-1},Y\rangle
                \bigr)
                + \frac{\Tr[D]}{2\eps}
                + \frac{\eps\Tr[Y]}{2}.
            \end{align*}
            Optimizing $\eps$ in the above estimate, we get that
            \begin{align*}
            \cF_{c}(X+D,Y) \leq \frac{1}{2}\left( \langle\Gamma,X\rangle
            + \langle\Gamma^{-1},Y\rangle\right)
            + \sqrt{\Tr[D]\Tr[Y]}.
            \end{align*}
            Taking the infimum over feasible $\Gamma$ gives the upper bound.
        \end{proof}
        We now prove that $\liminf_{k} \cF_{c_k}(A_k,B_k|\{P_i\}) \geq \cF_{c}(A,B|\{P_i\})$.
         This time, we start with feasible solutions $C,X_i\succeq 0$  such that
        \begin{align*}
            &\cF_{c}(A,B|\{P_i\}) = \cF(C,B), && A - C = \sum_{i} X_i\circ(cJ - \bar{P_i}).
        \end{align*}
        Define $C_k := C + (A_k-A) - (c_k-c)\sum_i X_i$, so $C_k\to C$. If $C_k \succeq 0$, we would
        be done. However, that does not need to be the case. Instead, we take the Jordan
        decomposition $C_k = C_k^{+} - C_k^{-}$, where $C_k^{-}\to 0$, and write
        $(A_k+C_k^-)-C_k^+ = \sum_i X_i\circ(c_kJ-\overline{P_i})$. Consequently, by the above
        claim,

\begin{align*}
    \cF(C_k^{+},B_k) \leq \cF_{c_k}(A_k + C_k^{-}, B_k ) \leq \cF_{c_k}(A_k,B_k) + \sqrt{\Tr[C_k^{-}]\Tr[B_k]}
\end{align*}
    Therefore, we have
    \begin{align*}
        \liminf_{k} \cF_{c_k}(A_k,B_k) &\geq \liminf_{k}\left(\cF(C_k^{+},B_k) - \sqrt{\Tr[C_k^{-}]\Tr[B_k]}\right) \\
                                        &= \cF(C,B)\\
                                        &= \cF_{c}(A,B),
    \end{align*}
    where we used continuity of fidelity as well as $C_{k}^{-} \to 0$.

        \item Let $1<c\leq d$. Then, $\Gamma\circ P_i\preceq c\Gamma$ implies that $\Gamma\circ
        P_i\preceq d\Gamma$. Therefore, $\cF_d(A,B|\{P_i\})\leq\cF_c(A,B|\{P_i\})$. Moreover, for
        any $c>1$, we have $I\circ P_i=I\preceq cI$. Thus,
        \begin{align*}
            \cF_c(A,B|\{P_i\})
            \leq\sqrt{\langle I,A\rangle\langle I,B\rangle}
            =\sqrt{\Tr[A]\Tr[B]}.
        \end{align*}
        Also, $0\leq\cF(A,B)\leq\cF_c(A,B|\{P_i\})$ by Alberti's characterization
        (\Cref{thm:alberti}) and restriction of the infimum. Finally, every fixed $\Gamma\succ0$ is
        feasible for all sufficiently large $c$. Therefore,
        \begin{align*}
            \lim_{c\to\infty}\cF_c(A,B|\{P_i\})
            &=\lim_{c\to\infty}
            \progline{\inf_{\Gamma\succ0}}
            {\sqrt{\langle\Gamma,A\rangle\langle\Gamma^{-1},B\rangle}}
            {\Gamma\circ P_i\preceq c\Gamma\quad\forall i}\\
            &=\inf_{\Gamma\succ0}\sqrt{\langle\Gamma,A\rangle\langle\Gamma^{-1},B\rangle}\\
            &=\cF(A,B).
        \end{align*}
        \item Let $A\preceq A'$ and $B\preceq B'$. For every feasible $\Gamma\succ0$, we have
        $\langle\Gamma,A\rangle\leq\langle\Gamma,A'\rangle$ and
        $\langle\Gamma^{-1},B\rangle\leq\langle\Gamma^{-1},B'\rangle$. Taking infima over the same
        feasible set proves monotonicity.
        \item
        If $A =0$ or $B=0$, the inequality is trivial. So assume $A,B\neq 0$. Moreover,
        we may assume $\Tr[A] = \Tr[B] = 1$ as we can scale both sides of the equation by a factor of $\sqrt{ab}$.
        Let $\Gamma \succ 0$ be an arbitrary matrix. We clip the eigenvalues of $\Gamma$ so that it
        becomes feasible for $\cF_{c}$. More specifically, we define
        \begin{align*}
            h(x) := \min\left\{\frac{1}{\sqrt{c}} + \frac{(\sqrt{c}-1)x}{\sqrt{c}}, \sqrt{c}
            \right\}.
\end{align*}
        Then, since $\sqrt{c} > 1/\sqrt{c}$,
        \begin{align*}
            \frac{1}{\sqrt{c}}\leq h(x) \leq \sqrt{c}.
        \end{align*}
        Therefore,
        \begin{align*}
            h(\Gamma) \circ P_i \preceq \sqrt{c} \cdot I \circ P_i = \sqrt{c} \cdot I = c \cdot (\frac{1}{\sqrt{c}} \cdot I) \preceq c \cdot h(\Gamma)
        \end{align*}
        So $h(\Gamma)$ is feasible for $\cF_{c}$. Additionally, for $x>0$, we observe that
        \begin{align*}
            h(x) \leq \frac{1}{\sqrt{c}} + \frac{(\sqrt{c}-1)x}{\sqrt{c}}
        \end{align*}
        by definition, and
        \begin{align*}
            \frac{1}{h(x)} &\leq \max\{ (\frac{1}{\sqrt{c}} + \frac{(\sqrt{c}-1)x}{\sqrt{c}})^{-1}, (\sqrt{c})^{-1}\}\\
                            &\leq \max\{ \frac{1}{\sqrt{c}} + \frac{(\sqrt{c}-1)}{\sqrt{c}}\frac{1}{x} ,\frac{1}{\sqrt{c}}\} \\
                & = \frac{1}{\sqrt{c}} + \frac{(\sqrt{c}-1)}{\sqrt{c}}\frac{1}{x},
        \end{align*}
        where we used $(a+bx)^{-1} \leq a+ b/x$ for $a,b,x> 0$ with $a+b \geq 1$.
        Therefore,
        \begin{align*}
            \langle h(\Gamma),A\rangle
            + \langle h(\Gamma)^{-1},B\rangle
            &\leq
            \left\langle
                \frac{1}{\sqrt{c}}I
                + \frac{\sqrt{c}-1}{\sqrt{c}}\Gamma, A
            \right\rangle
            + \left\langle
                \frac{1}{\sqrt{c}}I
                + \frac{\sqrt{c}-1}{\sqrt{c}}\Gamma^{-1}, B
            \right\rangle\\
            &= \frac{1}{\sqrt{c}}\bigl(\Tr[A]+\Tr[B]\bigr)
            + \frac{\sqrt{c}-1}{\sqrt{c}}
            \bigl(\langle\Gamma,A\rangle
            + \langle\Gamma^{-1},B\rangle\bigr)\\
            &= \frac{2}{\sqrt{c}}
            + \frac{\sqrt{c}-1}{\sqrt{c}}
            \bigl(\langle\Gamma,A\rangle
            + \langle\Gamma^{-1},B\rangle\bigr).
        \end{align*}
        Since $\Gamma$ was arbitrary, taking the infimum over $\Gamma$ and using the additive
        formulations of fidelity gives
        \begin{align*}
            \cF_c(A,B|\{P_i\})
            \leq \frac{1+(\sqrt{c}-1)\cF(A,B)}{\sqrt{c}}.
        \end{align*}
        \item We prove each direction as follows. First, let $\Gamma\succ0$ be feasible on $\Sigma$. Taking
        principal submatrices gives $(\Gamma|_S)\circ(P_i|_S)\preceq c(\Gamma|_S)$ for every $i$.
        Moreover, the Schur complement gives $(\Gamma|_S)^{-1}\preceq(\Gamma^{-1})|_S$. Since
        $A\circ J_S$ and $B\circ J_S$ are supported on $S$, we have
        \begin{align*}
            \langle\Gamma|_S,A|_S\rangle&=\langle\Gamma,A\circ J_S\rangle,\\
            \langle(\Gamma|_S)^{-1},B|_S\rangle
            &\leq\langle(\Gamma^{-1})|_S,B|_S\rangle
            =\langle\Gamma^{-1},B\circ J_S\rangle.
        \end{align*}
        Thus, taking the infimum over feasible $\Gamma$ gives
        \[
            \cF_c(A|_S,B|_S\mid\{P_i|_S\})
            \leq\cF_c(A\circ J_S,B\circ J_S\mid\{P_i\}).
        \]
        Conversely, let $H\succ0$ be feasible on $S$, and define $\Gamma:=H\oplus I_{\Sigma\setminus
        S}$. Then,
        \[
            \Gamma\circ P_i
            =(H\circ(P_i|_S))\oplus I_{\Sigma\setminus S}
            \preceq c\Gamma.
        \]
        Since $A\circ J_S=A|_S\oplus0$ and $B\circ J_S=B|_S\oplus0$, this extension preserves the
        objective. Taking the infimum over feasible $H$ gives
        \[
            \cF_c(A\circ J_S,B\circ J_S\mid\{P_i\})
            \leq\cF_c(A|_S,B|_S\mid\{P_i|_S\}),
        \]
        proving the equality.

    \end{enumerate}
    For the sign filters, we prove the remaining properties as follows.
    \begin{enumerate}[i.]
        \setcounter{enumi}{10}
        \item Let $D_i := \sum_{x} (-1)^{x_i} \ketbra{x}$ be a diagonal matrix with $D_i^2 = I$.
        Then, for $\Gamma \succ 0$ we have $\Gamma \circ S_i = D_i \Gamma D_i$. Consequently,
        \begin{align*}
        D_i\Gamma^{-1}D_i \preceq c\Gamma^{-1} \iff
        \Gamma^{-1} \preceq c \cdot D_i \Gamma^{-1} D_i \iff
        \Gamma \succeq \frac{1}{c} D_i \Gamma D_i \iff
        c\Gamma \succeq D_i \Gamma D_i.
        \end{align*}
        Therefore, $\Gamma \succ 0$ is feasible for $\cF_{c}$ if and only if $\Gamma^{-1}\succ 0$
        is. Consequently,
        \begin{align*}
        \cF_{c}(A,B|\{S_i\}) &= \progline{\inf_{\Gamma \succ 0}}{\sqrt{\langle \Gamma, A\rangle \langle \Gamma^{-1},B \rangle}}{\Gamma \circ S_i \preceq c \Gamma & (\forall i)} \\
        &=
         \progline{\inf_{\Gamma \succ 0}}{\sqrt{\langle \Gamma^{-1}, A\rangle \langle \Gamma,B \rangle}}{\Gamma\circ S_i \preceq c \Gamma & (\forall i)} \\
        &= \cF_{c}(B,A|\{S_i\}).
        \end{align*}
        \item 
        Suppose $A\circ B = 0$, and let $I_{A} = \sum_{x:A(x,x)>0}
        \ketbra{x}$ be the identity matrix restricted to non-zero diagonals of $A$. We define $\Gamma_{t} := tI_{A} + t^{-1}(I- I_{A})$.
        Since $\Gamma_{t}$ is diagonal, $\Gamma_t \circ S_i = \Gamma_t\leq c\Gamma_t$ is a feasible
        solution for any $c>1$. Consequently,
        \begin{align*}
        \cF_{c}(A,B|\{S_i\}) \leq \inf_{t>0} \sqrt{\langle \Gamma_t, A\rangle \langle \Gamma_t^{-1},B\rangle}
        = \inf_{t>0} \sqrt{t\Tr[A]\cdot t\Tr[B]} = 0.
        \end{align*}
         Next, suppose there exist $x,y$ such that $A(x,y)B(x,y) \neq 0$. Since $A,B\succeq 0$,
         this implies that $A(x,x)B(x,x) \neq 0$. Naturally, we use a primal form to complete the
         proof. The idea is to make the matrix $A$ into a diagonal matrix $C$
         through the sign matrices $S_i$ while maintaining a lower bound on the fidelity. We define
         $C_0 := A$ and for $i\in [m]$ we define
         \begin{align*}
         C_{i} := \frac{C_{i-1} \circ (J+S_i)}{c+1} = \frac{C_{i-1} + C_{i-1}\circ S_i}{c+1} = C_{i-1} - \frac{C_{i-1}}{c+1} \circ (cJ-S_i)
         \end{align*}
         The first equality shows that $C_{i}$ remains PSD for all $i \in [m]$. Eventually, at the end we get
         \begin{align*}
            C_{m} = \frac{C_0 \circ (J+ S_1) \circ \dots \circ (J+ S_m)}{(c+1)^{m}}
                = \frac{2^m}{(c+1)^m}(A\circ I) \succeq 0.
         \end{align*}
         In addition, we can write
         \begin{align*}
         A - C_m = \sum_{i\in [m]}\bigl(C_{i-1} - C_i\bigr)
                =  \sum_{i\in [m]} \frac{C_{i-1}}{c+1} \circ (cJ - S_i)
         \end{align*}

         Consequently,
         \begin{align*}
            \cF_{c}(A,B) &\geq \cF(C_m, B) \geq \sqrt{\Tr[C_m B]}
            = (\frac{2}{c+1})^{m/2}\sqrt{\Tr[(A\circ I )B]} \\
            &\geq (\frac{2}{c+1})^{m/2} \sqrt{A(x,x)B(x,x)} > 0. \qedhere
         \end{align*}
    \end{enumerate}
\end{proof}

\section{Analogies between measures}\label{app:measure-connections}

\subsection{\texorpdfstring{Proof of \Cref{prop:average-to-worst-gamma2}: analogy between
distributional and worst-case $\gamma_2$ norms}{Proof of the proposition: Analogy between
distributional and worst-case gamma 2 norms}}\label{app:average-to-worst-gamma2}

For the proof, we use the following more general lemma that relates inner products with probability
distributions to the spectral norm.
\begin{lemma}\label{lem:max-over-dist} Let $\Sigma$ be a finite set and $\mathcal{K}\subset
\Herm(\Sigma)$ be invariant under negation and conjugation with diagonal phase matrices. That is,
$\Gamma\in\mathcal{K}$ if and only if $-\Gamma\in\mathcal{K}$, and for any diagonal phase matrix
$D\in \bC^{\Sigma\times\Sigma}$, where $|D(x,x)|=1$ for all $x\in \Sigma$, and any
$\Gamma\in\mathcal{K}$, we have $D^{\dagger}\Gamma D\in\mathcal{K}$. Then,
    $$\max_{\mu} \sup_{\Gamma\in \mathcal{K} } \langle \Gamma, \kbr{\mu}\rangle = \sup_{\Gamma\in
    \mathcal{K}} \|\Gamma\|$$ where the maximum is over all distributions $\mu:\Sigma \to [0,1]$ and
    is either attained or is unbounded.
\end{lemma}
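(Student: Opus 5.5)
We prove the two inequalities separately, after rewriting the left-hand side as a maximum over nonnegative unit vectors. Recall that $\ket{\mu}=\sum_x\sqrt{\mu(x)}\ket{x}$ ranges exactly over the unit vectors with nonnegative real entries as $\mu$ ranges over distributions. Hence
\[
\max_\mu\sup_{\Gamma\in\mathcal K}\langle\Gamma,\kbr{\mu}\rangle
=\sup_{\Gamma\in\mathcal K}\ \sup_{\ket{u}\geq0,\ \|u\|=1}\bra{u}\Gamma\ket{u},
\]
where the interchange of suprema is free. It therefore suffices to compare $\sup_{\ket u\ge0}\bra u\Gamma\ket u$ over $\mathcal K$ with $\sup_{\mathcal K}\|\Gamma\|$.

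For the upper bound ($\leq$), note that $\langle\Gamma,\kbr{\mu}\rangle=\bra{\mu}\Gamma\ket{\mu}\leq\|\Gamma\|$ for every Hermitian $\Gamma$ and unit vector $\ket{\mu}$. Taking suprema gives the inequality immediately.

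For the lower bound ($\geq$), fix $\Gamma\in\mathcal K$.
\begin{itemize}
\item Since $\Gamma$ is Hermitian, $\|\Gamma\|=\max\{\lambda_{\max}(\Gamma),\lambda_{\max}(-\Gamma)\}$. Using invariance under negation, we replace $\Gamma$ by $\pm\Gamma\in\mathcal K$ so that $\|\Gamma\|=\bra{u}\Gamma\ket{u}$ for some unit eigenvector $\ket u$.
\item Write $u(x)=|u(x)|e^{i\theta_x}$ and let $D:=\sum_x e^{i\theta_x}\kbr{x}$, a diagonal phase matrix. Then $\ket u=D\ket{|u|}$, where $\ket{|u|}$ is the vector of absolute values.
\item Set $\Gamma':=D^\dagger\Gamma D$, which lies in $\mathcal K$ by phase invariance. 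It satisfies $\bra{|u|}\Gamma'\ket{|u|}=\bra u\Gamma\ket u=\|\Gamma\|$.
\item The vector $\ket{|u|}$ is nonnegative and of unit norm, so it equals $\ket{\mu}$ for the distribution $\mu(x):=|u(x)|^2$.
\end{itemize}
Therefore $\max_\mu\sup_{\mathcal K}\langle\Gamma',\kbr{\mu}\rangle\geq\|\Gamma\|$, and taking the supremum over $\Gamma\in\mathcal K$ gives the reverse inequality.

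It remains to handle attainment of the maximum over $\mu$. The map $\mu\mapsto\sup_{\Gamma\in\mathcal K}\langle\Gamma,\kbr{\mu}\rangle$ is a supremum of continuous functions, hence lower semicontinuous, and lower semicontinuity alone does not guarantee that a maximum is attained.
\begin{itemize}
\item If $\sup_{\mathcal K}\|\Gamma\|=+\infty$, both sides are unbounded and there is nothing to attain.
\item If the supremum is finite, choose $\Gamma_k\in\mathcal K$ with $\|\Gamma_k\|\to\sup_{\mathcal K}\|\Gamma\|$, and let $\mu_k$ be the distributions produced by the construction above, so that $\sup_{\mathcal K}\langle\Gamma,\kbr{\mu_k}\rangle\geq\|\Gamma_k\|$. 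By compactness of the simplex, pass to a subsequence with $\mu_k\to\mu^\ast$.
\end{itemize}
We then need $\sup_{\Gamma\in\mathcal K}\langle\Gamma,\kbr{\mu^\ast}\rangle\geq\sup_{\mathcal K}\|\Gamma\|$. The plan is to use that $\mathcal K$ is bounded: every element has norm at most $\sup_{\mathcal K}\|\Gamma\|<\infty$, so the family $\mu\mapsto\langle\Gamma,\kbr{\mu}\rangle$, $\Gamma\in\mathcal K$, is uniformly equicontinuous on the simplex. Equicontinuity makes the supremum function continuous, so it passes to the limit $\mu^\ast$ and the value is attained there. Combined with the upper bound, this gives equality.

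The main obstacle is this attainment claim, and more precisely the fact that continuity of $\mu\mapsto\sqrt{\mu}$ is only uniform, not Lipschitz, near the boundary of the simplex. Equicontinuity still holds because $\|\kbr{\mu}-\kbr{\mu'}\|_{\tr}\leq2\|\ket\mu-\ket{\mu'}\|$, and $\|\ket\mu-\ket{\mu'}\|$ is uniformly continuous in $(\mu,\mu')$. The equality of values itself is routine.
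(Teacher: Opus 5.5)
Your proof is correct and follows the paper's argument. The bound $\bra{\mu}\Gamma\ket{\mu}\le\|\Gamma\|$ gives one direction; for the other, you flip the sign and conjugate by the diagonal phase matrix $D$ so that a top eigenvector becomes $\ket{\mu}$ with $\mu(x)=|u(x)|^2$; attainment for bounded $\mathcal K$ follows from continuity of $\mu\mapsto\sup_{\mathcal K}\langle\Gamma,\kbr{\mu}\rangle$ on the compact simplex. Your equicontinuity estimate via $\|\kbr{\mu}-\kbr{\mu'}\|_{\tr}\le2\|\ket{\mu}-\ket{\mu'}\|$ fills in the continuity step the paper only asserts, and the detour through the sequence $\mu_k$ is unnecessary, since a continuous function on the simplex attains its maximum directly.
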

\begin{proof}
    For every $\Gamma\in\mathcal{K}$ and every distribution $\mu$, we have
    $\langle\Gamma,\kbr{\mu}\rangle\leq\|\Gamma\|$. For the other direction, let
    $\Gamma\in\mathcal{K}$ and let $\ket{u}\in\bC^{\Sigma}$ be a unit eigenvector satisfying
    $|\bra{u}\Gamma\ket{u}|=\|\Gamma\|$. Replacing $\Gamma$ by $-\Gamma$ if necessary, we assume
    that $\bra{u}\Gamma\ket{u}=\|\Gamma\|$. Also, let $\mu:\Sigma\to[0,1]$ and
    $\theta:\Sigma\to[0,2\pi)$ be such that $(\ket{u})_x=\sqrt{\mu(x)}e^{i\theta(x)}$. Then, $\mu$
    is a distribution since $\|u\|=1$, and for the diagonal matrix $D(x,x)=e^{i\theta(x)}$ we have
    $D\ket{\mu}=\ket{u}$, where as usual $\braket{x}{\mu}=\sqrt{\mu(x)}$. Defining
    $\Gamma':=D^{\dagger}\Gamma D$, we have
    $$\|\Gamma'\|=\|\Gamma\|=\bra{u}\Gamma\ket{u} =\bra{\mu}D^{\dagger}\Gamma
    D\ket{\mu}=\bra{\mu}\Gamma'\ket{\mu}.$$ Therefore,
    \begin{align*}
        \|\Gamma\|
        &=\bra{\mu}\Gamma'\ket{\mu}
        \leq\sup_{K\in\mathcal{K}}\bra{\mu}K\ket{\mu}
        \leq\sup_{\mu}\sup_{K\in\mathcal{K}}\bra{\mu}K\ket{\mu}.
    \end{align*}
    Taking the supremum over $\Gamma\in\mathcal{K}$ gives
    \begin{align*}
        \sup_{\Gamma \in \cK} \|\Gamma\| = \sup_{\mu}\sup_{K\in\mathcal{K}}\bra{\mu}K\ket{\mu}.
    \end{align*}
    It only remains to show that the maximum is attained or unbounded.
    \par
    Indeed, if $\mathcal{K}$ is bounded, $\sup_{K\in \cK} \bra{\mu}K\ket{\mu}$ is continuous in
    $\mu$. Since the set $\{\mu:\Sigma \to [0,1]: \sum_{x}\mu(x) = 1\}$ is compact, the maximum is
    attained. Otherwise, choose $K_j\in\cK$ with $\|K_j\|\to\infty$. By the construction above,
    there exist $K'_j\in\cK$ and distributions $\mu_j$ such that
    $\bra{\mu_j}K'_j\ket{\mu_j}=\|K_j\|\to\infty$.
\end{proof}
\begin{proof}
Let $\cK:=\{\Gamma\circ\overline{A}:\Gamma=\Gamma^{\dagger}, \ \|\Gamma\circ Z_i\|\leq 1\text{ for
all }i\in[n]\}.$ By \Cref{prop:equiv-avg-gamma2},
$$\max_{\mu}\bar{\gamma}_2^{\mu}(A|\{Z_i\}) =\max_{\mu}\max_{K\in\cK}\langle K,\kbr{\mu}\rangle.$$
Since $A$ is Hermitian, $\cK\subset\Herm(\Sigma)$. We prove that $\cK$ satisfies the conditions of
\Cref{lem:max-over-dist}. First, $K\in\cK$ if and only if $-K\in\cK$. Let
$K=\Gamma\circ\overline{A}\in\cK$ and $\theta:\Sigma\to\bC$ be such that $|\theta(x)|=1$ for all
$x\in\Sigma$. Let $D:=\operatorname{Diag}(\theta)$ and define $\Gamma':=D^{\dagger}\Gamma D$. We
want to prove that $K':=D^{\dagger}KD\in\cK$ as well. First, since $D$ is diagonal, we can write
$$K'=D^{\dagger}(\Gamma\circ\overline{A})D =(D^{\dagger}\Gamma
D)\circ\overline{A}=\Gamma'\circ\overline{A}.$$ Moreover, $\Gamma'=\Gamma'{}^{\dagger}$, and for all
$i\in[n]$, we have
\begin{align*}
\|\Gamma'\circ Z_i\|
&=\|D^{\dagger}(\Gamma\circ Z_i)D\|
=\|\Gamma\circ Z_i\|\leq1.
\end{align*}
Therefore, $K'\in\cK$. It immediately follows that
\begin{align*}
\max_{\mu}\bar{\gamma}_2^{\mu}(A|\{Z_i\})
&=\max_{\mu}\max_{K\in\cK}\langle K,\kbr{\mu}\rangle
=\max_{K\in\cK}\|K\|
=\gamma_2(A|\{Z_i\}),
\end{align*}
where we applied \Cref{lem:max-over-dist} and the dual formulation \eqref{eq:filtered-gamma2-dual}, respectively.
\end{proof}

\subsection{\texorpdfstring{Proof of \Cref{prop:recovering-worst-case-adversary}: analogy between
distributional and worst-case adversary bounds}{Proof of the proposition: Analogy between
distributional and worst-case adversary bounds}}\label{app:recovering-worst-case-adversary}

\begin{proof}
    We first prove the upper bound. For every distribution $\mu$, we have $A_f\in\cW_f^\mu$ since
    $A_f\succeq0$, $A_f\circ\Delta_f=0$, and $\Tr[A_f\circ\kbr{\mu}]=1$. Moreover, $A_f\circ I=I$,
    and choosing $W=A_f$ gives fidelity $1$. Thus, $A_f\in\cG_\eps^\mu(f)$. Consequently,
    \begin{align*}
        \max_{\mu}\mathrm{Adv}^{\mu}_{\eps}(f)
        &\leq \max_{\mu}\bar{\gamma}^{\mu}_2(J-A_f|\{\Delta_i\})
        && \text{(By \Cref{def:adv})} \\
        &=\max_{\mu}\bar{\gamma}^{\mu}_2(\Delta_f|\{\Delta_i\}) \\
        &=\gamma_2(\Delta_f|\{\Delta_i\})
        && \text{(By \Cref{prop:average-to-worst-gamma2})} \\
        &=O\bigl(\mathrm{Adv}^{\pm}(f)\bigr)
        && \text{(By the adversary equivalence \cite{LMR+11}).}
    \end{align*}
    When $f$ is a decision problem, $\gamma_2(\Delta_f\mid\{\Delta_i\})=\mathrm{Adv}^{\pm}(f)$,
    which gives the sharper upper bound.
    \par
    For the other direction, fix a distribution $\mu$ and let $\Psi\in\cG^{\mu}_{\eps}(f)$. Write
    $\Delta_i^f:=\Delta_i\circ\Delta_f$. Since $\Delta_f\circ\Delta_f=\Delta_f$, by
    \Cref{prop:avggamma2-properties},
    \begin{align*}
        \bar{\gamma}_2^{\mu}(J-\Psi|\{\Delta_i\})
        &\geq \bar{\gamma}_2^{\mu}((J-\Psi)\circ\Delta_f|\{\Delta_i^f\})
        && \text{(By \Crefitem{item:avggamma2-hadamard-bounds})} \\
        &\geq \bar{\gamma}_2^{\mu}(\Delta_f|\{\Delta_i^f\})
        -\bar{\gamma}_2^{\mu}(\Psi\circ\Delta_f|\{\Delta_i^f\})
        && \text{(By (i))} \\
        &\geq \bar{\gamma}_2^{\mu}(\Delta_f|\{\Delta_i^f\})
        -\|\Psi\circ\Delta_f\circ\kbr{\mu}\|_{\tr}
        \gamma_2(\Delta_f|\{\Delta_i^f\})
        && \text{(By \Crefitem{item:avggamma2-tr-upper})} \\
        &\geq \bar{\gamma}_2^{\mu}(\Delta_f|\{\Delta_i^f\})
        -4\sqrt{\eps}\,\mathrm{Adv}^{\pm}(f)
        && \text{(By \Cref{lem:trace-norm-upper-bound-for-boolean-functions}).}
    \end{align*}
    Taking the minimum over $\Psi$ and then the maximum over $\mu$, we obtain
    \begin{align*}
        \max_{\mu}\mathrm{Adv}^{\mu}_{\eps}(f)
        &\geq \max_{\mu}\bar{\gamma}_2^{\mu}
        (\Delta_f|\{\Delta_i^f\})-4\sqrt{\eps}\,\mathrm{Adv}^{\pm}(f) \\
        &=\gamma_2(\Delta_f|\{\Delta_i^f\})
        -4\sqrt{\eps}\,\mathrm{Adv}^{\pm}(f)
        && \text{(By \Cref{prop:average-to-worst-gamma2})} \\
        &=(1-4\sqrt{\eps})\mathrm{Adv}^{\pm}(f).
    \end{align*}
    Finally, if $f$ is a decision problem, \Cref{lem:trace-norm-upper-bound-for-boolean-functions}
    gives $\|\Psi\circ\Delta_f\circ\kbr{\mu}\|_{\tr}\leq 2\sqrt{\eps(1-\eps)}$. Repeating the same
    argument gives the sharper lower bound.
\end{proof}

\subsection{\texorpdfstring{Proof of \Cref{prop:delta2-gamma2-connection}: analogy between
multiplicative and additive $\gamma_2$ measures}{Proof of the proposition: Analogy between
multiplicative and additive gamma 2 measures}}\label{app:delta2-gamma2-connection}

\begin{proof}
For the sake of notation, we abbreviate
$F_c:=\cF_c(\Phi\circ\kbr{\mu},\Psi\circ\kbr{\mu}\mid\{S_i\})$ for $c>1$.
Also, let us denote
\begin{align*}
    R &:= \sup_{c>1}\frac{1-F_c^2}{c-1},
    && \gamma:=\bar{\gamma}_2^\mu(\Phi-\Psi\mid\{\Delta_i\}),
    && \delta:=\bar{\delta}_2^\mu(\Phi,\Psi\mid\{S_i\}).
\end{align*}
We prove the equalities by showing $\gamma/2\leq R\leq\delta\leq\gamma/2$.

\paragraph{$\gamma/2\leq R$.}
Let $\Gamma = \Gamma^{\dagger}$ with $\max_{i}\|\Gamma \circ \Delta_i\|\leq 1$ attain $\gamma$, so that
\begin{align*}
    \gamma = \langle \Gamma,(\Phi - \Psi)\circ\kbr{\mu}\rangle.
\end{align*}
Let $t>0$ be arbitrarily small so that $t\|\Gamma\|<1$, and define $\Gamma_{t} := I - t\Gamma
\succ 0$. Therefore,
\begin{align*}
    \Gamma_{t} \circ S_i = I \circ S_i - t \Gamma \circ (J-2\Delta_i) =   \Gamma_{t} + 2t(\Gamma \circ \Delta_i)
    \preceq \Gamma_t + 2tI \preceq \left(1 + \frac{2t}{1-t\|\Gamma\|}\right)\Gamma_t,
\end{align*}
where we used $\Gamma \circ \Delta_i \preceq I$ and $(1-t\|\Gamma\|) I \preceq \Gamma_{t}$,
respectively. As a result, for $c_t = 1 + \frac{2t}{1-t\|\Gamma\|}$ we have $\Gamma_{t} \circ S_i
\preceq c_t\Gamma_{t}$. In addition, using the Taylor expansion $(1-tx)^{-1} = 1 + tx + O(t^2)$
around $t=0$, we can bound the inverse as
\begin{align*}
    \Gamma_{t}^{-1} \preceq I + t\Gamma + O(t^2) I
\end{align*}
Consequently, we can upper bound $F_{c_t}^2$ by
\begin{align*}
    F_{c_t}^2
    &\quad\leq
    \langle\Gamma_t,\Phi\circ\kbr{\mu}\rangle
    \langle\Gamma_t^{-1},\Psi\circ\kbr{\mu}\rangle\\
    &\quad\leq
    \langle I-t\Gamma,\Phi\circ\kbr{\mu}\rangle
    \langle I+t\Gamma+O(t^2)I,\Psi\circ\kbr{\mu}\rangle\\
    &\quad=
    \bigl(1-t\langle\Gamma,\Phi\circ\kbr{\mu}\rangle\bigr)
    \bigl(1+t\langle\Gamma,\Psi\circ\kbr{\mu}\rangle+O(t^2)\bigr)\\
    &\quad=1-t\langle\Gamma,(\Phi-\Psi)\circ\kbr{\mu}\rangle+O(t^2)\\
    &\quad=1-t\gamma+O(t^2).
\end{align*}
Therefore, taking the limit as $t\to 0 $, we get
\begin{align*}
    \lim_{t\to 0}
    \frac{1-F_{c_t}^2}{c_t-1}
    &\geq \lim_{t\to 0}
    \frac{1-t\|\Gamma\|}{2t}\bigl(t\gamma-O(t^2)\bigr)\\
    &=\lim_{t\to 0}
    (1-t\|\Gamma\|)\left(\frac{\gamma}{2}-O(t)\right)\\
    &=\frac12\bar{\gamma}_2^\mu(\Phi-\Psi\mid\{\Delta_i\}),
\end{align*}
completing the first inequality.
\paragraph{$R\leq\delta$.}
    By \Crefitem{item:filtered-fidelity-cost-monotonicity}, for every $c>1$ we have
    \begin{align*}
        0 \leq F_c^2
        \leq \Tr[\Phi\circ\kbr{\mu}]\Tr[\Psi\circ\kbr{\mu}] = 1.
    \end{align*}
    The function $x\to \frac{\log x}{x-1}$ is monotone non-increasing for $x>0$, with its continuous extension at $x=1$. Therefore, for
    $F_c^2 <1 <c$, we get that
    \begin{align*}
        \frac{\log F^2_c}{\cF^2_{c}-1} \geq \frac{\log c}{c-1} \Rightarrow
        \frac{1-F_c^2}{c-1}\leq \frac{-2\log F_c}{\log c}.
    \end{align*}
    The above inequality is also trivial when $\cF_c = 1$. Therefore, taking
    suprema over $c>1$ on both sides proves $R\leq\delta$.

\paragraph{$\delta\leq\gamma/2$.} Let $c>1$ be arbitrary
    and let $\Gamma \succ 0$ with $\Gamma \circ S_i \preceq c \Gamma$. Then, we also get the
    direction $\Gamma = (\Gamma \circ S_i)\circ S_i \preceq c \Gamma \circ S_i$. Therefore,
    \begin{align*}
            \frac{1}{c} \Gamma \preceq \Gamma \circ S_i \preceq c \Gamma.
    \end{align*}
    We claim that $\log (\Gamma \circ S_i) = \log(\Gamma) \circ S_i$. To see this, we define the
    unitary matrix $D_i := \mathrm{Diag}((-1)^{x_i})$ so that $X \circ S_i = D_i X D_i$. Taking the
    PSD decomposition $\Gamma = \sum_{j} \lambda_{j} \ketbra{v_j}$ we have that
    \begin{align*}
        \log (\Gamma \circ S_i)
        &= \log\left(\sum_{j} \lambda_{j} \ketbra{D_iv_j}\right)
        = \sum_{j} \log(\lambda_j) \ketbra{D_iv_j}
        = \log (\Gamma) \circ S_i.
    \end{align*}
    Therefore, applying the logarithm to the above inequalities, we get
    \begin{align*}
        \log(\Gamma) - (\log c)I \preceq \log(\Gamma)\circ S_i
        \preceq \log(\Gamma) + (\log c)I.
    \end{align*}
    Consequently, for $i\in [n]$ we can bound the spectral norm as follows.
    \begin{align*}
        \|\log(\Gamma)\circ\Delta_i\|
        &= \frac{\|\log(\Gamma)\circ(J-S_i)\|}{2}
        = \frac{\|\log(\Gamma)-\log(\Gamma)\circ S_i\|}{2}
        \leq \frac{\log c}{2}.
    \end{align*}
    Moreover, for a matrix $A\succeq0$ with $\Tr[A] =1$ we have
    \begin{align*}
        \langle \log \Gamma, A\rangle \leq \log \langle \Gamma, A\rangle
    \end{align*}
    Indeed, $
        \langle \log \Gamma, A\rangle = \sum_{j} \log(\lambda_j) \bra{v_j} A \ket{v_j}
        \leq \log (\sum_{j}  \lambda_j \bra{v_j} A \ket{v_j})
        = \log \langle \Gamma, A\rangle.
    $
    Taking $A:=\Phi\circ\kbr{\mu}$ and $B:=\Psi\circ\kbr{\mu}$, we get
    \begin{align*}
        \langle \log \Gamma, A - B\rangle =
        \langle \log \Gamma, A\rangle + \langle \log \Gamma^{-1}, B\rangle
        \leq \log (\langle \Gamma, A\rangle) + \log (\langle  \Gamma^{-1}, B\rangle )
        = \log (\langle \Gamma, A\rangle \langle \Gamma^{-1},B\rangle).
    \end{align*}
    Since $\Gamma$ was arbitrary among the feasible matrices for $\cF_c$, we have
    \begin{align*}
        \frac{-2\log F_c}{\log c}
        &\leq \progline
        {\sup_{\Gamma\succ0}}
        {\frac{-\log\bigl(\langle\Gamma,A\rangle
        \langle\Gamma^{-1},B\rangle\bigr)}{\log c}}
        {\Gamma\circ S_i\preceq c\Gamma\quad\forall i\in[n]}\\
        &\leq \frac12\sup_{\Gamma\succ0}
        \frac{|\langle\log\Gamma,A-B\rangle|}
        {\max_i\|(\log\Gamma)\circ\Delta_i\|}\\
        &=\frac12\bar{\gamma}_2^\mu(\Phi-\Psi\mid\{\Delta_i\}).
    \end{align*}
    Here the norm quotient is interpreted as zero when its denominator vanishes;
    then $\log\Gamma$ is diagonal and $\langle\log\Gamma,A-B\rangle=0$.
    The last equality follows from the definition of $\bar{\gamma}_2^\mu$,
    since $\log\Gamma$ ranges over all Hermitian matrices.
    Since $c$ was also arbitrary, taking the
    supremum over $c>1$ proves the desired bound.
\end{proof}

\end{document}